\newcommand{\revcolorm}[1]{{\color{black}#1}}
\newif\ifMS
\newif\iffigdraft
\newtcolorbox{myboxx}[3][]
{
  colframe = white,
  colback  = #2!10,
  left=0pt,
  right=0pt,
  top=0pt,
  bottom=0pt,
  enlarge left by=0mm,
  boxsep=10pt,
  arc=0pt,outer arc=0pt,
  #1,
}
\colorlet{shadecolor}{gray!50}
\renewenvironment*{displayquote}
  {\begingroup\setlength{\leftmargini}{0.1cm}\csq@getcargs{\csq@bdquote{}{}}}
  {\csq@edquote\endgroup}
\renewenvironment*{displayquote}
  {\begingroup\setlength{\leftmargini}{0cm}\csq@getcargs{\csq@bdquote{}{}}}
  {\csq@edquote\endgroup}
\DeclareRobustCommand{\mybox}[2][gray!15]{
\begin{tcolorbox}[ 
        colback=white,      
        colframe=gray,  
        boxrule=0.2pt,      
        arc=2pt,outer arc=2pt,
        left=12pt,
        right=12pt,
        top=5pt,
        bottom=5pt,
        width=1.07\linewidth,
        enlarge left by=-0.55cm,
        before upper=\renewcommand{\baselinestretch}{1.3}\selectfont,
        after upper=\normalfont
        ]
 #2
 \end{tcolorbox}
}
\definecolor{cornellred}{rgb}{0.7, 0.11, 0.11}
\definecolor{maroon}{rgb}{0.52, 0, 0}
\definecolor{dgreen}{rgb}{0.0, 0.5, 0.0}
\definecolor{ballblue}{rgb}{0.13, 0.67, 0.8}
\definecolor{royalblue(web)}{rgb}{0.25, 0.41, 0.88}
\definecolor{bleudefrance}{rgb}{0.19, 0.55, 0.91}
\definecolor{royalazure}{rgb}{0.0, 0.22, 0.66}
\pgfplotsset{compat=1.15}
\tikzstyle{vecArrow} = [thick, decoration={markings,mark=at position
\tikzstyle{innerWhite} = [semithick, white,line width=1.4pt, shorten >= 4.5pt]
\newtheorem{theorem}{Theorem}[section]
\newtheorem{lemma}[theorem]{Lemma}
\newtheorem*{lemma*}{Lemma}
\newtheorem{proposition}[theorem]{Proposition}
\newtheorem{claim}{Claim}
\newtheorem{remark}{Remark}
\theoremstyle{definition}
\newtheorem{definition}{Definition}[section]
\theoremstyle{definition}
	\newtheorem{assumption}[definition]{Assumption} 
		\newtheorem{example}[definition]{Example} 
\newcommand*{\rom}[1]{\expandafter\romannumeral #1}
\newcommand{\Rom}[1]{\uppercase\expandafter{\romannumeral #1\relax}}
	\newcommand{\reals}{\mathbb{R}}
	\DeclareMathOperator{\argmax}{argmax}
	\DeclareMathOperator{\argmin}{argmin}
\newcommand{\prob}[2][]{\text{\bf Pr}\ifthenelse{\not\equal{}{#1}}{_{#1}}{}\!\left[{\def\givenn{\middle|}#2}\right]}
\newcommand{\expect}[2][]{\text{\bf E}\ifthenelse{\not\equal{}{#1}}{_{#1}}{}\!\left[{\def\givenn{\middle|}#2}\right]}
\newcommand{\indicator}[1]{{\mathbbm{1}\left\{ #1 \right\}}}
	\DeclarePairedDelimiterX{\set}[1]\{\}{#1}
	\let\Pr\relax
	\DeclarePairedDelimiterXPP{\Pr}[1]{\mathbb{P}}[]{}{#1}
	\DeclarePairedDelimiterXPP{\Ex}[1]{\mathbb{E}}[]{}{#1}
\newcolumntype{P}[1]{>{\centering\arraybackslash}c{#1}}
	\newcommand{\RN}[1]{\textcolor{maroon}{\bf (Rad: #1)}}  
	\newcommand{\MA}[1]{\textcolor{brown}{\bf (Mohammad-Reza: #1)}}  
	\newcommand{\medit}[1]{\textcolor{purple}{#1}}
\newcommand{\MC}{\mathcal{G}}
\newcommand{\states}{\mathcal{S}}
\newcommand{\transition}{{A}}
\newcommand{\altnum}{n}
\newcommand{\reward}{{R}}
\newcommand{\vecreward}{\mathbf{R}}
\newcommand{\terminals}{\mathcal{T}}
\newcommand{\dimension}{d}
\newcommand{\barP}{H_p}
\newcommand{\alloc}{\mathbf{p}}
\newcommand{\FeasibleAlloc}{\mathcal{P}}
\newcommand{\NumAffine}{m_{a}}
\newcommand{\NumConvex}{m_{c}}
\newcommand{\AffineVector}{\boldsymbol{\theta}}
\newcommand{\AffineConstant}{b}
\newcommand{\LagVector}{\lambda}
\newcommand{\ConvexFun}{F}
\newcommand{\GradFun}{\nabla\ConvexFun}
\newcommand{\FairSet}{\mathcal{F}_{\textsc{cons}}}
\newcommand{\OptFairPolicy}{\policy^*_{\textsc{cons}}}
\newcommand{\OptFair}{\textrm{OPT}_{\textsc{cons}}}
\newcommand{\InRate}{\gamma_{\textsc{I}}}
\newcommand{\OutRatelambda}{\gamma_{\textsc{O},\lambda}}
\newcommand{\OutRatebeta}{\gamma_{\textsc{O},\beta}}
\newcommand{\InnerNum}{K_{\textsc{I}}}
\newcommand{\OuterNum}{K_{\textsc{O}}}
\newcommand{\DualConstAffine}{\lambda}
\newcommand{\DualConstAffineVec}{\boldsymbol{\DualConstAffine}}
\newcommand{\DualConstConvex}{\beta}
\newcommand{\DualConstConvexVec}{\boldsymbol{\DualConstConvex}}
\newcommand{\DualConvex}{\boldsymbol{\mu}}
\newcommand{\Lagrange}{\mathcal{L}_{\textsc{JMS-cons}}}
\newcommand{\barLagrange}{\overline{\mathcal{L}}_{\textsc{JMS-cons}}}
\newcommand{\AdjustedReward}{\widetilde{\vecreward}}
\newcommand{\DualLagrangeSelect}{\mathcal{G}_{\textsc{FS}}}
\newcommand{\Kap}{\kappa}
\newcommand{\CurrentIndex}{o}
\newcommand{\policy}{\pi}
\newcommand{\PolicySpace}{\Pi}
\newcommand{\select}{\mathbb{A}}
\newcommand{\inspect}{\mathbb{I}}
\newcommand{\reserve}{\sigma}
\newcommand{\values}{\mathcal{V}}
\newcommand{\ManSet}{\mathcal{X}}
\newcommand{\WomanSet}{\mathcal{Y}}
\newcommand{\option}{o}
\newcommand{\tiebreak}{\tau}
\newcommand{\slack}{\Delta}
\newcommand{\candidates}{\mathcal{C}}
\newcommand{\tierule}{\tau}
\renewcommand{\qed}{\hfill\Halmos}
\newcommand{\DualConvexkl}{\DualConvex^{(m,\ell)}}
\newcommand{\DualConstConvexk}{\DualConstConvex^{(m)}}
\newcommand{\DualConstAffinek}{\DualConstAffine^{(m)}}
\newcommand{\DualConvexTemp}{\boldsymbol{\omega}}
\newcommand{\DualConvexTempkl}{\DualConvexTemp^{(m,\ell)}}
\newcommand{\commentcolor}{blue}
\newcommand{\normtwo}[1]{\lVert#1\rVert_2}
\newcommand{\upperDualConstConvex}{H_{\beta}}
\newcommand{\upperDualConvex}{H_{\mu}}
\newcommand{\upperDualAffine}{H_{\lambda}}
\newcommand{\Baralloc}{\overline{\alloc}}
\newcommand{\BarDualConvex}{\overline{\DualConvex}}
\newcommand{\BarDualConstAffine}{\overline{\DualConstAffine}}
\newcommand{\BarDualConstConvex}{\overline{\DualConstConvex}}
\newcommand{\lowerp}{L_p}
\newcommand{\collapse}{\mathcal{C}}
\newcommand{\nonCollapsedStates}{\mathcal{NC}}
\newcommand{\numaffine}{m}
\newcommand{\origin}{\mathbf{0}}
\newcommand{\vertices}{{V}}
\newcommand{\cone}{C}
\newcommand{\dualcone}{\cone^*}
\newcommand{\polarcone}{\cone^\circ}
\newcommand{\argmaxCone}{U}
\newcommand{\ellipse}{E}
\renewenvironment{proof}[1]{%
  \Trivlist
  \item[\hskip\labelsep {{#1}}]\ignorespaces
}{%
  \endTrivlist\addvspace{0pt}
}
\newcommand{\revcolor}[1]{{\color{black}#1}}
\begin{document}



\RUNAUTHOR{Aminian, Manshadi, Niazadeh}

\RUNTITLE{Markovian Search with Ex-Ante Constraints}

\TITLE{Markovian Search with Ex-Ante Constraints:\\ {\scalefont{0.8}Theory and Applications to Socially Aware Algorithmic Hiring}}

\ARTICLEAUTHORS{%
\AUTHOR{Mohammad Reza Aminian}
\AFF{The University of Chicago, Booth School of Business, Chicago, IL, \EMAIL{maminian@chicagobooth.edu}}
\AUTHOR{Vahideh Manshadi}
\AFF{Yale School of Management, New Haven, CT, \EMAIL{vahideh.manshadi@yale.edu }}
\AUTHOR{Rad Niazadeh}
\AFF{The University of Chicago, Booth School of Business, Chicago, IL, \EMAIL{rad.niazadeh@chicagobooth.edu}}
 } 

\newpage
\ABSTRACT{%
\revcolorm{
We study and develop an algorithmic framework to incorporate ``ex-ante'' constraints---that is, constraints on outcomes that hold only on average---into stateful sequential search problems with costly inspection. Our framework encompasses the classical Weitzman's Pandora's box~\citep{weitzman1979optimal} as well as its extensions to joint Markovian scheduling~\citep{dumitriu2003playing,gittins1979bandit}, which models richer processes such as multistage search with multiple layers of inspection.  Ex-ante constraints in this context are particularly motivated by social considerations in algorithmic hiring, where they can adjust outcome distributions to promote equity and access.  While most work in the algorithmic fairness literature in computer science and economics has focused on incorporating such constraints into machine learning tasks like classification and regression, much less attention has been given to operational problems such as sequential search, with their unique intricacies. Our work aims to bridge this gap. 
 
Building on the optimality of index-based policies in the unconstrained versions of these problems, we show that optimal policies under a single ex-ante constraint (e.g., demographic parity) retain an index-based structure but require (i) dual-based adjustments of the indices and (ii) randomization between two such adjustments via a ``tie-breaking rule,'' both easy to compute and economically interpretable. We then extend our results to handle multiple affine constraints by reducing the problem to a variant of the exact Carathéodory problem and providing a polynomial-time algorithm to construct an optimal randomized dual-adjusted index-based policy that satisfies all constraints simultaneously.  For general affine and convex constraints, we develop a primal-dual algorithm that randomizes over a polynomial number of dual-based adjustments, yielding a near-feasible, near-optimal policy. All of these results rely on the key observation that a suitable relaxation of the Lagrange dual function for these constrained problems admits index-based policies akin to those in the unconstrained setting. Finally, through a numerical study, we investigate the implications of imposing various socially aware ex-ante constraints, in particular the utilitarian loss (price of fairness), and examine whether these constraints achieve their intended socially desirable outcomes.}

}

\KEYWORDS{\revcolor{sequential search; Pandora's box; joint Markov scheduling; exact Carathéodory; primal-dual algorithms; algorithmic hiring; socially aware operations}}


\pagenumbering{gobble} 

\maketitle
\setcounter{page}{1}
\newpage
\pagenumbering{arabic}

\section{Introduction}
\label{sec:intro}


Decisions that significantly impact people's lives, such as employment and hiring, have historically exhibited discrimination against certain minority groups. For example, \citet{bertrand2004emily} found that applicants with African-American names received fewer interview callbacks than those with white-sounding names but otherwise identical resumes, highlighting deep-rooted explicit biases in this context. Such issues have contributed to persistent disparities in representation, with long-term negative economic and societal consequences~\citep{becker2010economics}. Despite efforts to address them, progress has been limited: a meta-analysis by \citet{quillian2017meta} shows that racial discrimination in hiring has seen little improvement in recent decades.

\revcolorm{
The rise of \emph{algorithmic hiring}---the practice of using data-driven algorithms for candidate search and selection---offers a promising way to break this pattern. These algorithms are not only faster and more efficient than human decision-makers, making them attractive alternatives to traditional hiring methods, but they are also more transparent and do not introduce explicit bias by design. However, the risk of implicit bias remains if their ``input''---whether prior data or screening instruments---is itself biased~\citep{liebkind2016ethnic,kleinberg2018algorithmic}. Marginalization can also persist regardless of whether decisions are made by humans or algorithms. For example, underprivileged individuals may lack access or financial means to apply for certain jobs, leading to their continued underrepresentation~\citep{gaddis2013influence,chetty2020income}. These concerns have fueled the development of \emph{socially aware} algorithmic tools aimed at reducing such disparities and promoting access and inclusion by adjusting algorithmic outcomes~\citep{garr2019diversity, kleinberg2018selection}.
}



\revcolorm{
A well-established (and effective) adjustment approach for achieving socially desirable outcomes in the presence of disparities is to guide an algorithm's decisions by imposing constraints on its outcome distribution. These constraints---typically enforced on average and referred to as \emph{ex-ante constraints}---can capture various notions of fairness and inclusion in decision-making. The subfield of algorithmic fairness in computer science and economics has explored several such socially aware criteria in core machine learning settings such as classification and regression, along with methods for enforcing these constraints and reasons on why they could be effective; see \citet{kleinberg2018algorithmic} for a detailed discussion. In summary, these constraints, roughly speaking, ensure that the outcomes generated by machine learning algorithms used in certain high-stake decisions (e.g., loan assignments) do not exhibit strong statistical evidences of discrimination.}

Similarly, employing companies may wish to incorporate inclusion-promoting or other socially aware interventions at various stages of their hiring processes by imposing certain ex-ante constraints to adjust the hiring outcome distribution. In the U.S., these efforts often focus on increasing opportunities for candidates from minority groups, whereas in other countries, practices closer to \emph{demographic parity} or \emph{quota}---two well-studied ex-ante constraints in the algorithmic fairness literature~\citep{kleinberg2018algorithmic}---are also common. 

However, modern hiring processes often involve overlooked operational intricacies that make them fundamentally different from standard classification or regression tasks. Consider the hiring of high-skilled workers, such as software engineers in tech companies, as a leading example: companies like Google employ well-defined multistage evaluation processes---often aided by algorithmic tools---to screen, interview, and ultimately hire candidates~\citep{bock2015work}. 
Given the \emph{cost} of these \emph{inspection} processes and the need to proactively search for the best talent, candidates are typically considered \emph{sequentially}, with the search process adapted based on the outcomes of intermediate steps.\footnote{This sequential aspect makes the process more efficient than a non-adaptive, batched approach and is especially relevant in scenarios with flexible hiring timelines, such as hiring software engineers throughout the year or promoting employees internally. For more details, see Chapter~4 of \cite{bock2015work}, ``Searching for the Best.'' Additionally, it is useful in cases where screening is resource-intensive, such as searching for a CEO, where the sequential search minimizes unnecessary evaluations~\citep{ryan2004attracting}.} Lastly, the search usually targets a limited number of positions, imposing a \emph{capacity} constraint on the number of hires. In all these cases, imposing such ex-ante constraints in complex hiring processes is fundamentally different from applying them to, say, a classification problem.\footnote{Ex-ante constraints are not only instrumental in avoiding statistical discrimination, they  are also natural and motivated in contexts where the search process repeats over many instantiations, e.g., tech firms repeatedly hiring software engineers, or other settings where an algorithm is repeatedly used for pre-employment screenings (see \citet{kleinberg2018selection} for discussions and examples).}

\revcolorm{
Motivated by above applications in socially aware algorithmic hiring, the overarching goal of our work is to bridge the aforementioned gap between the algorithmic fairness and the  sequential search literature. In particular, we aim to develop an algorithmic/computational framework to incorporate ex-ante constraints into sequential search and selection processes, given their unique operational aspects. We focus on a general framework to model a broad class of sequential search and selection problems, which we refer to as \emph{Markovian search}. In simple terms, a Markovian search models a stateful process in which candidates transition between states (e.g., uninspected, inspected, selected) according to a Markov chain.\footnote{We defer the formal definition to \Cref{sec:general}, where we focus on absorbing Markov chains with one or multiple terminal states.} These transitions occur as the decision maker sequentially takes (often costly) actions---such as inspecting or selecting a candidate---to maximize the net utility of the search and selection process. Given this setting, we ask the following research question:
}

\begin{displayquote}
\revcolorm{\mybox{\em {Given a Markovian search process with specified ex-ante constraints, how can we design (and efficiently compute) optimal or near-optimal policies that satisfy these constraints, either exactly or approximately?}}}
\end{displayquote}

\revcolorm{Before outlining our results, we clarify that we do not take a stance for or against ex-ante constraints that capture various socially aware notions such as fairness, nor do we engage in the legal debates surrounding such criteria (e.g., see \cite{ho2020affirmative} for discussions on the legal aspects of algorithmic fairness). Rather, our focus is on understanding how the imposition of such constraints introduces computational nuances and alters the structure of the optimal policy. Furthermore, while socially aware algorithmic hiring is the primary motivating application in this paper, we emphasize that ex-ante constraints in sequential Markovian search can also capture a broader class of operational constraints---such as long-run resource capacity limitations for specific parts of the search---which extend beyond socially aware applications.}



\smallskip


\revcolor{\noindent{\textbf{Basic Model -- Pandora's Box (\Cref{sec:pandora}).} 
To address this question, we first extend the classical \emph{Pandora's box problem}~\citep{weitzman1979optimal} by adding a simple \emph{ex-ante affine constraint}. In the original model, a decision maker selects a subset of candidates (or boxes), each with an independent stochastic reward for selection, within a capacity limit. Initially, only distributional information about rewards is known.\footnote{\revcolor{These distributions serve as input data to the algorithm, which may be ``biased.'' For example, if candidates come from different demographic groups, the prior distributions of their quality could be biased against minority groups. While we do not explicitly model this bias, similar to the approach taken in the algorithmic fairness literature~\citep{kleinberg2018algorithmic}, we account for it in our numerical simulations to gain insights. See \Cref{sec:numerical} for more details.}} However, inspecting a box (e.g., interviewing a candidate) reveals the actual reward at a cost. The main challenge is balancing inspection costs with the search for better alternatives in terms of rewards.  As shown in \cite{weitzman1979optimal} and later extended for multiple selections in \cite{singla2018price}, the optimal unconstrained policy is \emph{index-based}. Such a policy computes a polynomial-time computable index for each box at each time and greedily inspects or selects the one with the highest nonnegative index until capacity is reached or there exists no non-negative index. These indices are essentially \emph{Gittins indices}~\citep{gittins1979bandit} adapted to the Pandora's box problem.

In our variant of the above problem, we introduce a single general affine constraint on the marginal probabilities of selection and inspection of candidates, and also, as a generalization, on the same probabilities conditional on the candidates' qualities.
As a concrete example, consider candidates from two demographic groups, with the decision maker aware of this attribution upfront. To address implicit biases in reward distributions or cost disparities between these groups, the decision maker can impose demographic parity, ensuring an equal number of selections or inspections (in expectation) between the two groups~\citep{kleinberg2018algorithmic}. Alternatively, to enhance access and inclusion, they might apply a quota, ensuring a minimum fraction of the expected number of selections or inspections from the minority group. Lastly, constraints can also focus on high-quality candidates, targeting the fairness constraint towards highest values or quantiles of reward distributions, to avoid token selections. Other special cases of this constraint can also help with different forms of operational feasibility, such as satisfying a budget constraint on average. Our general class of ex-ante affine constraint for marginal probabilities of selection and inspection is defined in \Cref{sec:single-affine}, and its refined version for probabilities conditional on candidates' qualities is defined in \Cref{sec:quantile-constraint}, covering value-specific constraints and other scenarios.

Our first main result characterizes the optimal constrained policy that \revcolorm{\emph{exactly} satisfies such a general ex-ante affine constraint}; see \Cref{thm:const} for the basic version and \Cref{prop:general-constraint} for the refined version. We show that the optimal policy, called the \emph{Randomized Dual-adjusted Index Policy} (RDIP), has a remarkably simple structure: it randomizes between at most two deterministic index-based policies, both using the same indices but differing in tie-breaking. Furthermore, these indices can be obtained by simple adjustments to the Gittins indices in the original model.  \revcolorm{It is intriguing that adding our general affine constraint does not break the optimality of index-based policies, which is generally known to be brittle in several other problem variations}. The reason behind this is the existence of a relaxed Lagrange dual function, which transforms the constrained problem into an unconstrained Pandora’s box problem for a \emph{dual-based adjusted instance}. Later, we build on this insight in our general model.

From a computational perspective, we also show that both the adjusted indices and the two relevant tie-breaking rules are computable in polynomial time, even though there are exponentially many possible tie-breaking rules for an adjusted index-based policy. In particular, the ``correct'' adjustment can be found by solving a specific convex program related to the relaxed Lagrange dual function. Additionally, we show that the tie-breaking rules have an intuitive closed-form and correspond to two perturbations of the adjusted index-based policy, aiming to maximize slack in positive and negative directions, respectively; see \Cref{def:tie:extreme} (and \Cref{def:refined-ext-tie} for refined constraints) for details.

Our proposed dual-based adjustments have notable economic implications in our application. For example, in the case of demographic parity in selection, it is sufficient to uniformly adjust the \emph{rewards} of all candidates in each group by the same amount but in opposite directions, without changing inspection costs. Interestingly, this adjustment preserves the search order within each group, which is generally desirable, as highlighted in discussions of algorithmic fairness (see \citet{kleinberg2018algorithmic}).
In contrast, to meet an inspection quota or parity (for example, for candidates of high quality in the minority group, to avoid token interviews), we adjust \emph{inspection costs} instead. Specifically, inspection costs for minority candidates are reduced, while those for other candidates are increased, with the rewards unchanged. Unlike parity in selection, this adjustment can change the order within each group, indicating that inspection quotas may come at the expense of distorting within-group rankings; see \Cref{sec:lagrangian-pandora}, \Cref{sec:insights-dem-parity} and \Cref{sec:beyond-group} for further discussion.}

Later, as our second main result, we extend the efficient computation of the exact constrained optimal policy to the case with multiple \revcolorm{ex-ante} affine constraints in \Cref{sec:caratheodory} and \Cref{app:mutiple-affine}. Somewhat surprisingly, we show that, in contrast to the single-constraint case, randomizing among ``corner'' policies that in some sense try to maximize or minimize the slack for different constraints \emph{does not} suffice. Instead, our result is based on a reduction to a variant of the \emph{exact algorithmic Carathéodory problem}~\citep{caratheodory1911variabilitatsbereich} for a certain polytope (with possibly exponentially many vertices). We design a novel algorithm to solve this reduced problem given the structural properties of this polytope, in particular, being amenable to polynomial-time linear optimization. Although this algorithm is slow (yet polynomial-time), it serves as a proof of concept for generalizing our single-constraint result to multiple constraints. Our algorithm may also be of independent interest in other contexts where one only has access to a polytope (with exponentially many vertices) via a linear optimization oracle---see \Cref{sec:caratheodory} for details~\citep{cai2012optimal,alaei2014bayesian,dughmi2021bernoulli}.

\smallskip

\noindent\textbf{General Model -- Joint Markov Scheduling (\Cref{sec:general}).} While the Pandora's box problem has served as a cornerstone for the study of sequential search, real-world hiring decisions typically involve richer and more complex search processes, for example, with multiple stages of screening or several rounds of communication between the decision-maker and the candidates. Motivated by studying such search processes, we model the general ``stateful'' sequential search for hiring candidates as the \emph{joint Markov scheduling (JMS) problem}~\citep{gittins1979bandit,dumitriu2003playing}, which mathematically extends the classic Pandora's box problem to richer sequential search and hiring settings.  

In the JMS model, each candidate is represented by an absorbing Markov chain (MC) with terminal states. When we interact with a candidate, the corresponding MC undergoes a state transition. A candidate is ``selected'' when its MC reaches a terminal state. Non-terminal states typically have negative rewards (representing inspection costs), while terminal states offer positive rewards (representing selection gains). The search process involves sequentially inspecting these MCs, and ends when a subset of candidates is selected up to the available capacity (or earlier, leaving some capacity unfilled). In the unconstrained model, the goal is to choose and inspect MCs to maximize the expected net reward. We formally introduce this setting in \Cref{sec:JMS-setting}.

Equipped with the JMS setting for stateful sequential search, we generalize our framework in \Cref{sec:pandora} even further by incorporating a broad range of \revcolorm{ex-ante constraints}, beyond a single affine ex-ante constraint. In particular, we allow for \emph{multiple affine or convex constraints} on the vector of ex-ante outcomes of the search, defined as the \emph{expected visit numbers} for each state of each candidate under a given policy. This flexible approach captures both group and individual notions of fairness, as well as additional operational constraints. We formalize these constraints---and elaborate on their applications for fairness and inclusion---in \Cref{sec:JMS-setting}. \revcolorm{Importantly, by using ideas similar to the Pandora's box setting, we can obtain randomized dual-adjusted index-based policies that satisfy a single or multiple affine constraints exactly (as in \Cref{sec:single-affine-policy} and \Cref{sec:caratheodory}, with more details in \Cref{app:mutiple-affine}). However, this approach fails in the presence of convex constraints.}

As our third main result, we present a fully polynomial-time approximation scheme (FPTAS) called the \emph{Generalized Randomized Dual-adjusted Index Policy} (G-RDIP) \revcolorm{for the general JMS problem with general ex-ante constraints (affine and convex).} Given  constants $\epsilon, \delta>0$, G-RDIP computes a randomized policy in time polynomial in $\frac{1}{\epsilon}$, $\frac{1}{\delta}$, and the input size. The policy achieves an expected objective value within an additive error $\epsilon$ of the constrained optimal solution while satisfying all constraints within an additive error $\delta$  (see \Cref{alg:RAI} and \Cref{thm:RAI}). Unlike in \Cref{sec:pandora}, G-RDIP handles multiple ex-ante constraints by reformulating the problem as a \emph{minimax Lagrangian game}, where the decision maker (primal player) selects a randomized policy and the dual player chooses the dual variables. To approximately find the equilibrium---and thus the optimal constrained policy---we aim to use a standard primal-dual method~\citep{arora2012multiplicative}, with the dual player running online learning and the primal player best-responding in each iteration.

The above approach faces two critical challenges. First, a key component of G-RDIP is the best-response procedure, which maximizes the Lagrangian for a given set of dual variables. This problem can be viewed as an unconstrained JMS with a specific regularizer in the objective. When only affine constraints are involved, it reduces to an unconstrained JMS with an adjusted instance, similar to that in \Cref{sec:pandora}. Prior work has shown that JMS admits an optimal Gittins index-based policy under certain assumptions on state rewards~\citep{gittins1979bandit,dumitriu2003playing}. However, these assumptions are violated after dual adjustments, so we cannot directly use this result. To resolve this, in \Cref{sec:optimal-JMS-arbitrary} and \Cref{app:JMS-general} we refine these results to show that the index-based structure of the optimal policy remains valid for \emph{arbitrary} positive or negative state rewards after an intricate polynomial-time preprocessing step (\Cref{thm:JMS-optimal-reduction-informal}). Thus, the optimal adjusted policy remains polynomial-time and index-based. Second, with convex constraints, the Lagrangian relaxation is no longer linear in the ex-ante outcome vector. This means that it is not associated with an adjusted instance, and it is unclear whether there is a polynomial-time optimizer for the best response. To address this technical barrier, we further relax the Lagrangian using Fenchel's weak duality on the concave terms (reviewed in \Cref{sec:convex}). This results in a relaxed game that, while still non-linear, becomes bilinear with respect to the two sets of dual variables: one for the constraints and the other for the Fenchel conjugate functions.

Our policy, G-RDIP, employs a simple two-layer iterative learning algorithm in this relaxed game, which updates the two sets of duals separately in its outer and inner layers.  An overview of this design is provided in \Cref{sec:RAI}. It also leverages the structure of our problem to find an approximate equilibrium and provides a certificate that the relaxed game is tight, with small additive errors compared to the original game. The algorithm randomizes among polynomially many deterministic index-based policies, each optimal for an adjusted JMS instance, with adjustments based on the dual estimates from a given round of the two-layer learning algorithm (see Line 4 of \Cref{alg:RAI}). Notably, we are unaware of any prior work offering an FPTAS for a JMS problem with concave rewards, making our results of independent interest.

\smallskip
\noindent\textbf{Numerical Simulations \&  Insights (\Cref{sec:numerical}, \Cref{apx:numerical-main}, \Cref{sec:numerical_uniform}, and \Cref{sec:numerical_JMS})} We complement our theoretical framework with numerical simulations on synthetic data, focusing on demographic parity and quota in the Pandora's box setting. These simulations provide insight into the potential costs and benefits of imposing such constraints. We consider scenarios with no inherent statistical asymmetry between the two groups in terms of true candidate qualities, aside from natural population heterogeneity. \revcolorm{We then add one extra layer on how our input instance is generated: we assume that true qualities are \emph{unobservable} by the decision maker, and instead she can observe ``signals'' as proxies of the true qualities through inspections. As before, we assume that the decision maker only has prior distributional knowledge of these signals (but not the true qualities). These signals---which are the analogue of the ``values''  in our base model (\Cref{sec:pandora})---may be \emph{biased} downward for one group, meaning that they are smaller than the true qualities.\footnote{\revcolorm{By distinguishing between true qualities and the signals, we consider a richer setting that enables us to investigate the impact of socially aware constraints in terms of both immediately observable quantities (i.e., signals) and the unobservable ones (i.e., true qualities); however, focusing only on the observable signals, we are exactly simulating our base model.\label{footnote:biased observatoin model}}}. We consider a specific multiplicative bias where the signals are scaled versions of the true qualities based on a bias factor. This modeling approach aligns with prior work \citet{kleinberg2018selection}---while supported by empirical evidence driven by data in similar contexts (see \cite{wenneras2010nepotism,faenza2020reducing})---and it reflects the bias in input data, which is a common practical concern as discussed before.} We then explore the effects of varying bias levels and provide the following intriguing insights:}

\revcolorm{
\smallskip
\noindent\emph{(i)}~We first consider the utility of the searcher when candidates' qualities are perceived to be their signals (and not true qualities), which we term as \emph{short-term utility}. We find that even with a moderate bias in the signals and strong fairness constraints like demographic parity, the ``price of fairness''---the relative utilitarian loss due to imposing the constraint---is small. This is surprising, as the optimal unconstrained policy can result in significant disparities between the two groups. The key implication is that imposing a parity constraint can deliver substantial egalitarian benefits without significantly impacting the utilitarian outcomes (\Cref{sec:num-short-term}).}

A compelling perspective from the theory of downstream hiring outcomes in labor economics (see \cite{becker2010economics, canay2020use}) suggests that candidates’ true qualities are often unobservable at the time of hiring and only revealed in the \emph{long-term} once they are given a chance. For example, although interview performance or resumes are important signals of future job success, they can underestimate the potential of candidates from disadvantaged backgrounds due to limited access to professional training or resources.\footnote{\revcolor{A study by \cite{deortentiis2022different} shows that candidates from higher social classes often perform better in interviews because of greater access to preparatory resources and increased confidence.}} However, once hired and provided with equal opportunities, these candidates can perform as well as their privileged peers. Therefore, applying constraints such as demographic parity in hiring can improve long-term outcomes by giving diverse candidates with hidden potential an equal chance to succeed. We examine this hypothesis in our numerics.


\smallskip
\revcolorm{
\noindent\emph{(ii)}~We consider \emph{long-term utility} where the candidates' values are replaced with their true, unobserved qualities. Our numerical results show that imposing socially aware constraints, such as demographic parity or quota, can even make the search \emph{more efficient} in terms of long-term utilities. This has a key implication: even though the decision maker selects candidates based on biased signals, ex-ante parity constraints help ``calibrate'' selections to (partially) correct the bias, meaning that selections will be more balanced between the two groups. Thus, this approach can outperform an unconstrained policy that ignores this balancedness (see \Cref{sec:num-long-term}).}

\smallskip
\noindent\emph{(iii)}~When the bias in the signals is significantly high, we find that imposing a strict constraint like demographic parity may lead to \emph{unintended inefficiencies}. In such cases, the decision maker might leave part of the capacity \emph{unallocated} to ensure parity. Therefore, it may be more practical to consider lenient alternatives, such as quotas with carefully chosen parameters, which can provide adequate representation to minority groups without causing underallocation. (see \Cref{sec:numerical-unintended}).

We also extend our simulations to a multistage screening scenario---such as a hiring process with phone interviews followed by on-site interviews---which is a special case of our JMS model. We then run our near-optimal, near-feasible G-RDIP algorithm to incorporate multiple socially aware constraints simultaneously. In summary, our numerical insights listed above for the Pandora's box model carry over to the JMS setting. Moreover, our convergence analysis indicates that the G-RDIP algorithm is fast, underscoring its practical relevance; see \Cref{sec:numerical_JMS} for details of the simulation setup and numerical results.

Lastly, we study the effect of various forms of resource augmentation (e.g., increasing capacity) and constraint adjustments (e.g., tuning the fraction in a quota constraint) in our simulations. In summary, our results suggest that these small and simple changes can go a long way in terms of improving the search utility, both with respect to signals and also true values. \revcolor{We also check the robustness of our results to the choice of value distributions.} See \Cref{sec:numerical-parity}, 
 \Cref{sec:numerical-quota}, \Cref{sec:numerical-budget}, and \Cref{sec:numerical_uniform} for more details. 

\smallskip
\revcolorm{
\noindent\textbf{Algorithmic \& Managerial Takeaway.} A key managerial takeaway from our work is that algorithmic decision-makers can develop optimal or near-optimal policies for sequential search and selection processes, which satisfy a wide range of (socially aware or operational) ex-ante constraints through carefully applied randomization and simple, often interpretable, adjustments to the original unconstrained problem.}

\revcolor{We end this introduction by highlighting that our work is related to various lines of work in operations research, computer science, and economics. We postpone the discussion of further related work to \Cref{sec:further-related-work}.}

\newcommand{\policyplus}{\policy^{+}}
\newcommand{\policyminus}{\policy^{-}}
\newcommand{\NumSelect}{k}
\newcommand{\Instance}{\mathcal{I}}
\newcommand{\util}{\textsc{Utility}}
\newcommand{\OptUnconstrained}{\textsc{OPT}_{\textsc{uc}}}
\newcommand{\OptConstrained}{\textsc{OPT}_{\textsc{cons}}}
\newcommand{\LagrangeConst}{\mathcal{L}_{\textsc{cons}}}
\newcommand{\DualLagrangeConst}{\mathcal{G}_{\textsc{cons}}}

\section{\revcolorm{Pandora's Box with Affine Constraints}}
\label{sec:pandora}
We start by revisiting the canonical sequential search model known as the {\em Pandora's box problem}, introduced by \cite{weitzman1979optimal}, under a single ex-ante affine constraint. Our main goal in this section is to characterize and compute an optimal policy that \emph{exactly} satisfies such a constraint. \revcolorm{Later in this section we show extensions to a more general version of the single ex-ante affine constraint and to multiple ex-ante affine constraints.}

\subsection{Setting and Notations}
\label{sec:pandora-setting}

Consider the following setting, known as the ``Pandora's box problem with multiple selections'': a decision maker is presented with $n$ alternatives (or boxes) indexed by $[\altnum] = {1, 2, \ldots, n}$, and aims to eventually select at most $\NumSelect \in [n]$ of them. Each box $i \in [n]$ is associated with an independent stochastic reward $v_i \sim F_i$, also referred to as the \emph{value} of box $i$, where $F_i$ is a prior probability distribution with finite and bounded support $\values_i \subset \mathbb{R}$.\footnote{\revcolor{We consider finite bounded support mostly for the simplicity of technical expositions and consistency with \Cref{sec:general}. Our results can be extended to general value distributions with appropriate technical modifications, omitted for brevity.}}
Initially, the decision maker only knows the prior distributions. To learn the actual reward of box $i$, she must \emph{inspect} (or open) it at a known cost $c_i$. Upon opening box $i$, she observes its reward. We also assume that the inspection costs are bounded for technical reasons. Although rewards and costs are typically non-negative in applications, we allow them to be negative or zero for reasons pertinent to our setting (explained later). At any time, the decision maker decides whether to stop or continue the search; if she decides to stop, she can choose to \emph{select} up to $\NumSelect$ opened boxes. Otherwise, she decides which unopened box (if any) to open next. We represent an instance of our problem by $\Instance = \displaystyle\left\{\left( \values_i, F_i, c_i \right) \mid i \in [n] \right\}$. We also denote the outside option by the index $0$, which is perceived as a dummy box with $v_0 = c_0 = 0$.

An \emph{admissible} policy $\policy$ in the above setting is a (possibly randomized and adaptive) rule that, at each time, given the history, decides whether to inspect a box or to stop and make selections as described above.  The goal is to maximize the expected \emph{utility of the search}, defined as the sum of the rewards of the finally selected boxes minus all inspection costs incurred throughout the search. Given an instance $\Instance$ and an admissible policy $\policy$, for each box $i$, we define the indicator random variables $\select_i^{\policy} \in \{0,1\}$ and $\inspect_i^{\policy} \in \{0,1\}$, representing whether box $i$ is selected and inspected under policy $\policy$, respectively. Note that $\inspect_i^\policy \geq \select_i^\policy$ in every sample path, as inspection is obligatory before selection. The expected utility of a policy $\policy$ in an instance $\Instance$ can be expressed as follows:
\begin{equation}
\label{eq:utility}
\util(\pi;\Instance)\triangleq\expect{\sum_{i\in[\altnum]} \left(\select_i^{\policy}  v_i - \inspect_i^{\policy}  c_i\right)}~.
\end{equation}
Using this notation, the \emph{unconstrained} optimization problem of finding the optimal policy for an instance $\mathcal{I}$ is formulated as the following stochastic program over the space of policies:
\begin{equation}
\label{eq:opt-unconstrained}
\tag{\textsc{OPT-uc}}
\begin{aligned}
\OptUnconstrained\triangleq & \quad \max_{\policy\in\PolicySpace}~\util(\pi;\Instance)~,
\end{aligned}
\end{equation}
where $\PolicySpace$ is the set of all admissible policies.\footnote{As a minor technical detail, we note that the set of deterministic policies is finite. This is because there are finitely many mappings from the history---which is finite due to the discrete rewards--to the set of possible actions: stopping and selection, or inspecting the next box (which is also finite since we have finitely many boxes).}

\subsection{Affine Ex-ante Constraint: Parity, Quota, and Budget}
\label{sec:single-affine}
Now, consider adding the following ex-ante (i.e., in expectation) affine constraint to \eqref{eq:opt-unconstrained}, which can have the form of either an equality or an inequality constraint:
\begin{equation}
\label{eq:affine-constraint}
\expect{\sum_{i\in[n]}\theta^{S}_i{\select_i^{\policy}}+ \sum_{i\in[n]}\theta^{I}_i{\inspect_i^{\policy}}}=b ~~\left(\textrm{or} \leq b\right)
\end{equation}
where $\AffineVector=\left[\theta_i^S; ~\theta_i^I\right]_{i\in[\altnum]}\in\mathbb{R}^{2\altnum}$ and $b\in[-1,1]$ (after normalization). \revcolor{Note that Constraint~$\eqref{eq:affine-constraint}$ can be alternatively interpreted as an affine constraint on the marginal probabilities of selection and inspection under policy $\policy$.} In the resulting constrained problem, the goal is to maximize the expected net utility of the search while satisfying this constraint. Formally, we have the following stochastic program:
\begin{equation}
\label{eq:opt-constrained}
\tag{\textsc{OPT-cons}}
\begin{aligned}
\OptConstrained\triangleq & \max_{\policy\in\PolicySpace(\AffineVector,\AffineConstant)}~\util(\pi;\Instance)\\
=&~~\max_{\policy\in\PolicySpace} ~~\expect{\sum_{i\in[\altnum]} \left(\select_i^{\policy}  v_i - \inspect_i^{\policy}  c_i\right)}, 
~\textrm{s.t.}~~\expect{\sum_{i\in[n]}\theta^{S}_i{\select_i^{\policy}}+ \sum_{i\in[n]}\theta^{I}_i{\inspect_i^{\policy}}}=b~~(\textrm{or} \leq b), 
\end{aligned}
\end{equation}
where $\PolicySpace(\AffineVector,\AffineConstant)$ is the set of all admissible policies that satisfy Constraint~\ref{eq:affine-constraint}. 

As mentioned earlier, various special cases of Constraint~\ref{eq:affine-constraint} can be used to satisfy socially aware or operational criteria. \revcolorm{For example, the following socially aware notions can be encoded by such a constraint:}
\begin{itemize}[leftmargin=*]
\revcolor{
\item\textbf{Group demographic parity:} Suppose each candidate belongs to one of two non-overlapping demographic groups, $\ManSet$ and $\WomanSet$ (e.g., male or female), and the decision maker observes each candidate's group. To promote access and equality, a common approach is to balance the expected number of ``successful'' outcomes (e.g., selection or inspection) between the two groups. This can be achieved by imposing this equality constraint:
\begin{equation} \label{eq:parity} \tag{\textsc{Parity}} \mathbb{E}\left[ \sum_{i \in \ManSet} \select_i^{\policy} \right] = \mathbb{E}\left[ \sum_{i \in \WomanSet} \select_i^{\policy} \right] \quad \left( \textrm{or} \quad \mathbb{E}\left[ \sum_{i \in \ManSet} \inspect_i^{\policy} \right] = \mathbb{E}\left[ \sum_{i \in \WomanSet} \inspect_i^{\policy} \right] \right) \end{equation}
This constraint ensures \emph{demographic parity} by equalizing the expected number of successful outcomes between the two groups. It also aligns with similar constraints studied in the algorithmic fairness literature for group fairness in classification (e.g., \cite{kleinberg2018algorithmic}).
\smallskip

\item\textbf{Group quota:}  Under the same setting as above, alternatively, one can promote inclusion and equality of opportunity by ensuring a minimum representation of a minority group $\WomanSet$ (e.g., female candidates or people of color), we can impose a \emph{quota} constraint. This constraint is commonly used in affirmative action policies in hiring (cf. \cite{welch1976employment}). Let the parameter $\theta \in [0,1]$ represent the desired minimum proportion of successful outcomes from group $\WomanSet$. The constraint is formulated as:
\begin{equation} \label{eq:quota} \tag{\textsc{Quota}} \mathbb{E}\left[ \sum_{i \in \WomanSet} \select_i^{\policy} \right] \geq \theta \cdot \mathbb{E}\left[ \sum_{i \in [\altnum]} \select_i^{\policy} \right] \quad \left( \textrm{or} \quad \mathbb{E}\left[ \sum_{i \in \WomanSet} \inspect_i^{\policy} \right] \geq \theta \cdot \mathbb{E}\left[ \sum_{i \in [\altnum]} \inspect_i^{\policy} \right] \right) 
\end{equation}}
    \item \textbf{Average budget constraint}: 
    In repeated hiring scenarios with limited resources, the decision maker might face an average \emph{budget} constraint on total interview costs, or aim to keep the average number of hires below a threshold lower than the actual capacity. Alternatively, if $\WomanSet$ represents underprivileged candidates who cannot afford hiring or interviewing expenses (e.g., relocation fees or application costs), we might want to subsidize their expenses within an average budget. Given a budget $B \geq 0$ and expenses $e_i \geq 0$ for hiring (or interviewing) each candidate $i$, these constraints are formulated as follows.
    \begin{equation}
    \label{eq:budget}
        \tag{\textsc{Budget}}
        \expect{\sum_{i \in [\altnum]} e_i\select_i^{\policy} } \leq B~~\left(\textrm{or}~~ \expect{\sum_{i \in [\altnum]} e_i\inspect_i^{\policy} } \leq B\right)
    \end{equation}
\end{itemize}

\smallskip
\begin{remark}[Feasible vs. Infeasible]
\label{rem:feasible}
    Problem~\ref{eq:opt-constrained} may be infeasible; that is, $\PolicySpace(\AffineVector, \AffineConstant) = \emptyset$. However, as the marginal probabilities of selection and inspection are variables set by the policy---and noting that inspection is obligatory before selecting any box---the problem is feasible if and only if the following polytope is non-empty: $\mathbf{x},\mathbf{y}\in[0,1]^\altnum$ such that (i)~$x_i\leq y_i,~i\in[\altnum]$, (ii)~$\sum_{i\in[\altnum]}x_i\leq \NumSelect$, and (iii)~$\sum_{i\in[\altnum]}\theta^{S}_i{x_i}+ \sum_{i\in[n]}\theta^{I}_i{y_i}\leq b$. Thus, to verify the feasibility of Problem~\ref{eq:opt-constrained}, we only need to check whether this simple polytope is non-empty. Constraints such as \ref{eq:parity}, \ref{eq:quota}, and \ref{eq:budget} always result in a feasible problem, since the trivial policy that ``does nothing'' (i.e., selects and inspects no boxes) satisfies the constraint.
\end{remark}
\begin{remark}[Equality vs. Inequality Constraint]
\label{rem:tightness}
\revcolorm{When Constraint~\ref{eq:affine-constraint} is an inequality, we can determine whether to drop the constraint or replace it with its equality form (i.e., make it binding) through a simple check. First, select \emph{any} optimal solution of Problem~\ref{eq:opt-unconstrained}. If this solution also satisfies Constraint~\ref{eq:affine-constraint}, then it is also optimal for Problem~\ref{eq:opt-constrained}, and Constraint~\ref{eq:affine-constraint} can be dropped. Otherwise, we show formally in \Cref{app:binding} (\Cref{lemma:binding}) that without loss of generality we can assume that the optimal constrained policy exactly `uses up' the constraint, and hence we can replace Constraint~\ref{eq:affine-constraint} with its equality form.}

    
    
\end{remark}
\revcolor{Based on Remarks~\ref{rem:feasible} and \ref{rem:tightness}, we can first check the feasibility and whether the constraint is binding as a pre-processing step; Therefore, without loss of generality, we assume the following in the remainder of this section.}
 \begin{assumption}
 \label{ass:pandora_single_equality_feasibility}
    Problem~\ref{eq:opt-constrained} is feasible and Constraint~\ref{eq:affine-constraint} is an equality constraint.
 \end{assumption}

\subsection{Optimal Policy for the Constrained Problem}
\label{sec:single-affine-policy}

To design optimal policies for the constrained problem, we first revisit the optimal algorithm for the unconstrained version in \Cref{subsub:base:pandora}. After highlighting the nonuniqueness of the optimal solution, we provide a \emph{refined} version that offers more flexibility in selecting outcomes, allowing us to incorporate the ex-ante affine constraint. Building on this refinement in \Cref{sec:lagrangian-pandora} and \Cref{sec:randomized-tie}, we present a remarkably simple optimal policy for the constrained version.


\subsubsection{Pandora's Box Optimal Policy: Review and Refinement}
\label{subsub:base:pandora}

In his seminal work, \cite{weitzman1979optimal} presented an elegant index-based policy for the unconstrained Pandora's box problem with non-negative rewards and costs, which works as follows: (i)~For each box $i$, calculate an index $\reserve_i$ such that $\reserve_i\in\left\{\reserve\in\mathbb{R}:\expect{\left(v_i - \reserve\right)^+} = c_i\right\}$ (set $\reserve_0=0$ for the outside option); (ii)~Begin inspecting boxes one by one in decreasing order of their indices $\reserve_i$, observing their rewards upon inspection; (iii)~After inspecting each box, stop if the maximum realized reward among the inspected boxes exceeds the maximum index among the unopened boxes (i.e., those not yet inspected); (iv)~Upon stopping, select the opened box with the highest realized reward. For the case of multiple selections with $\NumSelect>1$, \cite{kleinberg2016descending,singla2018price} show that a simple modification of this policy, called \emph{(frugal) greedy index-based policy}, is optimal: In step~(iii), stop if the 
$\NumSelect^{\textrm{th}}$ highest reward in the inspected boxes (considered zero if fewer than $\NumSelect$ boxes are inspected) exceeds the maximum index among unopened boxes, and in step~(iv), select the $\NumSelect$ inspected boxes with the highest realized rewards.

Importantly, the above description of the optimal policy lacks crucial details when we consider generic instances of the problem. For instance, when \( c_i < 0 \), the index \( \reserve_i \) becomes ill-defined. Additionally, when \( c_i = 0 \), the choice of \( \reserve_i \) is not unique since any \( \reserve_i \in [\max\{ v : v \in \values_i \}, +\infty) \) is valid. This means the optimal policy can either open any box \( i \) with \( c_i = 0 \) sooner by selecting a higher \( \reserve_i \) within that interval or delay opening box \( i \) until the largest index among unopened boxes is lower than \( \max\{ v : v \in \values_i \} \). Moreover, if the \( \NumSelect^{\textrm{th}} \) highest reward among the inspected boxes is negative or zero at any point, it is unclear from the description whether the algorithm should stop or continue. Finally, there may be ties in the order of inspections (step~(ii)), the stopping decision (step~(iii)), and the selection decision (step~(iv)). In the basic problem, these details can be overlooked because rewards and costs are typically non-negative, and any feasible choice of \( \reserve_i \) and tie-breaking rules for steps (ii)--(iv) yields an optimal policy. However, making the ``right choices'' becomes extremely important when satisfying our ex-ante constraint, as we see later in this section.


In light of these considerations, we present a refined version of the greedy index-based policy, described in \Cref{alg:Pandora}, which specifies the previously undefined components as follows:

\begin{itemize}[leftmargin=*]
    \item \textbf{Redefinition of indices}: For each \( i \in [n] \), we redefine the index \( \reserve_i \) as:
    \begin{equation}
    \label{eq:base:reserve}
    \reserve_i \triangleq 
    \begin{cases}
    + \infty & \text{if } c_i < 0, \\ 
    \inf \left\{ \sigma \in \mathbb{R} : \mathbb{E}\left[ (v_i - \sigma)^+ \right] = c_i \right\} & \text{otherwise.}
    \end{cases}
    \end{equation}

    \item \textbf{Option values \& candidates}: We define an \emph{option value} $ \option_i $ for each $ i \in [n] $, initialized to $\reserve_i$ and updated to $v_i$  once box $i$ is inspected ($o_0 = 0$ for the outside option). Using the option values $\{ \option_i \}$, the algorithm maintains the set of \emph{candidates} $\candidates \subseteq [n]$ for the next inspection or selection at each step, consisting of unselected boxes whose option values are currently maximum and unopened boxes with zero inspection cost. The algorithm also tracks the set \( \mathcal{O} \subseteq [n] \) of opened boxes.

    \item \textbf{Tie-Breaking rule}: The algorithm takes a (possibly adaptive or randomized) tie-breaking rule \( \tierule \) as input, which, at each step, maps the history of the search process to a member of the current set of candidates \( \candidates \).
\end{itemize}

\begin{algorithm}
\caption{Refined Optimal Policy for Pandora's Box (with Multiple Selections)}
\label{alg:Pandora}
    \SetKwInOut{Input}{input}
    \SetKwInOut{Output}{output}
 \Input{instance $\displaystyle\left\{\left(\values_i, F_i, c_i\right)| i \in [n]\right\}$; target number of selections $\NumSelect$; tie-breaking rule $\tiebreak$}
 
 \Output{set of opened boxes $\mathcal{O}$; set of selected boxes $\mathcal{S}\subseteq \mathcal{O}$ with $\lvert\mathcal{S}\rvert \leq \NumSelect$}
 
 \vspace{2mm}
 
 Initialize $\mathcal{O} \leftarrow \{0\}$, $\mathcal{S}\leftarrow \emptyset$,  and $\textsc{terminate}\leftarrow\texttt{No}$~~{\color{\commentcolor}\tcc{outside option, with index $0$, is a dummy box with ${v_0=c_0=0}$ that is opened at the beginning}}
 
 \For{$i\in[n]$}{
 Compute $\reserve_i$ as per \Cref{eq:base:reserve} and initialize the \emph{option value}  $\option_i \leftarrow \reserve_i$}
 
 Initialize option value of the outside option $o_0\leftarrow 0$

 \vspace{1mm}
 \While{$\textsc{terminate} == \emph{\texttt{No}}$}
{
 Pick the set of candidates $\candidates\leftarrow \Big(\underset{i\in\left([n]\setminus\mathcal{S}\right)\cup\{0\}}{\argmax}~\option_i\Big) \cup \left\{i\in[n]\setminus \mathcal{O}:~c_i = 0\right\}$ {\color{\commentcolor}~~\tcc{the candidate set $\candidates$ may include the outside option ${i=0}$}}
 
 Choose  $i^* \in \candidates$ based on the tie-breaking rule $\tiebreak$ \\
 \If{$i^* \notin \mathcal{O}$}
  {
 Inspect box $i^*$ and observe $v_{i^*}$
 
 Add $i^*$ to $ \mathcal{O}$ and set $\option_{i^*} \leftarrow v_{i^*}$ 
 }
 \Else
 {
 Add $i^*$ to $\mathcal{S}$
 
 \If{$i^*=0$ or $\lvert\mathcal{S}\rvert=\NumSelect$}
 {
 Set $\textsc{terminate}\leftarrow \texttt{Yes}$
 }
 }
 }
\end{algorithm}

\revcolor{We remark that \Cref{alg:Pandora} with $\NumSelect = 1$ implements the same ordering and stopping rule as in \cite{weitzman1979optimal}, up to tie-breaking. Similarly, for $\NumSelect > 1$, this algorithm is exactly equivalent, again up to tie-breaking, to the optimal policy of \cite{kleinberg2016descending,singla2018price} described earlier. For more details on why this is the case, see \Cref{prop:pandora-refinement} in \Cref{app:pandora-details}.}




Now, consider a generic instance $\displaystyle\left\{(\values_i, F_i, c_i) \mid i \in [n] \right\}$ of the problem, possibly with negative rewards or costs. In such an instance, various kinds of ties can occur as described earlier. Specifically, during the execution of \Cref{alg:Pandora}, there may be multiple boxes with the maximum option value at any step, or there might be one or more unopened boxes available for free inspection.\footnote{\revcolor{One might think these ties only happen in degenerate cases when distributions and costs are not in general position. However, as we will see later, ties can easily arise after dual adjustments of any instance. In fact, our dual adjustments sometimes lead to ties even when the original instances are in general position to incorporate our ex-ante constraints. Mathematically speaking, ``no ties'' occur if no value is adjusted to zero (otherwise, there can be a tie between that value and the outside option), and the minimum of the piecewise-linear convex function $\DualLagrangeConst$ (defined later in \Cref{sec:lagrangian-pandora}) occurs at a non-breakpoint. The first condition can be violated in any instance. For the second condition, from a polyhedral geometric perspective, this is equivalent to an entire face of a polytope being optimal for a certain linear optimization over this polytope—which should not happen for instances in general position. We defer the details to later in this section; see also \Cref{example:FS}.}} Therefore, there may be multiple candidates to choose for the next step. The advantage of the refined presentation in \Cref{alg:Pandora} is that it involves only a \emph{single tie-breaking decision (in Line 7)}, rather than separate decisions for ordering, stopping, and selection. As we will show in \Cref{sec:lagrangian-pandora}, this single  rule is sufficient to implement the optimal policy for the constrained Pandora's box problem with an ex-ante affine constraint, after properly adjusting the problem instance.\footnote{For the special case of $\NumSelect = 1$, this algorithm is rich enough to cover all (possibly randomized) optimal policies for the unconstrained Pandora's box problem. A proof of this fact is presented in \Cref{app:pandora-details}, Lemma~\ref{lemma:full-cover}. However, this is not true for general $\NumSelect > 1$. Nevertheless, it suffices for our purposes when $\NumSelect > 1$.}

\subsubsection{Dual-based Adjustments}
\label{sec:lagrangian-pandora}
Equipped with these preliminaries, we now focus to Problem~\ref{eq:opt-constrained}, and construct an optimal policy for this problem. We start by  ``Lagrangifying'' the Constraint~\ref{eq:affine-constraint} in the objective of this stochastic program. In particular, given policy $\policy$ and dual variable $\lambda$, define the Lagrangian relaxation of the problem as the following:
\begin{align}
\LagrangeConst (\pi;\LagVector) & \triangleq \expect{\sum_{i \in[\altnum]} \left(\select_i^{\policy}   v_i - \inspect_i^{\policy}   c_i\right)} - \LagVector  \expect{\sum_{i\in[n]}\theta^{S}_i{\select_i^{\policy}}+ \sum_{i\in[n]}\theta^{I}_i{\inspect_i^{\policy}}}+ \LagVector\cdot b\\ 
& = \expect{\sum_{i \in[\altnum]} \select_i^{\policy}(v_i-\LagVector\cdot\theta^{S}_i)-\sum_{i\in[n]}\inspect_i^{\policy}   (c_i+\LagVector\cdot\theta^{I}_i)} +\LagVector\cdot b~.
\label{eq:largrange}
\end{align}
We then define the Lagrange dual function as:
\begin{align}
\label{eq:dual}
\DualLagrangeConst (\LagVector) \triangleq \max_{\policy \in \PolicySpace} ~\LagrangeConst (\pi;\LagVector)~.
\end{align}
By fixing $\lambda$ and ignoring the constant term $\LagVector\cdot b$, the maximization problem in \eqref{eq:dual} has \emph{exactly} the same structure as the original Pandora's box problem but with adjusted instance parameters. Specifically, we define the \emph{adjusted rewards} $\widetilde{v_i}$ and \emph{adjusted costs} $\widetilde{c_i}$ for each $i \in [n]$ as:
\begin{equation}
\label{eq:adjusted-instance}
\widetilde{v_i}\triangleq v_i-\LagVector\cdot\theta^{S}_i~~~,~~~\widetilde{c_i}\triangleq c_i+\LagVector\cdot\theta^{I}_i~.
\end{equation}

To gain more insight into this adjustment, let us examine some special cases. First,
consider \ref{eq:parity} in  selection given the two demographic groups $\ManSet$ and $\WomanSet$. In this case, the values are adjusted as follows (with costs remaining unchanged):
\begin{equation}
\label{eq:adjusted-value}
\begin{aligned}
\widetilde{v_i}\triangleq 
\begin{cases}
 v_i+\LagVector &  i \in \WomanSet~,\\ 
 v_i-\LagVector & i\in \ManSet~.\\
\end{cases}
\end{aligned}
\end{equation}
Thus, the values for one group are increased by $\LagVector$, while those for the other group are decreased by the same amount. The sign of $\LagVector$ determines which group gains more representation and which loses. By choosing an appropriate $\LagVector$, we can favor the underrepresented group (i.e., the group with a lower expected number of selections in the unconstrained problem) and reduce the advantage of the overrepresented group, thereby equalizing their expected number of selections. 

As another example, consider \ref{eq:quota} in inspection given the minority group $\WomanSet$ and the majority group $\ManSet$. Here, the costs are adjusted as follows (with values remaining unchanged):

\begin{equation}
\label{eq:adjusted-quota}
\begin{aligned}
\widetilde{c_i}\triangleq 
\begin{cases}
 c_i+ (\theta-1)\LagVector &  i \in \WomanSet~,\\ 
 c_i+\theta \LagVector & i\in \ManSet~.\\
\end{cases}
\end{aligned}
\end{equation} 
In this adjustment, the costs for the minority group decrease by $(1-\theta)\LagVector$ and for the majority group increase by $\theta \LagVector$. By selecting an appropriate $\LagVector$, the policy provides more advantage to the minority group by lowering their inspection costs, thus ensuring a certain level of representation for this group during the inspection process.


Importantly, we note that for all \(\lambda \in \mathbb{R}\), we have \(\DualLagrangeConst(\lambda) \geq \OptConstrained\). Therefore, we can solve Problem~\ref{eq:opt-constrained} if we can find a policy $\policy$ maximizing the Lagrangian relaxation function $\LagrangeConst (\pi;\LagVector)$ for some choice of \(\lambda\) that also satisfies Constraint~\ref{eq:affine-constraint}. The rest of this section is dedicated to constructing such a policy. To this end, we begin by establishing some structural properties of the function \(\DualLagrangeConst\), summarized in \Cref{lemma:Const:DualLagrangeProperties} (proved in \Cref{app:sec:pandora}). To facilitate its proof, for any policy \(\policy\), we first define its corresponding \emph{constraint slack}:
\begin{align}
\label{eq:slack}
\slack_{\textsc{cons}}^{{\policy}} \triangleq b-\expect{\sum_{i\in[n]}\theta^{S}_i{\select_i^{\policy}}+ \sum_{i\in[n]}\theta^{I}_i{\inspect_i^{\policy}}} 
\end{align}

\revcolor{
\begin{proposition}[\textbf{Properties of $\boldsymbol{\DualLagrangeConst}$}]
\label{lemma:Const:DualLagrangeProperties}
The Lagrange dual function \(\DualLagrangeConst\) (\cref{eq:dual}) satisfies the following:

\begin{enumerate}[label=(\roman*),leftmargin=0.22in]
    \item  \(\DualLagrangeConst\) is a bounded and piecewise-linear convex function.
    
    \item There exists a minimizer \(\LagVector^* \in \arg\min_{\LagVector \in \mathbb{R}} \DualLagrangeConst(\LagVector)\), and this minimizer is bounded in absolute value by an instance-dependent constant.
    
    \item For every \(\LagVector\), and for any optimal policy \(\policy^{\LagVector}\) in the corresponding adjusted instance (defined formally in \Cref{eq:adjusted-instance}), the constraint slack \(\slack_{\textsc{cons}}^{\policy^{\LagVector}}\) is a subgradient of \(\DualLagrangeConst\) at \(\LagVector\).
\end{enumerate}


    
\end{proposition}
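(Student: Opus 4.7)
The overall plan is to exploit the fact that, for every fixed policy $\pi$, the Lagrangian $\LagrangeConst(\pi;\lambda)$ is \emph{affine} in $\lambda$: reading off \eqref{eq:largrange}, the slope is exactly the constraint slack $\slack_{\textsc{cons}}^{\pi}$ (defined in \eqref{eq:slack}) and the intercept is the utility $\util(\pi;\Instance)$. Hence $\DualLagrangeConst(\lambda)=\sup_{\pi\in\PolicySpace}\LagrangeConst(\pi;\lambda)$ is a pointwise supremum of affine functions, which immediately gives convexity. To upgrade convexity to piecewise-linearity, I would observe that the Lagrangian is linear with respect to randomization over deterministic policies, so the supremum is attained over the finite set of deterministic admissible policies $\PolicySpace_{\textrm{det}}$ (see the footnote after \eqref{eq:opt-unconstrained}). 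Therefore $\DualLagrangeConst$ is the maximum of \emph{finitely many} affine functions, which is piecewise-linear and finite at every $\lambda\in\mathbb{R}$. Its boundedness on any compact set follows from the bounded supports $\values_i$, bounded costs $c_i$, and bounded parameters $\AffineVector,b$.

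For part (ii), weak duality combined with Assumption~\ref{ass:pandora_single_equality_feasibility} gives $\DualLagrangeConst(\lambda)\ge\OptConstrained>-\infty$ for every $\lambda\in\mathbb{R}$, so $\DualLagrangeConst$ is bounded below. Any piecewise-linear convex function on $\mathbb{R}$ that is bounded below attains its infimum at a finite point, yielding the existence of $\lambda^*$. For the quantitative bound on $|\lambda^*|$, I would argue that either $\lambda^*=0$, or $\lambda^*$ lies at a breakpoint of the piecewise-linear function, at which two distinct deterministic policies $\pi,\pi'\in\PolicySpace_{\textrm{det}}$ give equal Lagrangian values, i.e., $\util(\pi;\Instance)-\util(\pi';\Instance)=\lambda^*\bigl(\slack_{\textsc{cons}}^{\pi'}-\slack_{\textsc{cons}}^{\pi}\bigr)$. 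The numerator is uniformly bounded in terms of $\altnum$, the maximum of $|v|$ over $v\in\cup_i\values_i$, and $\max_i |c_i|$; the minimum (over pairs $\pi,\pi'$) of the nonzero $|\slack_{\textsc{cons}}^{\pi}-\slack_{\textsc{cons}}^{\pi'}|$ is bounded away from zero by a positive instance-dependent constant, since the set of achievable slack values is finite. Combining these two estimates yields the desired explicit bound on $|\lambda^*|$.

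Finally, for part (iii), I would invoke the standard envelope/Danskin argument for a pointwise supremum of affine functions. Fix any $\lambda\in\mathbb{R}$ and let $\policy^{\lambda}\in\argmax_{\pi\in\PolicySpace}\LagrangeConst(\pi;\lambda)$. Using the affine form from \eqref{eq:largrange} together with optimality of $\policy^{\lambda}$ at $\lambda$, for any $\lambda'\in\mathbb{R}$:
\begin{align*}
\DualLagrangeConst(\lambda') \;\ge\; \LagrangeConst(\policy^{\lambda};\lambda') \;=\; \LagrangeConst(\policy^{\lambda};\lambda) + (\lambda'-\lambda)\,\slack_{\textsc{cons}}^{\policy^{\lambda}} \;=\; \DualLagrangeConst(\lambda) + (\lambda'-\lambda)\,\slack_{\textsc{cons}}^{\policy^{\lambda}},
\end{align*}
which is precisely the subgradient inequality $\slack_{\textsc{cons}}^{\policy^{\lambda}}\in\partial\DualLagrangeConst(\lambda)$. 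The only nontrivial step in this proposal is pinning down the explicit instance-dependent bound on $|\lambda^*|$ in (ii); parts (i) and (iii) follow essentially immediately from the affine-in-$\lambda$ structure of $\LagrangeConst$ together with standard facts from convex analysis.
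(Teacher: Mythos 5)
Your proposal is correct and follows essentially the same route as the paper's proof: reduce to the finite set of deterministic policies so that $\DualLagrangeConst$ is a maximum of finitely many affine functions of $\LagVector$ (giving (i)), use weak duality plus feasibility to bound $\DualLagrangeConst$ from below and hence obtain a finite minimizer (giving (ii)), and read off the slack as the slope of each affine piece via the envelope/Danskin subgradient inequality (giving (iii)). Your treatment of (ii) is in fact slightly more explicit than the paper's, which simply asserts that a piecewise-linear convex function bounded below has a bounded minimizer, whereas you sketch how to extract an instance-dependent bound on $|\LagVector^*|$ from the breakpoint structure.
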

}

 \begin{figure}[htbp]
    \centering
    \footnotesize
    \begin{subfigure}{0.48\textwidth}
        \centering
        \includegraphics[width=\linewidth]{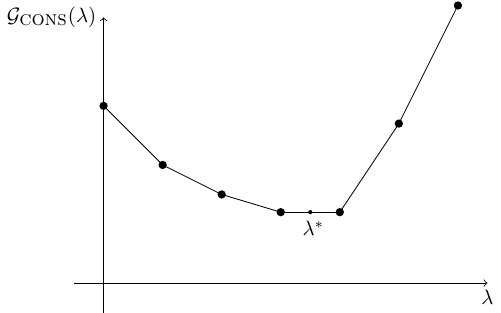}
        \caption{\footnotesize{Degenerate case, differentiable at $\LagVector^*$.}}
        \label{fig:first_sub}
    \end{subfigure}
    \hfill
    \begin{subfigure}{0.48\textwidth}
        \centering
        \includegraphics[width=\linewidth]{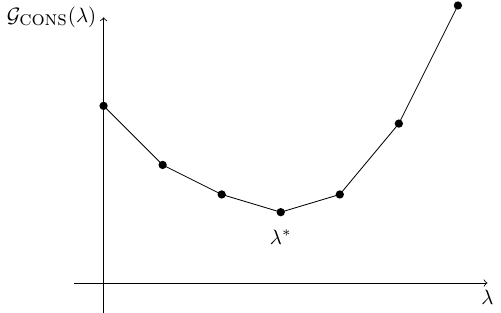}
        \caption{\footnotesize{Non-degenerate case, breakpoint at $\LagVector^*$.}}
        \label{fig:second_sub}
    \end{subfigure}
    \caption{The Lagrange dual function $\boldsymbol{\DualLagrangeConst}$ as a function of $\boldsymbol{\LagVector}$}
    \label{fig:both_figs}
\end{figure}

The above proposition also implies that a global minimum \(\LagVector^*\) of \(\DualLagrangeConst\) can be efficiently computed using binary search or standard convex optimization methods such as gradient descent~\citep{bubeck2015convex} \footnote{\revcolorm{Even though, theoretically speaking, one should be able to find $\lambda^*$ \emph{exactly} in polynomial time using convex minimization (for any input instance with bounded bit complexity), in practice one may suffice to instead find a $\tilde{\lambda}$ that is $\varepsilon$-close to $\lambda^*$, then its easy to show we can still obtain a randomized policy that satisfies the ex-ante constraint exactly, but now may have an additional $\mathcal{O}(\varepsilon)$ error in the objective versus the optimal constrained policy.}\label{footnote:error_from_approximating_optimal_dual}}. Since \(\DualLagrangeConst\) is a piecewise-linear function, if it is differentiable at \(\LagVector^*\) (which occurs in the degenerate case when \(\LagVector^*\) is not a breakpoint; see \Cref{fig:first_sub}), then it must have a slope of zero. By part~(iii) of the above proposition (a simple application of the envelope theorem), the slope of \(\DualLagrangeConst\) at \(\LagVector^*\) is given by \(\slack_{\textsc{cons}}^{\policy^{\LagVector^*}}\), where \(\policy^{\LagVector^*}\) denotes an optimal policy for the adjusted instance corresponding to \(\LagVector^*\). In this case, we are done because \(\policy^{\LagVector^*}\) satisfies Constraint~\ref{eq:affine-constraint} and is thus a solution to \eqref{eq:opt-constrained}. However, if \(\DualLagrangeConst\) is not differentiable at \(\LagVector^*\) (which is typically the case when the instance is in general position; see~\Cref{fig:second_sub}), then there exist multiple optimal policies \(\policy^{\LagVector^*}\) for the adjusted instance corresponding to \(\LagVector^*\), each associated with a different tie-breaking rule and yielding different constraint slacks \(\slack_{\textsc{cons}}^{\policy^{\LagVector^*}}\) (positive or negative). Finding the optimal policy now involves randomizing over these different tie-breaking rules.

Due to the nature of our problem, there may be exponentially many tie-breaking rules to randomize over, since each deterministic rule is a (possibly adaptive) total ordering over boxes, making it challenging to compute the optimal policy. Nevertheless, we show that we only need to consider \emph{two} specific tie-breaking rules to solve our problem, and these can be computed in polynomial-time.

\subsubsection{Randomized Tie-Breaking}
\label{sec:randomized-tie}
To design a randomized optimal policy following the recipe suggested earlier, we first define  ``extreme'' deterministic tie-breaking rules as follows, which turns out to play a critical role in our final policy.
\smallskip
\begin{definition}[\textbf{Extreme Tie-Breaking Rules}]
\label{def:tie:extreme}
Given any set of candidates $\candidates$ for breaking ties at any point during the execution of \Cref{alg:Pandora} (Line 7), the \emph{negative-extreme rule}, 
denoted by $\tierule^{-}$, assigns a \emph{tie-breaking score} $s^-_i\in\mathbb{R}$ to each $i\in\candidates$ as follows (here, given the set of selected boxes $\mathcal{S}$ and option values $\{o_i\}$ at this point in the execution of the algorithm, $o_{\textrm{max}}\triangleq \underset{i\in([\altnum]\setminus\mathcal{S})\cup\{0\}}{\max}~o_i$):
\begin{itemize}[leftmargin=*]
    \item For $i\displaystyle \in\mathcal{C}\setminus\mathcal{O}$:
    \begin{itemize}
        \item If $\displaystyle c_i<0$, set $\displaystyle s^-_i\leftarrow +\infty$.
        \item If $\displaystyle c_i\geq 0$ and $ \displaystyle\theta_i^{I}\geq 0$, set $\displaystyle s^-_i\leftarrow \theta_i^S+\frac{\theta_i^I}{\prob{v_i>o_{\textrm{max}}}}$ (set $s^-_i=+\infty$ if $\prob{v_i> o_{\textrm{max}}}=0$).
        \item If $\displaystyle c_i\geq 0$ and $\displaystyle \theta_i^{I}<0$, set $\displaystyle s^-_i\leftarrow \theta_i^S+\frac{\theta_i^I}{\prob{v_i\geq o_{\textrm{max}}}}$ (set $s^-_i=-\infty$ if $\prob{v_i\geq o_{\textrm{max}}}=0$).
    \end{itemize}
       \item For $\displaystyle i\in \candidates\cap\mathcal{O}$:
       \begin{itemize}[leftmargin=*]
           \item If $i\neq 0$, set $\displaystyle s^-_i\leftarrow \theta_i^S$, and if $i=0$ (that is, outside option), set $s^-_i\leftarrow 0$.
       \end{itemize}
    \end{itemize} 
Similarly, the counterpart rule, calling it {\em positive-extreme rule} and denote it by $\tierule^{+}$, assigns a \emph{tie-breaking score} $s^+_i\in\mathbb{R}$ to each $i\in\candidates$ as follows: 
\begin{itemize}[leftmargin=*]
    \item For $i\displaystyle \in\mathcal{C}\setminus\mathcal{O}$:
    \begin{itemize}[leftmargin=*]
        \item If $\displaystyle c_i<0$, set $\displaystyle s^+_i\leftarrow +\infty$.
        \item If $\displaystyle c_i\geq 0$ and $ \displaystyle\theta_i^{I}\leq 0$, set $\displaystyle s^+_i\leftarrow -\theta_i^S-\frac{\theta_i^I}{\prob{v_i>o_{\textrm{max}}}}$ (set $s^+_i=+\infty$ if $\prob{v_i> o_{\textrm{max}}}=0$).
        \item If $\displaystyle c_i\geq 0$ and $\displaystyle \theta_i^{I}>0$, set $\displaystyle s^+_i\leftarrow -\theta_i^S-\frac{\theta_i^I}{\prob{v_i\geq o_{\textrm{max}}}}$ (set $s^+_i=-\infty$ if $\prob{v_i\geq o_{\textrm{max}}}=0$).
    \end{itemize}
       \item For $\displaystyle i\in \candidates\cap\mathcal{O}$:
       \begin{itemize}[leftmargin=*]
           \item If $i\neq 0$, set $\displaystyle s^+_i\leftarrow -\theta_i^S$, and if $i=0$ (that is, outside option), set $s^+_i\leftarrow 0$.
       \end{itemize}
    \end{itemize} 
Then, the rule $\tierule^{-}$ (resp. $\tierule^{+}$) breaks the ties in favor of scores $\{s^-_i\}_{i\in\candidates}$ (resp. $\{s^+_i\}_{i\in\candidates}$), that is, it returns any $\displaystyle i^*\in \underset{i\in\candidates}{\argmax}~s^-_i$ (resp. any $\displaystyle i^*\in \underset{i\in\candidates}{\argmax}~s^+_i$). 
\end{definition}

\smallskip

To gain more intuition about these tie-breaking scores, consider the special case  \ref{eq:parity} in selection, assuming there are no boxes with negative or zero costs. Given two demographic groups, \(\ManSet\) and \(\WomanSet\), each extreme tie-breaking rule corresponds to assigning a score of \(+1\) to one group and \(-1\) to the other, thereby breaking all ties entirely in favor of one group over the other. This approach maximizes the probability of selection from the preferred group as much as possible.

As another example, consider \ref{eq:budget} in selection, when \(e_i = 1\) for all candidates \(i\) in a special group \(\WomanSet\) and \(e_i = 0\) otherwise. In this case, each extreme tie-breaking rule assigns a score of \(+1\) to boxes in \(\WomanSet\) and \(0\) to all other boxes, or assigns \(-1\) to boxes in \(\WomanSet\) and \(0\) to all others. The outside option always has a score of \(0\); therefore, in situations with ties, each extreme rule either always prefers boxes not in \(\WomanSet\) over those in \(\WomanSet\), or always prefers boxes in \(\WomanSet\) over the others.


Building on this intuition, we formally show that these two extreme tie-breaking rules correspond to the policies that minimize and maximize the constraint slack \(\Delta_{\textsc{cons}}\) among all optimal policies for any given adjustment \(\LagVector\). Consequently, we can find an optimal policy for Problem~\ref{eq:opt-constrained} by first adjusting the rewards and costs using \(\LagVector^*\) as in \eqref{eq:adjusted-instance}, and then randomizing over \emph{only two} index-based optimal policies. These policies are obtained by running \Cref{alg:Pandora} on the adjusted instance, with \(\tierule^{+}\) and \(\tierule^{-}\) as the tie-breaking rules. We refer to these policies as \(\policyplus\) and \(\policyminus\), respectively.



\begin{proposition}[\textbf{Slack Signs for Extreme Rules}]  
\label{lem:const:slack}
For the two extreme tie-breaking rules $\tierule^{+}$ and $\tierule^{-}$ (as in \Cref{def:tie:extreme}), and their corresponding index-based optimal policies $\policyplus$ and $\policyminus$ for the adjusted instance (as defined in \eqref{eq:adjusted-instance}) with $\LagVector^*\in\underset{\LagVector\in\mathbb{R}}{\argmin}~\DualLagrangeConst(\LagVector)$, we have $\slack_{\textsc{cons}}^{\policyplus} \geq 0 \geq \slack_{\textsc{cons}}^{\policyminus}$.
\end{proposition}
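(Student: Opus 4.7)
The plan is to reduce the claim to the fact that $\lambda^*$ minimizes the convex dual $\DualLagrangeConst$. By \Cref{lemma:Const:DualLagrangeProperties}, $\DualLagrangeConst$ is piecewise-linear and convex with minimum at $\lambda^*$, so $0$ lies in its one-dimensional subdifferential, which is the closed interval $[D^-\DualLagrangeConst(\lambda^*),\, D^+\DualLagrangeConst(\lambda^*)]$ of one-sided derivatives; moreover, part~(iii) of that proposition identifies the slack of any Lagrangian-optimal policy at $\lambda$ with a subgradient of $\DualLagrangeConst$ at $\lambda$. Since \Cref{alg:Pandora} produces a Lagrangian-optimal policy for $\LagrangeConst(\cdot;\lambda^*)$ under any tie-breaking, both $\policyplus$ and $\policyminus$ are Lagrangian-optimal at $\lambda^*$, and hence their slacks automatically lie in $[D^-,D^+]$. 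The sharper identities I aim to prove are $\slack^{\policyplus}_{\textsc{cons}} = D^+\DualLagrangeConst(\lambda^*)$ and $\slack^{\policyminus}_{\textsc{cons}} = D^-\DualLagrangeConst(\lambda^*)$; combined with $D^-\DualLagrangeConst(\lambda^*)\leq 0\leq D^+\DualLagrangeConst(\lambda^*)$, these yield the proposition.

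\textbf{Perturbation step.} To identify $\policyplus$ with the right-derivative, I will show that $\policyplus$ remains Lagrangian-optimal throughout a small right-neighborhood $[\lambda^*,\lambda^*+\eta]$. Once this is established, linearity of $\LagrangeConst(\policyplus;\lambda)=\util(\policyplus)+\lambda\,\slack^{\policyplus}$ in $\lambda$ forces $\DualLagrangeConst$ to coincide with this affine function on that neighborhood, so its right-derivative at $\lambda^*$ is exactly $\slack^{\policyplus}$. To establish the optimality, I run \Cref{alg:Pandora} on the adjusted instance at $\lambda^*+\epsilon$ and argue that the tie-breaking rule $\tierule^{+}$ applied at $\lambda^*$ produces the same action trajectory as the algorithm at $\lambda^*+\epsilon$ for all sufficiently small $\epsilon>0$. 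The key computation is implicit differentiation at $\lambda^*$ of the adjusted-index equation $\mathbb{E}\bigl[(v_i-\lambda\theta_i^S-\tilde\sigma_i(\lambda))^+\bigr]=c_i+\lambda\theta_i^I$, which, after carefully handling the one-sided derivatives of the piecewise-linear convex function $\sigma\mapsto\mathbb{E}[(v_i-\sigma)^+]$ at atoms of $v_i$, yields exactly the score $s^+_i$ of \Cref{def:tie:extreme}. Analogous computations give $-\theta_i^S=s^+_i$ for inspected candidates, $0=s^+_0$ for the outside option, and $+\infty$ for the mandatory-inspection case $c_i<0$. Consequently, among candidates tied at $o_{\max}$ in the adjusted instance at $\lambda^*$, the one with the largest $s^+_i$ is precisely the one whose adjusted option value wins at $\lambda^*+\epsilon$, so $\tierule^{+}$ is a valid tie-breaking for the adjusted instance throughout $[\lambda^*,\lambda^*+\eta]$, with $\eta$ small enough that no additional strict orderings flip (ensured by the piecewise-linearity of all option values in $\lambda$ and the finiteness of the state space).

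\textbf{Symmetric step and main obstacle.} A symmetric argument using left-derivatives --- which a parallel implicit-differentiation computation shows match $-s^-_i$ --- identifies $\policyminus$ with $D^-\DualLagrangeConst(\lambda^*)$, completing the proof. The principal technical difficulty is the one-sided-derivative bookkeeping: tracking which of $\Pr[v_i>o_{\max}]$ and $\Pr[v_i\geq o_{\max}]$ appears in the score depending on the sign of $\theta_i^I$ (because the inverse of a piecewise-linear convex function has different one-sided derivatives at its breakpoints), and verifying case-by-case that each branch of \Cref{def:tie:extreme} --- including the free-inspection branch $c_i=0$, the mandatory-inspection branch $c_i<0$, and the vanishing-probability boundary cases $s^{\pm}_i=\pm\infty$ --- matches exactly the perturbation behavior at $\lambda^*\pm\epsilon$ in the adjusted Pandora's box instance.
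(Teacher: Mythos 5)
Your proposal is correct and follows essentially the same route as the paper's proof: perturb the optimal dual to $\lambda^*\pm\varepsilon$, show the perturbed-optimal policy remains Lagrangian-optimal at $\lambda^*$ with slack equal to the corresponding one-sided slope of the piecewise-linear convex dual (hence of the right sign), and identify its tie-breaking behavior with $\tau^{\pm}$ by differentiating the adjusted reservation-value equation in $\lambda$, which is exactly the computation producing the scores $s_i^{\pm}$ (including the $\Pr[v_i>o_{\max}]$ versus $\Pr[v_i\geq o_{\max}]$ distinction by sign of $\theta_i^I$). The sharper identities you target, equating the slacks with $D^{\pm}\DualLagrangeConst(\lambda^*)$, correspond to the strengthening the paper records in \Cref{remark:extreme-cons}.
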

\begin{proof}{\emph{Proof sketch.}}
The proof consists of two main steps. First, we show that \(\DualLagrangeConst(\LagVector^*)\) admits an optimal policy with nonpositive (resp. nonnegative) slack. This policy is also the optimal policy used in the problem of computing \(\DualLagrangeConst(\LagVector^* - \varepsilon)\) (resp. \(\DualLagrangeConst(\LagVector^* + \varepsilon)\)) for a sufficiently small perturbation \(\varepsilon > 0\). This step relies on the properties of \(\DualLagrangeConst\) established in \Cref{lemma:Const:DualLagrangeProperties}, particularly its piecewise linearity and convexity. In the second step, we show that for an \emph{infinitesimal} \(\varepsilon > 0\) in the reward-adjusted problem with adjustment \(\LagVector^* - \varepsilon\) as defined in \eqref{eq:adjusted-instance}, the ordering of adjusted option values produced by an optimal index-based policy (implemented by \Cref{alg:Pandora}) directly determines a corresponding tie-breaking rule for the dual-adjusted problem with adjustment \(\LagVector^*\). If we assume that this policy breaks ties among zero-cost boxes by prioritizing those with \(\theta_i^I \geq 0\) and treating the rest as normal boxes, the resulting tie-breaking rule \emph{exactly} matches \(\tierule^{-}\) in \Cref{def:tie:extreme}. Similarly, applying the same reasoning with the perturbed dual adjustment \(\LagVector^* + \varepsilon\) for an infinitesimal \(\varepsilon > 0\) recovers the other extreme tie-breaking rule \(\tierule^{+}\). We defer all proof details to \Cref{app:sec:pandora}.\qed
\end{proof}
\begin{remark}
\label{remark:extreme-cons}
As mentioned above, it turns out that we can even establish a stronger statement than \Cref{lem:const:slack}: Among all the optimal policies $\pi^{\lambda^*}$ for the dual-adjusted instance with $\lambda^*$, the optimal policy $\policyminus$ (resp. $\policyplus$) with tie-breaking rule $\tierule^{-}$ (resp. $\tierule^{+}$) has the minimum (resp. maximum) amount of the constraint slack equal to $\slack_{\textsc{cons}}^{\policyminus}$ (resp. $\slack_{\textsc{cons}}^{\policyplus}$). See the proof in \Cref{app:sec:pandora}.
\end{remark}

We highlight the important implication of \Cref{lem:const:slack}: By properly randomizing between the two extreme tie-breaking rules, we can construct an optimal policy with zero slack. We formalize this construction in \Cref{alg:const}. We now arrive at the main result of this section:

\begin{algorithm}
\caption{Randomized Dual-adjusted Index Policy (RDIP)}
\label{alg:const}
    \SetKwInOut{Input}{input}
    \SetKwInOut{Output}{output}
\Input{instance $\displaystyle\left\{\left(\values_i, F_i, c_i\right)| i \in [n]\right\}$}

 Compute  $\LagVector^*\in \underset{{\LagVector\in\mathbb{R}}}{\argmin}~\DualLagrangeConst (\LagVector)$  {\color{\commentcolor}\tcc{single-dimensional convex optimization}}
 
  \vspace{1mm}
 Define the adjusted instance $\{({\widetilde{\values}_i}, \widetilde{F_i}, \widetilde{c}_i)| i \in [n]\}$  with $\widetilde{v_i} = v_i-\LagVector^*\cdot\theta_i^S,$ and $\widetilde{c_i}=c_i+\LagVector^*\cdot\theta_i^I$ \\

 \vspace{1mm}

\For{$\texttt{sign}\in\{+,-\}$}{
\vspace{1mm}
Run \Cref{alg:Pandora} with inputs $\displaystyle\{({\widetilde{\values}_i}, \widetilde{F_i}, \widetilde{c}_i)| i \in [n]\}$  as the instance and $\displaystyle\tierule^{\texttt{sign}}$ (from \Cref{def:tie:extreme}) as the tie-breaking rule; call this algorithm $\policy^{\texttt{sign}}$.

Calculate the slack of Constraint~\ref{eq:affine-constraint}
 $\displaystyle\slack_{\textsc{cons}}^{\policy^\texttt{sign}}$ (defined in \cref{eq:slack}).\\
}






 \vspace{2mm}
{\bf Policy:} If $\slack_{\textsc{cons}}^{\policy^+}=\slack_{\textsc{cons}}^{\policy^-}=0$, then run $\policy^{-}$. Otherwise, with probability $\slack_{\textsc{cons}}^{\policy^+}/\left({\slack_{\textsc{cons}}^{\policy^+}-\slack_{\textsc{cons}}^{\policy^-}}\right)$ run $\policy^{-}$ and with probability $-\slack_{\textsc{cons}}^{\policy^-}/\left({\slack_{\textsc{cons}}^{\policy^+}-\slack_{\textsc{cons}}^{\policy^-}}\right)$ run $\policy^{+}$. 

\end{algorithm}
\begin{theorem}[\textbf{Optimal Policy for Constrained Problem}]
\label{thm:const}
The policy RDIP (presented in \Cref{alg:const}) is an optimal policy for the constrained Pandora's box problem with multiple selection, defined in \eqref{eq:opt-constrained}, under an ex-ante affine constraint. 
\end{theorem}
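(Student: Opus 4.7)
The plan is to prove Theorem~\ref{thm:const} via a short strong-duality argument, building on the machinery already established in Propositions~\ref{lemma:Const:DualLagrangeProperties} and~\ref{lem:const:slack}. Let $\pi^*$ denote the randomized policy output by RDIP. The goal is to show that $\pi^*$ is feasible for Problem~\ref{eq:opt-constrained} and that its expected utility equals $\DualLagrangeConst(\lambda^*)$; since weak duality gives $\DualLagrangeConst(\lambda^*) \geq \OptConstrained$, combining these yields $\util(\pi^*;\Instance) = \OptConstrained$, proving optimality.

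First, I would verify feasibility. By Proposition~\ref{lem:const:slack}, $\slack_{\textsc{cons}}^{\policy^+} \geq 0 \geq \slack_{\textsc{cons}}^{\policy^-}$, so when the two are not both zero the mixing weights $p = \slack_{\textsc{cons}}^{\policy^+}/(\slack_{\textsc{cons}}^{\policy^+} - \slack_{\textsc{cons}}^{\policy^-})$ and $1-p$ lie in $[0,1]$ and are well-defined. A one-line computation shows that the expected slack under $\pi^*$ satisfies
\[
p \cdot \slack_{\textsc{cons}}^{\policy^-} + (1-p) \cdot \slack_{\textsc{cons}}^{\policy^+} = 0,
\]
so Constraint~\eqref{eq:affine-constraint} holds in expectation; in the degenerate case where both slacks vanish, either extreme policy is already feasible, which is precisely how RDIP is defined in that branch.

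Next I would establish that $\util(\pi^*;\Instance) = \DualLagrangeConst(\lambda^*)$. By the refinement discussion in Section~\ref{subsub:base:pandora} (and the equivalence noted between \Cref{alg:Pandora} and Weitzman/Kleinberg--Singla's optimal policy up to tie-breaking), both $\policy^+$ and $\policy^-$ are optimal for the \emph{unconstrained} Pandora's box problem on the adjusted instance $\{(\widetilde{\values}_i,\widetilde{F_i},\widetilde{c}_i)\}$. Hence $\LagrangeConst(\policy^+;\lambda^*) = \LagrangeConst(\policy^-;\lambda^*) = \DualLagrangeConst(\lambda^*)$, and by linearity of expectation over the randomization, $\LagrangeConst(\pi^*;\lambda^*) = \DualLagrangeConst(\lambda^*)$ as well. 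Rewriting the Lagrangian as in \eqref{eq:largrange} gives $\LagrangeConst(\pi^*;\lambda^*) = \util(\pi^*;\Instance) + \lambda^*\cdot\slack_{\textsc{cons}}^{\pi^*}$, and since $\slack_{\textsc{cons}}^{\pi^*} = 0$ by the feasibility step, we conclude $\util(\pi^*;\Instance) = \DualLagrangeConst(\lambda^*) \geq \OptConstrained$. Feasibility then forces equality, completing the argument.

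The supporting propositions carry essentially all of the work, so the theorem's proof itself is a short interleaving of weak duality with a linearity-of-expectation computation. The only genuine subtlety — and what I would flag as the main potential obstacle — is making sure that \emph{both} extreme tie-breaking rules produce policies that are simultaneously (i) optimal for the Lagrangian relaxation at $\lambda^*$ and (ii) endpoints of the achievable slack range with opposite signs; the first is guaranteed by the refinement of \Cref{alg:Pandora} being optimal under any tie-breaking rule, while the second is exactly the content of Proposition~\ref{lem:const:slack}. A minor bookkeeping point is handling the degenerate case $\slack_{\textsc{cons}}^{\policy^+} = \slack_{\textsc{cons}}^{\policy^-} = 0$ separately so that the mixing weights are not ill-defined.
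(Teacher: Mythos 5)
Your proposal is correct and follows essentially the same route as the paper's proof: the paper likewise observes that the RDIP mixture is optimal for the Lagrangian relaxation at $\lambda^*$ (since it randomizes over two optimal solutions $\policyplus$ and $\policyminus$) and then verifies by the same one-line computation that the mixing weights yield zero slack, with weak duality closing the argument. The degenerate-case handling and the reliance on Propositions~\ref{lemma:Const:DualLagrangeProperties} and~\ref{lem:const:slack} for the heavy lifting match the paper exactly.
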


\revcolor{We defer the proof of the above theorem, which builds on the earlier propositions, to \Cref{app:sec:pandora}. Instead, we conclude with a few remarks on managerial insights of our results:

\begin{itemize}[leftmargin=*]
    \item As discussed earlier in \Cref{sec:lagrangian-pandora}, our proposed dual adjustment for \ref{eq:parity} in selection is both intuitive and economically interpretable. Specifically, compared to the optimal unconstrained policy, this adjustment increases the selection probability for the under-represented group while maintaining the within-group ordering of candidates (see \Cref{app:pandora:managerial} for more details).
    \item A delicate primitive of our proposed policy is interleaving inspections between the two groups based on dual adjustments and implementing a specific randomized tie-breaking rule. Both the adjustment and the tie-breaking are crucial; alternative methods for either would lead to an optimality gap (see \Cref{example:FS} in \Cref{sec:insights-dem-parity} and \Cref{example:tie} in \Cref{sec:beyond-group} for details).
\end{itemize}}

\subsection{Extensions}
\label{sec:pandora-extensions}
Going beyond a single affine constraint on marginal probabilities, we extend our results to settings with (i) single affine constraint on probabilities conditional on candidate values and (ii) multiple affine constraints. Similar to \Cref{sec:single-affine-policy}, our goal is to characterize and compute an optimal policy for the constrained problem that \emph{exactly} satisfies these new ex-ante affine constraints. We briefly overview the settings in this section below, and defer the full details to the electronic supplement (\Cref{app:sec:pandora-extension} and \Cref{app:mutiple-affine}).\footnote{Notably, both of these settings are encompassed by our more general model in \Cref{sec:general}; the primary difference is that here we seek an \emph{exact} optimal constrained policy.}


\subsubsection{Value-specific Constraints} 
\label{sec:quantile-constraint}
In contexts of fairness and inclusion in hiring, decision-makers may want to fine-tune ex-ante constraints to account for the heterogeneity in candidates' values and inspection costs, rather than applying a blanket approach. To capture this, we generalize our earlier ex-ante affine constraint (Constraint~\ref{eq:affine-constraint}) by allowing the coefficients \(\theta_i^S\) and \(\theta_i^I\) for each candidate \(i\) to be \emph{arbitrary functions} of their reward \(v_i\) and inspection cost \(c_i\) (assuming the constraint is again an equality without loss of generality):
\begin{equation}
\label{eq:general-constraint}
\expect{\sum_{i\in[n]}\theta^{S}_i(v_i,c_i){\select_i^{\policy}}+ \sum_{i\in[n]}\theta^{I}_i(v_i,c_i){\inspect_i^{\policy}}}=b~~
\end{equation}
Constraint~\eqref{eq:general-constraint} can be viewed as an affine constraint on the probabilities of selection and inspection conditional on each candidate's specific values. These value-specific constraints are motivated by scenarios where value-independent constraints fail to achieve their intended purpose. For example, suppose a firm aims to hire one candidate and must respect demographic parity in inspections, meaning the expected number of interviews from both groups must be equal. If the minority group comprises both high-quality candidates with high inspection costs and low-quality candidates with low costs, enforcing parity without considering values could lead to ``token" interviews---only interviewing low-quality, low-cost minority candidates to satisfy the constraint. While this maintains parity, it fails to provide equal opportunity. By imposing value-dependent constraints, we ensure that only high-quality minority candidates are counted toward achieving parity, aligning the constraint with the goal of ``real'' equal opportunity. \revcolor{All of our results in \Cref{sec:single-affine-policy} extend to this setting after proper non-trivial adaptations. We postpone all technical details to \Cref{app:sec:pandora-extension}; in particular, see \Cref{eq:general-adjusted-instance} for the definition of refined version of our dual-adjusted instance, \Cref{def:refined-ext-tie} for the extension of our extreme tie-breaking rules to this setting, and \Cref{prop:general-constraint} for the characterization of the optimal constrained policy as a dual-adjusted index-based policy with randomization over two extreme tie-breaking rules.

\subsubsection{Multiple Affine Constraints \& Connections to Algorithmic Carathéodory}
\label{sec:caratheodory}
In certain applications, it may be desirable to satisfy multiple affine constraints. For example, one might combine a socially aware affine constraint such as \eqref{eq:parity} in selection, with a \eqref{eq:quota} in inspection to ensure a minimum on the expected number of interviews from the minority group for inclusion in screening. Motivated by such applications, here we study a generalization of \eqref{eq:opt-constrained}, this time with $m>1$ affine constraints.
\smallskip

\noindent\textbf{Sketch of our approach:}
To characterize the optimal constrained policy, we follow a similar approach to our earlier investigation. We begin by Lagrangifying all ex-ante affine constraints into the objective, defining the Lagrangian relaxation/dual function as before. As in \Cref{sec:single-affine-policy}, we demonstrate that the policy maximizing the Lagrangian relaxation corresponds to an optimal policy for an adjusted problem instance using Lagrangian duals (see, e.g., \Cref{subsection: reduction of equality constrained to Caratheodory}). We then find the optimal set of dual variables \(\boldsymbol{\lambda}^* = \{\lambda^*_i\}_{i \in m}\) using convex optimization, given oracle access to the Lagrangian dual function and its sub-gradient via computing dual-adjusted optimal policies. Following a similar line of reasoning, the optimal constrained policy is a dual-adjusted index-based policy with a randomized tie-breaking rule, thus effectively a convex combination of deterministic dual-adjusted optimal policies. The key remaining question is whether we can identify a polynomial number of these dual-adjusted optimal policies such that an appropriate randomization among them satisfies all affine constraints \emph{exactly}.
\smallskip


\noindent\textbf{Failure of extreme tie-breaking rules:}
Based on our previous results, one might consider randomizing over \(2^m\) policies obtained by perturbing \(\boldsymbol{\lambda}^*\) with infinitesimal perturbations \(\vec{\varepsilon} = [\pm \varepsilon]_{i \in [m]}\)---a natural extension of our earlier ``extreme tie-breaking rules'' to multiple constraints. This approach would require a convex combination of exponentially many policies. However, even ignoring computational complexity of this approach, we prove in \Cref{app:extreme-failure-multiple-affine} that this method fails by providing a simple example with two constraints where no convex combination of the resulting (possibly) four policies achieves zero slack for both constraints.

\smallskip

 
\noindent\textbf{Reduction to algorithmic Carathéodory:} 
Despite the negative result, we address the key question by reducing our problem to a specific instance of the classical algorithmic Carathéodory problem~\citep{caratheodory1911variabilitatsbereich}. \revcolorm{In particular, we propose a novel  polynomial-time algorithm for \emph{exact} Carathéodory in a polytope with potentially exponentially many vertices, given oracle access to an algorithm that can solve linear optimization over this polytope. For any point in the polytope, the algorithm finds a polynomial-size convex combination of vertices that equals that point. In a nutshell, our algorithm runs a variant of the standard Ellipsoid algorithm~\citep{grotschel1981ellipsoid} in the ``dual space'' by searching for a small set of directions, such that maximizing vertices along those directions cover the original point in their convex hull; see \Cref{alg:Ellipsoid Exact Caratheodory} in \Cref{app:sec:exact-caratheodory-algorithm}.} We apply this algorithm to our setting via a reduction in which the vertices represent the (constraint slacks of) dual-adjusted index-based policies for the Pandora's box problem, and the oracle corresponds to computing an index-based optimal policy for a general Pandora's box instance (which can be computed in polynomial-time).

\revcolorm{By combining this reduction with our exact Carathéodory algorithm, we obtain a polynomial-time procedure to compute the exact optimal policy under multiple affine constraints. 
Importantly, while we describe this result in the context of the Pandora's box problem for simplicity of exposition, our approach is in fact more general and can be readily extended to the broader Joint Markovian Scheduling (JMS) problem, which we formally define and study next in \Cref{sec:general}. Roughly speaking, the key component of the Carathéodory-based argument is the ability to solve the dual-adjusted problem using a polynomial-time computable policy---a property that also holds in the JMS setting. As a result, one can design randomized dual-adjusted index-based policies in the JMS setting, by following the exact same recipe, that satisfies one or multiple affine constraints \emph{exactly}. For technical details and formal statements, see \Cref{app:mutiple-affine}.}

We note that our approach may also be of independent interest for other applications of exact Carathéodory, where the only access to a polytope (with potentially exponentially many vertices) is through a linear optimization oracle. \revcolorm{For instance, this technique can be used in computing the optimal revenue Bayesian Incentive Compatible (BIC) mechanism for multidimensional types, by first computing the exact BIC reduced-form allocation rule of this mechanism (derivable via polynomial-time LPs if social welfare maximization is polynomial-time computable), and then decomposing this reduced-form allocation rule into a distribution over ex-post feasible deterministic BIC allocation rules to obtain an \emph{exact} BIC mechanism at the end with the same expected revenue. Importantly, the linear optimization oracle in this context is equivalent to (virtual) social welfare maximization and in many settings is polynomial-time computable~\citep{cai2012optimal,alaei2014bayesian}.}}

\newcommand{\capacity}{k}

\section{Markovian Sequential Search with General Ex-ante Constraints}
\label{sec:general}

\revcolor{The Pandora's box problem is a simplified abstraction of sequential search and selection in the real world. In fact, many real-world search processes are more complex and involve multiple screening stages, as well as various rounds of communication with candidates. Here is an example.}

\revcolor{
\smallskip
\begin{example}[Multi-stage Search with Rejection]
\label{ex:reject}
The hiring process for many jobs involves two stages of inspection. First, there is a low-cost stage, such as a phone interview, which provides basic pass/fail information. The second stage, which is more expensive, typically involves an on-site visit and gives a detailed assessment of the candidate's quality. At any point, the hiring firm has three options: initiate the first stage for a new candidate, proceed to the second stage for a candidate who passed the first, or extend an offer to a candidate who has undergone both stages. The offer may be declined with a certain probability, in which case the firm resumes the search.
\end{example}}
\smallskip

\revcolor{This is an example of a more general \emph{``stateful''} sequential search process---beyond the Pandora's box model studied in \Cref{sec:pandora}---where the state of a candidate evolves after each interaction, possibly in a stochastic fashion. This state captures where the candidate is in the search process, which identifies the cost of further inspection or the realized reward of selection if the candidate is ready to be hired. The stateful nature of such processes leads to highly complex policies, which may introduce disparities at different stages of the search or in the final outcomes. \revcolorm{To extend our investigation of imposing (socially aware or operational) ex-ante} constraints on search outcomes to such more complex stateful search scenarios, such as the example above, we adopt the Joint Markov Scheduling (JMS) model of sequential search~\citep{dumitriu2003playing}, which generalizes Pandora's box. Furthermore, to capture a more comprehensive notion of ex-ante constraints, we consider scenarios with \emph{multiple affine or convex constraints} on the \emph{visit frequencies of different states}. It is important to note that the setting discussed in this section is general and encompasses both the value-specific constraint setting in \Cref{sec:quantile-constraint} and the multiple-affine constraints setting in \Cref{sec:caratheodory} as special cases. The main difference is that in this section, we aim to compute \emph{near-optimal} and \emph{near-feasible} policies, rather than exactly optimal and feasible policies. We will formally define this setting and the constraints next.}

\subsection{Setting and Notations}
\label{sec:JMS-setting}
We consider a Markovian system with finitely many alternatives indexed by $[\altnum]\triangleq\{1,2,\ldots,n\}$ and an outside alternative indexed by $0$. Each alternative $i\in[\altnum]$ is modeled as a finite Markov reward process $\MC_i=(\states_i,\terminals_i,\transition_i,\reward_i)$, where $\states_i$ is the finite set of states, $\terminals_i\subset \states_i$ is a special subset of states called terminal states, $\transition_i:\states_i\times\states_i\rightarrow[0,1]$ is the transition matrix, and $\reward_i$ is the vector of all state-rewards for alternative $i$. In particular, $\reward_i(s)\in\mathbb{R}$ for the non-terminal state $s\in\states_i\setminus\terminals_i$ is the reward of making alternative $i$ to exit state $s$, and for the terminal state $s\in\terminals_i$ is the reward of entering $s$. Note that any terminal state $s\in\terminals_i$ is absorbing, that is, there is no transition from $s$ to any other state in $\states_i$. For simplicity, let $\vecreward\triangleq [\reward_i(s)]_{i\in[\altnum],s\in\states_i}\in\mathbb{R}^{\dimension}$ denote the concatenation of the state-reward vectors of all alternatives, where the (finite) dimension $d$ is defined as $\dimension\triangleq\sum_{i\in[\altnum]}\lvert{\states_i}\rvert$. We also occasionally index the set of all states $\cup_{i\in[\altnum]}\states_i$ by $[\dimension]=\{1,2,\ldots,\dimension\}$. Further, we assume the rewards are bounded, and therefore without the loss of generality are normalized such that  $\lVert\vecreward\rVert_{+\infty}\leq 1$.\footnote{We only impose this assumption on the original rewards; Our dual-adjusted rewards, introduced later in the section, do not need to be bounded or normalized between $[-1,1]$.}

\revcolor{Starting with an initial configuration of states $(s^{(0)}_1,\ldots,s^{(0)}_n)$ for the alternatives, a decision maker interacts with the Markovian system in discrete rounds. In each round $t = 1, 2, \ldots$, she chooses to either inspect an alternative $i_t \in [n]$ or select the outside alternative $i_t = 0$. If she inspects $i_t$, she collects a reward $\reward_{i_t}(s)$ based on the current state $s$ of the Markov chain $\MC_{i_t}$, which then undergoes a probabilistic transition (according to the transition matrix $\transition_{i_t}$) to a new state $s'$. If the new state $s'$ is a terminal state, that is, $s' \in \terminals_{i_t}$, the decision maker adds $i_t$ to the final set of selected alternatives and collects an additional reward $R_{i_t}(s')$. We consider the case where the decision maker has a capacity $\capacity \in \mathbb{N}$, meaning that at most $\capacity$ alternatives can be in the selected set (or equivalently, their Markov chains be in terminal states) at any time.\footnote{\revcolor{Our results extend straightforwardly to a more general matroid environment, where $[n]$ is the ground set of a matroid, and the decision maker must ensure that in each round, the set of alternatives in terminal states is an independent set of this matroid. For simplicity, we focus on the case of the $\capacity$-uniform matroid, where the capacity is $\capacity$.}} The search process terminates when the decision maker selects the outside alternative $0$ or reaches the capacity $\capacity$ for selected alternative. Otherwise, the process proceeds to the next round, and the decision maker selects a new alternative to inspect.}

\revcolor{The goal of the decision maker is to maximize the expected accumulated reward before the process ends. A policy $\policy$ for the decision maker is a mapping that, at each time, assigns the \emph{history}---the sequence of previous actions and realized states of all Markov chains up to the current time---to one of the unselected alternatives in $[\altnum]$ or the outside alternative. A deterministic stationary policy $\pi$ is a fixed mapping from the current state configuration $(s_1,\ldots,s_{\altnum})$ to $[\altnum]\cup{0}$. While policies can be non-stationary, we focus on the set $\PolicySpace$ of (possibly randomized) stationary policies that ensure no more than $\capacity$ alternatives are selected in any sample path.\footnote{A stationary optimal policy always exists for the basic JMS without ex-ante constraints~\citep{dumitriu2003playing,gupta2019markovian}. With ex-ante constraints, as shown in our analysis, this restriction is without loss because a (near-optimal, near-feasible) solution exists in this class if the instance is feasible (Assumption~\ref{assumption:feasible-fair}). We omit details for brevity and refer the reader to the discussion in \Cref{sec:RAI}.}}


\revcolor{For a given policy $\policy \in \PolicySpace$, let $R_\policy$ denote the accumulated reward realized by $\policy$ until termination. We denote by $\alloc_\policy \in \mathbb{R}_{\geq 0}^\dimension$ the vector of expected ``number of visits'' to different states in $\cup_{i \in [\altnum]} \states_i$ before termination under $\policy$. By convention, for non-terminal states, we count the number of times we exit the state as its number of visits; for terminal states, we count a visit when we enter the state (since the corresponding Markov chain is selected). Because of the linearity of expectations, we then have:}
\begin{equation}   
\expect{R_\pi}=\expect{\sum_{i\in[\altnum]}\sum_{s\in \states_i}(\textrm{$\#$ of visits of state $s$ under $\policy$})\times\reward_i(s)}=\vecreward\cdot\alloc_\pi~.
\end{equation}


We further assume that each Markov chain $\MC_i$ is absorbing---that is, it has at least one absorbing (terminal) state, and from each non-terminal state, there is a path with nonzero probability to a terminal state. Since the Markov chains are finite and absorbing, it follows that there exists a constant $\barP \in \mathbb{R}{+}$ such that, for every policy $\policy$, the expected number of visits to each state $s \in \cup_{i\in[\altnum]}\states_i$ before absorption is bounded above by $\barP$~\citep{resnick1992adventures}. Let $\FeasibleAlloc \subseteq [0, \barP]^{\dimension}$ \revcolor{denote the space of implementable expected visit numbers by admissible stationary policies, that is,}
\begin{equation}
\label{eq:admissible}
    \FeasibleAlloc\triangleq \left\{\alloc\in [0,\barP]^{\dimension} \mid\exists \policy\in\PolicySpace: \alloc_\pi=\alloc\right\}~.
\end{equation}
Note that $\FeasibleAlloc$ is compact and convex as randomized policies are allowed.\footnote{The space of deterministic stationary policies for JMS is finite, as each stationary policy is a mapping from current states of MCs to an index. Therefore, $\mathcal{P}$ becomes a polytope with finitely many vertices and hence compact.}

We finally highlight that JMS is an extensive and general model. For example, see how primitives of JMS help us model both the Pandora's box problem (\Cref{fig:pandora-jms}) and the multi-stage hiring with rejection in \Cref{ex:reject}~(\Cref{fig:MC-hiring}). With the unconstrained JMS problem explained, we next move on to the general type of ex-ante constraints that we aim to capture in this paper.

\begin{figure}[htb]
	\centering
	\begin{subfigure}[b]{0.25\textwidth}
            \includegraphics[width=\textwidth]{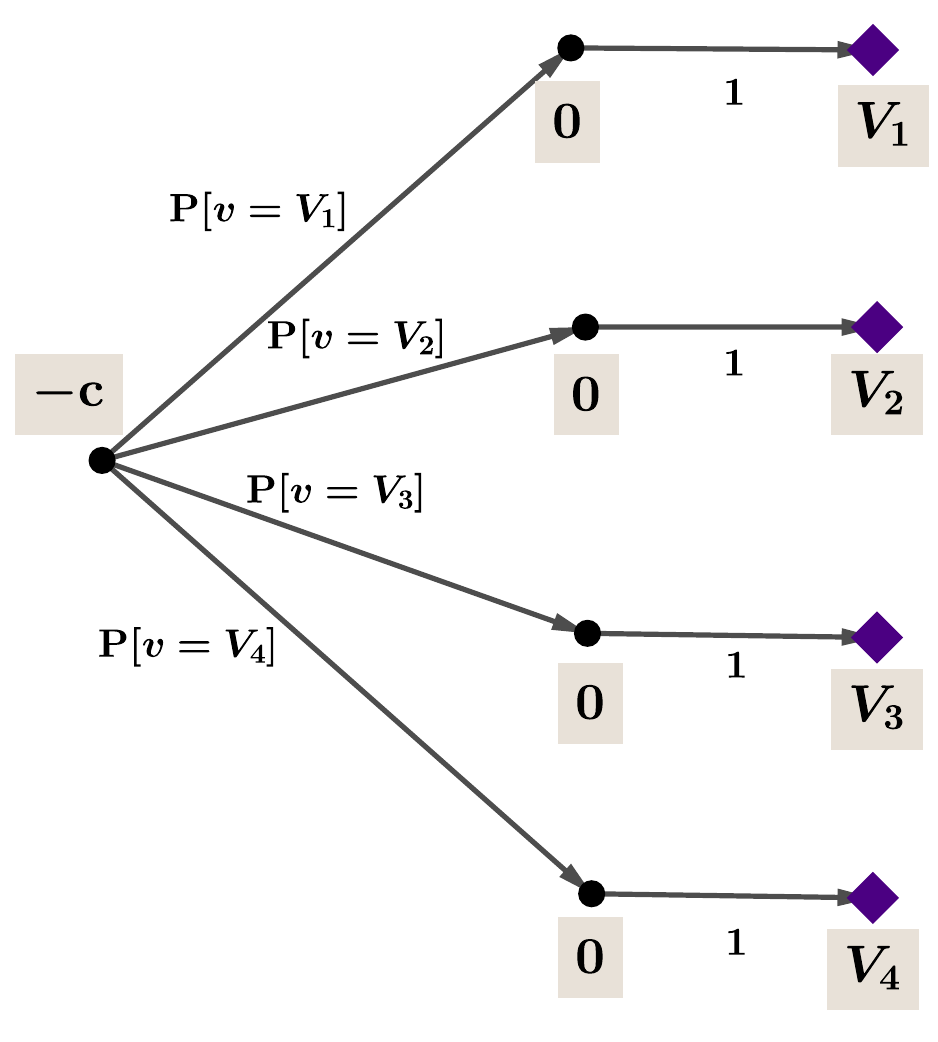}
             \caption{\label{fig:pandora-jms}}
	\end{subfigure}
	\begin{subfigure}[b]{0.68\textwidth}
	  \includegraphics[width=\textwidth]{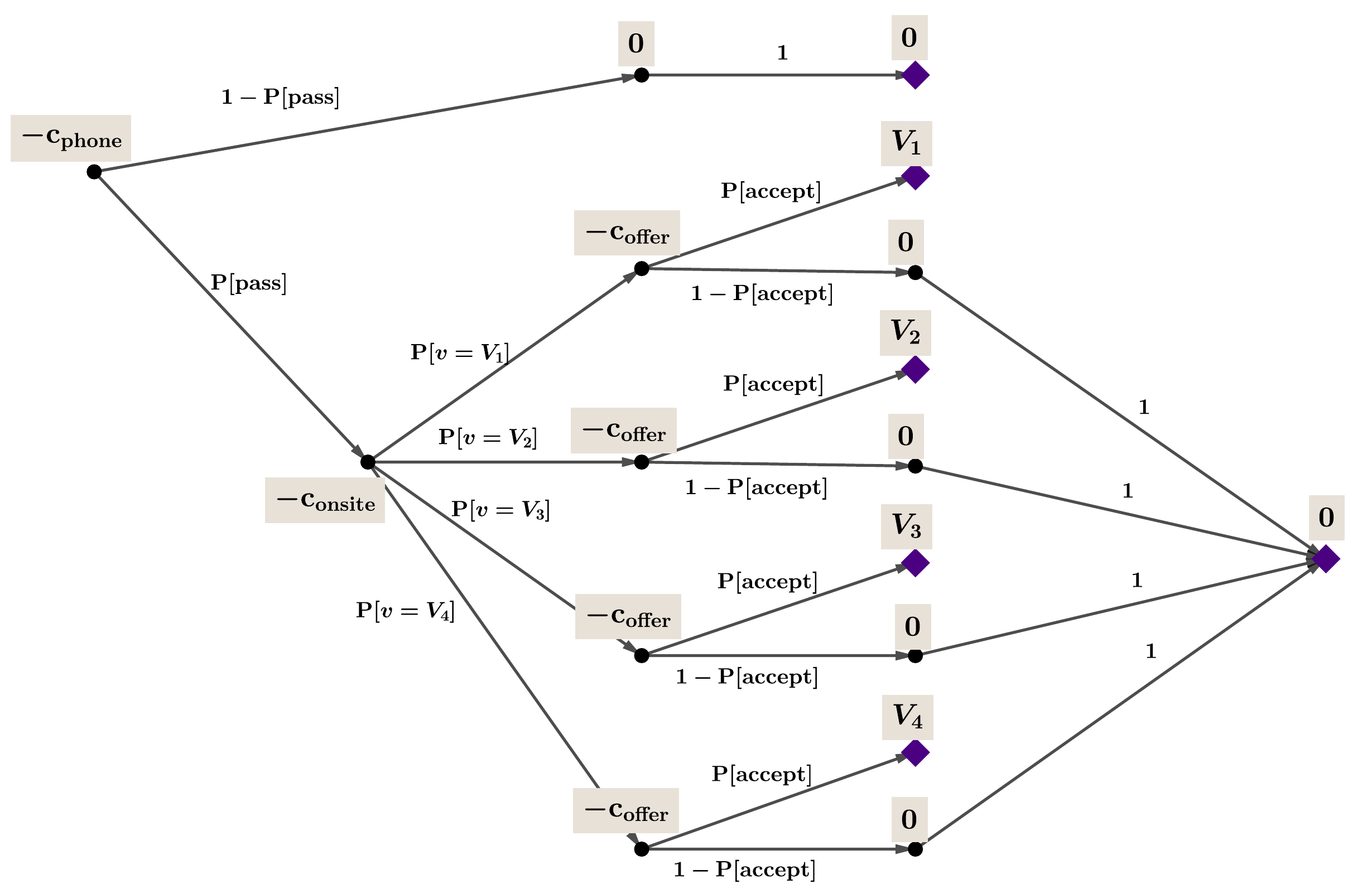}
	  \caption{\label{fig:MC-hiring}}
	\end{subfigure}
	\vspace{1mm}
	\caption{\revcolor{Candidates as Markov reward processes in JMS: The numbers on the states represent rewards, and those on the edges are transition probabilities; $\CIRCLE$ and $\blacklozenge$ denote non-terminal and terminal states, respectively; (a)~\emph{Pandora's box problem}~\citep{weitzman1979optimal}: After inspection, the value is realized from $\boldsymbol{\{V_i\}_{i=1}^4}$, and then the box can be selected;  (b) \emph{Multi-stage search with rejection (\Cref{ex:reject}}): The candidate passes the phone interview with probability $\textbf{P}\left[\boldsymbol{\texttt{pass}}\right]$. If successful, there is an onsite interview, after which the her value is realized from $\boldsymbol{\{V_i\}_{i=1}^4}$. Lastly, if an offer is made, the candidate accepts it with probability $\textbf{P}\left[\boldsymbol{\texttt{accept}}\right]$}. 
 }
\end{figure}
\vspace{-1mm}
\label{sec:general-ex-ante-constraints}
\revcolor{As alluded to earlier in \Cref{sec:caratheodory}, there are scenarios in the JMS setting where multiple constraints are needed simultaneously, and not always these constraints are affine. For example, in the multi-stage search (\Cref{ex:reject}), consider imposing an average quota constraint on the final selections of a certain minority group of candidates, alongside demographic parity between males and females for the phone interview. As another example, in certain hiring contexts, candidates may benefit from advancing in the process even if not selected---for instance, by gaining experience or visibility. By modeling the probabilities of achieving these benefits as utilities, the decision maker can aim to improve a ``welfare function'' of these utilities by adding a constraint while simultaneously respecting demographic parity. Notably, many of the commonly used welfare functions are not necessarily affine, but are typically convex functions of the utilities.


Motivated by these examples, we consider two general categories of ex-ante constraints on the vector of expected numbers of visits to different states. In particular, we 
allow $\NumAffine\in \mathbb{Z}_{\geq 0}$ affine and $\NumConvex\in \mathbb{Z}_{\geq 0}$ convex constraints that can be applied to all or any subset of states---to incorporate ex-ante constraints in both the final selection (at the time of termination) and the inspection phase (during the search process before termination):
\smallskip

\begin{itemize}[leftmargin=*]
    \item \emph{Affine constraints}: each affine constraint $j\in[\NumAffine]$ is defined by the halfspace $\AffineVector_j\cdot\alloc\leq \AffineConstant_j$ for some vector $\theta\in\mathbb{R}^\dimension$ and constant $\AffineConstant_j\in\mathbb{R}$. 
    By proper normalization, without the loss of generality, we assume $\lvert\AffineConstant_j\rvert\leq 1$ and for every $\alloc\in[0,\barP]^{\dimension}$, $\lvert\AffineVector_j\cdot\alloc\rvert\leq 1$.\footnote{\revcolor{We remark that with only affine constraints, the JMS setting is quite similar to the setting studied in \Cref{sec:caratheodory}; the main difference is the linear optimization oracle, as after adjustments we have to solve a dual-adjusted JMS instance (with arbitrary rewards, possibly negative or positive). As we show in \Cref{sec:optimal-JMS-arbitrary} and \Cref{app:JMS-general}, any general instance of JMS can be solved in polynomial-time.}}
    \smallskip
    
   \noindent\underline{\emph{Applications:}} similar to \Cref{sec:pandora}, this category captures various group fairness criteria, such as demographic parity and quota for disadvantaged groups. These criteria can be implemented at the selection level or at any intermediate stage of the search process. It can also capture various forms of individual fairness constraints. For example, we can add multiple affine constraints, one for each candidate, setting lower bounds on the expected number of visits of a particular search states (including terminal states). Finally, such constraints can have operational implications, e.g., capture budget constraints at different stages of the search. \revcolorm{See \Cref{subsec: affine Constraint Formulations} for the formal representation of such constraints.}
   
    
    \smallskip
    \item \emph{Convex constraints:} each constraint $i\in[\NumConvex]$ is defined by the convex set $\ConvexFun_i(\alloc)\leq 0$, where $\ConvexFun_i:\mathbb{R}^\dimension\rightarrow \mathbb{R}$ is a strictly convex function and admits continuous first partial derivatives. We further assume that $\ConvexFun_i$ is bounded in $[0,\barP]^{\dimension}$, the gradient of $\ConvexFun_i$ diverges to infinity, i.e., $\lim_{\lVert\alloc\rVert_{\infty}\to \infty}\lVert \nabla\ConvexFun_i(\alloc)\rVert_{\infty}=+\infty$, and that it is bounded in norm infinity over $[0,\barP]^{\dimension}$, i.e., $\sup_{\alloc\in[0,\barP]^\dimension}\lVert\GradFun_i(\alloc)\rVert_{\infty}<\infty$. By proper normalization, without loss of generality, we assume $\lvert\ConvexFun_i(\alloc)\rvert\leq 1$ and $\lVert\GradFun_i(\alloc)\rVert_{\infty}\leq \upperDualConvex $ for every $\alloc\in[0,\barP]^{\dimension}$, for some $\upperDualConvex>0$. Because of diverging gradient, there also exists a threshold $\lowerp$ such that $\forall \alloc: \lVert\alloc\rVert_{\infty} > \lowerp$ we have $\lVert\GradFun_i(\alloc)\rVert_{\infty}> \upperDualConvex$.
    
  \smallskip
  \noindent\underline{\emph{Applications:}} Thinking of individual candidates' utilities from the search, as described earlier, we can use convex constraints to improve the \emph{egalitarian welfare} of the search process, which naturally leads to more diverse or fair outcomes. In fact, egalitarian welfare is typically captured by concave symmetric functions of these utilities, using notions such as Nash social welfare, negative entropy, or the Hölder mean of the utilities~\citep{kaneko1979nash,dwork2012fairness}. We can then add a lower bound constraint on egalitarian welfare. These convex constraints can be customized to apply at the level of demographic groups of candidates (basically, thinking of each group as a ``meta agent''  whose utility equals to the summation of utilities of the candidates within that group), capturing group notions of fairness. They can also be used at the level of individual candidates, capturing individual notions of fairness. This flexibility is one of the appeals of our general set of constraints. \revcolorm{See \Cref{subsec: convex Constraint Formulations} for a formal demonstration of these constraints.}
\end{itemize}

\smallskip
Given the general ex-ante constraints as described above, a policy $\policy$ for the joint Markov scheduling problem is said to be \emph{ex-ante feasible} if $\alloc_\policy\in\FairSet$, where 
\begin{equation}
    \FairSet\triangleq \left\{\alloc\in[0,\barP]^{\dimension}\mid \forall j\in[\NumAffine]:\AffineVector_j\cdot\alloc\leq \AffineConstant_j, \forall i\in[\NumConvex]:\ConvexFun_i(\alloc)\leq 0\right\}.
\end{equation}}
\noindent\textbf{Optimal constrained policy.} We then define an \emph{optimal constrained policy} $\pi^*\in\Pi$ as any solution to the following stochastic program:
\begin{equation}
\label{eq:opt-fair}
\tag{\textsc{OPT-JMS-cons}}
\begin{aligned}
\OptFair\triangleq & \quad \max_{\policy\in\PolicySpace} ~~\expect{\reward_\policy}\quad\textrm{s.t.}
 \quad & \alloc_\policy\in\FairSet &
\end{aligned}
\end{equation}
\revcolorm{We make the following assumption in the rest of this section about our problem instance.
\begin{assumption}[JMS Feasibility]
\label{assumption:feasible-fair}
Problem~\ref{eq:opt-fair} is feasible, that is,  $\FairSet\cap\FeasibleAlloc\neq \emptyset$, where $\FeasibleAlloc$ is defined as in \cref{eq:admissible} for the underlying JMS instance $\{\MC_i\}_{i\in[\altnum]}$ (for selecting $\capacity$ alternatives) in this problem.
\end{assumption}}



\revcolor{
\subsection{Near-optimal Near-feasible Constrained Policy: Sketch of the Approach}
\label{sec:RAI}

Before we start, we refer the reader to a premier on \emph{Fenchel convex duality} in \Cref{sec:convex}, where we also provide a related simple lemma on properties of the strict convex functions we use in our general ex-ante constraints (\Cref{lemma:convex-conjugate}); see also \cite{bubeck2015convex} for more details. We use these constructs in our technical framework below and in our analysis.

At a high level, our goal is to follow the approach in \Cref{sec:pandora} to obtain a dual characterization of the optimal policy. To start, we define the Lagrangian relaxation of \eqref{eq:opt-fair} as follows:
\begin{align}
\label{eq:lagrange-fair-1}
    \Lagrange(\alloc;\DualConstAffineVec,\DualConstConvexVec)\triangleq \vecreward\cdot\alloc +\sum_{j\in[\NumAffine]}\DualConstAffine_j\left(\AffineConstant_j-\AffineVector_j\cdot\alloc\right)-\sum_{i\in[\NumConvex]}\DualConstConvex_i\ConvexFun_i(\alloc)~.
\end{align}
Clearly, for any ex-ante feasible policy $\pi$ (with $\alloc_\policy\in\FairSet$),  $\expect{R_\pi}=\vecreward\cdot\alloc_\policy\leq \Lagrange(\alloc_\policy;\DualConstAffineVec,\DualConstConvexVec)$ for any $\DualConstAffineVec,\DualConstConvexVec\geq 0$. Now, for any choice of $\alloc\in[0,\barP]^{\dimension}$ and $\DualConvex_i\in[-\upperDualConvex,\upperDualConvex]^{\dimension}$ for $i\in[\NumConvex]$, we can further relax $\Lagrange$ to the linearized version of the Lagrangian, denoted by $\barLagrange$, by applying the Fenchel weak duality:
\begin{align}
\label{eq:lagrange-fair-2}
 \Lagrange(\alloc;\DualConstAffineVec,\DualConstConvexVec)\leq \barLagrange(\alloc;\DualConstAffineVec,\DualConstConvexVec,\DualConvex)&\triangleq \vecreward\cdot\alloc +\sum_{j\in[\NumAffine]}\DualConstAffine_j\left(\AffineConstant_j-\AffineVector_j\cdot\alloc\right)-\sum_{i\in[\NumConvex]}\DualConstConvex_i(\DualConvex_i\cdot\alloc-\ConvexFun_i^*(\DualConvex_i))\nonumber\\
&=\AdjustedReward(\DualConstAffineVec,\DualConstConvexVec,\DualConvex)\cdot\alloc +\sum_{j\in[\NumAffine]}\DualConstAffine_j\AffineConstant_j+\sum_{i\in[\NumConvex]} \DualConstConvex_i \ConvexFun_i^*(\DualConvex_i)~,
\end{align}
where $\ConvexFun^*_i$ is the convex conjugate of $\ConvexFun$ (as in \Cref{def:convex-conjugate}) and the \emph{adjusted reward vector}, denoted by $\AdjustedReward(\DualConstAffineVec,\DualConstConvexVec,\DualConvex)$, is defined as
\begin{equation}
\label{eq:adj:reward}
\AdjustedReward(\DualConstAffineVec,\DualConstConvexVec,\DualConvex)\triangleq \vecreward-\sum_{j\in[\NumAffine]}\DualConstAffine_j\AffineVector_j-\sum_{i\in[\NumConvex]}\DualConstConvex_i\DualConvex_i.
\end{equation}

To help design a candidate policy that is approximately ex-ante feasible and optimal, we consider two min-max games based on the above relaxations. In the first game, the max-player selects a randomized policy in $\PolicySpace$---or equivalently, a vector of expected visit numbers $\alloc$ in $\FeasibleAlloc$—to maximize the game payoff defined by $\Lagrange(\alloc; \DualConstAffineVec, \DualConstConvexVec)$. Meanwhile, the min-player chooses non-negative vectors $\DualConstAffineVec$ and $\DualConstConvexVec$ to minimize the game payoff. The second game is similar to the first, but with the payoff function  relaxed to $\barLagrange(\alloc; \DualConstAffineVec, \DualConstConvexVec, \DualConvex)$. In addition to the non-negative vectors $\DualConstAffineVec$ and $\DualConstConvexVec$, the min-player also selects a matrix $\DualConvex \in [-\upperDualConvex, \upperDualConvex]^{\dimension \times \NumConvex}$.


To see the connection between these games and the optimal ex-ante feasible policy, observe that the first game is indeed a convex-concave game. The function $\Lagrange(\alloc; \DualConstAffineVec, \DualConstConvexVec)$ is concave in $\alloc$ and linear in both $\DualConstAffineVec$ and $\DualConstConvexVec$. Moreover, since randomization is allowed, the set $\FeasibleAlloc \subseteq [0, \barP]^{\dimension}$ is compact and convex. Therefore, by applying Sion's minimax theorem~\citep{sion1958general}, the game admits equilibrium strategies $(\alloc^*; \DualConstAffineVec^*, \DualConstConvexVec^*)$ such that:
\begin{equation}
    \displaystyle\underset{\displaystyle\alloc\in\FeasibleAlloc}{\max}\left(\underset{\displaystyle\DualConstConvexVec,\DualConstAffineVec\geq 0}{\min} \Lagrange(\alloc;\DualConstAffineVec,\DualConstConvexVec)\right)=\displaystyle\underset{\displaystyle\displaystyle\DualConstConvexVec,\DualConstAffineVec\geq 0}{\min} \left(\underset{\displaystyle\alloc\in\FeasibleAlloc}{\max}~\Lagrange(\alloc;\DualConstAffineVec,\DualConstConvexVec)\right)\equiv\Lagrange(\alloc^*;\DualConstAffineVec^*,\DualConstConvexVec^*)~.
\end{equation}
Since $\alloc^*$ is also a Stackelberg equilibrium in the game when the max-player moves first, we conclude that $\alloc^* \in \FairSet$; otherwise, the min-player could drive the payoff to $-\infty$. Furthermore, as stated earlier, for any ex-ante feasible policy $\policy$:
$$\expect{\reward_\pi}\leq \Lagrange(\alloc_\pi;\DualConstAffineVec^*,\DualConstConvexVec^*)\leq \left(\underset{\displaystyle\alloc\in\FeasibleAlloc}{\max}~\Lagrange(\alloc;\DualConstAffineVec^*,\DualConstConvexVec^*)\right)=\Lagrange(\alloc^*;\DualConstAffineVec^*,\DualConstConvexVec^*)\overset{(1)}=\vecreward\cdot\alloc^*~,$$ 
where equality (1) holds because $(\DualConstAffineVec^*, \DualConstConvexVec^*)$ is a best response to $\alloc^*$. Thus, if $\theta_j \cdot \alloc^* < \AffineConstant_j$, then $\DualConstAffine^*_j = 0$, and if $\ConvexFun_i(\alloc^*) < 0$, then $\DualConstConvex^*_i = 0$ (i.e., complementary slackness holds). Hence, the policy $\policy^*$ that implements $\alloc^*$ is an optimal ex-ante feasible policy.

However, the main challenge lies in how one can \emph{efficiently compute} both $\alloc^*$ and $\policy^*$, since even the best-response problem from the perspective of the max-player seems quite complicated. This problem is equivalent to a non-linear version of the joint Markov scheduling problem when the objective function is concave in terms of the expected visit numbers $\alloc$. To the best of our knowledge, this problem has not been studied prior to our work, and no polynomial-time solution is known.

To overcome this challenge, we switch to the second min-max game, which is a relaxation of the first game. By similar arguments, if an equilibrium $(\alloc^*; \DualConstAffineVec^*, \DualConstConvexVec^*, \DualConvex^*)$ exists, then $\alloc^*$ corresponds to an optimal ex-ante feasible policy $\policy^*$.\footnote{\revcolor{Although $\barLagrange(\alloc; \DualConstAffineVec, \DualConstConvexVec, \DualConvex)$ is not jointly convex in $(\DualConstConvexVec, \DualConvex)$ for a given $\alloc$ and $\DualConstAffineVec$, for any fixed $\DualConstConvexVec$ it is convex in $\DualConvex$, and vice versa. As we will clarify later in our proofs, this property, combined with Fenchel duality, is sufficient to establish strong duality and the existence of an equilibrium. However, we do not rely on this existence in our argument.}} More importantly, the best-response problem of the max-player, given a strategy $(\DualConstAffineVec, \DualConstConvexVec, \DualConvex)$ of the min-player, has a simpler structure. It reduces to solving a \emph{modified instance of the joint Markov scheduling problem}, where the rewards are replaced by (possibly negative or positive) adjusted rewards $\AdjustedReward(\DualConstAffineVec, \DualConstConvexVec, \DualConvex)$ as defined in \eqref{eq:adj:reward}. Targeting the relaxation, the next goal is solving this dual-adjusted JMS problem.

\subsubsection{Index-based Optimal Policy for JMS with Arbitrary Rewards}
\label{sec:optimal-JMS-arbitrary}
Viewing the max-player's best-response optimization as a subproblem, we aim to solve it in polynomial time. We draw on previous work studying the JMS problem with linear rewards. These results assume that intermediate states incur negative rewards (i.e., costs) and only terminal states earn positive rewards---see, e.g.,  \cite{dumitriu2003playing,gupta2019markovian}.\footnote{There is slightly a more general condition called \emph{No Free Lunch (NFL)} assumption on the state-reward structure of the Markov chains, under which a similar analysis extends (see, e.g., \cite{gittins1979bandit,kleinberg2017tutorial}).} Under this assumption, they established the optimality of \emph{Gittins index policy}~\citep{gittins1979bandit,dumitriu2003playing}, which is a generalization of the optimal index-based policy of \citeauthor{weitzman1979optimal} for the Pandora's box problem: Given an instance $\{\MC_i\}_{i\in[\altnum]}$, there exists an index mapping $\sigma: \bigcup_{i\in[\altnum]} \states_i \rightarrow \mathbb{R}$ such that, at each time, given the current states $\{s_i\}_{i\in[\altnum]}$, choosing to inspect the Markov chain $\MC_i$ with the maximum index $\sigma(s_i)$ is optimal. This process continues until either $\capacity$ Markov chains enter terminal states or all remaining indices become non-positive, at which point the process terminates. See \Cref{app:JMS-general} for details on Gittins indices and the structure of optimal policy. Here, we only highlight that these indices can be computed in polynomial time.


However, the above approach fails when computing the best response in our problem since the adjusted rewards $\AdjustedReward(\DualConstAffineVec, \DualConstConvexVec, \DualConvex)$ can take both positive or negative values. Nevertheless, as we show in \Cref{app:JMS-general}, there exists a \emph{refinement} of the Gittins index policy (by proper pre-processing of the Markov chains) that solves the linear optimization over the space of randomized policies $\PolicySpace$ in polynomial time for arbitrary positive or negative reward vectors $\vecreward = [R_i(s)]_{i\in[\altnum], s\in\states_i}$. This result, which is based on an intricate reduction, may be of independent interest. We defer the details  to \Cref{app:JMS-general}. From now on, we assume access to an oracle solving the general JMS problem in polynomial time.
}

\subsubsection{Generalized Randomized Dual-adjusted Index Policy}
\label{sec:learning-two-layer}
Our main algorithm for finding a randomized approximate optimal policy is summarized in \Cref{alg:RAI}. At a high level, this algorithm is an iterative primal-dual method that aims to solve both of the above games simultaneously. In each round, the primal player essentially plays a best response based on the payoff of the \emph{second game}, i.e., $\barLagrange$, by selecting an index-based optimal policy for adjusted rewards $\AdjustedReward(\DualConstAffineVec, \DualConstConvexVec, \DualConvex)$. In response, the dual player runs a two-layer coordinated gradient descent (CGD) algorithm to exploit the structure of the second game's payoff (i.e., $\barLagrange$ is convex in each coordinate but not jointly convex with respect to $(\DualConstConvexVec, \DualConvex)$). This simple online learning algorithm helps in finding the Stackelberg equilibrium strategy of the min-player and learning the optimal dual values. In particular, CGD uses the payoff function of the \emph{first game}, i.e., $\Lagrange$, in gradient computations needed for updating $\DualConstAffineVec$ and $\DualConstConvexVec$, and uses the payoff function of the \emph{second game}, i.e., $\barLagrange$, in gradient computations needed for updating $\DualConvex$. When CGD concludes, our final (randomized) policy is the uniform distribution over all the best-response policies computed by the max player during the run of CGD.


\begin{algorithm}[htb]
\caption{Generalized Randomized Dual-adjusted Index Policy (G-RDIP)}
\label{alg:RAI}
    \SetKwInOut{Input}{input}
    \SetKwInOut{Output}{output}
 \Input{learning rates $\InRate,~\OutRatelambda,~\OutRatebeta>0$,  $\#$ of inner iterations $\InnerNum>0$, $\#$  of outer iterations $\OuterNum>0$, oracle access to general JMS solver, upper-bounds $\upperDualAffine,\upperDualConstConvex\geq 0$.}
 
 
 initialize $\forall j\in[\NumAffine]: \DualConstAffine_j^{(1)}\in[0,\upperDualAffine]$~;~$\forall i\in[\NumConvex]: \DualConstConvex_i^{(1)}\in[0,\upperDualConstConvex]$ and $\DualConvex_i^{(1,1)}\in[-\upperDualConvex,\upperDualConvex]^{\dimension}$.
 
 \For{$k=1: \OuterNum$}
 {
 \For{$\ell=1: \InnerNum$}
 {
  \smallskip

  {\color{\commentcolor}\tcc{computing primal player's best-response (general JMS problem)}}
 Let $\policy^{(m,\ell)}\in\underset{\policy\in\PolicySpace}{\argmax}\left(\vecreward-\sum_{j\in[\NumAffine]}\DualConstAffine_j^{(m)}\AffineVector_j-\sum_{i\in[\NumConvex]}\DualConstConvex_i^{(m)}\DualConvex_i^{(m,\ell)}\right)\cdot \alloc_\policy$~ {\color{\commentcolor}\tcp{break the ties arbitrarily (if any)}}


{\color{\commentcolor}\tcc{inner-coordinates gradient update \& projection  of conjugate dual}}
$\forall i \in[\NumConvex]:~~\DualConvexTempkl_i\leftarrow \DualConvexkl_i -\InRate\times\DualConstConvexk_i\left(\nabla\ConvexFun^*_i(\DualConvexkl_i)-\alloc_{\policy^{(m,\ell)}}\right)$

$\forall i \in [\NumConvex]:~~\DualConvex^{(m,\ell+1)}_i\leftarrow \underset{\DualConvex\in[-\upperDualConvex,\upperDualConvex]^{\dimension}}{\textrm{argmin}}~\lVert\DualConvex - \DualConvexTempkl_i\rVert_2$

  }
Let $\displaystyle{\Baralloc}^{(m)}\leftarrow \frac{1}{\InnerNum}\sum_{\ell\in[\InnerNum]}\alloc_{\policy^{(m,\ell)}}$~~~{\color{\commentcolor}\tcp{also, let $\BarDualConvex_i^{(m)}\leftarrow\frac{1}{\InnerNum}\sum_{\ell\in[\InnerNum]}\DualConvexkl_i$}}

  {\color{\commentcolor}\tcc{outer-coordinates gradient update and projection for duals of affine and convex constraints}}
$\forall j \in[\NumAffine]:~~\DualConstAffine^{(m+1)}_j\leftarrow \min\left(\upperDualAffine,\max\left(0,\DualConstAffinek_j -\OutRatelambda\times\left(\AffineConstant_j-\AffineVector_j\cdot\Baralloc^{(m)}\right)\right)\right)$
 
$\forall i \in[\NumConvex]:~~\DualConstConvex^{(m+1)}_i\leftarrow \min\left(\upperDualConstConvex,\max\left(0,\DualConstConvexk_i +\OutRatebeta\times \ConvexFun_i\left(\Baralloc^{(m)}\right)\right)\right)$


 }

\Return{$\displaystyle\hat{\policy}\sim \textrm{unif}\left\{\policy^{(m,\ell)}:(m,\ell)\in[\OuterNum]\times[\InnerNum]\right\}$ and $\displaystyle\alloc_{\hat{\policy}}=\frac{1}{\OuterNum\InnerNum}\sum_{m\in [\OuterNum]}\sum_{\ell\in [\InnerNum]}\alloc_{\policy^{(m,\ell)}}$}

\end{algorithm}

\revcolor{
\begin{theorem}[\textbf{Approximate Ex-ante Feasibility and Optimality}] 
\label{thm:RAI}
Given any $\delta,\varepsilon>0$, the Generalized Randomized Dual-adjusted Index policy $\hat{\policy}$ (\Cref{alg:RAI}) with parameters set as (i) $\upperDualAffine=\upperDualConstConvex=\mathcal{O}\left(\frac{1}{\delta}\right)$, (ii) $\InnerNum=\mathcal{O}\left(\frac{1}{\delta^2\epsilon^2}\right)$ and $\OuterNum=\mathcal{O}\left(\frac{1}{\delta^2\epsilon^2}\right)$, and (iii) $\InRate=\mathcal{O}\left(\delta^2\epsilon\right)$, $\OutRatelambda=\mathcal{O}\left(\delta^2\epsilon\right)$, and $\OutRatebeta=\mathcal{O}\left(\delta^2\epsilon\right)$, satisfies:

\medskip
\begin{itemize}
    \item \textbf{Approximate optimality:} $\expect{R_{\hat{\policy}}} \geq \OptFair -\varepsilon$
    \smallskip
     \item \textbf{Approximate ex-ante feasibility:} for all affine constraint $j\in[\NumAffine]$, $\AffineVector_j\cdot\alloc_{\hat{\policy}}\leq \AffineConstant_j+\delta$, and for all convex constraints $i\in[\NumConvex]$, $\ConvexFun_i\left(\alloc_{\hat{\policy}}\right)\leq \delta$.
\end{itemize}
Furthermore, the resulting policy is randomized, obtains a distribution over $\mathcal{O}\left(\frac{1}{\delta^4\epsilon^4}\right)$ deterministic policies, and runs in polynomial time in $\left(\dimension,\altnum,\frac{1}{\varepsilon},\frac{1}{\delta}\right)$, where the running time dependency on $\varepsilon$ and $\delta$ is $\mathcal{O}\left(\frac{1}{\delta^4\epsilon^4}\right)$.
\end{theorem}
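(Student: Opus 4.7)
My plan is to interpret \Cref{alg:RAI} as a two-layer no-regret dynamic that converges to an approximate saddle of the relaxed minimax game with payoff $\barLagrange$, and then translate this approximate saddle into approximate feasibility and optimality for~\eqref{eq:opt-fair}. The two-layer decomposition is needed because $\barLagrange$ is bilinear—hence not jointly convex—in the block $(\DualConstConvexVec,\DualConvex)$, which rules out single-layer convex-concave saddle analysis. The fix is to repurpose the inner variable $\DualConvex$ as a Fenchel-tightening device that drives $\barLagrange$ down to the true $\Lagrange$ (using $\min_{\DualConvex}\barLagrange(\alloc;\DualConstAffine,\DualConstConvex,\DualConvex)=\Lagrange(\alloc;\DualConstAffine,\DualConstConvex)$ from \Cref{lemma:convex-conjugate}), while the outer duals $(\DualConstAffineVec,\DualConstConvexVec)$ run projected OGD directly against the true $\Lagrange$, which is genuinely convex-concave on $\FeasibleAlloc\times[0,\upperDualAffine]^{\NumAffine}\times[0,\upperDualConstConvex]^{\NumConvex}$.

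\textbf{Two-layer no-regret analysis.} For each outer index $m$, the inner loop is a convex-concave game: the primal plays a best response to $\barLagrange$ via the polynomial-time general JMS solver of \Cref{sec:optimal-JMS-arbitrary} applied to the adjusted rewards $\AdjustedReward(\DualConstAffine^{(m)},\DualConstConvex^{(m)},\DualConvex^{(m,\ell)})$, while $\DualConvex^{(m,\ell)}$ runs projected OGD on the map $\DualConvex\mapsto\barLagrange(\alloc_{\policy^{(m,\ell)}};\DualConstAffine^{(m)},\DualConstConvex^{(m)},\DualConvex)$, which is convex in $\DualConvex$ (since $\ConvexFun_i^\ast$ is convex and $\DualConstConvex_i^{(m)}\geq 0$). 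Using the stated normalizations of $\vecreward,\AffineVector_j,\AffineConstant_j,\ConvexFun_i$, the box $[-\upperDualConvex,\upperDualConvex]^{\dimension}$, the bound $\barP$ on $\alloc$-coordinates, and the uniform bound on $\GradFun_i^\ast$ from \Cref{lemma:convex-conjugate}, standard projected OGD gives $O(\upperDualConstConvex\barP/\sqrt{\InnerNum})$ average regret for $\DualConvex^{(m,\cdot)}$. Combined with the primal's best-response to $\barLagrange$, Fenchel's identity, and Jensen applied to the $\alloc$-concavity of $\Lagrange$, this yields that $\Baralloc^{(m)}$ is an $O(\upperDualConstConvex/\sqrt{\InnerNum})$-best-response to $(\DualConstAffine^{(m)},\DualConstConvex^{(m)})$ under $\Lagrange$. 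The outer loop then runs projected OGD on $(\DualConstAffine,\DualConstConvex)\mapsto\Lagrange(\Baralloc^{(m)};\DualConstAffine,\DualConstConvex)$ over the box $[0,\upperDualAffine]^{\NumAffine}\times[0,\upperDualConstConvex]^{\NumConvex}$, with outer gradients $-(\AffineConstant_j-\AffineVector_j\cdot\Baralloc^{(m)})$ and $\ConvexFun_i(\Baralloc^{(m)})$ that are both $O(1)$ by normalization, giving $O((\upperDualAffine+\upperDualConstConvex)/\sqrt{\OuterNum})$ average regret. Composing the two streams, the averaged profile $(\alloc_{\hat\policy},\bar\DualConstAffineVec,\bar\DualConstConvexVec)$ is a $\rho$-approximate saddle of the box-constrained $\Lagrange$-game, with $\rho=O\!\bigl(\upperDualAffine/\sqrt{\OuterNum}+\upperDualConstConvex/\sqrt{\InnerNum}\bigr)$; the stated parameter choices $\upperDualAffine,\upperDualConstConvex=\Theta(1/\delta)$ and $\InnerNum,\OuterNum=\Theta(1/(\delta^2\varepsilon^2))$ make $\rho=O(\varepsilon)$, provided learning rates are tuned as $\Theta(\delta^2\varepsilon)$ in the standard step-size formula.

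\textbf{From approximate saddle to the theorem guarantees, and runtime.} Let $V_{\mathrm{cons}}$ denote the value of the box-constrained minimax game. The primal side of the $\varepsilon$-saddle gives $\min_{y\in[0,\upperDualAffine]^{\NumAffine}\times[0,\upperDualConstConvex]^{\NumConvex}}\Lagrange(\alloc_{\hat\policy};y)\geq V_{\mathrm{cons}}-O(\varepsilon)$. Plugging in $y=\upperDualAffine\cdot e_j$ (resp.\ $\upperDualConstConvex\cdot e_i$) and rearranging against the trivial upper bound $\vecreward\cdot\alloc_{\hat\policy}\leq \lVert\vecreward\rVert_\infty\dimension\barP=O(1)$ yields $\AffineVector_j\cdot\alloc_{\hat\policy}-\AffineConstant_j\leq O(1/\upperDualAffine)=O(\delta)$ and $\ConvexFun_i(\alloc_{\hat\policy})\leq O(1/\upperDualConstConvex)=O(\delta)$, i.e., approximate feasibility. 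For optimality, plugging $y=0$ into the same primal-side inequality gives $\vecreward\cdot\alloc_{\hat\policy}\geq V_{\mathrm{cons}}-O(\varepsilon)\geq \OptFair-O(\varepsilon)$, where the last step uses that restricting the min-player's domain can only raise the max-min value (and that the unrestricted value equals $\OptFair$ because for $\alloc\in\FairSet$ complementary slackness collapses $\min_{y\geq 0}\Lagrange(\alloc;y)=\vecreward\cdot\alloc$, while for infeasible $\alloc$ it equals $-\infty$; \Cref{assumption:feasible-fair} ensures the problem is nontrivial). Finally, each of the $\OuterNum\cdot\InnerNum=O(1/(\delta^4\varepsilon^4))$ iterations makes one polynomial-time call to the general JMS solver plus $O(\dimension+\NumAffine+\NumConvex)$ box projections, giving the stated polynomial runtime. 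The main technical obstacle will be the quantitative coupling of the inner Fenchel-tightening regret with the outer OGD regret on $\Lagrange$—in particular, making sure the Jensen slack, the inner-loop Fenchel error, and the outer regret all fit into a single $O(\varepsilon)$ budget after the $\upperDualAffine,\upperDualConstConvex=\Theta(1/\delta)$ scaling; everything else reduces to bookkeeping with bounded gradients and standard projected OGD bounds.
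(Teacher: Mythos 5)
Your proposal is correct and follows essentially the same route as the paper: a two-layer projected-gradient analysis (inner OGD on the Fenchel duals $\DualConvex$ against $\barLagrange$, outer OGD on $(\DualConstAffineVec,\DualConstConvexVec)$ against $\Lagrange$), primal best responses via the general JMS solver, Fenchel tightening via $\DualConvex_i=\nabla\ConvexFun_i(\Baralloc^{(m)})$ to collapse $\barLagrange$ back to $\Lagrange$, and extraction of feasibility/optimality by plugging the extreme dual values $\upperDualAffine e_j$, $\upperDualConstConvex e_i$, and $0$ into the resulting approximate-saddle inequality. The only cosmetic difference is that you route the final step through the box-constrained game value $V_{\mathrm{cons}}$ while the paper compares directly to $\OptFair$ via weak duality; these are equivalent.
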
 

}

We postpone the analysis of G-RDIP to \Cref{app:jms}. At a high level, our analysis follows the game-theoretic view of this algorithm and how it essentially solves a relaxation game, as sketched in \Cref{sec:RAI}. For more details, see the proof of \Cref{thm:RAI} in \Cref{app:jms}.


\section{Numerical Simulations}
\label{sec:numerical}
In this section, we supplement our theoretical analysis in \Cref{sec:pandora} with numerical simulations using synthetic data. We examine the Pandora's box model with multiple selections from a population with two demographic groups, and empirically compare the optimal constrained and unconstrained policies across a wide range of model primitives. \revcolor{For numerical simulations for the JMS setting with multiple constraints, see \Cref{sec:numerical_JMS}.}

\smallskip
\noindent\textbf{Short-term vs. long-term effects:}
\revcolor{As discussed in the introduction, to empirically assess how adding socially aware constraints affects the decision maker's utility, we distinguish between \emph{observable signal}s for candidate qualities and their \emph{unobservable true values} for candidate qualities. This will enable us to study both
(i) \emph{short-term effects} and (ii) \emph{long-term effects} of the algorithm's outcomes.}

For short-term effects, consistent with our theoretical analysis, we assume that the decision maker has access to prior distributions of observable signals about candidate quality, which may be subject to implicit bias, and can observe these signals through costly inspections. These signals represent the primary means of candidate assessment in the short-term, and \revcolorm{the decision maker's utility is evaluated by incorporating these observable signals as candidate values into the objective function described in \cref{eq:utility}, \Cref{sec:pandora}.}

For long-term effects, we further assume that candidates possess unobservable true qualities at the time of hiring, representing their genuine downstream quality after being given the opportunity. These true qualities may significantly differ from the observable signals. We also assume that the decision maker does not have access to the true qualities or their distributions during the search process and relies solely on the signals and their distributions to conduct the search and make selections. \revcolorm{The long-term utility is thus evaluated by substituting the unobservable true qualities as values into the objective function described in \cref{eq:utility}, \Cref{sec:pandora}---instead of plugging in the observable signals.} Throughout our study, we maintain that there is no inherent bias in the true qualities between demographic groups, though the signals about these qualities may be biased.\footnote{For more context, see our discussion in the Introduction.}

\smallskip
\noindent\textbf{Basic simulation setup:}  
We construct randomly generated instances of the Pandora's box problem (\Cref{sec:pandora}), where each instance comprises \(n = 60\) candidates evenly divided between the groups \(\WomanSet\) and \(\ManSet\). \revcolor{Inspection costs for these candidates are independently drawn from a uniform distribution over \([c_l, c_h]\) and are fixed thereafter, with \(c_l = 3\) and \(c_h = 6\) in our simulations.\footnote{This distribution was selected for clarity, but our results remain robust across different cost choices.} The values \(\{v_i\}_{i \in [n]}\) in the Pandora's box model represent observable signals about the candidates' quality or skills. To capture the natural heterogeneity of quality in the population, we generate prior value distributions \(\{F_i\}_{i \in \WomanSet \cup \ManSet}\) by first sampling \(n = 60\) ``unbiased" mean values \(\{\bar{\mu}_i\}_{i \in [n]}\) independently from a log-normal distribution with parameters \(\mu = 0\) and \(\sigma = 1\), scaled by a factor of 10 and shifted by +20.}\footnote{The choice of a log-normal distribution is inspired by the U.S. Bureau of Labor Statistics report in February 2017, which used nationally representative data on specific skills required for individual jobs to study how wage and skill distributions vary across different sectors (\Cref{fig-apx:v1_Histogram}~(d)). See \cite{bls2017wage} for more context and justification on fit of a log-normal distribution.}
We then define each value distribution \(F_i\) as a normal distribution \(\mathcal{N}(\mu_i, \sigma_i)\), where \(\mu_i = \bar{\mu}_i \times \rho_i\) and \(\sigma_i = 10 \times \rho_i\). Here, \(\rho_i \in [0, 1]\) is the \emph{bias factor} for the observable quality signal \(v_i\) of candidate \(i\). This normal distribution captures the uncertainty in the candidate's quality, which is revealed upon inspection. To model the unobservable true qualities, we define the true value as \(v_i^{\dagger} = v_i / \rho_i\). It is easy to see that \(v_i^{\dagger} \sim \mathcal{N}(\bar{\mu}_i, 10)\); therefore, these true qualities do not have an inherent bias, as posited. Furthermore, to model possible group-wise bias in quality signals—considering \(\ManSet\) as the majority/privileged group and \(\WomanSet\) as the minority/under-privileged group—we set \(\rho_i = 1\) for all \(i \in \ManSet\) and \(\rho_i = \rho\) for all \(i \in \WomanSet\), where \(\rho \in \{0.1, 0.2, \dots, 0.9, 1\}\).  Note that a smaller bias factor \(\rho\) implies a higher disparity. The effect of varying \(\rho\) on the underlying distributions is illustrated in \Cref{fig-apx:v1_Histogram} \footnote{\revcolor{To check the robustness of our numerical results to this model primitive, we have also studied non-multiplicative forms of bias. We do not report the exact results for brevity and coherence, but all of our qualitative results and insights remained unchanged.}}. We consider various values of capacity \(k\) ranging from 1 to 20. Lastly, using Monte Carlo simulations, we examine various summary statistics of our policies. \revcolorm{Given this setup, as mentioned earlier, we use the \emph{same} utility formula (see \cref{eq:utility} in \Cref{sec:pandora}) for both short-term and long-term evaluations: in the short-term case, we compute utilities using the observable quality signals \(v_i\), while in the long-term case, we use the unobservable true qualities \(v_i^{\dagger}\).}

\begin{figure}[htb]
    \centering
    \begin{subfigure}[b]{0.32\textwidth}
        \centering
        \includegraphics[width=\textwidth]{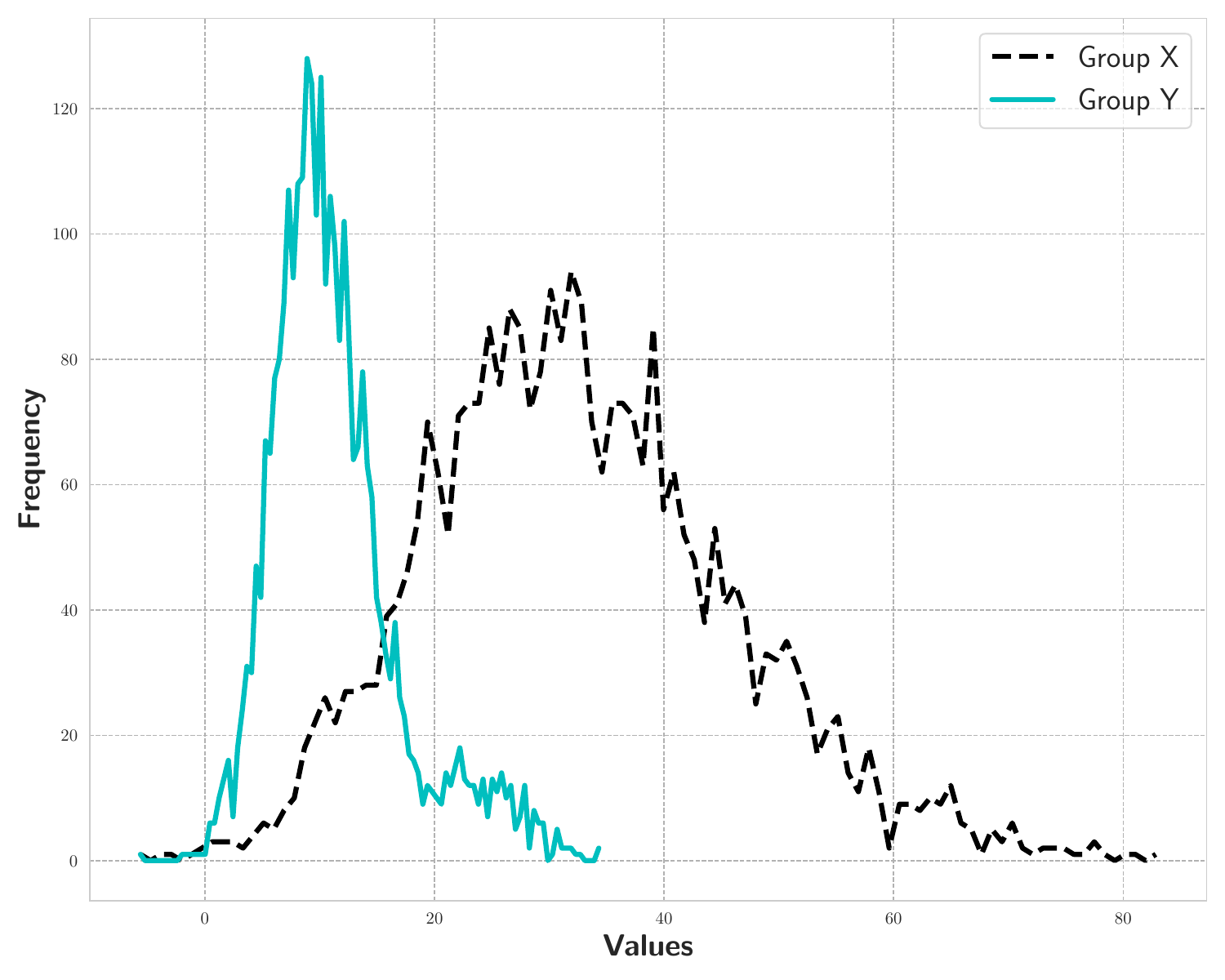}
        \caption{Bias factor = 0.3}
    \end{subfigure}
    \hfill
        \begin{subfigure}[b]{0.32\textwidth}
        \centering
        \includegraphics[width=\textwidth]{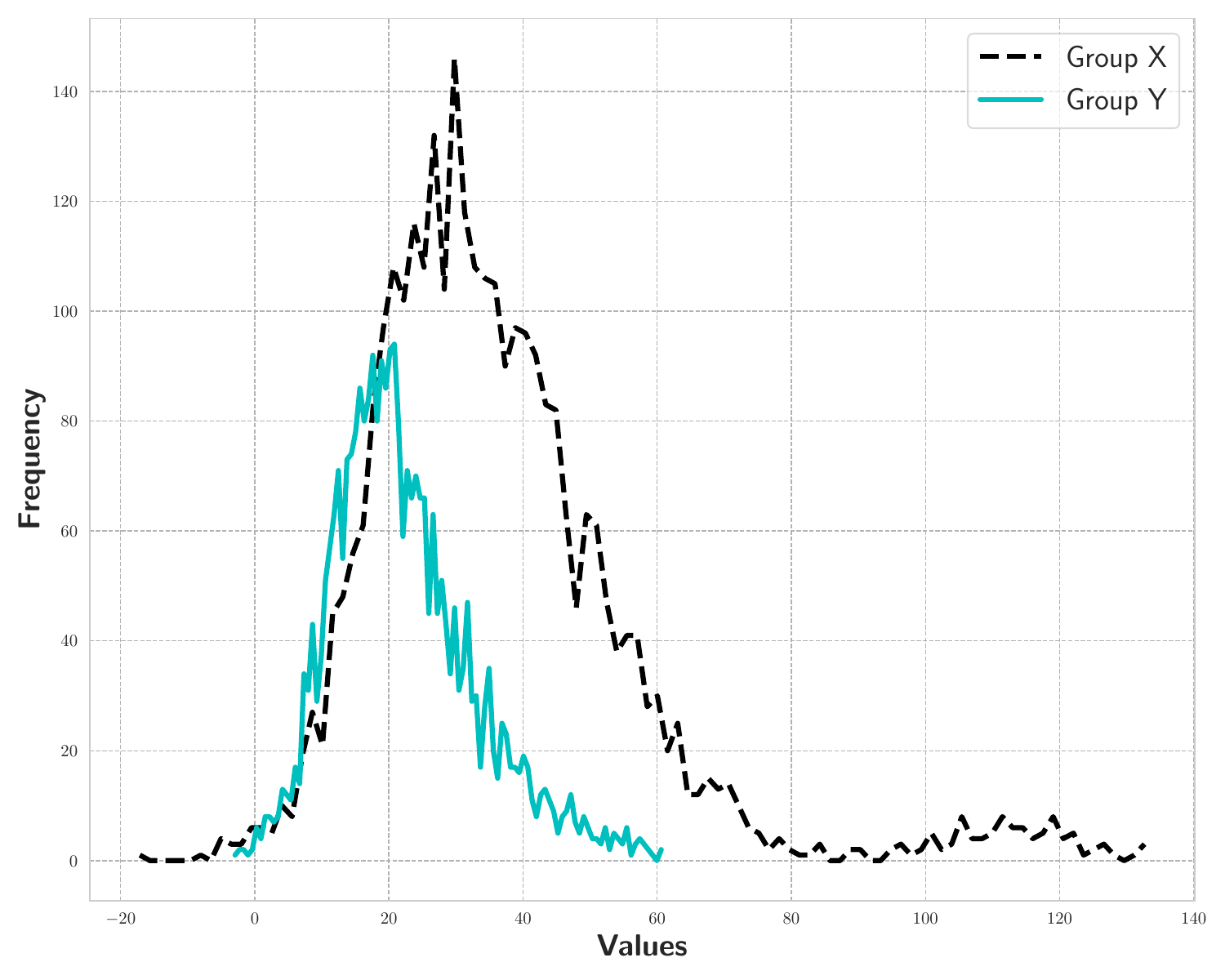}
        \caption{Bias factor = 0.6}
    \end{subfigure}
    \hfill
    \begin{subfigure}[b]{0.32\textwidth}
        \centering
        \includegraphics[width=\textwidth]{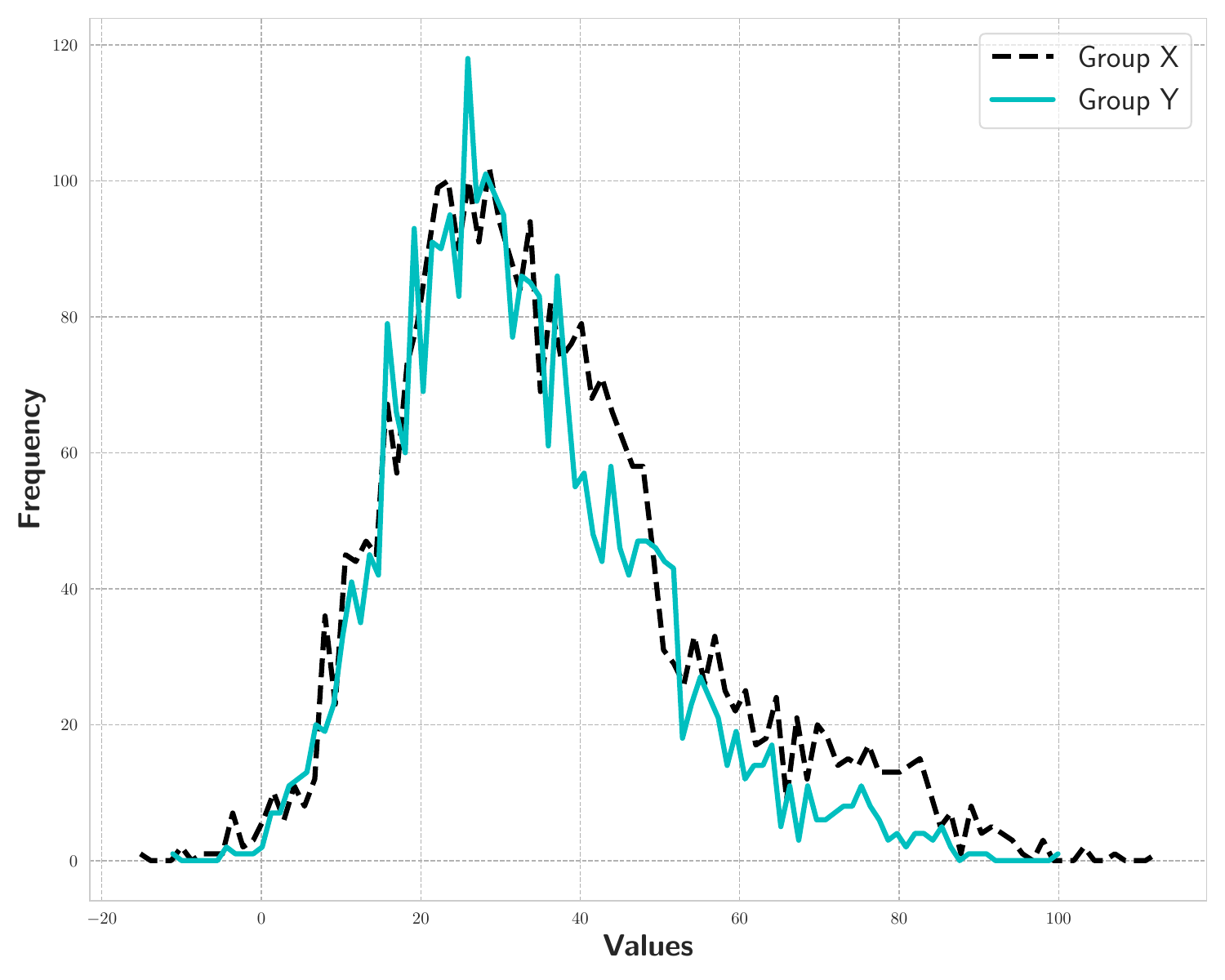}
        \caption{Bias factor = 0.9}
    \end{subfigure}
    \begin{subfigure}[b]{0.5\textwidth}
        \centering
        \includegraphics[width=\textwidth]{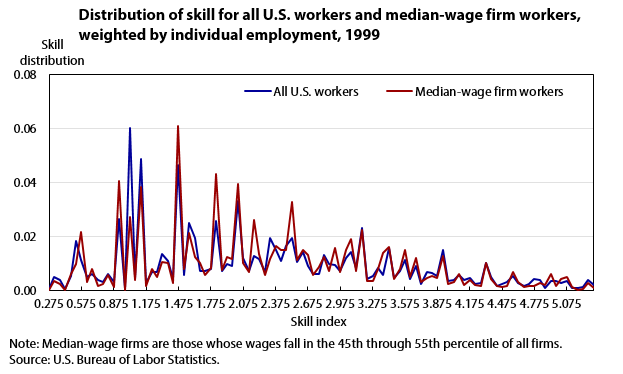}
        \caption{Aggregate skill score distribution}
    \end{subfigure}
    \caption{(a), (b), and (c): Sample histograms of the generated values $\boldsymbol{\{{v}_i\}_{i\in[1:60]}}$ for the groups $\boldsymbol{\WomanSet}$ (cyan) and $\boldsymbol{\ManSet}$ (black); (d) Distribution of skill scores for all U.S. workers in 1999~\citep{bls2017wage}.
     }
    \label{fig-apx:v1_Histogram}
\end{figure}


In the remainder of this section, we examine the effects of demographic parity in selection (\ref{eq:parity}). We perform a numerical comparison between the optimal unconstrained policy (solution to \eqref{eq:opt-unconstrained}) and the optimal constrained policy (solution to \eqref{eq:opt-constrained}).  We then highlight the key numerical findings and discuss managerial insights. \revcolorm{More specifically, in \Cref{sec:num-short-term} we compare the performance of the constrained and unconstrained policies using short-term utilities, then in \Cref{sec:num-long-term} we do the same but this time measuring their long-term utilities, and lastly in \Cref{sec:num-sim-quota} we study both short-term and long-term effects under the \ref{eq:quota} constraint. For more comprehensive numerical results and additional scenarios---encompassing a long list including (i) considering other notions of socially aware ex-ante constraints such as the average budget for subsidization \eqref{eq:budget}, (ii) studying the unintended consequences of socially aware constraints, (iii) checking robustness of the result to the choice of model primitives, (iv) studying the effects of resource augmentations such as increased capacity or budget subsidies on the price of fairness, and (v) extending our simulations to the JMS setting with multiple constraints---refer to \Cref{apx:numerical-main}. }

\subsection{Short-term Outcomes: (Surprisingly) Small Utilitarian Loss}
\label{sec:num-short-term}

We begin by comparing the short-term performance of two optimal policies. In \Cref{fig-rad:short-term}~(a), we plot the expected utilities of these policies as a function of the bias factor $\rho$ for a fixed capacity $k = 20$. In \Cref{fig-rad:short-term}~(b), we illustrate the price of fairness (the ratio of the two expected utilities) as a function of $\rho$ for capacities $k = 8$, $15$, and $20$. We observe that as $\rho$ decreases from $1$ (unbiased signals) to approximately $0.5$ (moderately biased), the price of fairness decreases gradually. For example, when $\rho = 0.7$, the drop in utility is less than $6\%$. 

One might speculate that the utility loss is minimal because the unconstrained solution was not very ``unfair'' to the minority group; in other words, the optimal unconstrained policy was relatively balanced between the two groups. To explore this, we also plot the normalized constraint slack under the optimal unconstrained policy, $\Delta_{\textsc{Cons}}^{\pi^*_{\textsc{UC}}}/k$, in \Cref{fig-rad:short-term}~(b). Contrary to expectation, for $\rho = 0.7$ and $k = 20$, the normalized slack is around $0.5$, implying that without the parity constraint, an average of $10$ more candidates from the majority group would be selected. Overall, our observations in \Cref{fig-rad:short-term}~(b) indicate that for moderate bias values, imposing ex-ante demographic parity leads to only a small utilitarian loss while yielding a significant egalitarian gain in selection. These findings are consistent across different parameter choices in our simulations. For more numerical results and additional scenarios, refer to \Cref{apx:numerical-main}, particularly \Cref{fig:v1_both_fair_unfair}, \Cref{fig:v1_diff_and_CR_fair_unfair}, and \Cref{fig:v1_slack_lambda_fair_unfair}.

\begin{figure}[htb]
    \centering
    \begin{subfigure}[b]{0.48\textwidth}
        \centering
        \includegraphics[width=\textwidth]{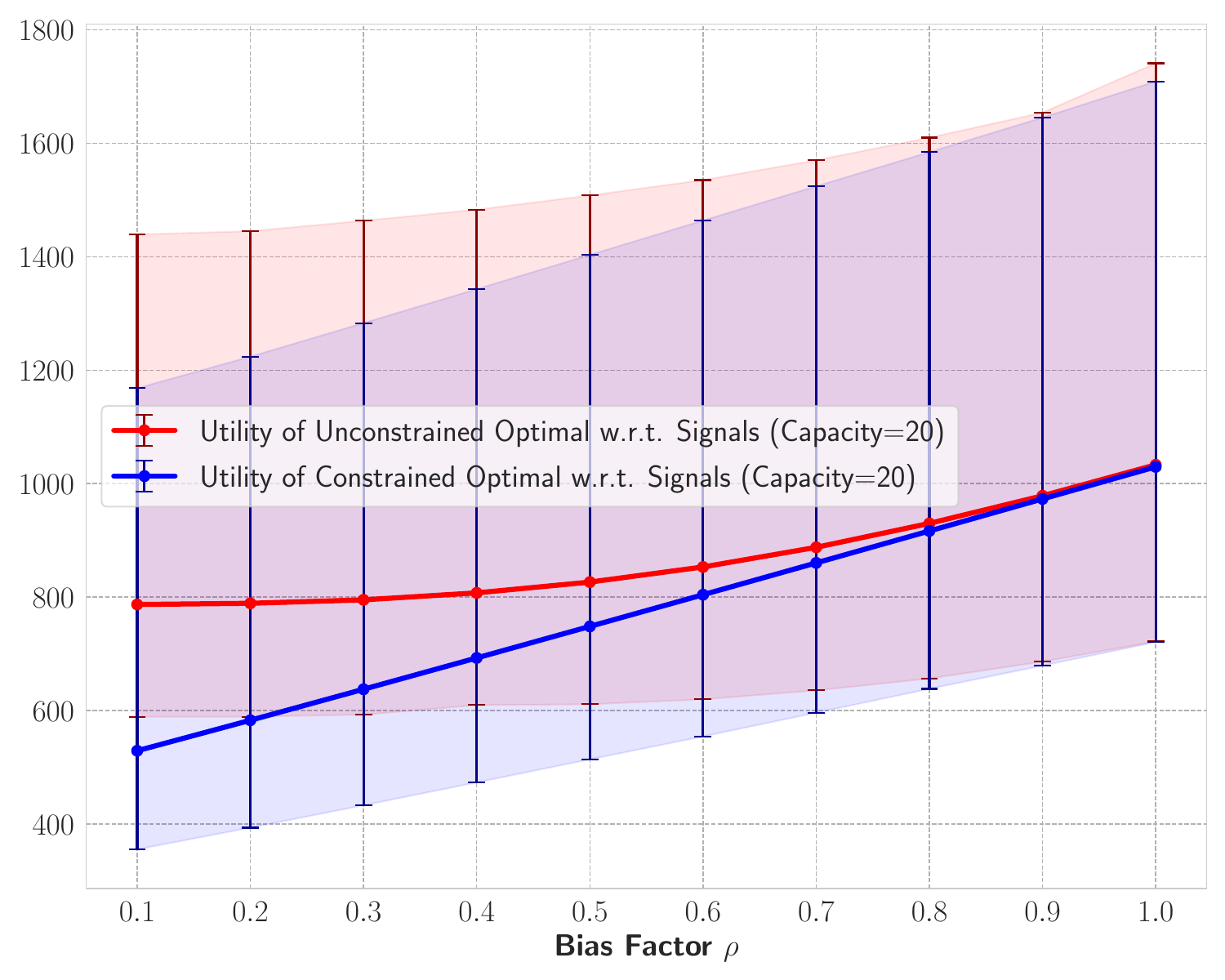}
        \caption{Expected utilities calculated based on signals $\{v_i\}_{i\in[n]}$ for the unconstrained optimal policy (red) and the constrained optimal policy (blue).}
    \end{subfigure}
    \hfill
        \begin{subfigure}[b]{0.45\textwidth}
        \centering
        \includegraphics[width=\textwidth]{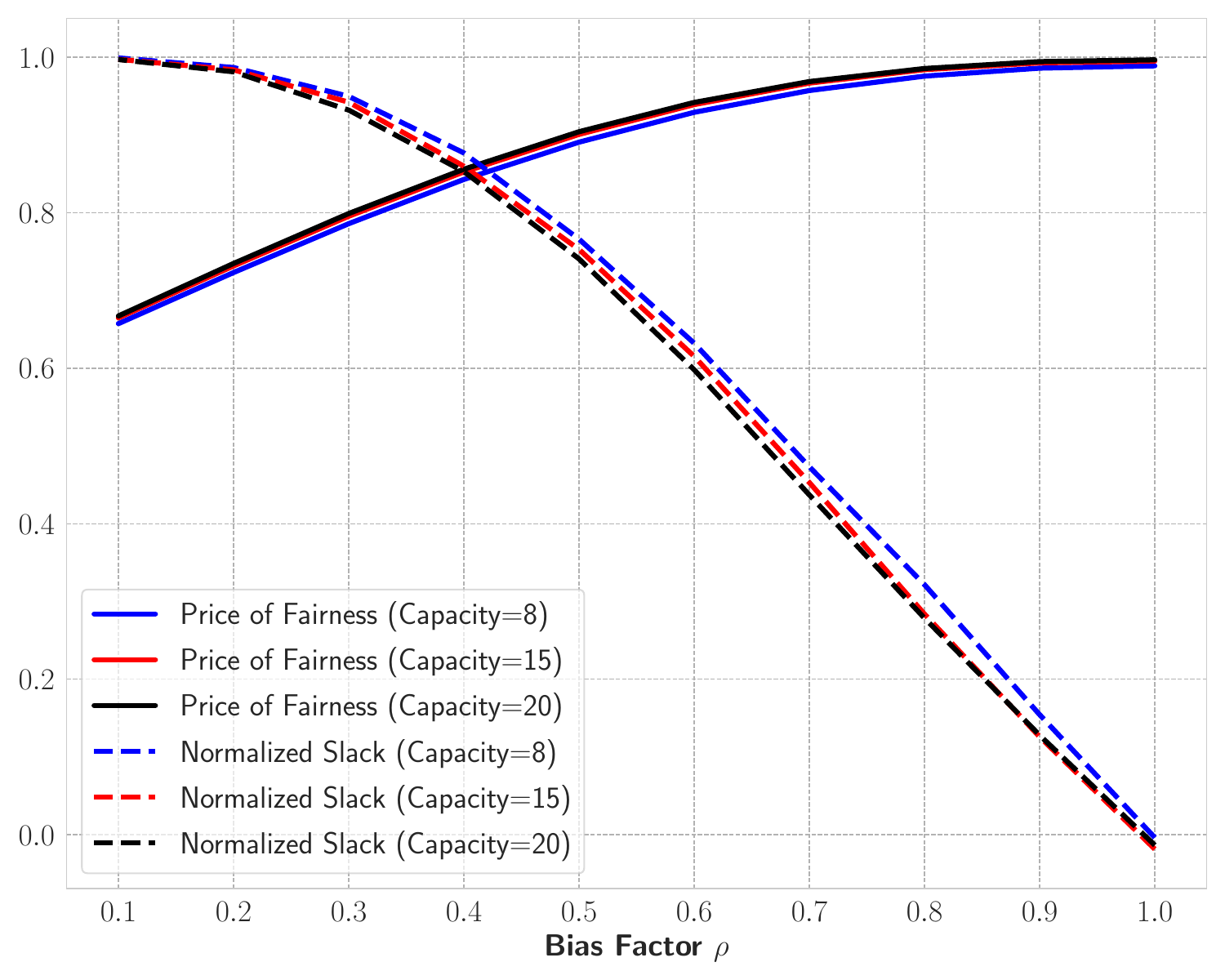}
        \caption{Price of fairness ratio calculated based on signals $\{v_i\}_{i\in[n]}$ (solid lines) and the normalized constraint slack of unconstrained optimal policy (dashed lines).}
    \end{subfigure}
    \caption{\revcolor{Comparing the short-term outcomes of unconstrained and constrained optimal policies.}}
    \label{fig-rad:short-term}
\end{figure}

\subsection{Long-term Outcomes: Potential Utilitarian Gain}
\label{sec:num-long-term}
We now examine the long-term performance of the two optimal policies. We run them as before, but this time measure their expected utility based on the true unbiased values \(\{v_i^{\dagger}\}_{i \in [n]}\) instead of the biased signals \(\{v_i\}_{i \in [n]}\). Our goal is to empirically assess whether imposing ex-ante demographic parity can also lead to utilitarian gains---alongside its significant egalitarian benefits---when utility is measured by true values.  Specifically, in \Cref{fig-rad:long-term}~(a), we plot the expected utilities of both policies as a function of the bias factor \(\rho\) for a fixed capacity \(k = 20\). In \Cref{fig-rad:long-term}~(b), we show the price of fairness (i.e., the ratio of the two expected utilities) and the normalized constraint slack of the optimal unconstrained policy as functions of \(\rho\) for capacities \(k = 8\), \(15\), and \(20\). From both graphs, we observe that imposing demographic parity results in a long-term utilitarian gain. In \Cref{fig-rad:long-term}~(a), as \(\rho\) decreases from \(1\) (unbiased) to \(0\) (significantly biased), the performance of the optimal constrained policy remains nearly constant. In contrast, the performance of the optimal unconstrained policy deteriorates rapidly. Similarly, in \Cref{fig-rad:long-term}~(b), the price of fairness with respect to the true qualities \(\{v_i^{\dagger}\}_{i \in [n]}\) remains above \(1\) for all \(\rho\) values. This result is robust across all our simulation parameter choices. For additional numerical results and scenarios see \Cref{apx:numerical-main}, particularly \Cref{fig:v3_both_fair_unfair} and \Cref{fig:v3_diff_and_CR_fair_unfair}.

\begin{figure}[htb]
    \centering
    \begin{subfigure}[b]{0.48\textwidth}
        \centering
        \includegraphics[width=\textwidth]{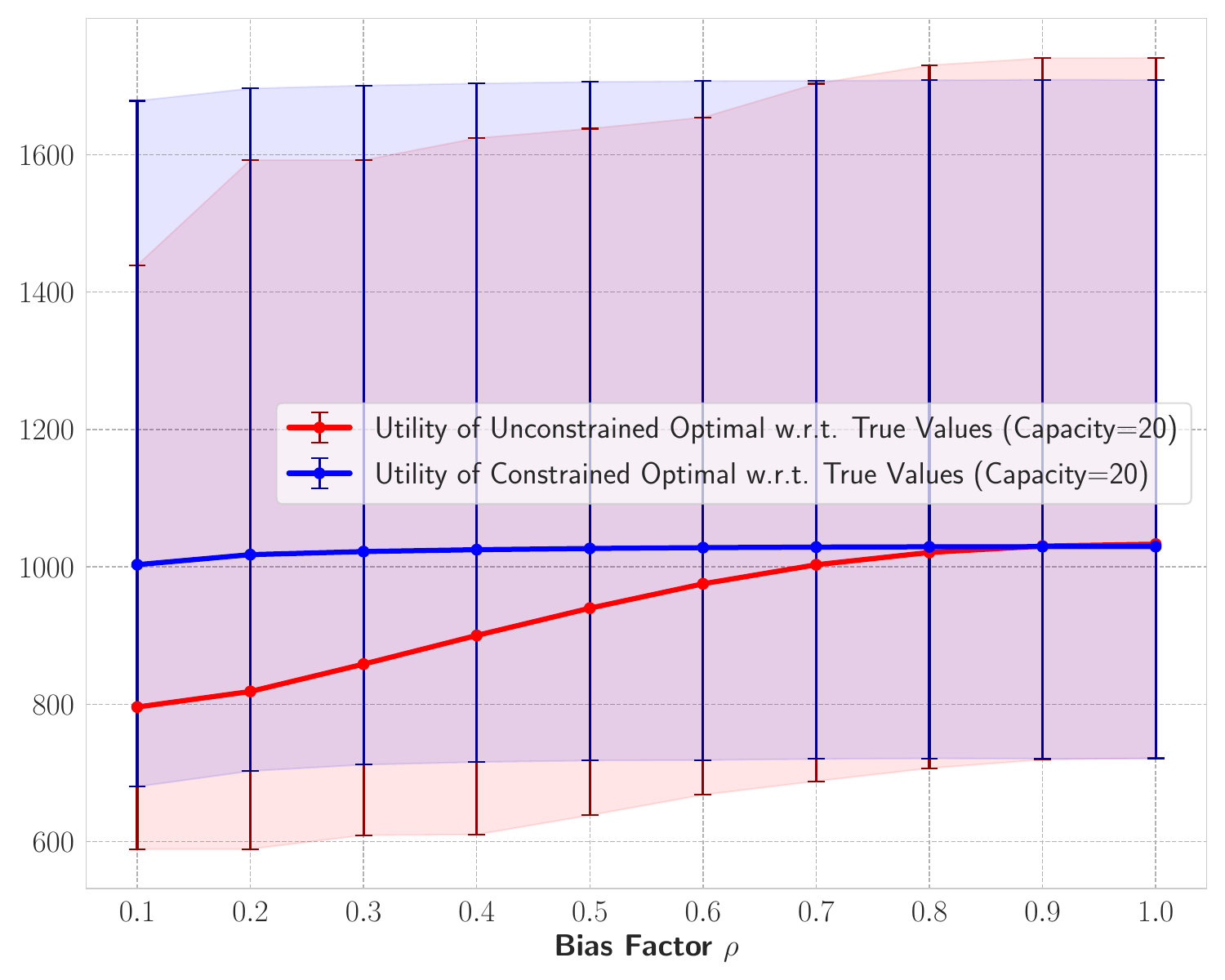}
        \caption{Expected utilities calculated based on true values $\{{v}^\dagger_i\}_{i\in[n]}$ for the unconstrained optimal policy (red) and the constrained optimal policy (blue).}
    \end{subfigure}
    \hfill
        \begin{subfigure}[b]{0.48\textwidth}
        \centering
        \includegraphics[width=\textwidth]{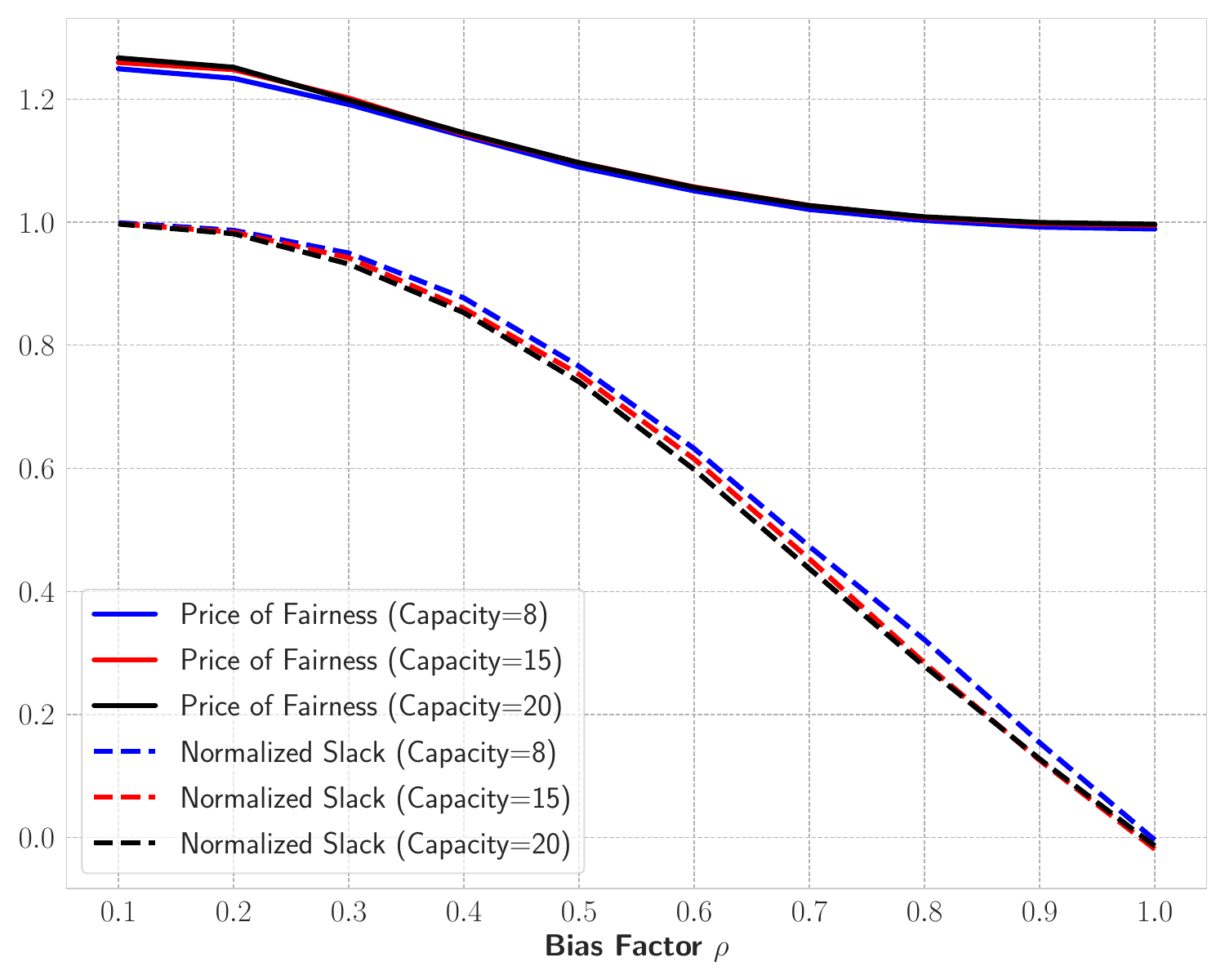}
        \caption{Price of fairness ratio calculated based on true values $\{v^\dagger_i\}_{i\in[n]}$ (solid lines) and the normalized constraint slack of unconstrained optimal policy (dashed lines).}
    \end{subfigure}
    \caption{\revcolor{Comparing the long-term outcomes of unconstrained and constrained optimal policies.}}
    \label{fig-rad:long-term}
\end{figure}

\subsection{Refining Fairness with Adjustable Quotas}
\label{sec:num-sim-quota}
When \(\rho\) is excessively small, it becomes necessary to refine the notion of fairness defined by Constraint~\ref{eq:parity} to achieve improved performance. To address this, we focus on Constraint~\ref{eq:quota} in selection and consider adjusting the parameter \(\theta\) based on the degree of disparity or bias in the signals. In \Cref{fig-rad:quota}, we plot the price of fairness (calculated using both biased signals and true values) as a function of \(\theta\). Notably, \(\theta = 0.5\) corresponds to demographic parity (i.e., Constraint~\ref{eq:parity}), and the constraint becomes more relaxed as \(\theta\) decreases. As observed in \Cref{fig-rad:quota}~(a), decreasing \(\theta\) reduces the short-term utility loss compared to the optimal unconstrained policy across all values of \(\rho\). This effect is particularly pronounced for smaller \(\rho\) values (e.g., \(\rho \in \{0.1, 0.2, 0.3\}\)) compared to larger ones (e.g., \(\rho \in \{0.8, 0.9\}\)). Additionally, in \Cref{fig-rad:quota}~(b), we see that the long-term utility gain relative to the optimal unconstrained policy for small \(\rho\) values increases significantly as \(\theta\) decreases from \(0.5\). The gain peaks at a certain point (around $0.5$) and approaches \(1\) as \(\theta\) approaches \(0\).

This exercise yields an important managerial insight: To avoid excessive short-term utilitarian loss and potentially achieve long-term utilitarian gains, decision makers should carefully set the parameter \(\theta\). A ``good'' choice of the parameter $\theta$ for short-term utility highly depends on $\rho$, or in general, the degree of bias in the signals. On the other hand, the optimal choice for long-term utility does not depend on $\rho$ and is mainly dependent on the size of the groups (in this case, $\theta=0.5$ as the groups have the same size). Our observations are robust across various parameter choices in all simulations. For additional numerical results and scenarios, see \Cref{sec:numerical-quota}, in particular, \Cref{fig:v5_both_fair_unfair} and \Cref{fig:v6_both_fair_unfair} for simulations with alternatives parameters.

\begin{figure}[htb]
    \centering
    \begin{subfigure}[b]{0.48\textwidth}
        \centering
        \includegraphics[width=\textwidth]{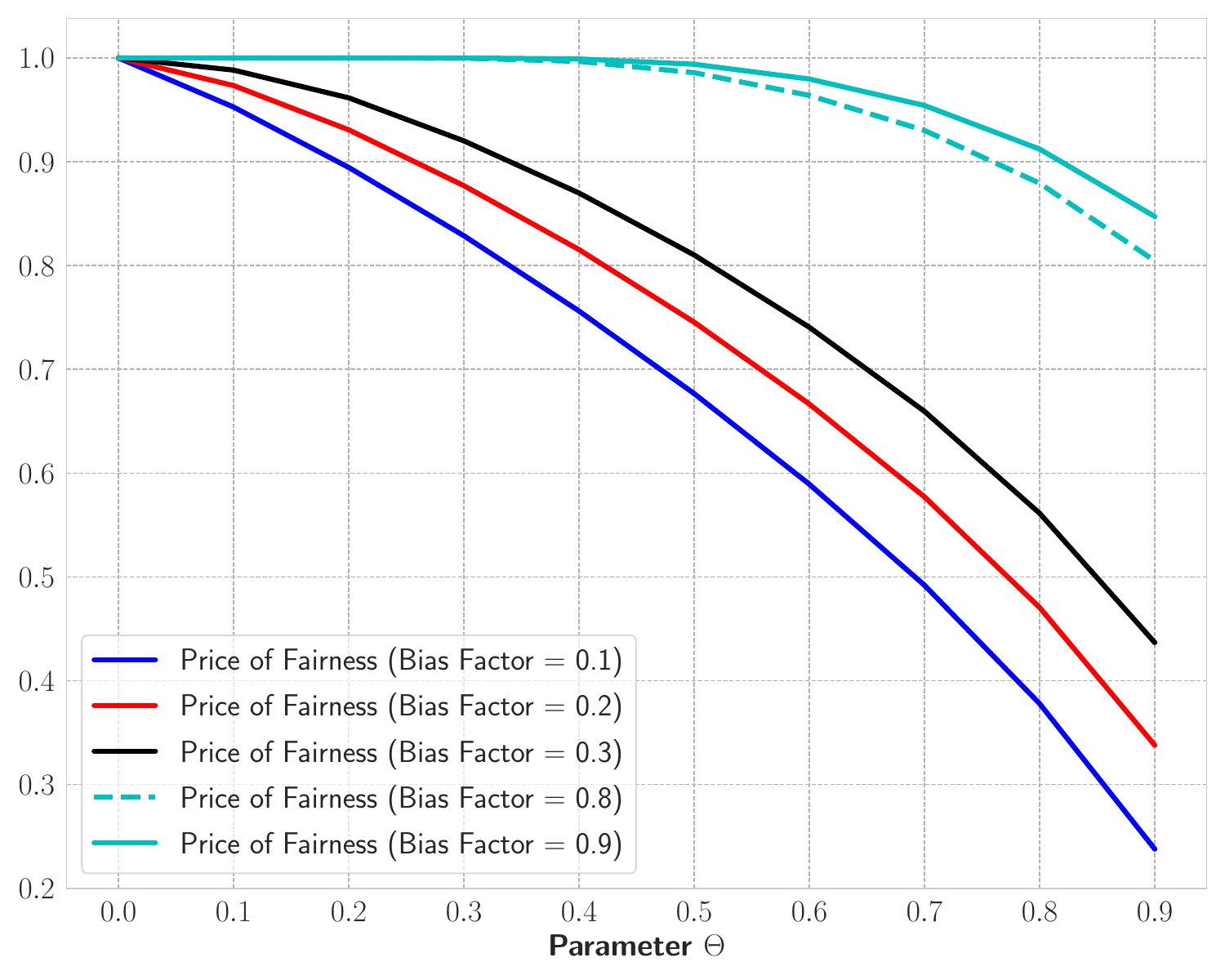}
        \caption{Short-term price of fairness}
    \end{subfigure}
    \hfill
        \begin{subfigure}[b]{0.48\textwidth}
        \centering
        \includegraphics[width=\textwidth]{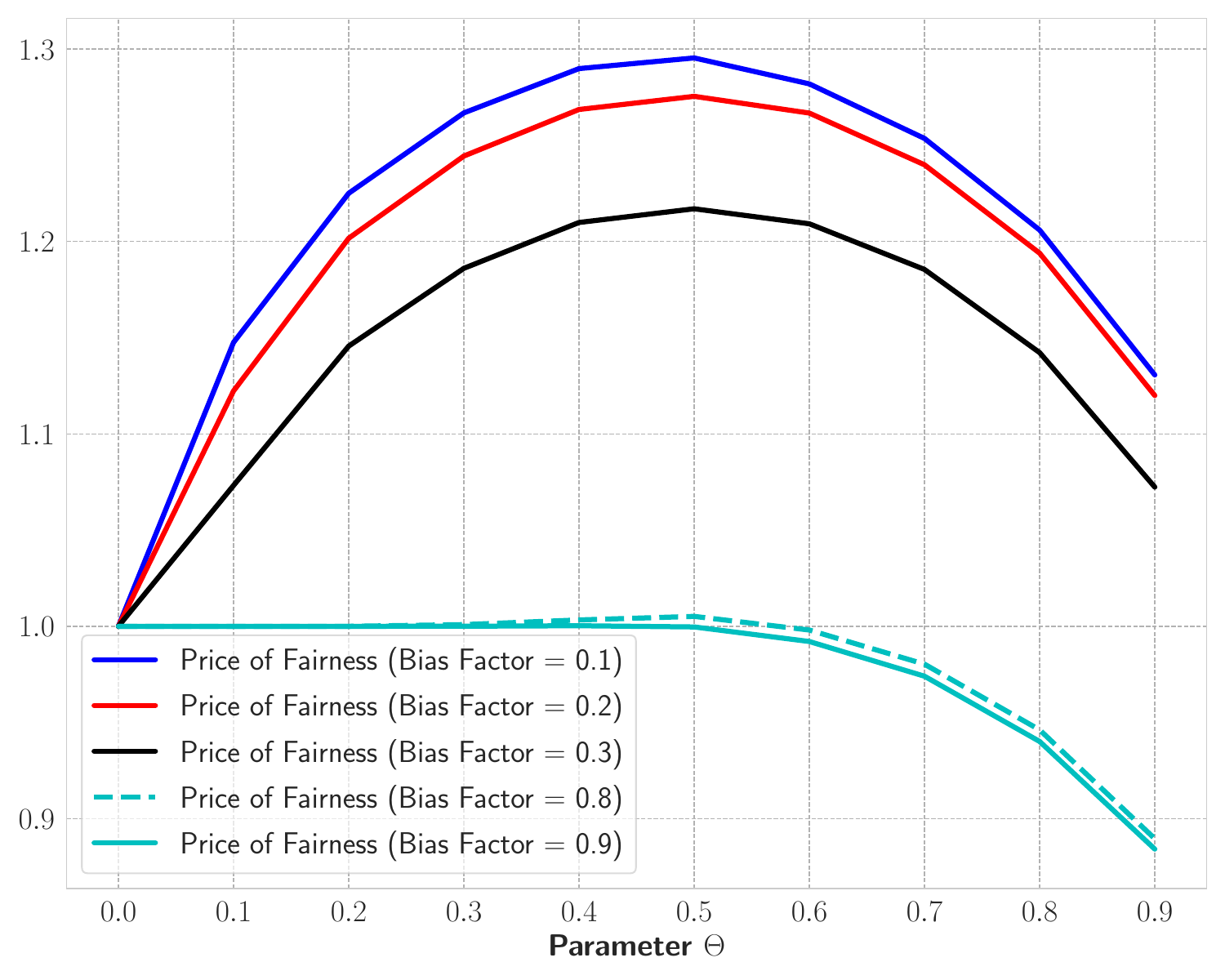}
        \caption{Long-term price of fairness}
    \end{subfigure}
    \caption{\revcolor{The performance of optimal constrained policy for \ref{eq:quota} in selection with parameter $\boldsymbol{\theta}$ ($\textbf{k=20}$).}}
    \label{fig-rad:quota}
\end{figure}


\vspace{-3mm}
\section{Conclusion and Future Directions}
\label{sec:conclusion}
\revcolor{We initiate the study of adding ex-ante constraints to Markovian sequential search with costly information acquisition. Our study is motivated by the rise of algorithmic hiring and the interest in improving measures of inclusion and access. We encode many of such measures as general ex-ante constraints and design optimal constrained policies (or nearly optimal and near feasible ones) for the underlying search processes. We view our work as a building block for understanding the efficiency–fairness trade-off, a direction we plan to pursue.

Although we focus on ex-ante fairness---natural in settings where the search repeats many times (e.g., pre-employment screening)---studying ex-post fairness across different stages of a single, high-stakes search is another promising direction. For instance, how can we ensure an ex-post balance in interviews for a top executive hire? Additionally, many search processes are \emph{delegated} in principal–agent frameworks, raising questions about how to induce fairness when only the principal values it. Also, a future direction, ex-ante constraints can capture several operational limitations and be studied in other applications of sequential search beyond hiring, such as facility location with costly inspection (e.g., for placing windmills in different locations).

Finally, we adopt the classic Pandora's box assumptions of known, independent reward distributions. Recent work has begun to relax these assumptions—e.g., by considering unknown distributions or correlated boxes—albeit without fairness considerations \citep{chawla2020pandora, gatmiry2022bandit}. Revisiting these richer settings with well-defined notions of fairness is an appealing avenue for future research.}

\newcommand{\newblock}{}
\setlength{\bibsep}{0.0pt}
\bibliographystyle{plainnat}
\OneAndAHalfSpacedXI
{\footnotesize
\bibliography{refs}}

\begin{thebibliography}{144}
\providecommand{\natexlab}[1]{#1}
\providecommand{\url}[1]{\texttt{#1}}
\expandafter\ifx\csname urlstyle\endcsname\relax
  \providecommand{\doi}[1]{doi: #1}\else
  \providecommand{\doi}{doi: \begingroup \urlstyle{rm}\Url}\fi

\bibitem[Adelman and Mersereau(2008)]{adelman2008relaxations}
Daniel Adelman and Adam~J Mersereau.
\newblock Relaxations of weakly coupled stochastic dynamic programs.
\newblock \emph{Operations Research}, 56\penalty0 (3):\penalty0 712--727, 2008.

\bibitem[Agarwal et~al.(2024)Agarwal, Ghuge, and Nagarajan]{agarwal2024semi}
Arpit Agarwal, Rohan Ghuge, and Viswanath Nagarajan.
\newblock Semi-bandit learning for monotone stochastic optimization.
\newblock In \emph{2024 IEEE 65th Annual Symposium on Foundations of Computer
  Science (FOCS)}, pages 1260--1274. IEEE, 2024.

\bibitem[Agrawal and Devanur(2014)]{agrawal2014fast}
Shipra Agrawal and Nikhil~R Devanur.
\newblock Fast algorithms for online stochastic convex programming.
\newblock In \emph{Proceedings of the twenty-sixth annual ACM-SIAM symposium on
  Discrete algorithms}, pages 1405--1424. SIAM, 2014.

\bibitem[Agrawal et~al.(2014)Agrawal, Wang, and Ye]{agrawal2014dynamic}
Shipra Agrawal, Zizhuo Wang, and Yinyu Ye.
\newblock A dynamic near-optimal algorithm for online linear programming.
\newblock \emph{Operations Research}, 62\penalty0 (4):\penalty0 876--890, 2014.

\bibitem[Alaei(2014)]{alaei2014bayesian}
Saeed Alaei.
\newblock Bayesian combinatorial auctions: Expanding single buyer mechanisms to
  many buyers.
\newblock \emph{SIAM Journal on Computing}, 43\penalty0 (2):\penalty0 930--972,
  2014.

\bibitem[Alaei et~al.(2021)Alaei, Makhdoumi, and Malekian]{alaei2021revenue}
Saeed Alaei, Ali Makhdoumi, and Azarakhsh Malekian.
\newblock Revenue maximization under unknown private values with non-obligatory
  inspection.
\newblock In \emph{Proceedings of the 22nd ACM Conference on Economics and
  Computation}, pages 27--28, 2021.

\bibitem[Aleksandrov et~al.(2015)Aleksandrov, Aziz, Gaspers, and
  Walsh]{aleksandrov2015online}
Martin~Damyanov Aleksandrov, Haris Aziz, Serge Gaspers, and Toby Walsh.
\newblock Online fair division: Analysing a food bank problem.
\newblock In \emph{Twenty-Fourth International Joint Conference on Artificial
  Intelligence}, 2015.

\bibitem[Anari et~al.(2019)Anari, Niazadeh, Saberi, and
  Shameli]{anari2019nearly}
Nima Anari, Rad Niazadeh, Amin Saberi, and Ali Shameli.
\newblock Nearly optimal pricing algorithms for production constrained and
  laminar bayesian selection.
\newblock In \emph{Proceedings of the 2019 ACM Conference on Economics and
  Computation}, pages 91--92, 2019.

\bibitem[Aouad et~al.(2020)Aouad, Ji, and Shaposhnik]{aouad2020pandora}
Ali Aouad, Jingwei Ji, and Yaron Shaposhnik.
\newblock The pandora's box problem with sequential inspections.
\newblock \emph{Available at SSRN 3726167}, 2020.

\bibitem[Armstrong(2017)]{armstrong2017ordered}
Mark Armstrong.
\newblock Ordered consumer search.
\newblock \emph{Journal of the European Economic Association}, 15\penalty0
  (5):\penalty0 989--1024, 2017.

\bibitem[Arora et~al.(2012)Arora, Hazan, and Kale]{arora2012multiplicative}
Sanjeev Arora, Elad Hazan, and Satyen Kale.
\newblock The multiplicative weights update method: a meta-algorithm and
  applications.
\newblock \emph{Theory of computing}, 8\penalty0 (1):\penalty0 121--164, 2012.

\bibitem[Arsenis and Kleinberg(2022)]{arsenis2022individual}
Makis Arsenis and Robert Kleinberg.
\newblock Individual fairness in prophet inequalities.
\newblock \emph{arXiv preprint arXiv:2205.10302}, 2022.

\bibitem[Asadpour et~al.(2023)Asadpour, Niazadeh, Saberi, and
  Shameli]{asadpour2023sequential}
Arash Asadpour, Rad Niazadeh, Amin Saberi, and Ali Shameli.
\newblock Sequential submodular maximization and applications to ranking an
  assortment of products.
\newblock \emph{Operations Research}, 71\penalty0 (4):\penalty0 1154--1170,
  2023.

\bibitem[Audibert et~al.(2014)Audibert, Bubeck, and Lugosi]{audibert2014regret}
Jean-Yves Audibert, S{\'e}bastien Bubeck, and G{\'a}bor Lugosi.
\newblock Regret in online combinatorial optimization.
\newblock \emph{Mathematics of Operations Research}, 39\penalty0 (1):\penalty0
  31--45, 2014.

\bibitem[Badanidiyuru et~al.(2018)Badanidiyuru, Kleinberg, and
  Slivkins]{badanidiyuru2018bandits}
Ashwinkumar Badanidiyuru, Robert Kleinberg, and Aleksandrs Slivkins.
\newblock Bandits with knapsacks.
\newblock \emph{Journal of the ACM (JACM)}, 65\penalty0 (3):\penalty0 1--55,
  2018.

\bibitem[Baek and Farias(2021)]{baek2021fair}
Jackie Baek and Vivek Farias.
\newblock Fair exploration via axiomatic bargaining.
\newblock \emph{Advances in Neural Information Processing Systems},
  34:\penalty0 22034--22045, 2021.

\bibitem[Baek and Makhdoumi(2023)]{baek2023feedback}
Jackie Baek and Ali Makhdoumi.
\newblock The feedback loop of statistical discrimination.
\newblock \emph{Available at SSRN 4658797}, 2023.

\bibitem[Balseiro et~al.(2023)Balseiro, Lu, and Mirrokni]{balseiro2023best}
Santiago~R Balseiro, Haihao Lu, and Vahab Mirrokni.
\newblock The best of many worlds: Dual mirror descent for online allocation
  problems.
\newblock \emph{Operations research}, \penalty0 (1):\penalty0 101--119, 2023.

\bibitem[Bansak et~al.(2024)Bansak, Lee, Manshadi, Niazadeh, and
  Paulson]{bansak2024dynamic}
Kirk Bansak, Soonbong Lee, Vahideh Manshadi, Rad Niazadeh, and Elisabeth
  Paulson.
\newblock Dynamic matching with post-allocation service and its application to
  refugee resettlement.
\newblock \emph{Available at SSRN}, 2024.

\bibitem[Barocas et~al.(2017)Barocas, Hardt, and
  Narayanan]{barocas2017fairness}
Solon Barocas, Moritz Hardt, and Arvind Narayanan.
\newblock Fairness in machine learning.
\newblock \emph{Nips tutorial}, 1:\penalty0 2, 2017.

\bibitem[Baruah et~al.(1993)Baruah, Cohen, Plaxton, and
  Varvel]{baruah1993proportionate}
Sanjoy~K Baruah, Neil~K Cohen, C~Greg Plaxton, and Donald~A Varvel.
\newblock Proportionate progress: A notion of fairness in resource allocation.
\newblock In \emph{Proceedings of the twenty-fifth annual ACM symposium on
  Theory of computing}, pages 345--354, 1993.

\bibitem[Bateni et~al.(2022)Bateni, Chen, Ciocan, and Mirrokni]{bateni2022fair}
MohammadHossein Bateni, Yiwei Chen, Dragos~Florin Ciocan, and Vahab Mirrokni.
\newblock Fair resource allocation in a volatile marketplace.
\newblock \emph{Operations Research}, 70\penalty0 (1):\penalty0 288--308, 2022.

\bibitem[Becker(2010)]{becker2010economics}
Gary~S Becker.
\newblock \emph{The economics of discrimination}.
\newblock University of Chicago press, 2010.

\bibitem[Bertrand and Mullainathan(2004)]{bertrand2004emily}
Marianne Bertrand and Sendhil Mullainathan.
\newblock Are emily and greg more employable than lakisha and jamal? a field
  experiment on labor market discrimination.
\newblock \emph{American economic review}, 94\penalty0 (4):\penalty0 991--1013,
  2004.

\bibitem[Bertsimas et~al.(2012)Bertsimas, Farias, and
  Trichakis]{bertsimas2012efficiency}
Dimitris Bertsimas, Vivek~F Farias, and Nikolaos Trichakis.
\newblock On the efficiency-fairness trade-off.
\newblock \emph{Management Science}, 58\penalty0 (12):\penalty0 2234--2250,
  2012.

\bibitem[Beyhaghi and Kleinberg(2019)]{beyhaghi2019pandora}
Hedyeh Beyhaghi and Robert Kleinberg.
\newblock Pandora's problem with nonobligatory inspection.
\newblock In \emph{Proceedings of the 2019 ACM Conference on Economics and
  Computation}, pages 131--132, 2019.

\bibitem[Blackwell(1956)]{blackwell1956analog}
David Blackwell.
\newblock An analog of the minimax theorem for vector payoffs.
\newblock \emph{Pacific J. Math.}, 6\penalty0 (4):\penalty0 1--8, 1956.

\bibitem[Bock(2015)]{bock2015work}
Laszlo Bock.
\newblock \emph{Work rules!: Insights from inside Google that will transform
  how you live and lead}.
\newblock Twelve, 2015.

\bibitem[Boodaghians et~al.(2023)Boodaghians, Fusco, Lazos, and
  Leonardi]{boodaghians2020pandora}
Shant Boodaghians, Federico Fusco, Philip Lazos, and Stefano Leonardi.
\newblock Pandora’s box problem with order constraints.
\newblock \emph{Mathematics of Operations Research}, 48\penalty0 (1):\penalty0
  498--519, 2023.

\bibitem[Brown and Uru(2022)]{brown2022sequential}
David~B Brown and Cagin Uru.
\newblock Sequential search with acquisition uncertainty.
\newblock \emph{Available at SSRN 4014841}, 2022.

\bibitem[Brown and Uru(2025)]{brown2025data}
David~B Brown and Cagin Uru.
\newblock Data-driven sequential search.
\newblock \emph{Available at SSRN 5348531}, 2025.

\bibitem[Bubeck et~al.(2015)]{bubeck2015convex}
S{\'e}bastien Bubeck et~al.
\newblock Convex optimization: Algorithms and complexity.
\newblock \emph{Foundations and Trends{\textregistered} in Machine Learning},
  8\penalty0 (3-4):\penalty0 231--357, 2015.

\bibitem[Buchbinder and Naor(2009)]{buchbinder2009online}
Niv Buchbinder and Joseph Naor.
\newblock Online primal-dual algorithms for covering and packing.
\newblock \emph{Mathematics of Operations Research}, 34\penalty0 (2):\penalty0
  270--286, 2009.

\bibitem[Buchbinder et~al.(2009)Buchbinder, Jain, and
  Singh]{buchbinder2009secretary}
Niv Buchbinder, Kamal Jain, and Mohit Singh.
\newblock Secretary problems and incentives via linear programming.
\newblock \emph{ACM SIGecom Exchanges}, 8\penalty0 (2):\penalty0 1--5, 2009.

\bibitem[Bureau~of Labor~Statistics(2017)]{bls2017wage}
U.S. Bureau~of Labor~Statistics.
\newblock Wage and job skill distributions in the national compensation survey,
  2017.
\newblock URL
  \url{https://www.bls.gov/opub/mlr/2017/article/wage-and-job-skill-distributions-in-the-national-compensation-survey.htm}.

\bibitem[Cai et~al.(2020)Cai, Gaebler, Garg, and Goel]{cai2020fair}
William Cai, Johann Gaebler, Nikhil Garg, and Sharad Goel.
\newblock Fair allocation through selective information acquisition.
\newblock In \emph{Proceedings of the AAAI/ACM Conference on AI, Ethics, and
  Society}, pages 22--28, 2020.

\bibitem[Cai et~al.(2012)Cai, Daskalakis, and Weinberg]{cai2012optimal}
Yang Cai, Constantinos Daskalakis, and S~Matthew Weinberg.
\newblock Optimal multi-dimensional mechanism design: Reducing revenue to
  welfare maximization.
\newblock In \emph{2012 IEEE 53rd Annual Symposium on Foundations of Computer
  Science}, pages 130--139. IEEE, 2012.

\bibitem[Canay et~al.(2020)Canay, Mogstad, and Mountjoy]{canay2020use}
Ivan~A Canay, Magne Mogstad, and Jack Mountjoy.
\newblock On the use of outcome tests for detecting bias in decision making.
\newblock Technical report, National Bureau of Economic Research, 2020.

\bibitem[Carath{\'e}odory(1911)]{caratheodory1911variabilitatsbereich}
Constantin Carath{\'e}odory.
\newblock {\"U}ber den variabilit{\"a}tsbereich der fourier’schen konstanten
  von positiven harmonischen funktionen.
\newblock \emph{Rendiconti Del Circolo Matematico di Palermo (1884-1940)},
  32\penalty0 (1):\penalty0 193--217, 1911.

\bibitem[Cayci et~al.(2020)Cayci, Gupta, and Eryilmaz]{cayci2020group}
Semih Cayci, Swati Gupta, and Atilla Eryilmaz.
\newblock Group-fair online allocation in continuous time.
\newblock \emph{Advances in Neural Information Processing Systems},
  33:\penalty0 13750--13761, 2020.

\bibitem[Chawla et~al.(2020)Chawla, Gergatsouli, Teng, Tzamos, and
  Zhang]{chawla2020pandora}
Shuchi Chawla, Evangelia Gergatsouli, Yifeng Teng, Christos Tzamos, and Ruimin
  Zhang.
\newblock Pandora's box with correlations: Learning and approximation.
\newblock In \emph{2020 IEEE 61st Annual Symposium on Foundations of Computer
  Science (FOCS)}, pages 1214--1225. IEEE, 2020.

\bibitem[Chen et~al.(2022{\natexlab{a}})Chen, Golrezaei, and
  Susan]{chen2022fair}
Qinyi Chen, Negin Golrezaei, and Fransisca Susan.
\newblock Fair assortment planning.
\newblock \emph{arXiv preprint arXiv:2208.07341}, 2022{\natexlab{a}}.

\bibitem[Chen et~al.(2021)Chen, Zhang, and Zhou]{chen2021fairness}
Xi~Chen, Xuan Zhang, and Yuan Zhou.
\newblock Fairness-aware online price discrimination with nonparametric demand
  models.
\newblock \emph{arXiv preprint arXiv:2111.08221}, 2021.

\bibitem[Chen et~al.(2022{\natexlab{b}})Chen, Lyu, Wang, and
  Zhou]{chen2022fairness}
Xi~Chen, Jiameng Lyu, Yining Wang, and Yuan Zhou.
\newblock Fairness-aware network revenue management with demand learning.
\newblock \emph{arXiv preprint arXiv:2207.11159}, 2022{\natexlab{b}}.

\bibitem[Chetty et~al.(2020)Chetty, Friedman, Saez, Turner, and
  Yagan]{chetty2020income}
Raj Chetty, John~N Friedman, Emmanuel Saez, Nicholas Turner, and Danny Yagan.
\newblock Income segregation and intergenerational mobility across colleges in
  the united states.
\newblock \emph{The Quarterly Journal of Economics}, 135\penalty0 (3):\penalty0
  1567--1633, 2020.

\bibitem[Cohen et~al.(2022)Cohen, Elmachtoub, and Lei]{cohen2022price}
Maxime~C Cohen, Adam~N Elmachtoub, and Xiao Lei.
\newblock Price discrimination with fairness constraints.
\newblock \emph{Management Science}, 68\penalty0 (12):\penalty0 8536--8552,
  2022.

\bibitem[Combettes and Pokutta(2023)]{combettes2023revisiting}
Cyrille~W Combettes and Sebastian Pokutta.
\newblock Revisiting the approximate carath{\'e}odory problem via the
  frank-wolfe algorithm.
\newblock \emph{Mathematical Programming}, 197\penalty0 (1):\penalty0 191--214,
  2023.

\bibitem[Correa et~al.(2021)Correa, Cristi, Duetting, and
  Norouzi-Fard]{correa2021fairness}
Jose Correa, Andres Cristi, Paul Duetting, and Ashkan Norouzi-Fard.
\newblock Fairness and bias in online selection.
\newblock In \emph{International Conference on Machine Learning}, pages
  2112--2121. PMLR, 2021.

\bibitem[Dantzig et~al.(1956)Dantzig, Ford, and Fulkerson]{dantzig1956primal}
George~Bernard Dantzig, Lester~R Ford, and Delbert~Ray Fulkerson.
\newblock A primal-dual algorithm for linear programs.
\newblock \emph{Linear inequalities and related systems}, 38:\penalty0
  171--182, 1956.

\bibitem[Delong et~al.(2023)Delong, Farhadi, Niazadeh, Sivan, and
  Udwani]{delong2023online}
Steven Delong, Alireza Farhadi, Rad Niazadeh, Balasubramanian Sivan, and Rajan
  Udwani.
\newblock Online bipartite matching with reusable resources.
\newblock \emph{Mathematics of Operations Research}, 2023.

\bibitem[DeOrtentiis et~al.(2022)DeOrtentiis, Van~Iddekinge, and
  Wanberg]{deortentiis2022different}
Philip~S DeOrtentiis, Chad~H Van~Iddekinge, and Connie~R Wanberg.
\newblock Different starting lines, different finish times: The role of social
  class in the job search process.
\newblock \emph{Journal of applied psychology}, 107\penalty0 (3):\penalty0 444,
  2022.

\bibitem[Devanur et~al.(2011)Devanur, Jain, Sivan, and
  Wilkens]{devanur2011near}
Nikhil~R Devanur, Kamal Jain, Balasubramanian Sivan, and Christopher~A Wilkens.
\newblock Near optimal online algorithms and fast approximation algorithms for
  resource allocation problems.
\newblock In \emph{Proceedings of the 12th ACM conference on Electronic
  commerce}, pages 29--38, 2011.

\bibitem[Doval(2018)]{doval2018whether}
Laura Doval.
\newblock Whether or not to open pandora's box.
\newblock \emph{Journal of Economic Theory}, 175:\penalty0 127--158, 2018.

\bibitem[Dughmi et~al.(2021)Dughmi, Hartline, Kleinberg, and
  Niazadeh]{dughmi2021bernoulli}
Shaddin Dughmi, Jason Hartline, Robert~D Kleinberg, and Rad Niazadeh.
\newblock Bernoulli factories and black-box reductions in mechanism design.
\newblock \emph{Journal of the ACM (JACM)}, 68\penalty0 (2):\penalty0 1--30,
  2021.

\bibitem[Dumitriu et~al.(2003)Dumitriu, Tetali, and
  Winkler]{dumitriu2003playing}
Ioana Dumitriu, Prasad Tetali, and Peter Winkler.
\newblock On playing golf with two balls.
\newblock \emph{SIAM Journal on Discrete Mathematics}, 16\penalty0
  (4):\penalty0 604--615, 2003.

\bibitem[Dwork et~al.(2012)Dwork, Hardt, Pitassi, Reingold, and
  Zemel]{dwork2012fairness}
Cynthia Dwork, Moritz Hardt, Toniann Pitassi, Omer Reingold, and Richard Zemel.
\newblock Fairness through awareness.
\newblock In \emph{Proceedings of the 3rd innovations in theoretical computer
  science conference}, pages 214--226, 2012.

\bibitem[Ekbatani et~al.(2023)Ekbatani, Feng, and Niazadeh]{ekbatani2023online}
Farbod Ekbatani, Yiding Feng, and Rad Niazadeh.
\newblock Online resource allocation with buyback: Optimal algorithms via
  primal-dual.
\newblock In \emph{Proceedings of the 24th ACM Conference on Economics and
  Computation}, pages 583--583, 2023.

\bibitem[Ekbatani et~al.(2025)Ekbatani, Feng, Kash, and
  Niazadeh]{ekbatani2025online}
Farbod Ekbatani, Yiding Feng, Ian Kash, and Rad Niazadeh.
\newblock Online job assignment.
\newblock \emph{arXiv preprint arXiv:2506.06893}, 2025.

\bibitem[Epstein and Ma(2024)]{epstein2024selection}
Boris Epstein and Will Ma.
\newblock Selection and ordering policies for hiring pipelines via linear
  programming.
\newblock \emph{Operations Research}, 72\penalty0 (5):\penalty0 2000--2013,
  2024.

\bibitem[Faenza et~al.(2020)Faenza, Gupta, and Zhang]{faenza2020reducing}
Yuri Faenza, Swati Gupta, and Xuan Zhang.
\newblock Reducing the feeder effect in public school admissions: A bias-aware
  analysis for targeted interventions.
\newblock \emph{arXiv preprint arXiv:2004.10846}, 2020.

\bibitem[Farajollahzadeh et~al.(2025)Farajollahzadeh, Lee, Manshadi, and
  Monachou]{farajollahzadeh2025rooney}
Setareh Farajollahzadeh, Soonbong Lee, Vahideh Manshadi, and Faidra Monachou.
\newblock Why the rooney rule fumbles: Limitations of interview-stage diversity
  interventions in labor markets.
\newblock In \emph{Proceedings of the 26th ACM Conference on Economics and
  Computation}, pages 919--919, 2025.

\bibitem[Feng and Niazadeh(2024)]{feng2024batching}
Yiding Feng and Rad Niazadeh.
\newblock Batching and optimal multistage bipartite allocations.
\newblock \emph{Management Science}, 2024.

\bibitem[Feng et~al.(2021)Feng, Niazadeh, and Saberi]{feng2021robustness}
Yiding Feng, Rad Niazadeh, and Amin Saberi.
\newblock Robustness of online inventory balancing algorithm to inventory
  shocks.
\newblock \emph{Available at SSRN 3795056}, 2021.

\bibitem[Feng et~al.(2024)Feng, Niazadeh, and Saberi]{feng2024two}
Yiding Feng, Rad Niazadeh, and Amin Saberi.
\newblock Two-stage stochastic matching and pricing with applications to ride
  hailing.
\newblock \emph{Operations Research}, 72\penalty0 (4):\penalty0 1574--1594,
  2024.

\bibitem[Finocchiaro et~al.(2021)Finocchiaro, Maio, Monachou, Patro, Raghavan,
  Stoica, and Tsirtsis]{finocchiaro2021bridging}
Jessie Finocchiaro, Roland Maio, Faidra Monachou, Gourab~K Patro, Manish
  Raghavan, Ana-Andreea Stoica, and Stratis Tsirtsis.
\newblock Bridging machine learning and mechanism design towards algorithmic
  fairness.
\newblock In \emph{Proceedings of the 2021 ACM Conference on Fairness,
  Accountability, and Transparency}, pages 489--503, 2021.

\bibitem[Freund and Hssaine(2025)]{freund2025fair}
Daniel Freund and Chamsi Hssaine.
\newblock Fair incentives for repeated engagement.
\newblock \emph{Production and Operations Management}, 34\penalty0
  (1):\penalty0 16--29, 2025.

\bibitem[Freund et~al.(2023)Freund, Lykouris, Paulson, Sturt, and
  Weng]{freund2023group}
Daniel Freund, Thodoris Lykouris, Elisabeth Paulson, Bradley Sturt, and Wentao
  Weng.
\newblock Group fairness in dynamic refugee assignment.
\newblock \emph{arXiv preprint arXiv:2301.10642}, 2023.

\bibitem[Gaddis(2013)]{gaddis2013influence}
S~Michael Gaddis.
\newblock The influence of habitus in the relationship between cultural capital
  and academic achievement.
\newblock \emph{Social science research}, 42\penalty0 (1):\penalty0 1--13,
  2013.

\bibitem[Gao and Kroer(2023)]{gao2023infinite}
Yuan Gao and Christian Kroer.
\newblock Infinite-dimensional fisher markets and tractable fair division.
\newblock \emph{Operations Research}, 71\penalty0 (2):\penalty0 688--707, 2023.

\bibitem[Garr and Jackson(2019)]{garr2019diversity}
Stacia~Sherman Garr and Carole Jackson.
\newblock Diversity \& inclusion technology: The rise of a transformative
  market.
\newblock \emph{Red Thread Research and Mercer}, 2019.

\bibitem[Gatmiry et~al.(2022)Gatmiry, Kesselheim, Singla, and
  Wang]{gatmiry2022bandit}
Khashayar Gatmiry, Thomas Kesselheim, Sahil Singla, and Yifan Wang.
\newblock Bandit algorithms for prophet inequality and pandora's box.
\newblock \emph{arXiv preprint arXiv:2211.08586}, 2022.

\bibitem[Ghosh et~al.(2023)Ghosh, Nagaraj, Jain, and
  Tambe]{ghosh2023indexability}
Abheek Ghosh, Dheeraj Nagaraj, Manish Jain, and Milind Tambe.
\newblock Indexability is not enough for whittle: Improved, near-optimal
  algorithms for restless bandits.
\newblock In \emph{Proceedings of the 2023 International Conference on
  Autonomous Agents and Multiagent Systems}, pages 1294--1302, 2023.

\bibitem[Gittins(1979)]{gittins1979bandit}
John~C Gittins.
\newblock Bandit processes and dynamic allocation indices.
\newblock \emph{Journal of the Royal Statistical Society: Series B
  (Methodological)}, 41\penalty0 (2):\penalty0 148--164, 1979.

\bibitem[Golrezaei et~al.(2014)Golrezaei, Nazerzadeh, and
  Rusmevichientong]{golrezaei2014real}
Negin Golrezaei, Hamid Nazerzadeh, and Paat Rusmevichientong.
\newblock Real-time optimization of personalized assortments.
\newblock \emph{Management Science}, 60\penalty0 (6):\penalty0 1532--1551,
  2014.

\bibitem[Golrezaei et~al.(2024)Golrezaei, Niazadeh, Patel, and
  Susan]{golrezaei2024online}
Negin Golrezaei, Rad Niazadeh, Kumar~Kshitij Patel, and Fransisca Susan.
\newblock Online combinatorial optimization with group fairness constraints.
\newblock \emph{Available at SSRN 4824251}, 2024.

\bibitem[Gong et~al.(2022)Gong, Goyal, Iyengar, Simchi-Levi, Udwani, and
  Wang]{gong2022online}
Xiao-Yue Gong, Vineet Goyal, Garud~N Iyengar, David Simchi-Levi, Rajan Udwani,
  and Shuangyu Wang.
\newblock Online assortment optimization with reusable resources.
\newblock \emph{Management Science}, 68\penalty0 (7):\penalty0 4772--4785,
  2022.

\bibitem[Gr{\"o}tschel et~al.(1981)Gr{\"o}tschel, Lov{\'a}sz, and
  Schrijver]{grotschel1981ellipsoid}
Martin Gr{\"o}tschel, L{\'a}szl{\'o} Lov{\'a}sz, and Alexander Schrijver.
\newblock The ellipsoid method and its consequences in combinatorial
  optimization.
\newblock \emph{Combinatorica}, 1:\penalty0 169--197, 1981.

\bibitem[Gr{\"o}tschel et~al.(2012)Gr{\"o}tschel, Lov{\'a}sz, and
  Schrijver]{grotschel2012geometric}
Martin Gr{\"o}tschel, L{\'a}szl{\'o} Lov{\'a}sz, and Alexander Schrijver.
\newblock \emph{Geometric algorithms and combinatorial optimization}, volume~2.
\newblock Springer Science \& Business Media, 2012.

\bibitem[Guha et~al.(2010)Guha, Munagala, and Shi]{guha2010approximation}
Sudipto Guha, Kamesh Munagala, and Peng Shi.
\newblock Approximation algorithms for restless bandit problems.
\newblock \emph{Journal of the ACM (JACM)}, 58\penalty0 (1):\penalty0 1--50,
  2010.

\bibitem[Gupta et~al.(2019)Gupta, Jiang, Scully, and
  Singla]{gupta2019markovian}
Anupam Gupta, Haotian Jiang, Ziv Scully, and Sahil Singla.
\newblock The markovian price of information.
\newblock In \emph{International Conference on Integer Programming and
  Combinatorial Optimization}, pages 233--246. Springer, 2019.

\bibitem[Gupta and Kamble(2021)]{gupta2021individual}
Swati Gupta and Vijay Kamble.
\newblock Individual fairness in hindsight.
\newblock \emph{J. Mach. Learn. Res.}, 22\penalty0 (144):\penalty0 1--35, 2021.

\bibitem[Hawkins(2003)]{hawkins2003langrangian}
Jeffrey~Thomas Hawkins.
\newblock \emph{A Langrangian decomposition approach to weakly coupled dynamic
  optimization problems and its applications}.
\newblock PhD thesis, Massachusetts Institute of Technology, 2003.

\bibitem[Ho and Xiang(2020)]{ho2020affirmative}
Daniel~E Ho and Alice Xiang.
\newblock Affirmative algorithms: The legal grounds for fairness as awareness.
\newblock \emph{arXiv e-prints}, pages arXiv--2012, 2020.

\bibitem[Huang et~al.(2019)Huang, Tang, Wu, and Zhang]{huang2019online}
Zhiyi Huang, Zhihao~Gavin Tang, Xiaowei Wu, and Yuhao Zhang.
\newblock Online vertex-weighted bipartite matching: Beating 1-1/e with random
  arrivals.
\newblock \emph{ACM Transactions on Algorithms (TALG)}, 15\penalty0
  (3):\penalty0 1--15, 2019.

\bibitem[Jaillet et~al.(2024)Jaillet, Podimata, and Zhou]{jaillet2024grace}
Patrick Jaillet, Chara Podimata, and Zijie Zhou.
\newblock Grace period is all you need: Individual fairness without revenue
  loss in revenue management.
\newblock \emph{arXiv preprint arXiv:2402.08533}, 2024.

\bibitem[Jiang et~al.(2019)Jiang, Wang, and Zhang]{jiang2019achieving}
J~Jiang, S~Wang, and J~Zhang.
\newblock Achieving high individual service-levels without safety stock.
\newblock \emph{Optimal Rationing Policy of Pooled Resources}, 2019.

\bibitem[Kaneko and Nakamura(1979)]{kaneko1979nash}
Mamoru Kaneko and Kenjiro Nakamura.
\newblock The nash social welfare function.
\newblock \emph{Econometrica: Journal of the Econometric Society}, pages
  423--435, 1979.

\bibitem[Karp et~al.(1990)Karp, Vazirani, and Vazirani]{karp1990optimal}
Richard~M Karp, Umesh~V Vazirani, and Vijay~V Vazirani.
\newblock An optimal algorithm for on-line bipartite matching.
\newblock In \emph{Proceedings of the twenty-second annual ACM symposium on
  Theory of computing}, pages 352--358, 1990.

\bibitem[Kash et~al.(2014)Kash, Procaccia, and Shah]{kash2014no}
Ian Kash, Ariel~D Procaccia, and Nisarg Shah.
\newblock No agent left behind: Dynamic fair division of multiple resources.
\newblock \emph{Journal of Artificial Intelligence Research}, 51:\penalty0
  579--603, 2014.

\bibitem[Kearns et~al.(2018)Kearns, Neel, Roth, and Wu]{kearns2018preventing}
Michael Kearns, Seth Neel, Aaron Roth, and Zhiwei~Steven Wu.
\newblock Preventing fairness gerrymandering: Auditing and learning for
  subgroup fairness.
\newblock In \emph{International conference on machine learning}, pages
  2564--2572. PMLR, 2018.

\bibitem[Khachiyan(1979)]{khachiyan1979polynomial}
Leonid~Genrikhovich Khachiyan.
\newblock A polynomial algorithm in linear programming.
\newblock In \emph{Doklady Akademii Nauk}, volume 244, pages 1093--1096.
  Russian Academy of Sciences, 1979.

\bibitem[Kleinberg and Raghavan(2018)]{kleinberg2018selection}
Jon Kleinberg and Manish Raghavan.
\newblock Selection problems in the presence of implicit bias.
\newblock \emph{arXiv preprint arXiv:1801.03533}, 2018.

\bibitem[Kleinberg et~al.(2018)Kleinberg, Ludwig, Mullainathan, and
  Rambachan]{kleinberg2018algorithmic}
Jon Kleinberg, Jens Ludwig, Sendhil Mullainathan, and Ashesh Rambachan.
\newblock Algorithmic fairness.
\newblock In \emph{Aea papers and proceedings}, volume 108, pages 22--27, 2018.

\bibitem[Kleinberg et~al.(2016)Kleinberg, Waggoner, and
  Weyl]{kleinberg2016descending}
Robert Kleinberg, Bo~Waggoner, and E~Glen Weyl.
\newblock Descending price coordinates approximately efficient search.
\newblock In \emph{Extended abstract in the Proceedings of the 17th ACM
  Conference on Electronic Commerce (EC’16)}, 2016.

\bibitem[Kleinberg and Slivkins(2017)]{kleinberg2017tutorial}
Robert~D Kleinberg and Aleksandrs Slivkins.
\newblock Tutorial: incentivizing and coordinating exploration.
\newblock In \emph{Proceedings of the 18th ACM conference on economics and
  computation, Cambridge}, 2017.

\bibitem[Komiyama and Noda(2024)]{komiyama2024statistical}
Junpei Komiyama and Shunya Noda.
\newblock On statistical discrimination as a failure of social learning: A
  multiarmed bandit approach.
\newblock \emph{Management Science}, 2024.

\bibitem[Kumar and Kleinberg(2000)]{kumar2000fairness}
Amit Kumar and Jon Kleinberg.
\newblock Fairness measures for resource allocation.
\newblock In \emph{Proceedings 41st annual symposium on foundations of computer
  science}, pages 75--85. IEEE, 2000.

\bibitem[Li et~al.(2025)Li, Raymond, and Bergman]{li2020hiring}
Danielle Li, Lindsey Raymond, and Peter Bergman.
\newblock Hiring as exploration.
\newblock \emph{Review of Economic Studies}, page rdaf040, 2025.

\bibitem[Li and Varakantham(2022)]{li2022efficient}
Dexun Li and Pradeep Varakantham.
\newblock Efficient resource allocation with fairness constraints in restless
  multi-armed bandits.
\newblock In \emph{Uncertainty in Artificial Intelligence}, pages 1158--1167.
  PMLR, 2022.

\bibitem[Li et~al.(2019)Li, Liu, and Ji]{li2019combinatorial}
Fengjiao Li, Jia Liu, and Bo~Ji.
\newblock Combinatorial sleeping bandits with fairness constraints.
\newblock \emph{IEEE Transactions on Network Science and Engineering},
  7\penalty0 (3):\penalty0 1799--1813, 2019.

\bibitem[Liao et~al.(2022)Liao, Gao, and Kroer]{liao2022nonstationary}
Luofeng Liao, Yuan Gao, and Christian Kroer.
\newblock Nonstationary dual averaging and online fair allocation.
\newblock \emph{Advances in Neural Information Processing Systems},
  35:\penalty0 37159--37172, 2022.

\bibitem[Liebkind et~al.(2016)Liebkind, Larja, and Brylka]{liebkind2016ethnic}
Karmela Liebkind, Liisa Larja, and Asteria Brylka.
\newblock Ethnic and gender discrimination in recruitment: Experimental
  evidence from finland.
\newblock \emph{Journal of Social and Political Psychology}, 4\penalty0
  (1):\penalty0 403--426, 2016.

\bibitem[Lien et~al.(2014)Lien, Iravani, and Smilowitz]{lien2014sequential}
Robert~W Lien, Seyed~MR Iravani, and Karen~R Smilowitz.
\newblock Sequential resource allocation for nonprofit operations.
\newblock \emph{Operations Research}, 62\penalty0 (2):\penalty0 301--317, 2014.

\bibitem[Liu et~al.(2018)Liu, Dean, Rolf, Simchowitz, and
  Hardt]{liu2018delayed}
Lydia~T Liu, Sarah Dean, Esther Rolf, Max Simchowitz, and Moritz Hardt.
\newblock Delayed impact of fair machine learning.
\newblock In \emph{International Conference on Machine Learning}, pages
  3150--3158. PMLR, 2018.

\bibitem[Lu et~al.(2023)Lu, Sahin, and Wang]{lu2023simple}
Wentao Lu, Ozge Sahin, and Ruxian Wang.
\newblock A simple way towards fair assortment planning: Algorithms and welfare
  implications.
\newblock \emph{Available at SSRN 4514495}, 2023.

\bibitem[Lyu et~al.(2019)Lyu, Cheung, Chou, Teo, Zheng, and
  Zhong]{lyu2019capacity}
Guodong Lyu, Wang-Chi Cheung, Mabel~C Chou, Chung-Piaw Teo, Zhichao Zheng, and
  Yuanguang Zhong.
\newblock Capacity allocation in flexible production networks: Theory and
  applications.
\newblock \emph{Management Science}, 65\penalty0 (11):\penalty0 5091--5109,
  2019.

\bibitem[Ma and Simchi-Levi(2020)]{ma2020algorithms}
Will Ma and David Simchi-Levi.
\newblock Algorithms for online matching, assortment, and pricing with tight
  weight-dependent competitive ratios.
\newblock \emph{Operations Research}, 68\penalty0 (6):\penalty0 1787--1803,
  2020.

\bibitem[Ma et~al.(2022)Ma, Xu, and Xu]{ma2022group}
Will Ma, Pan Xu, and Yifan Xu.
\newblock Group-level fairness maximization in online bipartite matching.
\newblock In \emph{Proceedings of the 21st International Conference on
  Autonomous Agents and Multiagent Systems}, pages 1687--1689, 2022.

\bibitem[Manshadi et~al.(2021)Manshadi, Niazadeh, and
  Rodilitz]{manshadi2021fair}
Vahideh Manshadi, Rad Niazadeh, and Scott Rodilitz.
\newblock Fair dynamic rationing.
\newblock In \emph{Proceedings of the 22nd ACM Conference on Economics and
  Computation}, pages 694--695, 2021.

\bibitem[Mehta et~al.(2007)Mehta, Saberi, Vazirani, and
  Vazirani]{mehta2007adwords}
Aranyak Mehta, Amin Saberi, Umesh Vazirani, and Vijay Vazirani.
\newblock Adwords and generalized online matching.
\newblock \emph{Journal of the ACM (JACM)}, 54\penalty0 (5):\penalty0 22--es,
  2007.

\bibitem[Mehta et~al.(2013)]{mehta2013online}
Aranyak Mehta et~al.
\newblock Online matching and ad allocation.
\newblock \emph{Foundations and Trends{\textregistered} in Theoretical Computer
  Science}, 8\penalty0 (4):\penalty0 265--368, 2013.

\bibitem[Mirrokni et~al.(2017)Mirrokni, Leme, Vladu, and
  Wong]{mirrokni2017tight}
Vahab Mirrokni, Renato~Paes Leme, Adrian Vladu, and Sam Chiu-wai Wong.
\newblock Tight bounds for approximate carath{\'e}odory and beyond.
\newblock In \emph{International Conference on Machine Learning}, pages
  2440--2448. PMLR, 2017.

\bibitem[Nashed et~al.(2023)Nashed, Svegliato, and
  Blodgett]{nashed2023fairness}
Samer~B Nashed, Justin Svegliato, and Su~Lin Blodgett.
\newblock Fairness and sequential decision making: Limits, lessons, and
  opportunities.
\newblock \emph{arXiv preprint arXiv:2301.05753}, 2023.

\bibitem[Niazadeh et~al.(2023)Niazadeh, Golrezaei, Wang, Susan, and
  Badanidiyuru]{niazadeh2023online}
Rad Niazadeh, Negin Golrezaei, Joshua Wang, Fransisca Susan, and Ashwinkumar
  Badanidiyuru.
\newblock Online learning via offline greedy algorithms: Applications in market
  design and optimization.
\newblock \emph{Management Science}, 69\penalty0 (7):\penalty0 3797--3817,
  2023.

\bibitem[Ni{\~n}o-Mora(2007)]{nino2007dynamic}
Jos{\'e} Ni{\~n}o-Mora.
\newblock Dynamic priority allocation via restless bandit marginal productivity
  indices.
\newblock \emph{Top}, 15:\penalty0 161--198, 2007.

\bibitem[Ni{\~n}o-Mora(2023)]{nino2023markovian}
Jos{\'e} Ni{\~n}o-Mora.
\newblock Markovian restless bandits and index policies: A review.
\newblock \emph{Mathematics}, 11\penalty0 (7):\penalty0 1639, 2023.

\bibitem[Papadimitriou et~al.(2021)Papadimitriou, Pollner, Saberi, and
  Wajc]{papadimitriou2021online}
Christos Papadimitriou, Tristan Pollner, Amin Saberi, and David Wajc.
\newblock Online stochastic max-weight bipartite matching: Beyond prophet
  inequalities.
\newblock In \emph{Proceedings of the 22nd ACM Conference on Economics and
  Computation}, pages 763--764, 2021.

\bibitem[Papadimitriou and Tsitsiklis(1987)]{papadimitriou1987complexity}
Christos~H Papadimitriou and John~N Tsitsiklis.
\newblock The complexity of markov decision processes.
\newblock \emph{Mathematics of operations research}, 12\penalty0 (3):\penalty0
  441--450, 1987.

\bibitem[Papadimitriou and Tsitsiklis(1999)]{papadimitriou1999complexity}
Christos~H Papadimitriou and John~N Tsitsiklis.
\newblock The complexity of optimal queuing network control.
\newblock \emph{Mathematics of Operations Research}, 24\penalty0 (2):\penalty0
  293--305, 1999.

\bibitem[Peysakhovich et~al.(2023)Peysakhovich, Kroer, and
  Usunier]{peysakhovich2023implementing}
Alexander Peysakhovich, Christian Kroer, and Nicolas Usunier.
\newblock Implementing fairness constraints in markets using taxes and
  subsidies.
\newblock In \emph{Proceedings of the 2023 ACM Conference on Fairness,
  Accountability, and Transparency}, pages 916--930, 2023.

\bibitem[Plotkin et~al.(1995)Plotkin, Shmoys, and Tardos]{plotkin1995fast}
Serge~A Plotkin, David~B Shmoys, and {\'E}va Tardos.
\newblock Fast approximation algorithms for fractional packing and covering
  problems.
\newblock \emph{Mathematics of Operations Research}, 20\penalty0 (2):\penalty0
  257--301, 1995.

\bibitem[Quillian et~al.(2017)Quillian, Pager, Hexel, and
  Midtb{\o}en]{quillian2017meta}
Lincoln Quillian, Devah Pager, Ole Hexel, and Arnfinn~H Midtb{\o}en.
\newblock Meta-analysis of field experiments shows no change in racial
  discrimination in hiring over time.
\newblock \emph{Proceedings of the National Academy of Sciences}, 114\penalty0
  (41):\penalty0 10870--10875, 2017.

\bibitem[Resnick(1992)]{resnick1992adventures}
Sidney~I Resnick.
\newblock \emph{Adventures in stochastic processes}.
\newblock Springer Science \& Business Media, 1992.

\bibitem[Roughgarden(2010)]{roughgarden2010algorithmic}
Tim Roughgarden.
\newblock Algorithmic game theory.
\newblock \emph{Communications of the ACM}, 53\penalty0 (7):\penalty0 78--86,
  2010.

\bibitem[Ryan and Tippins(2004)]{ryan2004attracting}
Ann~Marie Ryan and Nancy~T Tippins.
\newblock Attracting and selecting: What psychological research tells us.
\newblock \emph{Human Resource Management: Published in Cooperation with the
  School of Business Administration, The University of Michigan and in alliance
  with the Society of Human Resources Management}, 43\penalty0 (4):\penalty0
  305--318, 2004.

\bibitem[Salem and Gupta(2024)]{salem2024secretary}
Jad Salem and Swati Gupta.
\newblock Secretary problems with biased evaluations using partial ordinal
  information.
\newblock \emph{Management Science}, 70\penalty0 (8):\penalty0 5337--5366,
  2024.

\bibitem[Schumann et~al.(2022)Schumann, Lang, Mattei, and
  Dickerson]{schumann2022group}
Candice Schumann, Zhi Lang, Nicholas Mattei, and John~P Dickerson.
\newblock Group fairness in bandits with biased feedback.
\newblock In \emph{21st International Conference on Autonomous Agents and
  Multiagent Systems, AAMAS 2022}, 2022.

\bibitem[Segev and Singla(2021)]{segev2023efficient}
Danny Segev and Sahil Singla.
\newblock Efficient approximation schemes for stochastic probing and prophet
  problems.
\newblock In \emph{Proceedings of the 22nd ACM Conference on Economics and
  Computation}, pages 793--794, 2021.

\bibitem[Sinclair et~al.(2020)Sinclair, Jain, Banerjee, and
  Yu]{sinclair2020sequential}
Sean~R Sinclair, Gauri Jain, Siddhartha Banerjee, and Christina~Lee Yu.
\newblock Sequential fair allocation of limited resources under stochastic
  demands.
\newblock \emph{arXiv preprint arXiv:2011.14382}, 2020.

\bibitem[Singla(2018)]{singla2018price}
Sahil Singla.
\newblock The price of information in combinatorial optimization.
\newblock In \emph{Proceedings of the twenty-ninth annual ACM-SIAM symposium on
  discrete algorithms}, pages 2523--2532. SIAM, 2018.

\bibitem[Sion(1958)]{sion1958general}
Maurice Sion.
\newblock On general minimax theorems.
\newblock \emph{Pacific Journal of mathematics}, 8\penalty0 (1):\penalty0
  171--176, 1958.

\bibitem[Tang and Yuan(2023)]{tang2023beyond}
Shaojie Tang and Jing Yuan.
\newblock Beyond submodularity: a unified framework of randomized set selection
  with group fairness constraints.
\newblock \emph{Journal of Combinatorial Optimization}, 45\penalty0
  (4):\penalty0 102, 2023.

\bibitem[Udwani(2024)]{udwani2024stochastic}
Rajan Udwani.
\newblock When stochastic rewards reduce to deterministic rewards in online
  bipartite matching.
\newblock In \emph{2024 Symposium on Simplicity in Algorithms (SOSA)}, pages
  321--330. SIAM, 2024.

\bibitem[Vishnoi(2021)]{vishnoi2021algorithms}
Nisheeth~K Vishnoi.
\newblock \emph{Algorithms for convex optimization}.
\newblock Cambridge University Press, 2021.

\bibitem[Walsh(2011)]{walsh2011online}
Toby Walsh.
\newblock Online cake cutting.
\newblock In \emph{International Conference on Algorithmic Decision Theory},
  pages 292--305. Springer, 2011.

\bibitem[Wang et~al.(2023)Wang, Xu, Taneja, and Tambe]{wang2023optimistic}
Kai Wang, Lily Xu, Aparna Taneja, and Milind Tambe.
\newblock Optimistic whittle index policy: Online learning for restless
  bandits.
\newblock In \emph{Proceedings of the AAAI Conference on Artificial
  Intelligence}, volume~37, pages 10131--10139, 2023.

\bibitem[Wang et~al.(2024)Wang, Xiong, and Li]{wang2024online}
Shufan Wang, Guojun Xiong, and Jian Li.
\newblock Online restless multi-armed bandits with long-term fairness
  constraints.
\newblock In \emph{Proceedings of the AAAI Conference on Artificial
  Intelligence}, volume~38, pages 15616--15624, 2024.

\bibitem[Weber and Weiss(1990)]{weber1990index}
Richard~R Weber and Gideon Weiss.
\newblock On an index policy for restless bandits.
\newblock \emph{Journal of applied probability}, 27\penalty0 (3):\penalty0
  637--648, 1990.

\bibitem[Weitzman(1979)]{weitzman1979optimal}
Martin~L Weitzman.
\newblock Optimal search for the best alternative.
\newblock \emph{Econometrica: Journal of the Econometric Society}, pages
  641--654, 1979.

\bibitem[Welch(1976)]{welch1976employment}
Finis Welch.
\newblock Employment quotas for minorities.
\newblock \emph{Journal of Political Economy}, 84\penalty0 (4, Part
  2):\penalty0 S105--S141, 1976.

\bibitem[Wenneras and Wold(2010)]{wenneras2010nepotism}
Christine Wenneras and Agnes Wold.
\newblock Nepotism and sexism in peer-review.
\newblock In \emph{Women, science, and technology}, pages 64--70. Routledge,
  2010.

\bibitem[Whittle(1988)]{whittle1988restless}
Peter Whittle.
\newblock Restless bandits: Activity allocation in a changing world.
\newblock \emph{Journal of applied probability}, 25\penalty0 (A):\penalty0
  287--298, 1988.

\bibitem[Zhang and Frazier(2021)]{zhang2021restless}
Xiangyu Zhang and Peter~I Frazier.
\newblock Restless bandits with many arms: Beating the central limit theorem.
\newblock \emph{arXiv preprint arXiv:2107.11911}, 2021.

\bibitem[Zhong et~al.(2018)Zhong, Zheng, Chou, and Teo]{zhong2018resource}
Yuanguang Zhong, Zhichao Zheng, Mabel~C Chou, and Chung-Piaw Teo.
\newblock Resource pooling and allocation policies to deliver differentiated
  service.
\newblock \emph{Management Science}, 64\penalty0 (4):\penalty0 1555--1573,
  2018.

\end{thebibliography}


\newpage
\ECSwitch
\ECDisclaimer
\renewcommand{\theHchapter}{A\arabic{chapter}}
\renewcommand{\theHsection}{A\arabic{section}}
\section{Further related work}
\label{sec:further-related-work}

\paragraph{\revcolorm{Fairness, sequential decision making, \& information acquisition.}}

Fairness has been extensively studied in machine learning problems such as classification; see \citet{barocas2017fairness} for an overview. Fairness in offline allocation with limited resources has also been considered, starting with the seminal work of \citet{baruah1993proportionate}, \citet{kumar2000fairness}, and \citet{bertsimas2012efficiency}. More recently, attention has turned to studying dynamic and operational considerations in fairness, such as sequential decisions and costly information acquisition; see surveys by \citet{finocchiaro2021bridging} and \citet{nashed2023fairness}. To the best of our knowledge, our work is the first to study socially aware constraints in sequential search under costly inspection.

Several related strands are still worth noting. A number of papers study fairness in online selection problems for indivisible goods, such as \revcolorm{secretary problems and prophet inequalities~\citep{buchbinder2009secretary,correa2021fairness,arsenis2022individual,salem2024secretary}.} Others focus on dynamic allocation of divisible goods, designing policies to maximize egalitarian welfare~\citep{lien2014sequential,sinclair2020sequential,manshadi2021fair}. Fair division and market equilibrium problems (e.g., notions such as envy-freeness) have also been explored in dynamic settings~\citep{walsh2011online,kash2014no,aleksandrov2015online,peysakhovich2023implementing,gao2023infinite}. \revcolorm{There are several other works that consider fairness in modern variants of online resource allocation, for example the work of \citet{liao2022nonstationary} analyzing fairness in allocating sequentially arriving items under non-stationarity, and \citet{bateni2022fair} addressing fairness-efficiency trade-offs in online resource allocation with applications to online advertising. The interplay between information acquisition and fairness/discrimination has also been studied. The closest to us is the work of \citet{cai2020fair}, that revolves around achieving fairness through information acquisition. However, the setting differs from ours as it is concerned with targeted screening to improve information quality for a subset of individuals. Another  conceptually related work---especially to our numerical study of long-term utilities and downstream outcomes--is \citet{baek2023feedback} that highlights feedback loops exacerbating disparities in hiring, by showing that even a small initial difference in how well a firm can evaluate candidates from different groups can snowball into long-lasting hiring disparities due to a feedback loop.}

Finally, another line of research that is conceptually related to us investigates fairness in bandit settings, \revcolorm{introducing concepts such as fairness in exploration and hindsight~\citep{cayci2020group,baek2021fair,schumann2022group,li2020hiring,gupta2021individual}. Another example is the work of \cite{komiyama2024statistical} that take a multi-armed bandit approach, showing how temporary affirmative-action subsidies when firms lack data about minorities can mitigate persistent statistical discrimination.}

\smallskip
\paragraph{Fairness in operations and revenue management.}
There has been a growing literature studying various notions of fairness in different operational settings. For example, \citet{cohen2022price} explores the challenges and trade-offs involved in implementing price fairness constraints for different customer groups in the context of discriminatory pricing. Other examples study fairness-aware online price discrimination~\citep{chen2021fairness}, fairness criteria in network revenue management with demand learning~\citep{chen2021fairness,chen2022fairness}, fairness in assortment planning~\citep{lu2023simple,chen2022fair}, fair and dynamic rationing of scarce resources~\citep{manshadi2021fair}, group fairness in stochastic matching~\citep{ma2022group}, group fairness in offline and online combinatorial optimization~\citep{asadpour2023sequential,golrezaei2024online,tang2023beyond,niazadeh2023online}, refugee resettlement~\citep{freund2023group}, \revcolorm{fair incentives in repeated engagements~\citep{freund2025fair}, individual fairness in revenue management~\citep{arsenis2022individual,jaillet2024grace}, and limitations of Rooney rules~\citep{farajollahzadeh2025rooney} in contexts such as employment or admission.}

\smallskip
\paragraph{\revcolorm{Extensions of sequential search: Pandora's box \& beyond.}} Moving beyond fairness considerations, our work contributes to the rich literature on sequential search. 
Building on the seminal work of \citet{weitzman1979optimal}, numerous papers study sequential search in richer settings different from ours. Recent work in this direction includes studying settings with the option of selecting a box without inspection \citep{beyhaghi2019pandora, alaei2021revenue,doval2018whether,aouad2020pandora}, uncertainty in the availability of a box for inspection or selection \citep{brown2022sequential}, \revcolorm{ correlated or unknown reward distributions that should be learned or estimated via samples~\citep{chawla2020pandora, gatmiry2022bandit,agarwal2024semi}, and designing prior-independent search policies in the absence of prior information about candidate values~\citep{brown2025data}.} 
Modeling the stateful search as JMS enables us to capture several well-motivated variants of the Pandora's box problem while retaining the simple structure of the optimal policy, which is not necessarily the case in most of these other extensions. \revcolor{Another interesting extension of Pandora's box is when the ordering under which the boxes should be inspected is restricted by a given partial ordering, which is a model studied in \cite{boodaghians2020pandora}. Despite this restriction, they show a polynomial-time computable index-based optimal policy for this extension.} \revcolorm{Beyond Pandora's box model, other models for sequential search and hiring have been studied that are conceptually related, e.g., \cite{epstein2024selection} consider a variant of the stochastic probing problem to model the ordering and selection in hiring pipelines.}

\revcolor{
\smallskip
\paragraph{Restless bandits and weakly-coupled MDPs.}
Also related to us, mostly in terms of the philosophy in designing algorithms, is the growing literature on Restless multi-armed bandits (RMAB) and weakly coupled MDPs. In the RMAB setting, even the arms that are not played may evolve according to potentially different transition probability kernels. In its most general form, finding an optimal policy is computationally hard. More specifically, \cite{papadimitriou1999complexity} prove that even under known transition kernels and infinite-horizon average reward, finding the optimal policy is PSPACE-hard. In response, several works aim to provide approximate optimality results. The technique of relaxing the ex-post constraints of the problem to hold in expectation over the horizon, and then Lagrangifying this relaxed constraint into the objective is predominantly employed to obtain approximate or heuristic policies in this literature (and also in the literature on weakly coupled MDPs, e.g., \cite{hawkins2003langrangian,adelman2008relaxations}). This technique, at a high-level, has similarities to how we address ex-ante constraints in both the Pandora's box setting and JMS. The most prevalent heuristic related to this technique is an index policy, called the \emph{Whittle-index}, proposed by \cite{whittle1988restless}, which is only well-defined under certain indexability conditions and may still be intractable to compute in some cases \citep{nino2007dynamic}. Nevertheless, \cite{weber1990index} show the asymptotic optimality of the Whittle-index policy in infinite-horizon average reward under certain assumptions. Other works include showing an unbounded gap for the Whittle-index under finite-horizon and discounted infinite-horizon settings—even if the other assumptions in \cite{weber1990index} hold—and instead providing alternative LP-based heuristics with sublinear regret bounds \citep{zhang2021restless,ghosh2023indexability}; online learning of the Whittle-index when transition kernels are unknown \citep{wang2023optimistic}; and recent studies on incorporating fairness notions, such as a minimum fraction of times to pull each arm \citep{li2019combinatorial,wang2024online}---for which they show sublinear regret---or bounding the maximum time since the last pull of any arm \citep{li2022efficient}. The literature on RMAB is massive and growing fast, and we refer the interested reader to this recent survey \cite{nino2023markovian}.

While at the surface level there seem to be some similarities between our framework and the RMAB/Weakly-coupled MDP setting, the two settings are both semantically and mathematically quite different. In fact, there are some fundamental distinctions between our Pandora's box and JMS models and the RMAB problem. For details, see the discussion in \Cref{app:discussion-rmab}.
}

\revcolor{
\smallskip
\paragraph{Primal-dual methods for learning in games and applications in fairness.} The idea of using primal-dual method and online learning to solve games and linear programs goes to back to the seminal work of \cite{plotkin1995fast}  for solving fractional packing and covering LPs, with its roots in the classic work of \cite{blackwell1956analog} and \cite{dantzig1956primal}. In this framework, the problem, after Lagrangifying the constraints, can be reinterpreted as a min-max game  played between the primal player (who proposes a feasible solution) and the dual player (who picks dual variables corresponding to the constraints). See \cite{arora2012multiplicative} for a survey. This technique has also manifested in various forms in the literature and has given rise to iterative primal-dual algorithms  that rely on different first-order methods, e.g., \cite{lyu2019capacity, jiang2019achieving}, or Blackwell approachability, e.g., \cite{zhong2018resource} and \cite{niazadeh2023online}. More recently, a similar approach has proven to be highly useful for near-optimally solving constrained online linear and convex programming problems~\citep{agrawal2014dynamic, agrawal2014fast,balseiro2023best}.

In terms of applications, such methods have been utilized in various applications in operations research and computer science, including bandit problems with knapsack constraints~\citep{badanidiyuru2018bandits}, resource allocation problems~\citep{devanur2011near,balseiro2023best,agrawal2014dynamic,agrawal2014fast}, inventory pooling and capacity allocation problems with service level constraints~\citep{lyu2019capacity,jiang2019achieving}, packing and covering problems~\citep{plotkin1995fast}, dynamic matching for refugee resettlement~\citep{bansak2024dynamic}, and classification problems or combinatorial optimization problems with subgroup fairness constraints~\citep{kearns2018preventing,golrezaei2024online}. Lastly, the primal-dual method has also proved to be essential in designing online resource allocation and matching algorithms under adversarial arrival~\citep{karp1990optimal,mehta2007adwords,buchbinder2009online,huang2019online,feng2024batching,ekbatani2025online} and in different applications of these models~\citep{golrezaei2014real,ma2020algorithms,gong2022online,feng2021robustness,delong2023online,udwani2024stochastic,ekbatani2023online,feng2024two}. See \cite{mehta2013online} for a comprehensive survey on this topic. 

We highlight that our primal-dual approach in \Cref{sec:general}, which leads to a near-optimal near-feasible solution, at a high-level, is built on this standard framework. However, there are important distinctions that make our algorithmic results novel. See the discussion in  \Cref{sec:discussion-primal-dual} for more details. 

\smallskip
\paragraph{Exact and approximate algorithmic Carathéodory.} The ``Carathéodory problem'' is a fundamental problem in geometry and polyhedral optimization, which dates back to the classic work of \cite{caratheodory1911variabilitatsbereich}. The basic proof of the Carathéodory theorem is constructive and based on an algorithm that has access to various types of oracles (separation oracle, optimization oracle, validity oracle, and membership oracle in each face of the polytope). See \cite{grotschel1981ellipsoid,grotschel2012geometric,vishnoi2021algorithms} for details. More recently, this fundamental problem has been revisited through the lens of  approximations, and several fast algorithms are designed by leveraging tools from online adversarial learning that provide approximate solutions~\citep{mirrokni2017tight,combettes2023revisiting}. We remark that our algorithm (\Cref{alg:Ellipsoid Exact Caratheodory}) in \Cref{app:mutiple-affine} is essentially an algorithm for \emph{exact} algorithmic Carathéodory problem that \emph{only} uses (linear) optimization oracle. Compared to earlier methods mentioned, our method has the advantage of being simpler, using linear optimization oracle more efficiently, and not requiring any other oracle access to the polytope.

}

\smallskip
\paragraph{Computational aspects of Bayesian sequential decision making.} To the best of our knowledge, our paper is the first to offer an FPTAS for JMS with on-average constraints. However, related computational settings have been explored in the literature. \citet{segev2023efficient} present a framework for efficiently approximating fundamental stochastic combinatorial optimization problems, such as stochastic probing, improving approaches for various non-adaptive settings. \citet{aouad2020pandora} explore a generalization of the Pandora's box problem with sequential inspection, balancing information acquisition and cost efficiency, and offering near-optimal approximate schemes. Additionally, \cite{anari2019nearly} study Bayesian online allocations with laminar matroid constraints and provide an FPTAS when the matroid's depth is constant. Similar computational questions about the tractability of various Markov decision processes in Bayesian allocations have been studied in \cite{papadimitriou1987complexity,papadimitriou2021online}. Conceptually, there are some similarities between our approach and the framework in \cite{liu2018delayed}, which examines the delayed impact of fair machine learning, though the two settings are distinct and not directly comparable.

\revcolor{
\subsection{Discussion on the Differences Between our Pandora/JMS Framework and Restless Bandits}
\label{app:discussion-rmab}

Although our work employs index-based policies and uses Lagrangification technique to incorporate ex-ante constraints, our framework (in both the Pandora’s box and JMS models) differs fundamentally from the Restless Multi-Armed Bandits (RMAB) framework. Below, we highlight the key distinctions:

\begin{itemize}
\smallskip
    \item \textbf{Model primitives:} the two models are quite different in important ways, which drastically changes both the computational landscapes of these models, as well as the algorithm design principles: 
    \begin{enumerate}[label=(\roman*)]
        \item \textbf{Non-restless vs. restless arms:} In our Pandora’s box and JMS settings, an arm (or Markov chain) \emph{does not} evolve if we do not pull (inspect) it. The state changes occur only when we actively decide to inspect an arm and the global state is fully known at each step. By contrast, in RMAB problems, each arm’s state continues to evolve (restlessly) even if it is not selected, which complicates the state space or leads to partial observability.
        \item \textbf{Endogenous stochastic horizon vs. fixed/infinite exogenous horizon:} RMAB formulations consider an \emph{exogenous} and \emph{fixed} finite horizon $T$ or an infinite horizon with discounting. Our JMS and Pandora’s box settings, on the other hand, have a \emph{stochastic} horizon by stopping once a Markov chain reaches a terminal state (or once the set of Markov chains at terminal states satisfies an ex-post matroid constraint, such as a capacity $k$). As a result, the horizon is \emph{endogenous} to the algorithm.
        \item \textbf{Different computational complexities:} General RMAB problems are known to be PSPACE-hard~\citep{papadimitriou1999complexity,papadimitriou1987complexity}; a well-known index-based policy in this setting based on \emph{Whittle indices} is often just a (clever) heuristic, which can be  optimal (or even near-optimal or approximately optimal) only in special cases or under strong assumptions and asymptotic scenarios. By contrast, under mild assumptions, our JMS and Pandora’s box models admit a polynomial-time optimal index-based solution ( a variant of the Gittins index-based policy) under matroid ex-post constraints on termination (see \Cref{sec:optimal-JMS-arbitrary} and \Cref{app:JMS-general} for details; also see \citep{dumitriu2003playing}). Even after we add some ex-ante constraints on visit frequencies of states (which is an extra restriction on top of the ex-post termination constraint and the constraint to pull one arm at a time), we still maintain polynomial-time computability via dual-adjustments and specialized tie-breaking (\Cref{sec:pandora}), Carathéodory-type decompositions (\Cref{sec:caratheodory}), or FPTAS algorithms in case of convex constraints (\Cref{sec:general}).           
    \end{enumerate}
    We re-iterate that the key difference in computational tractability, together with quite different mathematical semantics of the two models as described above, underscore how the two frameworks, though superficially similar in their use of “index-like” ideas, are inherently different (in fact, JMS/Pandora settings are more like \emph{Bayesian bandits} setting in terms of algorithmic landscapes than RMAB).

    \smallskip
    \item \textbf{Nature of the Lagrangian approach:} while there are superficial similarities between the way that the Lagrangification techniques are used in our framework (for incorporating ex-ante constraints) and in the RMAB framework, in fact they are quite distinct and serve completely different purposes:
    \begin{enumerate}[label=(\roman*)]
        \item \textbf{Exact ex-post constraint vs. relaxed constraint}: In RMAB framework the ex-post constraint of “pull at most one arm each time” often gets relaxed into a single in-expectation constraint---e.g., “pull a total of $T$ arms over $T$ rounds in expectation.” This relaxation allows one to apply Lagrangification, which \emph{decouples} the problem across arms, leading to the well-known Whittle index approach. However, Whittle-index-based relaxation policies typically only guarantee feasibility in expectation (where all arms with a non-negative index are pulled at each time); additional rounding or scheduling heuristics (such as pulling highest index arm at each time, as in the classic Whittle index heuristic policy) are required to restore ex-post feasibility, often yielding \emph{approximate} guarantees or \emph{asymptotic near-optimality} guarantees~\citep{guha2010approximation})
        In contrast, we \emph{always} pull one arm in each time (the arm with the highest Gittins index), and also never relax our ex-post constraint on termination (e.g., a capacity on how many terminal states can be accepted). Our capacity (or matroid) constraint is enforced on \emph{every sample path}, and we only Lagrangify \emph{extra ex-ante constraints} (like demographic parity or quota) that are meant to hold in expectation. As a result, our method yields a fully feasible optimal constrained policy for the original problem with the given ex-post capacity-like constraint and the extra ex-ante constraints, without requiring \emph{any} further post-processing.
        \item \textbf{Post-Lagrangification behavior (decoupling vs. adjusted instance):} In RMAB, once you Lagrangify the relaxed constraint, the problem ``decouples'' into single-armed subproblems---one for each arm. By solving each single-armed problem, one can define and calculated the Whittle index for each arm. By contrast, when we Lagrangify our ex-ante constraints, the resulting ``adjusted'' problem remains a single, integrated instance of JMS or Pandora’s box with modified rewards or costs. It \emph{does not} decouple into multiple subproblems, and we still need to solve that adjusted instance. As we show, the solution to this problem is a dual-adjusted index-based policy, along with a randomized tie-breaking to guarantee exact feasibility (or near-feasibility in case of convex constraints). 
    \end{enumerate}
    The above distinctions in the way that Lagrangification helps with designing algorithms capture essential differences between the two models, both in terms of the type of the constraints (i.e., relaxation of an ex-post constraint vs. an original ex-ante constraint) and also how they are handled via this technique. It also serves as another evidence why the RMAB’s Whittle approach is not mathematically connected to the Pandora's box or JMS settings, with or without ex-ante constraints, and does not carry over there.

    \smallskip
    \item\textbf{Multiple or more complex constraints:} RMAB literature usually focuses on a single ex-post constraint (e.g., at most one arm pulled per round) or a relaxed version of that constraint. In our setting, we can incorporate \emph{multiple} ex-ante affine or convex constraints on the expected number of visits to any states (which can more complex than just a single affine constraint at the arm-level). We then solve the resulting constrained problem in polynomial time \emph{exactly} (for affine constraints) or via a \emph{near-optimal policy/FPTAS} (for convex constraints)---all while maintaining ex-post feasibility with respect to capacity or other structural constraints (like matroids). This level of generality and exact satisfaction of constraints distinguishes our framework even further from RMAB.
\end{itemize}

In conclusion, while there some similarities in the algorithmic philosophy used in our paper and this framework, the two models are mathematically and semantically different. As a result, to the best of our knowledge, there is no reduction or formal connection between our results and this literature.

}
\revcolor{
\subsection{Discussion on Distinctions of G-RDIP (\Cref{alg:RAI}) from the Standard Primal-Dual Method} 
\label{sec:discussion-primal-dual}
Using primal-dual ideas from learning-in-games to solve convex-concave saddle-point problems---where a primal player best-responds iteratively and a dual player runs an adversarial online learning algorithm to find the optimal dual that satisfies complementary slackness---is quite standard. In fact, as mentioned earlier, this approach dates back to seminal work on fractional packing and covering LPs (e.g., \cite{plotkin1995fast}; see also \cite{arora2012multiplicative} for a survey) and classic works of \cite{blackwell1956analog} and \cite{dantzig1956primal}. A similar algorithmic philosophy has also been extended to the online setting under i.i.d. stochastic arrivals (or variants such as random order or almost-i.i.d.), as in the work of \cite{agrawal2014dynamic, devanur2011near, agrawal2014fast}. It is important to note that typically in this framework, given the optimal dual variables, the primal player’s best response is simple, straightforward and computationally easy.


Although, at a high level, our G-RDIP approach ({\Cref{alg:RAI}}) is built on this standard framework---and indeed, we employ Fenchel duality for reasons similar to those in the above papers---we believe that our work has the following distinguishing aspects, which highlight its novelty:

\begin{itemize}
    \item \textbf{Computing the best response for the primal player:} Suppose that we only have ex-ante affine constraints. After Lagrangifying these constraints into the objective, the resulting best-response problem (i.e., maximizing the Lagrangian for a fixed set of dual variables) is equivalent to solving an unconstrained JMS problem with adjusted rewards. However, even this unconstrained JMS problem is nontrivial in the general setting, where some state rewards may be positive and others negative---situations that readily occur after dual adjustments. Previous work has analyzed this setting under the ``No Free Lunch (NFL)'' assumption on state rewards (see \Cref{app:JMS-general} and \Cref{def:NFL}), an assumption that can be violated after adjustments. To overcome this, in \Cref{app:JMS-general} we introduce a novel ``collapsing reduction'' that reduces an instance of JMS with arbitrary rewards to one that satisfies NFL in polynomial time. Consequently, we obtain a polynomial-time computable index-based algorithm for this more general version of the JMS problem, a result that we believe is of independent interest.

    \item \textbf{Two-Layer online learning and Fenchel duality for convex constraints:} Although we show how to compute a polynomial-time algorithm for the JMS problem with arbitrary rewards, this alone does not suffice when convex constraints are present. After Lagrangifying a convex constraint, the best-response problem becomes equivalent to solving an extension of the JMS problem where the objective is a concave function of the state visiting frequencies rather than a linear one. To our knowledge, no prior work addresses this specific problem, and it was not even known before our work that one could obtain a near-optimal solution for this problem. Inspired by the use of Fenchel duality in \cite{agrawal2014fast} and related work such as \cite{balseiro2023best}, we replace our convex constraints with their relaxations using Fenchel duals, introducing another set of dual variables. However, this leads to a technical challenge: the Lagrangian becomes linear in state frequencies but nonlinear (and indeed non-convex) in terms of the dual variables, so naively running online learning to minimize the Lagrangian would not work. We observe, however, that if we fix the Fenchel duals, the Lagrangian is linear in the remaining dual variables, and if we fix those, it is convex in the Fenchel duals. This observation suggests using ``two layers of online learning'' for the dual player and ``best response'' for the primal player to obtain a near-optimal solution. Accordingly, our final algorithm comprises an inner layer, where the Fenchel duals are learned, and an outer layer, where the remaining dual variables are learned. For more details, please refer to \Cref{sec:learning-two-layer} and \Cref{sec:RAI}, and see the analysis of G-RDIP in \Cref{app:jms} (proof of \Cref{thm:RAI}).
\end{itemize}
}

\revcolor{
\section{Missing Details of \texorpdfstring{\Cref{sec:single-affine}}{} and \texorpdfstring{\Cref{sec:single-affine-policy}}{}}

\subsection{Missing Technical Details of Section~\texorpdfstring{\ref{sec:single-affine}}{}}
\label{app:binding}

\begin{lemma}[\textbf{Checking for a Binding Constraint}]
\label{lemma:binding}
Suppose Problem~\ref{eq:opt-constrained} is feasible. If there exists an optimal solution $\policy^*$ for Problem~\ref{eq:opt-unconstrained} such that:
$$
\expect{\sum_{i\in[n]}\theta^{S}_i{\select_i^{\policy^*}}+ \sum_{i\in[n]}\theta^{I}_i{\inspect_i^{\policy^*}}}>b~,
$$
then there should exist an optimal solution $\hat\policy$ for Problem~\ref{eq:opt-constrained} for which Constraint~\ref{eq:affine-constraint} is binding, that is, 
$$
\expect{\sum_{i\in[n]}\theta^{S}_i{\select_i^{\hat\policy}}+ \sum_{i\in[n]}\theta^{I}_i{\inspect_i^{\hat\policy}}}=b~,
$$
\end{lemma}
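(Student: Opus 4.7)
{\emph{Proof proposal.}}
The plan is to use a simple randomization argument between an optimal unconstrained policy and any optimal constrained policy. Let $g(\policy)\triangleq \expect{\sum_{i\in[n]}\theta^{S}_i{\select_i^{\policy}}+ \sum_{i\in[n]}\theta^{I}_i{\inspect_i^{\policy}}}$ denote the constraint value associated to policy $\policy$. By the hypotheses, we have $g(\policy^*)>b$ and $\util(\policy^*;\Instance)=\OptUnconstrained\geq \OptConstrained$, and by feasibility of \ref{eq:opt-constrained} there exists an optimal constrained policy $\policy^\circ$ with $g(\policy^\circ)\leq b$ and $\util(\policy^\circ;\Instance)=\OptConstrained$.

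First, if $g(\policy^\circ)=b$, we are already done by taking $\hat{\policy}=\policy^\circ$. Otherwise, $g(\policy^\circ)<b<g(\policy^*)$. Because the admissible policy class $\PolicySpace$ allows randomization, we can define the randomized policy $\hat\policy_{\alpha}$ that executes $\policy^*$ with probability $\alpha$ and $\policy^\circ$ with probability $1-\alpha$. By linearity of expectation, both the utility $\util(\hat\policy_\alpha;\Instance)$ and the constraint value $g(\hat\policy_\alpha)$ are affine functions of $\alpha$. Since $g$ evaluated at $\alpha=0$ is strictly less than $b$ and at $\alpha=1$ is strictly greater than $b$, there exists a unique $\alpha^*\in(0,1)$ such that $g(\hat\policy_{\alpha^*})=b$. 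Hence $\hat\policy_{\alpha^*}$ is feasible for \ref{eq:opt-constrained} and makes Constraint~\ref{eq:affine-constraint} binding.

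It remains to show that $\hat\policy_{\alpha^*}$ is also optimal for \ref{eq:opt-constrained}. This follows because
\[
\util(\hat\policy_{\alpha^*};\Instance)=\alpha^*\util(\policy^*;\Instance)+(1-\alpha^*)\util(\policy^\circ;\Instance)\geq \alpha^*\util(\policy^\circ;\Instance)+(1-\alpha^*)\util(\policy^\circ;\Instance)=\OptConstrained,
\]
where the inequality uses $\util(\policy^*;\Instance)=\OptUnconstrained\geq \OptConstrained=\util(\policy^\circ;\Instance)$. Therefore $\util(\hat\policy_{\alpha^*};\Instance)=\OptConstrained$, and $\hat{\policy}=\hat\policy_{\alpha^*}$ is the desired optimal constrained policy with binding constraint.

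I do not anticipate a main obstacle here beyond verifying that the constructed convex combination lies in $\PolicySpace$; this is immediate from the definition of admissible policies, which allows randomization over deterministic policies. Notice that the same argument would work without any modification for the value-specific constraint in \Cref{sec:quantile-constraint}, since the constraint value $g$ there is also linear in the policy's induced distribution.\qed
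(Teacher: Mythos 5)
Your proof is correct and follows essentially the same route as the paper's: both arguments randomize between the optimal unconstrained policy (which overshoots $b$) and an optimal constrained policy (which undershoots or meets it), use linearity of the objective and the constraint value in the mixing probability to hit $b$ exactly, and conclude optimality from $\OptUnconstrained\geq\OptConstrained$. The only difference is cosmetic—you give a direct construction where the paper phrases it as a contradiction.
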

\begin{proof}{\emph{Proof.}}
We prove the claim by contradiction. Suppose that the claim does not hold. Then, for any optimal policy $\policy'$ of Problem~\ref{eq:opt-constrained}---which exists due to the feasibility of Problem~\ref{eq:opt-constrained}---we have:
$$
\expect{\sum_{i\in[n]}\theta^{S}_i{\select_i^{\policy'}}+ \sum_{i\in[n]}\theta^{I}_i{\inspect_i^{\policy'}}}<b~.
$$

Because $\policy^*$ and $\policy'$ have negative and positive slacks in Constraint~\ref{eq:affine-constraint}, respectively, there exists a proper convex combination of these two policies for some $q\in(0,1)$, that is, a randomized policy $\tilde{\policy}$ that with probability $q$ runs $\policy'$ and with probability $1-q$ runs $\policy^*$, such that:

$$
q\left(b-\expect{\sum_{i\in[n]}\theta^{S}_i{\select_i^{\policy'}}+ \sum_{i\in[n]}\theta^{I}_i{\inspect_i^{\policy'}}}\right)+(1-q)\left(b-\expect{\sum_{i\in[n]}\theta^{S}_i{\select_i^{\policy^*}}+ \sum_{i\in[n]}\theta^{I}_i{\inspect_i^{\policy^*}}}\right)=0~.
$$
Therefore, applying the linearity of the expectation, the resulting randomized policy $\tilde{\policy}$ satisfies Constraint~\ref{eq:affine-constraint} in its equality form. Moreover, 
$$
\util(\tilde\policy;\Instance)=q \cdot\util (\policy';\Instance)+(1-q)\cdot\util (\policy^*;\Instance)\geq \util (\policy';\Instance)~,
$$
where $\Instance$ is the problem instance under consideration. The last inequality holds because $\util(\pi^*; \mathcal{I})\geq \util(\pi';\mathcal{I})$, since adding an ex-ante affine constraint to Problem~\ref{eq:opt-unconstrained} can only lower the objective value. Thus, $\policy'$ is also an optimal solution of Problem~\ref{eq:opt-constrained} for which Constraint~\ref{eq:affine-constraint} is binding, a contradiction. 
\qed
\end{proof}}

\revcolor{
\subsection{Missing Technical Details of \texorpdfstring{\Cref{sec:single-affine-policy}}{}}
\label{app:pandora-details}
\begin{proposition}
    \label{prop:pandora-refinement}
     \Cref{alg:Pandora} with $\NumSelect = 1$ implements the same ordering and stopping rule, up to tie-breaking, as in the optimal index-based policy in \cite{weitzman1979optimal}. Moreover, for $\NumSelect > 1$, this algorithm is exactly equivalent, again up to tie-breaking, to the optimal greedy (frugal) index-based policy in \cite{kleinberg2016descending,singla2018price}.
\end{proposition}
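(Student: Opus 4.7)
The plan is to establish a round-by-round correspondence between the execution of \Cref{alg:Pandora} and the descriptions of the optimal policies of \cite{weitzman1979optimal} (for $\NumSelect=1$) and of \cite{kleinberg2016descending,singla2018price} (for $\NumSelect>1$). The key observation driving the proof is that the option value $\option_i$---equal to $\sigma_i$ before inspection and to the realized reward $v_i$ once box $i$ has been inspected---simultaneously encodes both decisions that the standard index-based rule must make: whether to inspect the next unopened box and whether to stop and select. Setting aside for the moment the augmentation of $\candidates$ by $\{i\in[n]\setminus\mathcal{O}:c_i=0\}$, whether the maximizer of $\option_i$ lies in $\mathcal{O}$ or outside it exactly reproduces Weitzman's ``stop vs.\ continue'' test comparing the largest realized reward so far to the largest index among unopened boxes.

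For $\NumSelect=1$, the plan is to argue by induction on iterations of the \texttt{while} loop, maintaining the invariant that the pair $(\mathcal{O},\{v_i\}_{i\in\mathcal{O}})$ coincides with what some consistent run of Weitzman's rule would produce. In the inductive step, if the chosen $i^*$ lies in $[n]\setminus\mathcal{O}$, then $\option_{i^*}=\sigma_{i^*}$ dominates every realized value among opened boxes and every other unopened index; under a compatible tie-breaking, Weitzman's rule would also inspect $i^*$ next. Symmetrically, if $i^*\in\mathcal{O}\setminus\{0\}$, then $v_{i^*}$ dominates every remaining index, which is precisely Weitzman's stopping criterion, so both policies select $i^*$. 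The outside-option case $i^*=0$ corresponds to both policies terminating with no selection because every remaining index and realized value is nonpositive.

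For $\NumSelect>1$, the same correspondence lifts to the ``frugal greedy'' rule. Once a box has been moved into $\mathcal{S}$ and thus excluded from the $\argmax$, the next-highest opened box becomes the relevant opened candidate, so the algorithm effectively peels off opened boxes in decreasing order of realized value while interleaving inspections of unopened boxes. Consequently, an opened box $i^*\in\mathcal{O}\setminus\mathcal{S}$ attaining the maximum option value over $([n]\setminus\mathcal{S})\cup\{0\}$ is exactly the statement that the $(|\mathcal{S}|+1)$-st highest realized value exceeds every remaining index, which is the rank-based stopping criterion of the frugal greedy policy. Hence the final $\NumSelect$ selections coincide---up to tie-breaking---with the top $\NumSelect$ realized rewards among inspected boxes, and the inductive argument of the $\NumSelect=1$ case carries over.

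The main obstacle is the handling of boxes with $c_i=0$ (and, in the generic setting, with $c_i<0$, for which $\sigma_i=+\infty$ trivially forces immediate inspection). For $c_i=0$, the redefinition in \eqref{eq:base:reserve} gives $\sigma_i=\sup\values_i$, so such a box may tie permanently with its own eventual realized reward, and the explicit inclusion of all unopened zero-cost boxes in $\candidates$ introduces candidates beyond the pure $\argmax$. A short exchange argument is needed to show that inspecting or deferring a zero-cost box at any moment leaves the set of ultimately selected boxes and the collected net reward unchanged in distribution---this is precisely the latitude captured by the phrase ``up to tie-breaking.'' Verifying this invariance is the one step that requires care, after which the correspondence outlined above goes through and the proposition follows.
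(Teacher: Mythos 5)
Your proposal is correct and takes essentially the same approach as the paper: a direct verification that the single $\argmax$-over-option-values step in \Cref{alg:Pandora} simultaneously reproduces the ordering rule (inspect the unopened box of highest index) and the stopping/selection rule (stop when the top opened reward, or the $\NumSelect^{\textrm{th}}$ for $\NumSelect>1$, dominates all remaining indices). The paper's proof simply restricts to the generic case (positive costs, no ties, $\sigma_i\neq v_j$), which suffices given the ``up to tie-breaking'' qualifier, so your extra induction scaffolding and the exchange argument for zero-cost boxes are not needed — and note that your claim that deferring a zero-cost box leaves the outcome ``unchanged in distribution'' is too strong (only the expected utility is invariant; the outcome distribution is exactly what different tie-breaking rules alter later in the paper).
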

\begin{proof}{\emph{Proof.}}
To see why, for a moment, suppose that there are no boxes with zero or negative cost or reward, there are no ties in the indices, and the model primitives are such that $\reserve_i \neq v_j$, for all $i,j \in [n]$. Then \Cref{alg:Pandora} does the following: at each time if there is an opened box $i^*$ in the set $\mathcal{C}$ (which only happens if $v_{i^*}$ is larger than the indices of all the unopened boxes and the realized rewards of all the opened boxes so far), the algorithm stops and selects $i^*$; otherwise, it continues by opening an unopened box with the largest non-negative index. This algorithm is exactly equivalent to the optimal policy of \cite{weitzman1979optimal} described earlier in \Cref{subsub:base:pandora}. 

Similarly, for $k>1$,  \Cref{alg:Pandora} at each time greedily considers the unselected box $i^*$  in $\mathcal{C}$ with the maximum option value $o_i$. If the box is already open, then it should be among the top $k$ rewards in the set of opened boxes at the time of termination (as the algorithm terminates once the $k^{\textrm{th}}$ largest reward among opened boxes is larger than all the remaining indices) and therefore will be selected. If the capacity $k$ is reached after selection, then the algorithm terminates (as the $k^{\textrm{th}}$ largest reward among opened boxes is now larger than all of the remaining indices); otherwise, it continues by selecting more opened boxes or opening an unopened box with the largest index. Again, this algorithm is exactly equivalent to the optimal policy of \cite{kleinberg2016descending,singla2018price}, as described earlier in \Cref{subsub:base:pandora}. 
\qed
    
\end{proof}
}
\begin{lemma}[\textbf{Universality of the Refined Policy}]
\label{lemma:full-cover}
Any optimal policy for the Pandora's box problem (i.e., the special case of the unconstrained problem \ref{eq:opt-unconstrained} when $\NumSelect=1$) can be implemented by \Cref{alg:Pandora} with a proper choice of tie-breaking rule $\tau$.
\end{lemma}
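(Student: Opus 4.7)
The plan is to construct a tie-breaking rule $\tau$ such that \Cref{alg:Pandora} with this $\tau$ takes exactly the same action as $\pi$ at every history reached with positive probability. The key invariant to establish is: at every reachable history $h$, the action $\pi(h)$ lies in the candidate set $\mathcal{C}(h)$ defined in Line~7 of \Cref{alg:Pandora}. Once this is shown, I set $\tau(h):=\pi(h)$ on reachable $h$ (and arbitrarily elsewhere), and a straightforward induction on histories gives the result. Because the tie-breaking rule of \Cref{alg:Pandora} is allowed to be randomized and adaptive, I first reduce to deterministic $\pi$: any randomized optimal policy is a convex combination of deterministic policies, each of which must itself be optimal by linearity of $\util$, so a randomized $\tau$ that mixes the deterministic rules implementing each component reproduces the randomized policy.

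Fix then a deterministic optimal $\pi$ and a reachable history $h$ with option values $\{o_i\}$ and $m:=\max_i o_i$; I verify the invariant by case analysis on $\pi(h)$. If $\pi(h)$ terminates by selecting the outside option $0$, optimality forces $m\le 0$: otherwise either selecting an opened box of positive value, or inspecting an unopened box with $\sigma_i>0$, would strictly improve utility via the Pandora identity $\expect{(v_i-\sigma_i)^+}=c_i$. Hence $0\in\argmax_i o_i\subseteq\mathcal{C}(h)$. If $\pi(h)$ selects an opened box $i$, then $v_i=m$: any opened $\ell$ with $v_\ell>v_i$ strictly dominates, and for any unopened $k$ with $\sigma_k>v_i$, inspecting $k$ and taking $\max(v_i,v_k)$ yields expected utility $-c_k+v_i+\expect{(v_k-v_i)^+}>v_i$ because $v_i<\sigma_k$ implies $\expect{(v_k-v_i)^+}>\expect{(v_k-\sigma_k)^+}=c_k$; thus $i\in\argmax_i o_i\subseteq\mathcal{C}(h)$. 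Finally, if $\pi(h)$ inspects an unopened box $j$, the subcases $c_j<0$ and $c_j=0$ trivially put $j\in\mathcal{C}(h)$: the former forces $\sigma_j=+\infty\in\argmax_i o_i$, and the latter invokes the explicit zero-cost inclusion rule.

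The main obstacle is handling the remaining subcase, where $\pi$ inspects unopened $j$ with $c_j>0$ and $\sigma_j<m$, for which I must derive a contradiction with optimality via a strict Weitzman-style interchange. After inspecting $j$, the option value $o_j=\sigma_j$ is replaced by $v_j$; the plan is to split on whether $v_j\le\sigma_j$, $\sigma_j<v_j\le M_{-j}$, or $v_j>M_{-j}$ (where $M_{-j}$ denotes the maximum prevailing reward among the other unselected boxes), and bound the resulting increase of the continuation value by $\expect{(v_j-\sigma_j)^+}=c_j$. Crucially, this bound becomes \emph{strict} whenever $M_{-j}\ge\sigma_j$ with positive probability. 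The hypothesis $m>\sigma_j$ guarantees exactly this: if $m$ is attained by an opened box $\ell$ with $v_\ell=m$, then $M_{-j}\ge m>\sigma_j$ deterministically; if $m$ is attained by an unopened $\ell$ with $c_\ell>0$, then $v_\ell\ge\sigma_\ell=m>\sigma_j$ with positive probability by the definition of $\sigma_\ell$; and if $\ell$ is zero-cost, then $m=\sigma_\ell=\max\values_\ell>\sigma_j$, which again forces $v_\ell\ge\sigma_j$ with positive probability. Consequently $-c_j+\expect{V^*(h,v_j)}<V^*(h)$, which combined with the continuation bound $V(\pi\mid h)\le -c_j+\expect{V^*(h,v_j)}$ contradicts the optimality of $\pi$. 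This establishes the invariant at all reachable histories and completes the construction of $\tau$.
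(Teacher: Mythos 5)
Your overall strategy---establish the invariant that $\pi(h)$ always lies in the candidate set $\mathcal{C}(h)$ and then read off the tie-breaking rule---is a genuinely different route from the paper, which instead imports the characterization of optimal policies from \citet{kleinberg2016descending} (non-exposedness plus ``always select the box with maximum non-negative $\kappa_i=\min\{\sigma_i,v_i\}$'') and derives a contradiction on sample paths. However, your invariant is \emph{false} as stated, and the failure occurs exactly in the one subcase your analysis omits: when the maximum option value $m$ is attained only by an unopened box with \emph{negative} cost. Concretely, take $k=1$ with box $1$ having $c_1=-1$ and $v_1=0$ deterministically, and box $2$ having $c_2=1$ and $v_2\in\{0,12\}$ each with probability $1/2$ (so $\sigma_2=10$). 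The policy that inspects box $2$ first, then box $1$, then selects optimally achieves the optimal utility $6$ --- inspecting the deterministic negative-cost box reveals no information, so deferring it costs nothing. Yet at the initial history $\mathcal{C}=\{1\}$ (since $o_1=+\infty>o_2$ and box $2$ is not zero-cost), so this optimal policy chooses an action outside $\mathcal{C}$. Your strictness mechanism cannot rescue this case: when $m=+\infty$ comes from a negative-cost box, the interchange improvement can be exactly zero, so no contradiction with optimality is available. This is precisely why the paper's proof begins with the convention that, without loss of generality, the optimal policy is \emph{reordered} to inspect all negative-cost boxes first (an outcome-equivalent modification); the lemma must then be read as implementing an outcome-equivalent policy rather than reproducing $\pi$ history by history, which your exact-simulation framework does not accommodate without the same preprocessing step.

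Two secondary points. First, in your main case ($c_j>0$, $\sigma_j<m$) the pivotal inequality $-c_j+\expect{V^*(h,v_j)}<V^*(h)$ is asserted via ``bound the increase of the continuation value by $\expect{(v_j-\sigma_j)^+}$'': note that the non-strict version of this bound is vacuous (inspecting $j$ first is itself a feasible continuation from $h$, so $\expect{V^*(h,v_j)}\le V^*(h)+c_j$ is automatic), so the entire content lies in the strictness, for which you never specify the comparison policy against which the deferred inspection of $j$ is evaluated. This can be made rigorous via a Weitzman-style exchange, but as written it is a sketch of the hardest step; the paper avoids this by invoking the $\kappa$-based upper bound $\util(\pi)\le\expect{\max_i \kappa_i^+}$ and its tightness conditions, which packages both non-exposedness and the strict-ordering requirement in one stroke. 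Second, your strictness condition should read $M_{-j}>\sigma_j$ with positive probability rather than $M_{-j}\ge\sigma_j$ (a tie in indices makes either inspection order optimal); your case analysis does in fact establish the strict version from $m>\sigma_j$, so this is only an imprecision in the statement. The remaining cases (termination, selection of an opened box, zero- and negative-cost inspections by $\pi$) and the reduction to deterministic policies are correct.
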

\begin{proof}{\emph{Proof.}}
We start by recalling the definition of a ``non-exposed'' policy. 
A policy is said to be non-exposed if it is forced to eventually select any box $i$   that the policy has opened and observed that $v_i>\sigma_i$. Note that any optimal policy $\policy$ for the (the unconstrained version) of the Pandora's box problem must be non-exposed as shown in \cite{kleinberg2016descending}. Also, we remark that as stated in \citet{kleinberg2016descending, armstrong2017ordered}, \cite{singla2018price}, the Pandora's box problem (with multiple selections) can be viewed as a static discrete choice problem (with multiple selections) in which this optimal policy \emph{always} selects at most one (resp. $\NumSelect$) of the boxes with the largest non-negative realized ``random utility'' defined as 
\begin{equation}
\label{eq:capped:value}
\begin{aligned}
\Kap_i \triangleq \min\{\reserve_i, v_i\}
\end{aligned}
\end{equation}
Therefore, for the special case of $k=1$, any optimal policy should eventually select the box with the maximum non-negative $\kappa_i$ (if any) in every sample path. See \cite{kleinberg2016descending} for more details.
Now compare any optimal policy $\policy$ that satisfies the above properties with \Cref{alg:Pandora}. First of all, notice that any optimal policy has to inspect all of the negative-cost boxes. As a consequence, we can assume that policy $\policy$
inspects them first without changing the rest of the search process. 
Now, under this convention, we fix a realization of all the random variables in our instance and run both policies $\policy$ and \Cref{alg:Pandora}. Consider the first time step by which the policy $\policy$ decides to choose a box $i'$ that does not belong to the candidate set $\candidates$ (determined in line 5 of \Cref{alg:Pandora}), that is, it satisfies the following two conditions: (i) Box $i'$ does not have the maximum option value, that is, $o_{i^*}>o_{i'}$ at that time, and also (ii) $i'\notin \{i\in[n]\setminus \mathcal{O}:~c_i = 0\}$. 

\medskip 
Let us consider two cases separately:

\medskip  
{\bf Case (a):} Box $i'$ has a negative or zero cost or it is already open.
If box $i'$ has negative cost, then, by our convention, it is open. If box $i'$ has cost zero, then by condition (ii)  -- stated above -- it is also open. 
Thus, choosing box $i'$ implies that $\policy$ selects $i'$ with probability one. (Box $i'$ can also be the outside option which is by default open and again choosing it means selecting it.)  

{\bf Case (b):} Box $i'$ is not yet opened and has a positive cost. Again, we show that $\policy$ will select $i'$ with positive probability. Note that after $i'$ is inspected by $\policy$, which occurs in the next step, the policy still would have enough capacity for selection (as it had before).  Since $c_{i'} > 0$,  there is a positive probability that the realization of $v_{i'}$ after inspecting box $i'$ satisfies $v_{i'}>\sigma_{i'}$. Furthermore, by the non-exposedness property of $\policy$, this policy is forced to select $i'$ in that case.

\medskip 
In summary, until now, we showed that there is a positive probability that $\policy$ will be forced to eventually select box $i'$. In such sample paths (with nonzero measure), a few cases might arise:

{\bf Case (1):} $i^*$ has already been inspected when $i'$ is chosen by $\policy$. First, this implies $v_{i^*}=o_{i^*}>o_{i'}$. In this case, if $v_{i^*}>\sigma_{i^*}$, then it implies that $\policy$ is forced to also select $i^*$ (by its non-exposed property). This results in a contradiction because $\policy$ cannot select both $i'$ and $i^*$. If $v_{i^*}\leq\sigma_{i^*}$, then $\kappa_{i^*}=v_{i^*}>o_{i'}\geq\kappa_{i'}$, which is again a contradiction since it implies that the selected box $i'$ by $\policy$ cannot be the box with a maximum non-negative value of $\kappa_i$ in certain sample paths.

{\bf Case (2):} $i^*$ is not inspected when $i'$ is chosen by $\policy$. 
First, this implies $\sigma_{i^*}=o_{i^*}>o_{i'}$. Note also that $c_{i^*}\geq 0$, and hence there is a positive probability that $v_{i^*}\geq \sigma_{i^*}$. In that case, $\kappa_{i^*}=\sigma_{i^*}>o_{i'}\geq\kappa_{i'}$, which is again a contradiction since it implies that the selected box $i'$ by $\policy$ cannot be the box with a maximum non-negative value of $\kappa_i$ in certain sample paths.

Putting everything together, all possible cases result in a contradiction, and hence we show that $\policy$ cannot be an optimal policy, as desired.\qed



\end{proof}



\subsection{Missing Proofs of Section~\texorpdfstring{\ref{sec:single-affine-policy}}{}}
\label{app:sec:pandora}
\begin{proof}{\emph{Proof of \Cref{lemma:Const:DualLagrangeProperties}.}}
We provide separate proofs for three parts of this lemma

\smallskip
\begin{enumerate}[leftmargin=*,label=(\roman*)]
    \item To prove this part, note that for any $\LagVector$ there exists always a deterministic policy that maximizes the Lagrangian $\LagrangeConst (\LagVector, \pi)$.  Therefore, w.l.o.g., we can restrict ourselves only to the set of all deterministic policies when computing $\DualLagrangeConst (\LagVector)$. Now note that there are finitely many deterministic policies, simply because the number of boxes $\altnum$ and the number of possible value realizations are both finite (due to the discreteness assumption on the values). Fixing a deterministic policy $\policy$, the Lagrangian function is linear in $\LagVector$. Therefore, the function $\DualLagrangeConst (\LagVector)$ is the maximum over a finite number of linear functions, which implies that $\DualLagrangeConst (\LagVector)$ is piecewise linear and convex.

\smallskip

    \item Recall that $\DualLagrangeConst (\LagVector)$ is always an upper-bound on the objective value of any feasible policy $\policy$ in the primal problem, i.e., Problem~\ref{eq:opt-constrained}. Because this problem is assumed to be feasible (see the discussion after \Cref{rem:feasible} and \Cref{rem:tightness} in \Cref{sec:single-affine}), $\DualLagrangeConst$ is bounded from below. Given that $\DualLagrangeConst$ is piecewise linear and convex, it should always have a bounded minimizer $\LagVector^*$.
    
    \item For any given policy $\policy$, by taking partial derivative of $\LagrangeConst (\LagVector;\pi)$ with respect to $\LagVector$, we have:
    $$
    \frac{\partial\LagrangeConst}{\partial\LagVector}=b-\expect{\sum_{i\in[n]}\theta^{S}_i{\select_i^{\policy}}+ \sum_{i\in[n]}\theta^{I}_i{\inspect_i^{\policy}}} =\slack_{\textsc{cons}}^{\policy}~.
    $$
 Therefore, by a simple application of the envelope lemma, we have:

 $$
 \slack_{\textsc{cons}}^{\policy^{\lambda}}\in \partial\left(\underset{\policy\in\PolicySpace}{\max}~\LagrangeConst(\cdot; \pi)\right)(\LagVector)=\partial\left(\LagrangeConst(\cdot; \policy^{\cdot})\right)\left(\LagVector\right)=\partial\DualLagrangeConst\left(\LagVector\right)~,
 $$
 where $\policy^{\lambda}$ is any policy maximizing the Lagrangian for a given $\LagVector$, i.e., $\policy^{\LagVector}\in\underset{\policy\in\PolicySpace}{\argmax}~\LagrangeConst(\LagVector;\policy)$, and $\partial f(x)$ is the set of all subgradients of $f$ at point $x$.

\end{enumerate}
\qed
\end{proof}

\begin{proof}{\emph{Proof of \Cref{lem:const:slack}.}}
\noindent As mentioned in the proof sketch, this proof involves two steps. 

\medskip  
{\bf \underline{Step 1}:} 
We show that there exists an optimal policy with a non-negative constraint slack. As we established in \Cref{lemma:Const:DualLagrangeProperties}, the function $\DualLagrangeConst (\LagVector)$ is the maximum of linear functions and, hence, is a convex piecewise-linear function. This piecewise-linear function has breakpoints at certain values, each corresponding to a $\LagVector$ at which the slope $\frac{\partial}{\partial\LagVector}\DualLagrangeConst(\LagVector)$ changes. Using part~(iii) of \Cref{lemma:Const:DualLagrangeProperties}, the slope of any differentiable line segment of $\DualLagrangeConst$ is equal to the slack $\slack^{\pi^{\LagVector}}_\textsc{cons}$, where $\LagVector$ is an arbitrary point lying in that line segment and $\pi^\LagVector$ is an arbitrary optimal policy for the adjusted instance with respect to $\LagVector$.
Also, the space of possible policies (due to the finite and discrete support assumption on the rewards) is finite, and hence there are finitely many breaking points in this piecewise linear function. 

Now, fixing any $\LagVector^*\in\underset{\LagVector}{\argmin}~\DualLagrangeConst(\LagVector)$, there must exist a sufficiently small positive $\varepsilon$ such that both $\LagVector^*$ and $\LagVector^*+\varepsilon$ lie in the same line segment, or equivalently, the function $\DualLagrangeConst$ is a line in the interval $[\LagVector,\LagVector^*+\varepsilon$]. For this choice of $\varepsilon$, we show that any optimal policy $\pi^{\LagVector^*+\varepsilon}$ for $\max_{\policy \in \PolicySpace} ~\LagrangeConst(\LagVector^*+\varepsilon, \pi)$ is also an optimal policy for $\max_{\policy \in \PolicySpace} ~\LagrangeConst (\LagVector^*, \pi)$. To see this, first note that $\pi^{\LagVector^*+\varepsilon}$ is also an optimal policy for $\max_{\policy \in \PolicySpace} ~\LagrangeConst (\LagVector, \pi)$ for any choice of $\lambda\in(\LagVector^*,\LagVector^*+\varepsilon]$ (hence, $\DualLagrangeConst(\LagVector)=\LagrangeConst(\LagVector,\pi^{\LagVector^*+\varepsilon})$ for any $\lambda\in(\LagVector^*,\LagVector^*+\varepsilon]$). This last statement holds because the optimal policy for the adjusted instance with any $\lambda\in(\LagVector^*,\LagVector^*+\varepsilon]$ has the same slack $\slack_{\textsc{cons}}^{\pi^{\LagVector}}=\slack_{\textsc{cons}}^{\pi^{\LagVector^*+\varepsilon}}$ regardless of the choice of $\LagVector$. Therefore, if $\pi^{\LagVector^*+\varepsilon}$ is not optimal for an adjusted instance with some $\lambda\in(\LagVector^*,\LagVector^*+\varepsilon)$, we should have:
$$
\LagrangeConst(\lambda,\pi^\lambda)>\LagrangeConst(\lambda,\pi^{\LagVector^*+\varepsilon})~\textrm{and}~\slack_{\textsc{cons}}^{\pi^{\LagVector}}=\slack_{\textsc{cons}}^{\pi^{\LagVector^*+\varepsilon}}~~\Longrightarrow~~\util(\pi^\lambda;\Instance)>\util(\pi^{\LagVector^*+\varepsilon};\Instance)~.
$$
However, this is in contradiction to the optimality of $\pi^{\LagVector^*+\varepsilon}$ for the adjusted instance with $\LagVector^*+\varepsilon$, simply because we can show:
$$
\util(\pi^\lambda;\Instance)>\util(\pi^{\LagVector^*+\varepsilon};\Instance )~\textrm{and}~\slack_{\textsc{cons}}^{\pi^{\LagVector}}=\slack_{\textsc{cons}}^{\pi^{{\LagVector^*+\varepsilon}}}~~\Longrightarrow~~\LagrangeConst(\lambda^*+\varepsilon,\pi^\lambda)>\LagrangeConst(\lambda^*+\varepsilon,\pi^{\LagVector^*+\varepsilon})~.
$$
Second, note that the function $\DualLagrangeConst$ is continuous and therefore we have:
$$
\max_{\policy \in \PolicySpace} ~\LagrangeConst (\LagVector^*, \pi)=\DualLagrangeConst(\LagVector^*)=\underset{\delta\rightarrow 0^+}{\lim}~{\DualLagrangeConst(\LagVector^*+\delta)}
$$
Now, for $\delta\in(0,\varepsilon)$, we can replace $\DualLagrangeConst(\LagVector^*+\delta)$ with $\LagrangeConst(\LagVector^*+\delta,\pi^{\LagVector^*+\varepsilon})$ as stated above. So we have:
$$
\max_{\policy \in \PolicySpace} ~\LagrangeConst (\LagVector^*, \pi)=\underset{\delta\rightarrow 0^+}{\lim}~{\LagrangeConst(\LagVector^*+\delta,\pi^{\LagVector^*+\varepsilon})}=\LagrangeConst(\LagVector^*,\pi^{\LagVector^*+\varepsilon})~,
$$
where the last equality is retained due to the continuity of the Lagrangian function $\LagrangeConst(\lambda,\pi)$ with respect to $\lambda$ for a fixed $\pi$ (it is indeed a linear function). Therefore, $\pi^{\LagVector^*+\varepsilon}$ is also an optimal policy for the adjusted instance with $\LagVector^*$.

Furthermore, by convexity, all subgradients of $\DualLagrangeConst$ at $\LagVector^*+\varepsilon$ must be nonnegative, implying that the constraint slack $\slack_{\textsc{cons}}^{\pi^{\LagVector^*+\varepsilon}}$ of policy $\pi^{\LagVector^*+\varepsilon}$ is nonnegative. Therefore, the policy $\pi^{\LagVector^*+\varepsilon}$ is an adjusted optimal policy (with respect to $\LagVector^*$) with nonnegative slack. Similarly, by repeating the same argument for $\LagVector^*-\varepsilon$ for small enough $\varepsilon$,  we conclude that there exists an adjusted optimal policy $\pi^{\LagVector^*-\varepsilon}$ (with respect to $\LagVector^*$) with nonpositive constraint slack, which completes the proof of this step.

\medskip  

{\bf \underline{Step 2}:}
In this step, we show $\slack_{\textsc{cons}}^{\policy^{+}} \geq 0$ (resp. $0\geq \slack_{\textsc{cons}}^{\policy^{-}}$). To show this, we will look at the tie-breaking rule that arises from the perturbed adjusted problem with $\LagVector = \LagVector^*-\varepsilon$ for an \emph{infinitesimal} $\varepsilon>0$ when we run \Cref{alg:const}. More formally, we show that there exists a run of \Cref{alg:const} on the adjusted instance with $\LagVector^*-\varepsilon$ that is exactly equivalent to running policy $\pi^-$, that is, a run of \Cref{alg:const} on the adjusted instance $\LagVector^*$ with the negative-extreme tie-breaking rule $\tau^-$ (which uses tie-breaking scores $\{s^-_i\}_{i\in\candidates}$ as in \Cref{def:tie:extreme}).

First, suppose that $\varepsilon$ is small enough so that the perturbation will not change any strict order among the possible realizations of the adjusted values $\widetilde{v_i}$ and the adjusted indices $\widetilde{\reserve_i}$ (and henceforth the adjusted option values $\widetilde{o_i}$), where the adjustment is with respect to $\LagVector^*$.  Moreover, we let $\varepsilon$ be small enough so that the sign of no adjusted cost $\widetilde{c_i}$ changes. Therefore, if an adjusted cost $\widetilde{c_i}$ with respect to $\LagVector^*$ is strictly positive (resp. strictly negative), it remains strictly positive (resp. strictly negative) after adjustment with respect to $\LagVector^*-\varepsilon$,  regardless of the sign of $\theta_i^I$. It is also important to note that if the adjusted cost $\widetilde{c_i}$ with respect to $\LagVector^*$ is exactly equal to zero, if we have $\theta^I_i<0$, then the adjusted cost with respect to $\LagVector^*-\varepsilon$ turns out to be strictly positive (but infinitely close to zero), and if $\theta^I_i>0$, then the adjusted cost with respect to $\LagVector^*-\varepsilon$ turns out to be strictly negative. Furthermore, if $\theta^I_i=0$ and the adjusted cost $\widetilde{c_i}$ with respect to $\LagVector^*$ is zero, it remains zero in the adjusted instance with respect to $\LagVector^*-\varepsilon$. In the rest of the proof, we use the notation $\widetilde{v_i}^-$, $\widetilde{\reserve_i}^-$, $\widetilde{o_i}^-$, and $\widetilde{c_i}^-$ to denote adjusted values, adjusted reservation values (or indices), adjusted option values, and adjusted costs corresponding to the adjustment $\LagVector^*-\varepsilon$.

Next, consider an optimal policy $\pi^{\LagVector^*-\varepsilon}$ for the adjusted instance with $\LagVector^*-\varepsilon$ for infinitesimal $\varepsilon$. Importantly, for any choice of tie-breaking rule for this policy, we have $\slack_{\textsc{cons}}^{\pi^{\LagVector^*-\varepsilon}}\leq 0$.
Now to compare this policy with $\pi^-$, we run both of these policies by using \Cref{alg:const} with the same instance with different adjustments as input, that is, we run $\pi^-$ on the adjusted instance with $\LagVector^*$ and run $\pi^{\LagVector^*-\varepsilon}$ on the adjusted instance with $\LagVector^*-\varepsilon$. We also couple the sample path realizations of the two runs. Now suppose inductively that the two policies have made exactly the same decisions up to some iteration of \Cref{alg:const}. We then show that they can continue making the same decision while maintaining valid runs of both policies, which completes the proof. More precisely, suppose that policy $\pi^{-}$ picks $i\in[\altnum]$ from the set of candidates $\candidates$ to inspect (if $i$ is not yet open) or add to the selection set (if $i$ is already open) in the current iteration.  We show that picking $i$ in this iteration would be a valid choice for policy $\pi^{\LagVector^*-\varepsilon}$.

To show the above claim, we consider the following cases:
\begin{itemize}
\item If box $i$ is not yet open and $\widetilde{c_i}<0$ (which implies $\widetilde{c_i}^-<0$), then $\widetilde{\sigma_i}^-=\widetilde{o_i}^-=+\infty$ and hence this box will be among the boxes in $[\altnum]\setminus\mathcal{S}$ (i.e., set of unselected boxes) with the maximum option value in the current iteration of $\pi^{\LagVector^*-\varepsilon}$. Accordingly, the box $i$ can be chosen by $\pi^{\LagVector^*-\varepsilon}$ in this iteration, as desired.

\item If box $i$ is not yet open, $\widetilde{c_i}=0$ and $\theta_i^I\geq 0$, then we either have $\widetilde{c_i}^-=0$ (when $\theta_i^I=0$) or $\widetilde{c_i}^-<0$ (when $\theta_i^I>0$). In the former case, box $i$ will be among the candidate boxes in the current iteration of $\pi^{\LagVector^*-\varepsilon}$, as it is an unopened zero-cost box. In the latter case, $\widetilde{\sigma_i}^-=\widetilde{o_i}^-=+\infty$ and this box will be among the boxes in $[\altnum]\setminus\mathcal{S}$ with the maximum option value and, therefore, among the candidate boxes in the current iteration of $\pi^{\LagVector^*-\varepsilon}$. When the two cases are combined, we conclude that box $i$ can be chosen by policy $\pi^{\LagVector^*-\varepsilon}$ in this iteration, as desired.

\item If box $i$ is not yet open and either $\widetilde{c_i}>0$, or $\widetilde{c_i}=0$ and $\theta_i^I<0$,  we first show that it should be among the boxes in $([\altnum]\setminus\mathcal{S})\cup\{0\}$  with the maximum option value in the run of $\pi^-$, and hence $\widetilde{\sigma_i}=\widetilde{o}_{\textrm{max}}$, where
$$\displaystyle\widetilde{o}_{\textrm{max}}=\underset{i'\in ([\altnum]\setminus\mathcal{S})\cup\{0\}}{\max}~\tilde{o_{i'}}$$
To prove this statement, note that if $\widetilde{c_i}>0$ this statement is clearly true, as we know that $i$ is among the candidate boxes of $\pi^-$ in the current iteration. If $\widetilde{c_i}=0$ and $\theta_i^I<0$, note that $\widetilde{\sigma_i}$ is set to the upper-support $\bar{v}$ of the distribution of $\widetilde{v_i}$. If $\bar{v}=\widetilde{\sigma_i}<\widetilde{o}_{\textrm{max}}$ then $\prob{\widetilde{v_i}\geq \widetilde{o}_{\textrm{max}}}=0$. This is a contradiction, as according to \Cref{def:tie:extreme} the tie-breaking score $s_i^-$ of box $i$ should be set to $s^-_{i}=-\infty$ by $\pi^-$, so $\pi^-$ is not allowed to choose $i$ among the candidate boxes in $\mathcal{C}$ (note that there is always at least one box with a bounded tie-breaking score in $\mathcal{C}$). 

Next, we show that this box $i$ should also be among the boxes in $[\altnum]\setminus\mathcal{S}\cup\{0\}$ with the maximum option value in $\pi^{\LagVector^*-\varepsilon}$. This statement implies that  box $i$ is in the set of candidates in the current iteration of $\pi^{\LagVector^*-\varepsilon}$ and therefore can be chosen by this policy in this iteration, as desired. First, observe that $\widetilde{c_i}^->0$, as either $\widetilde{c_i}>0$, or $\widetilde{c_i}=0$ and $\theta_i^I<0$. Second, 
observe that $\widetilde{o_i}^-=\widetilde{\sigma_i}^-$, and we have:
$$
\expect{\left(\widetilde{v_i}^--\widetilde{\sigma_i}^-\right)^+}=\expect{\left(\widetilde{v_i}+\varepsilon\theta_i^S-\widetilde{\sigma_i}^-\right)^+}=\widetilde{c_i}^-=\widetilde{c_i}-\theta_i^I\varepsilon
$$
Under our conditions in this case, if $\theta_i^I\geq 0$ (and hence $\widetilde{c_i}>0$), we have $\widetilde{\sigma_i}^-=\widetilde{\sigma_i}+\varepsilon\theta_i^S+\delta$ for some infinitesimal $\delta\geq 0$. This simply holds because $\varepsilon>0$ is infinitesimal. Furthermore, given that $\widetilde{c_i}>0$, we have $\prob{\widetilde{v_i}>\widetilde{\sigma_i}}>0$, and therefore:
$$
\expect{\left(\widetilde{v_i}^--\widetilde{\sigma_i}^-\right)^+}=\expect{\left(\widetilde{v_i}-\widetilde{\sigma_i}\right)^+}-\delta \cdot\prob{\widetilde{v_i}>\widetilde{\sigma_i}}=\widetilde{c_i}-\theta_i^I\varepsilon~~\Rightarrow ~~\delta=\varepsilon\frac{\theta_i^I}{\prob{\widetilde{v_i}>\widetilde{\sigma_i}}}=\varepsilon\frac{\theta_i^I}{\prob{\widetilde{v_i}>\widetilde{o}_{\textrm{max}}}}
$$
Similarly, if $\theta_i^I< 0$ (and hence $\widetilde{c_i}\geq 0$), we have $\widetilde{\sigma_i}^-=\widetilde{\sigma_i}+\varepsilon\theta_i^S-\delta$ for some infinitesimal $\delta\geq 0$. Again, this holds because $\varepsilon>0$ is infinitesimal. Moreover, $\prob{\widetilde{v_i}\geq \widetilde{\sigma_i}}>0$. This is true because either we have $\widetilde{c_i}>0$, or $\widetilde{c_i}=0$ and $\widetilde{\sigma_i}$ is set to the upper-support $\bar{v}$ of the distribution of $\widetilde{v_i}$ and hence $\prob{\widetilde{v_i}\geq \widetilde{\sigma_i}}=\prob{\widetilde{v_i}= \bar{v}}>0$. Therefore, we have:
$$
\expect{\left(\widetilde{v_i}^--\widetilde{\sigma_i}^-\right)^+}=\expect{\left(\widetilde{v_i}-\widetilde{\sigma_i}\right)^+}+\delta \cdot\prob{\widetilde{v_i}\geq\widetilde{\sigma_i}}=\widetilde{c_i}-\theta_i^I\varepsilon~~\Rightarrow~~ \delta=-\varepsilon\frac{\theta_i^I}{\prob{\widetilde{v_i}\geq \widetilde{\sigma_i}}}=\varepsilon\frac{\theta_i^I}{\prob{\widetilde{v_i}\geq\widetilde{o}_{\textrm{max}}}}
$$
Putting the pieces together, the following holds for any unopened box with $\widetilde{c_i}>0$, or $\widetilde{c_i}=0$ and $\theta_i^I<0$:
\begin{equation}
\label{eq:rank}
\widetilde{\sigma_i}^-=\widetilde{\sigma_i}+\varepsilon\cdot s^-_i~,
\end{equation}
where $s^-_i$ is the tie-breaking score of positive-extreme rule $\tau^-$ as in \Cref{def:tie:extreme}. Now consider another box $i'\in [\altnum]\setminus\mathcal{S}\cup\{0\},~i'\neq i$. If $\widetilde{o_i}>\widetilde{o_{i'}}$, then $\widetilde{o_i}^->\widetilde{o_{i'}}^-$ as $\varepsilon$ is infinitesimal. Now suppose that $\widetilde{o_i}=\widetilde{o_{i'}}$ (and hence $i'$ is also among the boxes with the maximum option value in $[\altnum]\setminus\mathcal{S}\cup\{0\}$). First, note that if $\widetilde{c_{i'}}<0$, or $\widetilde{c_i}=0$ and $\theta_{i'}^I\geq 0$, then the tie-breaking score $s^-_{i'}$ of $i'$ is set to $s^-_{i'}=+\infty$, so $i'$ should be favored over $i$ by $\pi^-$ as $s^-_i<+\infty=s^-_{i'}$, a contradiction to the fact that $\pi^-$ has picked $i$ among the candidate boxes in $\mathcal{C}$. Therefore, $\widetilde{c_{i'}}> 0$, or $\widetilde{c_{i'}}=0$ and $\theta_{i'}^I < 0$. Now consider two cases:
\begin{itemize}
    \item If box $i'$ is not open yet, we have: 
$$
\widetilde{o_i}^-=\widetilde{\sigma_i}^-=\widetilde{\sigma_i}+\varepsilon \cdot s^-_i\overset{(a)}{\geq}\widetilde{\sigma_i}+ \varepsilon \cdot s^-_{i'}\overset{(b)}{=}\widetilde{\sigma_{i'}}+ \varepsilon \cdot s^-_{i'}\overset{(c)}{=}\widetilde{\sigma_{i'}}^-=\widetilde{o_{i'}}^-~,
$$
where the inequality~(a) holds as $i$ has the maximum tie-breaking score $s^-_i$ in the set of candidates $\mathcal{C}$, equality~(b) holds as $\widetilde{\sigma_i}=\widetilde{o_i}=\widetilde{o_{i'}}=\widetilde{\sigma_{i'}}$ as $i'$ is not open, and equality~(c) holds due to \Cref{eq:rank} applied to box $i'$ (as we proved earlier, this equation holds if $\widetilde{c_{i'}}> 0$, or $\widetilde{c_{i'}}=0$ and $\theta_{i'}^I < 0$).
\smallskip
\item If box $i'$ is already open, or $i'=0$ (outside option) we have: 
$$
\widetilde{o_i}^-=\widetilde{\sigma_i}^-=\widetilde{\sigma_i}+\varepsilon \cdot s^-_i\overset{(a)}{\geq}\widetilde{\sigma_i}+ \varepsilon \cdot s^-_{i'}\overset{(b)}{=}\widetilde{v_{i'}}+ \varepsilon \cdot s^-_{i'}\overset{(c)}{=}\widetilde{v_{i'}}^-=\widetilde{o_{i'}}^-~,
$$
where the inequality~(a) holds as $i$ has the maximum tie-breaking score $s^-_i$ in the set of candidates $\mathcal{C}$, equality~(b) holds as $\widetilde{\sigma_i}=\widetilde{o_i}=\widetilde{o_{i'}}=\widetilde{v_{i'}}$ as $i'$ is open, and equality~(c) holds because for an opened box $i'$, $s^-_{i'}=\theta_i^S$ and $\widetilde{v_{i'}}^-=\widetilde{v_{i'}}+\varepsilon\cdot\theta_i^S$ (as a convention, set $\theta_{0}^S=0$ for the outside option).

\end{itemize}

\item If box $i$ is already open or $i=0$ (outside option), then it should be among the boxes in $[\altnum]\setminus\mathcal{S}\cup\{0\}$ with the maximum option value in the run of $\pi^-$. We now show that this box will also be among the boxes in $[\altnum]\setminus\mathcal{S}\cup\{0\}$ with the maximum option value in the run of $\pi^{\LagVector^*-\varepsilon}$. Consider another box $i'\in [\altnum]\setminus\mathcal{S}\cup\{0\},~i'\neq i$. Similar to the previous case, if $\widetilde{o_i}>\widetilde{o_{i'}}$, then $\widetilde{o_i}^->\widetilde{o_{i'}}^-$ as $\varepsilon$ is infinitesimal. Now suppose that $\widetilde{o_i}=\widetilde{o_{i'}}$ (and hence $i'$ is also among the boxes with the maximum option value in $[\altnum]\setminus\mathcal{S}\cup\{0\}$). Similar to the previous case, the fact that $i$ is picked over $i'$ implies that $\widetilde{c_{i'}}> 0$, or $\widetilde{c_{i'}}=0$ and $\theta_{i'}^I < 0$. Now we have two cases (as a convention, set $\theta_{0}^S=0$ for the outside option):
\begin{itemize}
    \item If box $i'$ is not open yet, we have: 
$$
\widetilde{o_i}^-=\widetilde{v_i}^-\overset{(a)}{=}\widetilde{v_i}+\varepsilon \cdot s^-_i\overset{(b)}{\geq}\widetilde{v_i}+ \varepsilon \cdot s^-_{i'}\overset{(c)}{=}\widetilde{\sigma_{i'}}+ \varepsilon \cdot s^-_{i'}\overset{(d)}{=}\widetilde{\sigma_{i'}}^-=\widetilde{o_{i'}}^-~,
$$
where the equality~(a) holds as for the opened box $i$ we have $\widetilde{v_i}^-=\widetilde{v_i}+\varepsilon\cdot\theta_i^S$ and $s^-_i=\theta_i^S$, inequality~(b) holds as $i$ has the maximum tie-breaking score $s^-_i$ in the set of candidates $\mathcal{C}$, equality~(c) holds as $\widetilde{v_i}=\widetilde{o_i}=\widetilde{o_{i'}}=\widetilde{\sigma_{i'}}$ as $i'$ is not open, and equality~(d) holds due to \Cref{eq:rank} applied to box $i'$ (as we proved earlier, this equation holds if $\widetilde{c_{i'}}> 0$, or $\widetilde{c_{i'}}=0$ and $\theta_{i'}^I < 0$).
\smallskip
\item If box $i'$ is already open, or $i'=0$ (outside option) we have: 
$$
\widetilde{o_i}^-=\widetilde{v_i}^-\overset{(a)}{=}\widetilde{v_i}+\varepsilon \cdot s^-_i\overset{(b)}{\geq}\widetilde{v_i}+ \varepsilon \cdot s^-_{i'}\overset{(c)}{=}\widetilde{v_{i'}}+ \varepsilon \cdot s^-_{i'}\overset{(d)}{=}\widetilde{v_{i'}}^-=\widetilde{o_{i'}}^-~,
$$
where the equality~(a) holds as for the opened box $i$ we have $\widetilde{v_i}^-=\widetilde{v_i}+\varepsilon\cdot\theta_i^S$ and $s^-_i=\theta_i^S$, inequality~(b) holds as $i$ has the maximum tie-breaking score $s^-_i$ in the set of candidates $\mathcal{C}$, equality~(c) holds as $\widetilde{v_i}=\widetilde{o_i}=\widetilde{o_{i'}}=\widetilde{v_{i'}}$ as $i'$ is open, and equality~(d) holds because for an opened box $i'$, $s^-_{i'}=\theta_i^S$ and $\widetilde{v_{i'}}^-=\widetilde{v_{i'}}+\varepsilon\cdot\theta_i^S$.
\end{itemize}
 Putting the above cases together, we have $\widetilde{o_i}^-\geq \widetilde{o_{i'}}^-$ for any box $i'\in [\altnum]\setminus\mathcal{S}\cup\{0\},~i'\neq i$, as desired.
\end{itemize}

The proof of the counterpart statement of
$\slack_{\textsc{cons}}^{\policy^{+}}\geq 0$ follows a similar line of argument (and with exactly the same case analysis), which we omit for brevity. \qed
\end{proof}

\smallskip

\if false 

\begin{proof}[{Proof of \Cref{remark:extreme-FS}.}]
\RN{this proof has not been revised yet}
First of all notice that any adjusted optimal policy will select exactly one of the alternatives with the largest $\widetilde{\Kap_i}$ (including $i=0$), at each realization of the sample path. Now to show that the mentioned policy that always gives highest priority to $\WomanSet$, then to $i=0$, and lastly lowest priority to $\ManSet$ when choosing $i^*$ - let's call one such policy as $\pi_\WomanSet$ - will have the largest slack, its enough to show that it will get the largest possible slack in any realization of the sample path. In other words, its enough to prove that in any realization, if the policy $\pi_\WomanSet$ select an alternative that belongs to $\ManSet$, then any other adjusted optimal policy will also end up selecting from $\ManSet$, and if $\pi_\WomanSet$ selects $i=0$, then no optimal policy can select someone from $\WomanSet$. To prove this, let us first define $\CurrentIndex_i^t$ as the value of $\CurrentIndex_i$ at iteration t of the while loop in \cref{alg:Pandora}. Now first suppose $\pi_\WomanSet$ selects alternative j, which belongs to $\ManSet$, at iteration T. Now there are 2 cases, i) $\widetilde{v_j}\leq \widetilde{\reserve_j}$, ii)  $\widetilde{v_j} > \widetilde{\reserve_j}$:

\begin{enumerate}[label=(\roman*)]
\item if $\widetilde{v_j}\leq \widetilde{\reserve_j}$: \\
Since at iteration T the policy has chosen j, we know that there was not any alternative from $\WomanSet\cup\{0\}$ available at T among the argmax. Additionally, notice that $\forall i,t$: $\CurrentIndex_i^t$ is equal to either $\widetilde{v_i}$ or $\widetilde{\reserve_i}$. Therefore, we can conclude that $\forall i,t: ~\widetilde{\Kap_i} = min(\widetilde{v_i},\widetilde{\reserve_i}) \leq \CurrentIndex_i^t$.
Now from the fact that there were no alternative from $\WomanSet\cup\{0\}$ present in the argmax at T, we can deduce that $\forall i\in \WomanSet\cup\{0\}: \CurrentIndex_i^T<\CurrentIndex_j^T$. Finally notice that in this case we have $\CurrentIndex_j^T=\widetilde{v_j}=\widetilde{\Kap_j}$. Hence, $\forall i\in \WomanSet\cup\{0\}: \widetilde{\Kap_i}\leq \CurrentIndex_i^T<\CurrentIndex_j^T = \widetilde{\Kap_j}$. This means that no one from $\WomanSet\cup\{0\}$ has largest $\widetilde{\Kap_i}$, which means any optimal policy cannot choose from $\WomanSet$ in this realization. This is exactly the result we wanted to show.

\item if $\widetilde{v_j} > \widetilde{\reserve_j}$: \\
First of all notice that the implication of this case is $c_j>0$. Call $T^{'}$ the time that alternative j is first opened. This means that at the beginning of iteration $T{'}$, $\CurrentIndex_j=\widetilde{\reserve_j}$, j was among the argmax, and no one from group $\WomanSet\cup\{0\}$ was present in the argmax. Accordingly, similar to the previous case, we will have the following inequality chain:
$\forall i\in \WomanSet\cup\{0\}: \widetilde{\Kap_i}\leq \CurrentIndex_i^{T^{'}}<\CurrentIndex_j^{T^{'}} = \widetilde{\reserve_j} = \widetilde{\Kap_j}$,
which again means that there is no way for any adjusted optimal policy to select from $\WomanSet\cup\{0\}$ in this realization.

\end{enumerate}

The remaining situation to analyze is when $j=0$. Nonetheless, the analysis of this situation is almost the same as case (i) above, just replace $\WomanSet\cup\{0\}$ with $\WomanSet$ in it. This will conclude the proof of this remark.\qed
\end{proof}

\fi 

\begin{proof}{\emph{Proof of \Cref{remark:extreme-cons}.}}
Let $\lambda^*$ be a minimizer of $\DualLagrangeConst$ and $\pi^*$ be any optimal policy for the adjusted instance with $\lambda^*$ (where the adjustment is based on \eqref{eq:adjusted-value}).  As stated in the proof of \Cref{lem:const:slack}, any optimal policy $\pi^{\lambda^*+\varepsilon}$ for an adjusted instance with adjustment $\lambda^*+\varepsilon$ for infinitesimal $\varepsilon>0$ is an optimal policy for the adjusted instance with adjustment $\lambda^*$. Note that $\DualLagrangeConst$ is differentiable at $\lambda^*+\varepsilon$ for infinitesimal $\varepsilon$. Then, applying the envelope theorem on $\DualLagrangeConst$ similar to part~(iii) of \Cref{lemma:Const:DualLagrangeProperties}, we conclude that the constraint slack of $\pi^{\lambda^*+\varepsilon}$ is equal to the slope of the convex piecewise linear function $\DualLagrangeSelect$ at point $\lambda^*+\varepsilon$. At the same time, as we showed in \Cref{lem:const:slack}, this quantity is equal to $\slack_{\textsc{cons}}^{\pi^+}$, that is, the constraint slack of the policy $\pi^+$ with positive extreme positive tie-breakining rule $\tau^+$ defined in \Cref{def:tie:extreme}. Now, note that due to the convexity of $\DualLagrangeConst$, this slope is not lower than any subderivative / subgradient of $\DualLagrangeConst$ at $\lambda^*$. At the same time, the constraint slack $\slack_{\textsc{cons}}$of the optimal policy $\pi^*$ for the adjusted instance with $\lambda^*$ is equal to one of the subderivatives of $\DualLagrangeConst$ at $\lambda^*$. Therefore:
$$
\slack_{\textsc{cons}}^{\pi^{+}}\geq \slack_{\textsc{cons}}^{\pi^*}~,
$$
as desired. The counterpart argument $\slack_{\textsc{cons}}^{\pi^{-}}\leq \slack_{\textsc{cons}}^{\pi^*}$ can be proved in a similar fashion
\qed
\end{proof}

\begin{proof}{\emph{Proof of \Cref{thm:const}}.}
First, note that the resulting randomized policy from \Cref{alg:const}, which we denote by $\hat{\policy}$, constitutes an optimal solution of $\DualLagrangeConst (\LagVector^*)$, as it randomizes over two such optimal solutions $\policyplus$ and $\policyminus$. The only part left to prove is that our randomized policy obtains an ex-ante constraint slack $\slack_{\textsc{cons}}^{\hat\policy}$ of exactly equal to zero. By construction,
$$
\slack_{\textsc{cons}}^{\hat{\policy}} = b-\expect{\sum_{i\in[n]}\theta^{S}_i{\select_i^{\hat\policy}}+ \sum_{i\in[n]}\theta^{I}_i{\inspect_i^{\hat\policy}}}  =\frac{\slack_{\textsc{cons}}^{\policyplus}}{\slack_{\textsc{cons}}^{\policyplus}-\slack_{\textsc{cons}}^{\policyminus}}\cdot\slack_{\textsc{cons}}^{\policyminus}-\frac{\slack_{\textsc{cons}}^{\policyminus}}{\slack_{\textsc{cons}}^{\policyplus}-\slack_{\textsc{cons}}^{\policyminus}}\cdot \slack_{\textsc{cons}}^{\policyplus}=0~,
$$
as desired, hence finishing the proof.
\qed
\end{proof}




\revcolor{
\section{More Intuitions and Managerial Insights from \Cref{sec:pandora}}
\label{app:pandora:managerial}

In this section, we explore the implications of our results from \Cref{sec:pandora}, providing several managerial interpretations. We also present illustrative examples that demonstrate the necessity of our specific dual adjustments and randomized tie-breaking rules to achieve the optimal policy. These examples highlight that, although alternative policies or tie-breaking methods may be optimal in certain cases, they generally lead to suboptimal or in-feasible solutions.

\subsection{Implications for Demographic Group Fairness in Selection}
\label{sec:insights-dem-parity}
Focusing on the special case of \ref{eq:parity} in selection, we have the following managerial observations:

\begin{enumerate}[label=(\roman*),leftmargin=0.22in]
\item \textbf{More advantage to the under-represented group:} Ignoring tie-breaking, the adjustment in constructing the instance \(\{(\widetilde{\values}_i, \widetilde{F}_i, c_i) \mid i \in [n]\}\) based on Equation~\eqref{eq:adjusted-value} is both intuitive and economically interpretable. To illustrate, consider the optimal solution of the unconstrained problem. If the ex-ante number of selections from both groups is equal, the policy also satisfies parity in selection. Otherwise, suppose that group \(\WomanSet\) has a lower ex-ante number of selections, making it the \emph{under-represented} group. In this case, \(\LagVector^* > 0\).\footnote{For any \(\LagVector < 0\), we have \(\DualLagrangeConst(\LagVector) > \DualLagrangeConst(0)\) in this special case, thus \(\LagVector\) cannot be a minimizer.} This implies that our adjustment (i) uniformly increases the rewards of those in \(\WomanSet\) by \(\LagVector^*\), (ii) uniformly decreases the rewards of those in \(\ManSet\) by \(\LagVector^*\), and (iii) does not adjust any costs.


\item \textbf{Preserving within-Group order and interleaving between groups:} In the adjusted instance, the indices are uniformly shifted: \(\widetilde{\reserve_i} = \reserve_i + \LagVector^*\) for all \(i \in \WomanSet\) and \(\widetilde{\reserve_i} = \reserve_i - \LagVector^*\) for all \(i \in \ManSet\). Ignoring tie-breaking, a key structural property of the optimal policy is that the within-group order of candidates is preserved after this adjustment. The only change in the search process pertains to the interleaved inspection order between the two groups.\footnote{As seen from our general adjustments in \Cref{eq:adjusted-instance} and the tie-breaking rules in \Cref{def:tie:extreme}, the optimal policy for \eqref{eq:opt-constrained} employs a non-trivial adaptive ordering over the boxes in the general case, due to the non-linear relationships between costs and indices defined in \Cref{eq:base:reserve}. For instance, in the case of \ref{eq:parity} in inspection, the optimal policy does not necessarily preserve the within-group order, unlike the optimal policy for \ref{eq:parity} in selection.} Interleaving the relative inspection orderings of the two groups is a delicate aspect of our optimal policy. For example, consider a naive policy that achieves parity in selection by randomizing with probability \(1/2\) between two search processes, each exclusively searching within one group and utilizing all available capacity. This policy does not interleave the inspection orderings of the groups and consequently suffers from an optimality gap, as illustrated in the example below.

\begin{example}
\label{example:FS}
Consider instance $\mathcal{I}$ for selecting one out of four candidates. Candidates 1 and 2 belong to $\ManSet$, and 3 and 4 belong to $\WomanSet$. All inspection costs are normalized to be $1$, and,
\begin{equation*}
    v_1=\begin{cases}
        10&\textrm{w.p.}~1/2\\
        4&\textrm{w.p.}~1/2
    \end{cases},~~~
    v_2=\begin{cases}
        9&\textrm{w.p.}~1/2\\
        3&\textrm{w.p.}~1/2
    \end{cases},~~~
    v_3=\begin{cases}
        8&\textrm{w.p.}~1/2\\
        2&\textrm{w.p.}~1/2
    \end{cases},~~~
    v_4=\begin{cases}
        7&\textrm{w.p.}~1/2\\
        1&\textrm{w.p.}~1/2
    \end{cases}~.
\end{equation*}
Note that in such an example, $\sigma_1=8$, $\sigma_2=7$, $\sigma_3=6$ and $\sigma_4=5$. The optimal unfair policy $\policy_{\textsc{unfair}}$ inspects the candidates in the order $1\rightarrow 2 \rightarrow 3 \rightarrow 4$, and $\util\left(\policy_{\textsc{unfair}};\mathcal{I}\right)\approx 7.06$. However it is unfair: it selects from group $\WomanSet$ with probability $\approx 0.1875$. The naive fair policy $\policy_{\textsc{naive}}$ (defined above) flips a coin to decide which group to consider and then inspects $\ManSet$ in the order $1\rightarrow 2$ and $\WomanSet$ in the order $3\rightarrow 4$. This policy selects exactly with probability $0.5$ from each group, but only generates $\util\left(\policy_{\textsc{naive}};\mathcal{I}\right)\approx 5.75$. Finally, our optimal fair policy $\policy_\textsc{fair}$, based on \Cref{alg:const}, inspects the candidates in the order of $1\rightarrow 3\rightarrow 2 \rightarrow 4$ with probability $0.5$, and in the order of $3\rightarrow 1\rightarrow 4 \rightarrow 2$ otherwise. It not only selects from each group with probability exactly $0.5$, but also generates $\util\left(\policy_{\textsc{fair}};\mathcal{I}\right)\approx 6.5625$.
\end{example}
\end{enumerate}

\subsection{The Necessity of Going Beyond Group-level Tie-breaking Rules}
\label{sec:beyond-group}
As discussed earlier in \Cref{sec:pandora}, for the special case of \ref{eq:parity} in selection, restricting to simple and intuitive group-level tie-breaking rules was sufficient to obtain two rules with opposite slack signs. This condition is both necessary and sufficient for constructing a randomized tie-breaking rule that exactly satisfies the ex-ante constraint. However, this simplifying property does not hold for all constraints, including \ref{eq:parity} in inspection. In this section, we provide a simple counterexample to illustrate this limitation. This example underscores the necessity of moving beyond group-level tie-breaking rules by specifying precise within-group order, to be able to satisfy the ex-ante constraint.


\begin{example}
\label{example:tie}
Consider instance $\mathcal{I}$ for selecting one out of four candidates. Candidates 1 and 2 belong to $\ManSet$, and 3 and 4 belong to $\WomanSet$. All inspection costs are normalized to be $1$, and,
\begin{align*}
    v_1 =
    \begin{cases}
        7.5 & \textrm{w.p.}~2/3\\
        4 & \textrm{w.p.}~1/3
    \end{cases},~~~
    v_2 = 
    \begin{cases}
        9 & \textrm{w.p.}~1/3\\
        4 & \textrm{w.p.}~2/3,
    \end{cases},~~~
    v_3 =
    \begin{cases}
        10 & \textrm{w.p.}~1/4\\
        4 & \textrm{w.p.} ~3/4,
    \end{cases},~~~
    v_4 =
    \begin{cases}
        7 & \textrm{w.p.}~1/2\\
        4 & \textrm{w.p.}~1/2,
    \end{cases}
\end{align*}
\end{example}

Consider the problem of finding the optimal constrained policy subject to \ref{eq:parity} in inspection for this instance. It is easy to verity that $\lambda^*=0$, indicating that there exists an optimal policy for the constrained problem that is also optimal for the unconstrained problem. By simple calculations, we find that  $\reserve_1=\reserve_2=\reserve_3=6$ and $\reserve_4=5$. Moreover, no value realization of any candidate can equal these reservation values. Therefore, ties can only occur in the inspection order of boxes $1$,$2$, and $3$.

Suppose the optimal policy fixes the within-group inspection order in $\ManSet$ such that candidate $2$ is inspected before candidate $1$. Recall the definition of the constraint slack:
\begin{align}
\label{eq:slack_inspection_tiebreak}
\slack_{\textsc{cons}}^{i_1,i_2,i_3} \triangleq \expect{\sum_{i\in \WomanSet}{\inspect_i^{\policy^{i_1,i_2,i_3}}} - \sum_{i\in \ManSet}{\inspect_i^{\policy^{i_1,i_2,i_3}}}},
\end{align}
where $i_1,i_2,i_3$ is a permutation of candidates $\{1,2,3\}$ and $\policy^{i_1,i_2,i_3}$ is the optimal policy that breaks the ties in the order $i_1 \succ i_2 \succ i_3$. Simple calculations show that:
$$
\slack_{\textsc{cons}}^{3,2,1},~\slack_{\textsc{cons}}^{2,3,1},~\slack_{\textsc{cons}}^{2,1,3} < 0~,
$$
indicating that no optimal policy (deterministic or randomized) with this fixed within-group order in $\ManSet$ can satisfy the constraint exactly. However, by occasionally changing the within-group order in $\ManSet$ to have candidate $1$ inspected before candidate $2$, we observe that:
$$-\frac{1}{12}=\slack_{\textsc{cons}}^{3,2,1} < 0 <\slack_{\textsc{cons}}^{3,1,2}=\frac{1}{6}~,$$
which implies that randomizing between the two orders \(3 \succ 2 \succ 1\) and \(3 \succ 1 \succ 2\)---which have different within-group orders for boxes in $\ManSet$---allows us to satisfy the constraint exactly, as expected.

}


\section{Technical Details of \texorpdfstring{\Cref{sec:quantile-constraint}}{}: Exact Optimal Policy for Pandora's Box with Value-Specific Ex-ante Affine Constraint}
\label{app:sec:pandora-extension}
In this section, we provide all the technical details needed to extend our framework in \Cref{sec:single-affine-policy} to incorporate value-specific constraints, as defined in Constraint~\ref{eq:general-constraint}. We start by providing some applications of this category of constraints. We then elaborate on how to generalize dual-based adjustments and extreme tie-breaking rules to this setting. We finish by providing the main result of this section, which is a characterization of the optimal constrained policy.

\subsection{Various Applications of value-specific constraints} Consider a threshold-based refinement of \eqref{eq:parity}, for selection or inspection, in which we set: 
$$
\theta_i^S=\begin{cases}
    -\mathbb{I}\{v_i\geq \underbar{v}_{\WomanSet}\} & i\in\WomanSet\\
    +\mathbb{I}\{v_i\geq \underbar{v}_\ManSet\} & i\in\ManSet
\end{cases}~, \theta_i^I=0~~\left(\textrm{or}~~
\theta_i^I=\begin{cases}
    -\mathbb{I}\{v_i\geq \underbar{v}_\WomanSet\} & i\in\WomanSet\\
    +\mathbb{I}\{v_i\geq \underbar{v}_\ManSet\} & i\in\ManSet
\end{cases}~, \theta_i^S=0
\right)~,
$$
where $\underbar{v}_{\WomanSet}\in\mathbb{R}_{\geq}$ (resp. $\underbar{v}_{\ManSet}\in\mathbb{R}_{\geq}$) is a threshold defining 
``acceptable'' values for group $\WomanSet$ (resp. $\ManSet$). 
Typically, we would like to set the thresholds $\underbar{v}_{\WomanSet},\underbar{v}_{\ManSet}$ high enough to exclude low-quality candidates and avoid issues such as token interviews as mentioned earlier. Alternatively, we can also consider a threshold-specific refinement of \eqref{eq:quota}, again for both selection and inspection, in which we set:
$$
\theta_i^S=\begin{cases}
    (\theta-1)\cdot\mathbb{I}\{v_i\geq \underbar{v}_{\WomanSet}\} & i\in\WomanSet\\
    \theta \cdot \mathbb{I}\{v_i\geq \underbar{v}_\ManSet\} & i\in\ManSet
\end{cases}~, \theta_i^I=0~~\left(\textrm{or}~~\theta_i^I=\begin{cases}
    (\theta-1)\cdot\mathbb{I}\{v_i\geq \underbar{v}_{\WomanSet}\} & i\in\WomanSet\\
    \theta \cdot \mathbb{I}\{v_i\geq \underbar{v}_{\ManSet}\} & i\in\ManSet
\end{cases}~, \theta_i^S=0\right)~.
$$
This focus on higher values achieves multiple objectives. First, it signals that opportunities (e.g., being interviewed or hired in the context of search and hiring) are accessible regardless of the demographic group, as long as the individual is considered as a top performer. Second, it promotes outcomes that are \emph{truly} fair by eliminating the need for token interviews, as elaborated earlier. 

Another significant application of this refined approach to fairness arises in scenarios involving high-cost minority candidates. For example, candidates residing in geographically challenging or inaccessible locations may incur higher inspection costs for the decision-maker. By incorporating a fairness constraint tailored to these high-cost individuals, a guaranteed level of opportunity---be it in the form of interviews or job offers---can be ensured for this group. For example, we can formulate a refinement of \eqref{eq:quota} for selection or inspection, in which we set:
$$
\theta_i^S=\begin{cases}
    (\theta-1)\cdot\mathbb{I}\{c_i\geq \underbar{c}\} & i\in\WomanSet\\
    \theta  & i\in\ManSet
\end{cases}~~,\theta_i^I=0 ~~\left(\textrm{or}~~\theta_i^I=\begin{cases}
    (\theta-1)\cdot\mathbb{I}\{c_i\geq \underbar{c}\} & i\in\WomanSet\\
    \theta  & i\in\ManSet
\end{cases}~~,\theta_i^S=0\right)~,
$$
where $\underbar{c}$ is the defining lower-limit of the cost for high-cost minority group. We note that this refinement mitigates the risk that these candidates are categorically overlooked due to cost considerations, thus adding another layer of nuance to fairness in hiring and search processes. 

\subsection{Dual-based Adjustments \& Extreme Tie-breaking Rules for Value-specific Constraints}
To handle the refined Constraint~\eqref{eq:general-constraint}, we first observe that for any adaptive feasible policy $\pi$, the indicator random variable $\inspect^\policy_i$ for inspecting box $i$ is independent from the value $v_i$ of the box. Therefore, by following exactly the same recipe as in \Cref{sec:single-affine-policy} (i.e., Lagrangifying the constraint and re-arranging the terms in the Lagrangian function $\LagrangeConst$) and applying the law of iterated expectations, 
we get the following equivalent form for the Lagrangian function:
\begin{equation*}
\LagrangeConst (\pi;\LagVector) 
 = \expect{\sum_{i \in[\altnum]} \select_i^{\policy}(v_i-\LagVector\cdot\theta^{S}_i(v_i,c_i))-\sum_{i\in[n]}\inspect_i^{\policy}   \left(c_i+\LagVector\cdot\mathbf{E}_{v_i\sim F_i}\left[\theta^{I}_i(v_i,c_i)\right]\right)} +\LagVector\cdot b~,
\label{eq:largrange-general}  
\end{equation*}
which in turn suggests the following refined dual-adjustment of the values and the costs given  $\lambda$ (cf. the earlier dual adjustment in \eqref{eq:adjusted-instance}):
\begin{equation}
\label{eq:general-adjusted-instance}
\widetilde{v_i}\triangleq v_i-\LagVector\cdot\theta^{S}_i(v_i,c_i)~~~,~~~\widetilde{c_i}\triangleq c_i+\LagVector\cdot\mathbf{E}_{v_i\sim F_i}\left[\theta^{I}_i(v_i,c_i)\right]~.
\end{equation}
As before, the Lagrange dual function $\DualLagrangeConst$ can be defined as the minimizer of the Lagrangian function over all feasible policies. Moreover, by solving an adjusted instance based on the adjustment in \eqref{eq:general-adjusted-instance}, we obtain query access to $\DualLagrangeConst$ (through the optimal objective value of the adjusted instance) and $\frac{\partial\DualLagrangeConst}{\partial \lambda}$ (through the corresponding constraint slack $\slack_{\textsc{cons}}$ of the optimal policy after adjustments). Finally, given the minimizer $\lambda^*$ of $\DualLagrangeConst$, the two optimal policies $\policy^+$ and $\policy^{-}$ corresponding to the perturbed adjusted instances with respect to $\lambda^*+\epsilon$ and $\lambda^*-\epsilon$, respectively, (i) will still be optimal for an instance with adjustment corresponding to $\lambda^*$, and (ii) will define the two extreme tie-breaking rules $\tau^+$ and $\tau^-$ that guarantee positive and negative slacks, respectively (similar to \Cref{lem:const:slack}).







Before explicitly characterizing these two extreme tie-breaking rules $\tau^+$ and $\tau^-$, let us first provide the required technical notation and setup. Consider the optimal adjusted instance, as defined in \eqref{eq:general-adjusted-instance} where $\LagVector=\LagVector^*$. Recall the definition of the maximum adjusted option value $\widetilde{o_{\textrm{max}}}\triangleq \underset{i\in([\altnum]\setminus\mathcal{S})\cup\{0\}}{\max}~\widetilde{o_i}$, defined in any round in the execution of  \Cref{alg:Pandora} on the adjusted instance. Consider any candidate $i$ and let $\{V_{i,1}, V_{i,2}, ... ,V_{i,L} \}$, for some $L\in \mathbb{N}\cup\{0\}$, be all the values in $\values_i$ (i.e., the support of $F_i$)  that after adjustment have all became equal to $\widetilde{o_{\textrm{max}}}$, i.e., $\widetilde{V_{i,\ell}}\triangleq V_{i,\ell}-\LagVector\cdot\theta^{S}_i(V_{i,\ell},c_i)=\widetilde{o_{\textrm{max}}}, \forall \ell \in [L]$.   Without loss, suppose that these values are sorted in decreasing order according to their $\theta^{S}_i(V_{i,\ell},c_i)$, that is, $\theta^{S}_i(V_{i,\ell},c_i)\geq \theta^{S}_i(V_{i,\ell'},c_i)$ if $\ell\leq \ell'$. With this in mind, consider two nested sequences $\mathcal{E}^-_0\subseteq \mathcal{E}^-_1\subseteq ... \subseteq\mathcal{E}^-_L$ and $\mathcal{E}^+_0\subseteq \mathcal{E}^+_1\subseteq ... \subseteq\mathcal{E}^+_L$  of subsets of support of $F_i$ defined below:
\begin{align}
\label{eq:events_positive}
    \mathcal{E}^-_\ell & \triangleq \left\{V\in\values_i: V-\LagVector\cdot\theta^{S}_i(v,c_i)>\widetilde{o_{\textrm{max}}} \right\} \cup \left \{V_{i,j}\right\}_{1 \leq j \leq \ell} , \quad \quad & \forall \ell: 0 \leq \ell \leq L \\
    \label{eq:events_positive-2}
    \mathcal{E}^+_\ell & \triangleq \left\{V\in\values_i: V-\LagVector\cdot\theta^{S}_i(v,c_i)>\widetilde{o_{\textrm{max}}} \right\} \cup \left \{V_{i,j} \right \}_{L+1-\ell \leq j \leq L} , \quad \quad & \forall \ell: 0 \leq \ell \leq L 
\end{align}
With these two sequences of subsets defined, we provide the exact characteristics of our two extreme tie-breaking rules in the following definition. To simplify the notation, we also slightly abuse the notation and just use $\theta^{S}_i$ (resp. $\theta^{I}_i$) rather than $\theta^{S}_i(v_i,c_i)$ (resp. $\theta^{I}_i(v_i,c_i)$), while keeping in mind that these numbers are random variables for boxes that are yet to be opened.

\smallskip
\begin{definition}[\textbf{Refined Extreme Tie-Breaking Rule}]
\label{def:refined-ext-tie}
Given any set of candidates $\candidates$ for breaking ties at any point during the execution of \Cref{alg:Pandora} (Line 7), the \emph{negative-extreme rule}, 
denoted by $\tierule^{-}$, assigns a \emph{tie-breaking score} $s^-_i\in\mathbb{R}$ to each $i\in\candidates$ as follows (here, $\mathcal{E}^-_0\subseteq \mathcal{E}^-_1\subseteq ... \subseteq\mathcal{E}^-_L$ and $\mathcal{E}^+_0\subseteq \mathcal{E}^+_1\subseteq ... \subseteq\mathcal{E}^+_L$ are two nested sequence of subsets of $\values_i$ at this point in the execution of the algorithm on adjusted instance, as defined in \cref{eq:events_positive} and \cref{eq:events_positive-2}):
\begin{itemize}
    \item For $i\displaystyle \in\mathcal{C}\setminus\mathcal{O}$:
    \begin{itemize}[leftmargin=*]
        \item If $\displaystyle c_i<0$, set $\displaystyle s^-_i\leftarrow +\infty$.
        \item If $\displaystyle c_i\geq 0$, set $\displaystyle s^-_i\leftarrow \max_{0\leq \ell\leq L} \left\{ \mathbf{E}_{v_i\sim F_i}\left[\theta^{S}_i | \mathcal{E}^-_\ell \right] + \frac{\mathbf{E}_{v_i\sim F_i}\left[\theta^{I}_i\right]}{\prob{\mathcal{E}^-_\ell}} \right\}$ 
        
       \noindent
       \begingroup\renewcommand*{\arraystretch}{1.5}
       $\left(\begin{matrix*}[l]
            s^-_i=+\infty \quad \textrm{if}~\prob{\mathcal{E}^-_0}=0 ~\textrm{and}~ \mathbf{E}_{v_i\sim F_i}\left[\theta^{I}_i\right]\geq 0\\ 
             s^-_i=-\infty \quad \textrm{if}~\prob{\mathcal{E}^-_L}=0 ~\textrm{and}~\mathbf{E}_{v_i\sim F_i}\left[\theta^{I}_i\right]< 0.
             \end{matrix*}\right)$
        \endgroup

    \end{itemize}
       
   
    \item For $\displaystyle i\in \candidates\cap\mathcal{O}$:
    \begin{itemize}[leftmargin=*]
        \item If $i\neq 0$, set $\displaystyle s^-_i\leftarrow \theta^{S}_i$, and if $i=0$ (that is, outside option), set $s^-_i\leftarrow 0$.
    \end{itemize}
\end{itemize} 
Similarly, the counterpart rule, calling it {\em positive-extreme rule} and denote it by $\tierule^{+}$, assigns a \emph{tie-breaking score} $s^+_i\in\mathbb{R}$ to each $i\in\candidates$ as follows: 
\begin{itemize}[leftmargin=*]
    \item For $i\displaystyle \in\mathcal{C}\setminus\mathcal{O}$:
    \begin{itemize}[leftmargin=*]
        \item If $\displaystyle c_i<0$, set $\displaystyle s^+_i\leftarrow +\infty$.
        \item If $\displaystyle c_i\geq 0$, set $\displaystyle s^+_i\leftarrow \max_{0\leq \ell\leq L} \left\{ -\mathbf{E}_{v_i\sim F_i}\left[\theta^{S}_i | \mathcal{E}^+_\ell \right] - \frac{\mathbf{E}_{v_i\sim F_i}\left[\theta^{I}_i\right]}{\prob{\mathcal{E}^+_\ell}} \right\}$ 
      
 \noindent
       \begingroup\renewcommand*{\arraystretch}{1.5}
       $\left(\begin{matrix*}[l]
            s^+_i=+\infty \quad \textrm{if}~\prob{\mathcal{E}^+_0}=0 ~\textrm{and}~ \mathbf{E}_{v_i\sim F_i}\left[\theta^{I}_i\right]\leq 0\\ 
             s^+_i=-\infty \quad \textrm{if}~\prob{\mathcal{E}^+_L}=0 ~\textrm{and}~\mathbf{E}_{v_i\sim F_i}\left[\theta^{I}_i\right]> 0.
             \end{matrix*}\right)$
        \endgroup
        
    \end{itemize}
       \item For $\displaystyle i\in \candidates\cap\mathcal{O}$:
       \begin{itemize}[leftmargin=*]
           \item If $i\neq 0$, set $\displaystyle s^+_i\leftarrow -\theta^{S}_i$, and if $i=0$ (that is, outside option), set $s^+_i\leftarrow 0$.
       \end{itemize}
    \end{itemize} 
Then, the rule $\tierule^{-}$ (resp. $\tierule^{+}$) breaks the ties in favor of scores $\{s^-_i\}_{i\in\candidates}$ (resp. $\{s^+_i\}_{i\in\candidates}$), that is, it returns any $\displaystyle i^*\in \underset{i\in\candidates}{\argmax}~s^-_i$ (resp. any $\displaystyle i^*\in \underset{i\in\candidates}{\argmax}~s^+_i$). 

\end{definition}
\medskip

Given the above definitions of (i) dual-adjusted problem instance in \Cref{eq:general-adjusted-instance} and (ii) extreme tie-breaking rules in \Cref{def:refined-ext-tie}, we are ready to state and prove our main result for this section, which is \Cref{prop:general-constraint}.

\begin{theorem}[Optimal Policy for Value-specific Constrained Problem]
\label{prop:general-constraint} 
    Consider a modified version of policy RDIP (described in \Cref{alg:const}) in which:
    \begin{itemize}[leftmargin=*]
        \item in line~(2), the adjusted instances $\{({\widetilde{\values_i}}, \widetilde{F_i}, \widetilde{c_i})| i \in [n]\}$ is defined based on \eqref{eq:general-adjusted-instance}, i.e., $\widetilde{v_i}\triangleq v_i-\LagVector^*\cdot\theta^{S}_i(v_i,c_i)$ and $\widetilde{c_i}\triangleq c_i+\LagVector^*\cdot\mathbf{E}_{v_i\sim F_i}\left[\theta^{I}_i(v_i,c_i)\right]$, 
        \item in line~(4), the extreme tie-breaking rules $\{\tau^{+},\tau^{-}\}$ are defined based on the scoring rules introduced in \Cref{def:refined-ext-tie} in \Cref{app:sec:pandora-extension}.
    \end{itemize}
    Then this modified policy is optimal for the constrained Pandora's box problem with multiple selection, defined in \eqref{eq:opt-constrained}, under a value-specific ex-ante affine constraint as in \Cref{eq:general-constraint}.
\end{theorem}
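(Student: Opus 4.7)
The plan is to follow the same three-step recipe used in the proof of \Cref{thm:const}: (i) show that the Lagrangian relaxation of \eqref{eq:opt-constrained} under \eqref{eq:general-constraint} reduces to an unconstrained Pandora's box on the value-specific adjusted instance \eqref{eq:general-adjusted-instance}; (ii) establish the structural properties of the Lagrange dual function $\DualLagrangeConst$; and (iii) show that the refined extreme tie-breaking rules $\tau^+, \tau^-$ of \Cref{def:refined-ext-tie} produce optimal policies for the adjusted instance with nonnegative and nonpositive constraint slacks, so that an appropriate randomization between them yields an exactly feasible optimal policy.

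For step (i), the only subtlety relative to \Cref{sec:single-affine-policy} is that $\theta_i^S(v_i,c_i)$ and $\theta_i^I(v_i,c_i)$ are now random. However, since an adaptive policy decides whether to inspect box $i$ based only on the observed history of other boxes, the random variable $\inspect_i^\policy$ is independent of $v_i$; hence the law of iterated expectation lets us replace $\theta_i^I(v_i,c_i)$ by $\mathbf{E}_{v_i\sim F_i}[\theta_i^I(v_i,c_i)]$ inside the cost term, while the selection term $\select_i^\policy \theta_i^S(v_i,c_i)$ remains value-specific and can be absorbed into the adjusted reward $\widetilde{v_i}=v_i-\LagVector\cdot\theta_i^S(v_i,c_i)$. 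This gives exactly the adjusted instance \eqref{eq:general-adjusted-instance}. For step (ii), the proof of \Cref{lemma:Const:DualLagrangeProperties} goes through verbatim: $\DualLagrangeConst$ is still the pointwise maximum of finitely many linear functions of $\LagVector$ (one per deterministic policy), hence piecewise-linear and convex; boundedness of a minimizer follows from feasibility; and the envelope theorem identifies $\slack_{\textsc{cons}}^{\policy^{\LagVector}}$ as a subgradient.

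The main obstacle, and the only place where the argument requires genuinely new ideas, is step~(iii), i.e., the analog of \Cref{lem:const:slack} for the refined scoring of \Cref{def:refined-ext-tie}. The plan is to fix $\LagVector^*\in\arg\min \DualLagrangeConst$ and consider, for an infinitesimal $\varepsilon>0$, the perturbed adjusted instance with $\LagVector=\LagVector^*-\varepsilon$. As before, any optimal policy for this perturbed instance is still optimal for the $\LagVector^*$-adjusted instance and has nonpositive slack by convexity of $\DualLagrangeConst$. We then want to identify the tie-breaking rule induced by running \Cref{alg:Pandora} on the $(\LagVector^*-\varepsilon)$-perturbed instance, viewed as a tie-breaking rule on the $\LagVector^*$-adjusted instance. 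The new difficulty is that $\theta_i^S(v,c_i)$ depends on $v$, so values that are tied at $\widetilde{o_{\max}}$ in the $\LagVector^*$-instance do \emph{not} shift uniformly under the $\varepsilon$-perturbation: a value $V\in\values_i$ shifts by $\varepsilon\cdot\theta_i^S(V,c_i)$, so after perturbation only those tied values $V_{i,\ell}$ with sufficiently large $\theta_i^S(V_{i,\ell},c_i)$ survive strictly above $\widetilde{o_{\max}}$. Ordering $V_{i,1},\dots,V_{i,L}$ by decreasing $\theta_i^S(V_{i,\ell},c_i)$ as in the paragraph preceding \Cref{eq:events_positive}, the ``surviving'' event becomes exactly $\mathcal{E}_\ell^-$ for some $\ell\in\{0,1,\dots,L\}$, and the implicit definition of the perturbed adjusted reservation value via $\mathbf{E}[(\widetilde{v_i}^-\!-\widetilde{\sigma_i}^-)^+]=\widetilde{c_i}^-$ forces $\widetilde{\sigma_i}^-=\widetilde{\sigma_i}+\varepsilon\cdot s_i^-$ where
\[
s_i^- \;=\; \max_{0\leq \ell\leq L}\left\{ \mathbf{E}_{v_i\sim F_i}\bigl[\theta_i^S \mid \mathcal{E}_\ell^-\bigr] \;+\; \frac{\mathbf{E}_{v_i\sim F_i}[\theta_i^I]}{\Pr[\mathcal{E}_\ell^-]}\right\}.
\]
The ``$\max_\ell$'' arises precisely because the correct $\ell$ is the one that equates the perturbed expected exceedance to the perturbed cost at first order; equivalently, it selects the break-point of a piecewise-linear equation in $\varepsilon$ that matches the optimal ordering in the infinitesimal limit. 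Combined with the already-handled cases for opened boxes and for boxes with strictly negative or zero unperturbed costs, this matches the scoring rule $\tierule^{-}$ of \Cref{def:refined-ext-tie} exactly. The counterpart argument for $\tierule^{+}$ follows symmetrically by perturbing with $\LagVector^*+\varepsilon$ and using the decreasing chain $\mathcal{E}_\ell^+$.

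Given step~(iii), the rest is immediate: $\policyminus$ (resp.\ $\policyplus$) is an optimal policy for the $\LagVector^*$-adjusted instance with $\slack_{\textsc{cons}}^{\policyminus}\leq 0 \leq \slack_{\textsc{cons}}^{\policyplus}$, and the convex combination in the final line of \Cref{alg:const} produces a randomized policy that is still optimal for $\DualLagrangeConst(\LagVector^*)$ and has zero slack by the same calculation as in the proof of \Cref{thm:const}. Weak duality $\DualLagrangeConst(\LagVector^*)\geq \OptConstrained$ together with feasibility then upgrades this to optimality for \eqref{eq:opt-constrained} under \eqref{eq:general-constraint}. The core verification burden is concentrated in the ``infinitesimal perturbation $\Leftrightarrow$ scoring rule'' correspondence of step~(iii), which is where the new ingredient---the nested sets $\mathcal{E}_\ell^\pm$ and the $\max_\ell$ in \Cref{def:refined-ext-tie}---does its work; everything else is a transcription of the single-affine-constraint argument.
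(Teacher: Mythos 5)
Your proposal is correct and follows essentially the same route as the paper: reduce to the value-specific dual-adjusted instance via iterated expectation (using independence of $\inspect_i^\policy$ from $v_i$), carry over the properties of $\DualLagrangeConst$, and derive the refined extreme scores by matching the first-order slope of the perturbed reservation value $\widetilde{\sigma_i}^-=\widetilde{\sigma_i}+\varepsilon\delta$ against the self-consistency condition on which tied values survive, which singles out the event $\mathcal{E}_\ell^-$ whose $d_\ell$ is maximal. The paper's proof fills in exactly the verification you gesture at (that the consistent $\ell$ is the argmax of $d_\ell$, via unimodality of the sequence $d_\ell$), but the decomposition and the key perturbation argument are identical.
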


\begin{proof}{\emph{Proof of \Cref{prop:general-constraint}.}}
In general the proof of \Cref{prop:general-constraint} is very similar to that of \Cref{lem:const:slack} and \Cref{thm:const}, as such we only provide the parts that have a non-trivial analog. In particular, the first step of \Cref{lem:const:slack} and the proof of \Cref{thm:const} can also be used here. Therefore, the only part in which we need to provide details is the second step in the proof of \Cref{lem:const:slack}. We show it here only for the negative-extreme rule, but the proof of the positive-extreme rule would be exactly the same (since the only difference is that instead of $-\varepsilon$ we have $\varepsilon$, which will just change the signs of all perturbations). Furthermore, all opened boxes, as well as all degenerate unopened boxes for which the score will be set to $+\infty$ or $-\infty$ will also be treated in the same way. As a result, the remaining part is to show that the score $\max_{0\leq \ell\leq L} \left\{ \mathbf{E}_{v_i\sim F_i}\left[\theta^{S}_i | \mathcal{E}^-_\ell \right] + \frac{\mathbf{E}_{v_i\sim F_i}\left[\theta^{I}_i\right]}{\prob{\mathcal{E}^-_\ell}} \right\}$ is in fact the correct amount for an unopened box whose perturbed adjusted cost $\widetilde{c_i}^-$, by perturbing $\LagVector$ with $-\varepsilon$, is positive and also is among $\underset{i\in([\altnum]\setminus\mathcal{S})\cup\{0\}}{\argmax} \{ \widetilde{o_i}\}$, which means $\widetilde{\reserve_i}=\widetilde{o_i}=\widetilde{o_{\textrm{max}}}$.

First, recall that the total number of possible deterministic policies is finite, indicating that there exists an $\Bar{\varepsilon}>0$, such that the optimal policy for the perturbed instance $\LagVector^*-\varepsilon$ remains the same for all $0 \leq \varepsilon \leq \Bar{\varepsilon}$. This shows that the ordering among all perturbed values and reservation values ($\widetilde{\reserve_i}$) will remain exactly the same during the entire perturbation interval $\varepsilon \in (0, \Bar{\varepsilon})$.

Knowing that this ordering will remain unchanged over a sufficiently small interval, it is easy to verify that (i) the change in both $\widetilde{v_i}$ and $\widetilde{c_i}$ is linear in $\varepsilon$, and (ii) as a result, the change in $\widetilde{\reserve_i}$ (= $\widetilde{o_i}$) is linear in $\varepsilon$. Let $\widetilde{\reserve_i}^-(\varepsilon)$ be the adjusted reservation value of box $i$ after perturbation $-\varepsilon$ (hence $\widetilde{\reserve_i}^-(0)=\widetilde{\reserve_i}$ and $\widetilde{\reserve_i}^-(\varepsilon)$ is linear in $\varepsilon$ for $\varepsilon\in(0,\bar{\varepsilon}$).
Given these linear functions $\widetilde{\reserve_i}^-$ for different boxes, among all boxes in $\candidates$ (those with a tie), the box $i$ with the highest slope would be the one with the highest $\widetilde{o_i}$ after the perturbation, as all these boxes have the same adjusted reservation value $\widetilde{\reserve_i}=\widetilde{o_i}=\widetilde{o_{\textrm{max}}}$ before the perturbation.

Thus, the only remaining part of the proof is to find an explicit formula for the slope of $\widetilde{\reserve_i}^-$ for such boxes in $\candidates$, for which we have $\widetilde{\reserve_i} = \widetilde{o_i}=\widetilde{o_{\textrm{max}}}$. Let $\widetilde{\reserve_i}^-=\widetilde{\reserve_i} + \varepsilon \times \delta$, where $\delta$ is the slope of the linear function $\widetilde{\reserve_i}^-$.  Denoting the adjusted cost and adjusted values of box $i$ after perturbation $-\varepsilon$ (according to \Cref{eq:general-adjusted-instance}) by $\widetilde{c_i}^-$ and $\widetilde{v_i}^-$, respectively, the following equation should hold:
\begin{equation} 
\label{eq:delta-1}
    \widetilde{c_i} - \varepsilon  \times \mathbf{E}_{v_i\sim F_i}\left[\theta^{I}_i(v_i,c_i)\right] \overset{(1)}{=} \widetilde{c_i}^- \overset{(2)}{=}\mathbf{E}_{v_i\sim F_i}\left[(\widetilde{v_i}^- - \widetilde{\reserve_i}^-)^+ \right]= \displaystyle\sum_{V_i\in \textrm{supp}(F_i):\widetilde{V_i}^{-}\geq \widetilde{\reserve_i}^-} (\widetilde{V_i}^- - \widetilde{\reserve_i}^-)\prob{V_i}~,
\end{equation} 
where $\widetilde{V_i}^-\equiv \widetilde{V_i}+\varepsilon\times \theta^{S}_i(V_{i},c_i)$, Equation~(1) holds due to the definition of adjusted cost after perturbation in \Cref{eq:general-adjusted-instance}, and Equation~(2) holds due to the definition of the reservation value. Note that for a value $V_i$ in the support of $F_i$, if $\widetilde{V_i}>\widetilde{\reserve_i}$ then $\widetilde{V_i}^{-}\geq\widetilde{\reserve_i}^-$ for sufficiently small $\varepsilon$. First, suppose that there is no value $V_{i}$ in the support of box $i$ such that $\widetilde{V_{i}}\equiv V_{i}-\LagVector\cdot\theta^{S}_i(V_{i},c_i)=\widetilde{\reserve_i}$. In this case, taking the derivative with respect to $\varepsilon$ of both sides of \Cref{eq:delta-1} and rearranging the terms, it is easy to show that $\delta= \mathbf{E}_{v_i\sim F_i}\left[\theta^{S}_i | \mathcal{E}^-_0 \right] + \frac{\mathbf{E}_{v_i\sim F_i}\left[\theta^{I}_i\right]}{\prob{\mathcal{E}^-_0}}$ (similar to the way we calculated $\delta$ in the proof of step~2 in \Cref{lem:const:slack}). 

Now, assume that there are $L\geq 1$ values $\{V_{i,1}, V_{i,2}, ... ,V_{i,L} \}$ in the support of $F_i$ that satisfy $\widetilde{V_{i,\ell}}\equiv V_{i,\ell}-\LagVector\cdot\theta^{S}_i(V_{i,\ell},c_i)=\widetilde{\reserve_i}$.
To find a similar characterization for $\delta$ using \Cref{eq:delta-1}, we have to find values in $\{V_{i,1}, V_{i,2}, ... ,V_{i,L} \}$ for which $\widetilde{V_{i,\ell}}^{-}\equiv\widetilde{V_{i,\ell}}+\varepsilon\times \theta^{S}_i(V_{i,\ell},c_i)$ is no smaller than $\widetilde{\reserve_i}^-\equiv\widetilde{\reserve_i} + \varepsilon \times \delta$. Note that $\widetilde{V_{i,\ell}}=\widetilde{\reserve_i}$ for all $\ell\in[1:L]$, and therefore $\widetilde{V_{i,\ell}}^{-}\geq\widetilde{\reserve_i}^-$ if and only if $\theta^{S}_i(V_{i,\ell},c_i)\geq\delta$. Also, recall that the values $\{V_{i,1}, V_{i,2}, ... ,V_{i,L} \}$ are sorted in the decreasing order of the slopes $\theta^{S}_i(V_{i,\ell},c_i)$. As a result, there should exist a unique $0\leq\ell\leq L$ such that $\theta^{S}_i(V_{i,\ell'},c_i)\geq \delta$ if and only if $0\leq \ell'\leq \ell$, or equivalently $\theta^{S}_i(V_{i,\ell},c_i) \geq \delta >\theta^{S}_i(V_{i,\ell+1},c_i)$. Putting everything together, the set of values $V_i$ in the support of $F_i$ whose adjustment after perturbation would be higher than  $\widetilde{\reserve_i}^-$ (adjusted reservation value $\widetilde{\reserve_i}$ after perturbation) is \emph{exactly} the subset $\mathcal{E}^-_\ell$.
We can now find the slope $\delta$ using \Cref{eq:delta-1}. More precisely, the slope $\delta$ should satisfy the following chain of equations:
\begin{align*}
    \widetilde{c_i} - \varepsilon  \times \mathbf{E}_{v_i\sim F_i}\left[\theta^{I}_i(v_i,c_i)\right] & = \widetilde{c_i}^- \\ \nonumber
    & = \mathbf{E}_{v_i\sim F_i}\left[(\widetilde{v_i}^- - \widetilde{\reserve_i}^-)^+ \right] \\ \nonumber
    & = \sum_{V_i\in \mathcal{E}^-_\ell} (\widetilde{V_i}^- - \widetilde{\reserve_i}^-)\prob{V_i} \\ \nonumber
    & = \sum_{V_i\in \mathcal{E}^-_\ell} \left( \left( \widetilde{V_i} - \widetilde{\reserve_i}\right) \times \prob{V_i} + \varepsilon\times \left(\theta^{S}_i(V_i,c_i)-\delta \right) \times \prob{V_i} \right) \\ \nonumber
    & = \mathbf{E}_{v_i\sim F_i}\left[(\widetilde{v_i} - \widetilde{\reserve_i})^+ \right] + \varepsilon \times \prob{\mathcal{E}^-_\ell} \left(\mathbf{E}_{v_i\sim F_i}\left[\theta^{S}_i(v_i,c_i) | \mathcal{E}^-_\ell \right]  - \delta \right) \\ \nonumber
    & = \widetilde{c_i} + \varepsilon \times \prob{\mathcal{E}^-_\ell} \left[\mathbf{E}_{v_i\sim F_i}\left(\theta^{S}_i(v_i,c_i) | \mathcal{E}^-_\ell \right) - \delta \right].
\end{align*} 
If we cancel $\widetilde{c_i}$ from both RHS and LHS and then divide by $\varepsilon \times \prob{\mathcal{E}^-_\ell}$, we get
$$\delta = \mathbf{E}_{v_i\sim F_i}\left[\theta^{S}_i(v_i,c_i) | \mathcal{E}^-_\ell \right] + \frac{\mathbf{E}_{v_i\sim F_i}\left[\theta^{I}_i(v_i,c_i)\right]}{\prob{\mathcal{E}^-_\ell}}.$$

For simplicity, let us define $d_\ell \triangleq \mathbf{E}_{v_i\sim F_i}\left[\theta^{S}_i(v_i,c_i) | \mathcal{E}^-_\ell \right] + \frac{\mathbf{E}_{v_i\sim F_i}\left[\theta^{I}_i(v_i,c_i)\right]}{\prob{\mathcal{E}^-_\ell}}$ for all $\ell, 0\leq \ell \leq L$. With this, the problem reduces to a verification problem, wherein we should verify that for which $\ell$ the following inequalities hold:
\begin{align}
\label{eq: rank_verification}
   \theta^{S}_i(V_{i,\ell},c_i) \geq d_\ell > \theta^{S}_i(V_{i,\ell+1},c_i). 
\end{align}
Importantly, it turns out that $\ell$ satisfies \Cref{eq: rank_verification} if and only if $d_\ell=\max_{0\leq s \leq L} d_s$. This can be easily derived from the combination of the following four properties; and thus, we skip the rest of the details for the sake of brevity.

1) By definition, the sequence $\theta^{S}_i(V_{i,\ell},c_i)$ is a (weakly) decreasing sequence w.r.t. $\ell$.

2) $d_{\ell+1}$ is a convex combination of $d_\ell$ and $\theta^{S}_i(V_{i,\ell+1},c_i)$.

3) Combining 1 and 2, we get that the sequence $d_\ell$ is a (weakly) increasing sequence up until some $\hat{\ell}$, and then it will become a (weakly) decreasing sequence. This also tells us that $d_{\hat{\ell}} = \max d_\ell$.

4) $\hat{\ell}$, and also any $\ell$ for which $d_\ell$ still remains equal to $d_{\hat{\ell}}$, are the only $\ell$'s that satisfy \eqref{eq: rank_verification}. More specifically, any smaller $\ell$ does not satisfy the second inequality, any larger $\ell$ does not satisfy the first inequality, and all $\ell\in \argmax_s d_s$ do satisfy both of the inequalities in \Cref{eq: rank_verification}.

With this we immediately conclude that the correct slope $\delta$ would be:
$$\delta = \max_\ell d_\ell =  \max_{0\leq \ell\leq L} \left\{ \mathbf{E}_{v_i\sim F_i}\left[\theta^{S}_i | \mathcal{E}^-_\ell \right] + \frac{\mathbf{E}_{v_i\sim F_i}\left[\theta^{I}_i\right]}{\prob{\mathcal{E}^-_\ell}} \right\},$$
which is exactly the amount that we set to our score $s_i^-$ in such scenarios. The rest would again be quite similar to what we did in \Cref{lem:const:slack}, and we show that this scoring rule enables us to run exactly the optimal policy $\policy^-$ corresponding to the perturbed adjusted instance by $\LagVector-\varepsilon$. Hence, we conclude the proof. 
\qed
\end{proof}

\newcommand{\epsilonVector}{\overrightarrow{\varepsilon}}
\newcommand{\polytope}{\mathcal{P}}
\newcommand{\slackvec}{\boldsymbol{\slack}_{\textsc{cons}}^{\policy}}
\newcommand{\slackconstj}{{\slack}^{\policy}_{\textsc{cons},j}}
\newcommand{\lambdavec}{\boldsymbol{\lambda}}
\newcommand{\extoracle}{\textsc{Ext-Lin-Oracle}}
\newcommand{\linoracle}{\textsc{Lin-Oracle}}
\newcommand{\ellips}{{E}}
\newcommand{\ellipsc}{\mathbf{e}}
\newcommand{\projset}{{\Omega}}
\newcommand{\direction}{\boldsymbol{\omega}}
\newcommand{\projspac}{\mathcal{W}}
\newcommand{\projorth}{\projspac^{\perp}}
\newcommand{\facep}{\mathcal{A}}
\newcommand{\tertime}{\tau}

\revcolor{

\section{Technical Details of \texorpdfstring{\Cref{sec:caratheodory}}{}: Exact Optimal Policy with Multiple Ex-ante Affine Constraints}
\label{app:mutiple-affine}
\revcolorm{In this section, we provide all the technical details for the results promised in \Cref{sec:caratheodory}. In particular, we show how to ``properly'' generalize our approach from \Cref{sec:pandora} and \Cref{sec:general} to handle multiple ex-ante affine constraints \emph{exactly}, that is, without any slack---resulting in a polynomial-time algorithm that computes an optimal policy satisfying all the ex-ante affine constraints with no additive error.}

Our generalized approach involves reducing the problem to a variant of the classical (algorithmic) \emph{exact Carathéodory problem}~\citep{caratheodory1911variabilitatsbereich}. We first explain this reduction in \Cref{subsection: reduction of equality constrained to Caratheodory}. \revcolorm{We then introduce specific oracle algorithms in \Cref{app:sec:linear-oracle} that are polynomial-time computable within both our Pandora's box setting and its generalization to joint Markovian scheduling. Next, in \Cref{app:sec:exact-caratheodory-algorithm}, we demonstrate how to solve the reduced exact Carathéodory problem in polynomial time, given access to these oracles in a blackbox manner.} Lastly, in \Cref{app:extreme-failure-multiple-affine}, we present a simple example showing that the natural extension of ``extreme tie-breaking rules'' from \Cref{sec:pandora} fails, even when applied to settings with two ex-ante affine constraints, indicating that our reduction to exact algorithmic Carathéodory is crucial for solving the problem with multiple affine ex-ante constraints.

\revcolorm{In the remainder of this section, we focus on the Pandora's box problem with multiple selections (described in \Cref{sec:pandora-setting}) and its generalization to the joint Markovian scheduling problem with  multiple selections or a matroid constraint (described in \Cref{sec:JMS-setting}). We assume we are given $m \in \mathbb{N}$ ex-ante affine constraints, analogous to Constraint~\ref{eq:affine-constraint} for the Pandora's box problem or the ex-ante affine constraints defined in \Cref{sec:general-ex-ante-constraints} for the joint Markovian scheduling problem. We first focus on the special case where all constraints are equalities.} Later, in \Cref{subsec: reduction of inequality to equality}, we demonstrate how to reduce the problem with $m$ general ex-ante affine constraints, where some are equalities and others are inequalities, to a problem with $m$ ex-ante affine equality constraints.

\subsection{Reduction to the Exact Carathéodory Problem}
\label{subsection: reduction of equality constrained to Caratheodory}

For some notation throughout this section,  given an admissible policy $\policy$, we denote the constraint slack vector by $\slackvec = \left(\slackconstj\right)_{j \in [m]}$. Here, $\slackconstj$ represents the slack of the $j^{\textrm{th}}$ ex-ante affine constraint---for example, for the case of the Pandora's box problem, it is defined in \Cref{eq:slack}. \revcolorm{For the joint Markovian scheduling problem with $m$ ex-ante affine constraints, this slack vector can be defined similarly.} Note that for now we have assumed all constraints are in the equality form. The main objective of this section is to compute a randomized admissible policy $\policy^*$ that maximizes the expected utility of the search while ensuring that $\boldsymbol{\slack}_{\textsc{cons}}^{\policy^*} = \textbf{0} \in \mathbb{R}^m $. Notably, if the randomized optimal policy $\policy^*$ is a convex combination (or equivalently, a randomization) of finitely many deterministic admissible policies $\{\policy^{(i)}\}_{i\in S}$ for a finite set $S$, then $\boldsymbol{\slack}_{\textsc{cons}}^{\policy^*}$ will be the same convex combination of constraint slack vectors $\{\boldsymbol{\slack}^{\policy^{(i)}}_{\textsc{cons}}\}_{i\in S}$, and therefore:
$$\boldsymbol{\slack}_{\textsc{cons}}^{\policy^*}=\mathbf{0}\in \textrm{Conv}\left(\{\boldsymbol{\slack}^{\policy^{(i)}}_{\textsc{cons}}\}_{i\in S}\right)~,$$
where $\textrm{Conv}(\cdot)$ denotes the convex hull of its input argument.

\revcolorm{Now, let us focus on the Pandora's box setting first. The argument for the case of the joint Markovian scheduling is exactly identical and omitted for brevity.} Following the same approach as in the case of the single affine constraint, given the vector of dual variables $\lambdavec\in \mathbb{R}^m$, we define the Lagrangian relaxation function $\LagrangeConst$ and the Lagrangian dual function $\DualLagrangeConst$ as follows:
\begin{align}
\label{eq:dual_multiple}
\DualLagrangeConst (\lambdavec) \triangleq \max_{\policy \in \PolicySpace} ~\LagrangeConst (\pi;\lambdavec)~.
\end{align}
The function $\DualLagrangeConst$ will have same properties as before (such as being a piece-wise affine convex function). Moreover, it continues to hold that an optimal index-based policy (similar to \Cref{alg:Pandora}) in the Lagrangian adjusted version of the problem would be the maximizer solution in \Cref{eq:dual_multiple}, providing us with polynomial-time access to both the value and sub-gradients of $\DualLagrangeConst (\lambdavec)$, as before. By applying standard methods in convex optimization, we can efficiently find the vector of optimal dual variables $\lambdavec^*$ minimizing the Lagrangian dual function $\DualLagrangeConst$. However, to find the optimal constrained policy we essentially need to find the ``saddle point''--- a randomized policy $\policy^*$ that maximizes the Lagrangian relaxation (in expectation) against the worst-case choice of $\lambdavec$, that is, 
$$
\policy^*\in \underset{\pi\in\Delta(\Pi)}{\argmax}~ \underset{\lambdavec}{\min}~\mathbf{E}\left[\LagrangeConst (\pi;\lambdavec)\right]~.
$$
By applying strong-duality (i.e., a weaker version of Sion's minimax theorem~\citep{sion1958general}), the resulting randomized policy $\policy^*$ would be a convex combination of (deterministic)  maximizer policies in \Cref{eq:dual_multiple} when $\lambdavec\leftarrow\lambdavec^*$, and satisfies $\boldsymbol{\slack}_{\textsc{cons}}^{\policy^*} = \textbf{0}$. However, it remains a challenge to compute this convex combination, as there may be exponentially many such maximizer policies.

Let $N$ be the number of these deterministic maximizer policies denoted by $\{\policy^{(i)}\}_{i\in[N]}$.\footnote{As mentioned earlier in \Cref{sec:pandora-setting}, there are finitely many index-based policies in the Pandora's box with multiple selections when value distributions have finite discrete support. Moreover, as we mentioned in \Cref{sec:JMS-setting}, there are also finitely many index-based policies in the JMS problem, simply because we assume the underlying MCs have finite state spaces. All of our results in this section extend to the setting with continuous distributions through proper adjustments and formalizations, which we omit for the sake of simplicity.}  Each policy $\policy^{(i)}$ is an optimal dual-adjusted index-based policy with respect to $\lambdavec^*$, corresponding to a certain deterministic tie-breaking rule $\tiebreak^{(i)}$ and associated with a particular slack vector $ \boldsymbol{\slack}_{\textsc{cons}}^{\policy^{(i)}}\in \reals^\numaffine$. The goal here is to select a handful of these policies in a computationally efficient way, so that by randomizing over them, we can achieve slack of $\origin$. In other words, we would like to find a small subset $S \subseteq [N]$,  such that:
$$\mathbf{0}\in \textrm{Conv}\left(\{\boldsymbol{\slack}^{\policy^{(i)}}_{\textsc{cons}}\}_{i\in S}\right)~.$$
Note that because we assume the problem is feasible, there should exist a saddle point solution, or equivalently, a randomized optimal constrained policy $\policy^*$. Therefore, we already know that we can obtain a slack of $\origin$  by randomizing over all of these maximizer policies, i.e., 
$$\mathbf{0}\in \textrm{Conv}\left(\{\boldsymbol{\slack}^{\policy^{(i)}}_{\textsc{cons}}\}_{i\in N}\right)~.$$

With this formulation of our problem, one can think of the $\numaffine$-dimensional polytope $\polytope=\textrm{Conv}\left({V}\right)$, where ${V}\triangleq\{\boldsymbol{\slack}^{\policy^{(i)}}_{\textsc{cons}}\}_{i\in N}\subset\reals^\numaffine$. Now our problem of finding $S$ as described above is, in fact, an instance of the \emph{exact algorithmic Carathéodory problem}: given the polytope $\polytope$ that contains $\origin$, find a ``small'' subset of points $\vertices'\subseteq\vertices$ in polynomial-time such that $\origin\in \textrm{Conv}\left(\vertices'\right)$.\footnote{An alternative way of defining the goal in the algorithmic Carathéodory problem is identifying a subset of \emph{extreme points (i.e., vertices)} of $\polytope$ that their convex hull includes the target point. Note that not all the points in ${V}$ are the vertices of $\polytope$. However, the two versions of the problem are mathematically equivalent, as long as the oracles the algorithm uses always return a vertex, which is without loss of generality by applying standard arguments (see \Cref{app:sec:exact-caratheodory-algorithm} for more details).}

We recall that the polytope $\polytope$ described above can have exponentially many vertices in the parameters of the problem. Even though the classical Carathéodory theorem~\citep{caratheodory1911variabilitatsbereich} implies that there should exist $m+1$ vertices of this $m$-dimensional polytope $\polytope$ that cover $\origin$ (their convex hull includes $\origin$), it is not even clear whether we can find a polynomial number of points in $\vertices$ that can cover $\origin$.
If we can find such a set of points (and therefore their corresponding policies and constraint slack vectors), then by using linear programming we can find the desired convex combination to satisfy the slack of $\origin$, and therefore we will have a randomized optimal policy for our problem (i.e., a randomization over policies uncovered, with the resulting convex combination obtained through solving a feasibility LP)  that satisfies all the constraints exactly.


\revcolorm{In what follows, we provide an affirmative answer by showing a polynomial-time algorithm for our specific instance of the exact Carathéodory problem. In particular, in \Cref{app:sec:linear-oracle} we show that linear optimization over polytope $\polytope$ is equivalent to finding the dual-adjusted index-based policy corresponding to a certain perturbation of $\lambdavec^*$, which can be done in polynomial-time (as we showed in \Cref{sec:single-affine-policy} for the Pandora's box problem, and in \Cref{sec:optimal-JMS-arbitrary} and \Cref{app:JMS-general} for the joint Markovian scheduling problem). Having blackbox access to this polynomial-time oracle, in \Cref{app:sec:exact-caratheodory-algorithm} we show how to solve the exact algorithmic Carathéodory problem.}

\subsection{Linear Optimization Oracle: Basic and Extended}
\label{app:sec:linear-oracle}
Consider the polytope $\polytope\subset \reals^\numaffine$ defined earlier in \Cref{subsection: reduction of equality constrained to Caratheodory}, and an arbitrary direction $\boldsymbol{\omega}\in\reals^\numaffine$. The goal of this section is to implement a ``linear optimization oracle'' over $\polytope$, denoted by $\linoracle$, in polynomial time. This simple oracle is formally defined as follows. 

\begin{definition}[Linear Optimization Oracle]
\label{def:lin-oracle}
Given the polytope $\polytope=\textrm{Conv}(\vertices)\subset \reals ^\numaffine$, the oracle $\linoracle(\cdot ; \polytope)$ is defined by the following input-output relationship:
\begin{itemize}
    \item \textbf{input:} a direction $\boldsymbol{\omega}$ in $\reals^\numaffine$.
    \item \textbf{output:} a point $\mathbf{v}\in \vertices$ such that $\mathbf{v}\in \underset{\mathbf{u}\in\polytope}{\argmax}~\boldsymbol{\omega}\cdot\mathbf{u}$. 
\end{itemize}
By convention, if the input vector is empty, the oracle returns an arbitrary point $\mathbf{v}\in\vertices$. 
\end{definition}

In order to implement the above linear optimization oracle for our polytope $\polytope$, we use the structure of this polytope. More specifically, given the optimal dual variables $\lambdavec^*$, we show that we can find a policy $\hat{\policy}$ in polynomial time such that: (i) the policy $\hat{\policy}$ is an optimal dual-adjusted index-based policy corresponding to $\lambdavec^*$, and (ii) among such policies, it maximizes $\boldsymbol{\omega}\cdot \boldsymbol{\slack}^{\hat\policy}_{\textsc{cons}}$. Formally speaking, we have the following proposition. 
\begin{proposition}
Let the polytope $\polytope$ be as defined in \Cref{subsection: reduction of equality constrained to Caratheodory}. For any given direction $\boldsymbol{\omega}\in\mathbb{R}^\numaffine$, the output of the oracle $\linoracle(\boldsymbol{\omega}; \polytope)$ can be computed in polynomial time.
\label{prop:linearoracle}
\end{proposition}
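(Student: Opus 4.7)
The plan is to build on the dual-perturbation technique developed in \Cref{sec:lagrangian-pandora} and the proof of \Cref{lem:const:slack}. Recall that every $\mathbf{v} \in \vertices$ is the constraint-slack vector $\boldsymbol{\slack}^{\pi^{(i)}}_{\textsc{cons}}$ of some deterministic dual-adjusted index-based policy $\pi^{(i)}$ that maximizes the Lagrangian $\LagrangeConst(\cdot; \lambdavec^*)$. Thus, implementing $\linoracle(\boldsymbol{\omega}; \polytope)$ reduces to producing a deterministic policy $\hat\pi$ that (i) is optimal for $\LagrangeConst(\cdot; \lambdavec^*)$, and (ii) among such optimal policies, maximizes $\boldsymbol{\omega} \cdot \boldsymbol{\slack}^{\hat\pi}_{\textsc{cons}}$.

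The key observation is that the Lagrangian is \emph{exactly linear} in $\lambdavec$: for any policy $\pi$,
$$\LagrangeConst(\pi;\, \lambdavec^* + \varepsilon \boldsymbol{\omega}) \;=\; \LagrangeConst(\pi;\, \lambdavec^*) \,+\, \varepsilon \cdot \boldsymbol{\omega} \cdot \boldsymbol{\slack}^\pi_{\textsc{cons}}.$$
Hence, for any sufficiently small $\varepsilon > 0$, maximizing $\LagrangeConst(\cdot;\, \lambdavec^* + \varepsilon\boldsymbol{\omega})$ produces a policy that lexicographically first maximizes the original Lagrangian at $\lambdavec^*$, and then, among that optimal face, maximizes $\boldsymbol{\omega} \cdot \boldsymbol{\slack}^\pi_{\textsc{cons}}$ — exactly the selection the oracle demands. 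This is precisely the same perturbation idea used to derive the extreme tie-breaking rules $\tau^\pm$ in \Cref{def:tie:extreme}, generalized from the perturbations $\pm\varepsilon$ to an arbitrary direction $\boldsymbol{\omega}$.

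To turn this into a polynomial-time algorithm, I would invoke the polynomial-time solver for the \emph{dual-adjusted} instance with dual variables $\lambdavec^* + \varepsilon \boldsymbol{\omega}$: in the Pandora's box setting this is \Cref{alg:Pandora} on the adjusted instance, and in the JMS setting this is the index-based solver for arbitrary (possibly negative) rewards established in \Cref{sec:optimal-JMS-arbitrary} and \Cref{app:JMS-general}. Since there are only finitely many deterministic index-based policies — and hence finitely many distinct slack vectors in $\vertices$ — there is a strictly positive threshold $\varepsilon_0$ (bounded below by an inverse-polynomial function of the instance bit-complexity) such that for all $\varepsilon \in (0, \varepsilon_0)$ the maximizer set is stable. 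Operationally, one can either fix any such rational $\varepsilon$ explicitly, or implement the perturbation symbolically by running the index-based algorithm with \emph{lexicographic} tie-breaking on the pair $(\LagrangeConst(\pi; \lambdavec^*),\, \boldsymbol{\omega} \cdot \boldsymbol{\slack}^\pi_{\textsc{cons}})$. The deterministic policy $\hat\pi$ returned automatically corresponds to a point of $\vertices$, fulfilling \Cref{def:lin-oracle}. When the input $\boldsymbol{\omega} = \mathbf{0}$, we simply invoke the unperturbed solver with any tie-breaking rule, as permitted by the convention in \Cref{def:lin-oracle}.

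The hard part will be making the ``infinitesimal $\varepsilon$'' step fully rigorous: one must either (a) quantify $\varepsilon_0$ as a function of instance parameters so that a single rational choice of $\varepsilon$ works uniformly, or (b) carefully embed the two-level lexicographic comparison inside the index computation of the solver without blowing up running time. For (b), it suffices to note that each tie encountered during the execution of the index-based algorithm involves comparing a polynomial number of quantities whose perturbation depends affinely on $\varepsilon$, so lexicographic comparison adds only a polynomial overhead. Either route yields a polynomial-time implementation of $\linoracle(\boldsymbol{\omega}; \polytope)$, as required.
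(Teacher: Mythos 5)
Your proposal is correct and follows essentially the same route as the paper's proof: perturb $\lambdavec^*$ by $\varepsilon\boldsymbol{\omega}$ for infinitesimal $\varepsilon>0$, use the linearity $\LagrangeConst(\pi;\lambdavec^*+\varepsilon\boldsymbol{\omega})=\LagrangeConst(\pi;\lambdavec^*)+\varepsilon\,\boldsymbol{\omega}\cdot\boldsymbol{\slack}^{\pi}_{\textsc{cons}}$ to argue that the maximizer of the perturbed Lagrangian lexicographically optimizes the pair, and compute it with the polynomial-time dual-adjusted index-based solver. Your added care about making the infinitesimal rigorous (quantifying $\varepsilon_0$ or embedding the comparison lexicographically) corresponds to the paper's remark that the perturbation can be realized in closed form via the extreme tie-breaking rules applied to the single affine constraint $\sum_j \omega_j$ times the $j$-th constraint.
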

\begin{proof}{\emph{Proof.}}
Consider perturbing the vector of optimal dual variables $\lambdavec^*$ by a perturbation vector $\varepsilon \boldsymbol{\omega}$, where $\varepsilon > 0$ is an infinitesimal scalar. \revcolorm{For any admissible policy $\policy$ for the Pandora's box problem with multiple selections (or similarly, for any admissible policy for the JMS problem), we have:}
\begin{equation}
\label{eq:oracle-perturb}
\LagrangeConst(\policy; \lambdavec^*+\varepsilon\boldsymbol{\omega}) = \LagrangeConst(\policy; \lambdavec^*) + \varepsilon \boldsymbol{\omega}\cdot\slackvec.
\end{equation}

Let ${\policy}^{(\varepsilon)} \in \underset{\policy\in\PolicySpace}{\argmax}\ \LagrangeConst(\policy; \lambdavec^*+\varepsilon\boldsymbol{\omega})$. First, ${\policy}^{(\varepsilon)}$ will be a dual-adjusted index-based optimal policy corresponding to $\lambdavec^*+\varepsilon\boldsymbol{\omega}$, and thus it is polynomial-time computable. Second, since ${\policy}^{(\varepsilon)}$ maximizes the right-hand side of \eqref{eq:oracle-perturb} for an infinitesimal $\varepsilon$, it must maximize $\LagrangeConst(\policy; \lambdavec^*)$. Moreover, it should be the policy $\policy$ that maximizes $\boldsymbol{\omega}\cdot\slackvec$ among all policies $\policy$ in $\underset{\policy'\in\PolicySpace}{\argmax}\ \LagrangeConst(\policy'; \lambdavec^*)$.

Combining these observations, for sufficiently small $\varepsilon > 0$, ${\policy}^{(\varepsilon)}$ is a dual-adjusted index-based optimal policy corresponding to $\lambdavec^*$, and among such policies, which differ in their tie-breaking rules, it uses a (deterministic) tie-breaking rule that maximizes $\boldsymbol{\omega}\cdot\slackvec$.\footnote{More specifically, we can find a closed-form tie-breaking rule for any perturbation of the form $\varepsilon\direction$  using our extreme tie-breaking rules defined in \Cref{def:tie:extreme}. This can be done by simply considering a single ex-ante affine constraint  corresponding to a linear combination of our $m$ constraints with coefficients $\{\omega_i\}_{i\in[m]}$.}  Hence:
$$
\boldsymbol{\slack}_{\textsc{cons}}^{{\policy}^{(\varepsilon)}} \in \underset{\mathbf{u}\in \polytope}{\argmax}\ \boldsymbol{\omega}\cdot\mathbf{u}~~\textrm{and}~~\boldsymbol{\slack}_{\textsc{cons}}^{{\policy}^{(\varepsilon)}}\in \vertices,
$$
allowing us to implement $\linoracle(\boldsymbol{\omega}; \polytope)$ by returining $\boldsymbol{\slack}_{\textsc{cons}}^{{\policy}^{(\varepsilon)}}$ (and its corresponding policy ${\policy}^{(\varepsilon)}$) in polynomial time, as required. \qed
\end{proof}

Before proceeding to the next part, we also introduce the notion of an ``extended linear optimization oracle,'' denoted by $\extoracle$, which slightly generalizes the standard oracle $\linoracle$ that solves linear optimization over the polytope $\polytope$. Later, we show that this oracle is not a strict generalization and is indeed \emph{equivalent} to $\linoracle$ through a simple polynomial-time reduction. Consequently, if linear optimization over $\polytope$ can be solved in polynomial time, then $\extoracle$ can also be implemented as a polynomial-time oracle algorithm.

\begin{definition}[Extended Linear Optimization Oracle]
\label{def: Oracle_caratheodory}
Given the polytope $\polytope=\textrm{Conv}(\vertices)\subset \mathbb{R}^\numaffine$, the oracle $\extoracle(\cdot ; \polytope)$ is defined by this input-output relationship:
\begin{itemize}
    \item \textbf{Input:} A tuple of $k$ directions $(\boldsymbol{\omega}_i)_{i\in [k]} = (\boldsymbol{\omega}_1,\ldots,\boldsymbol{\omega}_k)$ for some $k \in \mathbb{N}$, where each $\boldsymbol{\omega}_i \in \mathbb{R}^\numaffine$.
    \item \textbf{Output:} A point $\mathbf{v} \in \facep_k\cap V$, where $\facep_k \subseteq \facep_{k-1} \subseteq \cdots \subseteq \facep_0 \triangleq \polytope$, and for each $i \in [k]$:
    $$\facep_i \triangleq \underset{\mathbf{u}\in \facep_{i-1}}{\argmax}~\boldsymbol{\omega}_i \cdot \mathbf{u}.$$
    By convention, if the input tuple is empty, the oracle returns an arbitrary point $\mathbf{v} \in \vertices$.
\end{itemize}
\end{definition}

\smallskip
We note that if we give a single direction $\boldsymbol{\omega} \in \mathbb{R}^\numaffine$ as input to the oracle $\extoracle(\boldsymbol{\omega}; \polytope)$, then it returns a point $\mathbf{v} \in \mathbb{R}^\numaffine$ such that:
$$
\mathbf{v} \in \underset{\mathbf{u} \in V}{\argmax}~\boldsymbol{\omega}\cdot\mathbf{u}\equiv\left(\underset{\mathbf{u} \in \polytope}{\argmax}~\boldsymbol{\omega}\cdot\mathbf{u}\right)\cap V.$$
Therefore, it can implement the linear optimization oracle $\linoracle$ over the polytope $\polytope$ as a special case. The following lemma shows that the oracle $\extoracle$ is in fact (computationally) equivalent to the linear optimization oracle $\linoracle$.

\begin{lemma}
    \label{lem:oracle-equivalenc} 
    Given the polytope $\polytope=\textrm{Conv}(\vertices)$, for any tuple of directions $(\boldsymbol{\omega}_i)_{i\in [k]} = (\boldsymbol{\omega}_1,\ldots,\boldsymbol{\omega}_k)$, the output of the oracle $\extoracle((\boldsymbol{\omega}_i)_{i\in[k]}; \polytope)$ can be computed by a single query to $\linoracle(\cdot; \polytope)$.
\end{lemma}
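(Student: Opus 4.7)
The plan is to implement $\extoracle((\boldsymbol{\omega}_i)_{i\in[k]}; \polytope)$ by making a single call to $\linoracle(\boldsymbol{\omega}; \polytope)$, where $\boldsymbol{\omega}$ is a carefully weighted linear combination of $\boldsymbol{\omega}_1,\ldots,\boldsymbol{\omega}_k$ with \emph{geometrically increasing} weights. The intuition is standard from lexicographic optimization: if the weight assigned to $\boldsymbol{\omega}_i$ is much larger than the total weight assigned to $\boldsymbol{\omega}_{i+1},\ldots,\boldsymbol{\omega}_k$, then any single-shot linear maximization along $\boldsymbol{\omega}$ is forced to first maximize $\boldsymbol{\omega}_1\cdot \mathbf{u}$ exactly, then among maximizers maximize $\boldsymbol{\omega}_2\cdot \mathbf{u}$ exactly, and so on, thereby returning a vertex of $\facep_k \cap \vertices$, as desired.

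To implement this, I would first note that $\vertices$ is finite (which holds here since $\vertices$ corresponds to the finitely many dual-adjusted deterministic index-based policies), so the following quantities are well-defined: let $M_{\max} \triangleq \max_{i\in[k],\,\mathbf{v}\in \vertices}\, |\boldsymbol{\omega}_i\cdot\mathbf{v}|$, and let $\delta_{\min}$ be the smallest positive gap $|\boldsymbol{\omega}_i \cdot (\mathbf{u}-\mathbf{v})|$ over all $i\in[k]$ and all pairs $\mathbf{u},\mathbf{v}\in\vertices$ with $\boldsymbol{\omega}_i\cdot\mathbf{u}\neq\boldsymbol{\omega}_i\cdot\mathbf{v}$ (and set $\delta_{\min}=1$ if this set is empty). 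I will then choose any integer $M$ with $M > 1 + 2M_{\max}/\delta_{\min}$ and define
\begin{equation*}
\boldsymbol{\omega} \,\triangleq\, \sum_{i=1}^{k} M^{k-i}\,\boldsymbol{\omega}_i.
\end{equation*}
The single oracle call then returns $\mathbf{v}^{\star} \triangleq \linoracle(\boldsymbol{\omega};\polytope)\in \vertices$, which I claim lies in $\facep_k\cap\vertices$.

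The main technical step is to verify this claim. Consider any vertex $\mathbf{v}\in \vertices\setminus\facep_k$ and let $\mathbf{u}\in \facep_k\cap\vertices$ (which is nonempty since $\facep_k$ is a nonempty face of $\polytope$ and thus contains a vertex of $\polytope$, which in turn lies in $\vertices$). Let $j\in[k]$ be the smallest index at which $\boldsymbol{\omega}_j\cdot \mathbf{u} \neq \boldsymbol{\omega}_j\cdot \mathbf{v}$; by definition of the nested faces, $\boldsymbol{\omega}_j\cdot \mathbf{u} > \boldsymbol{\omega}_j\cdot \mathbf{v}$ and $\boldsymbol{\omega}_i\cdot \mathbf{u}=\boldsymbol{\omega}_i\cdot \mathbf{v}$ for $i<j$. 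Then
\begin{equation*}
\boldsymbol{\omega}\cdot(\mathbf{u}-\mathbf{v}) \,\geq\, M^{k-j}\delta_{\min} - \sum_{i=j+1}^{k} M^{k-i}\cdot 2 M_{\max} \,\geq\, M^{k-j}\delta_{\min} - \frac{2 M_{\max} M^{k-j}}{M-1} \,>\, 0,
\end{equation*}
where the last inequality uses the choice $M-1 > 2M_{\max}/\delta_{\min}$. Hence $\mathbf{v}^{\star}$, being a maximizer of $\boldsymbol{\omega}$ over $\polytope$ restricted to $\vertices$, cannot lie in $\vertices\setminus \facep_k$, so $\mathbf{v}^{\star}\in \facep_k\cap\vertices$, fulfilling the output specification of $\extoracle$.

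The only remaining subtlety is to justify that the constants $M_{\max}$ and $\delta_{\min}$ (and therefore $M$ itself) can be bounded so that the query $\boldsymbol{\omega}$ is representable with polynomially many bits, which is immediate in our application: vertices of $\polytope$ are constraint-slack vectors of deterministic policies on an input of polynomial bit complexity, so both $M_{\max}$ and $1/\delta_{\min}$ admit polynomial-length binary representations; thus $\boldsymbol{\omega}$ has polynomial bit size and a single call to $\linoracle(\boldsymbol{\omega};\polytope)$ suffices, completing the reduction. I anticipate the only delicate part of the writeup to be spelling out this bit-complexity argument carefully; the lexicographic-to-linear reduction itself is completely standard once the weight $M$ is chosen as above.
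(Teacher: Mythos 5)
Your proof is correct and follows essentially the same approach as the paper: both reduce the nested-argmax computation to a single linear optimization along a lexicographically weighted combination of the directions, the paper writing it as $\sum_i \varepsilon^{i-1}\boldsymbol{\omega}_i$ for an ``infinitesimal'' $\varepsilon>0$ and you writing the equivalent $\sum_i M^{k-i}\boldsymbol{\omega}_i$ for a sufficiently large $M$. Your version is simply the quantitative instantiation of the paper's perturbation argument, with the explicit choice $M-1>2M_{\max}/\delta_{\min}$ and the bit-complexity remark making precise what the paper leaves implicit.
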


\begin{proof}{\emph{Proof.}}
Given the $k$ directions $(\boldsymbol{\omega}_i)_{i \in [k]}$, consider a single direction $\boldsymbol{\omega}^{(\varepsilon)} \triangleq \sum_{i \in [k]} \varepsilon^{(i-1)} \boldsymbol{\omega}_i$, where $\varepsilon > 0$ is an infinitesimal scalar. Let $\mathbf{v}\in V$ be the output of $\linoracle(\boldsymbol{\omega}^{(\varepsilon)}; \polytope)$, i.e.,
$$
\mathbf{v} \in \underset{\mathbf{u} \in \polytope}{\argmax}~\boldsymbol{\omega}^{(\varepsilon)} \cdot \mathbf{u} \equiv \underset{\mathbf{u}\in \polytope}{\argmax}~\boldsymbol{\omega}_1 \cdot \mathbf{u} + \varepsilon(\boldsymbol{\omega}_2 \cdot \mathbf{u}) + \varepsilon^2(\boldsymbol{\omega}_3 \cdot \mathbf{u}) + \cdots + \varepsilon^{k-1}(\boldsymbol{\omega}_k \cdot \mathbf{u}).
$$
For sufficiently small $\varepsilon$, if $\mathbf{v}$ maximizes $\boldsymbol{\omega}^{(\varepsilon)}\cdot\mathbf{u}$ over $\polytope$, it must also maximize $\boldsymbol{\omega}_1\cdot\mathbf{u}$ over $\polytope$. Let $\facep_1$ be the set of all such maximizers. Then:
$$
\mathbf{v} \in \underset{\mathbf{u}\in \facep_1}{\argmax}~\boldsymbol{\omega}_2 \cdot \mathbf{u} + \varepsilon(\boldsymbol{\omega}_3 \cdot \mathbf{u}) + \varepsilon^2(\boldsymbol{\omega}_4 \cdot \mathbf{u}) + \cdots + \varepsilon^{k-2}(\boldsymbol{\omega}_k \cdot \mathbf{u}).
$$

Applying a similar argument recursively, for small enough $\varepsilon$, $\mathbf{v}$ also maximizes $\boldsymbol{\omega}_2 \cdot \mathbf{u}$ within $\facep_1$. Recalling $\facep_0 = \polytope$ and $\facep_i = \underset{\mathbf{u}\in \facep_{i-1}}{\argmax}~\boldsymbol{\omega}_i\cdot \mathbf{u}$ in Definition~\ref{def: Oracle_caratheodory}, we conclude that for all  $i \in [k]$
$$
\mathbf{v} \in  \facep_i~.
$$
Thus, a single call to $\linoracle$ suffices to implement $\extoracle$ for any tuple of directions, as desired. \qed
\end{proof}

In the remainder of this section, we assume blackbox access to the oracle $ \extoracle $. Based on our earlier discussion, if an algorithm uses $ \extoracle $ in a computationally efficient manner, it can be implemented in polynomial time due to \Cref{prop:linearoracle} and \Cref{lem:oracle-equivalenc}.

\subsection{The Exact Algorithmic Carathéodory Problem with Oracle Access: Formal Statement \& Solution}
\label{app:sec:exact-caratheodory-algorithm}

We are now ready to formally state the problem we aim to solve:

\smallskip
\begin{displayquote}
\label{problem: exact caratheodory}
{\em
{\textbf{Problem Statement (Exact Algorithmic Carathéodory):}} Given blackbox access to the extended linear optimization oracle $\extoracle$ (as in \Cref{def: Oracle_caratheodory}) for a polytope $\polytope=\textrm{Conv}(\vertices) \subset \reals^\numaffine$, and knowing that $\polytope$ includes the origin $\origin \in \reals^\numaffine$, find a polynomial-size subset $V'\subseteq V$ of points (returned by the oracle), in polynomial time, such that $\origin \in \textrm{Conv}(\vertices')$.}
\end{displayquote}

\smallskip

\noindent\textbf{Review of basic concepts:} We start by reviewing some basic concepts and definitions in polyhedral geometry and linear algebra that we will use throughout the remainder of this section.

\begin{definition}[Cone, Dual Cone, Polar Cone]
\label{def: cones}
Let $V'\neq \emptyset$ be a bounded subset of $\mathbb{R}^m$. Then, we have the following definitions:
\begin{itemize}
    \item \textit{Conic hull of $V'$}, denoted by $\textrm{Cone}(V')$:
    $$
    \textrm{Cone}(V') \triangleq \left\{\sum_{i\in [k]}\alpha_i\mathbf{v}_i : \forall i,~\mathbf{v}_i\in V',~\alpha_i\in \mathbb{R}_{\geq 0}, ~k\in\mathbb{N}\right\}.
    $$


    \item \textit{Dual cone of $V'$}, denoted by $\textrm{Dual-Cone}(V')$:
    $$
    \textrm{Dual-Cone}(V') \triangleq \{\mathbf{u}\in \mathbb{R}^\numaffine : \mathbf{u}\cdot\mathbf{v}\geq 0 \text{ for all } \mathbf{v}\in V'\}.
    $$

    \item \textit{Polar cone of $V'$}, denoted by $\textrm{Polar-Cone}(V')$:
    $$
    \textrm{Polar-Cone}(V') \triangleq \{\mathbf{u}\in \mathbb{R}^\numaffine : \mathbf{u}\cdot\mathbf{v}\leq 0 \text{ for all } \mathbf{v}\in V'\} = -\textrm{Dual-Cone}(V').
    $$
\end{itemize}
By convention, we also set $\textrm{Cone}(\emptyset)=\{\origin\}$ and $\textrm{Dual-Cone}(\emptyset)=\textrm{Polar-Cone}(\emptyset)= \reals^\numaffine$.
\end{definition}

We also use the abbreviated notation $\cone_{V'}$, $\dualcone_{V'}$, and $\polarcone_{V'}$ to denote the conic hull, the dual cone, and the polar cone of $V'$, respectively. When it is clear from the context, we may also drop the subscript $V'$ from this notation. See \Cref{fig:enter-label} for a geometric visualization of these cones.

\begin{figure}
    \centering

\begin{tikzpicture}[line cap=round, line join=round, scale=0.9]

\tikzset{
  coneLine/.style={black, line width=0.8pt},
  coneLineaux/.style={black, line width=0.8pt,dotted},
  coneFillC/.style={fill=blue!90, opacity=0.4},     
  coneFillPolar/.style={fill=maroon!90, opacity=0.4},
  coneFillinv/.style={fill=blue!30, opacity=0.3},
  coneFilldual/.style={fill=maroon!30, opacity=0.3},
}

\coordinate (O) at (0,0);

\coordinate (Ctop) at (2,4);
\coordinate (Cright) at (4.5,0);

\coordinate (Cpolar_up) at (-4,2);
\coordinate (Cpolar_down) at (0,-4.5);

\coordinate (Cinv_up) at (-4.5,0);
\coordinate (Cinv_down) at (-2,-4);

\coordinate (Cdual_up) at (4,-2);
\coordinate (Cdual_down) at (0,4.5);

\path[coneFillC] (O) -- (Ctop) -- (Cright) -- cycle;
\draw[coneLine] (O) -- (Ctop);
\draw[coneLine] (O) -- (Cright);

\path[coneFillinv] (O) -- (Cinv_up) -- (Cinv_down) -- cycle;
\draw[coneLineaux] (O) -- (Cinv_up);
\draw[coneLineaux] (O) -- (Cinv_down);

\path[coneFillPolar] (O) -- (Cpolar_up) -- (Cpolar_down) -- cycle;
\draw[coneLine] (O) -- (Cpolar_up);
\draw[coneLine] (O) -- (Cpolar_down);

\path[coneFilldual] (O) -- (Cdual_up) -- (Cdual_down) -- cycle;
\draw[coneLineaux] (O) -- (Cdual_up);
\draw[coneLineaux] (O) -- (Cdual_down);
\begin{scope}
  \draw[coneLine] (O) -- ++(0.3,0) -- ++(0,-0.3) -- ++(-0.3,0) -- cycle;
  \draw[coneLine] (O) -- ++(0.134,0.268) -- ++(-0.268,0.134) -- ++(-0.134,-0.268) -- cycle;  
\end{scope}

\coordinate (P1) at (0.5,1);
\coordinate (P2) at (1,1.5);
\coordinate (P3) at (1.7,1.4);
\coordinate (P4) at (2,1);
\coordinate (P5) at (1.7,0.3);
\coordinate (P6) at (1,0);
\coordinate (P7) at (0.5,0.5);

\foreach \i in {1,2,...,7}
    \fill[black] (P\i) circle (2pt);

\fill[yellow, opacity=0.5] (P1) -- (P2) -- (P3) -- (P4) -- (P5) -- (P6) -- (P7) -- cycle;

\node[color=black, font=\footnotesize] at (1.2,0.8) {$\textrm{Conv}(V')$};

\node[color=black, font=\large] at (-0.3,-0.2) {$\mathbf{0}$};
\node[font=\large] at (2.5,1.6) {$C$};
\node[font=\large] at (-2.5,-1.6) {$-C$};
\node[color=maroon, font=\large] at (2.5,-0.7) {$C^*$};
\node[color=maroon, font=\large] at (-2.5,0.7) {$C^\circ$};

\end{tikzpicture}

    \caption{\revcolor{Cone $\boldsymbol{C}$ (dark blue), polar cone $\boldsymbol{C^\circ}$ (dark red), negative cone $\boldsymbol{-C}$ (light blue), and dual cone $\boldsymbol{C^*}$ (light red) of subset of points $\boldsymbol{V'}$, with the convex hull $\boldsymbol{\textrm{Conv}(V')}$ (yellow). \label{fig:enter-label}}}
    
\end{figure}


Also, recall definitions of the \emph{linear span} of a set $Y\subseteq \reals^\numaffine$: 
$$\textrm{Span}(Y)\triangleq \left\{\sum_{i\in [k]}\alpha_i\mathbf{v}_i : \forall i,~\mathbf{v}_i\in Y,~\alpha_i\in \mathbb{R},~k\in\mathbb{N}\right\},$$
the \emph{orthogonal complement} of a linear subspace $X$:
$$
X^{\perp}\triangleq \left\{\mathbf{u}\in\reals^\numaffine: \mathbf{u}\cdot\mathbf{v}=0 \text{ for all } \mathbf{v}\in X\right\},
$$
and the \emph{orthogonal projection} of a set $Y\subseteq \reals^\numaffine$ onto a linear subspace $X\subseteq \reals^\numaffine$:
\[
\textrm{Proj}_X(Y)\triangleq \left\{ \mathbf{u} \in X : \mathbf{u} = \mathbf{y} + \mathbf{v}, \, \mathbf{v} \in X^\perp,~ \mathbf{y}~ \in Y \right\}.
\]

\subsubsection{The Algorithm} Before describing our main algorithm in this section, we prove the following key technical lemma, which is crucial in both the design and analysis of our algorithm. 
\begin{lemma}
\label{lemma:covering}

Suppose $ \origin \in \polytope$ for a given polytope $\polytope\subset \reals^\numaffine$. For any given subset of points $\hat{V} \subseteq \polytope$, let $ {U} \subseteq \polytope $ be the set of all points $\left\{\mathbf{u} \in \polytope:\exists~ \direction\in \textrm{Proj}_{\textrm{Span}(\polytope)}(\polarcone_{\hat{V}})\setminus\{\origin\},~\mathbf{u}\in \argmax_{\mathbf{u}'\in \polytope}~\direction\cdot\mathbf{u}'\right\}$, that is, the convex hull of all points in the polytope $ \polytope $ that are maximizers along some non-zero direction in the projection of the polar cone $ \polarcone_{\hat{V}}$ onto the linear subspace spanning $\polytope$. Then we have:
$$
\origin \in \mathrm{Conv}({U} \cup \hat{V}).
$$
\end{lemma}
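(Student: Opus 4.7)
{\emph{Proof sketch / plan.}}
The plan is to argue by contradiction, combining a separating-hyperplane argument with the very definition of $U$. Suppose $\origin \notin \mathrm{Conv}(U \cup \hat V)$. Since $U \cup \hat V \subseteq \polytope$ is bounded (and, up to taking closed convex hulls, closed), the strict separating-hyperplane theorem yields a nonzero vector $\mathbf{w} \in \reals^\numaffine$ satisfying
\[
\mathbf{w} \cdot \mathbf{v} > 0 \quad\text{for every } \mathbf{v} \in U \cup \hat V .
\]
Because $\polytope \subseteq \mathrm{Span}(\polytope)$, replacing $\mathbf{w}$ by its orthogonal projection $\mathbf{w}_1$ onto $\mathrm{Span}(\polytope)$ does not change $\mathbf{w}\cdot\mathbf{v}$ for any $\mathbf{v}\in\polytope$, so without loss of generality $\mathbf{w}\in\mathrm{Span}(\polytope)\setminus\{\origin\}$ (the edge case $U\cup\hat V=\emptyset$ forces $\polytope=\{\origin\}$ and the claim is trivial).

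The key step will be to realize $-\mathbf{w}$ as a nonzero direction in $\mathrm{Proj}_{\mathrm{Span}(\polytope)}(\polarcone_{\hat V})$. Since $\mathbf{w}\cdot\mathbf{v}>0$ for every $\mathbf{v}\in\hat V$, we have $-\mathbf{w}\cdot\mathbf{v}<0$ there, so by \Cref{def: cones} we get $-\mathbf{w}\in\polarcone_{\hat V}$. Because $-\mathbf{w}$ already lies in $\mathrm{Span}(\polytope)$, its own orthogonal projection onto that subspace equals itself, so $-\mathbf{w}\in\mathrm{Proj}_{\mathrm{Span}(\polytope)}(\polarcone_{\hat V})\setminus\{\origin\}$. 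The defining property of $U$ then guarantees some $\mathbf{u}^*\in U$ with $\mathbf{u}^*\in\argmax_{\mathbf{u}'\in\polytope}\,(-\mathbf{w})\cdot\mathbf{u}'$. Using $\origin\in\polytope$ as a feasible competitor gives $(-\mathbf{w})\cdot\mathbf{u}^*\ge(-\mathbf{w})\cdot\origin=0$, i.e.\ $\mathbf{w}\cdot\mathbf{u}^*\le 0$, which directly contradicts the separation inequality $\mathbf{w}\cdot\mathbf{u}^*>0$. This contradiction will establish $\origin\in\mathrm{Conv}(U\cup\hat V)$.

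The main obstacle I anticipate is purely technical rather than conceptual: the separating-hyperplane step requires the convex hull being separated from $\origin$ to be closed, but $U$ is defined as the (possibly non-closed) union of maximizer faces over the non-zero directions of a projected cone. I intend to resolve this by first separating $\origin$ from the closed convex hull $\overline{\mathrm{Conv}}(U\cup\hat V)$ (using compactness of $\polytope$), and then noting that the argument above only needs the separating inequality to hold on $U\cup\hat V$ itself, where it does by continuity of the inner product and the fact that each element of $U$ is a maximizer over the compact set $\polytope$. A secondary subtlety is the degenerate case $\hat V=\emptyset$: then $\polarcone_{\hat V}=\reals^\numaffine$ by convention, so its projection onto $\mathrm{Span}(\polytope)$ is $\mathrm{Span}(\polytope)$ itself, $U$ contains every extreme point of $\polytope$, and $\origin\in\polytope=\mathrm{Conv}(U)$ follows directly from Minkowski--Carathéodory applied inside $\mathrm{Span}(\polytope)$.
\qed
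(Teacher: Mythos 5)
Your proof is correct, and it reaches the conclusion by a more direct route than the paper's. The paper first disposes of the cases $\hat{V}=\emptyset$, $U=\emptyset$, and $\origin\in \textrm{Conv}(\hat{V})$ or $\origin\in\textrm{Conv}(U)$, then reformulates the claim as the equivalence $\origin\in\textrm{Conv}(U\cup\hat{V})\iff(-\cone_{\hat{V}})\cap\textrm{Conv}(U)\neq\emptyset$, and strictly separates the cone $-\cone_{\hat{V}}$ from the compact set $\textrm{Conv}(U)$ by a hyperplane through the origin. The punchline is identical to yours: the normal vector, projected onto $\textrm{Span}(\polytope)$, lies in $\polarcone_{\hat{V}}$, so maximizing along it over $\polytope$ produces a point of $U$ whose inner product with the normal has the wrong sign (using $\origin\in\polytope$ as the competitor). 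By separating $\origin$ directly from $\textrm{Conv}(U\cup\hat{V})$ you skip the equivalence reformulation and nearly all of the preliminary case analysis, which is a genuine streamlining; what the paper's longer route buys is an explicit treatment of the degenerate configurations and the cone-intersection picture it reuses informally elsewhere.

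Two small caveats, neither fatal. First, your strict-separation step needs the set being separated from $\origin$ to be closed: $U$ is fine (a polytope has finitely many faces, so $U$ is a finite union of closed argmax faces), but $\hat{V}$ is an arbitrary subset of $\polytope$ in the statement, and $\origin\notin\textrm{Conv}(U\cup\hat{V})$ does not by itself yield $\origin\notin\overline{\textrm{Conv}}(U\cup\hat{V})$; as written your argument establishes $\origin\in\overline{\textrm{Conv}}(U\cup\hat{V})$ unless $\hat{V}$ is closed. This is harmless here ($\hat{V}$ is finite in the paper's application, and the paper's own proof makes the analogous implicit assumption that $\cone_{\hat{V}}$ is closed). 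Second, in the fully degenerate case $U\cup\hat{V}=\emptyset$ (equivalently $\polytope=\{\origin\}$ and $\hat{V}=\emptyset$) the convex hull of the empty set does not contain $\origin$ under the usual convention, so "trivial" should really read "excluded by convention"; the paper's case (i) glosses over the same point.
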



\begin{proof}{\emph{Proof.}}
We assume $\hat{V}\neq \emptyset,~U\neq \emptyset$, $\origin \notin \textrm{Conv}(\hat{\vertices})$, and $\origin \notin \textrm{Conv}(U)$, as otherwise we have:
\begin{enumerate}[label=(\roman*)]
       \item if $\hat{\vertices}=\emptyset$, then $\cone_{\hat{V}}=\{\origin\}$ and $\dualcone_{\hat{V}}=\polarcone_{\hat{V}}=\reals^\numaffine$. Hence, $\textrm{Conv}(U)=U=\polytope\ni \origin$, and we are done. 
      \item if $\argmaxCone=\emptyset$, then $\textrm{Proj}_{\textrm{Span}(\polytope)}(\polarcone_{\hat{V}})=\origin$. Denoting $\textrm{Span}(\polytope)^\perp$ by $\projspac$ (which implies $\textrm{Span}(\polytope)=\projorth$), we conclude that $\polarcone_{\hat{V}}\subseteq \projspac$. We now claim that $\cone_{\hat{V}}=\projorth$. First, note that $\cone_{\hat{V}}\subseteq\projorth$, as $\hat{V}\subseteq \polytope\subset \projorth$. Moreover, if $\mathbf{v}\in \projorth$, then $\mathbf{v}\cdot \mathbf{u}=0$ for all $\mathbf{u}\in \polarcone_{\hat{V}}$, and therefore, $\mathbf{v}\in \textrm{Polar-Cone}(\polarcone_{\hat{V}})=\cone_{\hat{V}}$. This implies that $\projorth\subseteq \cone_{\hat{V}}$, which proves our claim. Now, if $\hat{V}=\{\origin\}$ we are done. Otherwise, pick an arbitrary $\mathbf{v}\in\hat{V},\mathbf{v}\neq \origin$. Note that $\mathbf{v}\in \cone_{\hat{V}},~ \cone_{\hat{V}} =\projorth$, and $\projorth$ is a linear subspace. Therefore we should have $-\mathbf{v}\in \cone_{\hat{V}}$, and hence we should be able to write $-\mathbf{v}$ as a conic combination of vectors in $\hat{V}$. Note that $\mathbf{v}+(-\mathbf{v})=\origin$. Therefore, there exists a non-zero conic combination of vectors in $\hat{V}$ that is equal to $\origin$. By normalizing the corresponding non-negative coefficients to sum up to $1$, we have $\origin\in \textrm{Conv}(\hat{V})$ and hence we are done.
      \item if $\origin\in \textrm{Conv}(\hat{\vertices})$ or $\origin \in \textrm{Conv}(\argmaxCone)$, then clearly $\origin\in \textrm{Conv}(U\cup \hat{\vertices})$ and we are done. 
     
\end{enumerate}
Given these assumptions, we obtain the following equivalent condition for the statement of the lemma that we want to prove:
\begin{align}
\label{eq:covering_equivalency}
        \origin \in \textrm{Conv}(\argmaxCone\cup \hat{\vertices}) \iff (-\cone_{\hat{V}}) \cap \textrm{Conv}(\argmaxCone) \neq \emptyset.
\end{align}
To see the $\Rightarrow$ direction of this equivalence, note that if $\origin \in \textrm{Conv}(\hat{\vertices}\cup \argmaxCone)$, then there exists a non-zero conic combination of points in $U$ that can be written as the negative of a non-zero conic combination of points in $\hat{V}$, simply because $\hat{V}, U\neq \emptyset$ and $\origin \notin \textrm{Conv}(\hat{\vertices}),\textrm{Conv}(U)$. Therefore, after normalization, there exists a convex combination of points in $U$ that can be written as a conic combination of points in $-\hat{V}$, hence $(-\cone_{\hat{V}}) \cap \textrm{Conv}(\argmaxCone) \neq \emptyset$. To see the $\Leftarrow$ direction of the equivalence, note that if $(-\cone_{\hat{V}}) \cap \textrm{Conv}(\argmaxCone) \neq \emptyset$, then there exists a convex combination of points in $U$ that is equal to a non-zero conic combination of points in $-{\hat{V}}$, as $\origin\notin\textrm{Conv}(U)$. Therefore, there exists a non-zero conic combinations of points in $U\cup\hat{V}$ that is equal to $\origin$, and hence after normalization, there exists a convex combination of points in $U\cup\hat{V}$ that is equal to $\origin$. Therefore, $\origin \in \textrm{Conv}(\argmaxCone\cup\hat{\vertices})$. 

Having the above equivalence, to finish the proof of the lemma, we prove that the RHS of \Cref{eq:covering_equivalency} holds by contradiction. Suppose $(-\cone_{\hat{V}}) \cap \textrm{Conv}(\argmaxCone) = \emptyset$. Then there exists a strict separating hyperplane $H$ that separates $-\cone_{\hat{V}}$ and $\textrm{Conv}(\argmaxCone)$, as both of them are non-empty closed convex sets and $\textrm{Conv}(\argmaxCone)$ is compact. Note that $\origin \notin \textrm{Conv}(\argmaxCone)$ and $\origin$ is the only vertex of the cone $-\cone_{\hat{V}}$. Therefore, without loss of generality we can assume that $H$ passes through $\origin$. Let $\direction\neq \origin$ be the normal vector of $H$, pointing to the side of $H$ that includes $-\cone_{\hat{V}}$ (i.e., the upper-half). Therefore, for any $\mathbf{v}\in -\cone_{\hat{V}}$, we have that $\direction\cdot\mathbf{v}\geq 0$, and for any $\mathbf{u}\in \textrm{Conv}(U)$ we have that $\direction\cdot\mathbf{u}<0$. 

Now decompose $\direction$ into $\direction=\direction_{\mathcal{W}}+\direction_{\mathcal{W}^{\perp}}$, where $\direction_{\mathcal{W}}\in \mathcal{W}$ and $\direction_{\mathcal{W}^{\perp}}\in\mathcal{W}^{\perp}$. 
 Observe that $U\subseteq \polytope \subset \projorth$ and $\hat{V}\subseteq \polytope \subset \projorth$, hence $-\cone_{\hat{V}}\subseteq \projorth$  and $\textrm{Conv}(U)\subset \projorth$, implying that: 
 $$
 \forall\,\mathbf{v}\in -\cone_{\hat{V}}:~\direction\cdot \mathbf{v}=\direction_{\projorth}\cdot \mathbf{v}~~~~,~~~~\forall\,\mathbf{u}\in \textrm{Conv}(U):~\direction\cdot \mathbf{u}=\direction_{\projorth}\cdot \mathbf{u}
 $$
 As a result, the hyperplane $H'$ passing through $\origin$ with normal vector $\direction_{\projorth}\neq \origin$ should also be a strict separating hyperplane that separates $-\cone_{\hat{V}}$ and $\textrm{Conv}(\argmaxCone)$, with $-\cone_{\hat{V}}$ being in the upper-half. 

 We now consider a point ${\mathbf{u}}^*\in \argmax_{\mathbf{u}'\in \polytope} \direction_{\projorth}\cdot \mathbf{u}'$. As $\mathbf{u}^*,\origin\in \polytope$, we should have:
$$\direction_{\projorth}\cdot\mathbf{u}^* \geq \direction_{\projorth}\cdot\origin=0~.$$
At the same time, because $-\cone_{\hat{V}}$ is in the upper-half of hyperplane $H'$, $\direction_{\projorth}\cdot \mathbf{v} \geq 0$ for all the points $\mathbf{v}\in -\cone_{\hat{V}}$, and therefore we have $\direction_{\projorth} \in \polarcone_{\hat{V}}$. We conclude that $\direction_{\projorth}\in \textrm{Proj}_{\projorth}(\polarcone_{\hat{V}})=\textrm{Proj}_{\textrm{Span}(\polytope)}(\polarcone_{\hat{V}})$ and $\direction_{\projorth}\neq \origin$, and therefore $\mathbf{u}^*\in U$ (by the definition of the set $U$). Because $U$ is on the opposite side of hyperplane $H$ compared to $-\cone_{\hat{V}}$, i.e., in the lower-half of hyperplane $H'$, and $H'$ is a strict separating hyperplane, we have 
$$\direction_{\projorth}\cdot\mathbf{u}^*<0~,$$
a contradiction, which finishes the proof of the lemma. \qed

\end{proof}

Now that we have proved \Cref{lemma:covering}, we will formally present our algorithm, named \emph{Ellipsoid-based Exact Carathéodory (EEC)}, in \Cref{alg:Ellipsoid Exact Caratheodory}. Given blackbox oracle access to $\extoracle$ (which can be implemented in polynomial-time due to \Cref{lem:oracle-equivalenc} and \Cref{prop:linearoracle}), this algorithm recovers a subset of points $\hat{\vertices}\subseteq\vertices$ returned by the oracle such that $\origin\in \textrm{Conv}(\hat{\vertices})$, by only sending polynomial-number of queries to the oracle $\extoracle$ and some additional polynomial-time computation. Intuitively speaking, inspired by our key technical lemma in \Cref{lemma:covering}, the algorithm is designed to identify a face of the polytope $\facep\ni\origin$ and a subset of points $\hat{V}\subseteq V\cap \facep$, such that if we invoke \Cref{lemma:covering} on $\facep$ and $\hat{V}$, we have $U=\emptyset$---and hence we can conclude that $\origin\in \textrm{Conv}(\hat{\vertices})$. We formalize this statement in \Cref{sec:caratheodory-analysis}. 

Before proceeding further, we remark on the connection between our method and the celebrated \emph{``ellipsoid method''} \citep{khachiyan1979polynomial}, which we use in our analysis. 

\begin{remark}
\label{remark:ellipsoid}
At a high level, our iterative \textrm{EEC} algorithm is based on the classical ellipsoid method. We use certain mathematical properties of this method in both the algorithm and its analysis, in particular, maintaining an ellipsoid $E_t$ as search space in each iteration $t$ and updating $E_t$ by (i) identifying its center, (ii) slicing $E_t$ using a hyperplane $H$ passing through the center with a normal vector $\boldsymbol{\omega}$, and (iii) explicitly computing the next ellipsoid $E_{t+1}$ so that it is the \emph{minimal ellipsoid} containing one of the two slices produced by cutting $E_t$ with $H$. Moreover, the ellipsoid method guarantees that the volume of the resulting minimal ellipsoid shrinks exponentially fast. Specifically, if the dimension of $E_t$ (and $E_{t+1}$) is $d$, then
$$
\frac{\mathrm{Vol}(E_{t+1})}{\mathrm{Vol}(E_t)} \leq e^{-\frac{1}{2(d+1)}}.
$$
For additional details, we refer the reader to \cite{vishnoi2021algorithms}.
\end{remark}

\begin{algorithm}[ht]
\caption{\revcolor{Ellipsoid-based Exact Carathéodory (EEC)}}
\label{alg:Ellipsoid Exact Caratheodory}
    \SetKwInOut{Input}{input}
    \SetKwInOut{Output}{output}
\Input{$\numaffine$-dimensional polytope $\polytope=\textrm{Conv}(V)\ni \origin$, oracle access to $\extoracle$.}
\Output{set of points $\hat{V}\subseteq \vertices$ such that $\origin\in\textrm{Conv}(\hat{V})$.}

\smallskip
Initialize $t\leftarrow 0$, direction tuple $\projset\leftarrow \emptyset$, 
ellipsoid $\ellips_0\leftarrow \mathcal{B}_2^\numaffine(1) \triangleq \textrm{unit $\ell_2$-ball in $\reals^\numaffine$}$, $\ellips_{-1}\leftarrow \{\origin\}$, $\vertices_0\leftarrow \emptyset$. {\color{\commentcolor}\tcc{Let $\cone_{V_0} \leftarrow \textrm{Cone}(\vertices_{0})=\{\origin\}$, $\polarcone_{V_0} \leftarrow \textrm{Polar-Cone}(V_0) = \reals^\numaffine$}}






 
\vspace{1mm}

\While{$\ellipse_t \neq \{\origin\}$}{
{\color{\commentcolor}\tcc{\textbf{Outer loop}}}


Set $\projspac\leftarrow \textrm{Span}(\projset)^\perp$ and $\ellipse_0\leftarrow \textrm{Proj}_{\projspac}(\mathcal{B}_2^\numaffine(1))$
{\color{\commentcolor}\tcc{i.e., factoring out directions in $\projset$, hence $E_0\perp \projset$; the search zooms in on a face $\mathcal{A}_i\subset \projorth$ (\Cref{lemma: lower dimension faces include 0})}}

Set $t\leftarrow 0$


\While{$\ellipse_t \neq \{\origin\}$ and $\ellipse_t \neq \ellipse_{t-1}$}{
{\color{\commentcolor}\tcc{\textbf{Inner loop}}}

\If{center of ellipsoid $\ellipse_t\neq \origin$}{
Set $\direction\leftarrow$center of ellipsoid $\ellipse_t$
}\Else{
Set $\direction\leftarrow$arbitrary non-zero direction in $\ellipse_t$. 
} {\color{\commentcolor}\tcc{the direction $\direction$ is always in $\ellipse_t$ and hence orthogonal to $\projset$}}


\smallskip
Find $\mathbf{v}_t=\extoracle\left((\projset,\direction); \polytope\right)$ and set $V_{t+1}\leftarrow V_{t}\cup \{\mathbf{v}_t\}$. 

{\color{\commentcolor}\tcc{the point $\mathbf{v}_t$ is always in the face $\mathcal{A}_i\subset\projorth$ (\Cref{lemma: lower dimension faces include 0}).}}

\smallskip
\If{$\mathbf{v}_t=\origin$}{
\textbf{return} $\hat{\vertices}=\{\origin\}$. {\color{\commentcolor}\tcc{the algorithm terminates as $\origin\in \hat{\vertices}$.}}
} 
\Else{
Set $H_t\leftarrow$ hyperplane  with normal vector $\mathbf{v_t}$ passing through the center of $\ellipse_t$.

Set $H^-_t\leftarrow$negative half-space of $H_t$ (i.e., in the opposite direction of $\mathbf{v}_t$).

{\color{\commentcolor}\tcc{Let $\cone_{V_{t+1}}\leftarrow\textrm{Cone}(V_{t+1}),$ $\polarcone_{V_{t+1}}\leftarrow \textrm{Polar-Cone}(V_{t+1})=\polarcone_{V_t}\cap \{\mathbf{u}:\mathbf{u}\cdot\mathbf{v}_t\leq 0\}$ }}




Set $\ellipse_{t+1}\leftarrow$minimal ellipsoid containing $\ellipse_t\cap H_t^-$. {\color{\commentcolor}\tcc{By following the exact construction as in the ``ellipsoid method'' \citep{khachiyan1979polynomial}.}}


$t\leftarrow t+1$.


}
} 

\smallskip
\If{$\ellipse_t = \ellipse_{t-1}$}{Set $\projset\leftarrow \left(\projset, \direction\right)$.~~~~{\color{\commentcolor}\tcc{reducing the dimension of the search space.}}}

\smallskip
} 

\vspace{2mm}

\Return{$\hat{\vertices}=\vertices_t$.}
\end{algorithm}

\smallskip
\subsubsection{Analysis of the EEC Algorithm}
\label{sec:caratheodory-analysis}
The EEC algorithm always maintains a tuple of orthogonal directions $\projset$. This tuple remains unchanged during the iterations of the inner loop and is only updated at the end of the inner loop (equivalently, when the algorithm goes to the next outer loop iteration). Let $\projspac$ denote the linear subspace generated by spanning the directions in $\projset$ and let $\projorth$ denote its orthogonal complement. Each time the inner loop ends, the algorithm adds the last direction $\direction$ to the current set $\projset$ and updates $\projspac$ and $\projorth$ accordingly, unless $\ellipse_t = \{\origin\}$, in which case the algorithm terminates. As an important invariant, the EEC algorithm should maintain a subset $V_t\subseteq V$ of points at each iteration $t$ of the inner loop, such that $V_t\subset \projorth$ (as we show next).



Consider the tuple $\projset=(\direction_1,\direction_2,\ldots,\direction_i)$ at some point during the execution of the algorithm, where $i$ is the size of $\projset$, together with the corresponding linear subspace $\projspac$ and its orthogonal complement $\projorth$. These components help the algorithm identify a suitable (lower-dimensional) face of the polytope $\polytope$, namely, the face $\facep_i$ derived in \Cref{def: Oracle_caratheodory} given the directions in $\Omega$. As another important invariant of the algorithm, this face should contain the origin $\origin$ and satisfy $\facep_i\subset \projorth$.
The algorithm then ``zooms in'' on $\facep_i$ and the search is restricted to finding a subset of points $\hat{V}\subseteq V\cap\facep_i$ such that $\origin\in\textrm{Conv}(\hat{V})$---in other words, the algorithm re-starts the search starting from $\facep_i$ as if $\facep_i$ was the initial polytope. We note that these face polytopes are nested, that is, $\facep_i\subset \facep_{i-1}\subset \ldots \facep_0\equiv \polytope$ at any point during the execution of the algorithm. 
The following lemma (\Cref{lemma: lower dimension faces include 0}) formalizes this connection and shows they satisfy such desired properties mentioned.


\begin{lemma}
\label{lemma: lower dimension faces include 0}
    Given a tuple of directions $\projset=(\direction_1,\direction_2,\ldots,\direction_i)$ for any $i\in\mathbb{Z}_{\geq 0}$ at any point during the execution of \Cref{alg:Ellipsoid Exact Caratheodory}, and its corresponding orthogonal complement space $\projorth_i$, we have
    $$\origin \in \facep_i~~\textrm{and}~~\facep_i\subset \projorth_i~,$$
    where $\facep_i$ is defined as in \Cref{def: Oracle_caratheodory} for $\extoracle\left(\projset; \polytope\right)$.
\end{lemma}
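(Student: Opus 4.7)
The plan is to prove both assertions simultaneously by induction on the length $i$ of the direction tuple $\projset=(\direction_1,\ldots,\direction_i)$. For the base case $i=0$, we have $\projset=\emptyset$, and by the convention in Definition~\ref{def: Oracle_caratheodory} the face $\facep_0$ reduces to $\polytope$ itself; hence $\origin\in\facep_0$ follows from the standing assumption $\origin\in\polytope$, while $\facep_0\subset\projorth_0=\reals^\numaffine$ is trivial.

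For the inductive step, I would show that whenever \Cref{alg:Ellipsoid Exact Caratheodory} appends a new direction $\direction_{i+1}$ to $\projset$ at the end of an outer-loop iteration---triggered by the termination condition $\ellipse_t=\ellipse_{t-1}$---two key sub-claims hold: \emph{(a)} $\direction_{i+1}\in\projorth_i$, and \emph{(b)} $\max_{\mathbf{u}\in\facep_i}\direction_{i+1}\cdot\mathbf{u}=0$. Sub-claim \emph{(a)} is immediate from the algorithm's construction: at the start of the outer loop the initial ellipsoid is set to $\textrm{Proj}_{\projorth_i}(\mathcal{B}_2^\numaffine(1))\subset\projorth_i$; every subsequent inner-loop ellipsoid $\ellipse_t$ is the minimum-volume ellipsoid enclosing a subset of $\ellipse_{t-1}$ and therefore remains within the affine hull of $\ellipse_0$, namely $\projorth_i$; and $\direction_{i+1}$ is chosen either as the center of some such $\ellipse_t$ or as an arbitrary non-zero point inside it. Sub-claim \emph{(b)} is the geometric heart: by the inductive hypothesis $\origin\in\facep_i$, so the maximum is at least $\direction_{i+1}\cdot\origin=0$, and I would prove the reverse inequality by contradiction---if there existed $\mathbf{u}^*\in\facep_i$ with $\direction_{i+1}\cdot\mathbf{u}^*>0$, then the oracle call $\mathbf{v}_t=\extoracle((\projset,\direction_{i+1});\polytope)$ would return some $\mathbf{v}_t\in\facep_i$ with $\direction_{i+1}\cdot\mathbf{v}_t>0$, and slicing $\ellipse_t$ through its center by the hyperplane with normal $\mathbf{v}_t$ would strictly decrease volume by the factor recalled in Remark~\ref{remark:ellipsoid}, contradicting $\ellipse_t=\ellipse_{t-1}$.

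Combining \emph{(a)} and \emph{(b)} delivers both conclusions of the lemma. First, since $\origin\in\facep_i$ and $\direction_{i+1}\cdot\origin=0$ matches the maximum certified in \emph{(b)}, we obtain $\origin\in\facep_{i+1}=\argmax_{\mathbf{u}\in\facep_i}\direction_{i+1}\cdot\mathbf{u}$. Second, \emph{(b)} gives $\facep_{i+1}\subset\{\mathbf{u}:\direction_{i+1}\cdot\mathbf{u}=0\}$; combining this with $\facep_{i+1}\subset\facep_i\subset\projorth_i$ (from the inductive hypothesis) and invoking \emph{(a)} so that $\direction_{i+1}$ genuinely adjoins a new orthogonal coordinate yields $\facep_{i+1}\subset\projorth_i\cap\{\direction_{i+1}\}^\perp=\projorth_{i+1}$, as required.

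The main obstacle in this plan is the rigorous execution of sub-claim \emph{(b)}. While the ``non-shrinking cut implies non-positive objective'' intuition is standard for the ellipsoid method, several degenerate configurations must be handled with care: the center of $\ellipse_t$ may coincide with $\origin$ (in which case $\direction$ is chosen arbitrarily in $\ellipse_t$, and one must still verify that a cut with nonzero $\mathbf{v}_t$ shrinks the ellipsoid), the ellipsoid may already have collapsed to a lower-dimensional affine subspace of $\projorth_i$, or the oracle might return a vertex whose inner product with $\direction_{i+1}$ is positive but arbitrarily small. The cleanest resolution is to invoke the multiplicative volume-decrease guarantee from Remark~\ref{remark:ellipsoid} together with the finiteness of the vertex set $\vertices$: if $\max_{\mathbf{u}\in\facep_i}\direction_{i+1}\cdot\mathbf{u}>0$, then after enough inner-loop iterations the ellipsoid's volume would be forced below a level compatible with persistent equality $\ellipse_t=\ellipse_{t-1}$, producing the desired contradiction.
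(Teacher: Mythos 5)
Your proposal is correct and follows essentially the same route as the paper's proof: a (simultaneous rather than separate) induction on $i$ whose crux is that the inner loop can only terminate with $\ellipse_t=\ellipse_{t-1}$ when the cutting hyperplane's normal $\mathbf{v}_t$ is orthogonal to the ellipsoid containing $\direction_{i+1}$, which forces $\max_{\mathbf{u}\in\facep_i}\direction_{i+1}\cdot\mathbf{u}=\direction_{i+1}\cdot\mathbf{v}_t=0$ and hence both $\origin\in\facep_{i+1}$ and $\facep_{i+1}\subset\projorth_{i+1}$. The paper states this key step directly ("an unchanged ellipsoid implies $\mathbf{v}_t\perp\ellipse_t\ni\direction$") where you argue its contrapositive via strict volume decrease; these are logically the same, and your closing worry about needing many iterations of volume shrinkage is unnecessary, since a single genuinely slicing cut already yields $\ellipse_t\neq\ellipse_{t-1}$ and the contradiction is immediate.
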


\begin{proof}{\emph{Proof.}}
    First, we prove $\origin\in \facep_i$ by induction on $i$, the size of $\projset$. If $i=0$ (that is, the tuple is empty, which happens at the beginning of the execution of the algorithm), we know $\origin \in \polytope = \facep_0$. Now suppose that $\origin\in \facep_i$ with $\projset=(\direction_1,\direction_2,\ldots,\direction_i)$ at the beginning of some outer iteration of the algorithm. At the end of this outer iteration, we will either terminate or update the set of directions $\projset$ to $(\direction_1,\ldots,\direction_i, \direction)$, where $\direction$ corresponds to the direction identified in lines 6-9 of \Cref{alg:Ellipsoid Exact Caratheodory} in the last iteration of the inner loop. If this update occurs, it has to be the case that $\ellipse_{t+1} = \ellipse_{t}$, which only happens if the hyperplane $H_t$ contains the entire $\ellipse_t$ rather than cutting it through. Consequently, $\mathbf{v}_t$ should be orthogonal to $\ellipse_t$, and in particular to $\direction\in \ellipse_t$, implying $\direction\cdot\mathbf{v}_t=0$. Furthermore, by applying the induction hypothesis, we already know $\origin \in \facep_i$. Therefore, by construction of $\mathbf{v}_t$, that is, $\mathbf{v}_t=\extoracle\left((\projset,\direction); \polytope\right)$, we can write
    \begin{align}
    \label{eq:max}
        \mathbf{v}_t \in \facep_{i+1} = \argmax_{\mathbf{u}'\in \facep_{i}}~ \direction\cdot\mathbf{u}' \ \rightarrow \ \max_{\mathbf{u}'\in \facep_{i}} ~\direction\cdot\mathbf{u}'=\direction\cdot \mathbf{v}_t = 0 = \direction\cdot\origin~.
    \end{align}
    This shows $\origin \in \facep_{i+1}$, which completes the induction proof.

    Second, we also prove $\facep_i\subset \projorth_i$ by induction. As for the base of induction, $\facep_0 =\polytope\subset \reals^\numaffine = \projorth_0$. Now assume $\facep_i\subset \projorth_i$, and we show that $\facep_{i+1}\subset \projorth_{i+1}$. Note that $\facep_{i+1}\subset \facep_i\subset \projorth_i$. Therefore, it is enough to show that $\forall \mathbf{v}\in \facep_{i+1}$, $\direction\cdot \mathbf{v}=0$, and hence $\facep_{i+1}\subset \projorth_{i+1}$. Now, if $\mathbf{v}\in \facep_{i+1}$, then $\direction\cdot \mathbf{v}=\max_{\mathbf{u'}\in \facep_i}~\direction\cdot\mathbf{u}'=0$ (as we showed earlier in \cref{eq:max}), finishing the induction proof. 
    \qed 
\end{proof}

\smallskip
Suppose that upon termination of the algorithm, we have $\projset=(\direction_1,\direction_2,\ldots,\direction_k)$ for some $k\in \mathbb{Z}_{\geq 0}$. Let $\hat{V}=V_\tertime \subseteq \vertices \cap \facep_k$ be the final set of points returned by the algorithm, where $t=\tertime$ is the last index of the last inner loop before termination. By applying the key \Cref{lemma:covering} to the face polytope $\facep_k$, which includes $\origin$ as stated in \Cref{lemma: lower dimension faces include 0} we have
$$
\origin \in \textrm{Conv}(V_\tertime\cup U_\tertime),
$$
where $U_\tertime\subseteq \facep_k$ is defined as in \Cref{lemma:covering}, i.e., the convex hull of all points in the face polytope $\facep_k$ that are maximizers along some non-zero direction in the polar cone $\polarcone_{V_{\tertime}}$ of $V_\tertime$ projected onto the linear span of $\facep_k$. If we manage to show that (i)~the EEC algorithm (\Cref{alg:Ellipsoid Exact Caratheodory}) terminates after a polynomial number of iterations, and (ii)~at termination $\textrm{Proj}_{\textrm{Span}(\facep_k)}(\polarcone_{V_\tertime})=\{\origin\}$ and therefore $U_\tertime=\emptyset$, then we have a polynomial-time algorithm recovering a subset $V_\tertime\subseteq V$ such that $\origin\in\textrm{Conv}(V_\tertime)$.


In order to show property~(ii) above, it is enough to show that $\polarcone_{V_{\tertime}} \subseteq \textrm{Span}(\facep_k)^\perp$ and therefore $\textrm{Proj}_{\textrm{Span}(\facep_k)}(\polarcone_{V_\tertime})=\{\origin\}$.  To establish this claim, we present and prove two simple lemmas.

The first lemma (\Cref{lemma:inner-loop}), intuitevly speaking, controls the ``projected volume'' of the polar cone $\polarcone_{V_\tertime}$, which turns out to be crucial for establishing the claim.

\begin{lemma}
\label{lemma:inner-loop}
At any iteration $t$ of any of the inner loops of \Cref{alg:Ellipsoid Exact Caratheodory}, we have $\polarcone_{V_{t}}\cap \ellipse_0\subseteq \ellipse_t$. 
\end{lemma}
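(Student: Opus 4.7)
My plan is to proceed by induction on the inner-loop index $t$ (within a single outer iteration, treating $V_0$ and $E_0$ as whatever they are set to at the start of that inner loop). The base case $t=0$ is immediate: whatever the accumulated set $V_0$ is, we have $\polarcone_{V_0}\cap E_0 \subseteq E_0$ trivially. So the entire content of the argument lives in the inductive step, where I must show that if $\polarcone_{V_t}\cap E_0 \subseteq E_t$, then $\polarcone_{V_{t+1}}\cap E_0 \subseteq E_{t+1}$.

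Fix any $\mathbf{u}\in \polarcone_{V_{t+1}}\cap E_0$. Since $V_{t+1} = V_t\cup\{\mathbf{v}_t\}$, we have $\polarcone_{V_{t+1}}\subseteq \polarcone_{V_t}$, so by the inductive hypothesis $\mathbf{u}\in E_t$. Recall that $E_{t+1}$ is the minimal ellipsoid containing $E_t\cap H_t^-$, where $H_t$ is the hyperplane through the center $c_t$ of $E_t$ with normal $\mathbf{v}_t$, and $H_t^- = \{\mathbf{x}: \mathbf{v}_t\cdot(\mathbf{x}-c_t)\le 0\}$. Thus it suffices to show $\mathbf{u}\in H_t^-$, i.e., $\mathbf{v}_t\cdot\mathbf{u}\le \mathbf{v}_t\cdot c_t$. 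The polar-cone membership gives me one of the two required pieces directly: since $\mathbf{v}_t\in V_{t+1}$ and $\mathbf{u}\in \polarcone_{V_{t+1}}$, we have $\mathbf{v}_t\cdot \mathbf{u}\le 0$.

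The crux of the proof, and the step I expect to require the most care, is establishing $\mathbf{v}_t\cdot c_t \ge 0$, because the algorithm's choice of cutting hyperplane does not literally cut through $c_t$ along a separator of a convex region (as in standard ellipsoid proofs of feasibility), but rather along the vertex $\mathbf{v}_t$ returned by the oracle. Here is how I would argue it. Let $\Omega = (\boldsymbol{\omega}_1,\dots,\boldsymbol{\omega}_i)$ be the tuple of directions at the start of the current outer iteration, so that $\mathcal{W} = \mathrm{Span}(\Omega)$, $\projorth = \mathcal{W}^\perp$, and $E_0 = \mathrm{Proj}_{\projorth}(\mathcal{B}_2^m(1))$. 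The direction $\boldsymbol{\omega}$ chosen in the inner step is either $c_t$ (if $c_t\ne \origin$) or an arbitrary non-zero direction in $E_t$ (if $c_t=\origin$). In the second case, $\mathbf{v}_t\cdot c_t = 0$ and we are done. In the first case, $\boldsymbol{\omega}=c_t$, and the oracle returns $\mathbf{v}_t\in \facep_{i+1} = \arg\max_{\mathbf{u}'\in \facep_i} \boldsymbol{\omega}\cdot\mathbf{u}'$. By \Cref{lemma: lower dimension faces include 0}, $\origin\in \facep_i$, so $\boldsymbol{\omega}\cdot \mathbf{v}_t \ge \boldsymbol{\omega}\cdot \origin = 0$; that is, $c_t\cdot \mathbf{v}_t\ge 0$, as required.

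Combining the two inequalities $\mathbf{v}_t\cdot\mathbf{u}\le 0\le \mathbf{v}_t\cdot c_t$ yields $\mathbf{u}\in H_t^-$, and together with $\mathbf{u}\in E_t$ this gives $\mathbf{u}\in E_t\cap H_t^- \subseteq E_{t+1}$, completing the induction. The only subtlety I foresee is bookkeeping around the fact that $V_t$ persists across outer iterations while $E_0$ and $t$ are reset; one must verify that the resetting is harmless for this invariant (which it is, because the statement $\polarcone_{V_t}\cap E_0\subseteq E_t$ is a per-inner-loop claim, and the base case holds regardless of how large $V_0$ already is at reset). All other ingredients are already in place from the preceding lemmas.
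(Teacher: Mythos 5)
Your proof is correct and follows essentially the same route as the paper's: induction on $t$, using $\polarcone_{V_{t+1}}\subseteq\polarcone_{V_t}$ to invoke the hypothesis, the polar-cone condition to get $\mathbf{v}_t\cdot\mathbf{u}\le 0$, and $\origin\in\facep_i$ (from \Cref{lemma: lower dimension faces include 0}) to get $\mathbf{c}_t\cdot\mathbf{v}_t\ge 0$, so that $E_t\cap H_t^-\subseteq E_{t+1}$ absorbs the intersection. The only difference is that you argue elementwise while the paper phrases the same steps as set inclusions.
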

\begin{proof}{\emph{Proof.}}
Fix an inner loop of the algorithm (corresponding to a particular outer iteration). We prove the claim by induction on $t$. As for the base of the induction, for $t=0$ we clearly have $\polarcone_{V_{0}}\cap \ellipse_0\subseteq \ellipse_0$. Now suppose $\polarcone_{V_{t}}\cap \ellipse_0\subseteq \ellipse_{t}$ at iteration $t$, and we show that $\polarcone_{V_{t+1}}\cap \ellipse_0\subseteq \ellipse_{t+1}$.

To see this,  let $\mathbf{c}_t$ denote the center of $\ellipse_t$. First of all, if $\mathbf{c}_t=\origin$, then
\begin{equation}
\label{eq:equal-halfspace}
H_t^-=\{\mathbf{u}:\mathbf{u}\cdot \mathbf{v}_t\leq 0 \}.
\end{equation}
If $\mathbf{c}_t\neq\origin$, then the algorithm sets the direction $\direction$ in that inner iteration to $\mathbf{c}_t$. At the same time, if $\projset=(\direction_1,\ldots,\direction_i)$ at the beginning of this inner loop, then based on \Cref{lemma: lower dimension faces include 0} we know $\origin\in \facep_i$ (recall the definition of $\facep_i$ in \Cref{def: Oracle_caratheodory}). As a result, according to the definition of $\mathbf{v}_t$, we have $\mathbf{c}_t\cdot\mathbf{v}_t\geq \mathbf{c}_t\cdot\origin=0$, and therefore
\begin{equation}
\label{eq:nested-halfspace}
H^{-}_t=\{\mathbf{u}:\mathbf{u}\cdot\mathbf{v}_t\leq \mathbf{c}_t\cdot \mathbf{v}_t\}\supseteq \{\mathbf{u}: \mathbf{u}\cdot\mathbf{v}_t\leq 0\} 
\end{equation}

By combining the induction hypothesis $\polarcone_{V_{t}}\cap \ellipse_0\subseteq \ellipse_t$ with \cref{eq:equal-halfspace} (or \cref{eq:nested-halfspace}), noting that  $\polarcone_{V_{t+1}}=\polarcone_{V_{t}}\cap \{\mathbf{u}:\mathbf{u}\cdot \mathbf{v}_t\leq 0 \}$ and $E_{t+1} \supseteq E_t\cap H^{-}_t$, we have 
$$
\polarcone_{V_{t+1}}\cap E_0=\polarcone_{V_{t}}\cap \ellipse_0\cap \{\mathbf{u}: \mathbf{u}\cdot\mathbf{v}_t\leq 0\}\subseteq \ellipse_t\cap H^{-}_t \subseteq E_{t+1}~,
$$
and therefore $\polarcone_{V_{t+1}}\cap \ellipse_0\subseteq \ellipse_{t+1}$, as desired.\qed
\end{proof}

\smallskip
Using \Cref{lemma:inner-loop} and the fact that algorithm EEC only terminates when the ellipsoid of the last iteration satisfies $\ellipse_\tertime=\{\origin\}$, we conclude that:
\begin{align}
\label{eq:empty_proj_polar_cone}
    \polarcone_{V_\tertime}\cap \ellipse_0= \ellipse_\tertime = \{\origin\}
\end{align}
We now have our second lemma (\Cref{lemma:orthogonal}) that builds on this conclusion to show that $\polarcone_{V_\tertime}\subseteq \mathcal{W}$. In fact, we prove a slightly stronger claim.

\begin{lemma}
\label{lemma:orthogonal}
    If $\polarcone_{V_\tertime}\cap E_0=\{\origin\}$, then $\polarcone_{V_\tertime}= \textrm{Span}(\facep_k)^\perp =\mathcal{W}$.
\end{lemma}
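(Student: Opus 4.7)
The plan is to prove the two equalities $\polarcone_{V_\tertime} = \mathcal{W}$ and $\mathcal{W} = \textrm{Span}(\facep_k)^\perp$ separately, with the second following from showing $\textrm{Span}(\facep_k) = \projorth$ and then taking orthogonal complements. The core observations powering every step are (a) $V_\tertime \subseteq \facep_k \subset \projorth$, which holds by \Cref{lemma: lower dimension faces include 0}, and (b) $\ellipse_0 = \textrm{Proj}_{\projorth}(\mathcal{B}_2^\numaffine(1))$ is literally the closed unit ball of the subspace $\projorth$, so any nonzero $\projorth$-vector can be scaled into $\ellipse_0$.

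For $\mathcal{W} \subseteq \polarcone_{V_\tertime}$, I will note that any $\mathbf{w} \in \mathcal{W} = (\projorth)^\perp$ satisfies $\mathbf{w}\cdot\mathbf{v}=0$ for every $\mathbf{v} \in V_\tertime \subset \projorth$, hence trivially $\mathbf{w}\cdot\mathbf{v} \leq 0$. For the reverse containment $\polarcone_{V_\tertime}\subseteq \mathcal{W}$, I will take an arbitrary $\mathbf{u} \in \polarcone_{V_\tertime}$ and orthogonally decompose it as $\mathbf{u} = \mathbf{u}_\mathcal{W} + \mathbf{u}_{\projorth}$ with $\mathbf{u}_\mathcal{W} \in \mathcal{W}$, $\mathbf{u}_{\projorth}\in\projorth$. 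Because the $\mathcal{W}$-part annihilates $V_\tertime$, we get $\mathbf{u}_{\projorth}\cdot\mathbf{v} = \mathbf{u}\cdot\mathbf{v} \leq 0$ for every $\mathbf{v}\in V_\tertime$, so $\mathbf{u}_{\projorth}\in\polarcone_{V_\tertime}$ as well. If $\mathbf{u}_{\projorth}\neq\origin$, rescaling by $1/\max(1,\|\mathbf{u}_{\projorth}\|_2)$ keeps it in $\polarcone_{V_\tertime}$ (closed under positive scaling) and also places it in $\ellipse_0$, contradicting the hypothesis $\polarcone_{V_\tertime}\cap \ellipse_0 = \{\origin\}$. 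Hence $\mathbf{u}_{\projorth}=\origin$ and $\mathbf{u}\in\mathcal{W}$.

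For the second equality, I will argue that it is equivalent to $\textrm{Span}(\facep_k)=\projorth$ via orthogonal-complement duality. One direction, $\textrm{Span}(\facep_k)\subseteq\projorth$, is immediate from $\facep_k\subset\projorth$ (\Cref{lemma: lower dimension faces include 0}). For the opposite inclusion, I will suppose for contradiction that the containment is strict, so there exists a nonzero $\direction\in\projorth$ orthogonal to $\textrm{Span}(\facep_k)$, and in particular to every point of $V_\tertime\subseteq\facep_k$. Then $\direction\cdot\mathbf{v}=0\leq 0$ for all $\mathbf{v}\in V_\tertime$, so $\direction\in\polarcone_{V_\tertime}$; scaling $\direction$ into the unit ball of $\projorth$ places it simultaneously in $\ellipse_0$, again contradicting $\polarcone_{V_\tertime}\cap \ellipse_0=\{\origin\}$.

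There is no serious obstacle: all three pieces reduce to the same ``scale-into-$\ellipse_0$'' contradiction driven by the hypothesis, and the only subtle point to flag is that $\ellipse_0$ really is the full unit ball of the subspace $\projorth$, which is what permits the rescaling argument to produce a witness in $\polarcone_{V_\tertime}\cap\ellipse_0$. The combination of \Cref{lemma: lower dimension faces include 0} (to locate $V_\tertime$ and $\facep_k$ inside $\projorth$) with the termination condition of \Cref{alg:Ellipsoid Exact Caratheodory} (which is what supplies $\polarcone_{V_\tertime}\cap \ellipse_0=\{\origin\}$ via \Cref{lemma:inner-loop}) is what closes both chains.
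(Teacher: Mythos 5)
Your proof is correct and follows essentially the same route as the paper: both rest on $V_\tertime\subseteq\facep_k\subset\projorth$ from \Cref{lemma: lower dimension faces include 0} and on the orthogonal-decompose-then-rescale-into-$\ellipse_0$ contradiction, with the paper merely organizing the middle term via the sandwich $\projspac\subseteq\textrm{Span}(\facep_k)^\perp\subseteq\polarcone_{V_\tertime}\subseteq\projspac$ rather than proving $\textrm{Span}(\facep_k)=\projorth$ separately as you do. No gaps.
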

\begin{proof}{\emph{Proof.}}
    First, note that  $\vertices_\tertime \subseteq \facep_k \subset \projorth \equiv \textrm{Span}(\ellipse_0)$ by construction of $\vertices_\tertime$ and \Cref{lemma: lower dimension faces include 0}; therefore, for any $\direction'\in \mathcal{W}$, we have: 
    $$\forall \,\mathbf{v}\in V_{\tertime}:~\direction'\cdot \mathbf{v}=0~,$$
    and hence $\direction'\in \polarcone_{V_\tertime}$. This implies $\projspac \,\subseteq \textrm{Span}(\facep_k)^\perp \,\subseteq \polarcone_{V_\tertime}$. 
    
    Second, we prove $\mathcal{W}\supseteq \polarcone_{V_\tertime}$ by contradiction. Suppose that there exists a (non-zero) direction $\direction'\in \polarcone_{V_\tertime}\setminus\projspac$. Then $\direction'$ can be decomposed into $\direction'=\direction'_{\projspac}+\direction'_{\projorth}$, where $\direction'_{\projspac}\in\projspac$, $\direction'_{\projorth}\in\projorth$, and $\direction'_{\projorth}\neq\origin$. Now, for any $\mathbf{v}\in V_\tertime$, noting that $V_\tertime\subset \projorth$ and thus $\direction'_{\projspac} \cdot \mathbf{v} = 0$, we have:
    $$
    \direction'_{\projorth}\cdot \mathbf{v}=\direction'\cdot\mathbf{v}\leq 0~,
    $$
    where the last inequality holds due to the definition of the polar cone (\Cref{def: cones}). Consequently,  $\direction'_{\projorth}\in\polarcone_{V_\tertime}$, and hence $\direction'_{\projorth}/\lVert\direction'_{\projorth}\rVert\in\polarcone_{V_\tertime}$. Moreover, the vector $\direction'_{\projorth}/\lVert\direction'_{\projorth}\rVert$ clearly belongs to the projection of the unit $\ell_2$-ball onto the linear subspace $\projorth$ (which is $E_0$). Therefore, we have $\polarcone_{V_\tertime}\cap E_0\ni\direction'_{\projorth}/\lVert\direction'_{\projorth}\rVert\neq \origin$, a contradiction with $\polarcone_{V_\tertime}\cap E_0=\{\origin\}$. \qed
\end{proof}

\smallskip
Putting everything together---in particular, having Lemmas~\ref{lemma: lower dimension faces include 0}, \ref{lemma:inner-loop}, and \ref{lemma:orthogonal}---we are now ready to show the following main theorem of this section. 

\begin{theorem}
\label{thm: exact caratheodory alg}
    Algorithm EEC (\ref{alg:Ellipsoid Exact Caratheodory}) terminates in polynomial time w.r.t. the size of the problem instance. Furthermore, if $\hat{\vertices}\subseteq V$ is the final set of points returned by the algorithm, we have:
    \begin{align}
        \origin \in \textrm{Conv}(\hat{\vertices}).
    \end{align}
\end{theorem}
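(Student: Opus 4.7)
{\emph{Proof proposal.}}
The plan is to split the argument into two parts: correctness of the returned set $\hat{V}$, and polynomial-time termination. For correctness, I would first unpack what the algorithm has achieved at the moment it exits the outer loop. Let $\Omega = (\omega_1,\ldots,\omega_k)$ be the final direction tuple and let $\mathcal{W} = \textrm{Span}(\Omega)^\perp$. The exit condition is $E_\tau = \{\origin\}$ for the final index $\tau$ of the last inner loop, which combined with Lemma~\ref{lemma:inner-loop} yields $\polarcone_{V_\tau} \cap E_0 = \{\origin\}$. Lemma~\ref{lemma:orthogonal} then upgrades this to the global identity $\polarcone_{V_\tau} = \textrm{Span}(\Omega)$. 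Since Lemma~\ref{lemma: lower dimension faces include 0} guarantees $\facep_k \subset \mathcal{W}$ (and $\origin \in \facep_k$), the linear span of $\facep_k$ lies in $\mathcal{W}$, hence is orthogonal to $\textrm{Span}(\Omega) = \polarcone_{V_\tau}$, so $\textrm{Proj}_{\textrm{Span}(\facep_k)}(\polarcone_{V_\tau}) = \{\origin\}$. I would then apply Lemma~\ref{lemma:covering} to the polytope $\facep_k$ (which contains the origin) with $\hat{V} \leftarrow V_\tau$: the set $U$ constructed in that lemma is empty because there is no non-zero direction in the projected polar cone along which to maximize, and the conclusion of the lemma degenerates to $\origin \in \textrm{Conv}(V_\tau) = \textrm{Conv}(\hat{V})$, as desired. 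The corner case where the algorithm exits early on line~12 because some $\mathbf{v}_t = \origin$ is trivial since $\origin \in \textrm{Conv}(\{\origin\})$.

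For the running-time bound I would separately bound the number of outer iterations and the number of inner iterations per outer iteration. For the outer loop: every outer iteration that does not terminate the algorithm ends with $E_t = E_{t-1}$, which by the analysis in the proof of Lemma~\ref{lemma: lower dimension faces include 0} forces the associated direction $\omega$ to satisfy $\omega \cdot \mathbf{v}_t = 0$ while $\omega$ itself lies in the current $\mathcal{W} = \textrm{Span}(\Omega)^\perp$ and is non-zero. Appending $\omega$ to $\Omega$ therefore strictly increases the dimension of $\textrm{Span}(\Omega)$ by one. Since $\textrm{Span}(\Omega) \subseteq \reals^\numaffine$, at most $\numaffine$ outer iterations can occur before $\mathcal{W} = \{\origin\}$, which in turn forces $E_0 = \{\origin\}$ and termination.

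For the inner loop I would invoke the standard volume-shrinkage bound from Remark~\ref{remark:ellipsoid}: each non-degenerate ellipsoid update satisfies $\textrm{Vol}(E_{t+1})/\textrm{Vol}(E_t) \leq e^{-1/(2(d+1))}$ where $d = \dim \mathcal{W}$, while $\textrm{Vol}(E_0)$ is bounded above by the volume of the unit $\ell_2$-ball in $\mathcal{W}$. A standard bit-complexity argument (identical to the one underlying Khachiyan's ellipsoid method) then shows that after a polynomial number of iterations---polynomial in $\numaffine$ and in the bit complexity $L$ of the vertices returned by the oracle---either $E_t$ has shrunk to a point (triggering $E_t = \{\origin\}$) or the next hyperplane $H_t$ fails to properly cut the current ellipsoid (triggering $E_t = E_{t-1}$). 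Combined with the fact that each inner iteration makes only a single call to $\extoracle$ (which is polynomial-time computable via Lemma~\ref{lem:oracle-equivalenc} and Proposition~\ref{prop:linearoracle}), the total running time is polynomial.

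The main technical obstacle, and the place I would spend the most care, is the bit-complexity argument needed to guarantee that the inner loop exits in polynomial time. In exact arithmetic the ellipsoid volume shrinks strictly but never reaches zero, so termination of the inner loop with $E_t = \{\origin\}$ must be justified by a rounding/precision argument: one needs to show that once $\textrm{Vol}(E_t)$ falls below a threshold determined by the bit complexity of the vertices of $\polytope$ (equivalently, the bit complexity of the outputs of $\extoracle$), any further oracle response must produce either the zero vector or a degenerate cut with $E_{t+1}=E_t$. Making this precise requires bounding the bit complexity of vertices of $\polytope$ in terms of the input size of the original optimization problem---which, in our reduction, is controlled by the polynomial-time computability of dual-adjusted index-based optimal policies established earlier in \Cref{sec:single-affine-policy} and \Cref{sec:optimal-JMS-arbitrary}.\qed
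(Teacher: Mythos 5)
Your proposal is correct and follows essentially the same route as the paper's proof: correctness via Lemma~\ref{lemma: lower dimension faces include 0}, Lemma~\ref{lemma:inner-loop}, and Lemma~\ref{lemma:orthogonal} to conclude $\textrm{Proj}_{\textrm{Span}(\facep_k)}(\polarcone_{V_\tau})=\{\origin\}$ and then Lemma~\ref{lemma:covering} with $U=\emptyset$, and termination via the at-most-$\numaffine$ bound on outer iterations (dimension drop) plus the standard ellipsoid volume-shrinkage and bit-complexity argument for each inner loop. Your closing paragraph merely makes explicit the precision/rounding issue that the paper also invokes (via the $\mathcal{O}(\log(1/\delta))$ bound with $\delta=2^{-b}$) but does not elaborate.
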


\begin{proof}{\emph{Proof.}}
    To show that the algorithm terminates in polynomial time, notice that we are essentially following the update rule of the ``ellipsoid method" to obtain $\ellipse_{t+1}$ from  $\ellipse_t$ at each iteration $t$ of each inner loop of the algorithm (which corresponds to a fixed outer iteration). As such, we know that the volume of the ellipsoid will be geometrically shrinking (except only in the last inner iteration). Therefore, for every inner-loop, the number of iterations of the algorithm in that inner loop is upper bounded by $\mathcal{O}\left(\log\frac{1}{\delta}\right)$, where $\delta=2^{-b}$ is the numerical precision of the input instance and $b$ is the bit complexity of the input instance. Hence, the  number of iterations in each inner-loop is polynomial in $b$. Also, when each inner loop terminates, the hyperplane $H_t$ contains the entire $\ellipse_t$ rather than cutting it through (and therefore, $\ellipse_t=\ellipse_{t+1}$)---unless $\ellipse_t = \{\origin\}$, in which case the algorithm would terminate. This former case only happens when $\ellipse_t$ is in lower dimension and $\mathbf{v}_t$ is orthogonal to it. As the dimension of $\ellipse_0$ goes down by exactly $1$ at each outer iteration, and the starting dimension $\textrm{dim}(\polytope)=\numaffine$, we can only have at most $\numaffine$ number of outer iterations. Putting these pieces together, the algorithm terminates after sending polynomial number of queries to the oracle $\extoracle$ and polynomial-time extra computation, as desired.

To show that $\origin \in \textrm{Conv}(\hat{\vertices})$, as mentioned earlier, we first invoke the key technical lemma (\Cref{lemma:covering}) for the face polytope $\facep_k\ni \origin$ (\Cref{lemma: lower dimension faces include 0}) and the final set of points $\hat{V}=V_\tertime \subseteq \vertices \cap \facep_k$. We note that $\polarcone_{V_\tertime}= \textrm{Span}(\facep_k)^\perp =\mathcal{W}$ (\Cref{lemma:inner-loop}, \cref{eq:empty_proj_polar_cone}, and \Cref{lemma:orthogonal}), and therefore $\textrm{Proj}_{\textrm{Span}(\facep_k)}(\polarcone_{\vertices_\tertime})=\{\origin\}$. As a result $\argmaxCone_\tertime=\emptyset$, and hence because of the covering guarantee of \Cref{lemma:covering} we should have:
$$
\origin \in \textrm{Conv}(\vertices_\tertime \cup \argmaxCone_\tertime)=\textrm{Conv}(\hat{\vertices})~,
$$
which finishes the proof of the theorem.   \qed

\end{proof}


We conclude this section by remarking that although one could potentially solve this variant of the exact algorithmic Carathéodory problem in polynomial time using standard (but more involved) reductions from optimization to separation (and vice versa)---since algorithmic Carathéodory via a combination of separation and membership oracles is well-known---the resulting algorithm would be quite complicated and would not match the running time of our {\Cref{alg:Ellipsoid Exact Caratheodory}}. Moreover, our algorithm is able to recover an almost-linear \(\tilde{\mathcal{O}}(m)\) number of policies in the resulting randomized tie-breaking rule (thinking of the encoding bit complexity of the problem instance as a constant $\epsilon$, to reduce the volume of the initial ellipsoid in each inner-iteration of  \Cref{alg:Ellipsoid Exact Caratheodory} to ${\epsilon}^m$, we need $\mathcal{O}\left(m\log(1/\epsilon)\right)$ number of iterations), or in other words, the size of the uncovered convex combination by our algorithm as a function of the dimension $m$ is almost linear. Note that based on the Carathéodory theorem, this is almost the best possible, as every point can be represented by a convex combination of at most $m+1$ vertices in an $m$-dimensional polytope (and this is tight). Also, our algorithm is simple, structured, and interpretable---which is completely in contrast to any other known method for solving these types of Carathéodory problems in the literature, e.g., \cite{grotschel1981ellipsoid,grotschel2012geometric}.

\subsection{Extension to Multiple General Affine Constraints}
\label{subsec: reduction of inequality to equality}

So far we have assumed that all affine constraints are equalities. We now show how to reduce the problem with $m$ general affine constraints, some of which are inequalities, to a problem with only equality affine constraints. 

Suppose that our original problem has $\numaffine_n$ inequality and $\numaffine_e$ equality constraints ($\numaffine=\numaffine_n+\numaffine_e$). For each constraint slack vector $\slackvec=(\slackconstj)_{j\in[m]}\in\reals^\numaffine$, let the first $\numaffine_n$ coordinates correspond to the inequality constraints, and the rest correspond to the equality constraints. In the presence of inequality constraints, the goal of our problem is to find a small subset of policies $\{\policy^{(i)}\}_{i\in S}$, with $S\subseteq [N]$, such that: 
$$\textrm{Conv}\left(\{\boldsymbol{\Delta}^{\policy^{(i)}}_\textsc{cons}\}_{i \in S}\right)\cap Q \neq \emptyset,$$
where $Q\triangleq\{\mathbf{v}\in\reals^\numaffine:\forall i\in[m_n], \mathbf{v}_i\geq 0,~\forall j\in[m_e], \mathbf{v}_{m_n+j}=0\}$. We also have the guarantee that $\polytope\cap Q\neq \emptyset$ before finding the set $S$, where $\polytope=\textrm{Conv}(\{\boldsymbol{\Delta}^{\policy^{(i)}}_\textsc{cons}\}_{i \in [N]})$, as before. We remark that in the special case with $m_n=0$, we have $Q=\{\origin\}$ and this problem becomes the exact Carathéodory problem in \Cref{subsection: reduction of equality constrained to Caratheodory}. In what follows, we basically show that this new problem is not a strict generalization, and there is a bi-directional polynomial-time reduction from this problem to the exact Carathéodory problem.




To see this reduction, consider adding $m_n$ dummy vectors $\{-\mathbf{e}^{(i)}\}_{i\in[m_n]}$ to the original set of slack vectors $\{\boldsymbol{\Delta}^{\policy^{(i)}}_\textsc{cons}\}_{i \in [N]}$, where $\mathbf{e}^{(i)}\in\reals^\numaffine$ is the standard unit vector for coordinate $i$. Define 
$$\bar{\polytope}=\textrm{Conv}\left(\{\boldsymbol{\Delta}^{\policy^{(i)}}_\textsc{cons}\}_{i \in [N]}\cup\{-\mathbf{e}^{(i)}\}_{i\in[m_n]}\right)~.$$
Note that $Q=\textrm{Cone}(\{\mathbf{e}^{(i)}\}_{i\in[m_n]})$. Therefore, we have the following equivalence:
$$\polytope\cap Q \neq \emptyset  \quad \Longleftrightarrow \quad \origin \in \bar{\polytope}.$$

Moreover, given oracle access to the linear optimization oracle $\linoracle(\cdot;\polytope)$ for the polytope $\polytope=\textrm{Conv}(V)$, we can easily solve linear optimization over polytope $\bar{\polytope}$ along some direction $\direction$ by first calling $\linoracle(\direction;\polytope)$ to return $\mathbf{v}^*\in V$ and then comparing $\direction\cdot \mathbf{v}^*$ with $\direction\cdot \medit{-} \mathbf{e}^{(i)}$ for all $i\in[m_n]$ and then returning the one with the maximum dot product. Using the equivalence of linear optimization oracle and the extended linear optimization oracle as in \Cref{lem:oracle-equivalenc}, we can construct the extended linear optimization oracle $\extoracle(\cdot; \bar{\polytope})$ for $\bar{\polytope}$.

    




Putting all the pieces together, our reduction is as follows: we know $\polytope\cap Q \neq \phi$, so we know $\origin \in \bar{\polytope}$. Now, by using $\extoracle(\cdot;\bar{\polytope})$, we can efficiently find a polynomial-sized subset of points $\vertices' \subseteq V\cup \{-\mathbf{e}^{(i)}\}_{i\in[m_n]}$, such that $\origin \in \textrm{Conv}(\vertices')$. Nonetheless, we can partition $\vertices'=\hat{V}\cup \{-e^{(i_{\ell})}\}_{\ell}$, where $\hat{V} \subseteq V$. Note that $\origin\in \textrm{Conv}(\vertices')$. Therefore, there exists a convex combination of points in $\hat{V}$ that is equal to a conic combination of points $\{e^{(i_{\ell})}\}_{\ell}\subseteq \textrm{Cone}(\{\mathbf{e}^{(i)}\}_{i\in[m_n]}) = Q$, implying $\textrm{Conv}(\hat{V})\cap Q\neq \emptyset$ and completing our reduction.




 \revcolor{

\subsection{Failure of the Extreme Tie-breaking Rules for Multiple Constraints}
\label{app:extreme-failure-multiple-affine}
In this section, we provide an illustrative example demonstrating why a simple extension of our extreme tie-breaking rules for \revcolorm{a single constraint in the Pandora's box setting may not work even in the case with $m=2$ ex-ante affine constraints---highlighting the importance of our earlier approach by solving the problem via a reduction to the exact algorithmic Carathéodory problem with oracle access.}

\smallskip
\subsubsection{Overview of the Suggested Approach} 
Recall that when we had a single affine constraint ($\numaffine=1$), we observed in \Cref{sec:randomized-tie} that the two extreme tie-breaking rules induced by the perturbed optimal dual variables $\LagVector^* - \varepsilon$ and $\LagVector^* + \varepsilon$ indeed achieve the two extreme slack values (highest and lowest) in the constraint among all possible tie-breaking rules. Consequently, one slack should be non-negative while the other should be non-positive (as the problem is feasible), thus enabling zero slack by randomizing over them. 

Given the success of extreme tie-breaking rules in the special case of $m=1$, it might seem reasonable to generalize this idea to settings with $\numaffine > 1$ constraints. To this end, we consider $2^m$ dual-adjusted optimal policies corresponding to specific perturbed versions of vector $\boldsymbol{\lambda^*}\in \mathbb{R}^m$, that is, vectors of the form $\boldsymbol{\lambda^*}+\boldsymbol{\varepsilon}$, where the perturbation vector has the form $\boldsymbol{\varepsilon}\in \{-\varepsilon,+\varepsilon\}^m$ for an infinitesimal scalar $\varepsilon>0$. One might hope that with a proper randomization over this set, we can make the slack of all constraints zero. However, this approach fails due to the following two reasons.





First, even if this approach can yield zero slacks for all the constraints simultaneously, the computational complexity of this method is significant. The running time is exponential with respect to $\numaffine$, making it impractical for real-world implementations when $\numaffine$ is large. 

 Second, ignoring its computational complexity, there is a deeper issue with this approach and can fail accordingly. In the remainder, we explain the issue first, and then show a simple example in which it arises.


\smallskip
\revcolorm{
\subsubsection{A Geometric Interpretation of the Issue}
\label{app:bad_example_extreme_not_enough}
Recall the definition of polytope $\polytope=\textrm{Conv}(V)$ from \Cref{subsection: reduction of equality constrained to Caratheodory}. Each point $\mathbf{v}\in V$ corresponds to the slack vector of a dual-adjusted optimal policy $\policy^{(i)}$ with some tie-breaking rule.  When we focus only on the extreme tie-breaking rules discussed above, then we essentially have a subset of size $2^m$ of the points in $V$, corresponding to the slacks of tie-breaking rules derived from perturbations $\lambdavec^*+\boldsymbol{\varepsilon}$ for $\boldsymbol{\varepsilon}\in\{\pm\varepsilon\}^m$. 

For $m>1$ affine equality constraints, we can show that the above subset may \emph{not} contain $\origin$ in its convex hull, indicating that there is no randomization over the $2^m$ corresponding dual-adjusted optimal policies that can satisfy all constraints exactly. The following illustrates this phenomenon with a simple and concrete example.}

\medskip
\noindent\textbf{A simple counterexample:} 
In the following, we present a simple parametric example that rigorously demonstrates the phenomenon described earlier. This example has only $3$ candidates, where candidates $1$ and $2$ belong to $\ManSet$ and candidate $3$ belongs to $\WomanSet$. Each candidate/box has a binary random value as follows:

\revcolorm{
\begin{align}
\label{app:bad_example_carathodory_values}
    v_1 =
    \begin{cases}
        H_1 = 6 \quad & \textrm{w.p.} \ 1\\
        L_1 = 1 \quad & \textrm{w.p.} \ 0
    \end{cases}
    ,~v_2 =
    \begin{cases}
        H_2 = 14 \quad & \textrm{w.p.} \ p_2,\\
        L_2 = 2 \quad & \textrm{w.p.} \ 1-p_2.
    \end{cases}
    ,~v_3 =
    \begin{cases}
        H_3 = 9 \quad & \textrm{w.p.} \ p_3\\
        L_3 = 3 \quad & \textrm{w.p.} \ 1-p_3
    \end{cases}.
\end{align}
Also let the inspection costs be $c_1=2$, $c_2=10p_2$, and $c_3=5p_3$, respectively. This implies $\reserve_1 = \reserve_2 = \reserve_3 = 4$, where $\reserve_i$ is the reservation value of candidate $i$, as defined in \Cref{eq:base:reserve}. Hence, it holds that:
\begin{align}
\label{eq:example-cara}
    \min(H_1,H_2,H_3) > \reserve_1 = \reserve_2 = \reserve_3 > L_3 > \max (L_1,L_2).
\end{align}
We note that our counterexample remains valid if \eqref{eq:example-cara} holds, which can be satisfied by several other choices of numerical values as well. 
}
Moreover, \eqref{eq:example-cara} ensures that an optimal policy would stop if and only if (i) it sees a high value and selects it, or (ii) it has already inspected all three candidates (all with low values) and then selects the third candidate.


As for the constraints, suppose the decision maker wants to simultaneously satisfy (normalized versions of) demographic parity in both selection and inspection, that is, 

\begin{equation}
\label{eq:normalized parity}
        \expect{\select_1^{\policy} + \select_2^{\policy} } = 2 ~ \expect{\select_3^{\policy}}~~ \textrm{and}~~~ \expect{\inspect_1^{\policy} + \inspect_2^{\policy} } = 2 ~ \expect{\inspect_3^{\policy}}.
\end{equation}

With this setting in mind, we highlight that in this example the loss due to fairness is designed to be $0$, which means that $\LagVector^*=\origin$ (this can easily be verified). In other words, one of the (randomized) optimal unconstrained policies is also feasible. Thus, the remaining question is to find exactly which tie-breaking rules to choose.

Because all $\sigma_i$'s are equal, the policy can inspect the candidates in any of the $3!=6$ possible permutations over  $\{1,2,3\}$ (and that is the only source of tie in this example).  After simple calculations for each of these permutations, the resulting slack in \ref{eq:normalized parity} for parity in inspection and selection, denoted by $\slack_I$ and $\slack_S$, respectively, are as follows 
(here, we choose $p_2=0.1$ and $p_3=0.8$):

\begin{itemize}
    \item $1 \rightarrow 2 \rightarrow 3$:
    $\slack_I=1, \slack_S=1$

    \item $2 \rightarrow 1 \rightarrow 3$:
    $\slack_I=2-p_2=1.9, \slack_S=1$
    
    \item $3 \rightarrow 1 \rightarrow 2$: 
    $\slack_I=-(1+p_3)=-1.8, \slack_S=1-3p_3=-1.4$
    
    \item $3 \rightarrow 2 \rightarrow 1$:
    $\slack_I=p_2p_3-p_2-2p_3=-1.62, \slack_S=1-3p_3=-1.4$
    
    \item $1 \rightarrow 3 \rightarrow 2$:
    $\slack_I=1, \slack_S=1$
    
    \item $2 \rightarrow 3 \rightarrow 1$:
    $\slack_I=p_2p_3+p_2-p_3=-0.62, \slack_S=1-3p_3+3p_2p_3=-1.16$
\end{itemize}


    
    
    
    


Nevertheless, it turns out that if we have $0<p_2,p_3<1$,
then permutations $2\rightarrow 1\rightarrow 3$ and $3 \rightarrow 1 \rightarrow 2$ are the only ones that can be obtained by the four perturbations $(\pm\varepsilon,\pm\varepsilon)$ corresponding to the extreme tie-breaking rules. 
Moreover, it is easy to verify that for a general setup of parameters $p_2,p_3 \in (0,1)$, having only the second and third permutations are not enough to cover $\origin$, but once we also include the slack of other permutations, then we will be able to cover $\origin$. \Cref{fig:6 point example} illustrates this point.

\begin{figure}[htb]
    \centering
    \begin{subfigure}[b]{0.5\textwidth}
    \centering
        \begin{tikzpicture}[scale=0.9]
            \begin{axis}[
                axis lines = middle,
                xlabel = {$\slack_I$},
                ylabel = {$\slack_S$},
                xmin=-2, xmax=2,
                ymin=-2, ymax=2,
                grid=both,
            ]
            \addplot[
                only marks,
                mark=*,
                color=blue,
                ] coordinates {
                (1, 1) (1.9, 1) (-0.62, -1.16) (-1.62, -1.4) (-1.8, -1.4) (1, 1) 
            };
            \addplot[
                only marks,
                mark= *,
                color=red,
                ] coordinates {
                (1.9, 1) (-1.8, -1.4) 
            };
            \addplot[
                color=blue,
                thin
                ] coordinates {
                (1, 1) (1.9, 1) (-0.62, -1.16) (-1.62, -1.4) (-1.8, -1.4) (1, 1) 
            };
            \addplot[
                color=red,
                thick
                ] coordinates {
                (-1.8, -1.4) (1.9, 1)
            };
            \end{axis}
        \end{tikzpicture}
    \end{subfigure}%
    \caption{\revcolor{Configuration of the slacks for all 6 orders, under $\boldsymbol{p_2=0.1, p_3=0.8}$. The top right and bottom left belong to the two distinct tie-breaks achievable by $\boldsymbol{\{-\varepsilon, +\varepsilon\}^2}$. The fact that red line does not pass through origin shows the insufficiency of this approach. \label{fig:6 point example}}}
\end{figure}
}

}
\revcolorm{

\section{Constraint Formulations, Examples}
\label{sec: Constraint Formulations}
In this section we exemplify some of the applications we mentioned in \Cref{sec:general-ex-ante-constraints} for the types of constraints that we consider. In order to provide better intuition, we focus on the Markov chains corresponding to our \Cref{ex:reject} as illustrated in \Cref{fig:MC-hiring}. 

\subsection{Affine Constraints}
\label{subsec: affine Constraint Formulations}
For simplicity of exposition, let us first define $S_i^{\textrm{phone}}, S_i^{\textrm{onsite}}, S_i^{\textrm{offer}}$ as the set of all vertices (i.e., states) for ${\MC_i}$ (candidate $i$'s Markov chain) corresponding to phone, onsite, and offer stages, respectively. Note that for every candidate $i$ both $S_i^{\textrm{phone}}$ and $S_i^{\textrm{onsite}}$ consist of only a single state in this particular example (see \Cref{fig:MC-hiring}), while $S_i^{\textrm{offer}}$ includes multiple states. Additionally, we denote by $\AffineVector_{i,v}$, the coefficient in an affine constraint corresponding to the node $v\in\MC_i$.

In the following, we formally elaborate some of the mentioned affine constraints in the main body.

\textbf{Minority group quota on offer.} In this case, we add a single constraint to set a quota on the minority group $\WomanSet$ candidates' expected number of offers: 
\begin{align}
    \AffineVector_{i,v} = 
    \begin{cases}
        (\theta-1) \quad & \textrm{if} ~ i\in \WomanSet ~ \& ~ v\in S_i^{\textrm{offer}} \\ 
        \theta \quad & \textrm{if} ~ i\in \ManSet ~ \& ~ v\in S_i^{\textrm{offer}} \\ 
        0 \quad & \textrm{o.w.}
    \end{cases},
\end{align}
with the constraint being 
$\AffineVector \cdot \alloc \leq 0.$

\textbf{Onsite interview budget.} Here, again we add a single constraint on all candidates' onsite interview stage: 
\begin{align}
    \AffineVector_{i,v} = 
    \begin{cases}
        1 \quad \textrm{if} ~ v\in S_i^{\textrm{onsite}} \\ 
        0 \quad \textrm{o.w.}
    \end{cases},
\end{align}
with the constraint being: $\AffineVector \cdot \alloc \leq \AffineConstant$
where $\AffineConstant$ is our desired average budget for onsite interview (in terms of number of people).

\textbf{Individual fairness in phone interview (lower bounding the minimum opportunity).} 
For this application, we should add a constraint for every candidate $j \in [\altnum]$, denoted by $\AffineVector^j$, such that only the coefficients corresponding to the $j^{\textrm{th}}$ candidate's phone interview stage would appear in it: 
\begin{align}
    \forall j \in [\altnum], \quad \AffineVector_{i,v}^j = 
    \begin{cases}
        1 \quad \textrm{if} ~ i=j ~ \& ~ v\in S_i^{\textrm{phone}} \\ 
        0 \quad \textrm{o.w.}
    \end{cases},
\end{align}
with the constraints being:
$\AffineVector^j \cdot \alloc \geq \AffineConstant, \forall j \in [\altnum],$
where $\AffineConstant$ is some desired lower bound for each individual's phone interview chance.

\subsection{Convex Constraints}
\label{subsec: convex Constraint Formulations}

Here we formulate some examples of convex constraints that our model is able to capture, including but not limited to Nash social welfare, negative entropy, and the generalized mean (a.k.a H\"{o}lder mean), which are standard notions for capturing various forms of egalitarian welfare, e.g., see \cite{roughgarden2010algorithmic}.

\textbf{Individual-level Nash social welfare on phone interview.} We need a single constraint on all candidates' phone interview stage: 
$$\AffineConstant \leq \ConvexFun(\alloc) \triangleq \sqrt[\altnum]{\prod_{i=1}^\altnum (\sum_{v\in S_i^{\textrm{phone}}} \alloc_{i,v})}.$$

\textbf{Group-level negative entropy on onsite interview.} We add a single constraint on all groups' onsite interview stage: 
$$\AffineConstant \geq \ConvexFun(\alloc) \triangleq \alloc_{\ManSet}^{\textrm{onsite}} \log{(\alloc_{\ManSet}^{\textrm{onsite}})} + \alloc_{\WomanSet}^{\textrm{onsite}} \log{(\alloc_{\WomanSet}^{\textrm{onsite}})},
$$
where $\alloc_{\ManSet}^{\textrm{onsite}} \triangleq \sum_{i\in \ManSet,v\in S_i^{\textrm{onsite}}} \alloc_{i,v}$ and $\alloc_{\WomanSet}^{\textrm{onsite}} \triangleq \sum_{i\in \WomanSet,v\in S_i^{\textrm{onsite}}} \alloc_{i,v}$.

\textbf{Group-level generalized mean on offer.} We consider a single constraint on all groups' offer stage: 
$$\ConvexFun(\alloc) \triangleq \left(\frac{(\alloc_{\ManSet}^{\textrm{onsite}})^q + (\alloc_{\WomanSet}^{\textrm{onsite}})^q}{2}\right)^{1/q},$$
where $\alloc_{\ManSet}^{\textrm{onsite}} \triangleq \sum_{i\in \ManSet,v\in S_i^{\textrm{onsite}}} \alloc_{i,v}$ and $\alloc_{\WomanSet}^{\textrm{onsite}} \triangleq \sum_{i\in \WomanSet,v\in S_i^{\textrm{onsite}}} \alloc_{i,v}$. Note that for $q<1$, $\ConvexFun(\alloc)$ is a concave function---which includes geometric mean ($q=0$) and maximally egalitarian welfare ($q=-\infty$). Therefore, we would set a lower bound for it in the constraint to make sure enough fairness in the allocation of utilities.

Note that for $q>1$, $\ConvexFun(\alloc)$ is a convex function---which includes max functions ($q=\infty$)---and it is not playing the role of an equalitarian welfare anymore. However, such a function now somewhat captures the amount of ``unfairness'' in the utilities, that is, how far the utilities are from all being equal. Therefore, we would want to set an upper bound for it in order to encourage fairness. Our framework is general enough that can capture such constraints as well. 

Finally, note that one can always add a small quadratic function to the above convex/concave functions with the correct sign, so as to make the resulting function strongly convex/concave as well---which is a requirement we need in our near-optimal near-feasible approximation scheme. 


}
\section{A Premier on Fenchel Duality and its Implications}
\label{sec:convex}
In this supplemental section, we provide more details regarding Fenchel duality and provide a lemma that is crucial in our analysis in proof of \Cref{thm:RAI} in  \Cref{app:jms}.
\begin{definition}[\textbf{Fenchel Conjugate~\citep{bubeck2015convex}}]
\label{def:convex-conjugate}
Given a convex function $\ConvexFun:\mathbb{R}^{\dimension}\to \mathbb{R}$, the Fenchel conjugate function $\ConvexFun^*:\mathbb{R}^{\dimension}\to\mathbb{R}$ is defined as:
$$
\forall \DualConvex\in \mathbb{R}^{\dimension}: \ConvexFun^*(\DualConvex)\triangleq
\underset{\alloc\in\mathbb{R}^{\dimension}}{\sup}~\left(\DualConvex\cdot\alloc-\ConvexFun(\alloc)\right)
$$
\end{definition}

\begin{lemma}[\textbf{an adaptation of a similar lemma in \cite{bubeck2015convex}}]
\label{lemma:convex-conjugate}
Suppose (i) $\ConvexFun:\mathbb{R}^{\dimension}\to\mathbb{R}$ is strictly convex, (ii) admits continuous first partial derivatives, (iii)~$\lim_{\lVert\alloc\rVert\to \infty}\lVert \nabla\ConvexFun_j(\alloc)\rVert=+\infty$, and (iv)~there exists constants $\upperDualConvex,\lowerp>0$ such that $\lVert\GradFun(\alloc)\rVert_{\infty}\leq \upperDualConvex $ for every $\alloc\in[0,\barP]^{\dimension}$ and  $\forall \alloc: \lVert\alloc\rVert_{\infty} > \lowerp$ we have $\lVert\GradFun(\alloc)\rVert_{\infty}> \upperDualConvex$. Then we have:
\begin{enumerate}[label=(\Roman*)]
    \item The Fenchel conjugate function $F^*$ is strictly convex with continuous first partial derivatives.
    \item The conjugate of $\ConvexFun^*$ is the function $\ConvexFun$ itself, i.e. $(F^*)^*(\alloc)=\ConvexFun(\alloc)$ for all $\alloc\in\mathbb{R}^\dimension$.
    \item (envelop theorem) $\nabla \ConvexFun^*(\boldsymbol\mu)=\alloc^*(\DualConvex)$, where $\alloc^*(\DualConvex)\triangleq\underset{\alloc\in\mathbb{R}^{\dimension}}{\textrm{argmax}}~\left(\DualConvex\cdot\alloc-\ConvexFun(\alloc)\right)$, and $\nabla \ConvexFun(\alloc)=\DualConvex^*(\alloc)$, where $\DualConvex^*(\alloc)\triangleq\underset{\DualConvex\in\mathbb{R}^{\dimension}}{\textrm{argmax}}~\left(\DualConvex\cdot\alloc-\ConvexFun^*(\DualConvex)\right)$.
    \item The gradient map $\nabla\ConvexFun:\mathbb{R}^\dimension\to\mathbb{R}^\dimension$  is a bijection (i.e., an invertible and surjective map) and $(\nabla\ConvexFun)^{-1}=\nabla\ConvexFun^*$. Moreover, when the map is restricted to the domain $[0,\barP]^{\dimension}$, its image is a subset of $[-\upperDualConvex,\upperDualConvex]^{\dimension}$, and for any point $\boldsymbol\mu\in [-\upperDualConvex,\upperDualConvex]^{\dimension}$, $\nabla\ConvexFun^*(\boldsymbol\mu)\in[-\lowerp,\lowerp]^{\dimension}$.
\end{enumerate}
\end{lemma}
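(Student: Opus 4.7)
The plan is to establish the four claims in order, leveraging standard convex-analytic machinery once the hypotheses (i)--(iv) are shown to imply enough regularity and coercivity of $F$. The main conceptual step is to show that the gradient map $\nabla F$ is a continuous bijection from $\mathbb{R}^\dimension$ onto itself; once this is in hand, claims (I)--(IV) will follow by combining the Fenchel--Moreau theorem, Danskin's envelope theorem, and a contrapositive argument for the image restriction.

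First I would set up existence and uniqueness of the maximizer in the definition of $F^*$. Fix $\DualConvex \in \mathbb{R}^\dimension$ and consider $g_\DualConvex(\alloc) \triangleq \DualConvex \cdot \alloc - F(\alloc)$. Because $\lVert \nabla F(\alloc)\rVert_\infty \to \infty$ as $\lVert \alloc \rVert_\infty \to \infty$, the convex function $F$ is superlinear (one can integrate the gradient along rays from the origin to see that $F(\alloc)/\lVert \alloc \rVert \to \infty$), so $g_\DualConvex$ is coercive. Continuity of $g_\DualConvex$ then yields a maximizer, and strict convexity of $F$ forces it to be unique; call it $\alloc^*(\DualConvex)$. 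The first-order optimality condition (which is necessary and sufficient on the open domain $\mathbb{R}^\dimension$) gives $\nabla F(\alloc^*(\DualConvex)) = \DualConvex$. Conversely, for any $\alloc \in \mathbb{R}^\dimension$ the function $\DualConvex \mapsto \DualConvex \cdot \alloc - F^*(\DualConvex)$ is maximized (uniquely, once strict convexity of $F^*$ is established) at $\DualConvex^*(\alloc) = \nabla F(\alloc)$. This proves the ``image covers $\mathbb{R}^\dimension$'' half of the bijection claim in (IV) and will also deliver claim (III).

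Next I would invoke Danskin's envelope theorem: since $F^*$ is the pointwise supremum of the affine functions $\DualConvex \mapsto \DualConvex \cdot \alloc - F(\alloc)$ indexed by $\alloc$, and the maximizer $\alloc^*(\DualConvex)$ is unique, $F^*$ is differentiable with $\nabla F^*(\DualConvex) = \alloc^*(\DualConvex)$; continuity of $\alloc^*(\DualConvex)$ in $\DualConvex$ (from uniqueness plus standard upper-semicontinuity-of-argmax arguments under the coercivity established above) then upgrades this to $C^1$. This gives both the differentiability part of (I) and the first equation in (III); the second equation in (III) follows symmetrically once (II) is known. For (II), since $F$ is convex, continuous, and proper (indeed, finite everywhere and $C^1$), the Fenchel--Moreau biconjugation theorem yields $(F^*)^* = F$. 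Strict convexity of $F^*$ (the remaining part of (I)) then follows because a convex $C^1$ function is strictly convex iff its gradient is injective, and $\nabla F^* = \alloc^*(\cdot) = (\nabla F)^{-1}$ is injective by the first-order condition and the strict convexity of $F$.

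For claim (IV), the identity $(\nabla F)^{-1} = \nabla F^*$ is immediate from the two envelope identities of (III): applying $\nabla F$ to $\nabla F^*(\DualConvex) = \alloc^*(\DualConvex)$ returns $\DualConvex$, and symmetrically in the other direction. Surjectivity of $\nabla F$ onto $\mathbb{R}^\dimension$ was already shown via existence of $\alloc^*(\DualConvex)$, and injectivity follows from strict convexity of $F$. The bounded-image statements are a direct contrapositive use of hypothesis (iv). The forward direction $\nabla F([0,\barP]^\dimension) \subseteq [-\upperDualConvex, \upperDualConvex]^\dimension$ is restated hypothesis (iv). For the reverse direction, fix $\DualConvex \in [-\upperDualConvex, \upperDualConvex]^\dimension$ and set $\alloc = \nabla F^*(\DualConvex)$, so $\nabla F(\alloc) = \DualConvex$ with $\lVert \nabla F(\alloc)\rVert_\infty \le \upperDualConvex$; if $\lVert \alloc \rVert_\infty > \lowerp$, hypothesis (iv) would force $\lVert \nabla F(\alloc)\rVert_\infty > \upperDualConvex$, a contradiction, so $\alloc \in [-\lowerp, \lowerp]^\dimension$ as claimed.

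The part I expect to require the most care is verifying the continuity (and hence $C^1$-ness) of $\alloc^*(\DualConvex)$, since this is what upgrades the mere existence of the gradient of $F^*$ to continuous differentiability. The cleanest route is to argue that $\alloc^*(\cdot)$ is the unique selection from a compact-valued upper hemicontinuous argmax correspondence (by coercivity of $g_\DualConvex$ uniformly on compact sets of $\DualConvex$), hence continuous; strict convexity of $F$ ensures uniqueness and therefore continuity of the selection. Everything else is bookkeeping with the Fenchel--Moreau theorem and the envelope identities.
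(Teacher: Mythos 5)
Your proposal is correct in substance and reaches the same conclusions, but it takes a more self-contained route than the paper. The paper's proof of (I), (II), and the bijection half of (IV) is essentially a citation: it observes that hypotheses (i)--(iii) make $\ConvexFun$ a Legendre/mirror map and imports the corresponding properties from Definition~1 and Lemma~1 of \citet{audibert2014regret}; it then, exactly as you do, obtains (III) from the envelope theorem and the two bounded-image statements in (IV) from hypothesis (iv) directly and by contraposition. What you do differently is unpack the Legendre-function citation into an explicit argument: coercivity of $\alloc\mapsto\DualConvex\cdot\alloc-\ConvexFun(\alloc)$ gives existence of the maximizer, strict convexity gives uniqueness, the first-order condition gives surjectivity of $\GradFun$, Danskin plus continuity of the argmax selection gives that $\ConvexFun^*$ is $C^1$, and Fenchel--Moreau gives biconjugation. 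This buys a proof that does not lean on an external lemma, at the cost of one step that deserves more care than you give it: your coercivity argument ("integrate the gradient along rays") requires the \emph{directional} derivative $\GradFun(t\mathbf{u})\cdot\mathbf{u}$ to diverge along every ray, whereas hypothesis (iii) only gives divergence of the gradient \emph{norm}; passing from the latter to superlinearity of $\ConvexFun$ (equivalently, finiteness of $\ConvexFun^*$ on all of $\mathbb{R}^{\dimension}$) is a genuine convex-analytic fact that should be argued or cited rather than asserted. Since the paper itself absorbs this into the Legendre-function citation, and since in the application the duals are anyway confined to the compact box $[-\upperDualConvex,\upperDualConvex]^{\dimension}$, this is a point of rigor rather than a flaw in the overall plan.
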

\begin{proof}{\emph{Proof.}}
Our assumptions (i), (ii) and (iii) guarantee that $\ConvexFun$ is a Legendre map/mirror map, and hence satisfies (I) and (II), and the first part of (IV). See Definition~1 and Lemma~1 in \cite{audibert2014regret}. (III) is a simple consequence of applying envelop theorem for high-dimensional differentiable functions, applied to $F$ and $F^*$. Finally, the second part of (IV) holds as $\lVert\GradFun(\alloc)\rVert_{\infty}\leq \upperDualConvex $ for every $\alloc\in[0,\barP]^{\dimension}$ due to (iv), and last part of (IV) holds as  if  $\lVert\GradFun(\alloc)\rVert_{\infty}\leq \upperDualConvex$ we have $\forall \alloc: \lVert\alloc\rVert_{\infty} \leq \lowerp$ due to (iv). 
\qed
\end{proof}




\revcolor{
\section{Optimal policy for a General JMS}
\label{app:JMS-general}
The first step in solving \eqref{eq:opt-fair} is solving the same problem with no ex-ante constraints, that is, finding a policy $\policy$ that maximizes $\expect{\reward_\policy}$. Importantly, we allow the rewards $\vecreward=[R_i(s)]_{i\in[\altnum],s\in\states_i}$  to take negative or positive rewards in the JMS instance, which proves to be crucial for incorporating ex-ante constraints, as we have already seen in \Cref{sec:pandora} and we will also see later when we define dual-adjusted rewards (\Cref{sec:RAI}) for JMS.  We sketch how to devise a polynomial-time algorithm for this problem, even in such an instance.


Past work studying the JMS problem with linear rewards characterize the optimal policy by either assuming negative rewards (i.e., costs) for intermediate states and only allowing positive rewards for the terminal states (see, e.g.,  \cite{dumitriu2003playing,gupta2019markovian}), or considering the more general so called \emph{No Free Lunch (NFL)} assumption on the state-reward structure of the Markov chains (see, e.g., \cite{gittins1979bandit,kleinberg2017tutorial}) and showing a similar analysis extends.

\begin{definition}[\textbf{NFL \citep{kleinberg2017tutorial}}]
\label{def:NFL}
An alternative $\MC$ satisfies NFL if for any state $s\in\states$ with $\reward(s)>0$, there exists a terminal state $t\in\terminals$ such that $\transition(s,t)>0$.
\end{definition}
Intuitively speaking, the NFL assumption implies that there shall be no opportunity to receive a positive reward from an intermediary state without risking a transition to a terminal state, thereby terminating the search.

Under NFL assumption, the earlier work established the optimality of the \emph{Gittins index policy}~\citep{gittins1979bandit,dumitriu2003playing}, which is a generalization of the optimal index-based policy of \citeauthor{weitzman1979optimal} for the Pandora's box problem: Given an instance $\left\{\MC_i\right\}_{i\in[\altnum]}$, there exists an index mapping $\sigma:\cup_{i\in[\altnum]}\states_i\rightarrow \mathbb{R}$ such that choosing the Markov chain $\MC_i$ with maximum $\reserve_i(s_i)$ to inspect given states $\{s_i\}_{i\in[\altnum]}$ at each time, until either $\capacity$ number of the Markov chains enter a terminal state or all remaining indices become non-positive (hence termination), is an optimal policy. 

For completeness, in the following, we revisit how the Gittins indices are defined in this more general model under NFL assumption. For each state $s$ in the MC $i$ (satisfying NFL), we define the $\reserve_i(s)$ as the smallest real number such that the following property holds: Consider a new JMS problem that only subsumes MC $i$, starting from state $s$, as well as another Markov chain that consists of only 2 states, both with zero rewards, an initial state $s^{'}$ and the terminal state $T^{'}$ with a transition probability of 1 from $s^{'}$ to $T^{'}$. Now if we subtract the amount $\reserve_i(s)$ from the rewards of all of the terminal states in MC $i$, then there exists no policy that can achieve positive expected reward for this new instance of JMS.

These amounts are, in fact, the Gittins indices of the corresponding states in the JMS instance. We highlight that these indices can be computed in polynomial time using backward induction, as shown in \cite{gittins1979bandit,kleinberg2017tutorial}. Consequently, they proved this theorem:

\begin{theorem}[Gittins Index Policy for JMS under NFL]
\label{thm:JMS:refinement}
The index-based policy, which at each time $t$ inspects the Markov chain whose $\reserve_i(s_i^t)$ is the highest across all Markov chains, until either $\capacity$ number of the Markov chains enter a terminal state or all remaining indices become non-positive (hence termination),  is an optimal policy for the JMS instance satisfying NFL assumption in \Cref{def:NFL}.
\end{theorem}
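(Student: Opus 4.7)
The plan is to follow the classical ``prevailing charge'' / interchange argument, as adapted by \cite{dumitriu2003playing} for the joint Markov scheduling problem, with the modifications needed to handle the NFL state-reward structure rather than the more restrictive structure in which only terminal states carry non-negative reward. The argument decomposes into three steps, the heart of which is re-expressing the reward process of each MC as a ``charge process'' that is non-increasing along every sample path; once that is done, the optimality of the greedy index rule becomes almost tautological. The first step is to give an alternative characterization of $\reserve_i(s)$ as the supremum, over almost-surely-finite stopping times $\tau \ge 1$, of the reward-rate $\expect{\sum_{t < \tau} \reward_i(s_i^t) \mid s_i^0 = s} / \expect{\tau \mid s_i^0 = s}$; equivalently, via the retirement reformulation, as the retirement reward at which one is indifferent between stopping MC~$i$ and continuing optimally. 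The NFL assumption guarantees that this supremum is attained by a rule stopping only at terminal states, which makes $\reserve_i(s)$ well-defined, finite, and polynomial-time computable by backward induction, as asserted just before the theorem.

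The second step would establish the key fair-charge decomposition. For each MC~$i$, define the prevailing charge at step $t$ as $\bar\reserve_i^t \triangleq \min_{0 \le t' \le t} \reserve_i(s_i^{t'})$, the running minimum of the indices visited along the trajectory. The key lemma I would prove is that for any admissible (possibly randomized) stopping rule $\tau$,
\[
\expect{\sum_{t < \tau} \reward_i(s_i^t)} \;\le\; \expect{\sum_{t < \tau} \bar\reserve_i^t}\,,
\]
with equality when $\tau$ plays each ``run'' between successive strict drops of $\bar\reserve_i^t$ to completion (i.e., uses the optimal terminating rule from the first step on each run). This is where NFL enters crucially: it rules out positive rewards at non-terminal states that would otherwise be collectable ``for free'' without any risk of terminating the MC, which would break the reward-rate characterization and prevent the equality case from being attained.

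The third step combines the single-MC bounds across all $n$ MCs to upper-bound the total expected reward of any policy $\policy$ by the expected sum of prevailing charges that $\policy$ collects. Because the charges along each MC are non-increasing in the number of plays, a classical rearrangement / interchange argument shows this sum is maximized by always playing (among unterminated MCs) the one with the currently largest prevailing charge; equivalently, by playing the MC with the largest current Gittins index, since the prevailing charge and the current index coincide exactly when the policy has just reached a new low-index state. This is the policy in the theorem statement, and termination at all-non-positive indices is immediate because further play can only add non-positive charges. For the matroid case $\capacity > 1$, the interchange argument extends via the exchange property of the $\capacity$-uniform matroid, as in \cite{dumitriu2003playing}.

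The hard part is the second step under NFL. In the standard setting where only terminal states carry non-negative reward, the decomposition is essentially automatic because $\bar\reserve_i^t$ is weakly decreasing ``for free'' and equality in the bound is straightforward. Under NFL, however, positive rewards at non-terminal states are permitted, and one must verify two delicate facts: (i) the reward-rate characterization of $\reserve_i(s)$ still yields the correct upper bound even when intermediate rewards can be positive, and (ii) the optimal terminating stopping rule from that characterization leaves behind no residual value that a smarter alternation across MCs could exploit. Both rely on invoking NFL at every state $s$ with $\reward_i(s) > 0$ to argue that any profitable continuation from such a state must, with positive probability, reach a terminal state; this keeps the index finite and yields the tight equality case needed to conclude optimality of the Gittins index policy.
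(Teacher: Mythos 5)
The paper does not prove this theorem itself---it is stated as a known result imported from \cite{gittins1979bandit,dumitriu2003playing,kleinberg2017tutorial} (``Consequently, they proved this theorem'')---and your prevailing-charge/amortization decomposition followed by an interchange argument is exactly the route those cited works take, with NFL entering precisely where you place it (well-posedness of the index via stopping rules that terminate only at terminal states, and tightness of the amortized upper bound). Your plan is therefore consistent with the argument the paper relies on; no gap.
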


As we show in the remainder of this section, still a refinement of the Gittins index policy above (after proper pre-processing on the Markov chains) can solve the linear optimization over the space of randomized policies $\PolicySpace$ in polynomial-time for arbitrary reward vectors $\vecreward$, where this time $\{\reward_i(s)\}$ can be arbitrarily positive or negative. In fact, we show how to reduce the problem in polynomial-time to the special case satisfying NFL by introducing the idea of a \emph{collapsed instance}.

\begin{theorem}[\textbf{Optimal Policy for JMS with Arbitrary Rewards}]
\label{thm:JMS-optimal-reduction-informal}
Given any instance $\left\{\MC_i\right\}_{i\in[\altnum]}$ of the JMS problem, there exists a polynomial-time reduction that: (i) generates a new instance of the JMS problem satisfying NFL, called ``collapsed instance'', and (ii) by computing the Gittins indices of the collapsed instance, it returns a new set of indices $\sigma$ such that the index-based policy corresponding to $\sigma$ is optimal for the original instance of the JMS with arbitrary rewards.
\end{theorem}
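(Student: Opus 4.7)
\bigskip

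\noindent\textbf{Proof Proposal.} The plan is to iteratively collapse groups of non-terminal states into ``macro-states'' until the resulting instance satisfies the NFL property (\Cref{def:NFL}), and then invoke \Cref{thm:JMS:refinement} on this reduced instance to extract Gittins indices. The starting observation is that the obstacle to NFL is a state $s\in\states_i\setminus\terminals_i$ with $\reward_i(s)>0$ yet $\transition_i(s,t)=0$ for every $t\in\terminals_i$. Intuitively, at such a state there is a ``free lunch'': pulling the arm guarantees positive reward without ever risking absorption. I want to argue that any reasonable policy, once it pulls from such an $s$, should commit to a (deterministic or stochastic) continuation of $s$'s trajectory until either a terminal state is reached or the total accumulated reward along that trajectory first becomes non-positive relative to some reference (a ``prevailing charge'' $\lambda$). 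This stopping rule is exactly the optimal stopping used to define the Gittins index, so each problematic $s$ naturally induces a committee of follow-on states that can be bundled together.

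\smallskip

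\noindent\textbf{Construction of the collapsed instance.} Concretely, for each MC $\MC_i$ I would process states in a reverse-topological-style order by first identifying a set $\collapse\subseteq \states_i\setminus\terminals_i$ of states to merge. The collapsing operation replaces a connected ``block'' $B\subseteq\collapse$ of states by a single macro-state $\bar{s}$ whose reward is set to the expected total reward accumulated by an optimal stopping rule starting at $B$'s entry state (restricted to staying inside $B$), and whose one-step transition distribution is the exit distribution of that stopping rule over $(\states_i\setminus B)\cup\terminals_i$. The stopping rule inside $B$ is itself the Gittins-style optimal continuation, which can be computed by a linear program over $B$ (as in the standard prevailing-charge construction). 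I would argue that one can always choose $B$ so that the macro-state $\bar{s}$ has non-positive reward if its chosen stopping time does not reach $\terminals_i$ w.p.\ $1$, or has a strictly positive probability of exiting directly to $\terminals_i$; in either case $\bar{s}$ satisfies NFL. Iterating this, and using absorption of the original MC to guarantee termination of the procedure, yields a collapsed instance $\{\bar{\MC}_i\}$ whose state space is no larger than that of the original and which satisfies NFL. Polynomial-time complexity follows because each collapsing step solves a small LP and the number of such steps is bounded by $\sum_i|\states_i|$.

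\smallskip

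\noindent\textbf{Equivalence of optima and index-based optimality.} The core technical step is to show a two-way correspondence between policies on $\{\bar{\MC}_i\}$ and policies on $\{\MC_i\}$. One direction is straightforward: every policy on the collapsed instance ``unrolls'' into a policy on the original one by executing the fixed internal stopping rule whenever it elects to pull a macro-state, and linearity of expectation shows that the expected reward is preserved. For the reverse direction, I plan to use an exchange/interchange argument on optimal policies for the original instance: taking an optimal $\policy$, I argue that one can re-order actions so that once $\policy$ enters a block $B$ it stays in $B$ until the macro-stopping rule exits, without decreasing the expected total reward. This interchange step is the main obstacle, since it must handle the global matroid/$\capacity$ constraint across arms and the fact that other arms' Gittins indices may become attractive mid-block. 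Here I would lean on the standard Gittins interchange lemma (reward is unchanged under swapping two consecutive actions in arms with index-consistent orderings) together with a careful choice of the internal stopping rule so that mid-block ``temptations'' never dominate. Once equivalence is established, applying \Cref{thm:JMS:refinement} to $\{\bar{\MC}_i\}$ yields Gittins indices $\bar{\sigma}$ on macro-states; defining $\sigma(s)\triangleq \bar\sigma(\bar s)$ for every original state $s$ contained in macro-state $\bar s$ produces indices on the original state space, and the induced index-based policy is optimal for the original JMS by the above correspondence. Finally, both the collapsing procedure and the Gittins computation on the NFL instance are polynomial-time, establishing the theorem.
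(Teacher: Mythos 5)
Your high-level strategy (reduce to an NFL instance, then invoke \Cref{thm:JMS:refinement}) matches the paper's, but your reduction is a different construction and it leaves two genuine gaps. First, your collapsed instance is built by aggregating a whole block $B$ into a macro-state whose reward is the value of an internal optimal stopping rule and whose transition is the induced exit distribution. The claim that one can always choose $B$ so that the resulting macro-state satisfies NFL is asserted, not proven, and it is not clearly true: the construction is under-specified about where the macro-state transitions when the internal stopping rule halts strictly inside $B$, and a block of positive-reward states whose exit distribution is supported on further non-terminal states remains a free-lunch state. Second, and more seriously, your reverse-direction equivalence rests on an interchange argument showing that an optimal policy can be rearranged to ``commit'' to a block once entered, in the presence of the capacity constraint and of other arms whose indices become attractive mid-block. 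You correctly identify this as the main obstacle, but you do not resolve it, and for general macro-state aggregation it is a real difficulty rather than a routine application of the Gittins interchange lemma.

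The paper's reduction avoids both problems by exploiting what is special about a free-lunch state $s$: it has $\reward_i(s)>0$ and \emph{zero} probability of transitioning to a terminal state, so inspecting it collects positive reward, consumes no capacity, and cannot end the search. Hence any optimal policy may as well inspect such states immediately, and the order in which they are inspected is irrelevant (this is the ``efficient policy'' normalization). This lets the paper eliminate FL states one at a time by pushing the expected reward $\reward_i(s)/(1-\transition_i(s,s))$ \emph{backward onto the parents} and rewiring transitions through $s$, with no forward aggregation, no internal stopping problem, and no interchange argument across arms. Correspondingly, the index assigned to every collapsed state is $+\infty$ (forcing immediate inspection), whereas your assignment $\sigma(s)=\bar\sigma(\bar s)$ gives FL states a finite index and therefore does not guarantee that the greedy policy inspects them before terminating or switching arms; with a finite (possibly negative) index the policy could stop while an arm sits at an FL state and forgo free positive reward, which is suboptimal. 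To repair your proof you would either need to carry out the interchange argument in full and fix the NFL guarantee for macro-states, or switch to the parent-crediting elimination with $+\infty$ indices.
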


In the following subsection, we elaborate on the above discussion and the statement of \Cref{thm:JMS-optimal-reduction-informal}. We then provide proof of \Cref{thm:JMS-optimal-reduction-informal}.

\subsection{Collapsing Reduction and Analysis of \texorpdfstring{\Cref{thm:JMS-optimal-reduction-informal}}{}}
In this subsection, we characterize the optimal policy for a JMS problem with general rewards, and thus we  prove \Cref{thm:JMS-optimal-reduction-informal}. To that end, we start by formally defining  ``free-lunch'', or ``FL'', states as follows: For a given MC, any state $s\in \states$ is FL iff it violates the NFL condition given in \Cref{def:NFL}, i.e., $\reward(s)>0$ and there does not exist a terminal state $t\in\terminals$ such that $\transition(s,t)>0$.

Given any problem instance $\left\{\MC_i\right\}_{i\in[\altnum]}$ that may also include some FL states, we now present a reduction, called ``Collapsing'', which results in a new JMS instance denoted by $\left\{\MC_i^\collapse\right\}_{i\in[\altnum]}$ where each $\MC_i^\collapse$ contains no FL state.

\begin{definition}[{\bf Collapsed MC and JMS}]
\label{def:Collapsing}
For any $\MC_i$ with general rewards, we construct its collapsed version, $\MC_i^\collapse$, by iteratively collapsing FL states until there remains no FL state in the resulting MC, $\MC_i^\collapse$. 
In particular, at each iteration, select a FL state, say $s$. Let $\states^p$ ({resp.} $\states^c$) be the set of all parents ({resp.} children) states of $s$ in the current MC, excluding $s$ itself.\footnote{Note that $s$ can have a self-loop and thus be its own parent and child}

\begin{enumerate}

    \item {\bf State elimination:} Remove state s from the current MC.
    
    \item {\bf Updating reward:} For any $s^p \in \states^p$, add $\frac{\reward_i(s)}{1-\transition_i(s,s)}$ to $\reward_i(s^p)$.
    
    \item {\bf Updating transition probabilities:} For any $s^p \in \states^p$ and $s^c \in \states^c$, add $\frac{\transition_i(s^p,s)\transition_i(s,s^c)}{1-\transition_i(s,s)}$ to $\transition_i(s^p,s^c)$.
    
\end{enumerate}

After completing this iterative process for each MC, we arrive at the JMS $\left\{\MC_i^\collapse\right\}_{i\in[\altnum]}$, the collapsed version of $\left\{\MC_i\right\}_{i\in[\altnum]}$. This process will end in at most $d$ iterations, as we are removing one state at each iteration.


\end{definition}

Based on this process, for any $\MC_i$, $i\in[\altnum]$, we define the set of {``non-collapsed''} states, denoted by $\nonCollapsedStates_i \subseteq \states_i$, as the set of all states of $\MC_i^\collapse$. 
We call states in $\states_i \setminus \nonCollapsedStates_i $, ``collapsed'' states.
With these definitions, we next establish an equivalence between stationary policies for $\left\{\MC_i^\collapse\right\}_{i\in[\altnum]}$ and the class of stationary ``efficient'' policies for the original JMS instance $\left\{\MC_i\right\}_{i\in[\altnum]}$, as defined below:

\begin{definition}[{\bf Efficient Policy}]
We call a policy ``efficient'', if it never terminates when (i) it has remaining capacity for selection and (ii) there exists at least one $\MC_i$ whose current state is in $\states_i \setminus \nonCollapsedStates_i$. In other words, as long as there exists 
Markov chains whose current states are collapsed states,
an efficient policy will always inspect one of such MCs as long as it has not run out of capacity $\capacity$ for slection.
\end{definition}

Note that there exists an optimal policy of $\left\{\MC_i\right\}_{i\in[\altnum]}$ that is efficient. 
To see why, first note that if there exists an MC in a collapsed state, it is always strictly better to inspect such an MC to accrue its positive expected reward before terminating. Next, notice that the order 
of inspecting MCs at collapsed states do not impact the expected reward,
because (i) all of them have to eventually be inspected, and (ii) inspecting a Markov chain at a collapsed state
will not result in terminating the search process, as it does not cause any of the MCs to go to a terminal state. We state the aforementioned equivalence in the following claim.

\begin{claim}
\label{clm:equivalency}
Consider any general instance of JMS, $\left\{\MC_i\right\}_{i\in[\altnum]}$, with starting states $(s^{(0)}_1,\ldots,s^{(0)}_\altnum)$ where $\forall i \in [\altnum]: s^{(0)}_i \in \nonCollapsedStates_i$. Then for any stationary efficient policy $\policy$ of $\left\{\MC_i\right\}_{i\in[\altnum]}$ with $(s^{(0)}_1,\ldots,s^{(0)}_\altnum)$, there exists a stationary policy for $\left\{\MC_i^\collapse\right\}_{i\in[\altnum]}$ with $(s^{(0)}_1,\ldots,s^{(0)}_\altnum)$, that achieves the same expected reward, and vice versa.
\end{claim}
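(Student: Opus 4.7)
The plan is to prove the claim by induction on the total number of FL states across all Markov chains that still need to be collapsed. In the base case of zero FL states, every $\MC_i$ already equals $\MC_i^\collapse$, every stationary policy is trivially efficient, and the correspondence is just the identity. For the inductive step, it suffices to verify the equivalence for a single collapsing operation: collapsing one FL state $s$ of some chain $\MC_i$ to produce an intermediate instance with one fewer FL state. The overall claim then follows by composing these single-step correspondences; since the starting configuration has $s_i^{(0)} \in \nonCollapsedStates_i$ for every $i$, at no point during the induction do we need to map a policy whose starting state was just collapsed.

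For a single collapsing step at state $s$ of $\MC_i$, I would set up the correspondence explicitly. Given a stationary efficient policy $\policy$ on the pre-collapse instance, its projection onto the post-collapse instance is defined by restricting $\policy$ to joint states in which $\MC_i$ is not at $s$ (these are precisely the joint states of the post-collapse instance). Conversely, given a stationary policy $\policy'$ on the post-collapse instance, I lift it to an efficient policy by keeping $\policy'$ on joint states in which no chain is at a collapsed state, and by prescribing a canonical efficient action on states where some chain is at a collapsed state (for instance, inspect the lowest-indexed such chain, provided capacity remains). This gives a well-defined pair of maps, and the executions of the two policies agree on the sub-sequence of moves in which no chain sits at $s$.

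The main work is to show that corresponding policies produce the same expected total reward. I plan to argue this by conditioning on the trajectory up to the moment $\MC_i$ enters $s$ from some parent $s^p$. From that moment, by the efficiency requirement and the Markov property, the chain's sojourn at $s$ is a sequence of geometrically many self-loops before exit: the expected reward accumulated while at $s$ is $\reward_i(s)/(1-\transition_i(s,s))$, and the conditional exit distribution assigns $\transition_i(s,s')/(1-\transition_i(s,s))$ to each $s' \neq s$. These are exactly the quantities that the reward update (step 2) and transition update (step 3) of the collapsing definition encode, interpreted as the modification to the transition out of $s^p$ that replaces the two-step trajectory $s^p \to s \to s'$ by a single direct step $s^p \to s'$ whose reward absorbs the expected sojourn reward at $s$. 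Crucially, since $s$ is an FL state it has no outgoing transition to any terminal state, so the sojourn at $s$ consumes no selection capacity, and the capacity constraint $\capacity$ binds identically under both policies. Combining this with the inductive hypothesis on the remainder of the trajectory yields equality of expected rewards.

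The main obstacle I expect is the careful treatment of interleaving when multiple chains are simultaneously in collapsed states, because the efficient policy is only required to inspect \emph{some} such chain at each step rather than the specific one just entered. I plan to handle this by invoking the independence of the individual chains' dynamics: conditional on the number of times each chain is inspected, that chain's trajectory, and therefore its contribution to the expected reward, does not depend on when those inspections occur relative to inspections of other chains. This order-invariance lets me analyze each chain's sojourn at its FL states as a self-contained compound transition that exactly matches the collapsed instance one chain at a time. A final check, immediate from the history-independent prescriptions above, is that both the projection and the lift preserve stationarity.
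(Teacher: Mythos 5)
Your proposal is correct and rests on the same core observations as the paper's proof: efficiency forces every collapsed (FL) state to be processed before termination, the order in which chains at collapsed states are inspected is reward-irrelevant, FL states cannot trigger a terminal transition and hence consume no capacity, and the expected sojourn reward and exit distribution at a collapsed state match the reward and transition increments of the collapsing reduction. The paper argues this in one shot via a global restriction/extension of policies between the original and fully collapsed instances, whereas you decompose it into an induction over single collapse steps with an explicit geometric-sojourn computation; this is only a more granular presentation of the same argument (note only that the per-parent reward increment should carry the weight $\transition_i(s^p,s)$ on the conditional sojourn reward $\reward_i(s)/(1-\transition_i(s,s))$, which you should reconcile with the statement of the reduction).
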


\begin{proof}{\emph{Proof.}}
    Define $\policy^\collapse$ as the restriction of $\policy$ on only the non-collapsed states $\prod_{i\in [\altnum]} \nonCollapsedStates_i$. In other words, policy $\policy^\collapse$ for $\left\{\MC_i^\collapse\right\}_{i\in[\altnum]}$ at any state will make the exact same decision as  $\policy$ does in that state of the original JMS $\left\{\MC_i\right\}_{i\in[\altnum]}$. To see why the expected rewards under policy $\policy$ (for $\left\{\MC_i\right\}_{i\in[\altnum]}$)  and $\policy^\collapse$ 
    (for $\left\{\MC_i^\collapse\right\}_{i\in[\altnum]}$) are the same, note that if there is a MC in $\left\{\MC_i\right\}_{i\in[\altnum]}$  at a collapsed state, $\policy$ will inspect that (by definition of being efficient). Further, as noted above, the order of inspecting MCs at collapsed states does not impact the expected reward. As such, the expected reward accrued during inspection of MCs at collapsed states will be the same as the increase in the rewards of non-collapsed states determined in the reduction of  $\left\{\MC_i\right\}_{i\in[\altnum]}$ to $\left\{\MC_i^\collapse\right\}_{i\in[\altnum]}$ (as in \Cref{def:Collapsing}).
    For the reverse direction, we define $\policy$ for $\left\{\MC_i\right\}_{i\in[\altnum]}$ as the policy which makes the same decision as $\policy^\collapse$ does, if every $\MC_i$ is at a state  in $\nonCollapsedStates_i$. Otherwise, it will inspect a $\MC_i$ whose state is not in $\nonCollapsedStates_i$. By a similar line of reasoning, the expected reward under  the newly-constructed $\policy$ (for $\left\{\MC_i\right\}_{i\in[\altnum]}$) will be the same as that under $\policy^\collapse$ 
    (for $\left\{\MC_i^\collapse\right\}_{i\in[\altnum]}$).
    \qed
\end{proof}

Building on Claim \ref{clm:equivalency}, in the next claim we complete the proof of \Cref{thm:JMS-optimal-reduction-informal} by giving an optimal index-based policy for the original JMS, $\left\{\MC_i\right\}_{i\in[\altnum]}$.

\begin{claim}
\label{claim:OptimalIndexPolicy}
Let  $\reserve_i^\collapse(s)$, $\forall i\in [\altnum], \forall s\in \nonCollapsedStates_i$, be the Gittins indices defined in \cite{kleinberg2017tutorial,gupta2019markovian} for the JMS, $\left\{\MC_i^\collapse\right\}_{i\in[\altnum]}$, which satisfies the NFL condition. Then, the index-based (greedy) policy for selecting at most $\capacity$ number of MCs based on the following indices is an optimal policy for the original JMS, $\left\{\MC_i\right\}_{i\in[\altnum]}$.

\begin{equation}
\begin{aligned}
\reserve_i(s) \triangleq 
\begin{cases}
\reserve_i^\collapse(s) & s \in \nonCollapsedStates_i ~,\\ 
+ \infty & \textrm{o.w.}
\end{cases}
\end{aligned}
\end{equation}
\end{claim}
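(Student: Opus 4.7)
{\emph{Proof proposal for Claim~\ref{claim:OptimalIndexPolicy}.}}
The plan is to combine the equivalence established in Claim~\ref{clm:equivalency} with the optimality of Gittins indices under NFL (Theorem~\ref{thm:JMS:refinement}). First, I would argue that there is always an optimal stationary policy for the original JMS $\{\MC_i\}_{i\in[n]}$ that is \emph{efficient}. This follows from the observations made right before Claim~\ref{clm:equivalency}: whenever the decision maker has remaining capacity and some $\MC_i$ is in a collapsed (free-lunch) state $s\in\states_i\setminus\nonCollapsedStates_i$, inspecting $\MC_i$ is strictly better than terminating---the state $s$ yields strictly positive expected reward (by construction of the collapsing reduction, which only collapses FL states with $R_i(s)>0$ and no transition to any terminal state), and inspecting it cannot cause any MC to enter a terminal state. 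Moreover, the \emph{order} in which such collapsed-state chains are inspected is irrelevant to the expected total reward, since they must all eventually be processed and their reward contributions do not interact with the stopping decisions made at non-collapsed states.

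Second, I would reduce to the NFL case. Without loss of generality, assume the initial configuration $(s_1^{(0)},\dots,s_\altnum^{(0)})$ satisfies $s_i^{(0)}\in \nonCollapsedStates_i$ for all $i$; otherwise, an efficient policy will deterministically inspect any MC at a collapsed starting state first, and we can analyze the problem from the resulting (stochastic) non-collapsed configuration by iterated expectations. Under this assumption, Claim~\ref{clm:equivalency} supplies a bijection between stationary efficient policies for $\{\MC_i\}_{i\in[\altnum]}$ and stationary policies for the collapsed instance $\{\MC_i^{\collapse}\}_{i\in[\altnum]}$, preserving the expected reward. Combined with the first step, this yields
\[
\max_{\policy\in\PolicySpace}\expect{R_\policy\mid \{\MC_i\}}
\;=\;\max_{\policy^{\collapse}\in\PolicySpace^{\collapse}}\expect{R_{\policy^{\collapse}}\mid\{\MC_i^{\collapse}\}}.
\]
Because $\{\MC_i^{\collapse}\}$ satisfies NFL, Theorem~\ref{thm:JMS:refinement} identifies the right-hand maximum: it is attained by the Gittins index policy with indices $\reserve_i^{\collapse}(s)$ for $s\in\nonCollapsedStates_i$, inspecting the chain with the largest non-negative index at each step and terminating once capacity $\capacity$ is exhausted or all remaining indices are non-positive.

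Finally, I would verify that the proposed policy with indices $\reserve_i(s)=\reserve_i^{\collapse}(s)$ on $\nonCollapsedStates_i$ and $\reserve_i(s)=+\infty$ on $\states_i\setminus\nonCollapsedStates_i$ exactly implements, on $\{\MC_i\}_{i\in[\altnum]}$, the policy corresponding under Claim~\ref{clm:equivalency} to the Gittins index policy on $\{\MC_i^{\collapse}\}$. Two properties need to be checked: (i)~\emph{efficiency}: whenever some $\MC_i$ is at a collapsed state, its index $+\infty$ dominates all finite indices, so the greedy rule inspects such a chain rather than stopping (and the non-positive-index termination rule cannot trigger); (ii)~\emph{agreement with Gittins on non-collapsed configurations}: when every $\MC_i$ is at a state in $\nonCollapsedStates_i$, the index values coincide with $\reserve_i^{\collapse}$, so the chain selected for inspection and the decision to terminate (all indices non-positive, or capacity reached) are identical to those of the Gittins index policy on $\{\MC_i^{\collapse}\}$. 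Consequently, by the chain of equalities above, this index policy achieves $\max_{\policy\in\PolicySpace}\expect{R_\policy}$ on the original JMS, which is the desired conclusion.

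The only subtle step is (ii) above---specifically, ensuring that the correspondence in Claim~\ref{clm:equivalency} respects the capacity constraint and the ``all indices non-positive'' stopping rule after the reduction, since the extra reward absorbed by non-collapsed parents during collapsing is exactly what is needed for the Gittins indices computed on the collapsed instance to correctly compare against stopping in the original instance. This is essentially bookkeeping and follows from the reward-update rule in step~(2) of Definition~\ref{def:Collapsing}, but it is the main place where care is required. \qed
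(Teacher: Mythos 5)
Your proposal is correct and follows essentially the same route as the paper: establish that some optimal policy is efficient, use the reward-preserving correspondence of Claim~\ref{clm:equivalency} to pass to the collapsed (NFL) instance where the Gittins index policy is optimal by Theorem~\ref{thm:JMS:refinement}, and observe that the $+\infty$ indices on collapsed states make the proposed policy efficient and coincide with the collapsed Gittins policy on non-collapsed configurations (with collapsed starting states handled by noting both policies first drive all chains to non-collapsed states). The only cosmetic difference is that you phrase the argument as an equality of optimal values plus identification of the maximizer, whereas the paper directly compares the proposed index policy against an arbitrary optimal efficient policy through their collapsed counterparts.
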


\begin{proof}{\emph{Proof.}}
To prove this claim, consider any stationary efficient optimal policy $\policy^*$ for $\left\{\MC_i\right\}_{i\in[\altnum]}$.\footnote{See \cite{dumitriu2003playing} for existence of an optimal stationary policy.} Since it is efficient, by Claim~\ref{clm:equivalency}, its equivalent ``collapsed'' policy $\policy^\collapse$ for $\left\{\MC_i^\collapse\right\}_{i\in[\altnum]}$ achieves the same expected reward. Now consider the index-based policy based on the $\reserve_i(s)$ introduced above; we call it $\widetilde{\policy}$. First, notice that this policy is also an efficient policy: by definition of indices, if there is a MC at a collapsed state (thus with index $+\infty$), then this policy will inspect such a MC.
Since $\widetilde{\policy}$ is stationary and efficient, again by Claim~\ref{clm:equivalency}  its equivalent ``collapsed'' policy $\widetilde{\policy}^\collapse$ for $\left\{\MC_i^\collapse\right\}_{i\in[\altnum]}$ achieves the same expected reward. Second, notice that $\widetilde{\policy}^\collapse$ was nothing but the optimal Gittins index policy for $\left\{\MC_i^\collapse\right\}_{i\in[\altnum]}$, implying that its expected reward cannot be less than $\policy^\collapse$. Hence, we can conclude that for any starting states $(s^{(0)}_1,\ldots,s^{(0)}_\altnum)$, where $\forall i \in [\altnum]: s^{(0)}_i \in \nonCollapsedStates_i$, the expected reward of $\widetilde{\policy}$ is at least that of $\policy^*$. Finally, suppose there are some Markov chains whose starting states are collapsed states. Then, since 
both $\widetilde{\policy}$ and $\policy^*$ are efficient, both will inspect those Markov chains until they reach a state $(s_1,\ldots,s_\altnum)$, where $\forall i \in [\altnum]: s_i \in \nonCollapsedStates_i$.
As a result, from any starting state the expected reward of $\widetilde{\policy}$ would be at least that of $\policy^*$, implying that $\widetilde{\policy}$ is also an optimal policy. This will conclude the proof of this claim.
\qed
\end{proof}


In the last part of this section, for the sake of completeness, we restate the definition of the Gittins indices, $\reserve_i^\collapse(s)$, for the collapsed JMS $\left\{\MC_i^\collapse\right\}_{i\in[\altnum]}$, which satisfied the NFL assumption \citep{dumitriu2003playing,kleinberg2017tutorial}.

\begin{definition}[{\bf Gittins indices of $\left\{\MC_i^\collapse\right\}_{i\in[\altnum]}$}]
For any Markov chain $\MC_i^\collapse$ and state $s\in \nonCollapsedStates_i$, we define the $\reserve_i^\collapse(s)$ as the smallest real number such that this property holds: \emph{Consider a new JMS problem with only two Markov chains $\left\{\widetilde{\MC}_j\right\}_{j\in[2]}$, where $\widetilde{\MC}_1\triangleq \MC_i^\collapse$, and $\widetilde{\MC}_2\triangleq (\states=\left\{ 
s',t' \right\},\terminals=\left\{ 
t' \right\},\transition(s_1, s_2)= \indicator{s_2 = t'}
 ,\reward=(0,0))$.
 Now if we subtract $\reserve_i^\collapse(s)$ from the rewards of all of the terminal states in $\widetilde{\MC}_1$, then there exists no policy that can achieve positive expected reward for this new instance of JMS.}


\end{definition}

We conclude by noting that we can establish the polynomial-time computability of the Gittins indices defined above by a simple adjustment to the polynomial-time algorithm given in \cite{dumitriu2003playing}. We omit the details for the sake of brevity.
}

\if false 

\MA{notations, please skip the following. Just read the above!}

\MA{old version}

For now, suppose that non of our Markov Chains contains a positive reward state that also has a transition to a terminal state (these states are named ``Free Lunch'' or ``FL'' states). We call these subset of all JMS problems as ``Basic JMS'' problems.

\cite{dumitriu2003playing} and \cite{kleinberg2017tutorial} have provided an optimal policy to solve Basic JMS, which is revisited in the following. 

For each state $s$ in the Basic MC $i$, we define the $\reserve_i(s)$ as the smallest real number such that the following property holds:

Consider a new JMS problem that only subsumes MC $i$, starting from state $s$, as well as another Markov chain that consists of only 2 states, both with zero rewards, an initial state $s^{'}$ and the terminal state $T^{'}$ with a transition probability of 1 from $s^{'}$ to $T^{'}$. Now if we subtract the amount $\reserve_i(s)$ from the rewards of all of the terminal states in MC $i$, then there exists no policy that can achieve positive expected reward for this new instance of JMS.

These amounts are, in fact, the Gittins indices of the corresponding states in the Basic JMS. Consequently, the aforementioned references have proved the following theorem:

\begin{theorem}
\label{thm:JMS:refinement}
The index-based policy, which at each time $t$ inspects the Markov chain whose $\reserve_i(s_i^t)$ is the highest across all Markov chains, is an optimal policy for the Basic JMS.
\end{theorem}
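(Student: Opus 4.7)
The plan is to establish the optimality of the Gittins index policy for Basic JMS via a classical ``prevailing charge'' / interchange argument, adapted to the absorbing Markov reward setting.

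First, I would reinterpret $\reserve_i(s)$ as the value of a single-chain optimal stopping problem. By its defining property, $\reserve_i(s)$ is the largest $\reserve$ such that playing chain $\MC_i$ alone from $s$, after subtracting $\reserve$ from every terminal reward, still admits a nonnegative-value stopping strategy. Equivalently, it equals $\sup_\tau \mathbb{E}\bigl[\sum_{t=0}^{\tau-1} \reward_i(S_t) \bigr] \big/ \mathbb{E}\bigl[\mathbbm{1}\{S_\tau \in \terminals_i\}\bigr]$ over stopping times $\tau \geq 1$ bounded by the hitting time of $\terminals_i$. Under the NFL assumption the denominator can be made positive whenever the numerator is, so this ratio is well-defined; moreover, the optimal stopping rule $\tau^*$ for the single-chain problem is the first hitting time of the set $\{s' : \reserve_i(s') \leq \reserve_i(s)\}$. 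This establishes the crucial monotonicity: along an optimal single-chain continuation, the index never strictly increases.

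Second, for each chain $i$ I would define the prevailing charge $\bar\reserve_i(k) \triangleq \min_{k' \leq k} \reserve_i\bigl(S_i^{(k')}\bigr)$, the running minimum of the index along the trajectory of $\MC_i$ through its $k$-th inspection. The key lemma (proved by a martingale / exchange argument using Step 1) states that for any admissible policy $\policy$ and any chain $i$, the expected reward collected from $\MC_i$ satisfies
\[
\mathbb{E}\Bigl[\sum_{k \in K_i^{\policy}} \reward_i(S_i^{(k)})\Bigr] \;\leq\; \mathbb{E}\Bigl[\sum_{k \in K_i^{\policy}} \bar\reserve_i(k)\Bigr],
\]
where $K_i^{\policy}$ is the (random) set of times $\policy$ inspects $\MC_i$, and equality holds iff $\policy$ only plays $\MC_i$ while its index equals its prevailing charge. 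Summing this bound over $i$, noting that at each step at most one chain is played and at most $\capacity$ chains are ever selected, the total expected reward of any $\policy$ is bounded above by $\mathbb{E}[\sum_t \max_i \bar\reserve_i(t)^+]$ subject to the capacity. The Gittins index policy saturates this bound exactly: it always inspects the chain whose current index is the largest positive value, and by the monotonicity from Step 1 this chain remains at the prevailing-charge-equal-index regime until its index strictly drops. A standard matroid-greedy exchange argument then confirms that restricting to at most $\capacity$ selections does not break optimality because the capacity constraint is a uniform matroid.

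The main obstacle will be carefully handling the interaction between termination and the ``charge'' accounting. Unlike classical bandits where one always plays, here inspecting $\MC_i$ may probabilistically consume one of the $\capacity$ selection slots by entering $\terminals_i$; and the policy may also voluntarily terminate. The NFL assumption is essential precisely to ensure that the ratio in Step 1 is finite and achievable: without NFL, a chain could accumulate positive reward in an intermediary state indefinitely without any termination risk, which would make the single-chain stopping problem degenerate and break the prevailing-charge bound. With NFL in hand, the accounting closes: the ``per-inspection value'' paid by the upper bound is exactly the prevailing charge, and the optimal policy collects reward at that rate while respecting both the termination dynamics and the capacity constraint, so the inequality above is tight for the Gittins policy and strict (or tight) for all others.
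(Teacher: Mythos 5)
First, a point of reference: the paper does not prove \Cref{thm:JMS:refinement} at all. It states the result and attributes it to prior work (\citet{gittins1979bandit}, \citet{dumitriu2003playing}, and the tutorial of \citet{kleinberg2017tutorial}); the paper's own contribution in that appendix is the separate ``collapsing'' reduction from arbitrary-reward JMS to the NFL case. So there is no in-paper proof to compare against, and your prevailing-charge/interchange sketch is essentially the argument used in those cited references --- you are reconstructing the right classical route rather than diverging from anything.

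That said, two steps in your sketch do not hold as written. First, your ``crucial monotonicity'' claim --- that along an optimal single-chain continuation the index never strictly increases --- is false. The index can and does increase along a trajectory: in the Pandora's box special case, after inspection the state's index becomes the realized value, which exceeds the reservation value with positive probability. What is monotone is the prevailing charge $\bar\reserve_i(k)$, but that is non-increasing by construction as a running minimum, not as a consequence of your Step 1. The property you actually need from the single-chain stopping problem is different: during the optimal continuation started at index $\reserve_i(s)$, the index stays \emph{at or above} $\reserve_i(s)$ until the stopping region is hit, so the prevailing charge is constant over that excursion and the charge game is exactly fair --- this fairness is what makes your key inequality an equality for the Gittins policy. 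Second, the upper bound $\mathbb{E}\bigl[\sum_t \max_i \bar\reserve_i(t)^+\bigr]$ ``subject to the capacity'' is not a well-defined quantity as stated, and its tightness is precisely where the JMS-specific difficulty sits: unlike classical always-playing bandits, here the horizon is endogenous, inspecting $\MC_i$ may stochastically consume one of the $\capacity$ selection slots by entering $\terminals_i$, and the policy may stop voluntarily. The standard way to close this in the cited references is an amortization/surrogate-value argument --- charging each chain the minimum of its prevailing charge and its realized terminal value, showing any policy's utility is dominated by the expectation of the best capacity-feasible set of surrogates, and showing the greedy index policy attains it. You flag this obstacle but do not resolve it, so as it stands the proof has a genuine gap at its central inequality; the ingredients are the right ones, but the accounting that links per-inspection charges to the termination and capacity structure still needs to be carried out.
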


Now that we have revisited the literature, we will state our reduction result on how we can get an optimal policy for a general JMS, using the above result for Basic JMS.

For any general JMS, we will define its corresponding ``Collapsed'' JMS as the following. Notice that the Collapsed JMS is indeed a Basic JMS. 

\begin{definition}[Collapsed MC and JMS]
\label{def:Collapsed MC, JMS, policy}
The Collapsed MC of a MC with general rewards will be derived by iteratively collapsing the FL nodes of the MC in the following manner until no FL node remains in the final MC:

    $\bullet$ At each iteration $t$, we select one of the FL nodes, say $i$, and then connect all of its parents, say $p$, to all of its children, say $c$, with probability $A(p,i)A(i,c)$, if the edge already exists, we will add this number to its transition probability. Furthermore, for any of its parents $p$, we will also add $R(i)A(p,i)$, i.e. the effective added reward to $p$ by $i$, to $R(p)$. Lastly, we just delete the node $i$ from the MC, and the resulting MC will be the MC at the next iteration.

Notice that this algorithm will finish in less than $\dimension$ iteration, since we are collapsing one state at each iteration.
Additionally, if we collapse all of the MCs in our JMS, we will get the ``Collapsed'' JMS version of the original JMS.
\end{definition}

In the end, we will state our theorem for this reduction.

\begin{theorem}
\label{theorem:generalJMSoptPolicy}
Given an instance $\left\{\MC_i\right\}_{i\in[\altnum]}$ of the JMS problem, and its corresponding collapsed instance (as in Definition~\ref{xxx}), consider the following assignment $\sigma:\cup_{i\in[\altnum]}\states_i\rightarrow \mathbb{R}$: (i) for any collapsed state $s$, $\reserve(s)$ is set to $+\infty$ and (ii) for any non-collapsed state $s$, $\reserve(s)$ is set to the Gittins index of state $s$ in $\left\{\MC_i\right\}_{i\in[\altnum]}$ (as in Definition~\ref{xxx}). Then the resulting index-based policy with indices $\sigma$, i.e., selecting Markov chain $\MC_i$ with maximum $\reserve_i(s_i)$ given states $\{s_i\}_{i\in[\altnum]}$ , is optimal.
\end{theorem}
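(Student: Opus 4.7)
}
The plan is to reduce the general JMS instance to its collapsed instance (as in Definition~\ref{def:Collapsing}), which by construction satisfies the NFL condition, and then invoke the known optimality of the Gittins index policy for NFL instances (Theorem~\ref{thm:JMS:refinement}). The key conceptual bridge between the original and collapsed instances is the notion of an \emph{efficient policy}: a stationary policy that, whenever it has remaining selection capacity and at least one Markov chain is sitting in a collapsed (FL) state, immediately inspects some such chain instead of terminating. My first step is to argue that there is no loss of optimality in restricting attention to efficient policies. Intuitively, each FL state has strictly positive reward and does not risk an undesired transition to a terminal state; hence before terminating, visiting all MCs currently at FL states can only increase the expected payoff, and since no such visit causes any MC to be selected, the order in which these visits occur is immaterial.

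Second, I would establish the equivalence between efficient stationary policies for $\{\MC_i\}_{i\in[n]}$ and arbitrary stationary policies for $\{\MC_i^{\collapse}\}_{i\in[n]}$ in terms of expected reward (this is Claim~\ref{clm:equivalency}). Given an efficient policy $\policy$ for the original JMS, one defines $\policy^{\collapse}$ as the restriction of $\policy$ to the state space $\prod_{i\in[n]}\nonCollapsedStates_i$; conversely, a policy $\policy^{\collapse}$ for the collapsed instance lifts to an efficient policy $\policy$ for the original instance, where in between the ``collapsed-state decisions'' dictated by $\policy^{\collapse}$ the policy mandatorily empties out all FL states encountered along the way. The crux is that one step of $\policy^{\collapse}$ at a non-collapsed state $s^p$ in $\MC_i^{\collapse}$ replicates, in expectation, the entire sub-trajectory that $\policy$ performs by entering the FL state $s$ and then bouncing around among FL states before exiting back to a non-collapsed state. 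This identity should be verified by summing a geometric series: the expected reward accumulated by visiting an FL state $s$ (accounting for the self-loop $A_i(s,s)$) equals $\frac{R_i(s)}{1-A_i(s,s)}$, and the exit-transition distribution to any child $s^c$ equals $\frac{A_i(s,s^c)}{1-A_i(s,s)}$, matching exactly the collapsing updates in Definition~\ref{def:Collapsing}. Iterating this identity across multiple FL states (including chains of FL states collapsed in sequence) is the main technical obstacle; I would handle it by induction on the number of FL states collapsed, showing that each collapsing step preserves the distribution of per-round payoffs for the induced non-collapsed transitions.

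Third, with the equivalence established, I would appeal to Theorem~\ref{thm:JMS:refinement} applied to $\{\MC_i^{\collapse}\}_{i\in[n]}$, which yields that the Gittins index policy on the collapsed instance, using indices $\{\reserve_i^{\collapse}(s)\}_{s\in\nonCollapsedStates_i}$, is an optimal policy for the collapsed JMS. Let $\widetilde{\policy}$ denote the candidate index policy from the theorem statement, which assigns index $+\infty$ to each collapsed (FL) state and $\reserve_i^{\collapse}(s)$ to each non-collapsed state. By construction $\widetilde{\policy}$ is efficient: whenever any MC sits in a collapsed state, its index is $+\infty$, so greedy selection forces that MC to be inspected before termination. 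Hence the restriction $\widetilde{\policy}^{\collapse}$ is a well-defined stationary policy for the collapsed instance and, moreover, coincides with the Gittins index policy for the collapsed instance (since removing FL states leaves the greedy choice unchanged).

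Finally, to close the loop: take an arbitrary optimal stationary policy $\policy^*$ for the original JMS. Without loss of generality it can be chosen to be efficient (by Step~1). By Claim~\ref{clm:equivalency}, $(\policy^*)^{\collapse}$ and $\widetilde{\policy}^{\collapse}$ are stationary policies for the collapsed JMS with $\mathbb{E}[R_{\widetilde{\policy}^{\collapse}}]\ge \mathbb{E}[R_{(\policy^*)^{\collapse}}]$ (by the optimality of Gittins on the NFL collapsed instance). Lifting back via the same claim gives $\mathbb{E}[R_{\widetilde{\policy}}]\ge \mathbb{E}[R_{\policy^*}]$, so $\widetilde{\policy}$ is optimal starting from any configuration $(s_1^{(0)},\ldots,s_n^{(0)})$ with $s_i^{(0)}\in\nonCollapsedStates_i$. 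To handle starting configurations with some $s_i^{(0)}$ a collapsed state, I note that any efficient policy (in particular $\widetilde{\policy}$ and the optimal $\policy^*$) first flushes all such states, reaching a random configuration in $\prod_i \nonCollapsedStates_i$ with the same distribution under both policies; then the previous argument concludes. The main obstacle, as anticipated, is the bookkeeping in the second step---rigorously matching the expected reward and the exit distribution of a visit to a possibly interconnected cluster of FL states with the one-step transition used in the collapsed instance---which I would tackle via induction on the order in which FL states are eliminated in Definition~\ref{def:Collapsing} and independence of the outcome from this order.
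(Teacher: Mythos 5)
Your proposal is correct and follows essentially the same route as the paper: restrict to efficient policies, establish the reward-preserving correspondence between efficient policies on the original instance and arbitrary stationary policies on the collapsed (NFL) instance, invoke the Gittins-index optimality theorem on the collapsed instance, and lift back, handling collapsed starting states by noting both policies flush them first. The only difference is that you propose to verify the geometric-series/induction bookkeeping for the collapsing step more explicitly than the paper does, which is a welcome addition rather than a deviation.
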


\begin{proof}{Proof.}
Please\RN{Hmmm, not sure if typically we use please!} \MA{;D} read \Cref{subsec:JMSOptimalpolicy} for the proof of the theorem.
\end{proof}

\MA{above from main text}

\begin{definition}[Efficient Policy]
\label{def:Efficient Policy}
We call a policy Efficient, if it will never terminate when the state of one of its MCs is a FL node. In other words, an Efficient policy always makes sure it inspects an alternative enough times such that its state becomes a non-FL node, before inspecting another alternative whose state has a possibility of going to a terminal state after inspection.
\end{definition}

Now please note the following observations: 

\begin{enumerate}
    \item Any optimal policy for the original JMS has to be an efficient policy.
    \item For any Efficient policy for the original JMS there is an equivalent policy for the Collapsed JMS that has the same expected reward (provided that they are both starting from same initial state, which has to be a non-collapsed state for each MC), and vice versa.
\end{enumerate}

To find the equivalent policy we define the following two process on policies. First, for any Efficient policy for the original JMS, the equivalent Collapsed policy for the Collapsed JMS, over the collapsed states, will make exactly the same decision as the Efficient policy. Second, for any policy for the Collapsed JMS, the equivalent DeCollapsed policy for the original JMS will always choose a MC whose state is a collapsed node, if none such MC existed, then all MC are at non-collapsed nodes. In this situation, it will do exactly what the policy for Collapsed JMS was doing. 

As a result of these observations, in order to get an optimal policy for the original JMS, we can first find an optimal policy for the Collapesd JMS, and then ``DeCollapse'' it to get its equivalent policy for the original JMS. As a consequence of the aforementioned two observations, this DeCollapsed policy has to be an optimal policy for the original JMS. Furthermore, notice that this DeCollapsed policy is exactly doing the same thing as the index policy stated in \Cref{theorem:generalJMSoptPolicy}.

\MA{please somewhere mention that the gittins indices for the Basic JMS can be computed in poly time with the algorithm introduced in \cite{dumitriu2003playing}}

\fi 
\section{Missing Proofs of \texorpdfstring{\Cref{sec:general}}{}}
\label{app:jms}

\begin{proof}{\emph{Proof of \Cref{thm:RAI}}.}
\revcolor{Let the parameters be chosen as (i) $\upperDualAffine=\upperDualConstConvex=\frac{d\barP+\varepsilon}{\delta}$, (ii) $\InnerNum=\left(\frac{6\dimension\upperDualConvex(\barP+\lowerp)\upperDualConstConvex\NumConvex}{\varepsilon}\right)^2=\mathcal{O}\left(\frac{1}{\delta^2\epsilon^2}\right)$ and $\OuterNum=\left(\frac{3\max\left(2\upperDualAffine\NumAffine,\upperDualConstConvex\NumConvex\right)}{\varepsilon}\right)^2=\mathcal{O}\left(\frac{1}{\delta^2\epsilon^2}\right)$, and (iii) $\InRate=\frac{2\upperDualConvex}{(\barP+\lowerp)\upperDualConstConvex}\frac{1}{\sqrt{\InnerNum}}$, $\OutRatelambda=\frac{\upperDualAffine}{2\sqrt{\OuterNum}}$, and $\OutRatebeta=\frac{\upperDualConstConvex}{\sqrt{\OuterNum}}$.}
We start by considering the inner-loop of the G-RDIP policy (\Cref{alg:RAI}). Fix an iteration $(m,\ell)$ of the inner loop. For any $i\in[\NumConvex]$ and $\DualConvex_i\in[-\upperDualConvex,\upperDualConvex]^{\dimension}$ we have:
\begin{align}
\DualConstConvexk_i(\DualConvexkl_i-\DualConvex_i)&\cdot \left(\nabla\ConvexFun^*_i(\DualConvexkl_i)-\alloc_{\policy^{(m,\ell)}}\right)=\frac{1}{\InRate}(\DualConvexkl_i-\DualConvex_i)\cdot (\DualConvexkl_i-\DualConvexTempkl_i)\nonumber\\
&\overset{(1)}{=}\frac{1}{2\InRate}\left(\normtwo{\DualConvexkl_i-\DualConvex_i}^2+\normtwo{\DualConvexkl_i-\DualConvexTempkl_i}^2-\normtwo{\DualConvexTempkl_i-\DualConvex_i}^2\right)\nonumber\\
&\overset{(2)}{\leq} \frac{1}{2\InRate}\left(\normtwo{\DualConvexkl_i-\DualConvex_i}^2+\normtwo{\DualConvexkl_i-\DualConvexTempkl_i}^2-\normtwo{\DualConvex^{(m,\ell+1)}_i-\DualConvex_i}^2\right)\nonumber\\
\label{eq:GDsummation1}
&\overset{(3)}{\leq} \frac{1}{2\InRate}\left(\normtwo{\DualConvexkl_i-\DualConvex_i}^2-\normtwo{\DualConvex^{(m,\ell+1)}_i-\DualConvex_i}^2+\dimension \InRate^2\upperDualConstConvex^2(\barP+\lowerp)^2\right),
\end{align}
where equality~$(1)$ holds due to the Pythagorean's lemma, inequality~$(2)$ holds by the fact that $\DualConvex^{(m,\ell+1)}_i$ is the projection of $\DualConvexTempkl_i$ onto $[-\upperDualConvex,\upperDualConvex]^{\dimension}$, and inequality~$(3)$ holds as $\lVert\alloc_{\policy^{(m,\ell)}}-\nabla\ConvexFun_i^*(\DualConvexkl_i)\rVert_{\infty}\leq (\barP+\lowerp)$, sbecause $\lVert\nabla\ConvexFun_i^*(\DualConvexkl_i)\rVert_{\infty} \leq \lowerp$ for all $\DualConvexkl_i\in[-\upperDualConvex,\upperDualConvex]^\dimension$ by applying \Cref{lemma:convex-conjugate}. By averaging both hand sides of \eqref{eq:GDsummation1} over $\ell \in [\InnerNum]$, rearranging the terms, and finally setting  $\InRate=\frac{2\upperDualConvex}{(\barP+\lowerp)\upperDualConstConvex}\frac{1}{\sqrt{\InnerNum}}$ we have:
\begin{align}
\label{eq:GDsummation2}
    \frac{1}{\InnerNum}\sum_{\ell \in [\InnerNum]}\DualConstConvexk_i(\DualConvexkl_i-\DualConvex_i)\cdot \left(\nabla\ConvexFun^*_i(\DualConvexkl_i)-\alloc_{\policy^{(m,\ell)}}\right)&\leq \frac{1}{2\InnerNum\InRate}\normtwo{\DualConvex_i^{(m,1)}-\DualConvex_i}^2-\frac{1}{2\InnerNum\InRate}\normtwo{\DualConvex_i^{(m,\InnerNum+1)}-\DualConvex_i}^2\nonumber\\
    &~~~~~~~~~~~~~~~~~~~~~~~~~~~~+\frac{\InRate}{2}\dimension(\barP+\lowerp)^2\upperDualConstConvex^2\nonumber\\
    &\leq \frac{2\dimension\upperDualConvex^2}{\InnerNum\InRate}+\frac{\InRate}{2}\dimension(\barP+\lowerp)^2\upperDualConstConvex^2=\frac{2\dimension\upperDualConvex(\barP+\lowerp)\upperDualConstConvex}{\sqrt{\InnerNum}}
\end{align}

Now, denote by $\BarDualConvex_i^{(m)}$ the average of $\DualConvexkl_i$ during the outer iteration $k$, i.e., $ \BarDualConvex_i^{(m)}=\frac{1}{\InnerNum}\sum_{\ell \in [\InnerNum]}\DualConvexkl_i$ and denote by $\Baralloc^{(m)}$ the average expected visit numbers vector of $\policy^{(m,\ell)}$ during the outer iteration $k$, i.e., $\Baralloc^{(m)}=\frac{1}{\InnerNum}\sum_{\ell \in [\InnerNum]}\alloc_{\policy^{(m,\ell)}}$. Using these notations, we can further  inequality~\eqref{eq:GDsummation2}. To do so, by incorporating the convexity of the conjugate function $\ConvexFun_i^*$ (\Cref{lemma:convex-conjugate}) we have:
\begin{align}
  \label{eq:GD-summation3}
     \frac{1}{\InnerNum}\sum_{\ell\in[\InnerNum]}\DualConstConvexk_i(\DualConvexkl_i-\DualConvex_i)\cdot \nabla\ConvexFun^*_i(\DualConvexkl_i)&\geq \frac{1}{\InnerNum}\sum_{\ell\in[\InnerNum]}\DualConstConvexk_i\left(\ConvexFun^*_i(\DualConvexkl_i)-\ConvexFun^*_i(\DualConvex_i)\right)\nonumber\\
     &\geq \DualConstConvexk_i\left(\ConvexFun^*_i(\BarDualConvex_i^{(m)})-\ConvexFun^*_i(\DualConvex_i)\right)
\end{align}
Also, we have:
\begin{equation}
    \label{eq:GD-summation4}
     \frac{1}{\InnerNum}\sum_{\ell\in[\InnerNum]}\DualConstConvexk_i(\DualConvexkl_i-\DualConvex_i)\cdot \alloc_{\policy^{(m,\ell)}}=\frac{1}{\InnerNum}\sum_{\ell\in[\InnerNum]}\DualConstConvexk_i\DualConvexkl_i\cdot \alloc_{\policy^{(m,\ell)}}-\DualConstConvexk_i\DualConvex_i\cdot\Baralloc^{(m)}
\end{equation}
By combining \eqref{eq:GDsummation2}, \eqref{eq:GD-summation3}, and \eqref{eq:GD-summation4}, and summing up the terms for all $i\in[\NumConvex]$, we have: 
\begin{align}
\label{eq:dual-Best-Response-Inner}
&\sum_{i\in[\NumConvex]} \DualConstConvexk_i\left(\ConvexFun^*_i(\BarDualConvex_i^{(m)})-\ConvexFun^*_i(\DualConvex_i)\right) + \sum_{i\in[\NumConvex]}\DualConstConvexk_i\DualConvex_i\cdot\Baralloc^{(m)}-\sum_{i\in[\NumConvex]}\frac{1}{\InnerNum}\sum_{\ell \in [\InnerNum]}\DualConstConvexk_i\DualConvexkl_i\cdot\alloc_{\policy^{(m,\ell)}}\nonumber\\
\tag{\textsc{Dual-Best-Response-Inner}}
\leq &\frac{2\dimension\upperDualConvex(\barP+\lowerp)\upperDualConstConvex\NumConvex}{\sqrt{\InnerNum}}~,
\end{align}
for any $\DualConvex_i\in[-\upperDualConvex,\upperDualConvex]^{\dimension}$. On the other hand, our algorithm also selects the optimal policy $\policy^{(m,\ell)}\in\PolicySpace$ for the adjusted rewards $\AdjustedReward^{(m,\ell)}=\vecreward-\sum_{j\in[\NumAffine]}\DualConstAffine_j^{(m)}\AffineVector_j-\sum_{i\in[\NumConvex]}\DualConstConvex_i^{(m)}\DualConvex_i^{(m,\ell)}$ at any iteration $(m,\ell)$. Hence:
\begin{equation}
\label{eq:inner-primal-bestresponse}
   \frac{1}{\InnerNum}\sum_{\ell\in [\InnerNum]}\AdjustedReward^{(m,\ell)}\cdot\alloc\leq \frac{1}{\InnerNum}\sum_{\ell\in [\InnerNum]}\AdjustedReward^{(m,\ell)}\cdot\alloc_{\policy^{(m,\ell)}}~,
\end{equation}
\vspace{-2mm}
or equivalently:
\begin{equation}
\label{eq:primal-best-response-inner}
\tag{\textsc{Primal-Best-Response-Inner}}
   \vecreward\cdot\alloc -\sum_{j\in[\NumAffine]}\DualConstAffine_j^{(m)}\AffineVector_j\cdot\alloc-\sum_{i\in[\NumConvex]}\DualConstConvex_i^{(m)}\BarDualConvex_i^{(m)}\cdot\alloc
  \leq \vecreward\cdot\Baralloc^{(m)} -\sum_{j\in[\NumAffine]}\DualConstAffine_j^{(m)}\AffineVector_j\cdot\Baralloc^{(m)}- \sum_{i\in[\NumConvex]}\frac{1}{\InnerNum}\sum_{\ell\in[\InnerNum]}\DualConstConvexk_i\DualConvexkl_i\cdot\alloc_{\policy^{(m,\ell)}}~,
\end{equation}
where the above inequalities hold for any $\alloc\in\FeasibleAlloc$. Finally, by combining the  two inequalities in  \eqref{eq:dual-Best-Response-Inner} and ~\eqref{eq:primal-best-response-inner}, and rearranging the terms, for any set of vectors $\alloc\in\FeasibleAlloc$ and $ \DualConvex_i\in[-\upperDualConvex,\upperDualConvex]^{\dimension}$, and for $i\in[\NumConvex], m\in [\OuterNum]$ we have: 
\vspace{-2mm}
\begin{myboxx}{shadecolor}{}
\vspace{-4mm}
\begin{align}
    \label{eq:inner-final}
    \tag{\textsc{Inner-Approximate-Equilibrium}}
    \barLagrange\left(\alloc;\{\DualConstAffinek_j\},\{\DualConstConvexk_i\},\{\BarDualConvex^{(m)}_i\}\right)-\barLagrange\left(\Baralloc^{(m)};\{\DualConstAffinek_j\},\{\DualConstConvexk_i\},\{\DualConvex_i\}\right)\leq \frac{2\dimension\upperDualConvex(\barP+\lowerp)\upperDualConstConvex\NumConvex}{\sqrt{\InnerNum}} \equiv \varepsilon_1.
\end{align}
\vspace{-4mm}
\end{myboxx}
\vspace{-4mm}

Next, we look at the outer loop of G-RDIP policy. Fix an iteration $k$ of the outer loop, and consider the way $\DualConstConvexk_i$ and $\DualConstAffinek_j$ are updated in this iteration for each $i\in[\NumConvex]$ and $j\in[\NumAffine]$. First, define $\forall i\in[\NumConvex], ~\hat{\beta}_i^{k}:=\DualConstConvexk_i+\OutRatebeta\ConvexFun_i(\Baralloc^{(m)})$. Then for any $i\in[\NumConvex]$ and $\DualConstConvex_i\in[0,\upperDualConstConvex]$ we have:
\begin{align}
    -\left(\DualConstConvexk_i-\DualConstConvex_i\right)\ConvexFun_i\left(\Baralloc^{(m)}\right)&=\frac{1}{\OutRatebeta}\left(\DualConstConvexk_i-\DualConstConvex_i\right)\left(\DualConstConvexk_i-\hat{\DualConstConvex}_i^{(m)}\right)\nonumber\\
    &=\frac{1}{2\OutRatebeta}\left(\left(\DualConstConvexk_i-\DualConstConvex_i\right)^2+\left(\DualConstConvexk_i-\hat\DualConstConvex_i^{(m)}\right)^2-\left(\hat\DualConstConvex^{(m)}_i-\DualConstConvex_i\right)^2\right)\nonumber\\
    \label{eq:outer-loop-beta}
    &\leq \frac{1}{2\OutRatebeta}\left(\left(\DualConstConvexk_i-\DualConstConvex_i\right)^2-\left(\DualConstConvex^{(m+1)}_i-\DualConstConvex_i\right)^2+\OutRatebeta^2\right)~.
\end{align}

By averaging both hand sides of \eqref{eq:outer-loop-beta} for $m\in [\OuterNum]$, rearranging the terms, and finally setting $\OutRatebeta=\frac{\upperDualConstConvex}{\sqrt{\OuterNum}}$ we obtain the following inequality:
\begin{align}
     -\frac{1}{\OuterNum}\sum_{m\in[\OuterNum]}\left(\DualConstConvexk_i-\DualConstConvex_i\right)\ConvexFun_i\left(\Baralloc^{(m)}\right)&\leq \frac{1}{2\OuterNum\OutRatebeta}\left(\DualConstConvex_i^{(1)}-\DualConstConvex_i\right)^2+\frac{\OutRatebeta}{2}
     \leq\frac{\upperDualConstConvex^2}{2\OuterNum\OutRatebeta}+\frac{\OutRatebeta}{2}=\frac{\upperDualConstConvex}{\sqrt{\OuterNum}}~.\label{eq:beta-GD1}
\end{align}
Similarly, by considering the update equation of $\DualConstAffinek_j$ for any $j\in[\NumAffine]$, following exactly the same lines as in the above argument, and finally setting $\OutRatelambda=\frac{\upperDualAffine}{2\sqrt{\OuterNum}}$, for any $\DualConstAffine_j\in[0,\upperDualAffine]$ we have:
\begin{align}
     \frac{1}{\OuterNum}\sum_{m\in[\OuterNum]}\left(\DualConstAffinek_j-\DualConstAffine_j\right)\left(\AffineConstant_j-\AffineVector_j\cdot\Baralloc^{(m)}\right)&\leq \frac{1}{2\OuterNum\OutRatelambda}\left(\DualConstAffine_j^{(1)}-\DualConstAffine_j\right)^2+2\OutRatelambda
     \leq\frac{\upperDualAffine^2}{2\OuterNum\OutRatelambda}+2\OutRatelambda=\frac{2\upperDualAffine}{\sqrt{\OuterNum}}~ \label{eq:lambda-GD1}
\end{align}
where in the first inequality we have used the assumption that $\lvert\AffineConstant_j\rvert\leq 1$, and that for every $\alloc\in[0,\barP]^{\dimension}$ we have $\lvert\AffineVector_j\cdot\alloc\rvert\leq 1$. 
Now, $\forall i \in [\NumConvex]$ denote by $\BarDualConstConvex_i$ the average of $\DualConstConvexk_i$ over all outer iterations, i.e., $ \BarDualConstConvex_i=\frac{1}{\OuterNum}\sum_{m\in[\OuterNum]}\DualConstConvexk_i$, and $\forall j \in[\NumAffine]$ denote by $\BarDualConstAffine_j$ the average of $\DualConstAffinek_j$ over all outer iterations, i.e., $ \BarDualConstAffine_j=\frac{1}{\OuterNum}\sum_{m\in[\OuterNum]}\DualConstAffinek_j$. Also, denote by $\Baralloc$ the average of all vectors of expected visit numbers across all iterations of our algorithm, i.e., $\Baralloc=\frac{1}{\OuterNum}\sum_{m\in[\OuterNum]}\Baralloc^{(m)}\equiv \frac{1}{\OuterNum\InnerNum}\sum_{m\in[\OuterNum]}\sum_{\ell\in[\InnerNum]}\alloc_{\policy^{(m,\ell)}}$. Note that due to the convexity of $\ConvexFun_i$, $\ConvexFun_i(\Baralloc)\leq \frac{1}{\OuterNum}\sum_{m\in[\OuterNum]}\ConvexFun_i(\Baralloc^{(m)})$. Using this fact, and by summing up both hand sides of \eqref{eq:beta-GD1} for $i\in[\NumConvex]$, we obtain this inequality for any choice of $\DualConstConvex_i\in[0,\upperDualConstConvex]$ and for $i\in[\NumConvex]$: 
\vspace{-2mm}
\begin{myboxx}{shadecolor}{}
\vspace{-4mm}
\begin{equation}
    \label{eq:dual-beta-bestresponse}
    \tag{\textsc{Dual-Best-Response-Outer-I}}
    \sum_{i\in[\NumConvex]}\DualConstConvex_i\ConvexFun_i(\Baralloc)-\sum_{i\in[\NumConvex]}\frac{1}{\OuterNum}\sum_{m\in[\OuterNum]}\DualConstConvexk_i\ConvexFun_i(\Baralloc^{(m)})\leq \frac{\upperDualConstConvex\NumConvex}{\sqrt{\OuterNum}}\equiv\varepsilon_2~.
    \vspace{-2mm}
\end{equation}
\end{myboxx}
\vspace{-2mm}
Similarly, by summing up both hand sides of \eqref{eq:lambda-GD1} for $j\in[\NumAffine]$, we obtain the following inequality for any choice of $\DualConstAffine_j\in[0,\upperDualAffine]$ and for $j\in[\NumAffine]$: 
\vspace{-2mm}
\begin{myboxx}{shadecolor}{}
\vspace{-0mm}
\begin{equation}
    \label{eq:dual-lambda-bestresponse}
    \tag{\textsc{Dual-Best-Response-Outer-II}}
    \sum_{j\in[\NumAffine]}\frac{1}{\OuterNum}\sum_{m\in[\OuterNum]}\DualConstAffinek_j\left(\AffineConstant_j-\AffineVector_j\cdot\Baralloc^{(m)}\right)-\sum_{j\in[\NumAffine]}\DualConstAffine_j\left(\AffineConstant_j-\AffineVector_j\cdot\Baralloc\right)\leq \frac{2\upperDualAffine\NumAffine}{\sqrt{\OuterNum}}\equiv\varepsilon_3~.
\end{equation}
\vspace{-4mm}
\end{myboxx}
\vspace{-2mm}
To put all the pieces together and obtain the final result, first note that at any iteration $k$, one can consider the assignment $\forall i \in [\NumConvex], ~ \DualConvex_i \leftarrow \nabla\ConvexFun_i(\Baralloc^{(m)})\in[-\upperDualConvex,\upperDualConvex]^{\dimension},$ in \eqref{eq:inner-final}. Due to \Cref{lemma:convex-conjugate}, 
$\ConvexFun_i(\Baralloc^{k})=\DualConvex_i\cdot\Baralloc^{(m)}-\ConvexFun_i^*(\DualConvex_i)$, and therefore:
$$
\barLagrange\left(\Baralloc^{(m)};\{\DualConstAffinek_j\},\{\DualConstConvexk_i\},\{\nabla\ConvexFun_i(\Baralloc^{(m)})\}\right)=\Lagrange\left(\Baralloc^{(m)};\{\DualConstAffinek_j\},\{\DualConstConvexk_i\}\right)~.
$$

Therefore, we obtain the following inequality
\begin{equation}
\label{eq:inner-final-relaxed}
\forall m\in[\OuterNum],~\alloc\in\FeasibleAlloc:~~ \barLagrange\left(\alloc;\{\DualConstAffinek_j\},\{\DualConstConvexk_i\},\{\BarDualConvex^{(m)}_i\}\right)-\Lagrange\left(\Baralloc^{(m)};\{\DualConstAffinek_j\},\{\DualConstConvexk_i\}\right)\leq \varepsilon_1
\end{equation}

Recall the definition of the optimal fair policy $\OptFairPolicy$ in \eqref{eq:opt-fair}. Such a policy exists as the Markovian game instance $\{\MC_i\}_{i\in[\altnum]}$ is feasible (Assumption~\ref{assumption:feasible-fair}). Let $\OptFair\triangleq \expect{R_{\OptFairPolicy}}=\vecreward\cdot\alloc_{\OptFairPolicy}$. By setting $\alloc=\alloc_{\OptFairPolicy}$ in inequality~\eqref{eq:inner-final-relaxed}, and using the fact that $\barLagrange$ is a relaxation of the optimal policy for any \emph{feasible} choice of dual variables, i.e., $\DualConstAffineVec,\DualConstConvexVec\geq 0$, and $\forall i\in[\NumConvex]: ~\DualConvex_i \in \mathbb{R}^{\dimension}$, we have:
\begin{equation}
\label{eq:inner-final-relaxed-2} 
\forall m\in[\OuterNum]:~~ \OptFair-\Lagrange\left(\Baralloc^{(m)};\{\DualConstAffinek_j\},\{\DualConstConvexk_i\}\right)\leq \varepsilon_1
\end{equation}
Now, by averaging over all iterations $m\in [\OuterNum]$ in \eqref{eq:inner-final-relaxed-2}, we obtain the following inequality:
\begin{equation}
    \label{eq:inner-final-relaxed-3}
    \OptFair-\vecreward\cdot\Baralloc-\sum_{j\in[\NumAffine]}\frac{1}{\OuterNum}\sum_{m\in [\OuterNum]}\DualConstAffinek_j\left(\AffineConstant_j-\AffineVector_j\cdot\Baralloc^{(m)}\right)+\sum_{i\in[\NumConvex]}\frac{1}{\OuterNum}\sum_{m\in[\OuterNum]}\DualConstConvexk_i\ConvexFun_i(\Baralloc^{(m)})  \leq \varepsilon_1
\end{equation}

To conclude, we add up both hand sides of inequalities ~\eqref{eq:inner-final-relaxed-3}, \eqref{eq:dual-beta-bestresponse}, and \eqref{eq:dual-lambda-bestresponse}, so that we obtain the following final inequality (which holds for \emph{any assignment} of $\DualConstAffine_j\in[0,\upperDualAffine],j\in[\NumAffine]$ and $\DualConstConvex_i\in[0,\upperDualConstConvex],i\in[\NumConvex]$):
\begin{equation}
    \label{eq:final}
    \OptFair-\vecreward\cdot\Baralloc-\sum_{j\in[\NumAffine]}\DualConstAffine_j\left(\AffineConstant_j-\AffineVector_j\cdot\Baralloc\right)+\sum_{i\in[\NumConvex]}\DualConstConvex_i\ConvexFun_i(\Baralloc)  \leq \varepsilon_1+\varepsilon_2+\varepsilon_3\leq\varepsilon~,
\end{equation}
Now, by setting $\DualConstAffine_j=0$ for all $j\in[\NumAffine]$ and $\DualConstConvex_i=0$ for all $i\in[\NumConvex]$, the expected reward objective of the G-RDIP policy $\hat{\policy}$ (returned by \Cref{alg:RAI}) is bounded below by:
$$
\expect{R_{\hat{\policy}}}=\frac{1}{\OuterNum\InnerNum}\sum_{m\in[\OuterNum]}\sum_{\ell\in[\InnerNum]}\vecreward\cdot \alloc_{\policy^{(m,\ell)}}=\vecreward\cdot\Baralloc\geq \OptFair -\varepsilon
$$
At the same time, notice that $\lvert\OptFair-\vecreward\cdot\Baralloc\rvert\leq d\barP$. Hence, for all $j\in[\NumAffine]$ we should have that $\AffineConstant_j-\AffineVector_j\cdot\Baralloc\geq -\delta$, because if the converse holds for some $j$ then we can set $\DualConstAffine_j=\frac{\dimension\barP+\varepsilon}{\delta}\leq\upperDualAffine$ (and all other $\DualConstAffine_{j'}$'s and $\DualConstConvex_{i}$'s are set to zero), which violates \eqref{eq:final}. Similarly, for all $i\in[\NumConvex]$ we should have $\ConvexFun_i(\Baralloc)\leq \delta$, because  if the converse holds for some $i$ then we can set $\DualConstConvex_i=\frac{\dimension\barP+\varepsilon}{\delta}\leq\upperDualConstConvex$ (and all other $\DualConstConvex_{i'}$'s and $\DualConstAffine_{j}$'s are set to zero), which violates \eqref{eq:final}. This completes the proof of the first part of the theorem.

\revcolor{Regarding running time, $K_I=\mathcal{O}(d^4)$ and $ K_O=\mathcal{O}(d^2)$, and therefore the total number of iterations of our algorithm is $\mathcal{O}(K_I K_O)= \mathcal{O}(d^6)$. We note that our algorithm needs to solve a JMS instance for the primal player and compute the gradient for the dual player in each iteration. In general, solving an instance of JMS requires an extra polynomial-time computation. This extra computation depends on two factors: (i) the amount of time it takes to compute the indices for each individual arm---which is polynomial-time; if the Markov chain $i$ has $\ell_i$ number of nodes/states, the running time is at most $\mathcal{O}\left(\ell_i^5\right)$ to solve for the indices using dynamic programming~\citep{dumitriu2003playing} (note that $\sum_i \ell_i=d$). (ii) the amount of time it takes to compute the gradient of $\bar{\mathcal{L}}_\textsc{JMS-const}$, which requires computing the expected number of visits of different states under the optimal index-based policy (computed earlier). The latter quantity depends on the absorption time of the underlying kernels of the Markov chains, which is $\mathcal{O}(d)$, as it is assumed that the Markov chain is finite and absorbing, as there is a constant $H_p$ such that the expected number of visits of each state before absorption is bounded above by $H_p$. We note that the eventual running time will be polynomial in $d$, $n$, $\frac{1}{\varepsilon}$ and $\frac{1}{\delta}$, as desired.
}
\qed
\end{proof}

\section{Supplemental Numerical Simulations}
\label{apx:numerical-main}

\revcolorm{
In this section, we provide supplementary materials for the numerical study in \Cref{sec:numerical}. In particular, we include the following discussions and simulations:
\begin{itemize}
    \item Further discussions and simulations on unintended consequences of our socially-aware constraints (\Cref{sec:numerical-unintended}); additional experiments related to demographic parity in selection (\Cref{sec:numerical-parity}); additional numerical results for the average quota constraint~\ref{eq:quota} (\Cref{sec:numerical-quota}); the average budget for subsidization constraint~\ref{eq:budget} (\Cref{sec:numerical-budget}); and several more detailed performance metrics (\Cref{apx:numerical});
    \item Robustness checks of our numerical findings under uniform distributions for candidates' values
    (instead of normal), examining both short-term effects (\Cref{sec:num-short-term_uniform}) and long-term impacts (\Cref{sec:num-long-term_uniform}) under constraint~\ref{eq:parity}, as well as under constraint~\ref{eq:quota} (\Cref{sec:num-long-term_uniform_quota});
    \item Experiments on the JMS model for \Cref{ex:reject} under multiple affine constraints, specifically imposing one constraint of type~\ref{eq:parity} at each stage of the search process (\Cref{sec:numerical_JMS}).
\end{itemize}
}

\revcolor{
\subsection{Unintended Consequences of Demographic Parity: a Dichotomy}
\label{sec:numerical-unintended}

Regardless of considering biased observable signals or unbiased unobservable true values to evaluate performances,  in scenarios that the cost of search is high, we may observe a certain type of \emph{unintended inefficiency} in the performance of optimal constrained policy for excessively small values of $\rho$. To see this, we consider the same setup as before but with the only difference being that we increase the inspection costs to be drawn independently from a uniform distribution over \([c_l=25, c_h=35]\) rather than \([c_l=3, c_h=6]\): in both \Cref{fig-rad:high_cost_short-term} and \Cref{fig-rad:high_cost_long-term} (analogs of \Cref{fig-rad:short-term} and \Cref{fig-rad:long-term}, respectively), for sufficiently small values of $\rho$, say $\rho\in [0,0.3]$, the expected utility of optimal constrained policy drops drastically as $\rho$ becomes smaller, while the expected utility of optimal unconstrained policy remains almost unchanged. 

\begin{figure}[htb]
    \centering
    \begin{subfigure}[b]{0.48\textwidth}
        \centering
        \includegraphics[width=\textwidth]{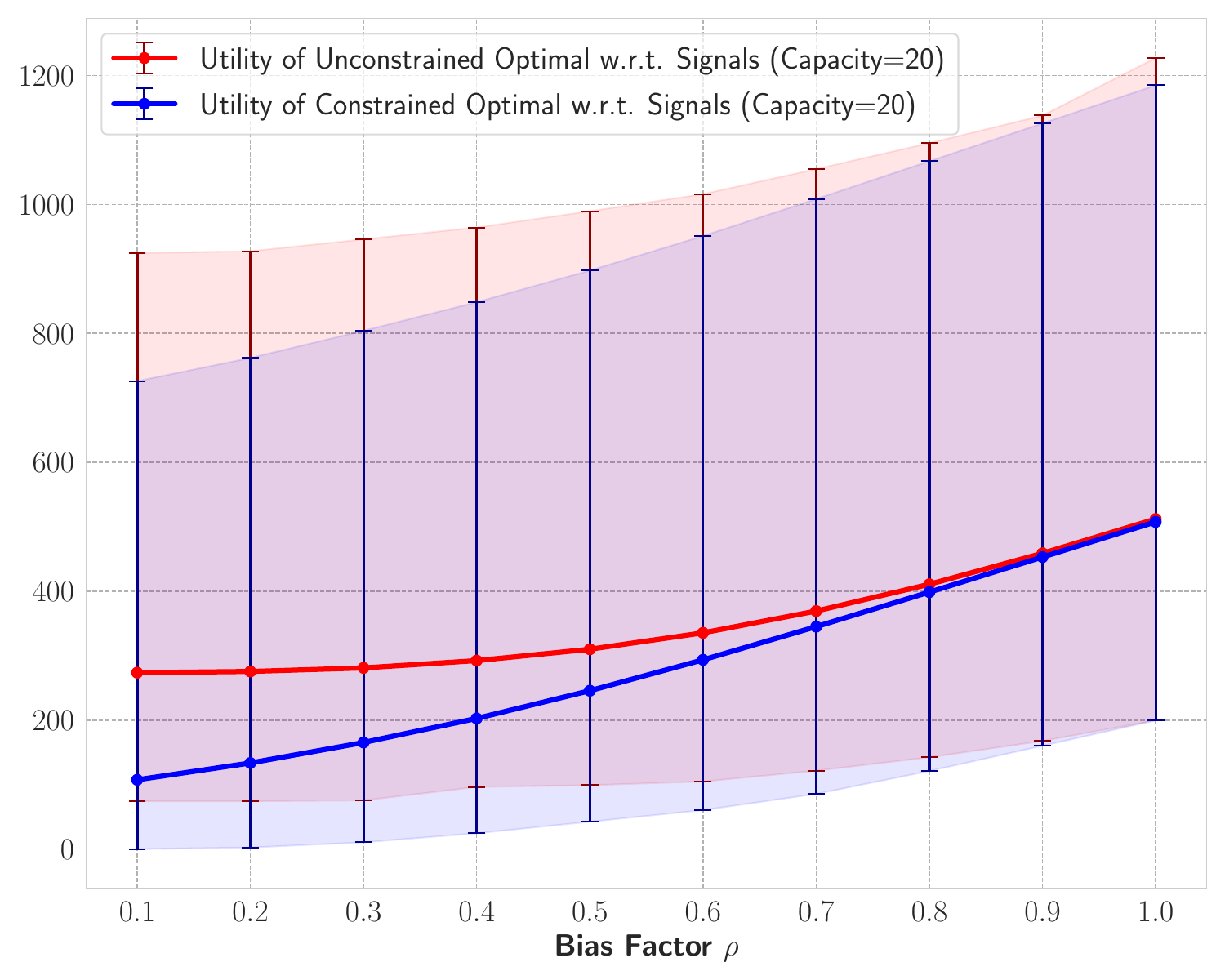}
        \caption{Expected utilities calculated based on signals $\{v_i\}_{i\in[n]}$ for the unconstrained optimal policy (red) and the constrained optimal policy (blue).}
    \end{subfigure}
    \hfill
        \begin{subfigure}[b]{0.48\textwidth}
        \centering
        \includegraphics[width=\textwidth]{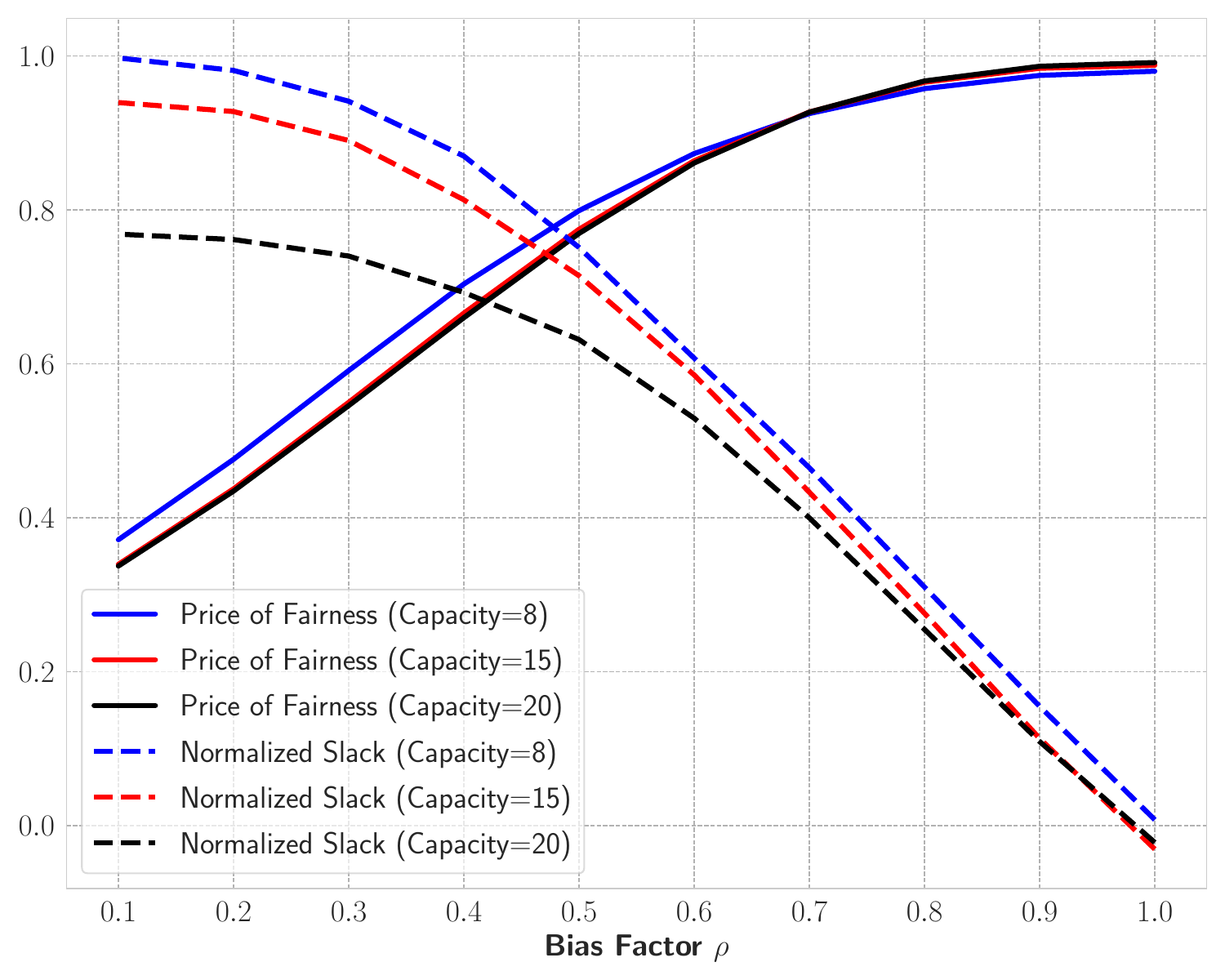}
        \caption{Price of fairness ratio calculated based on signals $\{v_i\}_{i\in[n]}$ (solid lines) and the normalized constraint slack of unconstrained optimal policy (dashed lines).}
    \end{subfigure}
    \caption{Comparing the short-term outcomes of unconstrained and constrained optimal policies, when inspection costs are high.}
    \label{fig-rad:high_cost_short-term}
\end{figure}

\vspace{-1cm}

\begin{figure}[htb]
    \centering
    \begin{subfigure}[b]{0.48\textwidth}
        \centering
        \includegraphics[width=\textwidth]{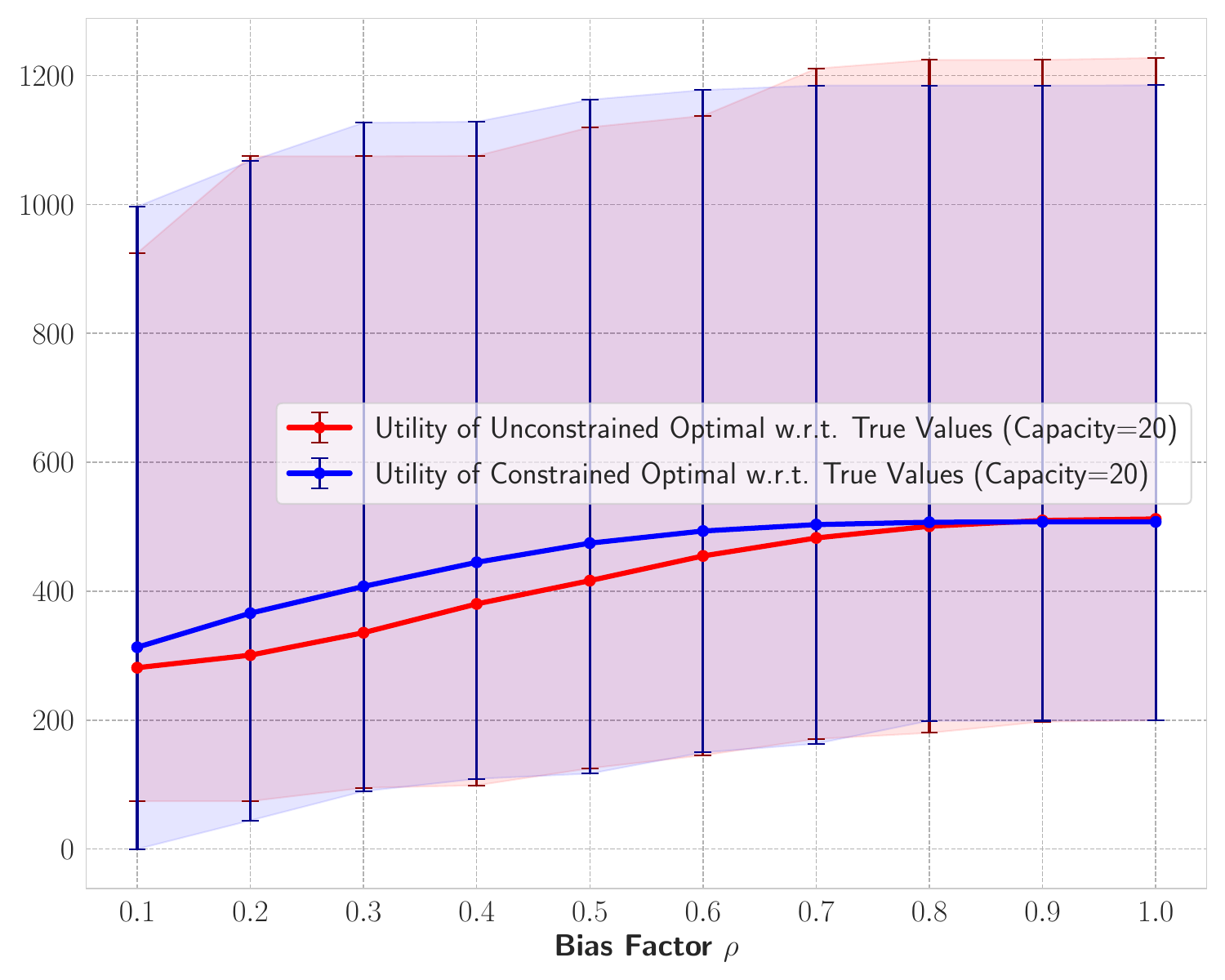}
        \caption{Expected utilities calculated based on true values $\{{v}^\dagger_i\}_{i\in[n]}$ for the unconstrained optimal policy (red) and the constrained optimal policy (blue).}
    \end{subfigure}
    \hfill
        \begin{subfigure}[b]{0.48\textwidth}
        \centering
        \includegraphics[width=\textwidth]{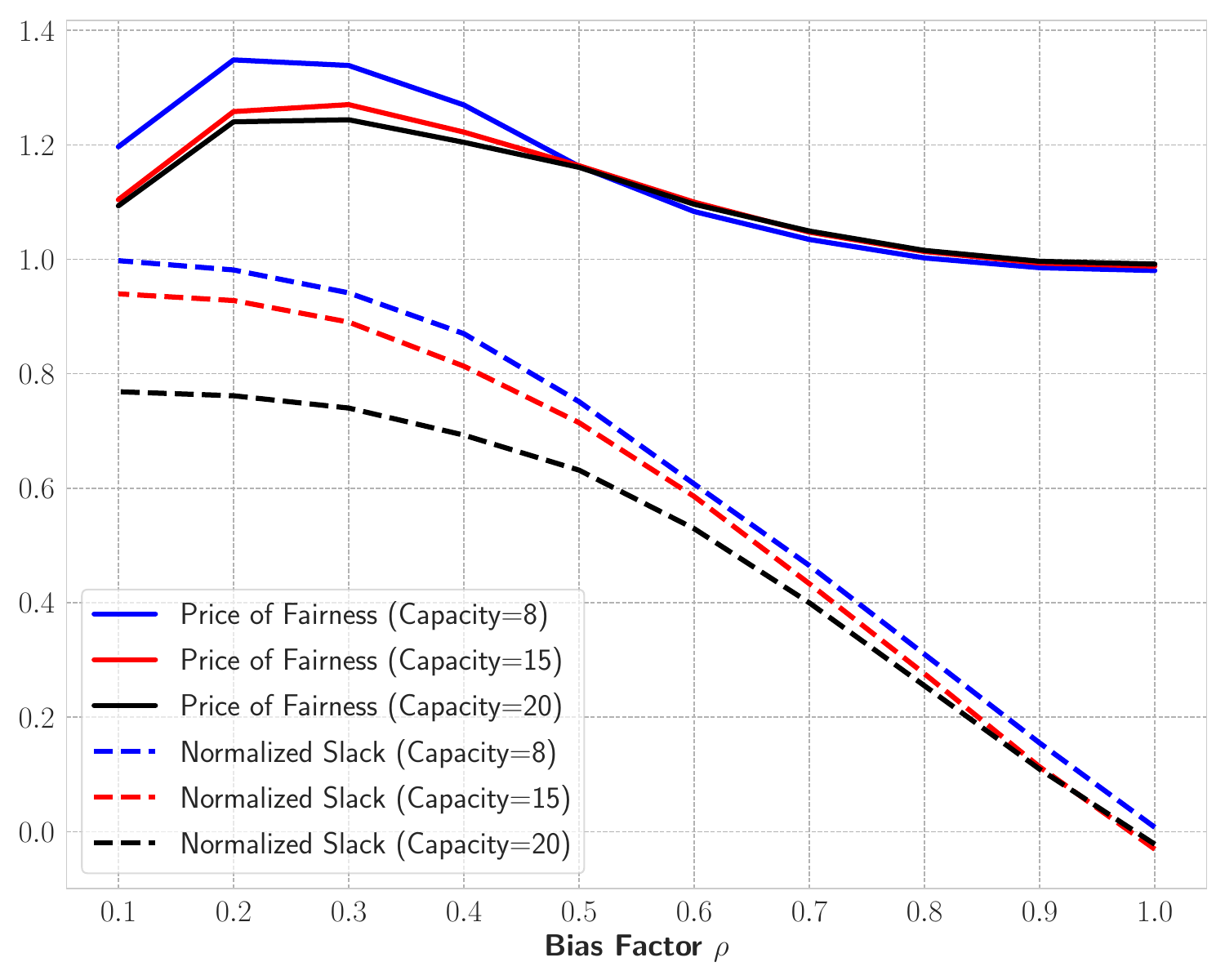}
        \caption{Price of fairness ratio calculated based on true values $\{v^\dagger_i\}_{i\in[n]}$ (solid lines) and the normalized constraint slack of unconstrained optimal policy (dashed lines).}
    \end{subfigure}
    \caption{Comparing the long-term outcomes of unconstrained and constrained optimal policies, when inspection costs are high.}
    \label{fig-rad:high_cost_long-term}
\end{figure}

Why would imposing the parity constraint have a different ``calibrating effect'' for bias factors $0.7$ and $0.3$? 
It turns out that the optimal constrained policy ends up not filling the entire capacity when inspection costs are high and $\rho$ is small. 
This under-allocation is because the (observable and biased) signal distributions suggest that if demographic parity is enforced, it is less costly to leave the capacity unused than inspecting and then hiring ``seemingly'' low-quality high-cost candidates.
In \Cref{fig-rad:unintended}, we plot the fraction of unallocated capacity by the optimal constrained policy as parameters $\rho$ and $k$ vary, which clearly shows the existence of this unintended effect for small values of $\rho$ (and that it intensifies for larger values of $k$). Lastly, we note that this is in contrast with the behavior of the optimal unconstrained policy, for it continues to fill most of its capacity even if $\rho=0.1$, as can be seen from its normalized slack in both \Cref{fig-rad:high_cost_short-term} and \Cref{fig-rad:high_cost_long-term}.

\vspace{-4mm}

\begin{figure}[htb]
    \centering
    \begin{subfigure}[b]{0.47\textwidth}
        \centering
        \includegraphics[width=\textwidth]{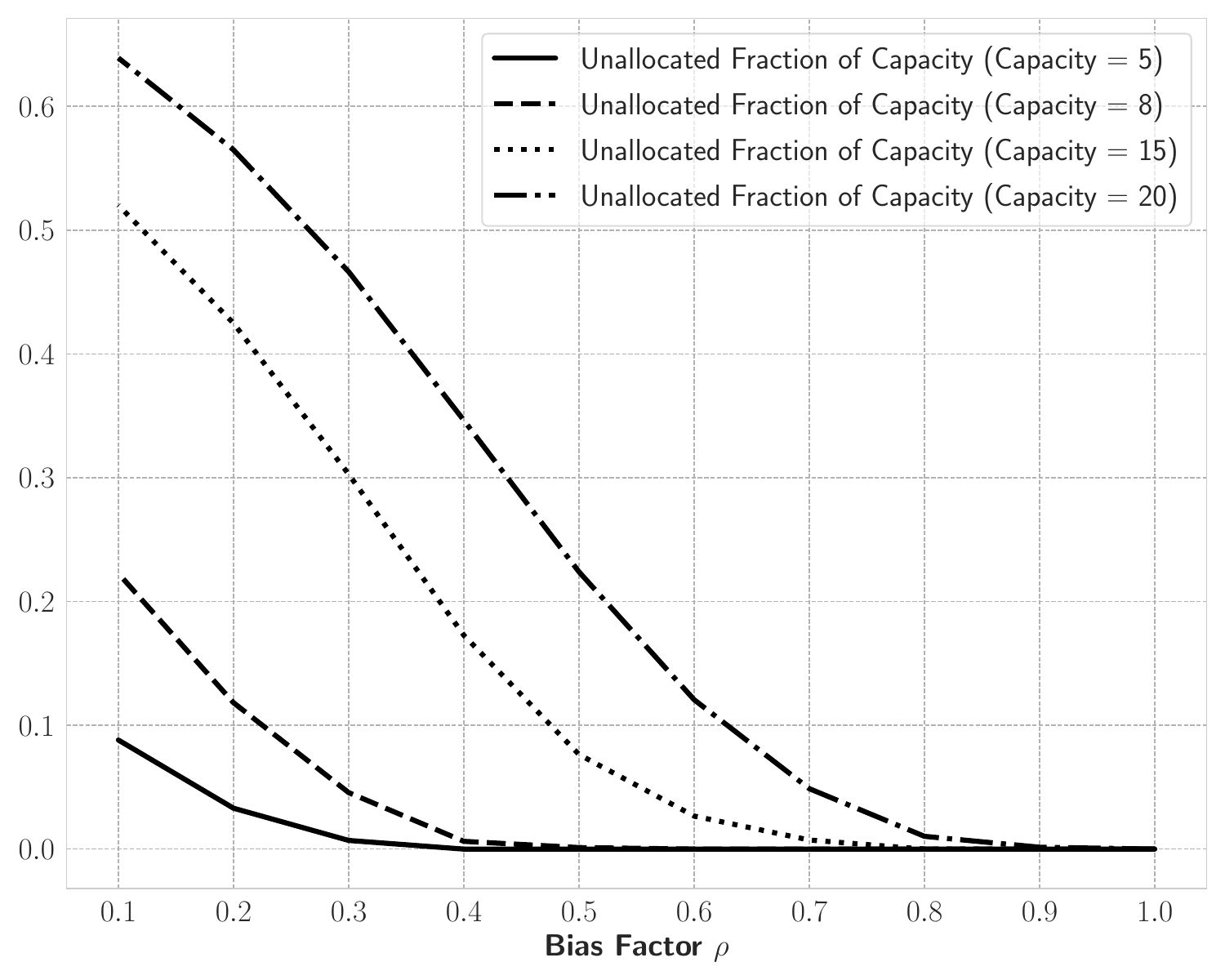}
    \end{subfigure}
    \hfill
        \begin{subfigure}[b]{0.47\textwidth}
        \centering
        \includegraphics[width=\textwidth]{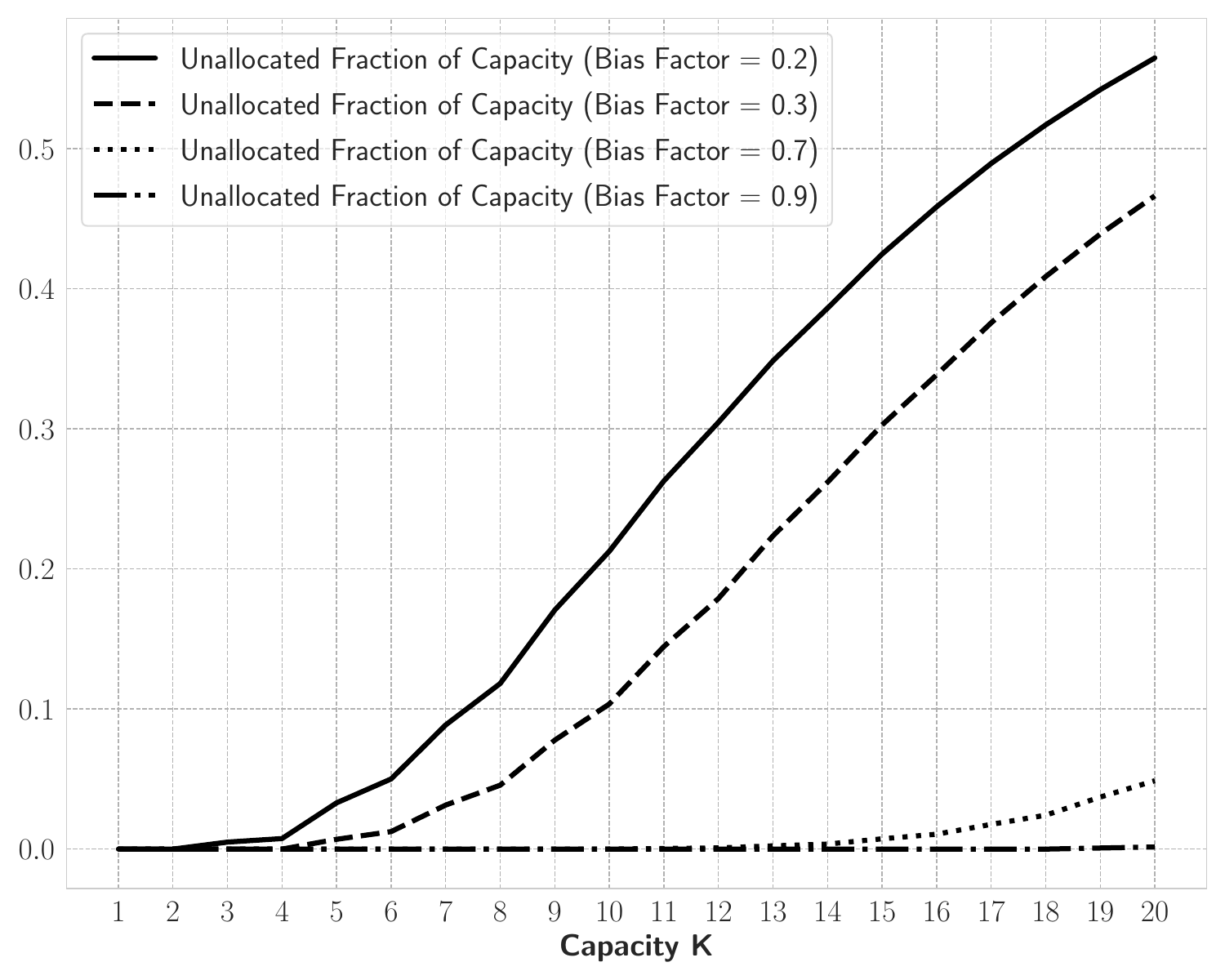}
    \end{subfigure}
    \caption{\vspace{-2mm}The unallocated fraction of the capacity by the optimal constrained policy for \ref{eq:parity} in selection.}
    \label{fig-rad:unintended}
\end{figure}
}

\vspace{-4mm}

\subsection{Demographic parity in selection}
\label{sec:numerical-parity}
We first study the effect of demographic parity in selection, that is, the constraint \eqref{eq:parity} for selection. In \Cref{sec:numerical-changing-cap}, we consider the short-term effect of imposing demographic parity. To do so, we use observable signals (which are biased for the minority group $\WomanSet$) as the only surrogate for true values and measure the utilities using these signals. We then consider the long-term effects of imposing demographic parity in \Cref{sec:numerical-parity-true}, by measuring the utilities with respect to true values and not biased signals. Note that in our setting, the true values of $\WomanSet$ are not statistically different from the true values of $\ManSet$. In both settings, we compare the optimal unconstrained policy, that is, the solution to \eqref{eq:opt-unconstrained}, with the optimal constrained policy, that is, the solution to \eqref{eq:opt-constrained}, where both policies have only access to observable biased signals upon inspection. Finally, we study the effect of increasing capacity on the ``price of fairness'' in \Cref{sec:numerical-few-pos}. In \Cref{apx:numerical}, we also report the running time of these optimal policies; see \Cref{fig-apx:v1_Running_Time}.

\subsubsection{Short-term performance -- the effect of changing capacity and bias factor.}
\label{sec:numerical-changing-cap}
In our first scenario, we compare the two optimal policies as capacity $k\in[1:20]$ and bias factor $\rho\in\{0.1,0.2,\cdots,1\}$ vary. In \Cref{fig:v1_both_fair_unfair}, we plot the net short-term utilities (calculated based on biased observable signals $\{v_i\}_{i\in[n]}$) as a function of $k$ for different bias factors and as a function of $\rho$ for different capacities. Furthermore, in \Cref{fig:v1_diff_and_CR_fair_unfair}, we plot the ratio of the net utility of the optimal constrained policy over that of the optimal unconstrained policy, again as a function of both $k$ and $\rho$. To see a similar plot for utility differences, refer to \Cref{fig-apx:v1_diff_and_CR_fair_unfair} in \Cref{apx:numerical}. Lastly, in \Cref{fig:v1_slack_lambda_fair_unfair}, we plot the constraint slack of the optimal unconstrained policy, as a function of $k$ and also as a function of $\rho$. See \Cref{fig-apx:v1_slack_lambda_fair_unfair} in \Cref{apx:numerical} for the graph of the dual adjustment $\lambda^*$ required to fix the disparity (see \Cref{eq:adjusted-value}), as a function of capacity $k$ and bias factor $\rho$. Next, we discuss some managerial insights that are derived from these simulations.

\begin{figure}[htb]
    \centering
    \begin{subfigure}[b]{0.32\textwidth}
        \centering
        \includegraphics[width=\textwidth]{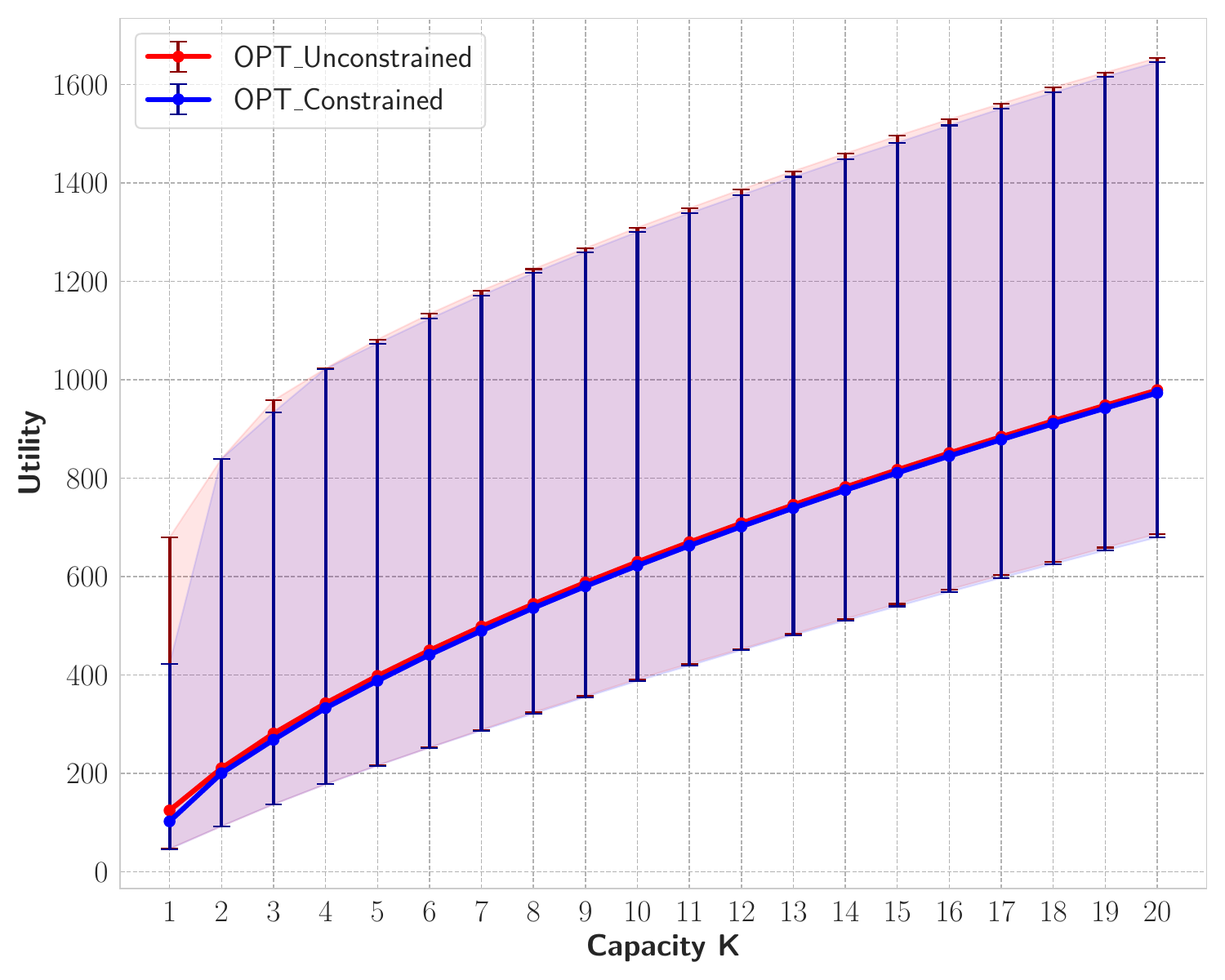}
        \caption{  Bias factor = 0.9}
    \end{subfigure}
    \begin{subfigure}[b]{0.32\textwidth}
        \centering
        \includegraphics[width=\textwidth]{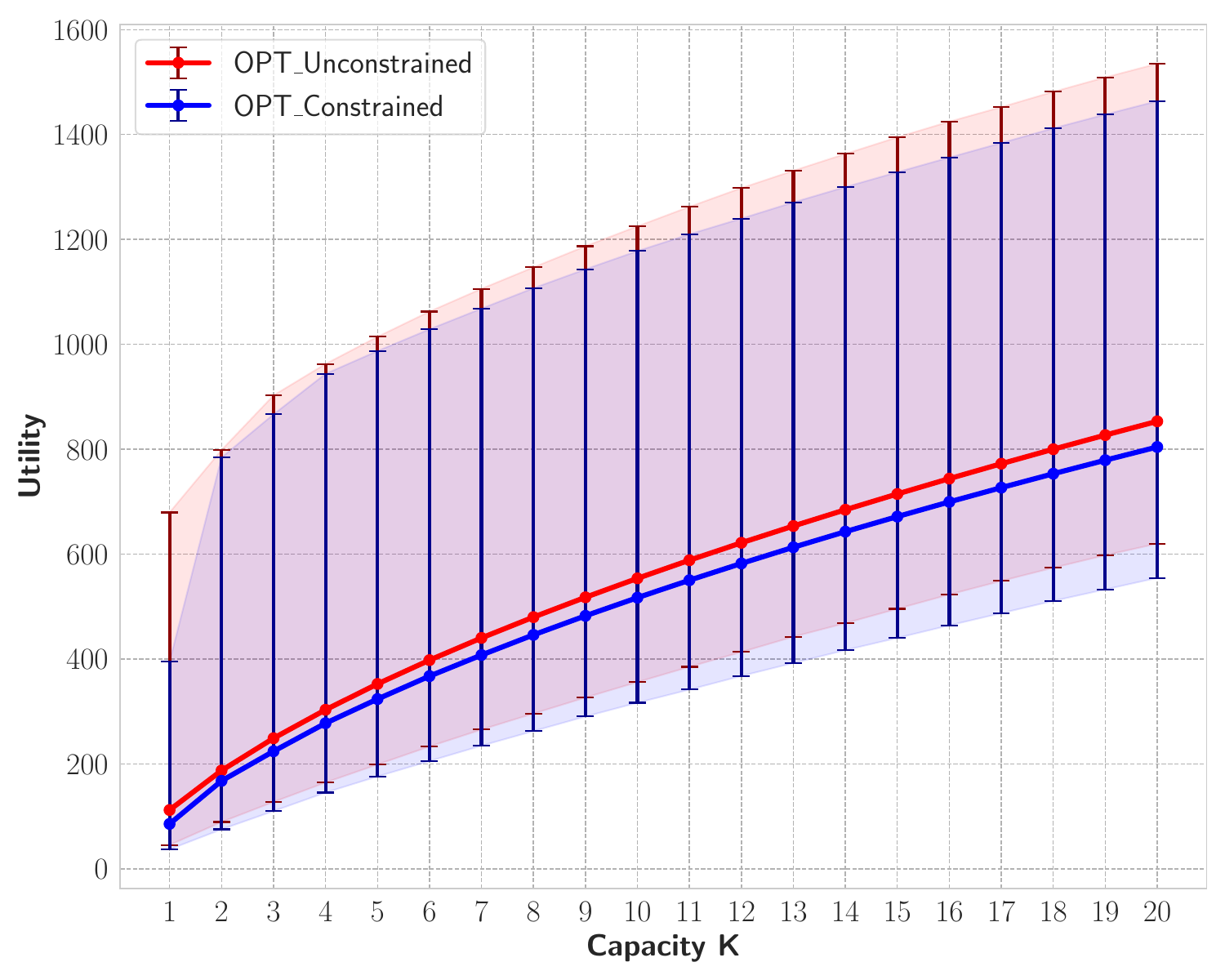}
        \caption{  Bias factor = 0.6}
    \end{subfigure}
    \begin{subfigure}[b]{0.32\textwidth}
        \centering
        \includegraphics[width=\textwidth]{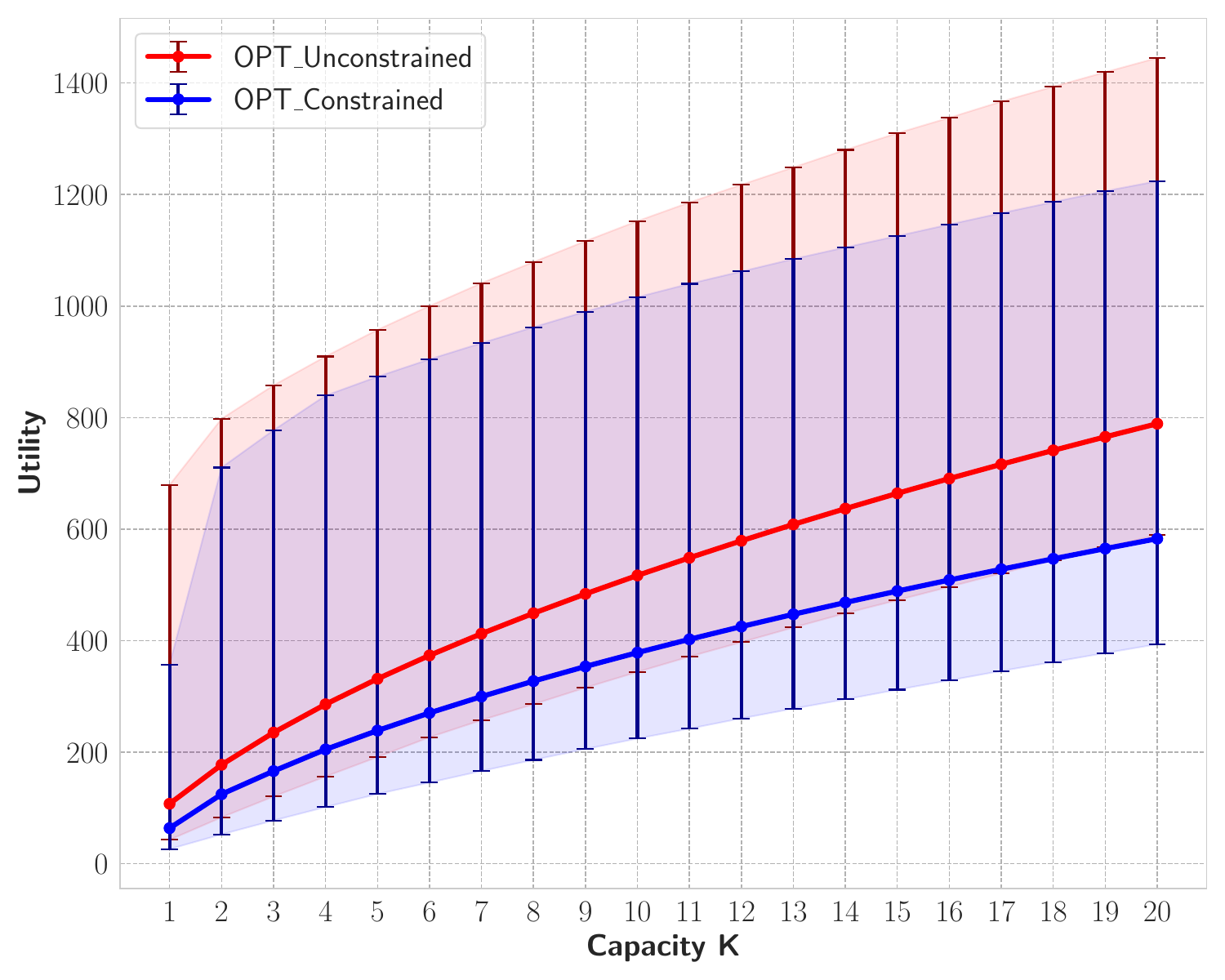}
        \caption{  Bias factor = 0.2}
    \end{subfigure}
    \begin{subfigure}[b]{0.32\textwidth}
        \centering
        \includegraphics[width=\textwidth]{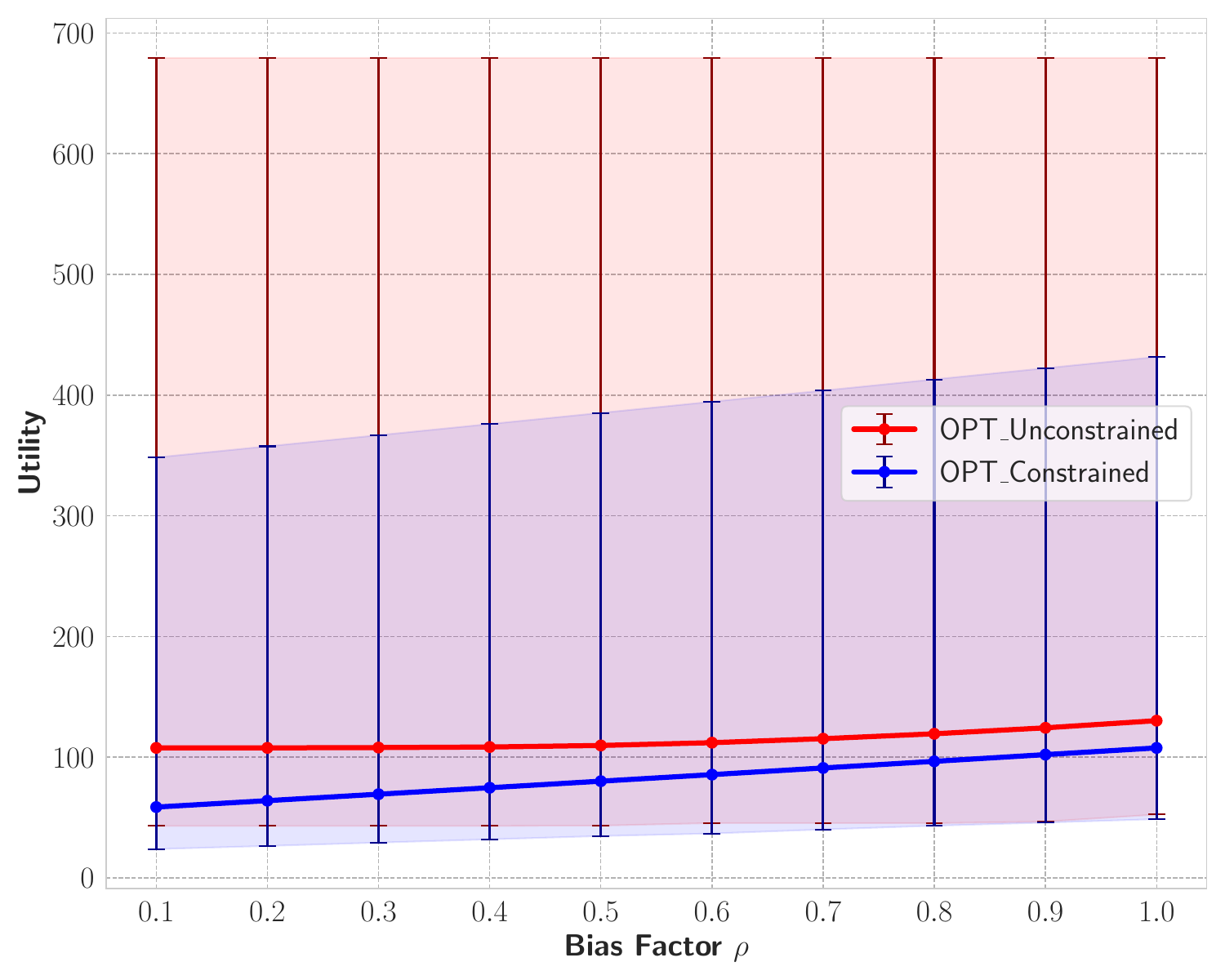}
        \caption{  Capacity = 1}
    \end{subfigure}
    \begin{subfigure}[b]{0.32\textwidth}
        \centering
        \includegraphics[width=\textwidth]{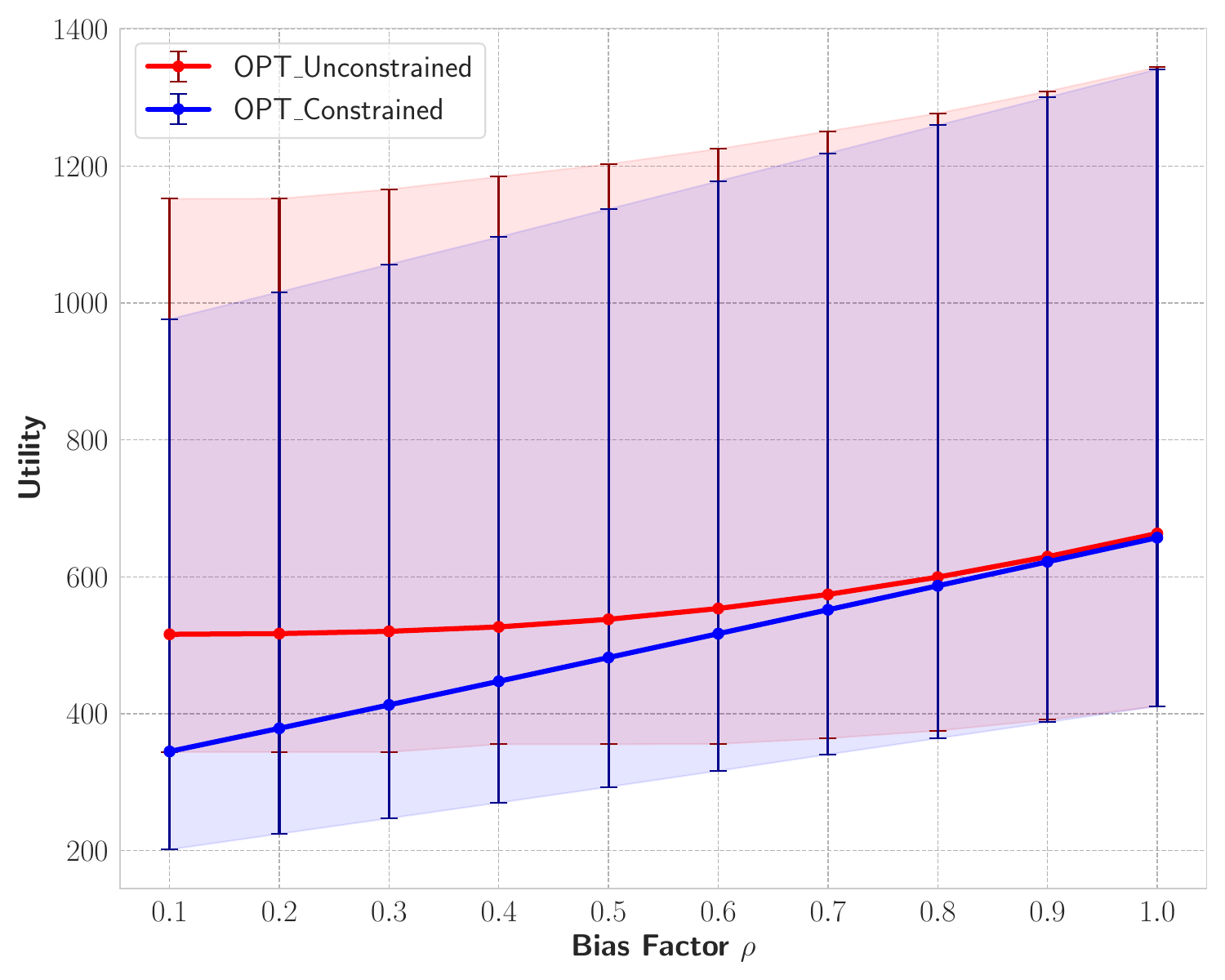}
        \caption{  Capacity = 10}
    \end{subfigure}
    \begin{subfigure}[b]{0.32\textwidth}
        \centering
        \includegraphics[width=\textwidth]{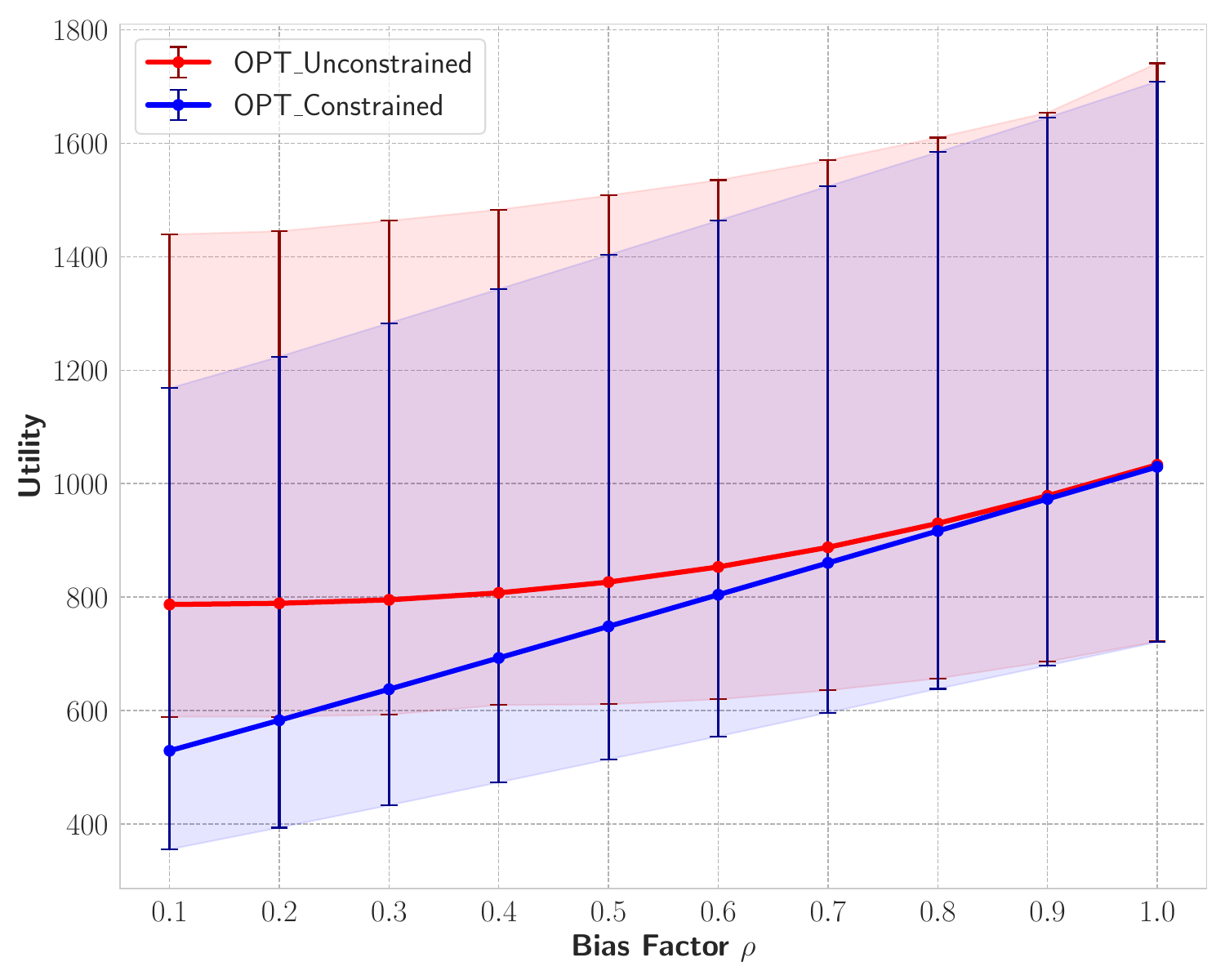}
        \caption{  Capacity = 20}
    \end{subfigure}
    \caption{Comparing the short-term utilities of optimal unconstrained (red) and  constrained (blue) policies for demographic parity in selection, for different capacities $\boldsymbol{k}$ and bias factors $\boldsymbol{\rho}$.}
    \label{fig:v1_both_fair_unfair}
\end{figure}
\begin{figure}[htb]
    \centering
    \begin{subfigure}[b]{0.46\textwidth}
        \centering
        \includegraphics[width=\textwidth]{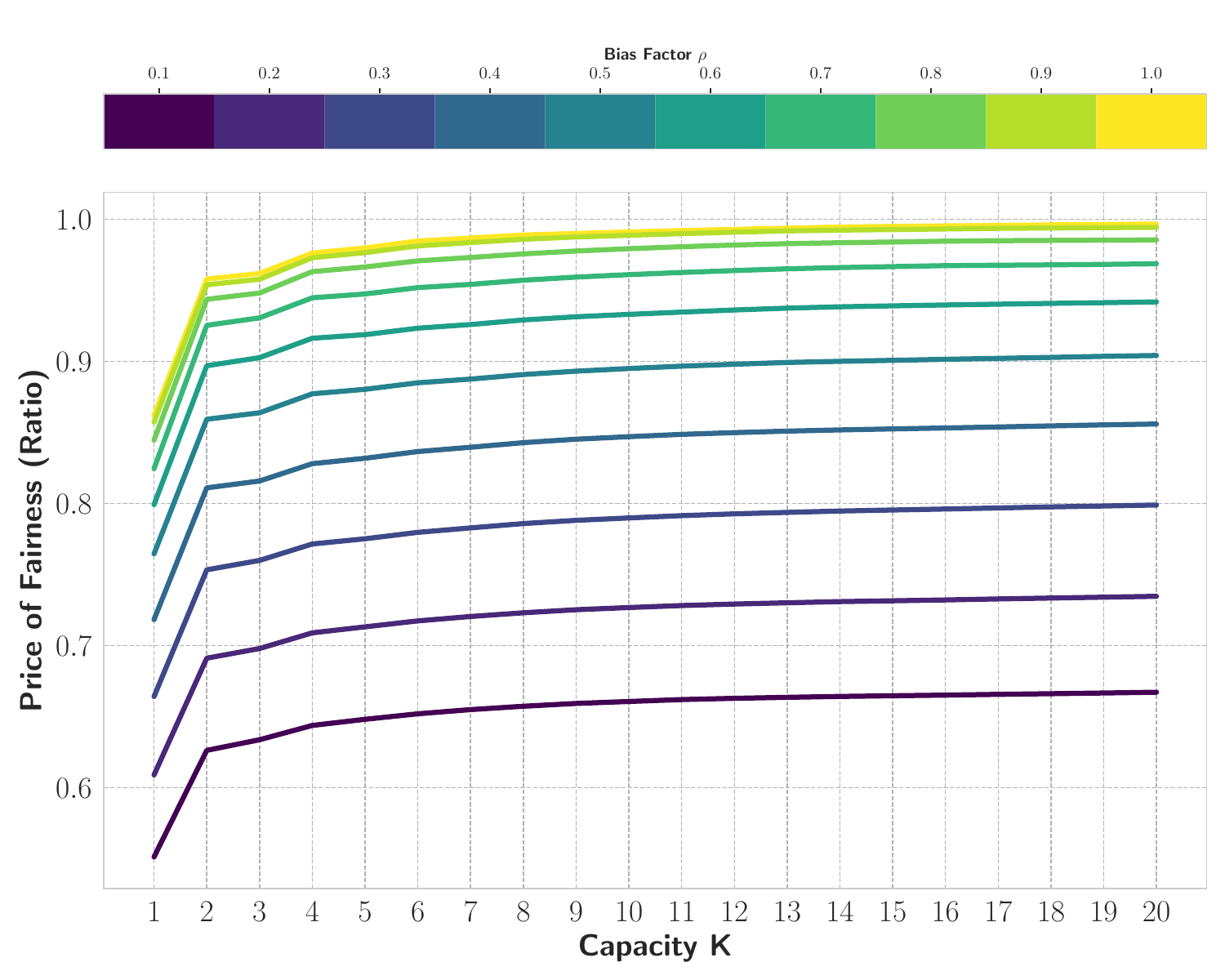}
        \caption{}
    \end{subfigure}
     \quad\quad\quad
    \begin{subfigure}[b]{0.46\textwidth}
        \centering
        \includegraphics[width=\textwidth]{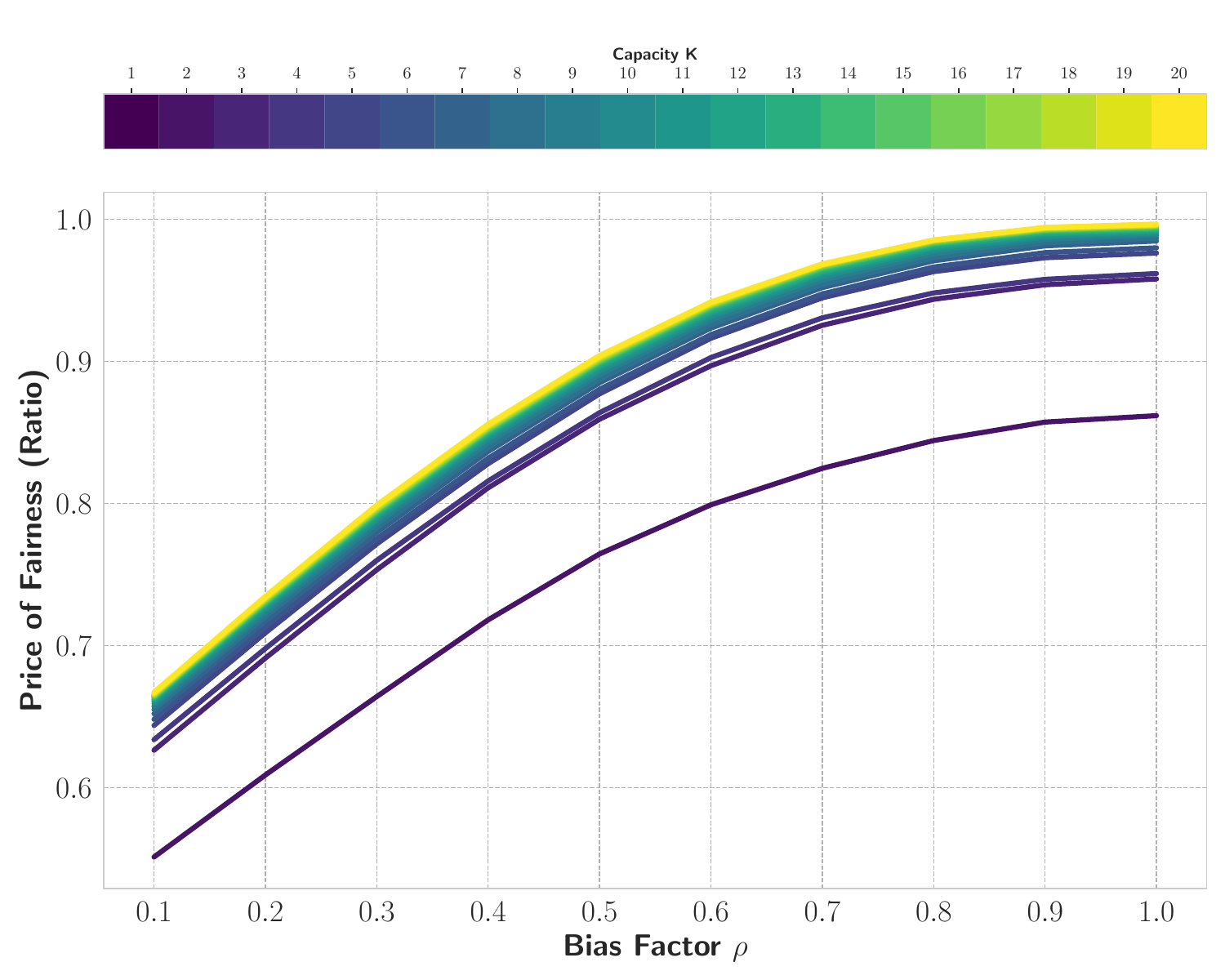}
        \caption{}
    \end{subfigure}
    \caption{Short-term price of fairness of demographic parity in selection in terms of utility ratio; (a) as a function of capacity $\boldsymbol{k}$, and (b) as a function of bias factor $\boldsymbol{\rho}$.}
    \label{fig:v1_diff_and_CR_fair_unfair}
\end{figure}

\begin{figure}[htb]
    \centering
    \begin{subfigure}{0.46\textwidth}
        \centering
        \includegraphics[width=\textwidth]{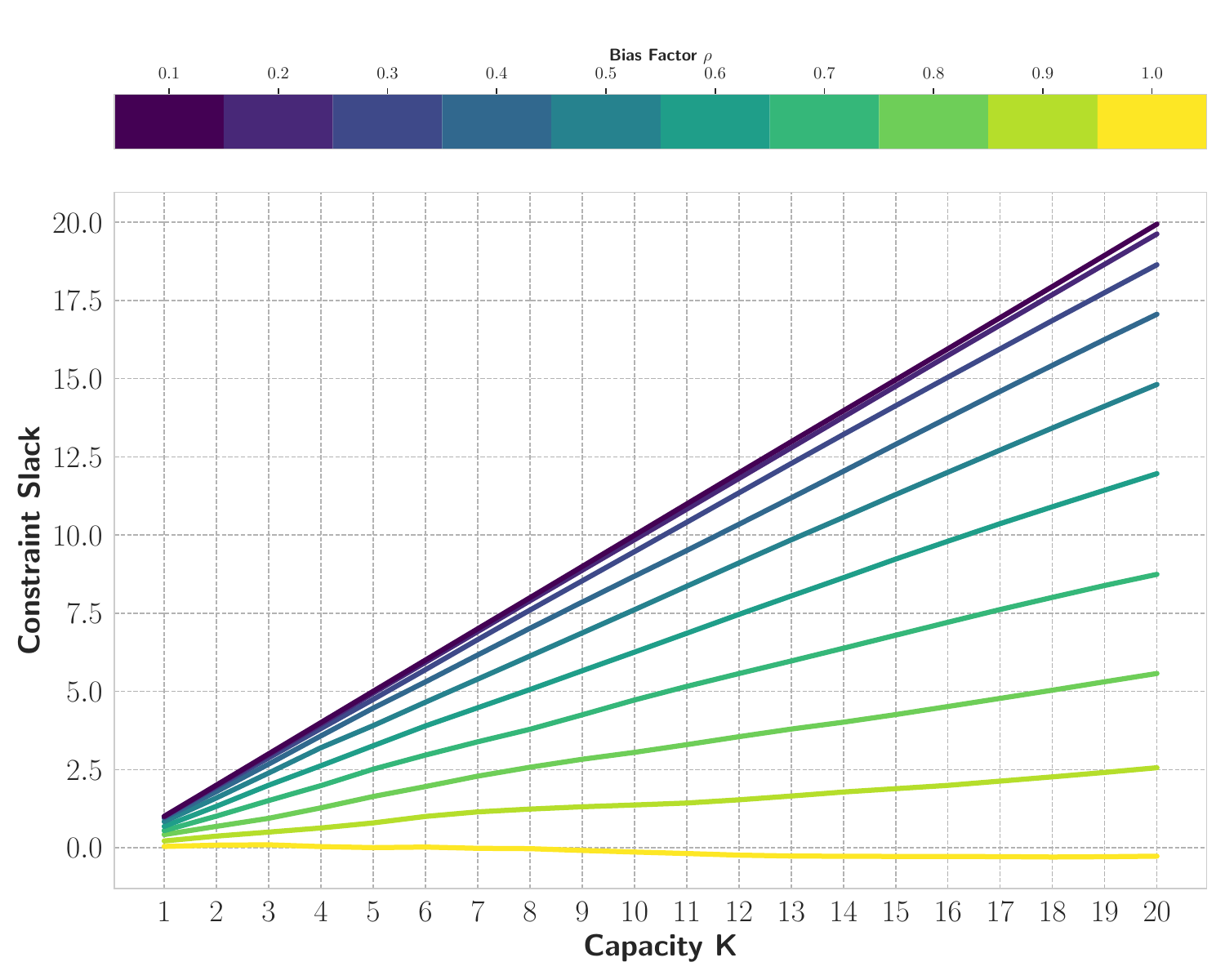}
        \caption{}
    \end{subfigure}
   \quad\quad\quad
    \begin{subfigure}{0.46\textwidth}
        \centering
        \includegraphics[width=\textwidth]{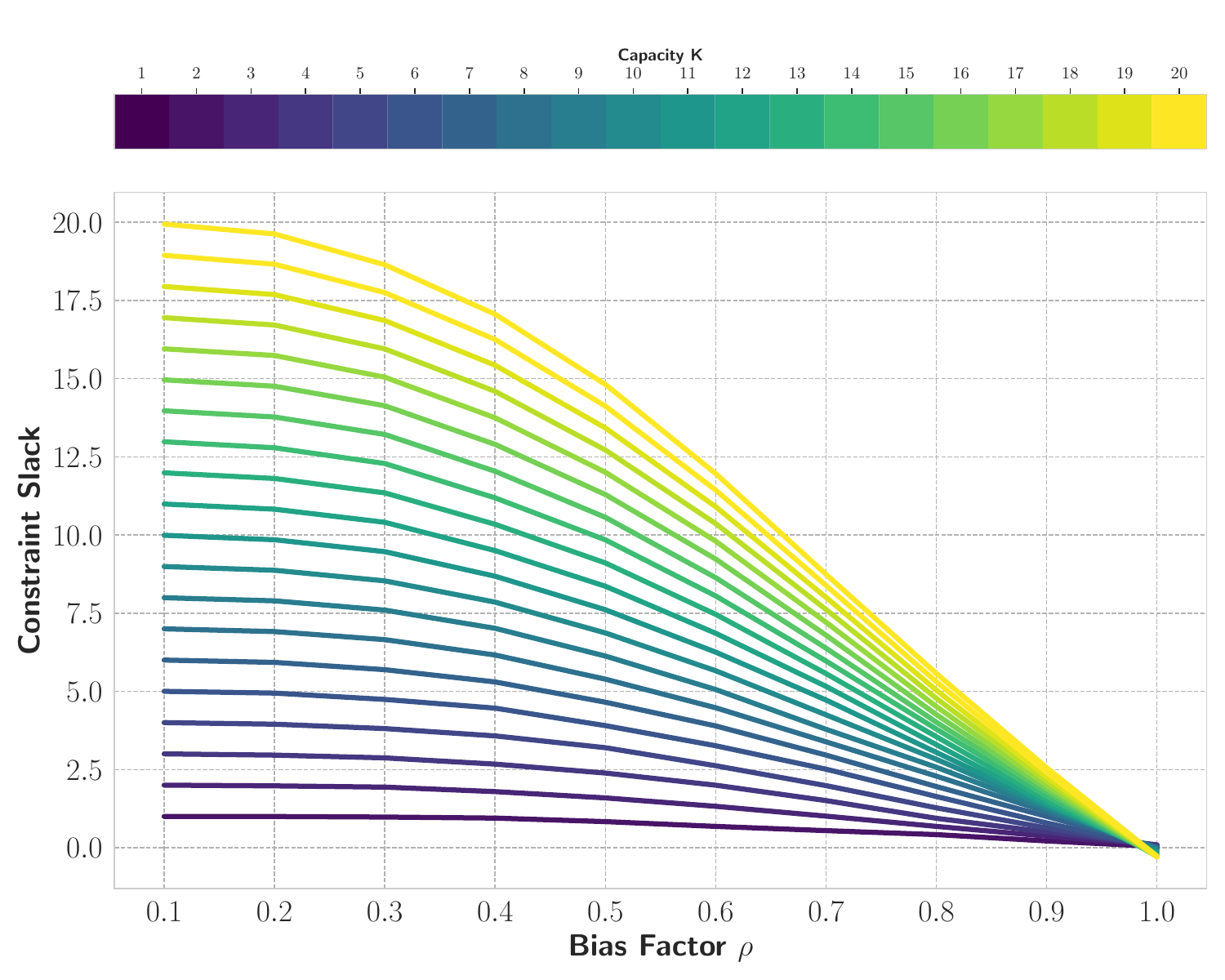}
        \caption{}
    \end{subfigure}
       \quad\quad\quad
    \caption{The constraint slack of the unconstrained optimal policy in demographic parity in selection; (a) as a function of capacity $\boldsymbol{k}$, and (b) as a function of bias factor $\boldsymbol{\rho}$.}
    \label{fig:v1_slack_lambda_fair_unfair}
\end{figure}

 First, clearly the short-term utility gap between the two policies increases as $\rho$ decreases (which means more bias) and the optimal adjustment $\lambda^*$ increases; nevertheless, for a moderate value of $\rho$, say $\rho\in [0.7,1]$, the performance gap is quite small, while the constraint slack of the optimal unconstrained policy is still quite considerable. We have investigated this managerial insight in detail in \Cref{sec:num-short-term}.

Second, for large enough values of $\rho$, say $\rho\in [0.3, 1]$, the utility of both optimal policies increases as $k$ increases; nevertheless, for small $\rho$, the performance of optimal constrained policy becomes constant after some $k\approx 6$  as it begins to suffer from a new form of inefficiency: due to the significant difference between the two groups, this policy decides \emph{not to fill} its capacity to satisfy \eqref{eq:parity}. We have already investigated this source of inefficiency in detail in \Cref{sec:numerical-unintended}.

\subsubsection{Long-term performance -- the effect of changing capacity and bias factor}
\label{sec:numerical-parity-true}
Now, we compare the expected long-term utilities of the optimal constrained and the optimal unconstrained policies in \Cref{fig:v3_both_fair_unfair}, where the long-term utilities are calculated based on the true values. The price of fairness with respect to the true values, in terms of the ratio of optimal constrained to optimal unconstrained and also their difference, is reported in \Cref{fig:v3_diff_and_CR_fair_unfair}.  Interestingly, we observe that the true utility of the optimal constrained policy \emph{dominates} that of the optimal unconstrained policy, as long as the bias factor is not very small (e.g., $\rho\geq 0.07$ for $k=5$, $\rho\geq 0.18$ for $k=10$ and $\rho\geq 0.28$ for $k=20$). We have investigated this managerial insight in \Cref{sec:num-long-term}.

For small bias factors, even when the true values are unbiased, the constrained optimal policy might decide \emph{not to fill} its capacity to satisfy \eqref{eq:parity} -- hence it suffers from a similar form of inefficiency as mentioned earlier. See more details in \Cref{sec:numerical-unintended}.

\begin{figure}[htb]
    \centering
    \begin{subfigure}[b]{0.32\textwidth}
        \centering
        \includegraphics[width=\textwidth]{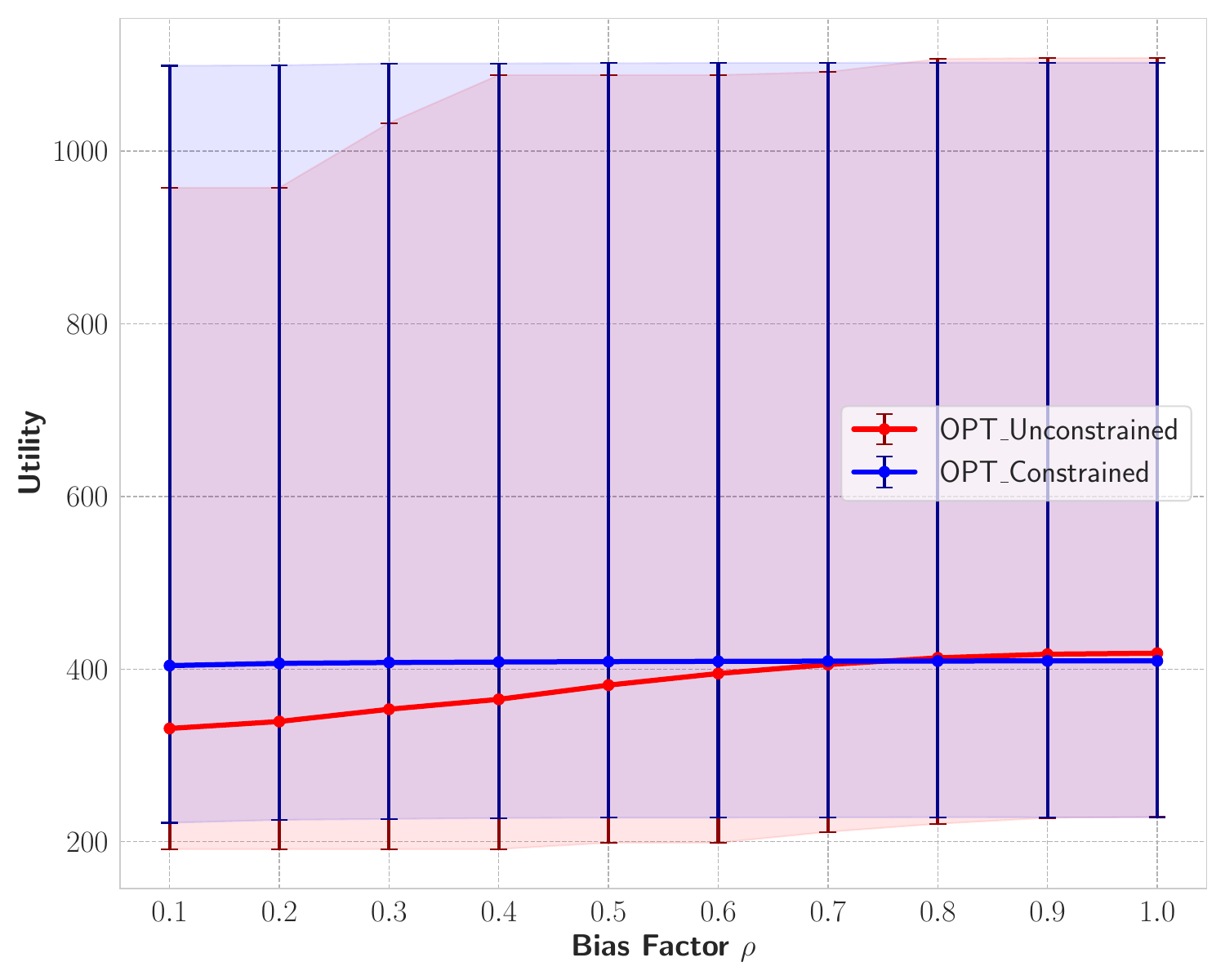}
        \caption{  Capacity = 5}
    \end{subfigure}
    \hfill
    \begin{subfigure}[b]{0.32\textwidth}
        \centering
        \includegraphics[width=\textwidth]{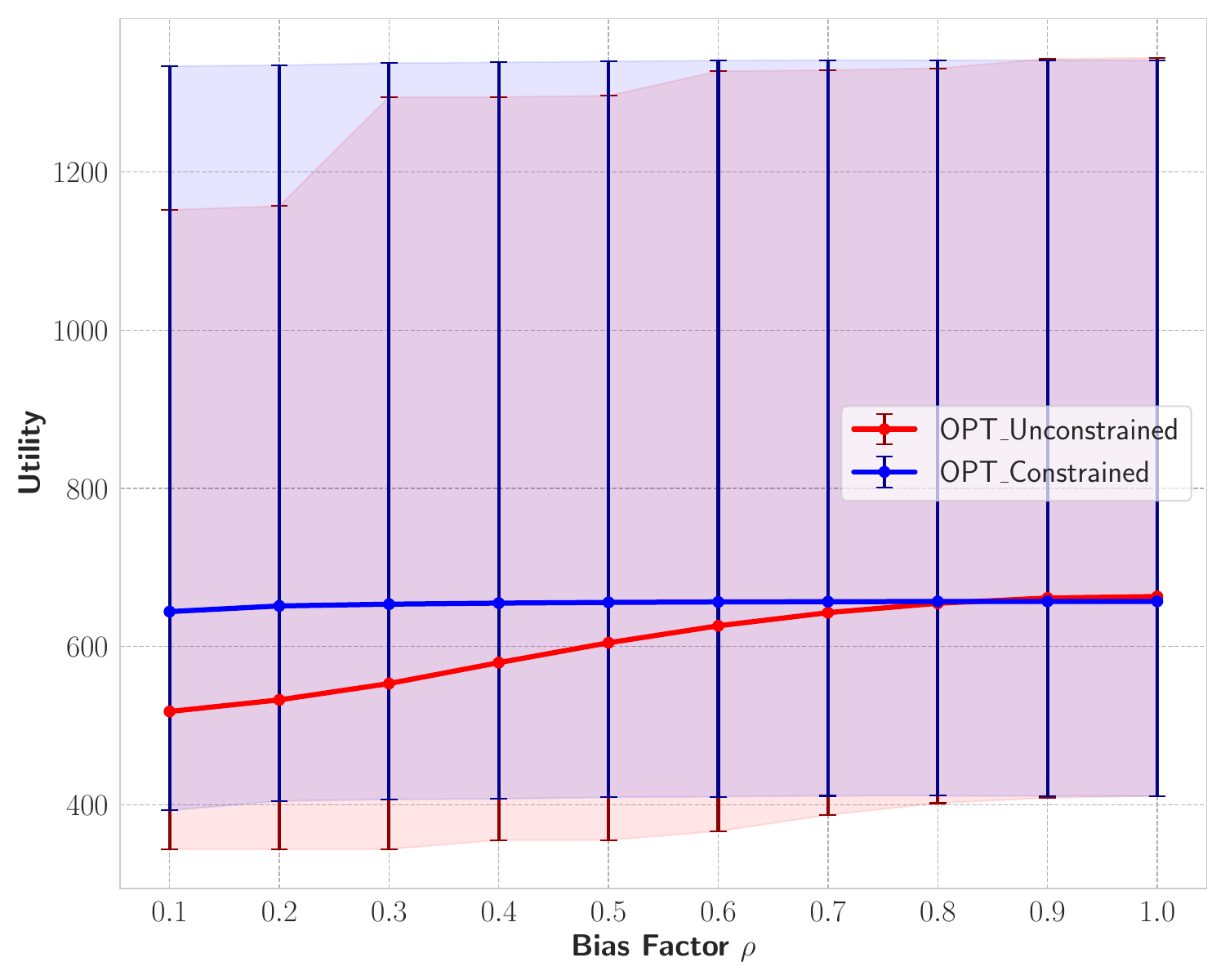}
        \caption{  Capacity = 10}
    \end{subfigure}
    \hfill
    \begin{subfigure}[b]{0.32\textwidth}
        \centering
        \includegraphics[width=\textwidth]{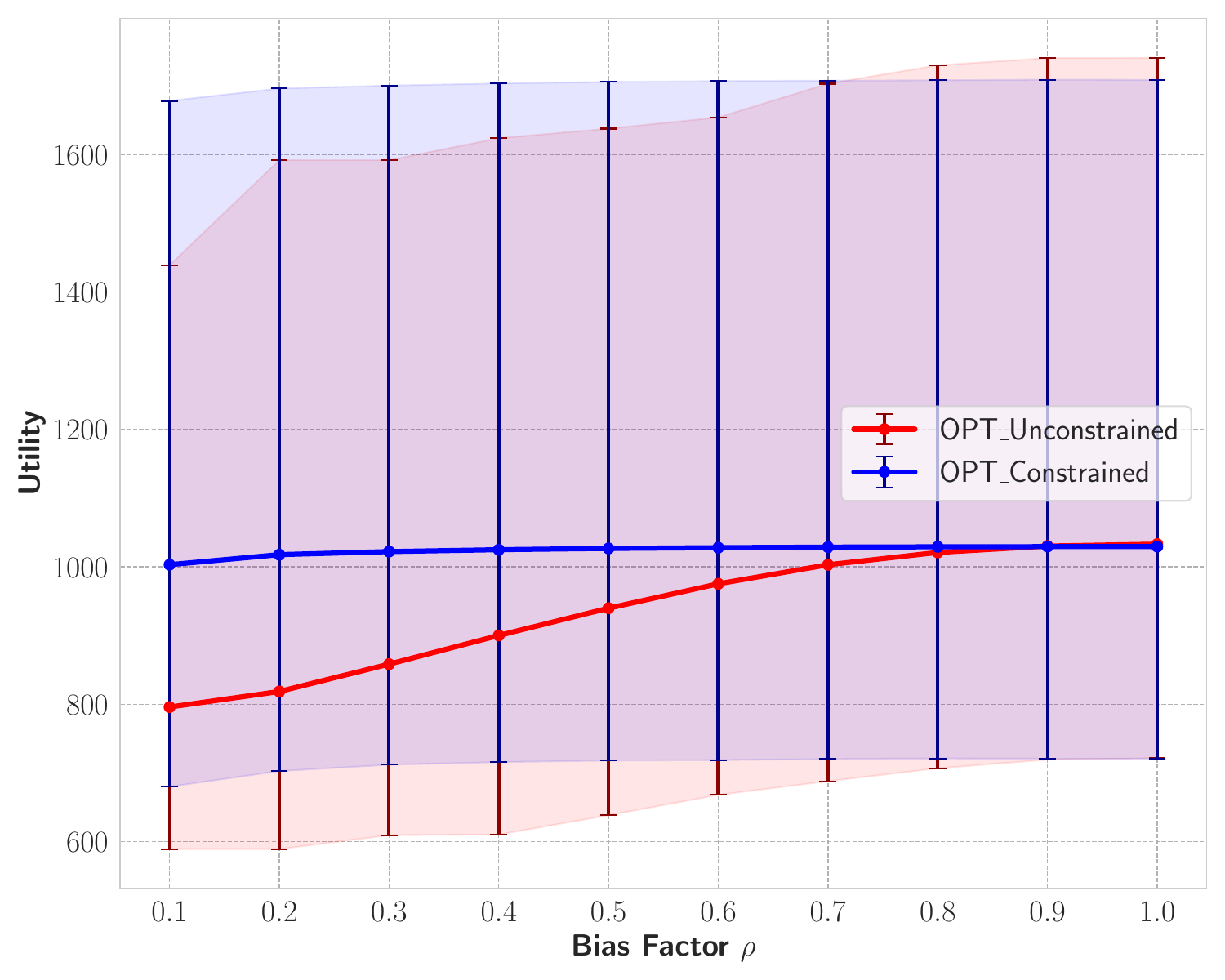}
        \caption{  Capacity = 20}
    \end{subfigure}
    \hfill
    \caption{Comparing the long-term utilities of optimal unconstrained (red) and  constrained (blue) policies for demographic parity in selection, for different capacities $\boldsymbol{k}$ and bias factors $\boldsymbol{\rho}$.}
    \label{fig:v3_both_fair_unfair}
\end{figure}

\begin{figure}[htb]
    \centering
    \begin{subfigure}[b]{0.45\textwidth}
        \centering
        \includegraphics[width=\textwidth]{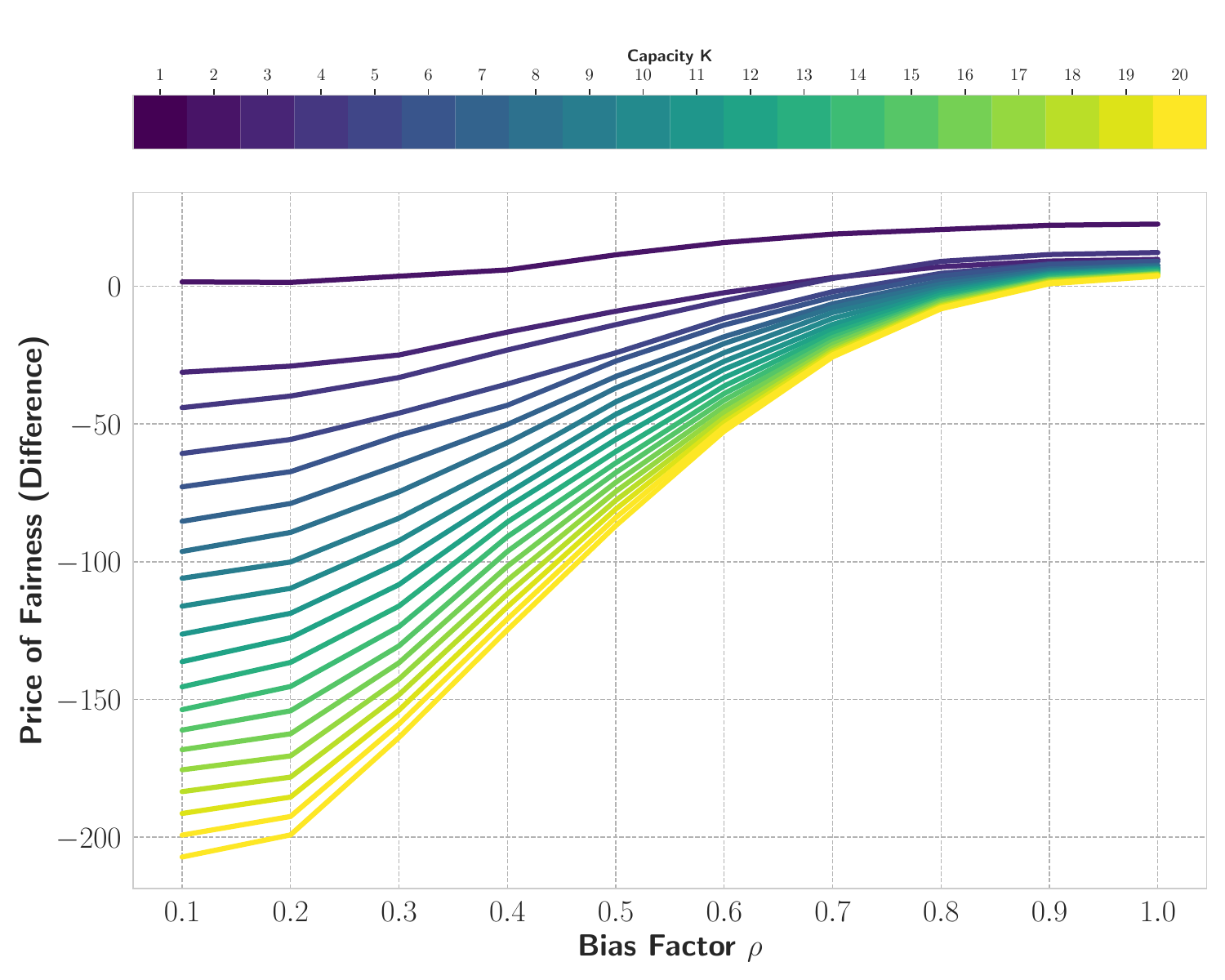}
        \caption{  Difference in utilities}
    \end{subfigure}
    \quad\quad\quad
    \begin{subfigure}[b]{0.45\textwidth}
        \centering
        \includegraphics[width=\textwidth]{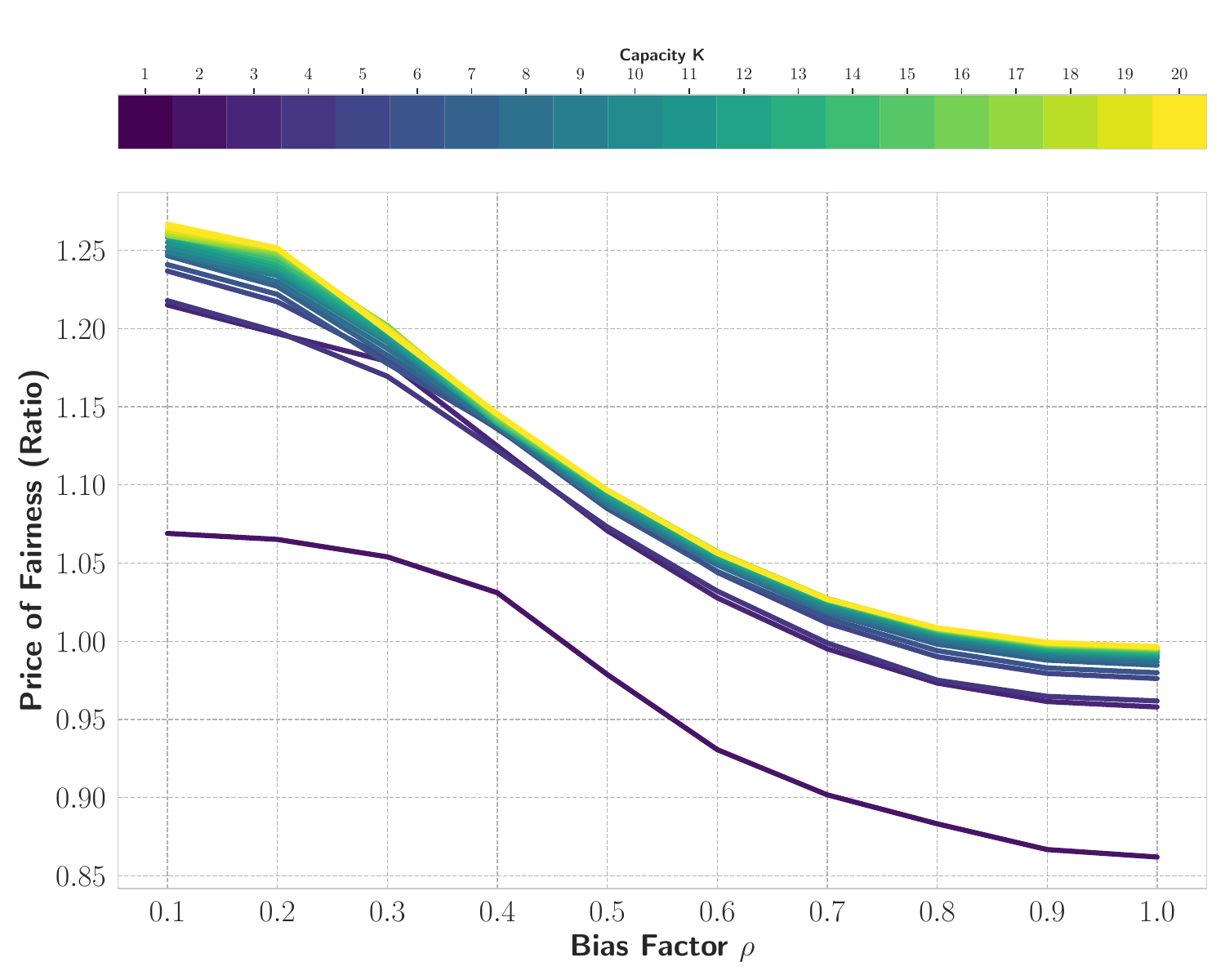}
        \caption{  Ratio of utilities}
    \end{subfigure}
    \caption{Long-term price of fairness of demographic parity in selection (with biased signals and unbiased true values) as a function of bias factor $\boldsymbol{\rho}$ for different capacities $\boldsymbol{k}$ in terms of (a) utility differences (b) utility ratios.}
    \label{fig:v3_diff_and_CR_fair_unfair}
\end{figure}

\subsubsection{A few positions more.}
\label{sec:numerical-few-pos}
\begin{figure}[htb]
    \centering
    \begin{subfigure}[b]{0.45\textwidth}
        \centering
        \includegraphics[width=\textwidth]{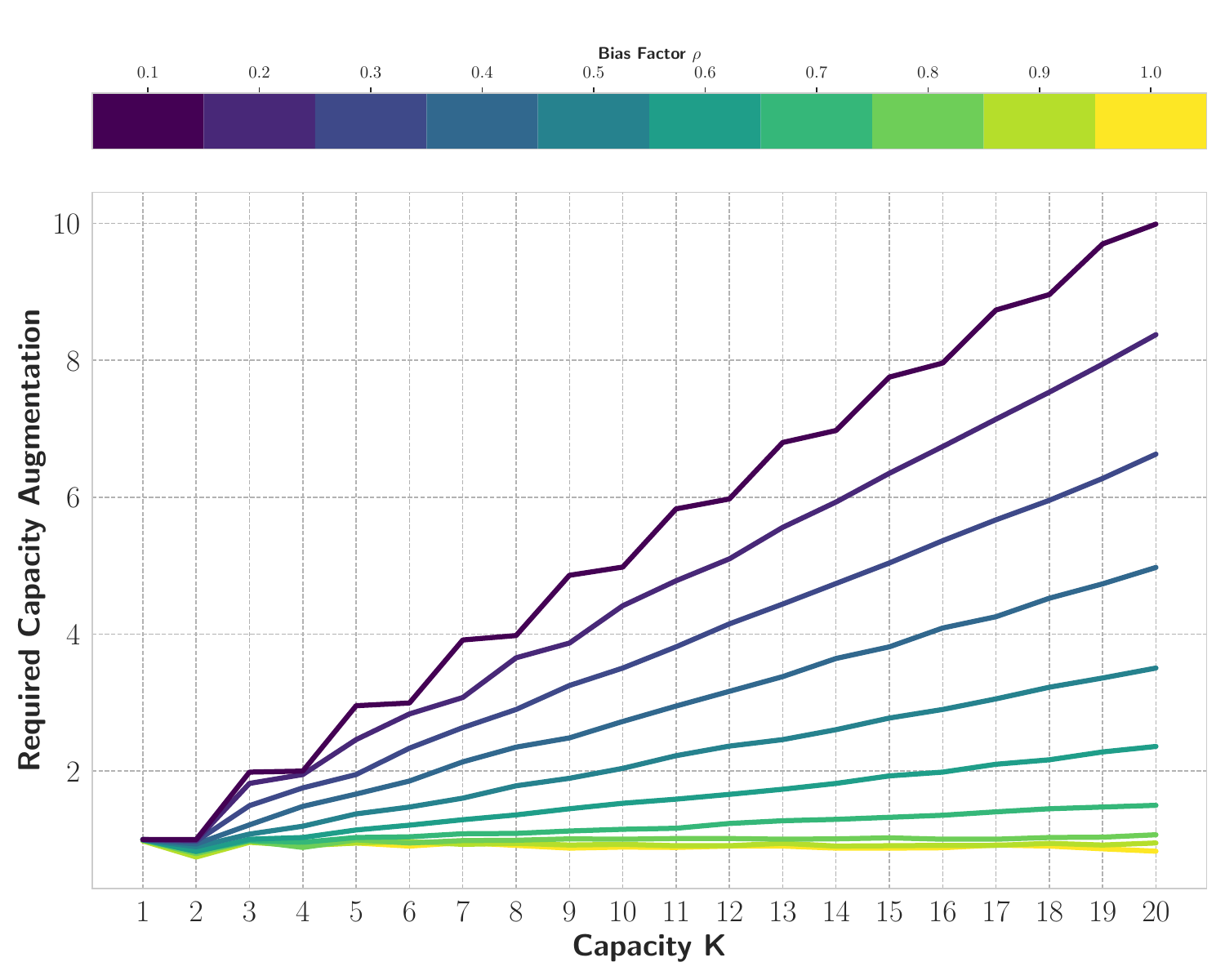}
        \caption{}
    \end{subfigure}
    \quad\quad\quad
    \begin{subfigure}[b]{0.45\textwidth}
        \centering
        \includegraphics[width=\textwidth]{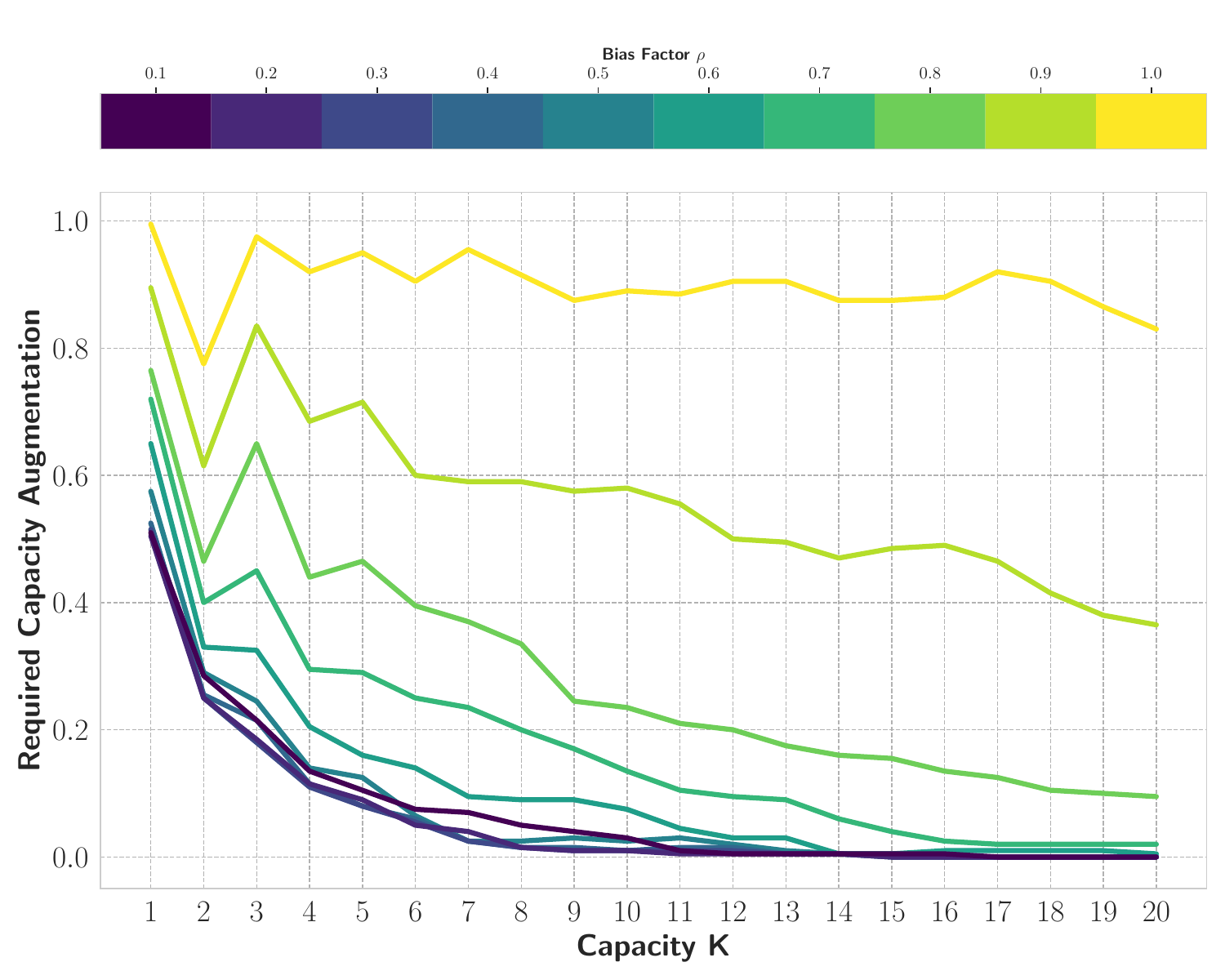}
        \caption{}
    \end{subfigure}
    \caption{The required extra capacity as a function of capacity $\boldsymbol{k}$ for different bias factors $\boldsymbol{\rho}$, to compensate for the (a) short-term utility reduction, and (b) long-term utility reduction.}
    \label{fig:v1_dominated_capacity}
\end{figure}
Given the previous investigation, an intriguing question can be asked: how many additional units of capacity should be used to impose demographic parity in selection without any loss in short-term or long-term utility? To answer this question, in \Cref{fig:v1_dominated_capacity}, we plot the number of extra units $k'$ used for each capacity $k$, so that the optimal constrained policy with capacity $k$ has at least the same net utility as the optimal unconstrained policy with capacity $k-k'$. We study both settings: short-term utilities calculated using biased signals (part (a)) and long-term utilities calculated using unbiased true values (part (b)). 

Interestingly, we observe in \Cref{fig:v1_dominated_capacity}(a) that for a moderate bias factor, say $\rho\approx 0.75$, around $23\%$ extra capacity can ensure that demographic parity in selection would not harm short-term utility at all. This percentage decreases to less than $5\%$ for $\rho=0.9$ and increases to approximately $50\%$ (with a sharp increase) when $\rho=0.6$. This sharp increase, combined with the inefficiency caused by the unused capacity mentioned earlier in this section, suggests that when there is a significant bias in the signals of one of the groups, the decision maker might be better off focusing on more relaxed notions of fairness than \eqref{eq:parity}, for example \eqref{eq:quota} with $\theta \ll 0.5$. We further investigate this phenomenon in \Cref{sec:numerical-quota}. See also \Cref{sec:numerical-unintended} for a more in-depth discussion on how/why to adjust the quota parameter $\theta$ as a function of bias factor $\rho$. 

Switching to the case of long-term utilities, which are calculated based on unbiased true values, the earlier observation that a few more positions can drastically help with the price of fairness becomes amplified: for a wide range of bias factors (e.g., $\rho\in[0.18,1]$ for $k=10$), the optimal constrained policy dominates the optimal unconstrained policy in terms of long-term utility. Furthermore, under significantly biased signals where this domination does not occur (e.g., $\rho=0.1$ or $\rho=0.2$), increasing $k$ by a small amount goes a long way: \Cref{fig:v1_dominated_capacity}(b) suggests that increasing the capacity by $11\%$ for $\rho=0.2$ and by $38\%$ for $\rho=0.1$ increases the true utility of the optimal constrained policy to more than that of the optimal unconstrained policy. We investigate how these percentages change as we switch to more relaxed notions of fairness, for example, \eqref{eq:quota} with $\theta \ll 0.5$., in \Cref{sec:numerical-quota}.

\subsubsection{Additional Notes.}
\label{sec:numeric_pandora_parity_Additional Notes}

\revcolor{
\Cref{fig-apx:normal_v1_randomization} emphasizes on the significance of randomization, by showing that the optimal policy does, indeed, randomize over 2 extreme tie-breaking rules in majority of the instances. \Cref{fig-apx:normal_v1_expost slack} demonstrates the histogram of the normalized ex-post slack in the constraint, \revcolorm{measured by the formula:
$$\frac{\sum_{i \in \ManSet} \select_i^{\policy}-\sum_{i \in \WomanSet} \select_i^{\policy}}{\sum_{i \in \ManSet} \select_i^{\policy}+\sum_{i \in \WomanSet} \select_i^{\policy}}.$$ 
}
As can be seen in the plot, there is a fast decay in the tail of the distribution. This implies, even though our optimal policy is designed to only satisfy the ex-ante constraint, its ex-post slack is also very close to 0 in most of the practical instances.}

\begin{figure}[htb]
    \centering
    \begin{subfigure}[b]{0.45\textwidth}
        \centering
        \includegraphics[width=\textwidth]{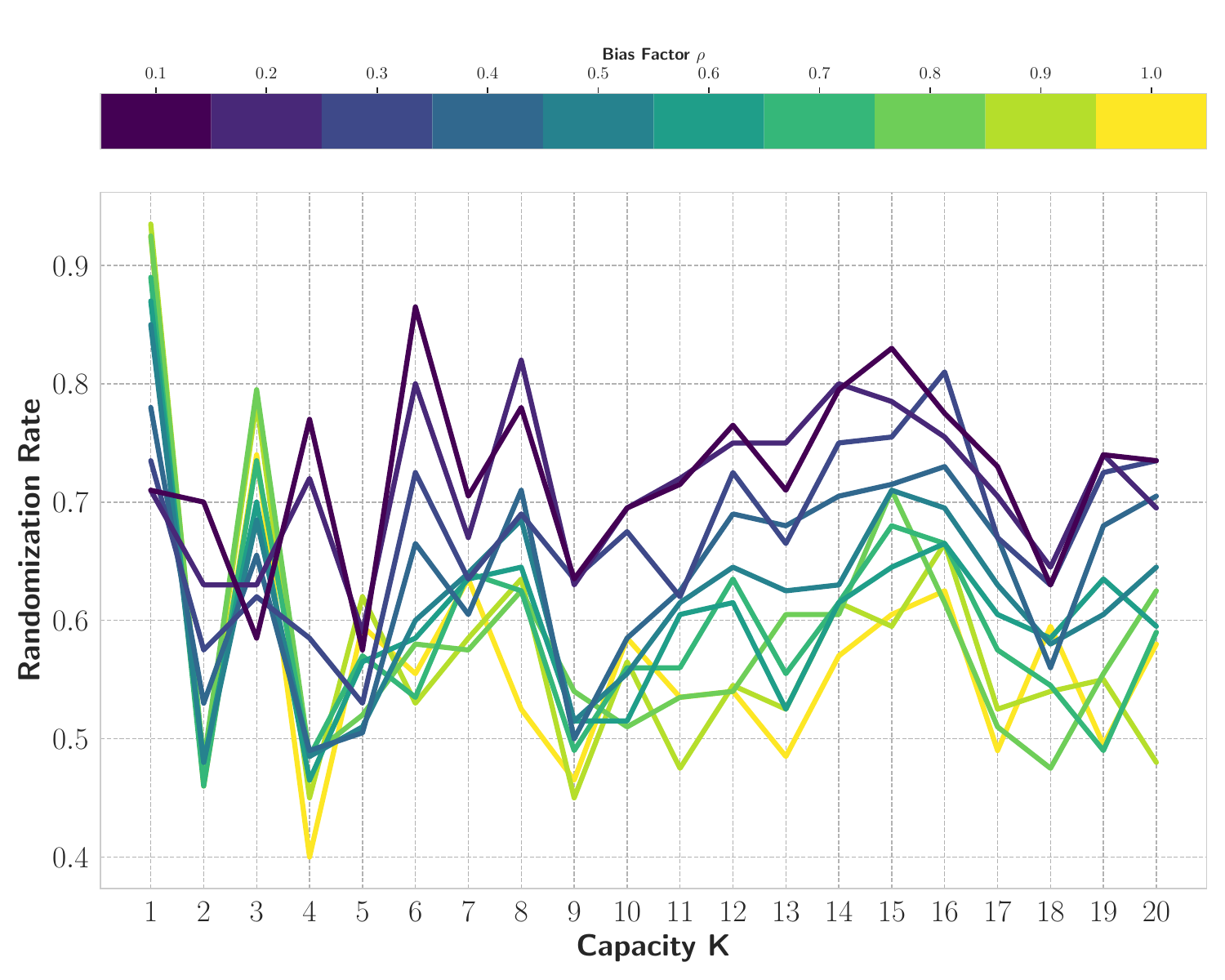}
        \caption{Fraction of instances that randomization was needed in the optimal policy.}
        \label{fig-apx:normal_v1_randomization}
    \end{subfigure}
    \hfill
    \begin{subfigure}[b]{0.45\textwidth}
        \centering
        \includegraphics[width=\textwidth]{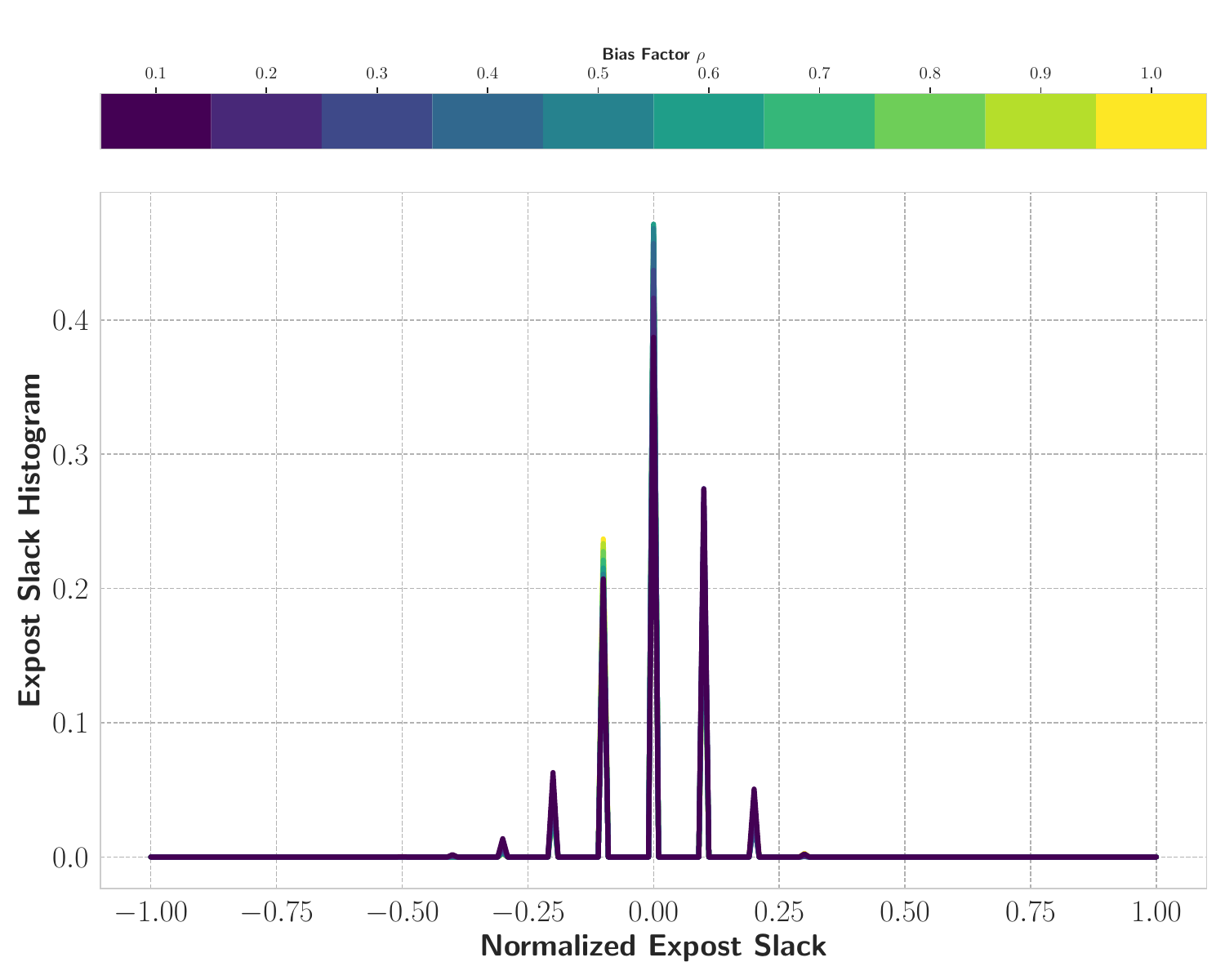}
        \caption{Histogram of normalized ex-post slack, Capacity = 20}
        \label{fig-apx:normal_v1_expost slack}
    \end{subfigure}
    \caption{\revcolor{Measuring the necessity for randomness and the ex-post statistics of the constraint slack. (a) fraction of instances where randomization was employed, and (b) empirical distribution of ex-post slack.
    \label{fig-apx:normal_v1_randomization and expost slack}}}
\end{figure}

\subsection{Average quota in selection}
\label{sec:numerical-quota}
Next, we study the average quota constraint in selection, that is, \eqref{eq:quota} for selection with parameter $\theta\in[0,1]$. Importantly, $\theta=0.5$ corresponds to demographic parity, while $\theta\in[0,0.5)$ (resp., $\theta\in(0.5,1]$) is more relaxed (resp. more restricting) than demographic parity. We repeat the same simulation scenarios as before in \Cref{sec:numerical-quota-change}, \Cref{sec:numerical-quota-true}, and \Cref{sec:numerical-quota-more-cap}. 

\subsubsection{Short-term performance -- the effect of changing capacity and bias factor.}
\label{sec:numerical-quota-change}
In \Cref{fig:v5_both_fair_unfair}, we plot short-term utilities (calculated based on biased observable signals) as a function of the quota parameter $\theta$ for different values of capacity $k$ and bias factor $\rho$. In \Cref{apx:numerical}, we also plot the short-term price of fairness ratio (\Cref{fig-apx:v5_CR_fair_unfair}) and the optimal dual adjustment $\lambda^*$ (\Cref{fig-apx:v5_lambda_fair_unfair}) as a function of $\theta$. First, we observe that the short-term utility gap between optimal constrained and unconstrained policies is increasing in $\theta$, as expected. However, we also observe that for smaller values of bias factor, for example $\rho\in[0,0.3]$, the utility decreases dramatically as $\theta$ increases. This observation suggests that when there is a significant asymmetry between the two groups, a smaller choice of $\theta\ll 0.5$ is a better choice from the perspective of short-term utility. On the other hand, for higher values of $\rho$, higher values of $\theta$ are admissible to obtain the same short-term price of fairness. See \Cref{sec:numerical-unintended} for more details on the choice of $\theta$ as a function of $\rho$ to mitigate the unintended under-allocations mentioned earlier.

\begin{figure}[htb]
    \centering
    \begin{subfigure}[b]{0.32\textwidth}
        \centering
        \includegraphics[width=\textwidth]{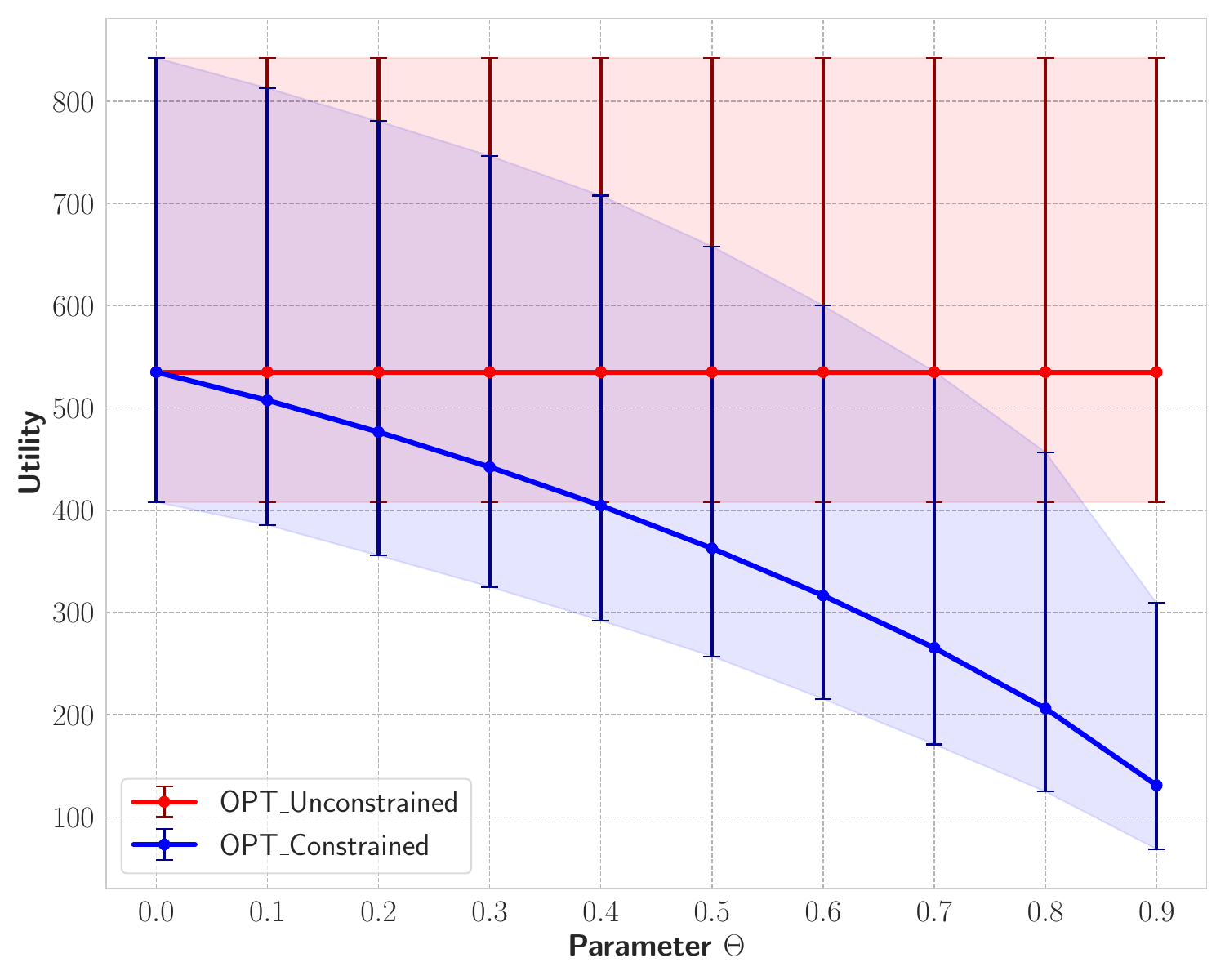}
        \caption{  Capacity = 10, bias factor = 0.1}
    \end{subfigure}
    \hfill
    \begin{subfigure}[b]{0.32\textwidth}
        \centering
        \includegraphics[width=\textwidth]{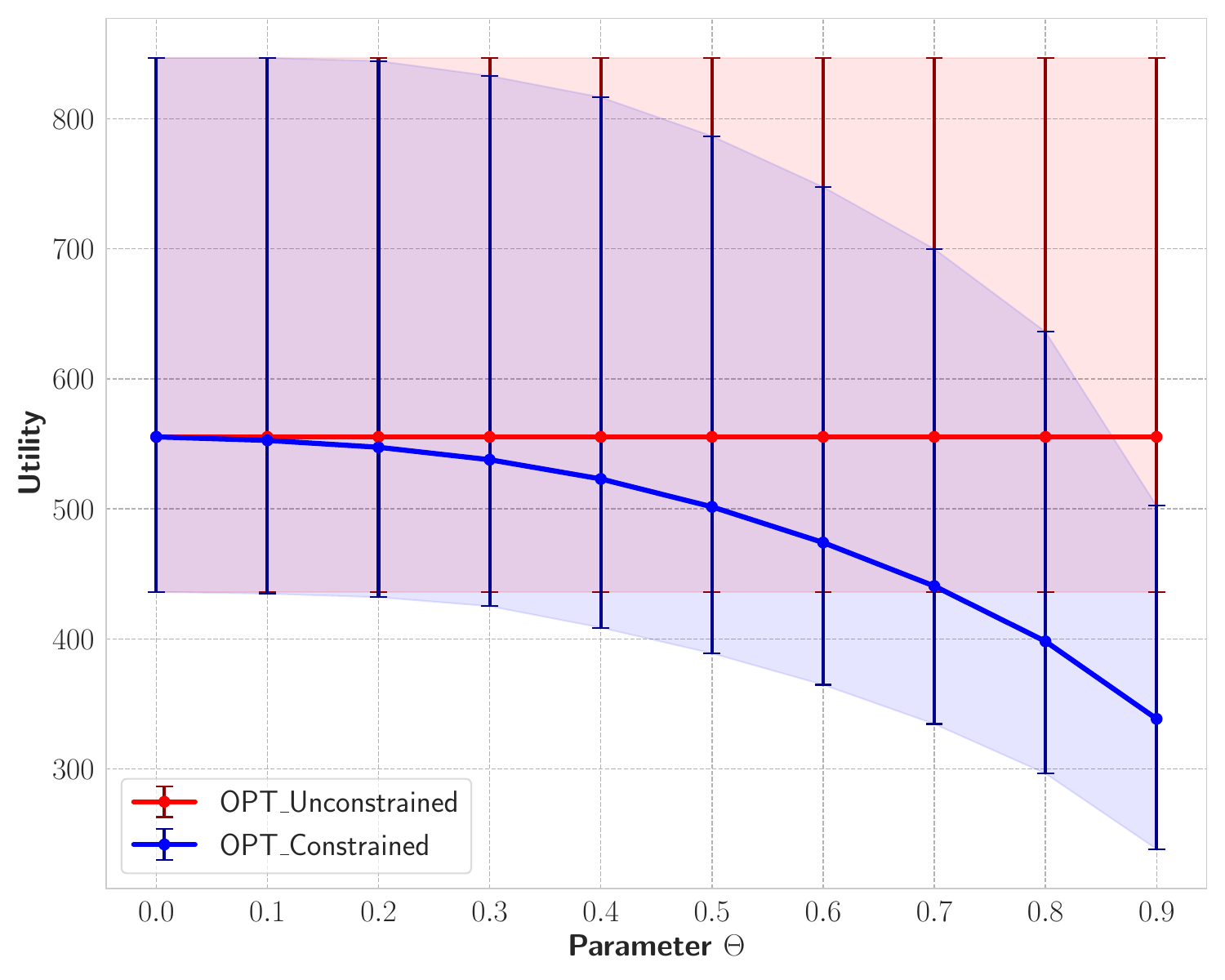}
        \caption{  Capacity = 10, bias factor = 0.5}
    \end{subfigure}
    \hfill
    \begin{subfigure}[b]{0.32\textwidth}
        \centering
        \includegraphics[width=\textwidth]{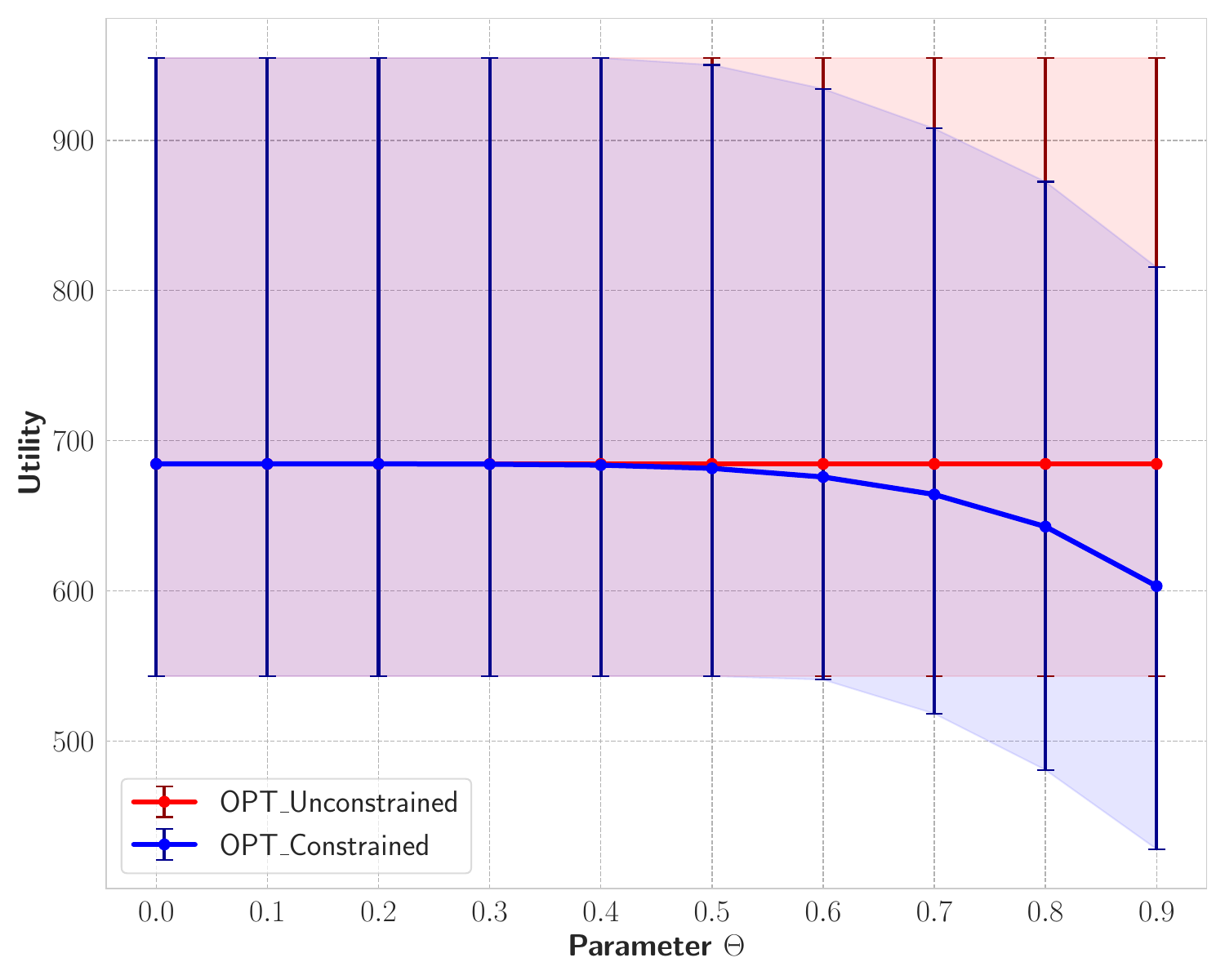}
        \caption{  Capacity = 10, bias factor = 1.0}
    \end{subfigure}
    \hfill
    \begin{subfigure}[b]{0.32\textwidth}
        \centering
        \includegraphics[width=\textwidth]{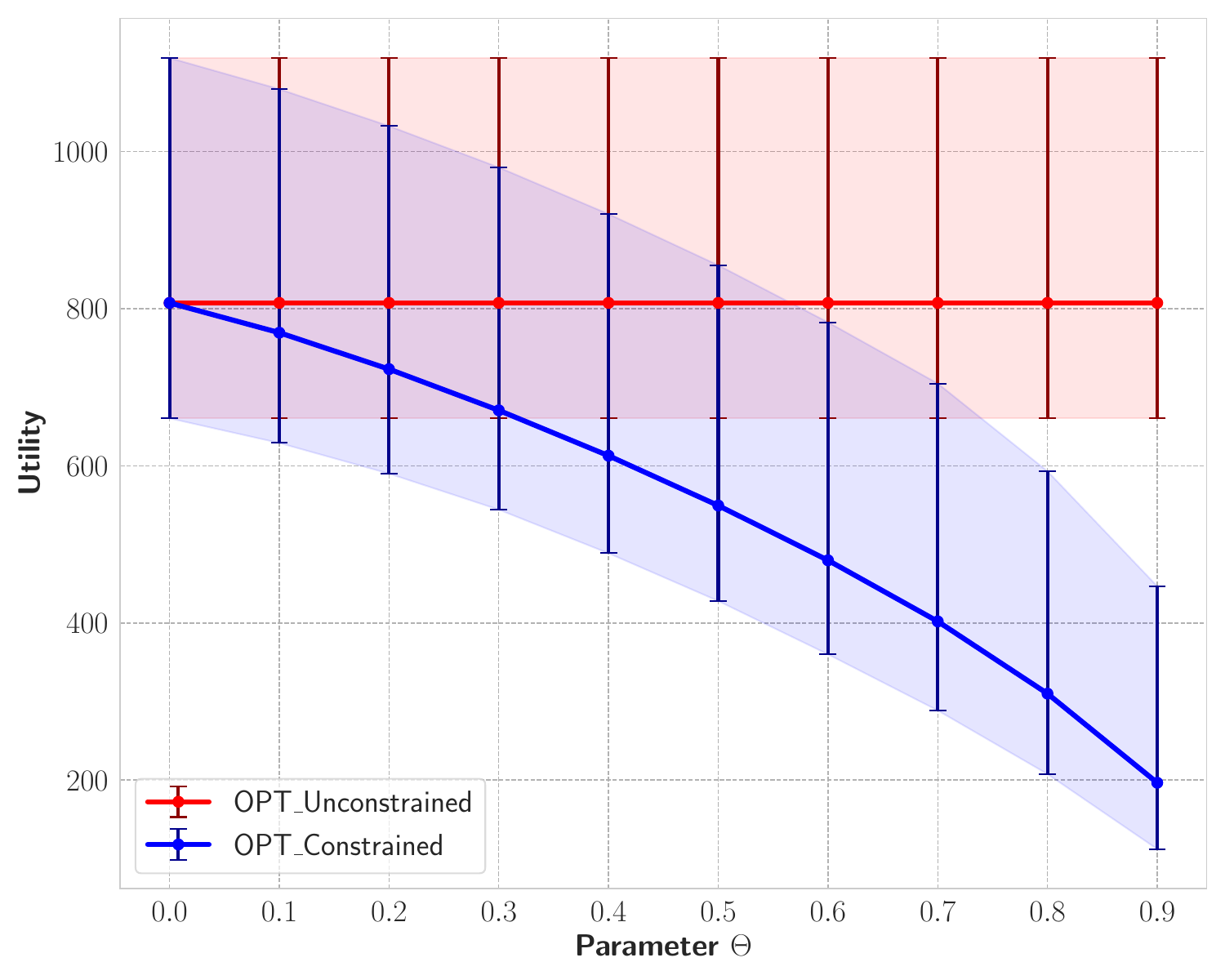}
        \caption{  Capacity = 20, bias factor = 0.1}
    \end{subfigure}
    \hfill
    \begin{subfigure}[b]{0.32\textwidth}
        \centering
        \includegraphics[width=\textwidth]{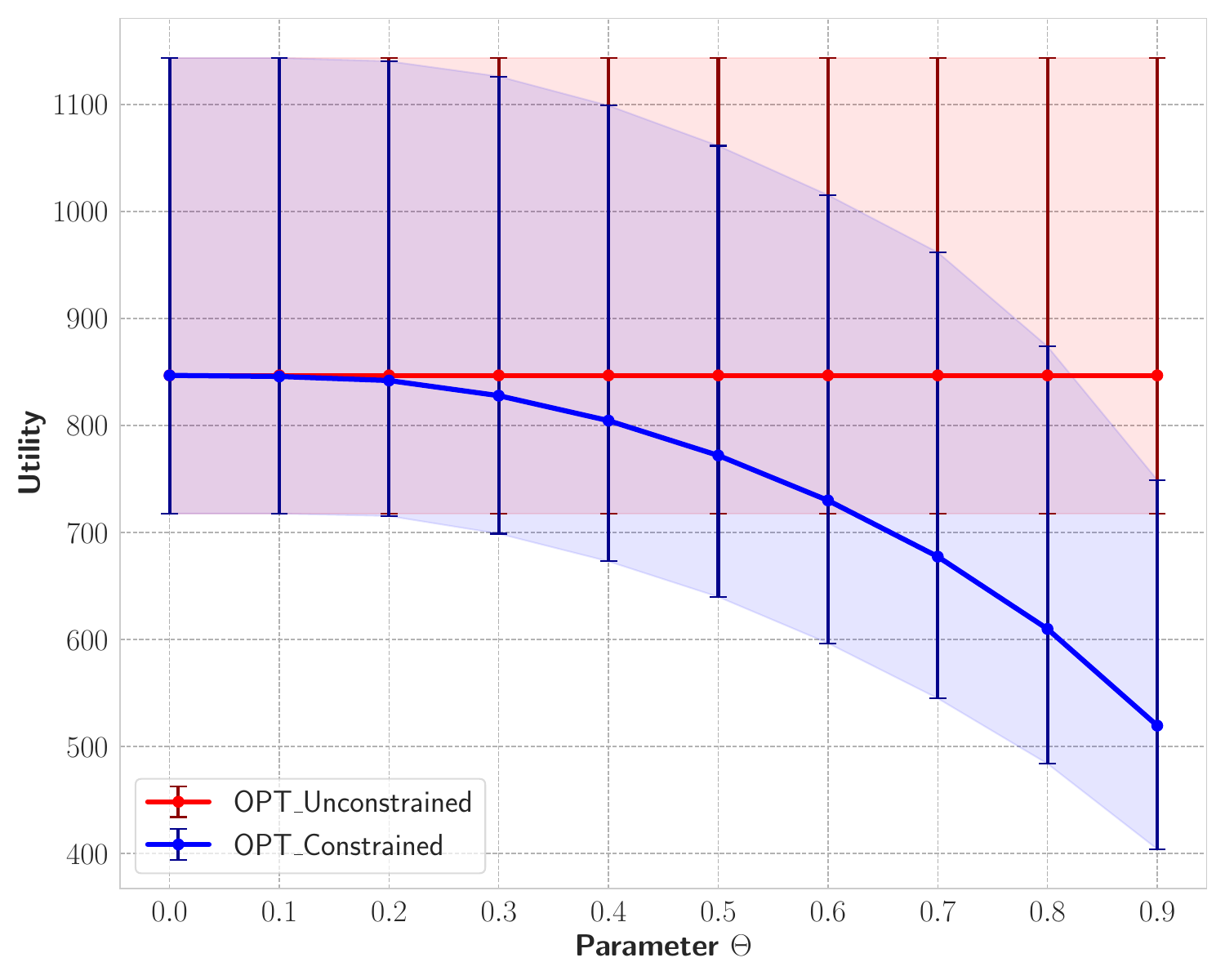}
        \caption{  Capacity = 20, bias factor = 0.5}
    \end{subfigure}
    \hfill
    \begin{subfigure}[b]{0.32\textwidth}
        \centering
        \includegraphics[width=\textwidth]{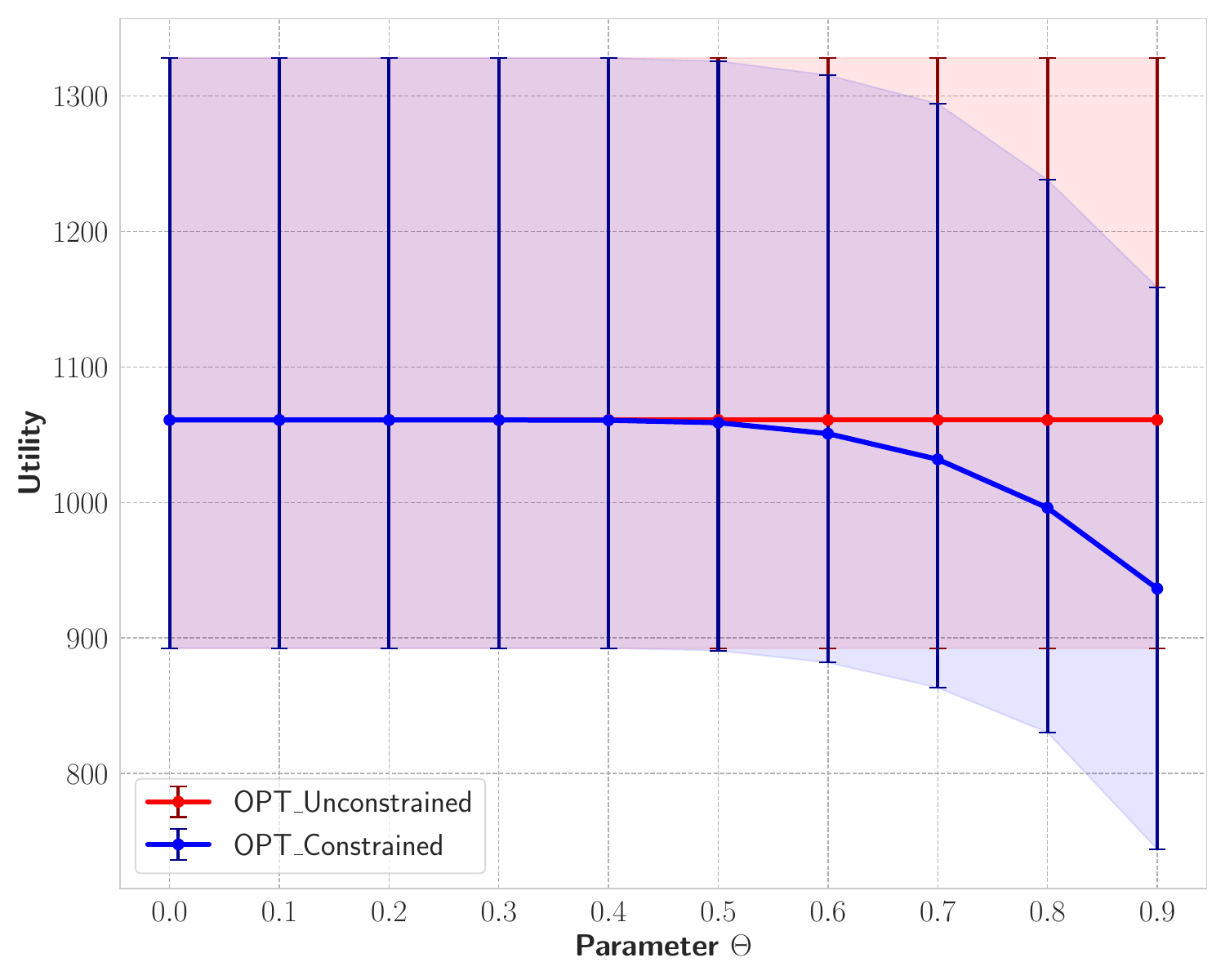}
        \caption{  Capacity = 20, bias factor = 1.0}
    \end{subfigure}
    \hfill
    \vspace{-1mm}
    \caption{Comparing the short-term utilities of optimal unconstrained (red) and constrained (blue) policies for the average quota in selection, for different capacities $\boldsymbol{k\in[1:20]}$ and bias factors $\boldsymbol{\rho\in\{0.1,\cdots,1\}}$; both curves are functions of the quota parameter $\boldsymbol{\theta\in[0,1]}$ in \eqref{eq:quota} ($\boldsymbol{\theta=0.5}$ corresponds to demographic parity).}
    \label{fig:v5_both_fair_unfair}
\end{figure}

\subsubsection{Long-term performance -- the effect of changing capacity and bias factor.}
\label{sec:numerical-quota-true} 
We now consider a setting similar to \Cref{sec:numerical-parity-true} with biased signals and unbiased values and study the long-term performance of our policies. See \Cref{fig:v6_both_fair_unfair} for a comparison of long-term utilities (calculated based on the true values) under optimal constrained and unconstrained policies. 
In \Cref{apx:numerical}, we further plot the price of fairness ratio with respect to the true values as a function of $\theta$ (\Cref{fig-apx:v6_diff_CR_fair_unfair}). As before, the optimal constrained policy dominates the optimal unconstrained policy with respect to the true values in a wide range of parameters. Moreover, as can be seen in all these graphs, if the bias in the signals decreases (that is, the bias factor $\rho$ increases), the range of parameter $\theta$ in which the domination occurs expands. Our results suggest that (i) adding an average quota with parameter $\theta\in[0,1]$, similar to demographic parity, can help increase long-term utilities (with respect to true values), and (ii) tuning parameter $\theta$ based on the bias in the signals can drastically amplify this effect. We further discuss this managerial insight in \Cref{sec:numerical-unintended}.

\begin{figure}[htb]
    \centering
    \begin{subfigure}[b]{0.32\textwidth}
        \centering
        \includegraphics[width=\textwidth]{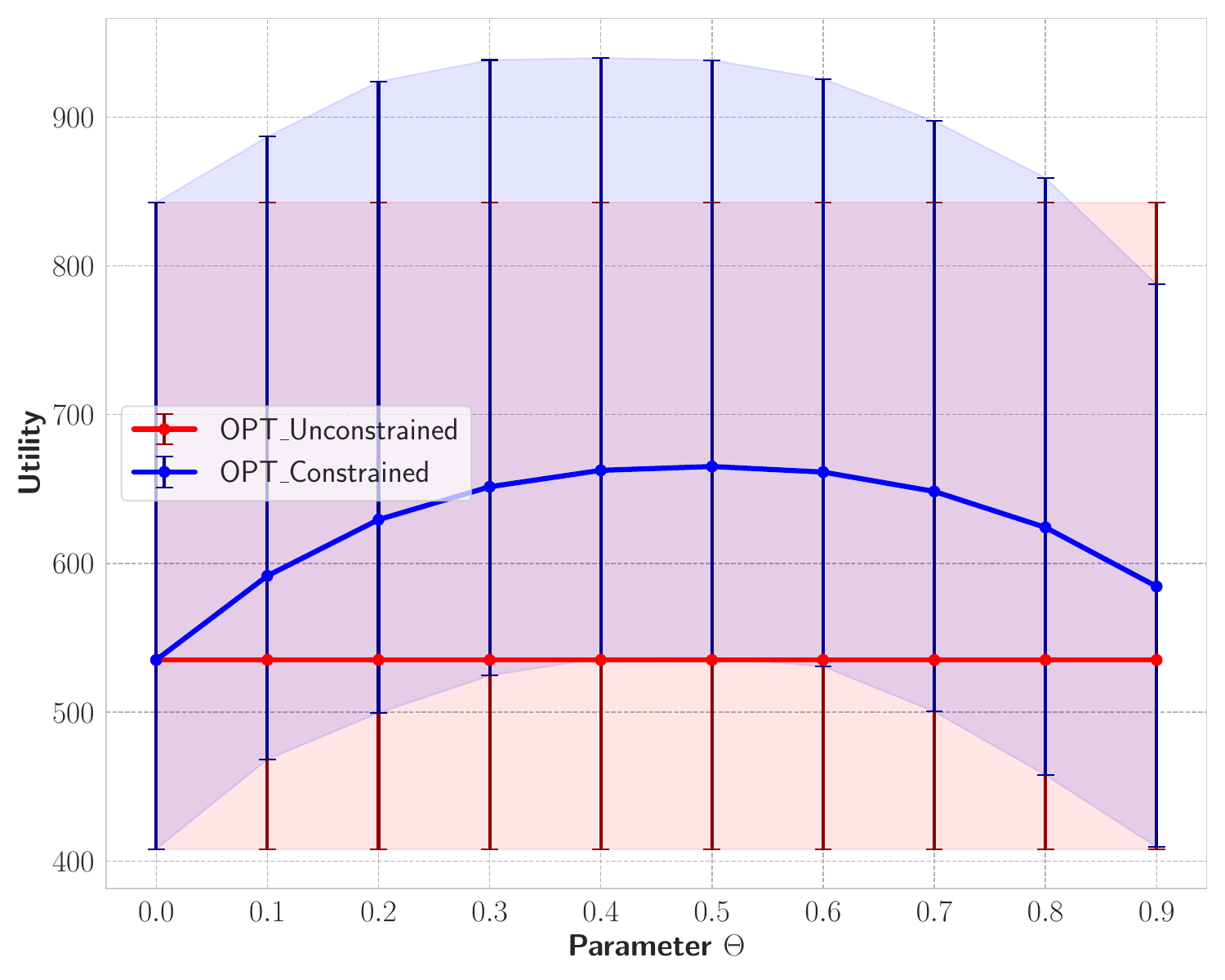}
        \caption{  Capacity = 10, bias factor = 0.1}
    \end{subfigure}
    \hfill
    \begin{subfigure}[b]{0.32\textwidth}
        \centering
        \includegraphics[width=\textwidth]{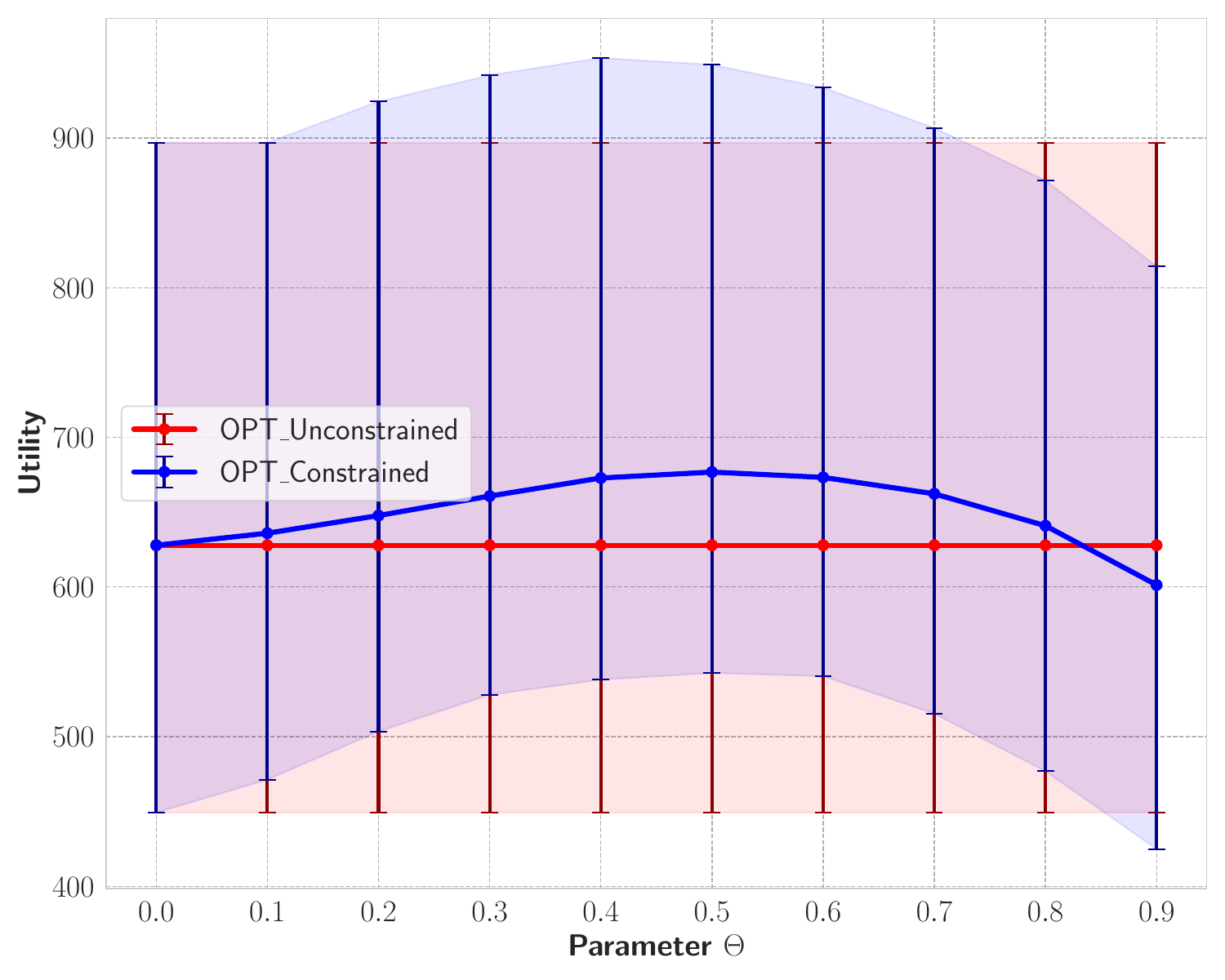}
        \caption{  Capacity = 10, bias factor = 0.5}
    \end{subfigure}
    \hfill
    \begin{subfigure}[b]{0.32\textwidth}
        \centering
        \includegraphics[width=\textwidth]{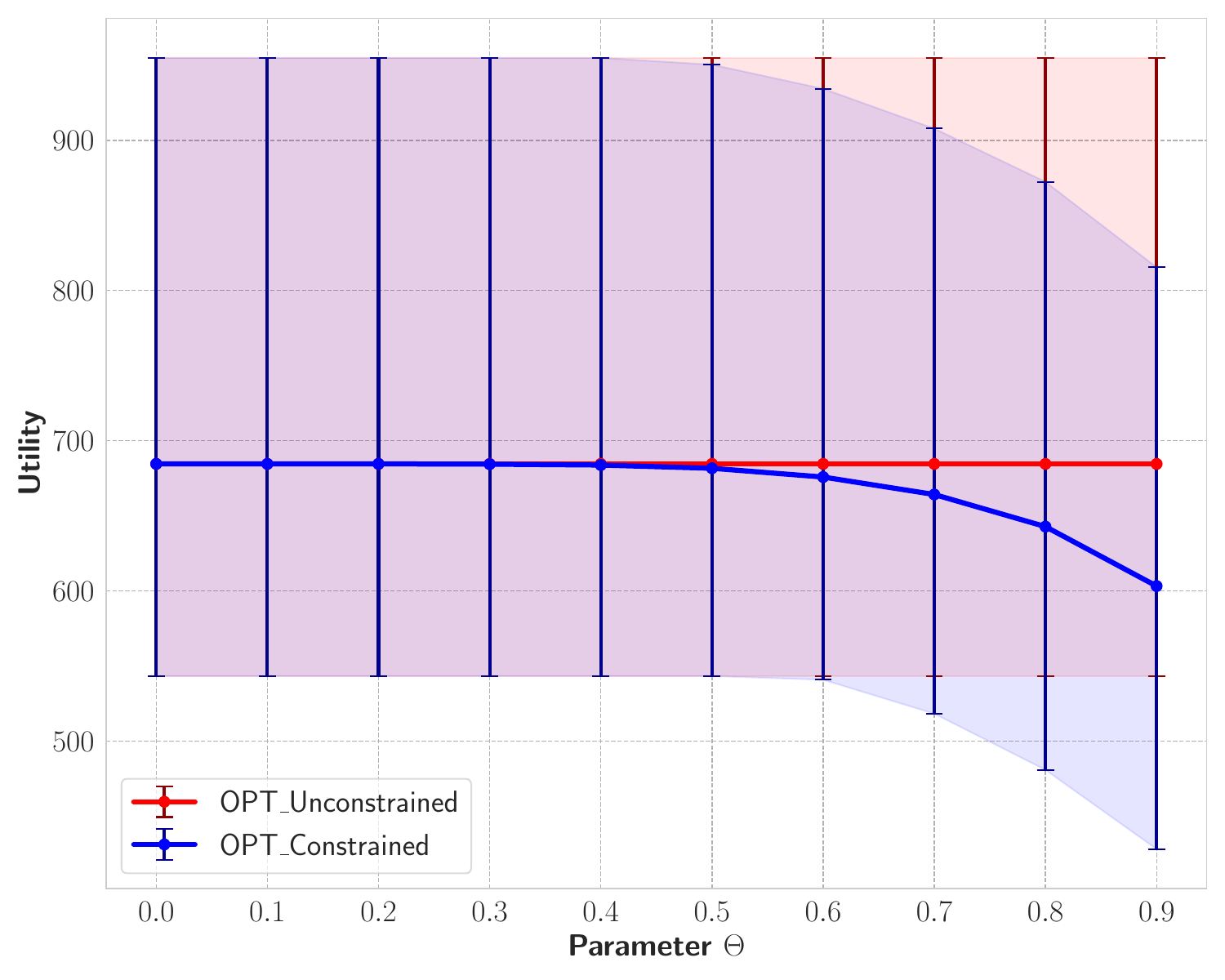}
        \caption{  Capacity = 10, bias factor = 1}
    \end{subfigure}
    \hfill
    \begin{subfigure}[b]{0.32\textwidth}
        \centering
        \includegraphics[width=\textwidth]{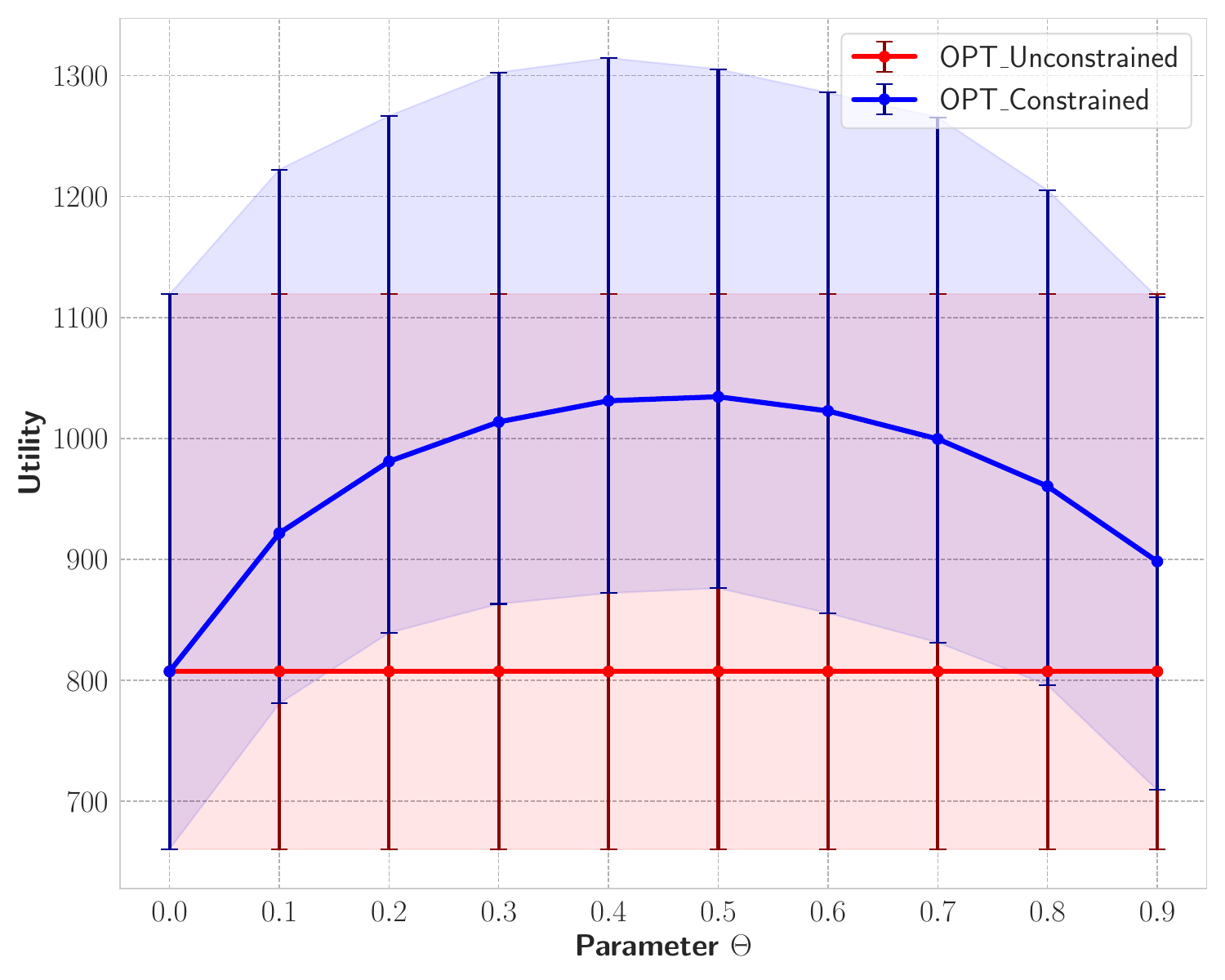}
        \caption{  Capacity = 20, bias factor = 0.1}
    \end{subfigure}
    \hfill
    \begin{subfigure}[b]{0.32\textwidth}
        \centering
        \includegraphics[width=\textwidth]{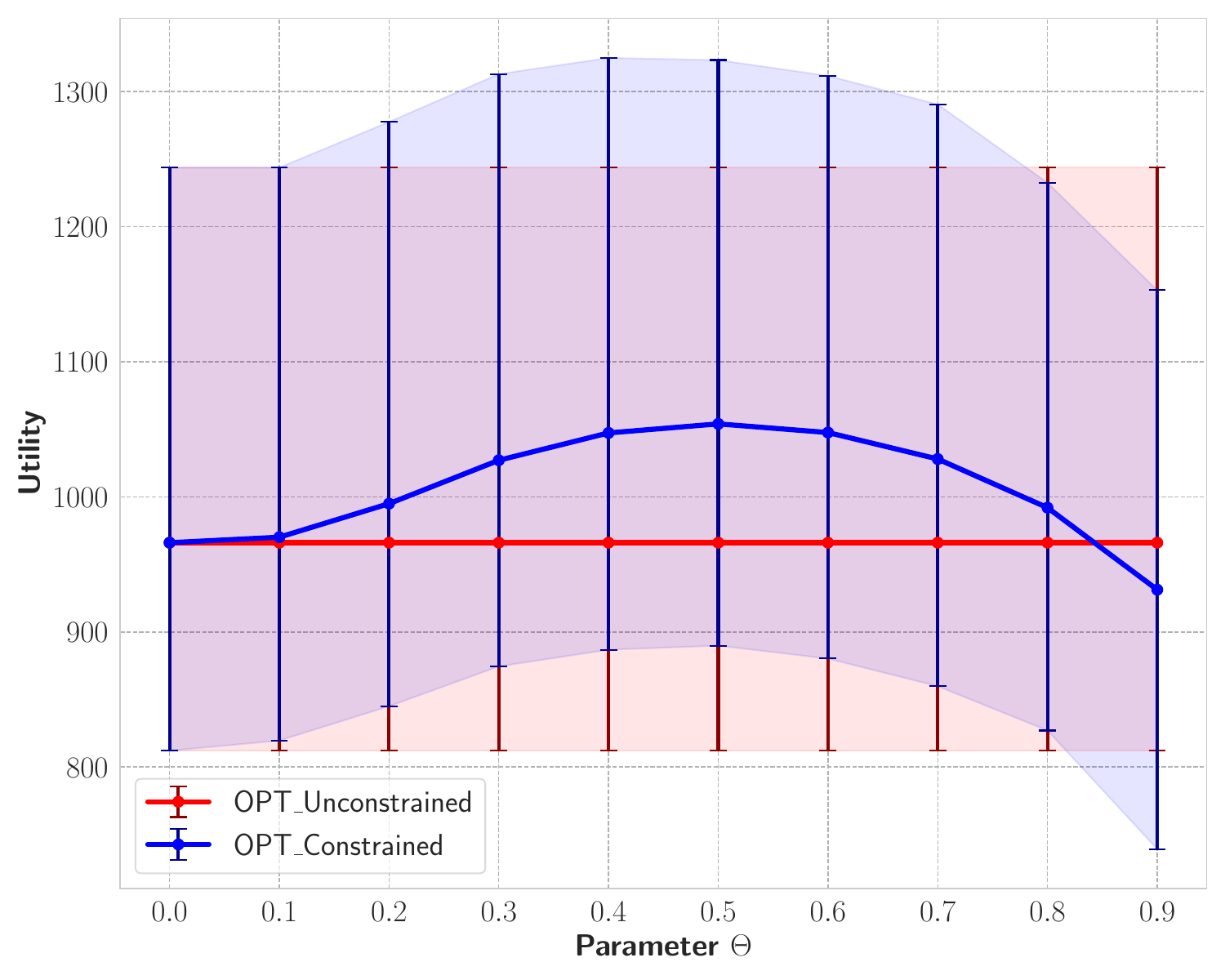}
        \caption{  Capacity = 20, bias factor = 0.5}
    \end{subfigure}
    \hfill
    \begin{subfigure}[b]{0.32\textwidth}
        \centering
        \includegraphics[width=\textwidth]{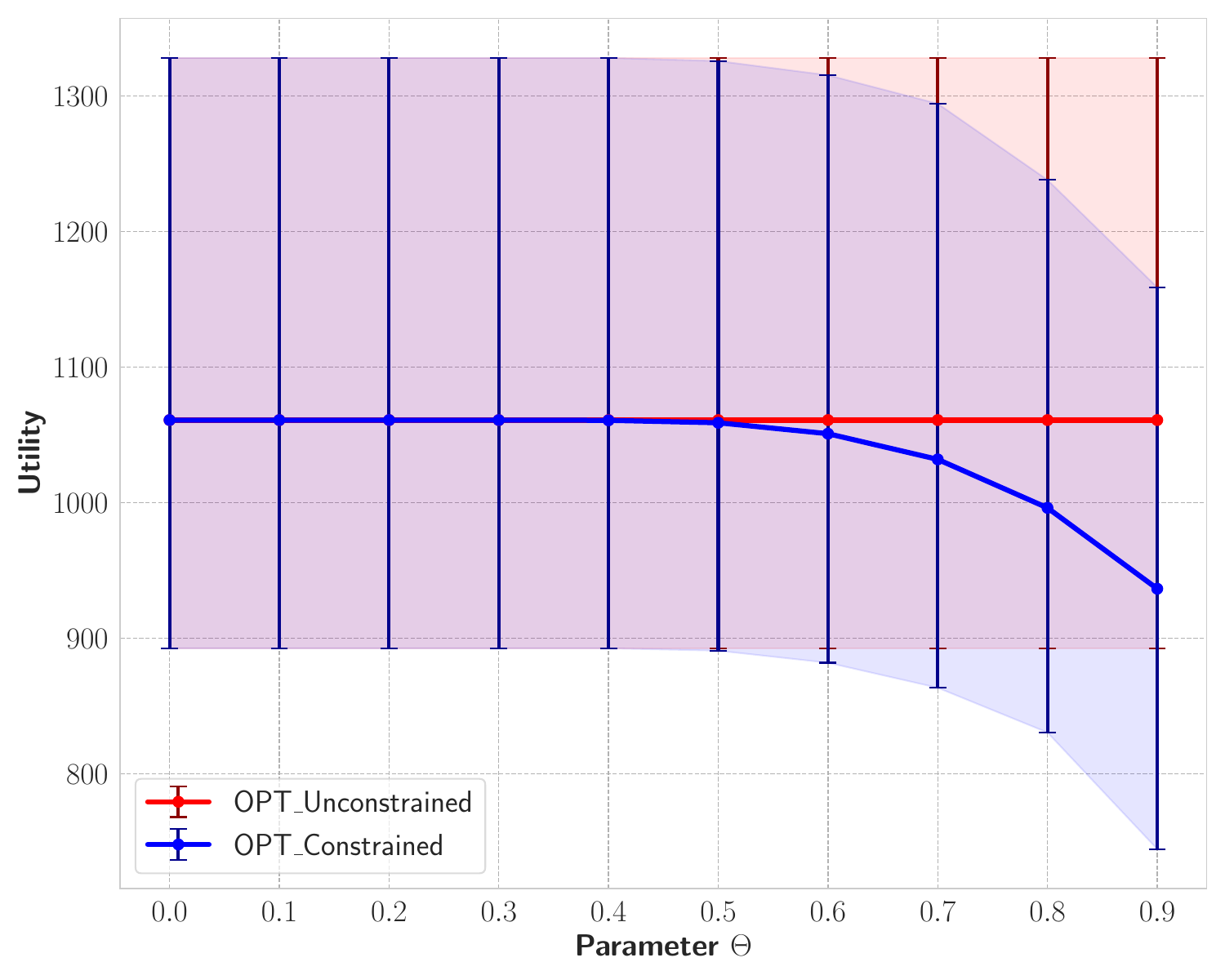}
        \caption{  Capacity = 20, bias factor = 1}
    \end{subfigure}
    \hfill
    \vspace{-2mm}
    \caption{Comparing the long-term utilities of optimal unconstrained (red) and constrained (blue) policies for the average quota in selection, for different capacities $\boldsymbol{k\in[1:20]}$ and bias factors $\boldsymbol{\rho\in\{0.1,\cdots,1\}}$; both curves are functions of the quota parameter $\boldsymbol{\theta\in[0,1]}$ in \eqref{eq:quota} ($\boldsymbol{\theta=0.5}$ corresponds to demographic parity).}
    \label{fig:v6_both_fair_unfair}
\end{figure}

\subsubsection{A few positions more.}
\label{sec:numerical-quota-more-cap}
We now consider the setting in \Cref{sec:numerical-few-pos}, but this time considering the quota selection constraint as in \eqref{eq:quota}, and study the effect of enhancing the capacity. See \Cref{fig:v5_dominated_capacity} for the effect on short-term utilities and \Cref{fig:v6_dominated_capacity} for the effect on long-term utilities. Comparing these two graphs with \Cref{fig:v1_dominated_capacity}, we observe that (i) for small values of $\theta$ (e.g., $\theta\leq 0.5$ for $\rho=0.9$ and $\theta\leq 0.3$ for $\rho=0.7$ ), increasing the capacity by less than $5\%$ is enough to ensure that the optimal constrained policy dominates the optimal unconstrained policy in terms of short-term utilities; (ii) the impact of capacity enhancement increases drastically when measuring the performance of policies based on their long-term utilities. For example, with $\theta\leq 0.7$ for $\rho=0.7$ and $\theta\leq 0.5$ for $\rho=0.9$, increasing the capacity by less than $2.5\%$ is enough to ensure that the long-term utility of the optimal constrained policy dominates the long-term utlitiy of the optimal unconstrained policy.
\begin{figure}[htb]
    \centering
    \begin{subfigure}[b]{0.32\textwidth}
        \centering
        \includegraphics[width=\textwidth]{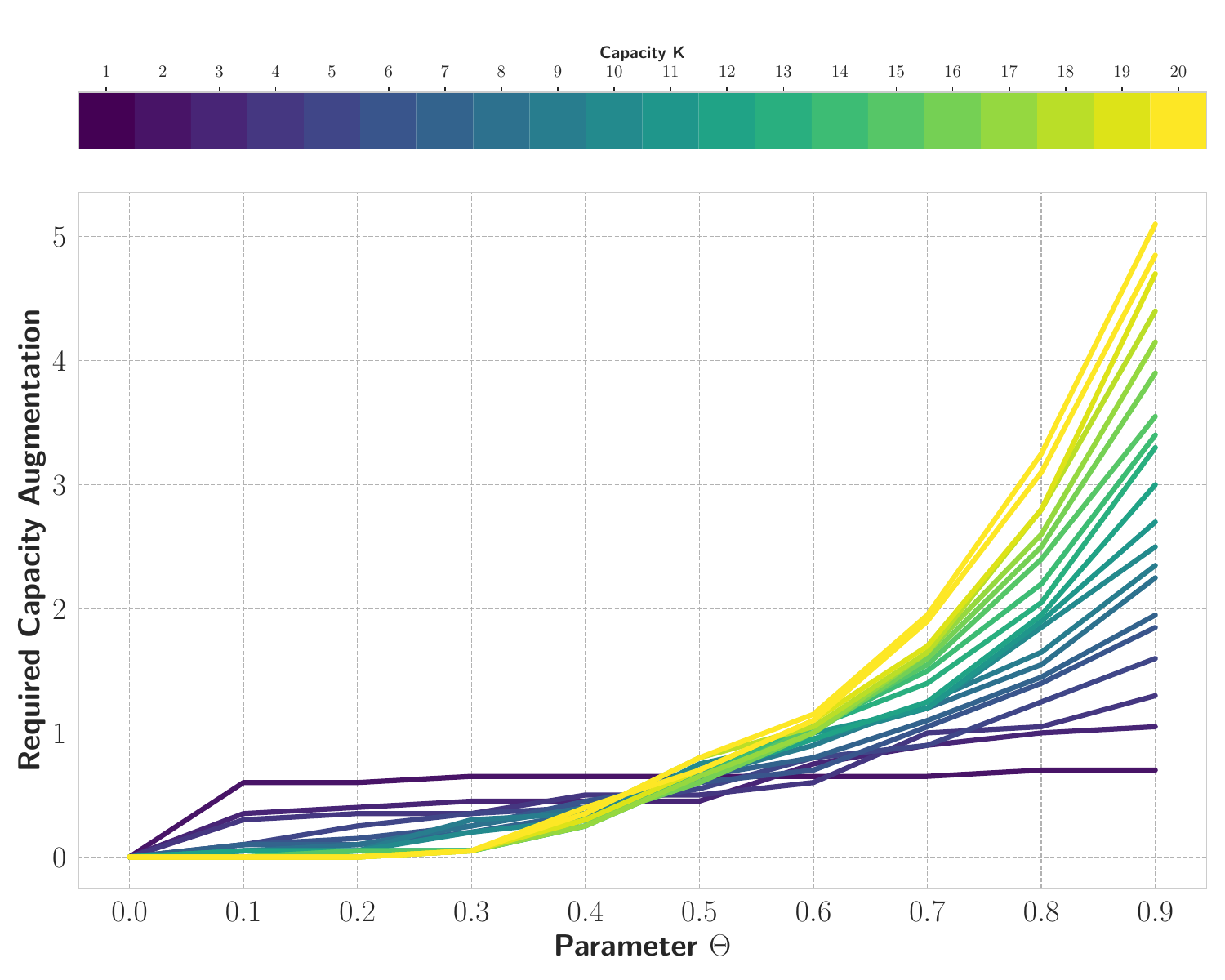}
        \caption{  Bias factor = 0.9}
    \end{subfigure}
    \begin{subfigure}[b]{0.32\textwidth}
        \centering
        \includegraphics[width=\textwidth]{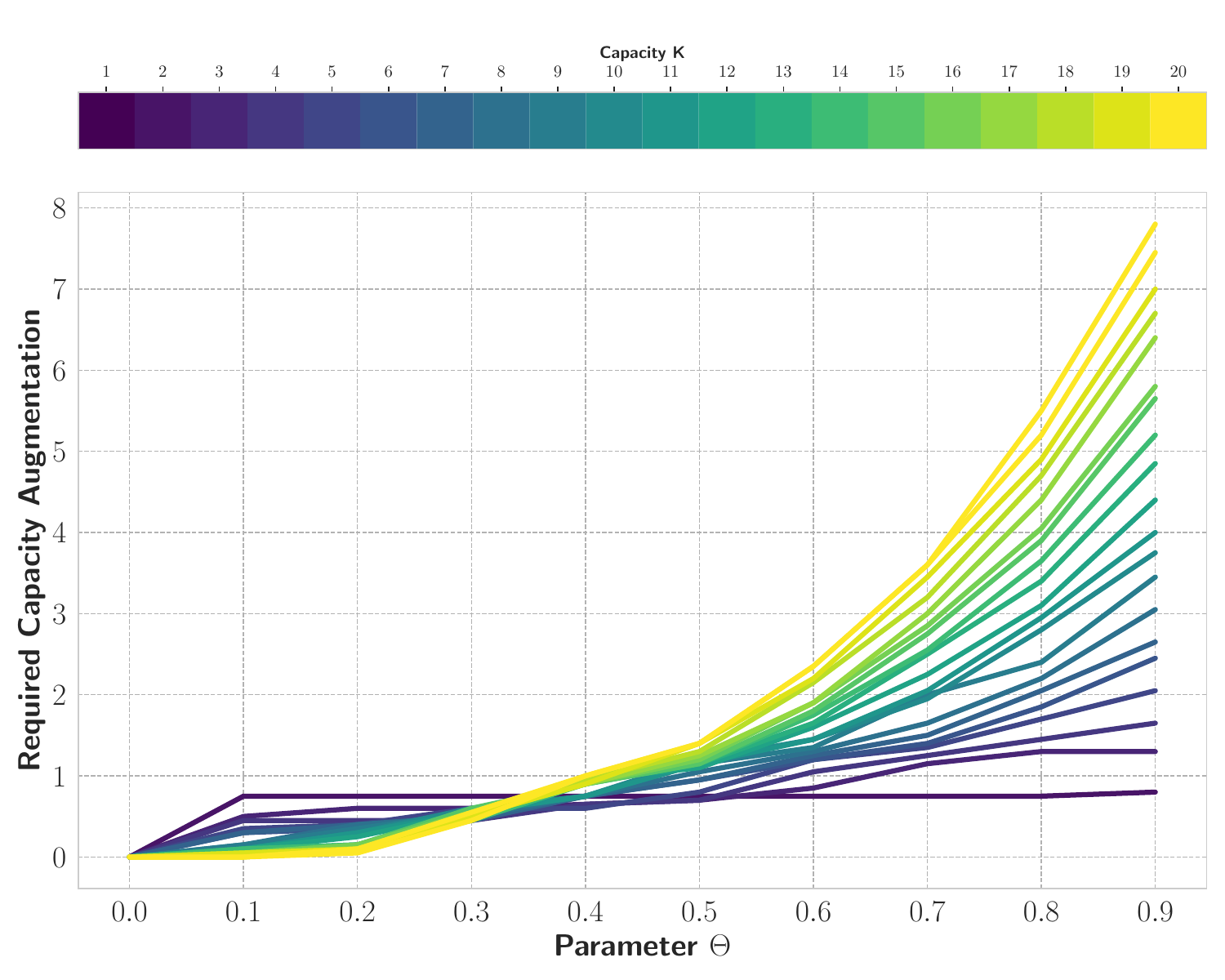}
        \caption{  Bias factor =  0.7}
    \end{subfigure}
    \begin{subfigure}[b]{0.32\textwidth}
        \centering
        \includegraphics[width=\textwidth]{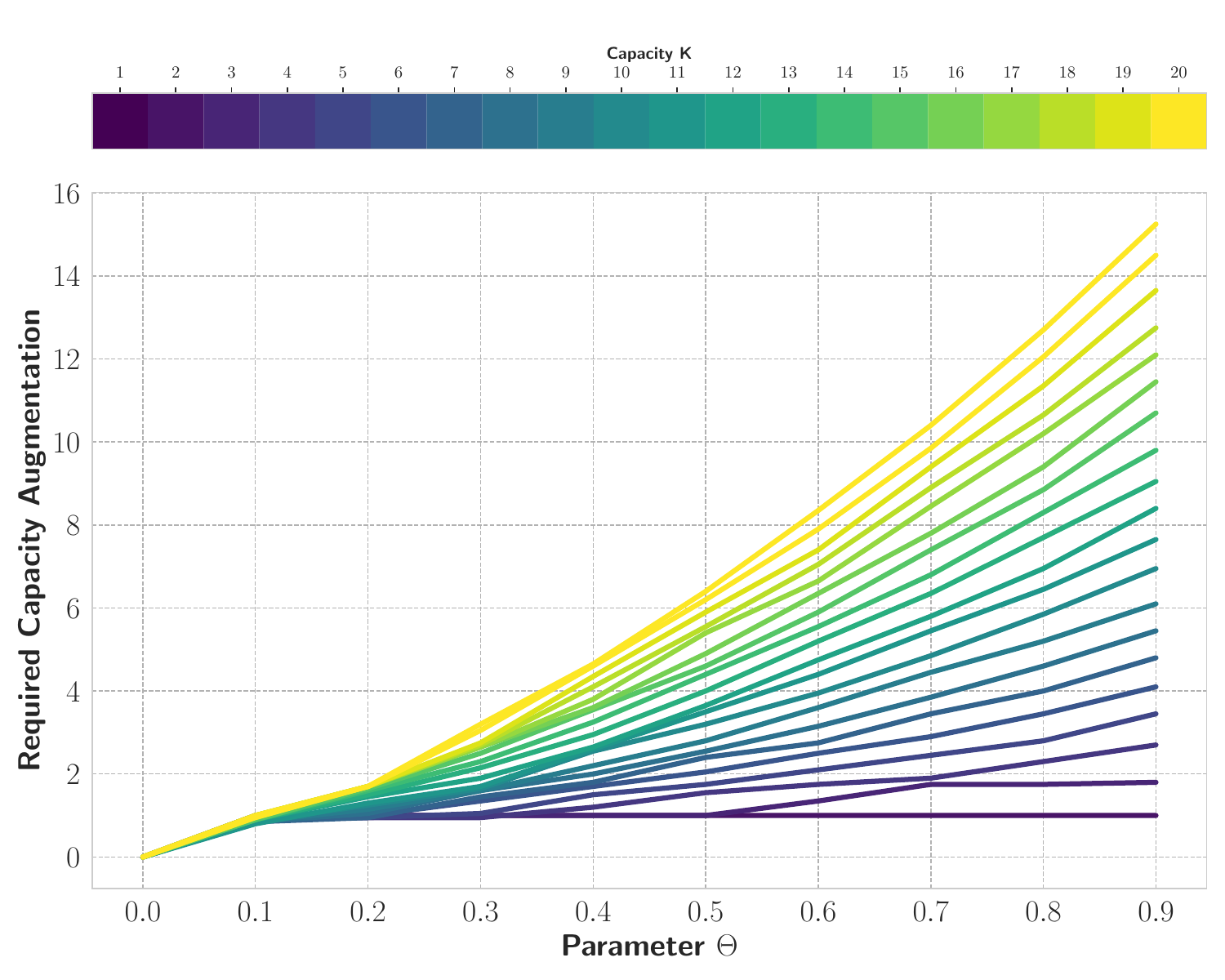}
        \caption{  Bias factor = 0.3}
    \end{subfigure}
    \hfill
    \caption{The required extra capacity to compensate for the short-term utility reduction due to imposing the average quota in selection, as a function of the quota parameter $\boldsymbol{\theta\in[0,1]}$ in \eqref{eq:quota} ($\boldsymbol{\theta=0.5}$ corresponds to demographic parity) for different values of bias factor $\boldsymbol{\rho}$ and capacity $\boldsymbol{k}$.}
    \label{fig:v5_dominated_capacity}
\end{figure}

\begin{figure}[htb]
    \centering
    \begin{subfigure}[b]{0.32\textwidth}
        \centering
        \includegraphics[width=\textwidth]{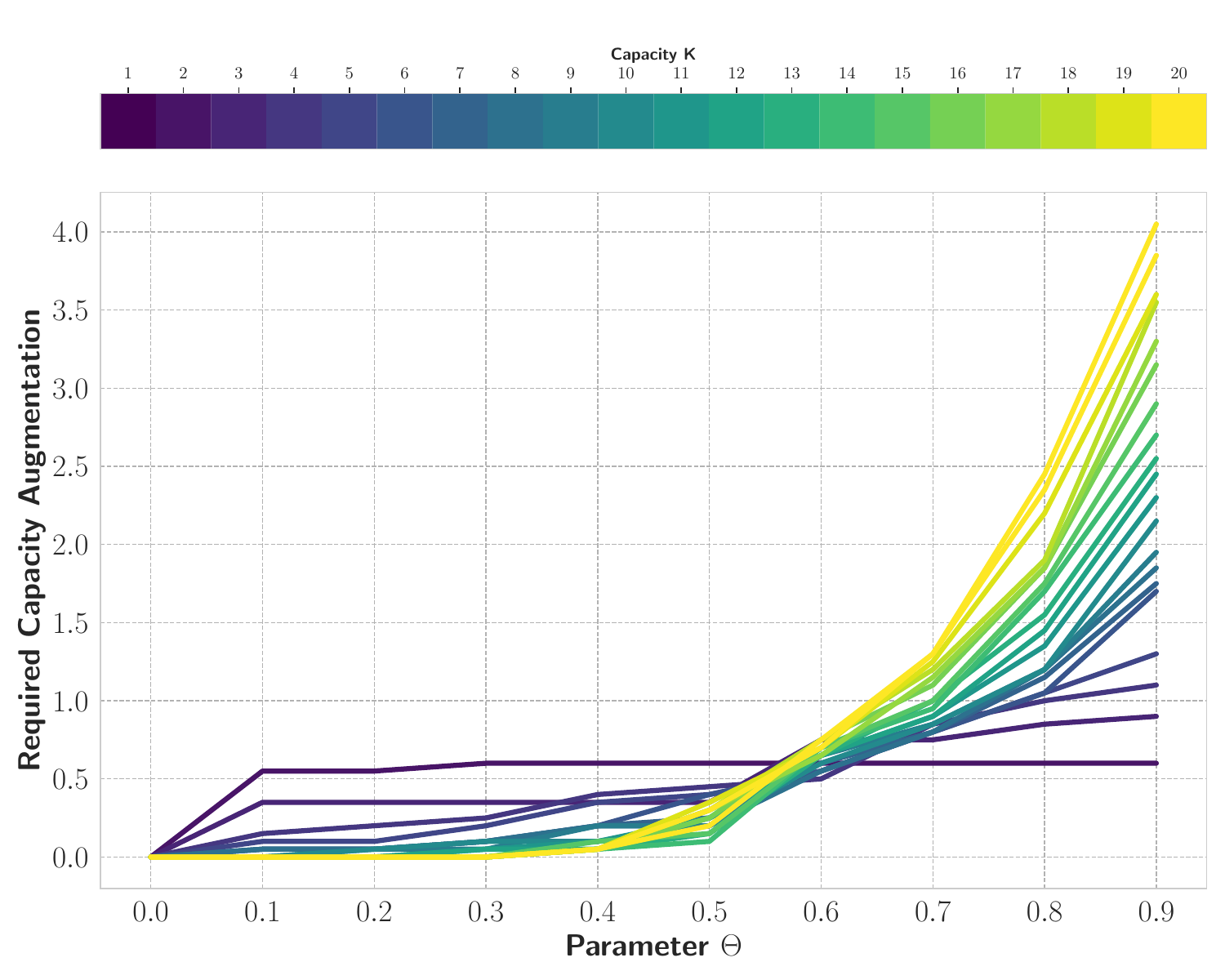}
        \caption{  Bias factor = 0.9}
    \end{subfigure}
    \hfill
    \begin{subfigure}[b]{0.32\textwidth}
        \centering
        \includegraphics[width=\textwidth]{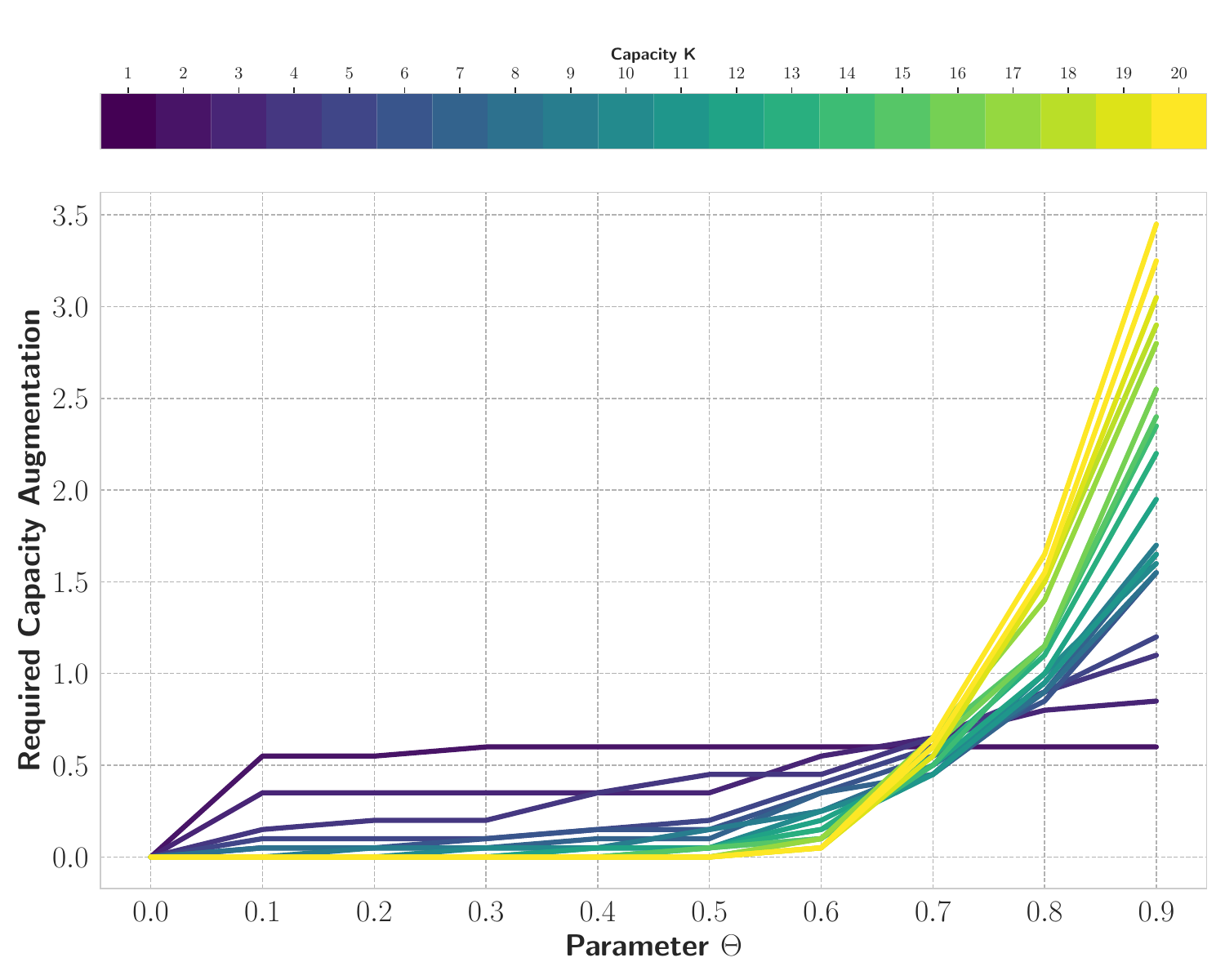}
        \caption{  Bias factor = 0.7}
    \end{subfigure}
    \hfill
    \begin{subfigure}[b]{0.32\textwidth}
        \centering
        \includegraphics[width=\textwidth]{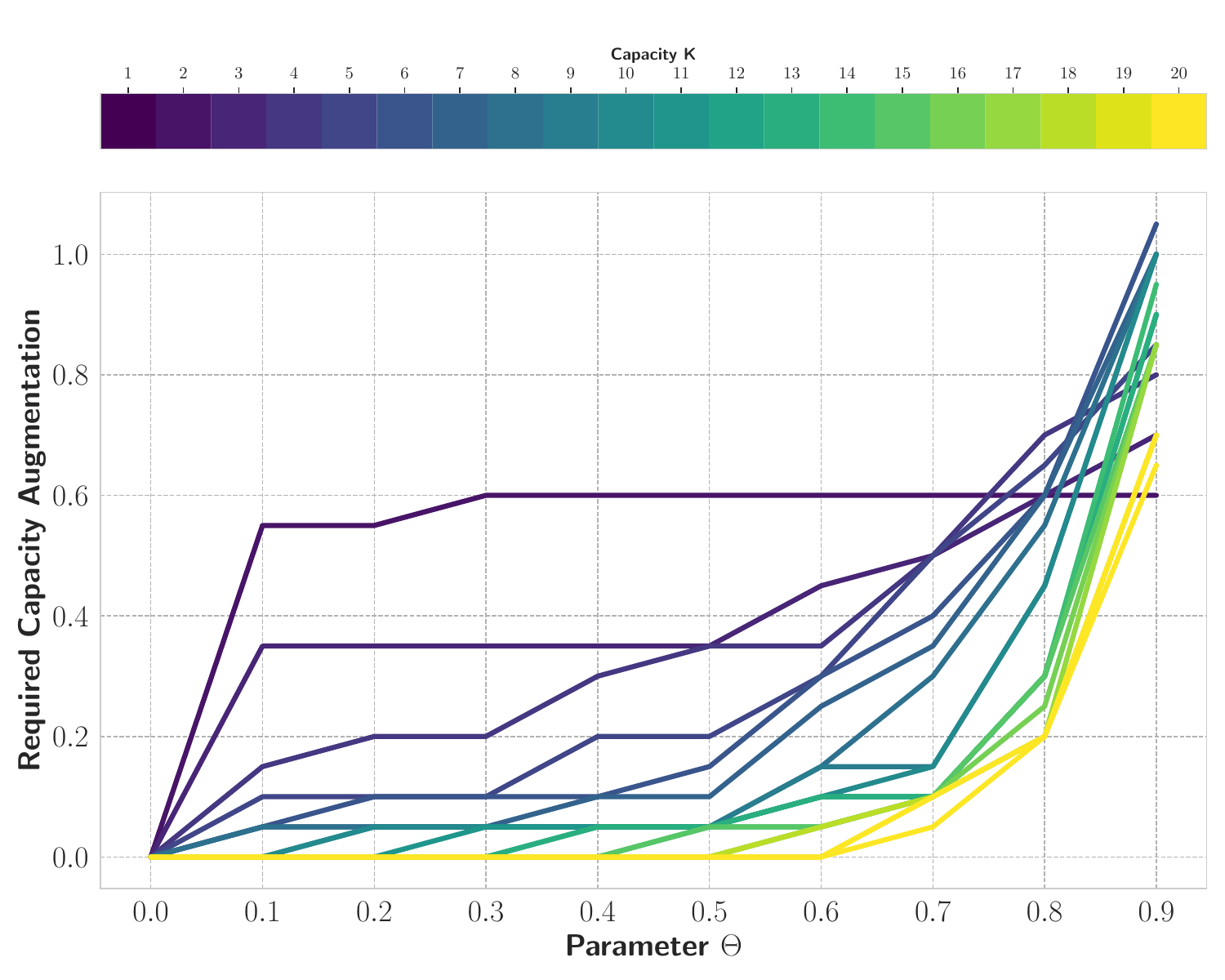}
        \caption{  Bias factor = 0.3}
    \end{subfigure}
    \hfill
    \caption{The required extra capacity to compensate for the long-term utility reduction due to imposing the average quota in selection, as a function of the quota parameter $\boldsymbol{\theta\in[0,1]}$ in \eqref{eq:quota} ($\boldsymbol{\theta=0.5}$ corresponds to demographic parity) for different values of bias factor $\boldsymbol{\rho}$ and capacity $\boldsymbol{k}$.}
    \label{fig:v6_dominated_capacity}
\end{figure}

\subsubsection{Additional Notes.}
\label{sec:numeric_pandora_quota_Additional Notes}
\revcolorm{Similar to \Cref{sec:numeric_pandora_parity_Additional Notes} we plot the normalized ex-post slack in \Cref{fig-apx:normal_v5_expost slack}, measured by the formula:
$$\frac{\theta(\sum_{i \in \ManSet} \select_i^{\policy})-(1-\theta)(\sum_{i \in \WomanSet} \select_i^{\policy})}{\theta(\sum_{i \in \ManSet} \select_i^{\policy})+(1-\theta)(\sum_{i \in \WomanSet} \select_i^{\policy})}.$$ 
As can be seen, for each $\theta$, the distribution has a low variance. Additionally, the reason the histogram of some of the smaller $\theta$ values are skewed toward left is due to the fact that the quota constraint becomes less binding as we decrease $\theta$, and the optimal unconstrained policy itself may be a feasible policy.
\begin{figure}[htb]
    \centering
    \includegraphics[width=0.6\textwidth]{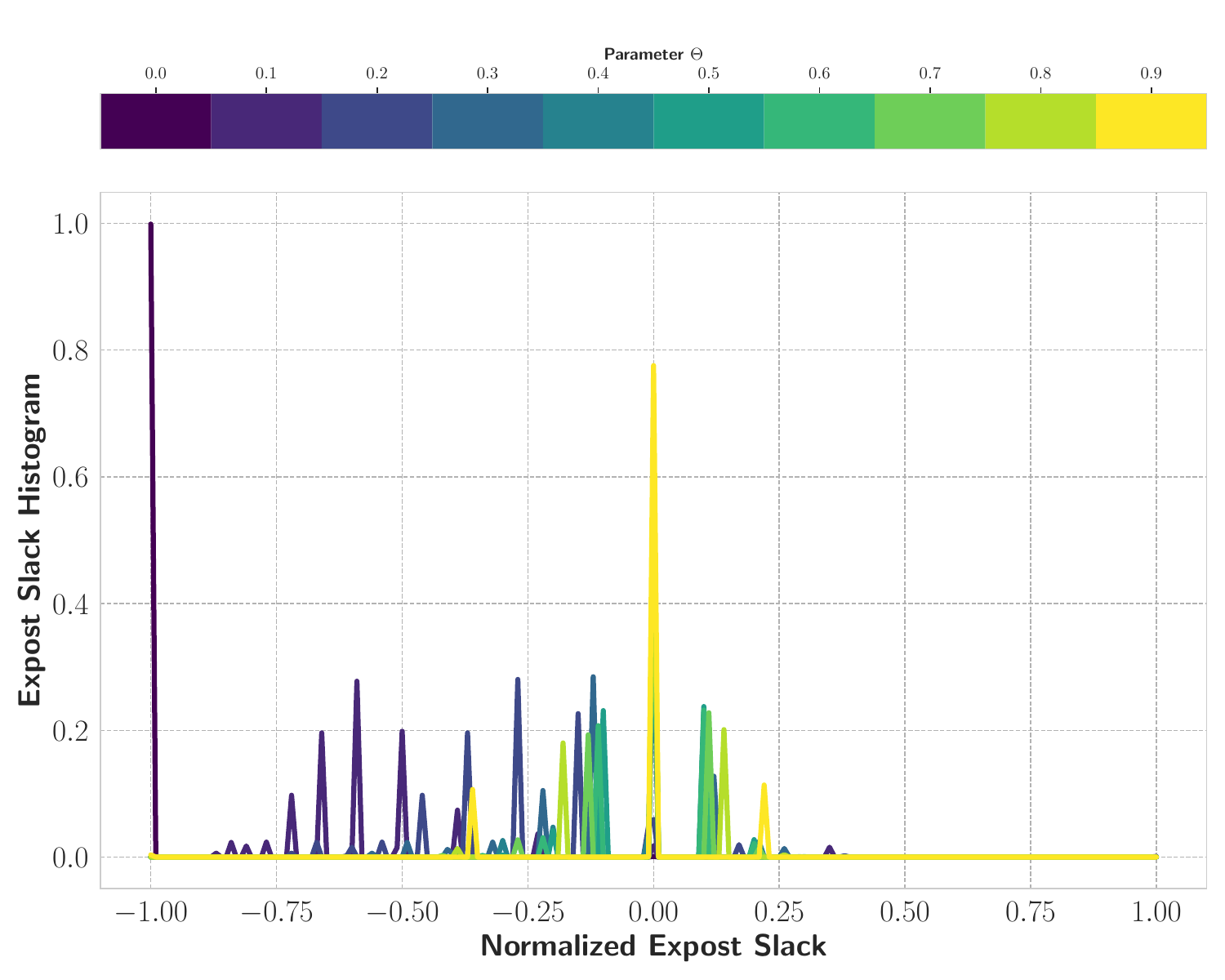}
    \caption{Histogram of normalized ex-post slack, Capacity = 20, Bias Factor = 0.7
    \label{fig-apx:normal_v5_expost slack}}
\end{figure}
}

\subsection{Average budget for subsidization}
\label{sec:numerical-budget}
We finally study the effect of the average budget to subsidize the hiring expenses of underprivileged applicants. To model this, we consider two groups of candidates $\WomanSet$ and $\ManSet$. Given an average budget $b$, we consider a variant of \eqref{eq:budget} in selection where the decision maker has to select no more than $b$ candidates from $\WomanSet$ in expectation, while it has to satisfy an overall ex-post capacity constraint $k$ among all individuals.


Clearly, any non-zero budget increases the search utility. Therefore, we define ``gain from budget'' as the ratio of the utility of the optimal constrained policy with the budget $b$ to that of the optimal policy without any budget --- which means that it cannot select anyone from group $\WomanSet$. See \Cref{fig:v4_CR_fair_unfair} to learn how the gain from the budget increases as a function of $b$, for various parameter choices for bias factor $\rho$ and capacity $k$. In \Cref{apx:numerical}, we further compare the utility of the optimal constrained policy with budget $b$ versus that of the optimal policy with unlimited budget (\Cref{fig-apx:v4_both_fair_unlimited}) as well as the optimal dual adjustment $\lambda^*$ (\Cref{fig-apx:v4_lambda_fair_unfair}), both as a function of the budget $b$.

\vspace{-1mm}
\begin{figure}[htb]
    \centering
    \begin{subfigure}[b]{0.32\textwidth}
        \centering
        \includegraphics[width=\textwidth]{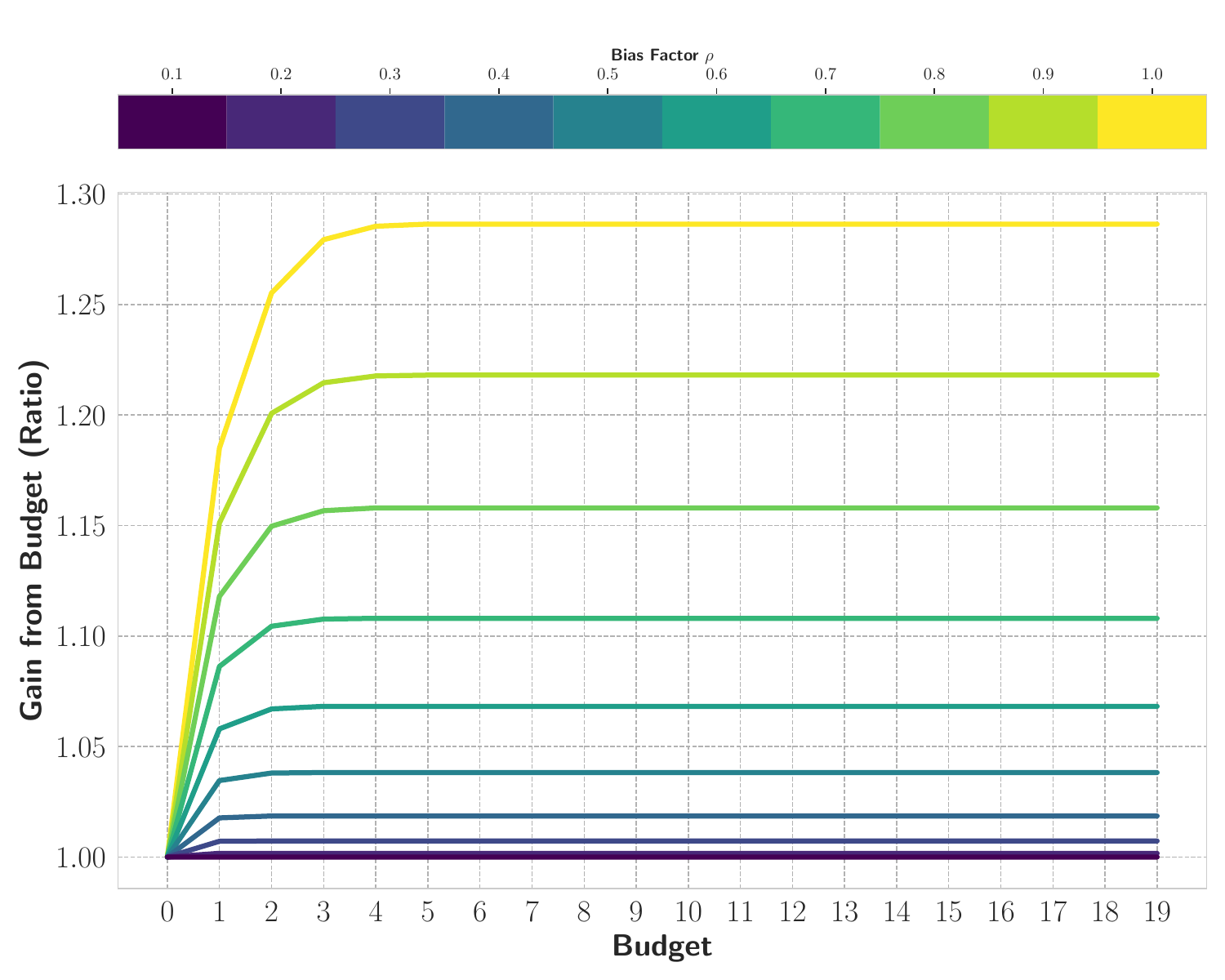}
        \caption{  Capacity = 5}
    \end{subfigure}
    \hfill
        \begin{subfigure}[b]{0.32\textwidth}
        \centering
        \includegraphics[width=\textwidth]{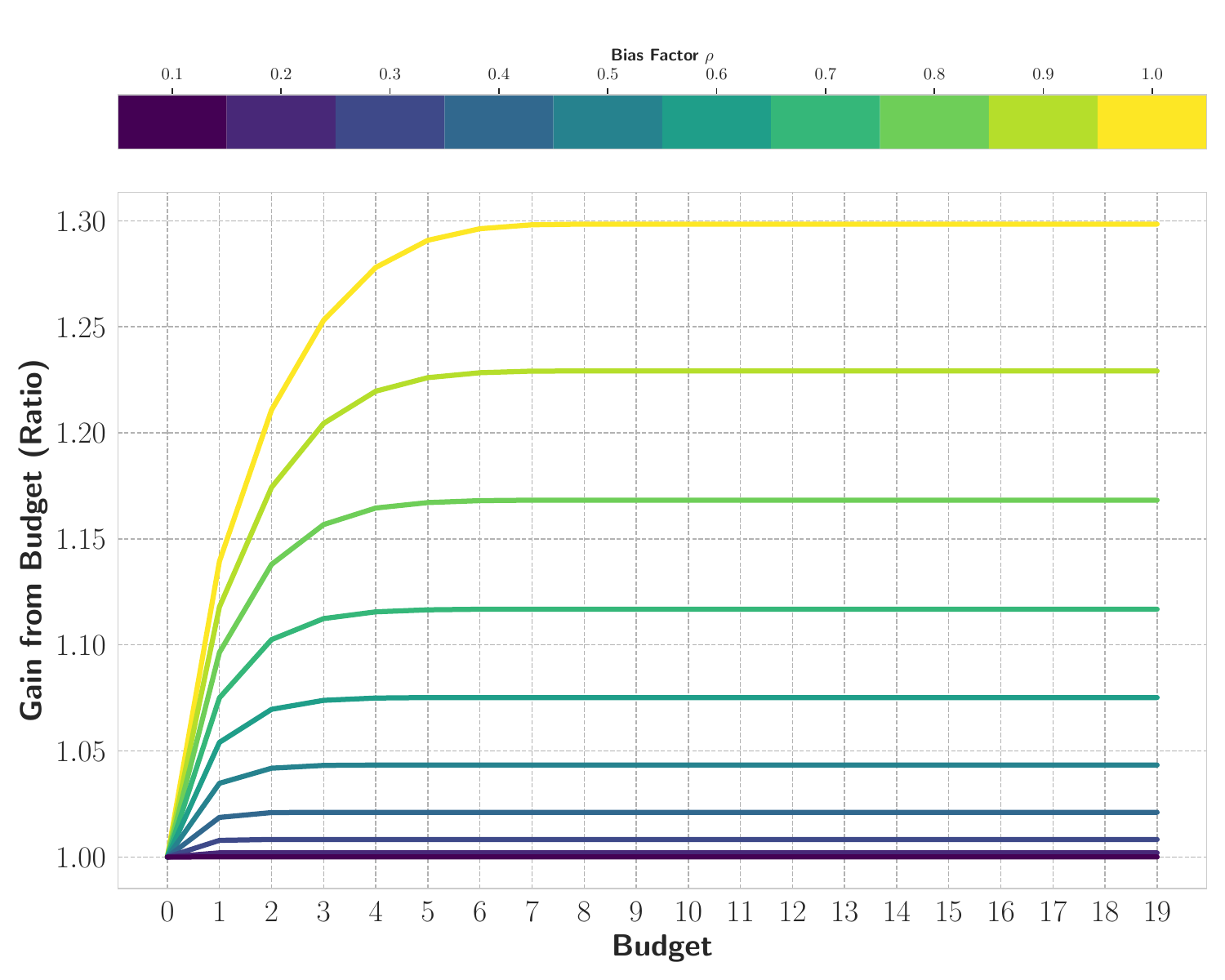}
        \caption{  Capacity = 10}
    \end{subfigure}
    \hfill
    \begin{subfigure}[b]{0.32\textwidth}
        \centering
        \includegraphics[width=\textwidth]{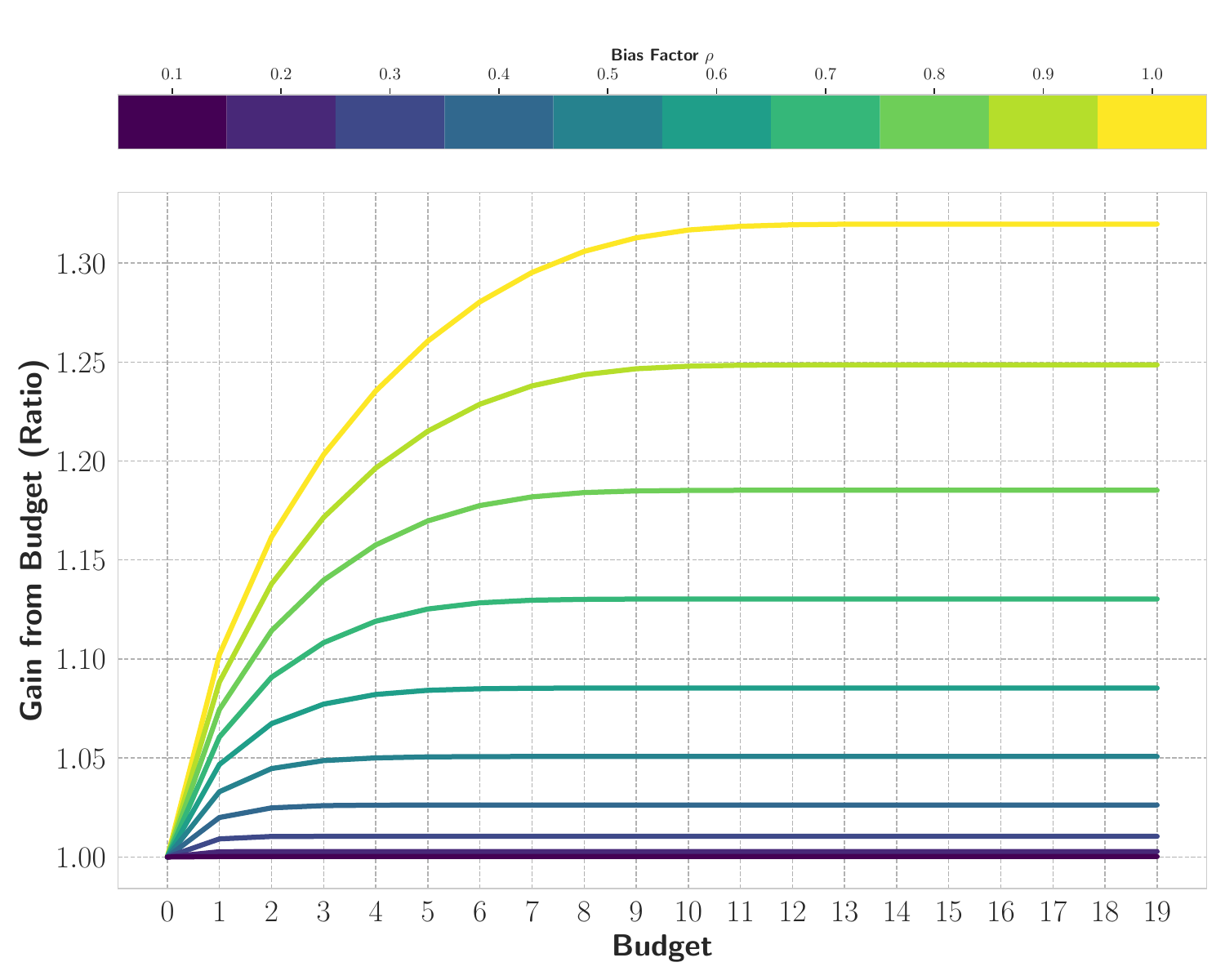}
        \caption{  Capacity = 20}
    \end{subfigure}
    \caption{The the gain from budget, i.e., the ratio of utilities of optimal constrained policy with a given average budget to the optimal policy with no budget,  for the average budget in selection subsidization (with biased values) problem, for different capacities $\boldsymbol{k}$ and bias factors $\boldsymbol{\rho}$, a function of average budget $\boldsymbol{b}$ on excepted selections from group $\boldsymbol{\WomanSet}$.}
    \label{fig:v4_CR_fair_unfair}
\end{figure}

Given the above setup, we can also ask an intriguing question. How valuable is the average budget? In particular, starting from the initial total capacity $k$ and the average budget $b$ for hiring from the group $\WomanSet$, how do we compare the utility gain from employing one additional unit of capacity with the utility gain derived from the increase of the average budget by one? In \Cref{tab:my_label} we try to answer this question. Each entry corresponds to a pair $(k,b)$, where $k$ is the capacity and $b$ is the average budget under the current system. The number written in each entry is the gain from a unit increase in the budget minus the gain from a unit increase the capacity, where gain is defined in terms of the ratio of utilities of optimal budget-constrained to optimal without budget. Our results suggest that (i) for small values of the current budget, the value of one extra unit of the average budget is considerably more than one additional unit of capacity, e.g., see the column corresponding to $b=0$ or $b=1$; (ii) the extra gain for each unit of budget is decreasing in $b$. Combining these two observations, we conclude that a little bit of average budget can go a long way --- not only does it help with more representation from the underprivileged group $\WomanSet$, but also it  allows selections from (potentially top) members of group $\WomanSet$ and leading to increasing the overall efficiency.

\vspace{-2mm}

\begin{table}[htb]
    \centering
     \footnotesize
    \begin{tabular}{|c||*{5}{c|}}\hline
        \backslashbox{$k$}{$b$}
        &\makebox[3em]{0}&\makebox[3em]{1}&\makebox[3em]{2}
        &\makebox[3em]{3}&\makebox[3em]{4}\\\hline\hline
        1 & 0.291 & 0.040 & 0.003 & 0.003 & 0.003\\ \hline
        2 & 0.251 & 0.067 & 0.009 & 0.001 & 0.001\\ \hline
        3 & 0.221 & 0.079 & 0.021 & 0.004 & 0.001\\ \hline
        4 & 0.201 & 0.081 & 0.028 & 0.007 & 0.001\\ \hline
        5 & 0.185 & 0.082 & 0.034 & 0.010 & 0.002\\ \hline
        6 & 0.173 & 0.082 & 0.039 & 0.015 & 0.003\\ \hline
        7 & 0.163 & 0.082 & 0.043 & 0.019 & 0.005\\ \hline
        8 & 0.154 & 0.081 & 0.047 & 0.024 & 0.009\\ \hline
        9 & 0.146 & 0.079 & 0.049 & 0.028 & 0.013\\ \hline
        10 & 0.139 & 0.077 & 0.049 & 0.031 & 0.016\\ \hline
    \end{tabular}
    \smallskip
    \caption{Gain of extra budget - Gain of extra capacity in terms of price of fairness.}
    \label{tab:my_label}
\end{table}


%

\subsubsection{Additional Notes.}
\label{sec:numeric_pandora_budget_Additional Notes}

\revcolorm{Similar to \Cref{sec:numeric_pandora_parity_Additional Notes} we plot the normalized ex-post slack in \Cref{fig-apx:normal_v4_expost slack}, measured by the formula:
$$\frac{\sum_{i \in \WomanSet} \select_i^{\policy}-b}{\sum_{i \in \WomanSet} \select_i^{\policy}+b}.$$ 
As can be seen, for each budget $b$, the distribution has a low variance. Additionally, the reason the histogram of some of the larger budgets $b$ are skewed toward left is due to the fact that the budget constraint becomes less binding as we increase the budget $b$, and the optimal unconstrained policy itself may be a feasible policy.
\begin{figure}[htb]
    \centering
    \includegraphics[width=0.55\textwidth]{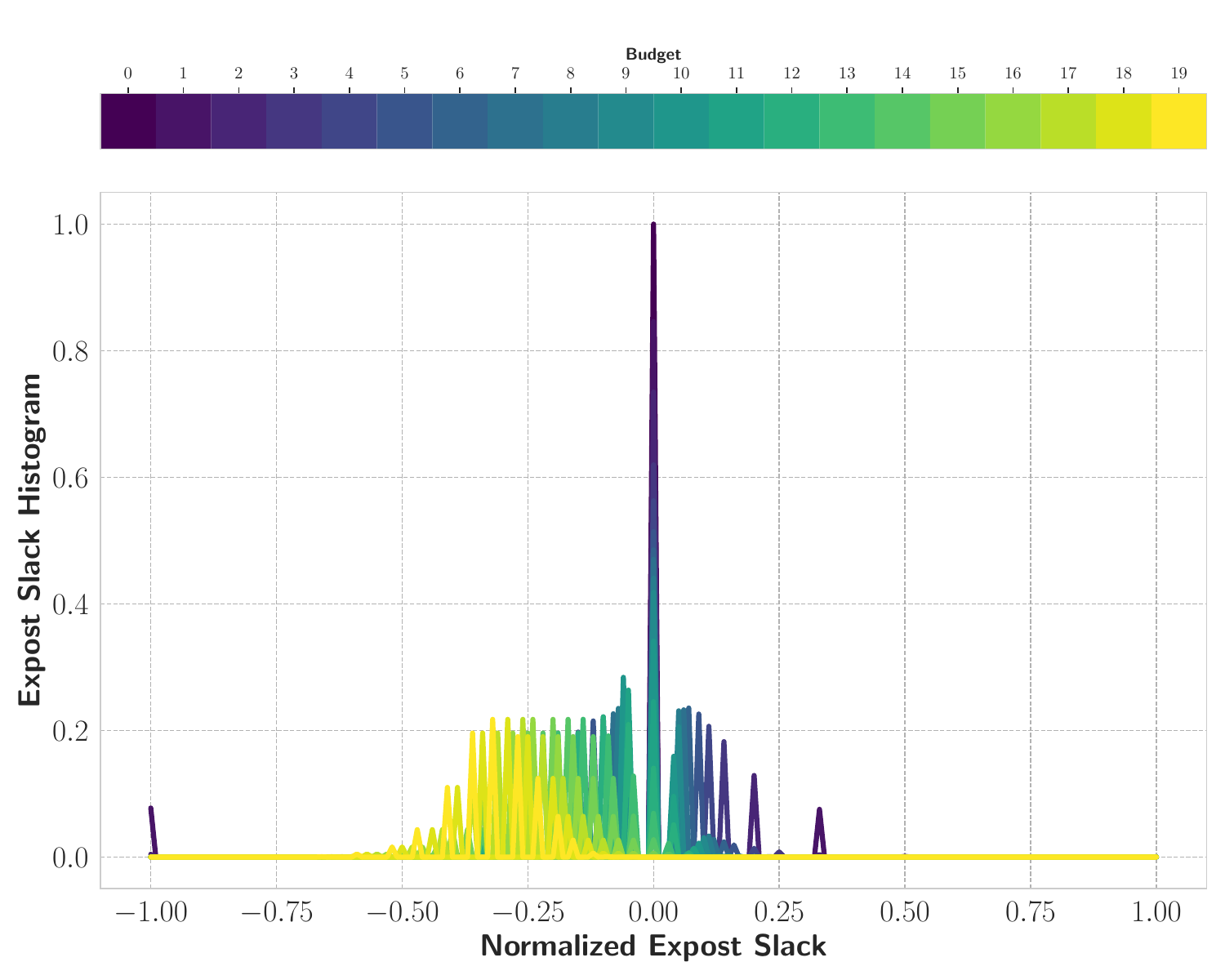}
    \caption{Histogram of normalized ex-post slack, Capacity = 20, Bias Factor = 1.0
    \label{fig-apx:normal_v4_expost slack}
    }
\end{figure}
}
\vspace{0mm}
\subsection{Missing Figures and Discussions of 
\texorpdfstring{\Cref{apx:numerical-main}}{}}
\label{apx:numerical}
We provide all the missing figures in our numerical simulations \revcolorm{for Pandora's box under normal values here}.
\begin{figure}[htb]
    \centering
    \begin{subfigure}[b]{0.32\textwidth}
        \centering
        \includegraphics[width=\textwidth]{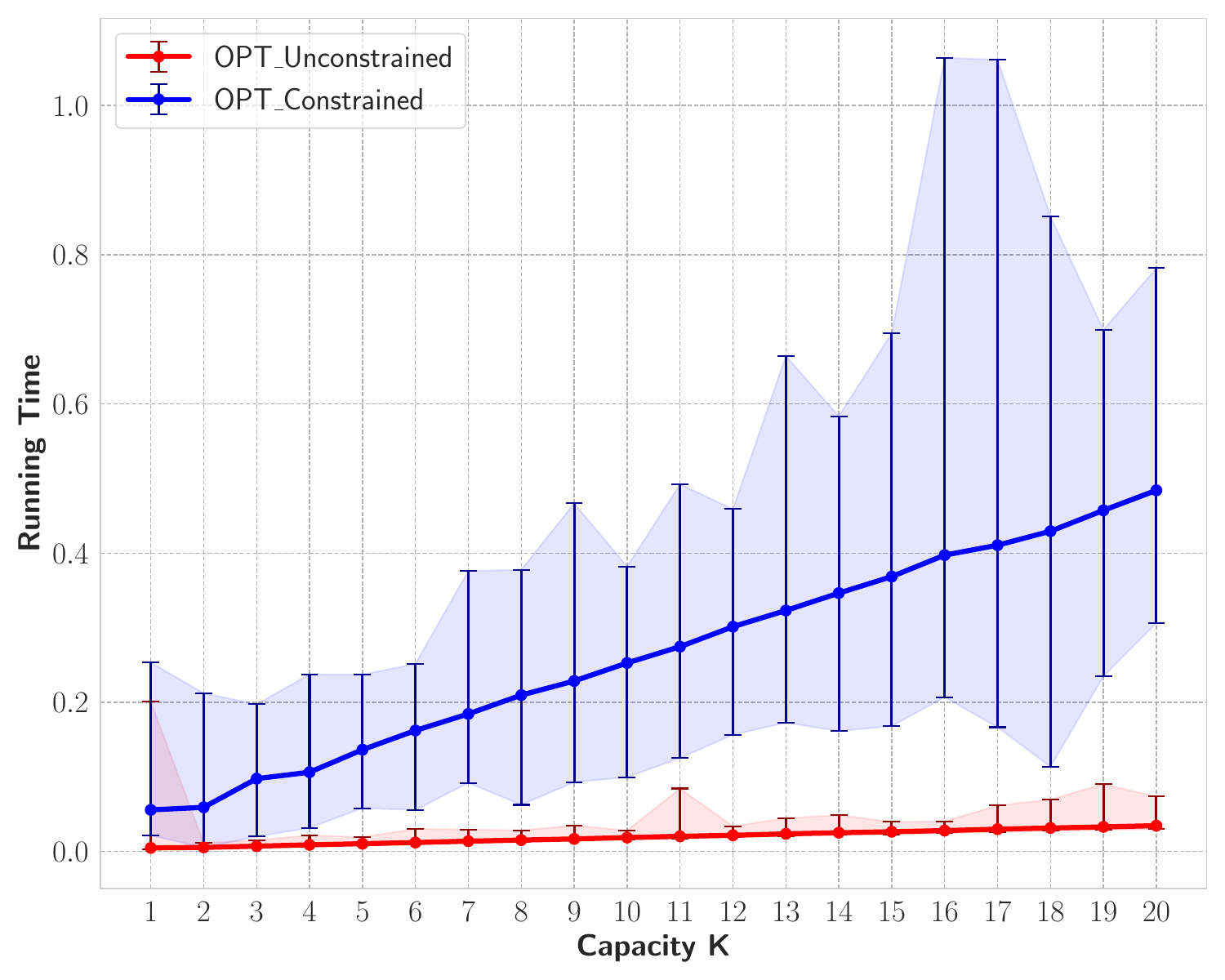}
        \caption{   Bias Factor = 0.3}
    \end{subfigure}
    \hfill
        \begin{subfigure}[b]{0.32\textwidth}
        \centering
        \includegraphics[width=\textwidth]{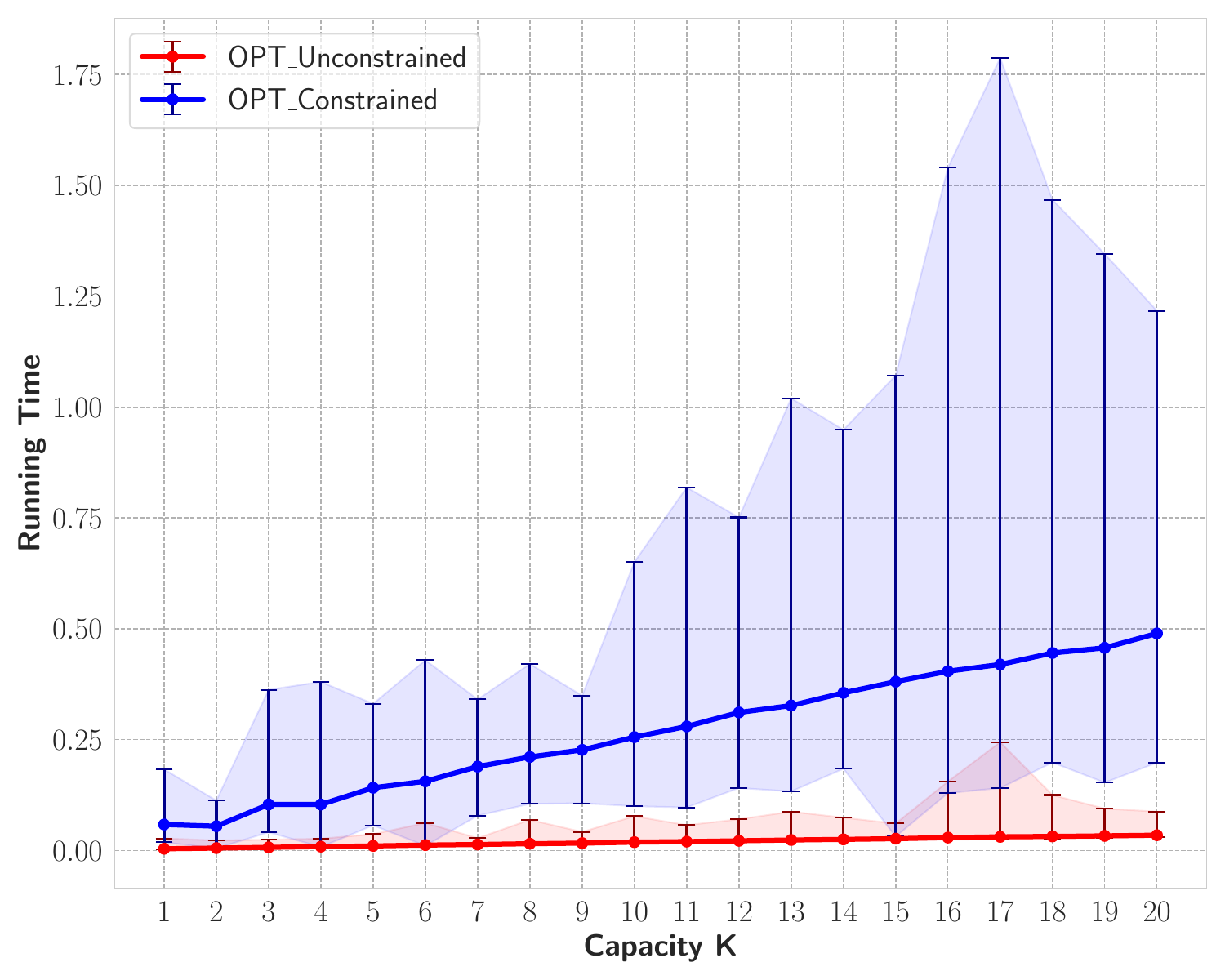}
        \caption{   Bias Factor = 0.7}
    \end{subfigure}
    \hfill
        \begin{subfigure}[b]{0.32\textwidth}
        \centering
        \includegraphics[width=\textwidth]{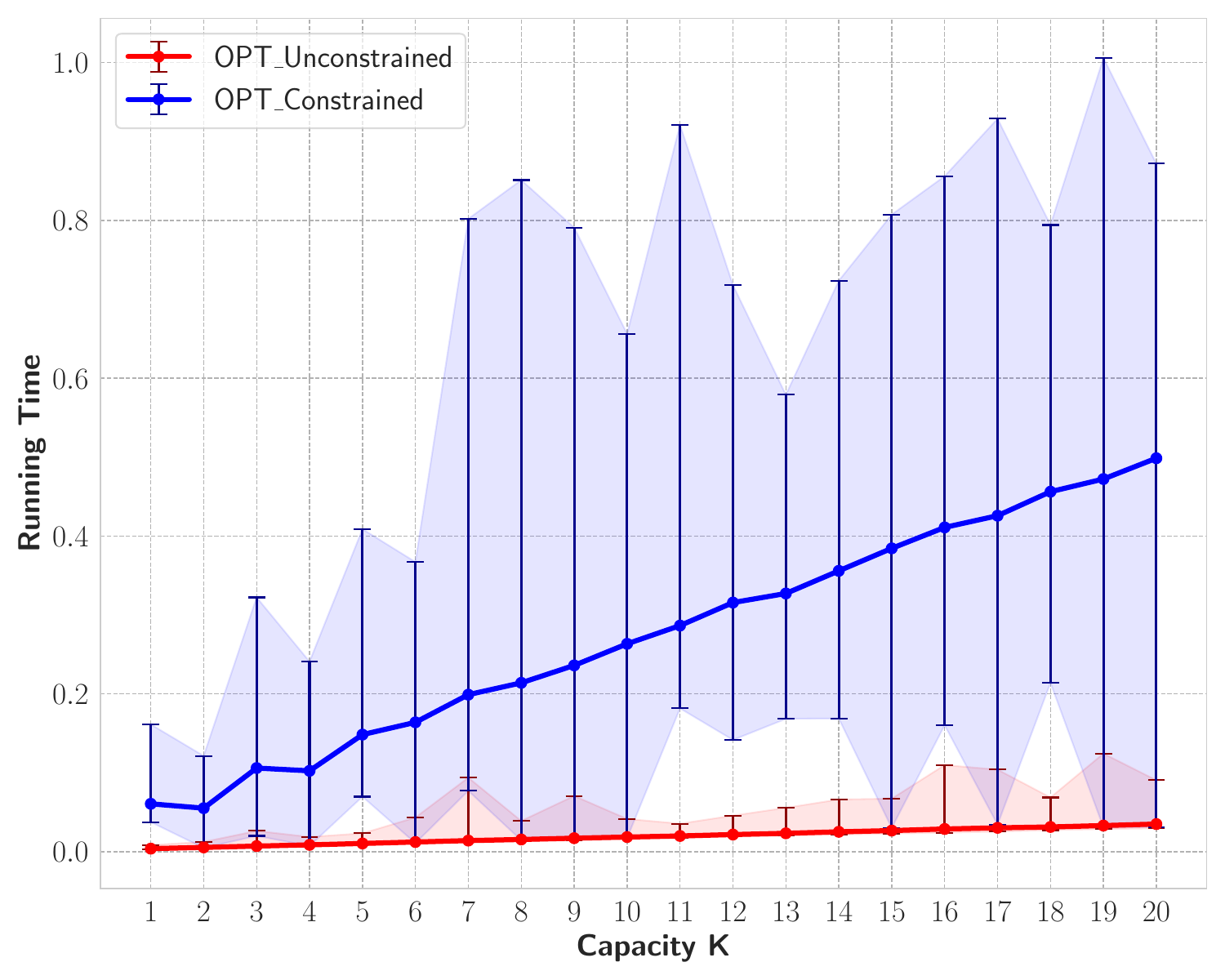}
        \caption{   Bias Factor = 1.0}
    \end{subfigure}
    \caption{Comparison of running times of optimal constrained and unconstrained policies (in seconds).}
    \label{fig-apx:v1_Running_Time}
\end{figure}

\vspace{-4mm}
\begin{figure}[htb]
    \centering
    \begin{subfigure}[b]{0.45\textwidth}
        \centering
        \includegraphics[width=\textwidth]{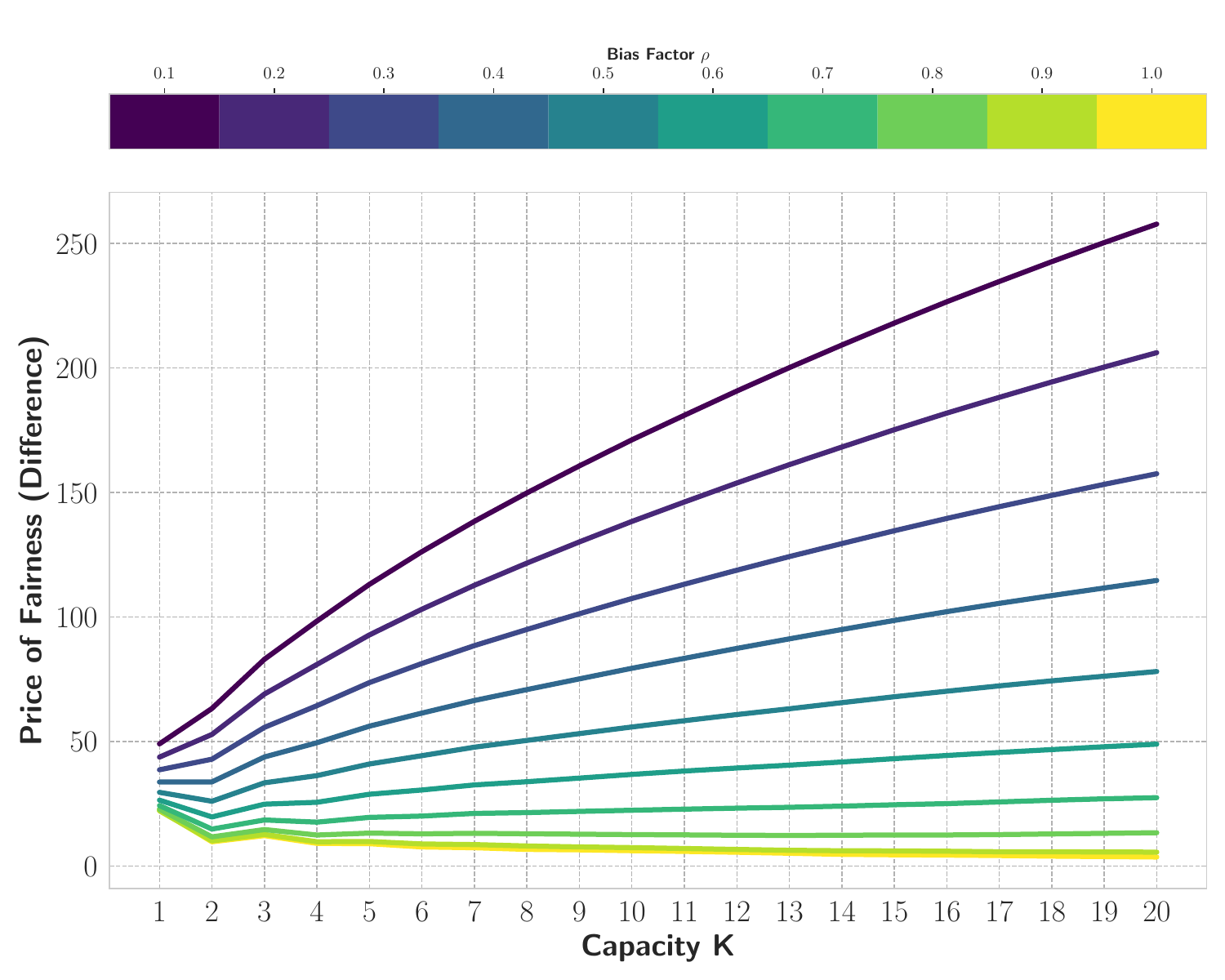}
         \caption{}
    \end{subfigure}
    \quad\quad\quad
    \begin{subfigure}[b]{0.45\textwidth}
        \centering
        \includegraphics[width=\textwidth]{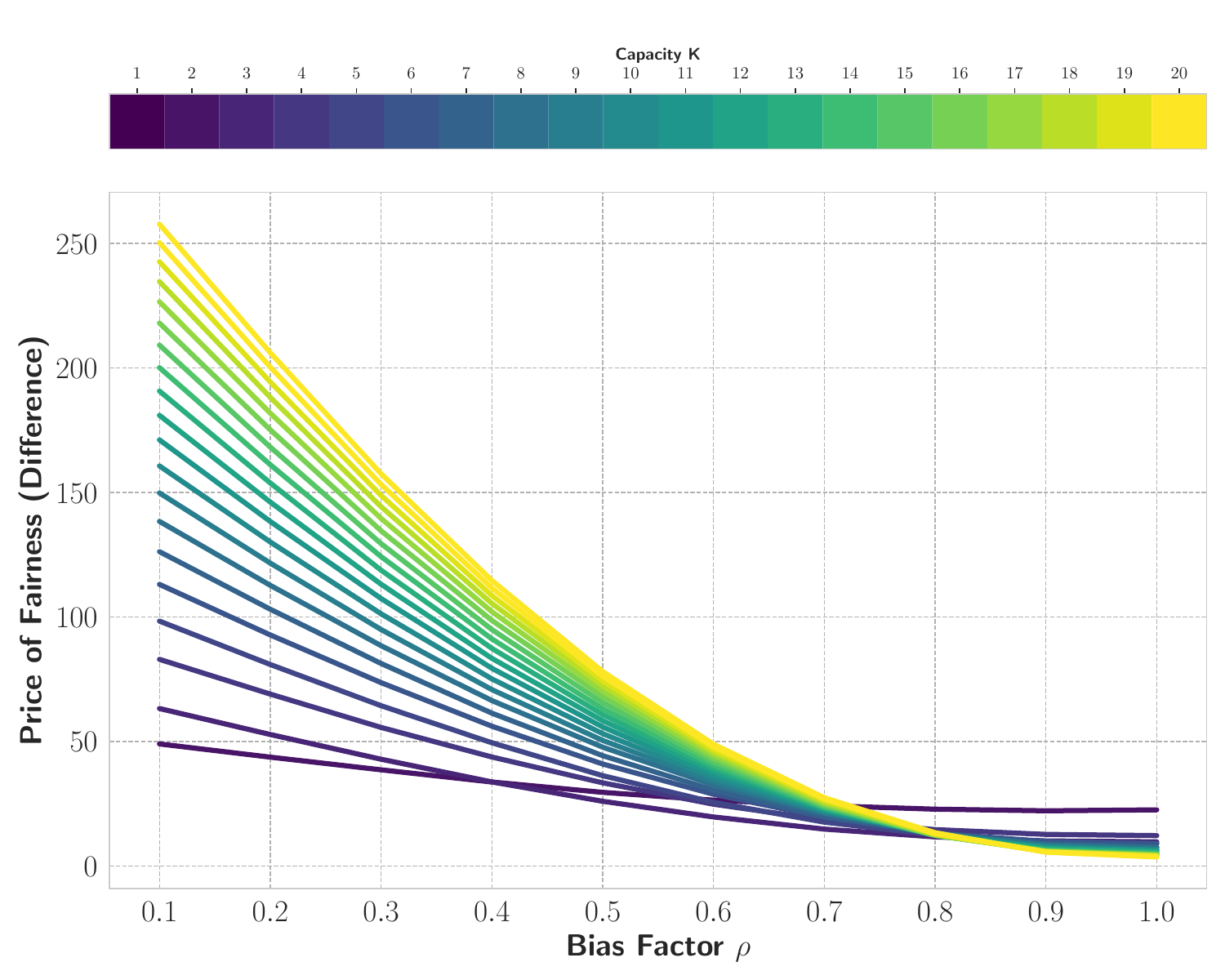}
         \caption{}
    \end{subfigure}
    \caption{Short-term price of fairness of demographic parity in selection in terms of utility differences; (a) as a function of capacity $\boldsymbol{k}$, and (b) as a function of bias factor $\boldsymbol{\rho}$.}
    \label{fig-apx:v1_diff_and_CR_fair_unfair}
\end{figure}
\begin{figure}[htb]
    \centering
    \begin{subfigure}{0.45\textwidth}
        \centering
        \includegraphics[width=\textwidth]{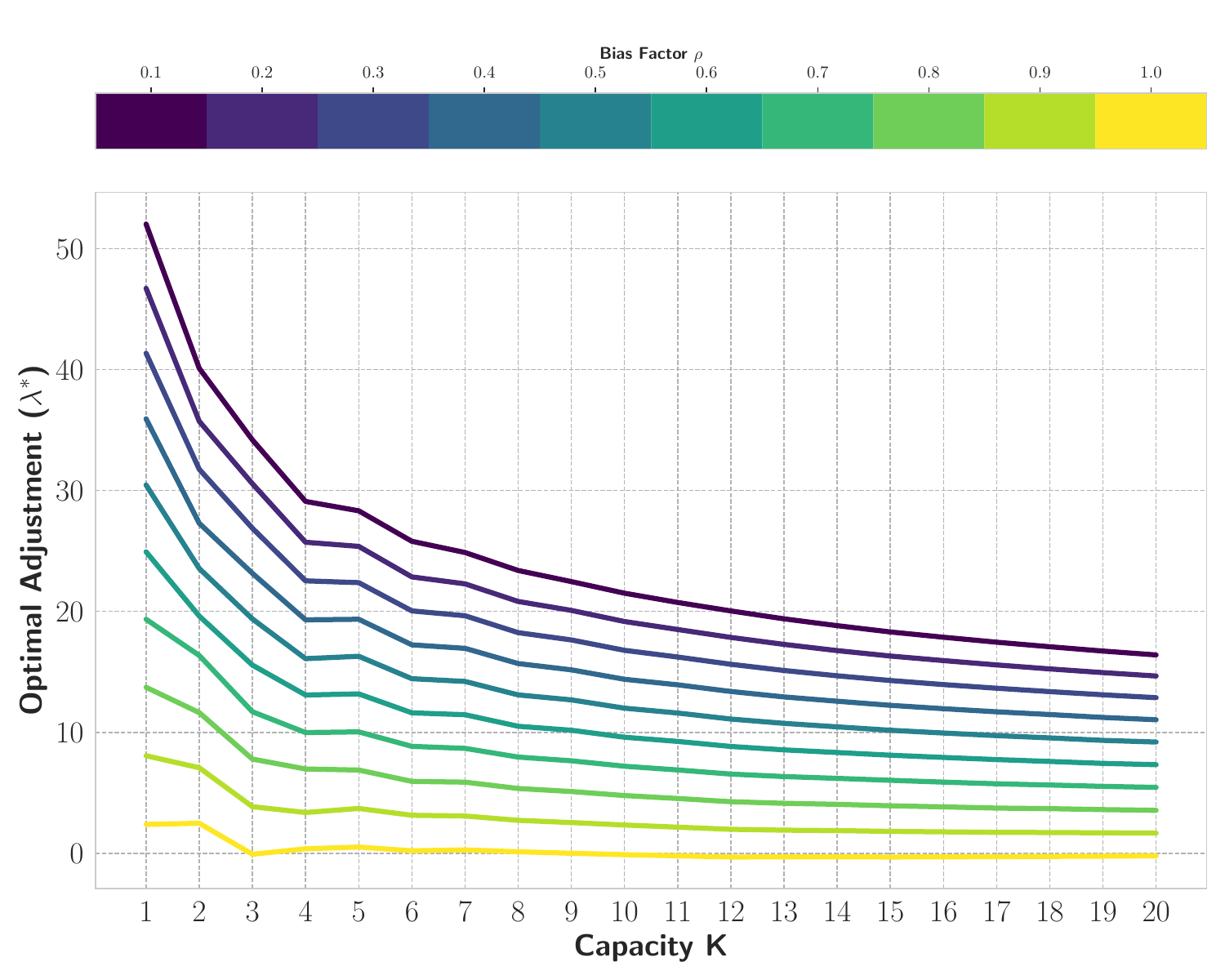}
        \caption{}
    \end{subfigure}
       \quad\quad\quad
    \begin{subfigure}{0.45\textwidth}
        \centering
        \includegraphics[width=\textwidth]{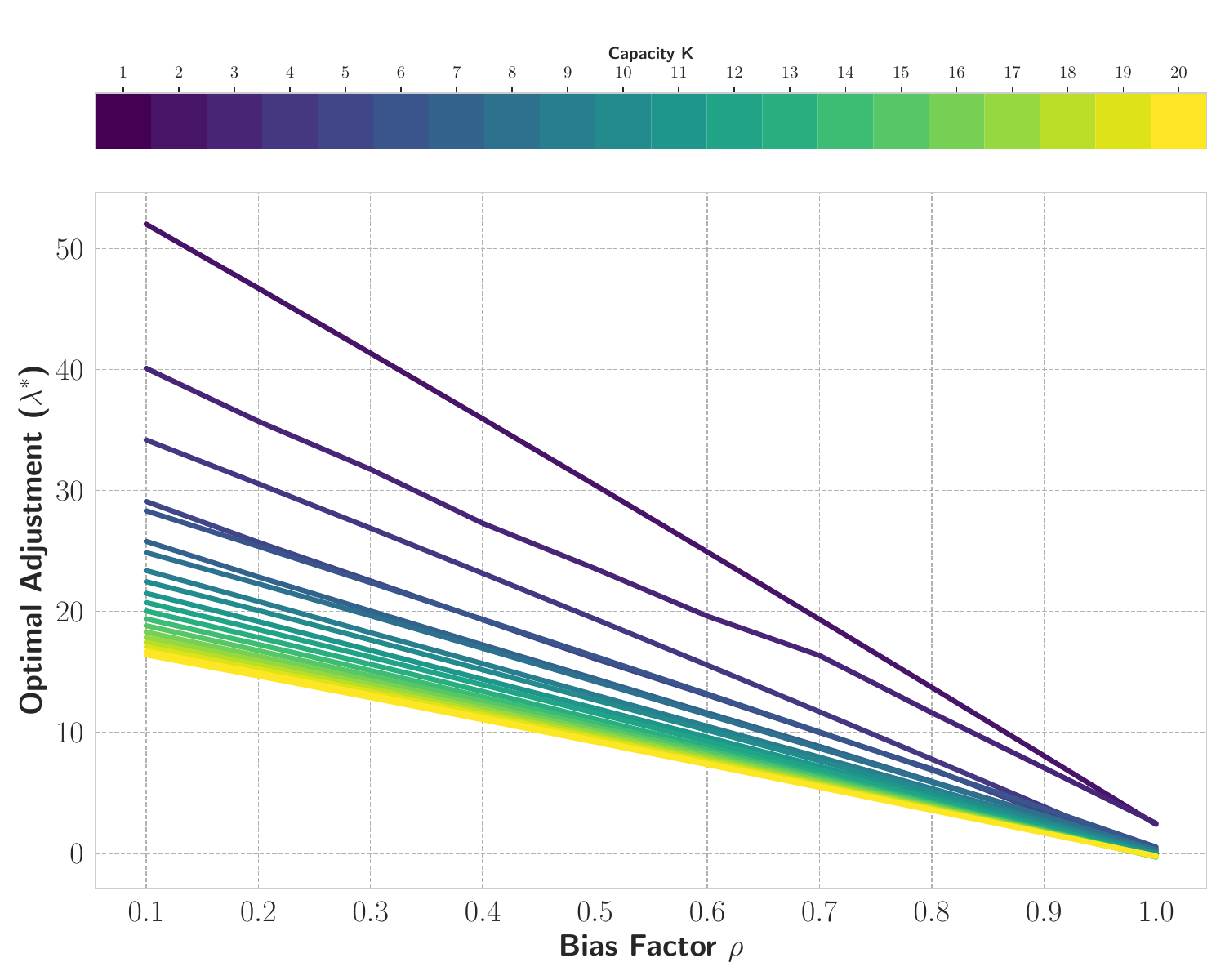}
        \caption{}
    \end{subfigure}
    \caption{The optimal dual adjustment $\boldsymbol{\lambda^*}$ in demographic parity in selection; (a) as a function of capacity $\boldsymbol{k}$, and (b) as a function of bias factors $\boldsymbol{\rho}$.}
    \label{fig-apx:v1_slack_lambda_fair_unfair}
\end{figure}


\begin{figure}[htb]
    \centering
    \begin{subfigure}[b]{0.32\textwidth}
        \centering
        \includegraphics[width=\textwidth]{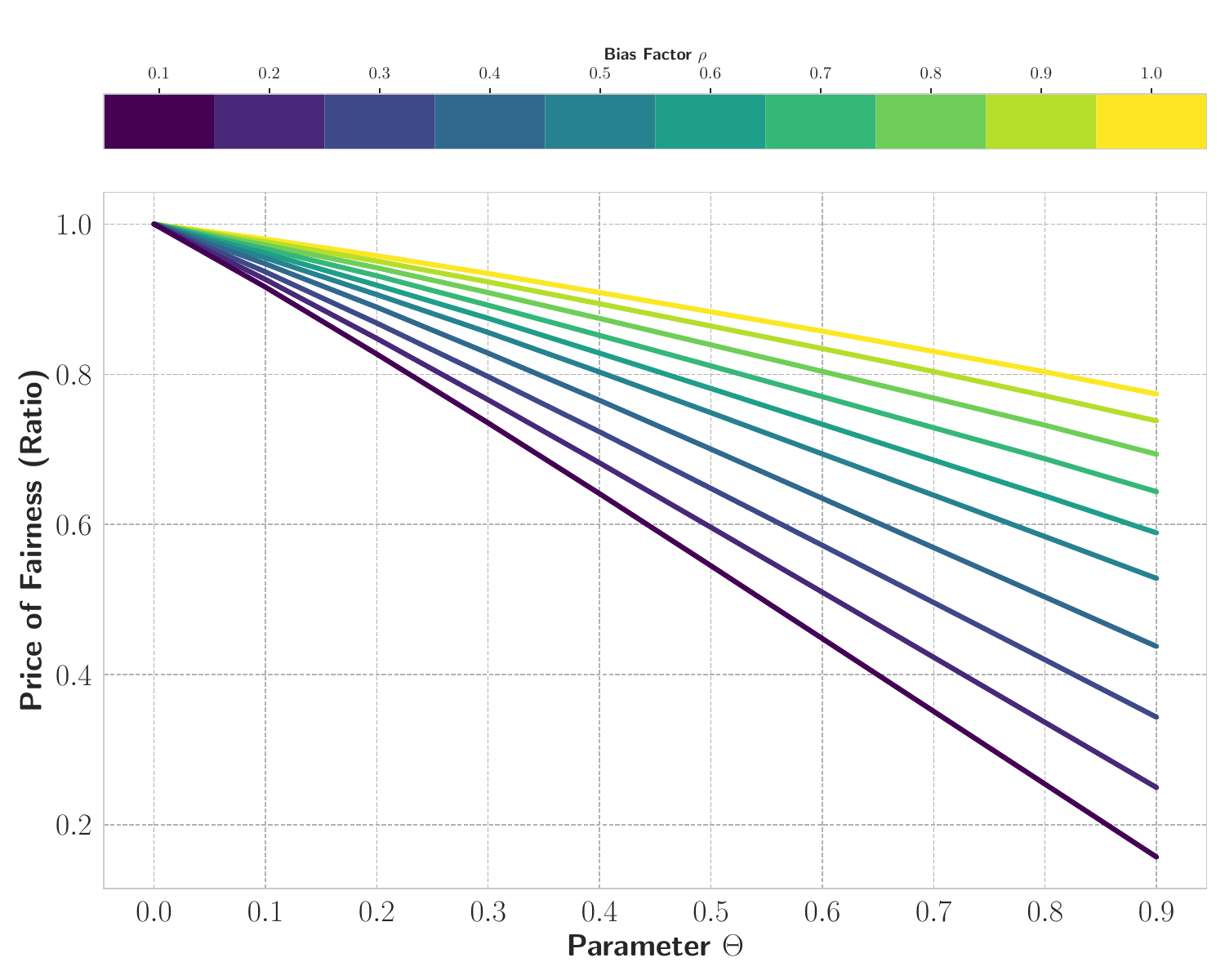}
        \caption{   Capacity = 1}
    \end{subfigure}
    \hfill
    \begin{subfigure}[b]{0.32\textwidth}
        \centering
        \includegraphics[width=\textwidth]{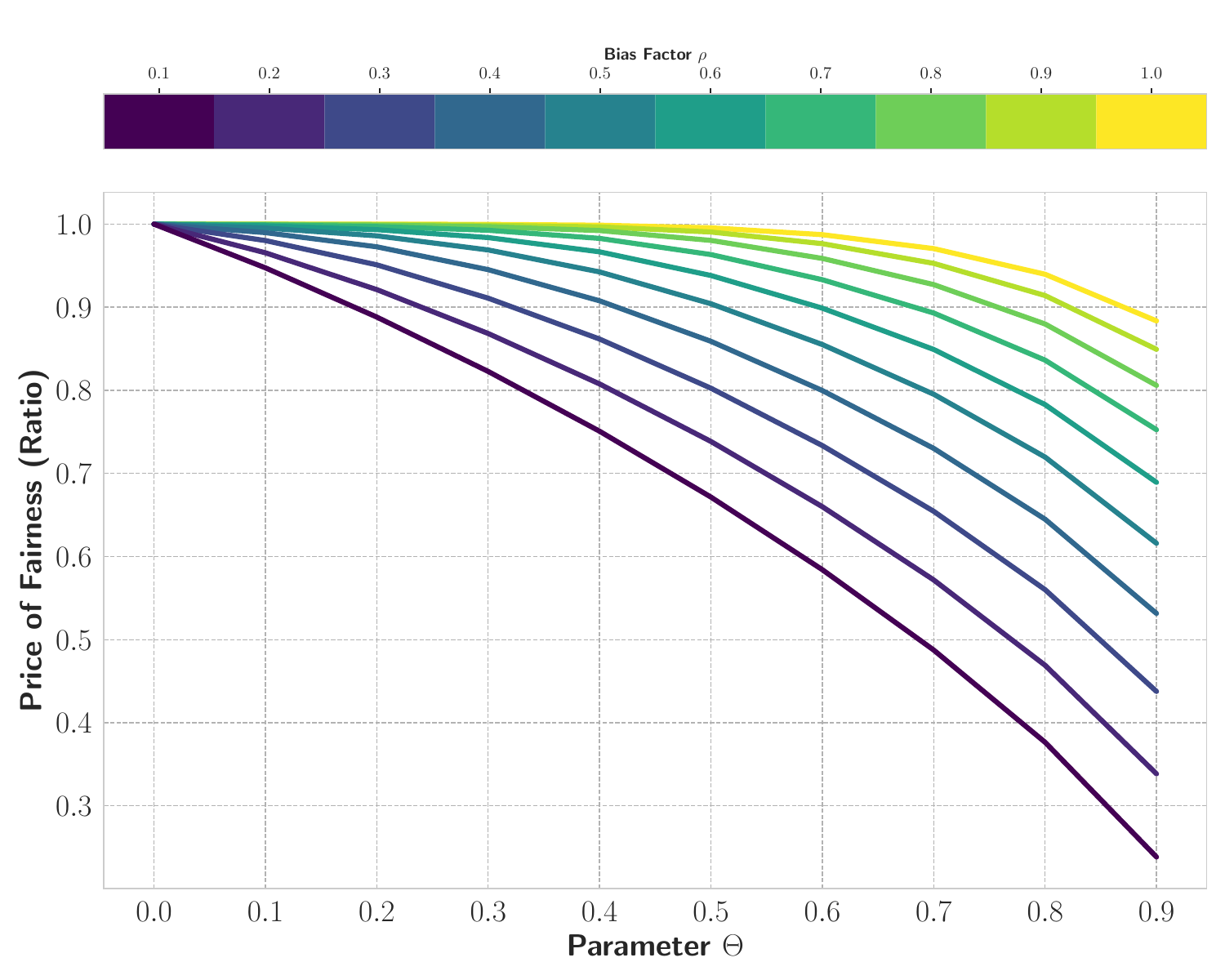}
        \caption{   Capacity = 10}
    \end{subfigure}
    \hfill
    \begin{subfigure}[b]{0.32\textwidth}
        \centering
        \includegraphics[width=\textwidth]{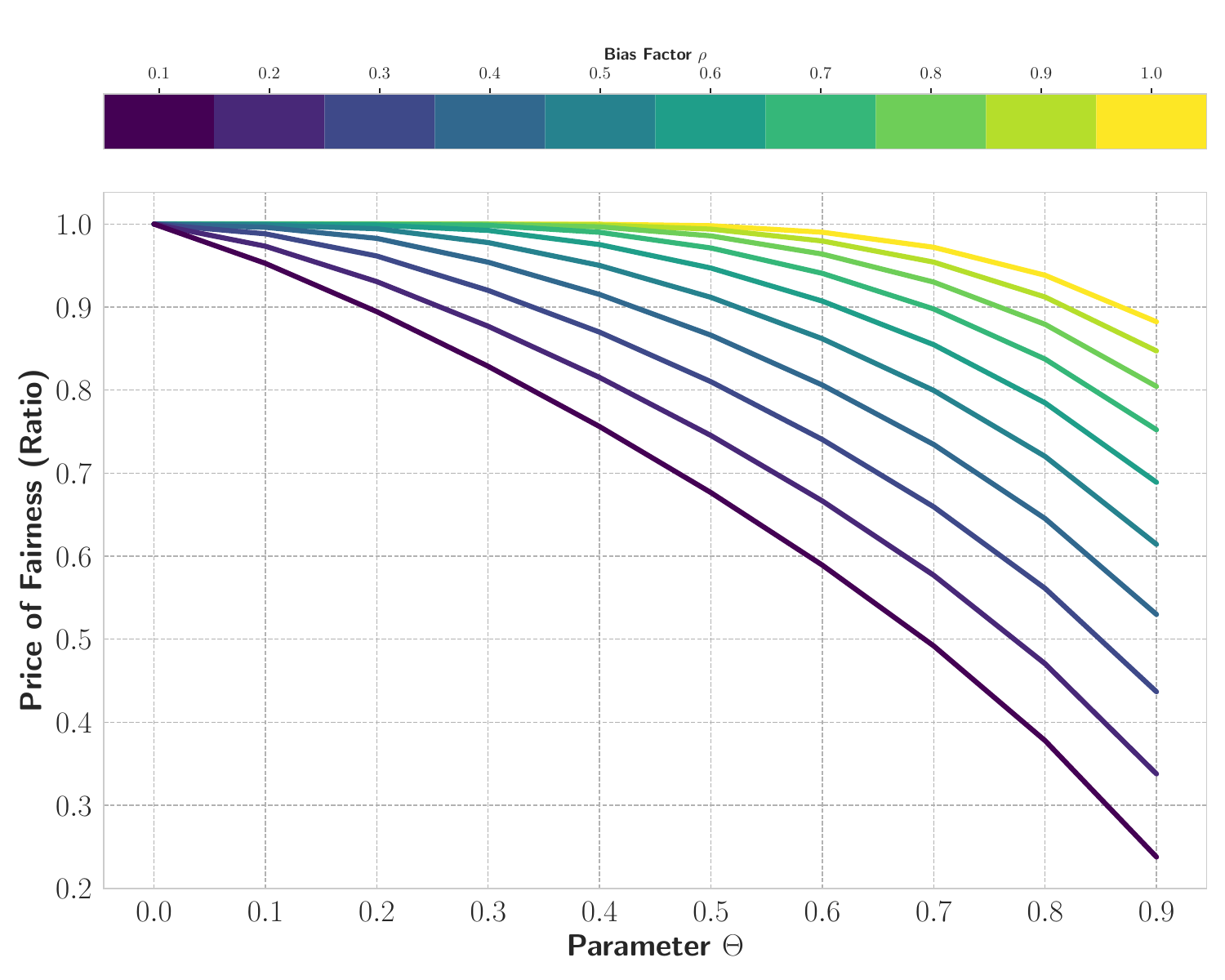}
        \caption{   Capacity = 20}
    \end{subfigure}
    \caption{Short-term price of fairness of average quota in selection  in terms of utility ratio, as a function of quota parameter $\boldsymbol{\theta\in[0,1]}$ in \eqref{eq:quota} ($\boldsymbol{\theta=0.5}$ corresponds to demographic parity).}
    \label{fig-apx:v5_CR_fair_unfair}
\end{figure}

\begin{figure}[htb]
    \centering
    \begin{subfigure}[b]{0.32\textwidth}
        \centering
        \includegraphics[width=\textwidth]{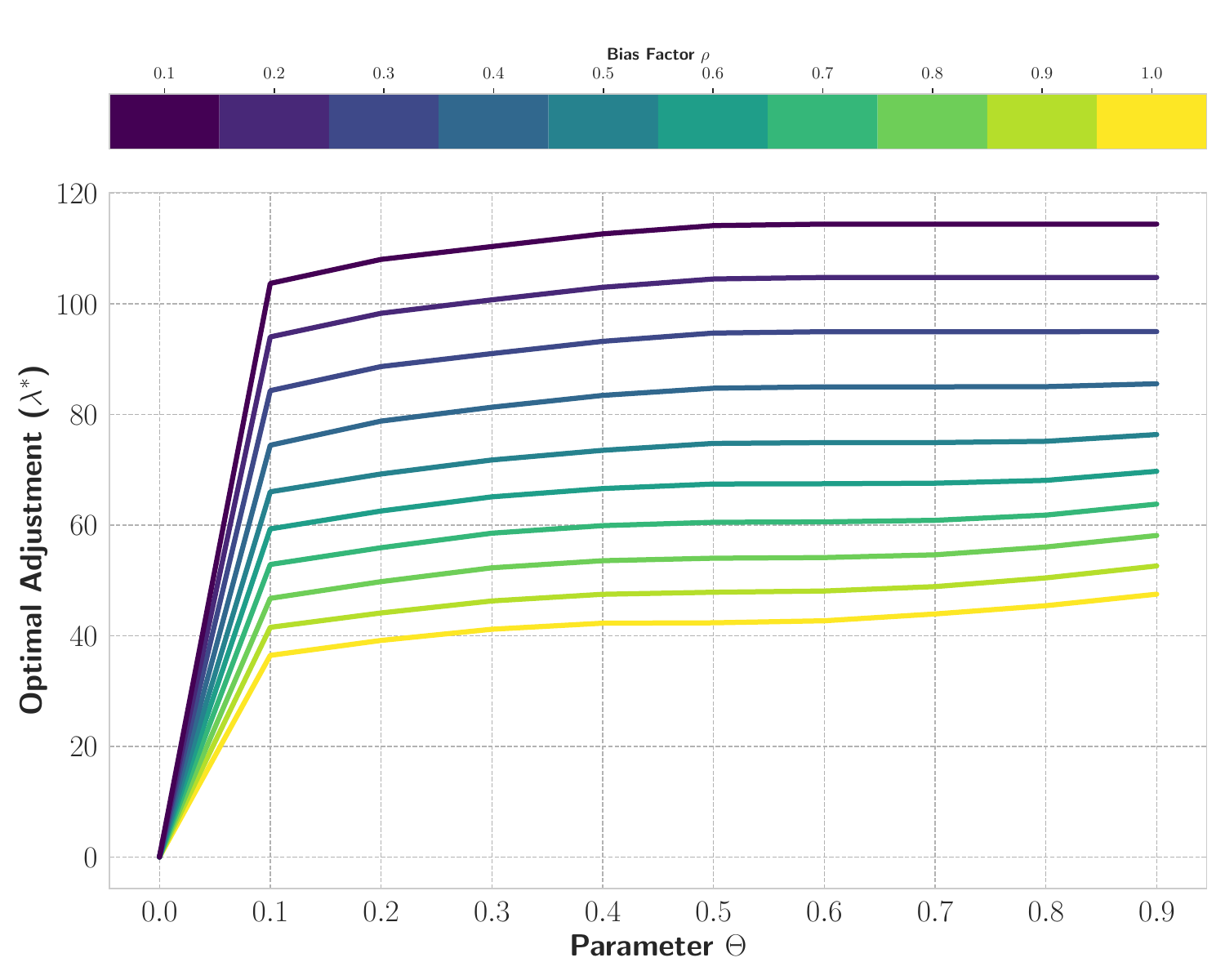}
        \caption{   Capacity = 1}
    \end{subfigure}
    \hfill
    \begin{subfigure}[b]{0.32\textwidth}
        \centering
        \includegraphics[width=\textwidth]{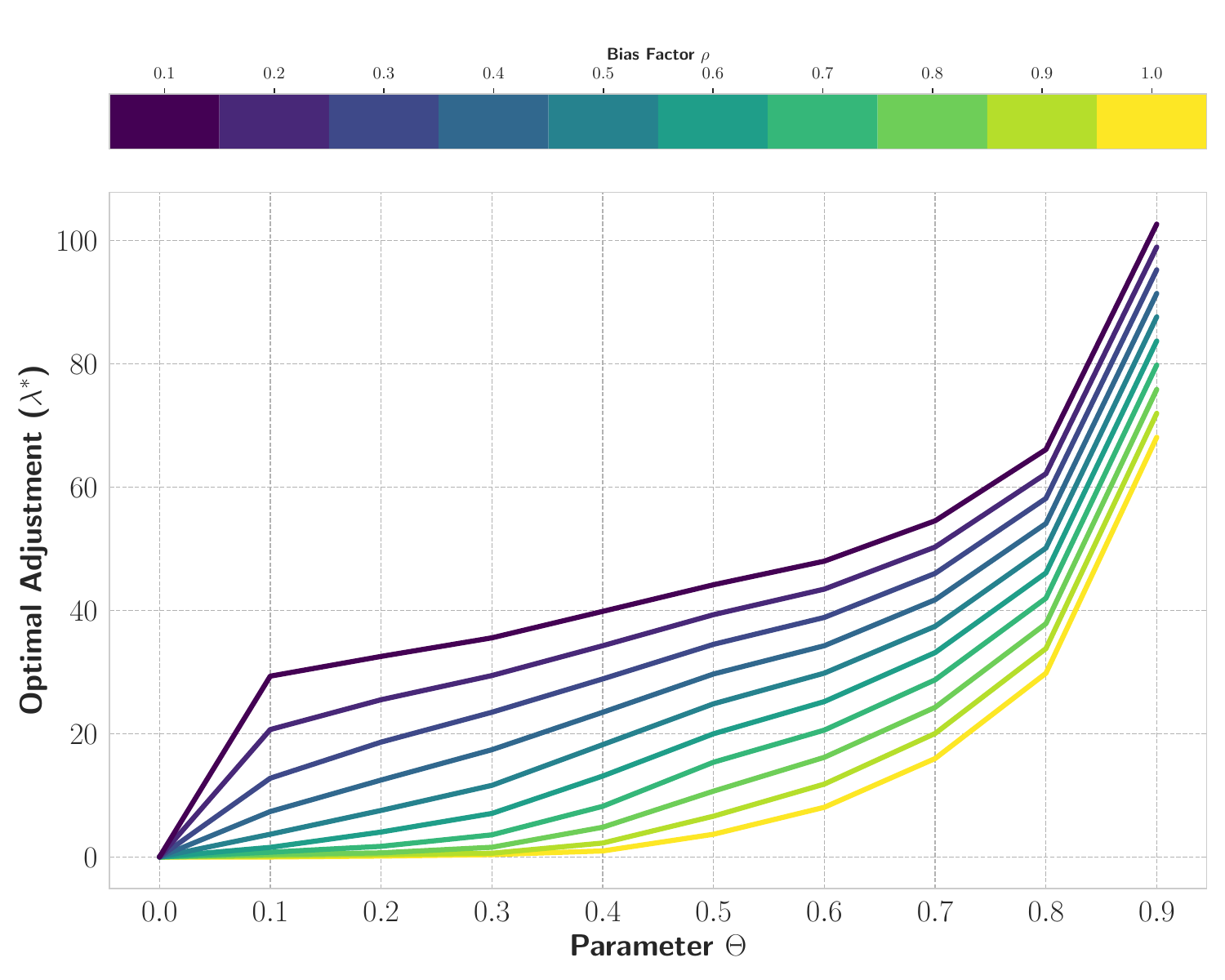}
        \caption{   Capacity = 10}
    \end{subfigure}
    \hfill
    \begin{subfigure}[b]{0.32\textwidth}
        \centering
        \includegraphics[width=\textwidth]{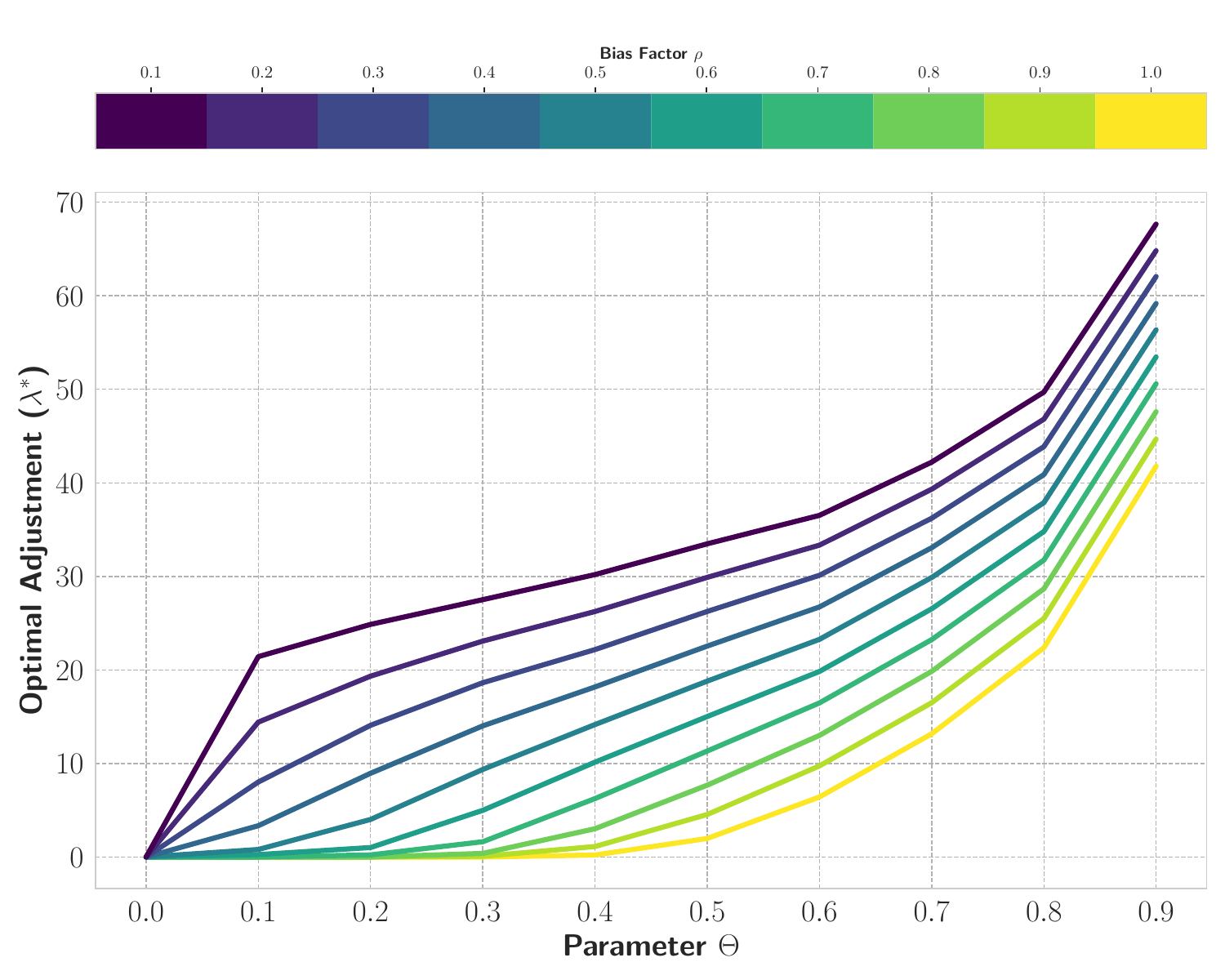}
        \caption{   Capacity = 20}
    \end{subfigure}
    \caption{The optimal dual adjustment $\boldsymbol{\lambda^*}$ for quota in selection constraint, as a function of quota parameter $\boldsymbol{\theta\in[0,1]}$ in \eqref{eq:quota} ($\boldsymbol{\theta=0.5}$ corresponds to demographic parity).}
    \label{fig-apx:v5_lambda_fair_unfair}
\end{figure}

\begin{figure}[htb]
    \centering
    \begin{subfigure}[b]{0.32\textwidth}
        \centering
        \includegraphics[width=\textwidth]{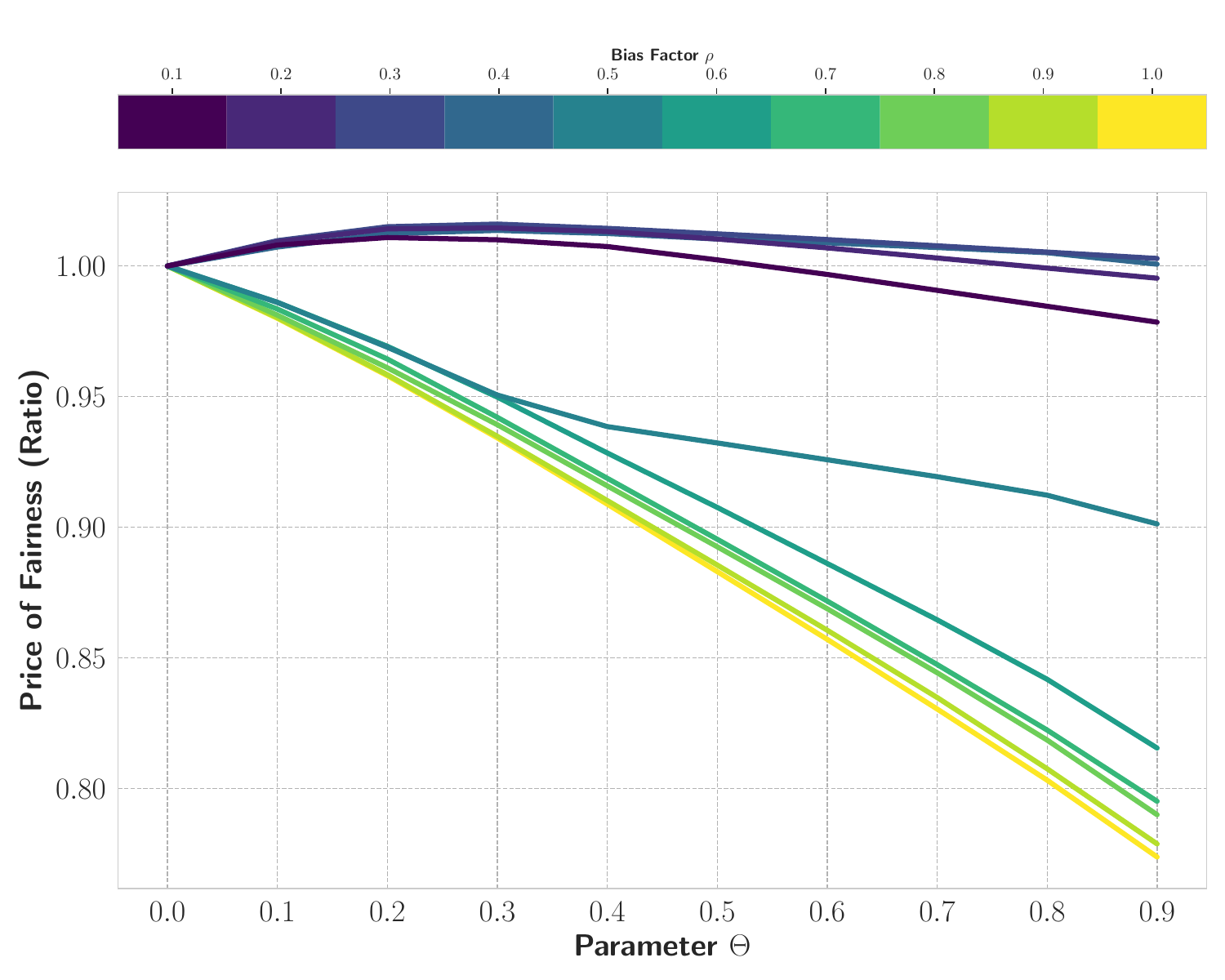}
        \caption{Capacity = 1}
    \end{subfigure}
    \begin{subfigure}[b]{0.32\textwidth}
        \centering
        \includegraphics[width=\textwidth]{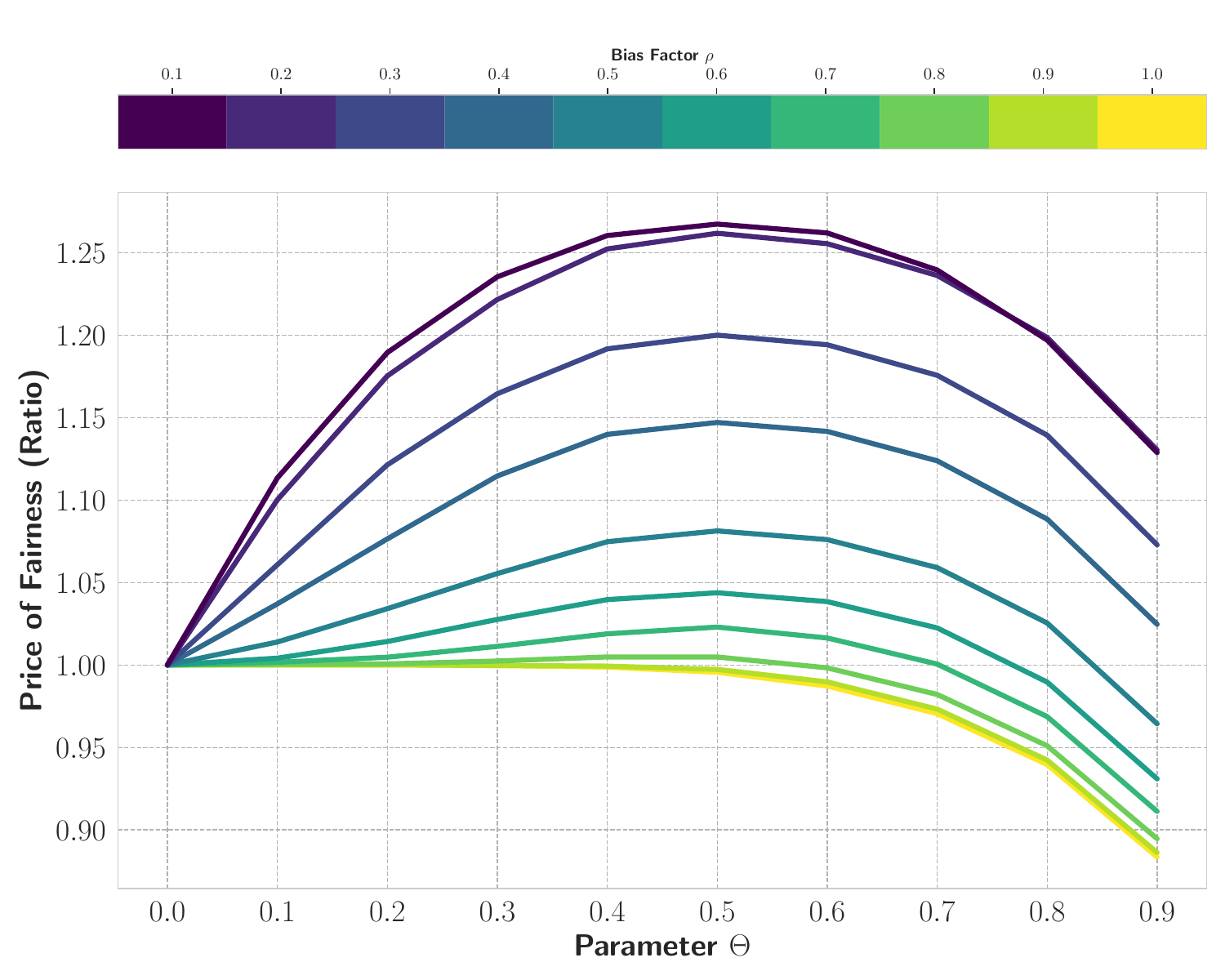}
        \caption{Capacity = 10}
    \end{subfigure}
    \begin{subfigure}[b]{0.32\textwidth}
        \centering
        \includegraphics[width=\textwidth]{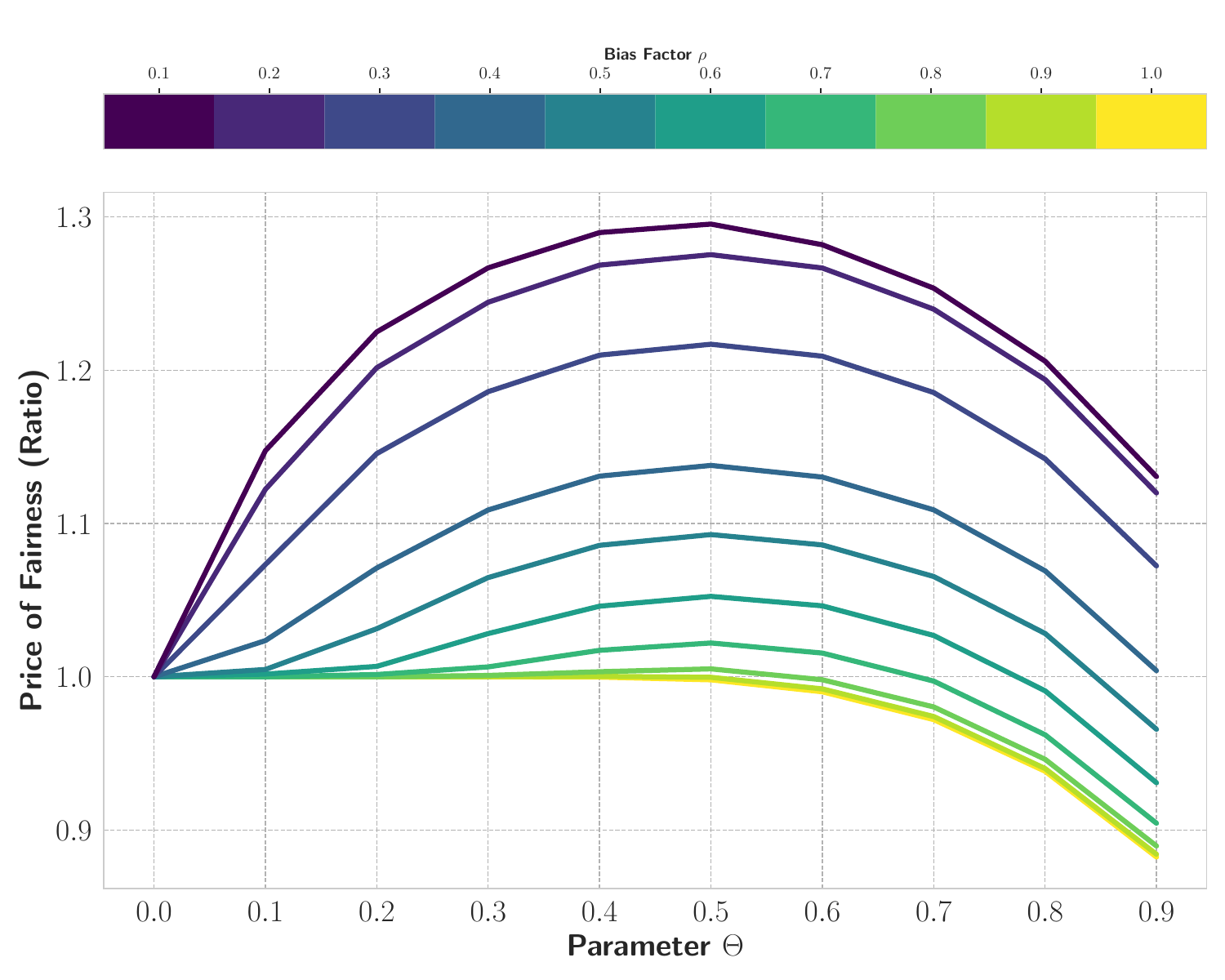}
        \caption{Capacity = 20}
    \end{subfigure}
    \caption{Long-term price of fairness of average quota in selection  in terms of utility ratio, as a function of quota parameter $\boldsymbol{\theta\in[0,1]}$ in \eqref{eq:quota} ($\boldsymbol{\theta=0.5}$ corresponds to demographic parity).}
    \label{fig-apx:v6_diff_CR_fair_unfair}
\end{figure}


\begin{figure}[htb]
    \centering
    \begin{subfigure}[b]{0.32\textwidth}
        \centering
        \includegraphics[width=\textwidth]{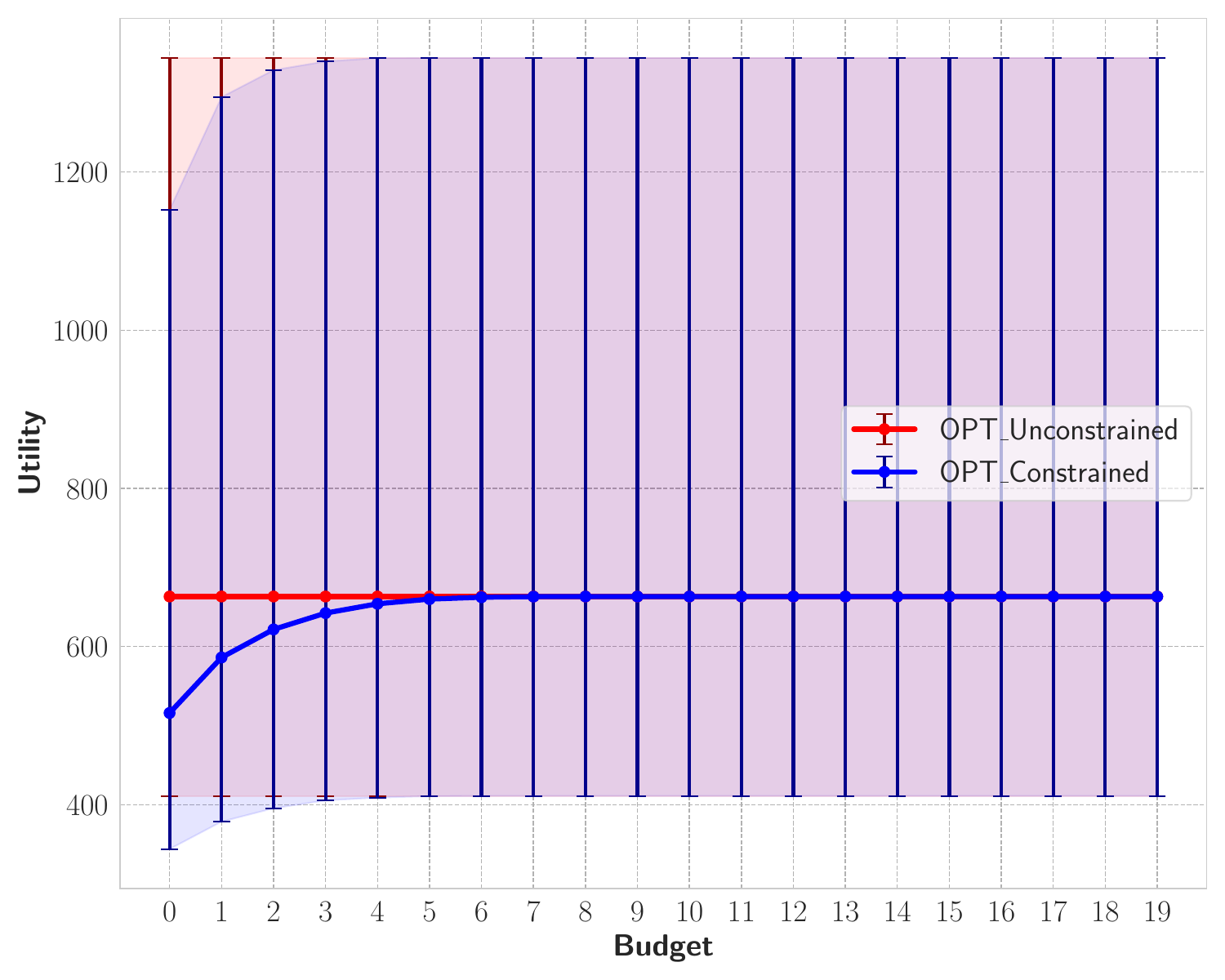}
        \caption{   Capacity = 10, bias factor = 1.0}
    \end{subfigure}
    \hfill
    \begin{subfigure}[b]{0.32\textwidth}
        \centering
        \includegraphics[width=\textwidth]{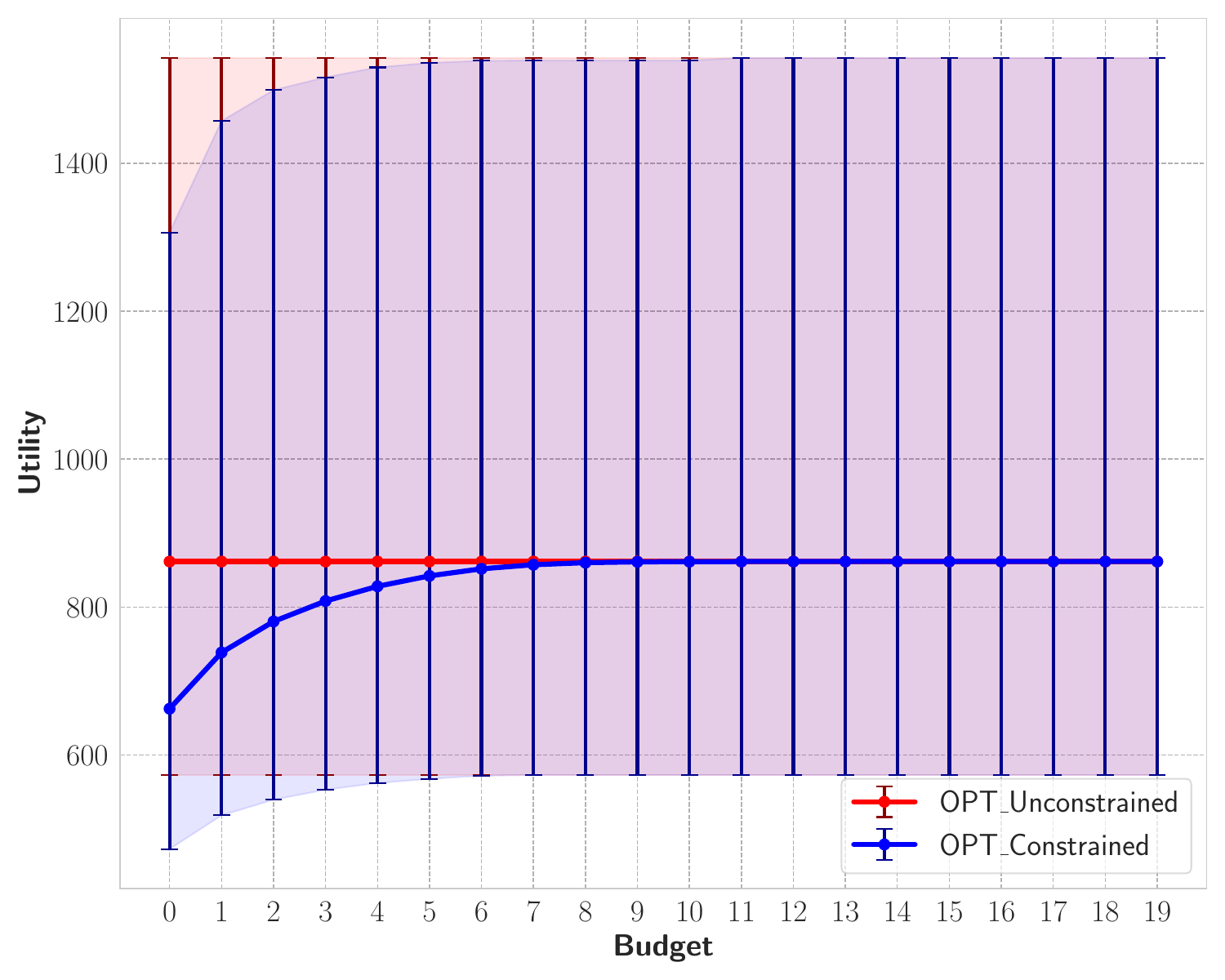}
        \caption{   Capacity = 15, bias factor = 1.0}
    \end{subfigure}
    \hfill
    \begin{subfigure}[b]{0.32\textwidth}
        \centering
        \includegraphics[width=\textwidth]{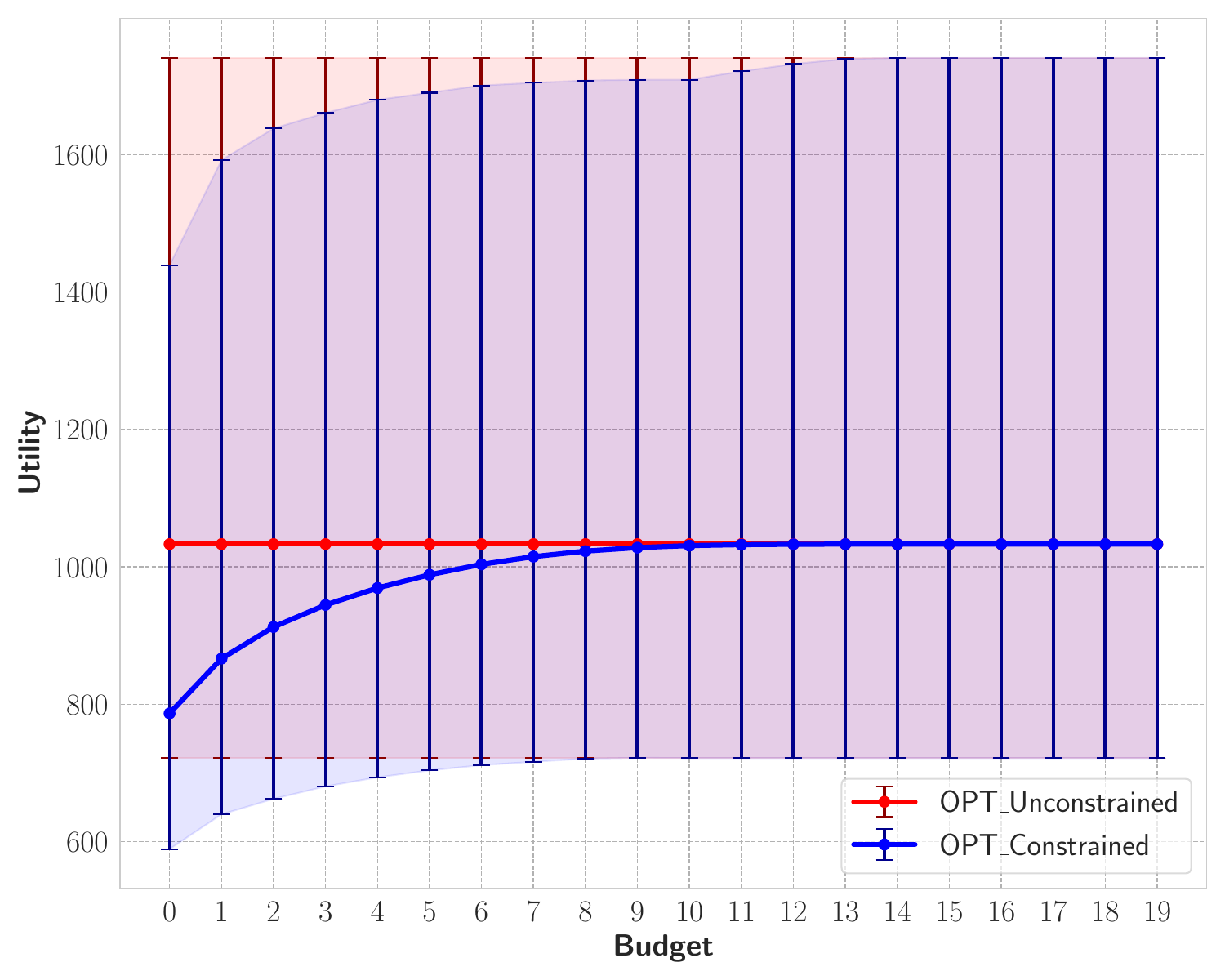}
        \caption{   Capacity = 20, bias factor = 1.0}
    \end{subfigure}
    \hfill
    \caption{Comparing the utilities of optimal unconstrained (red) and  optimal constrained (red) policies for the average budget in selection subsidization, for different capacities $\boldsymbol{k}$, as a function of average budget $\boldsymbol{b}$ on excepted selections from group $\boldsymbol{\WomanSet}$ (red curve corresponds to unlimited budget).}
    \label{fig-apx:v4_both_fair_unlimited}
\end{figure}

\begin{figure}[htb]
    \centering
    \begin{subfigure}[b]{0.32\textwidth}
        \centering
        \includegraphics[width=\textwidth]{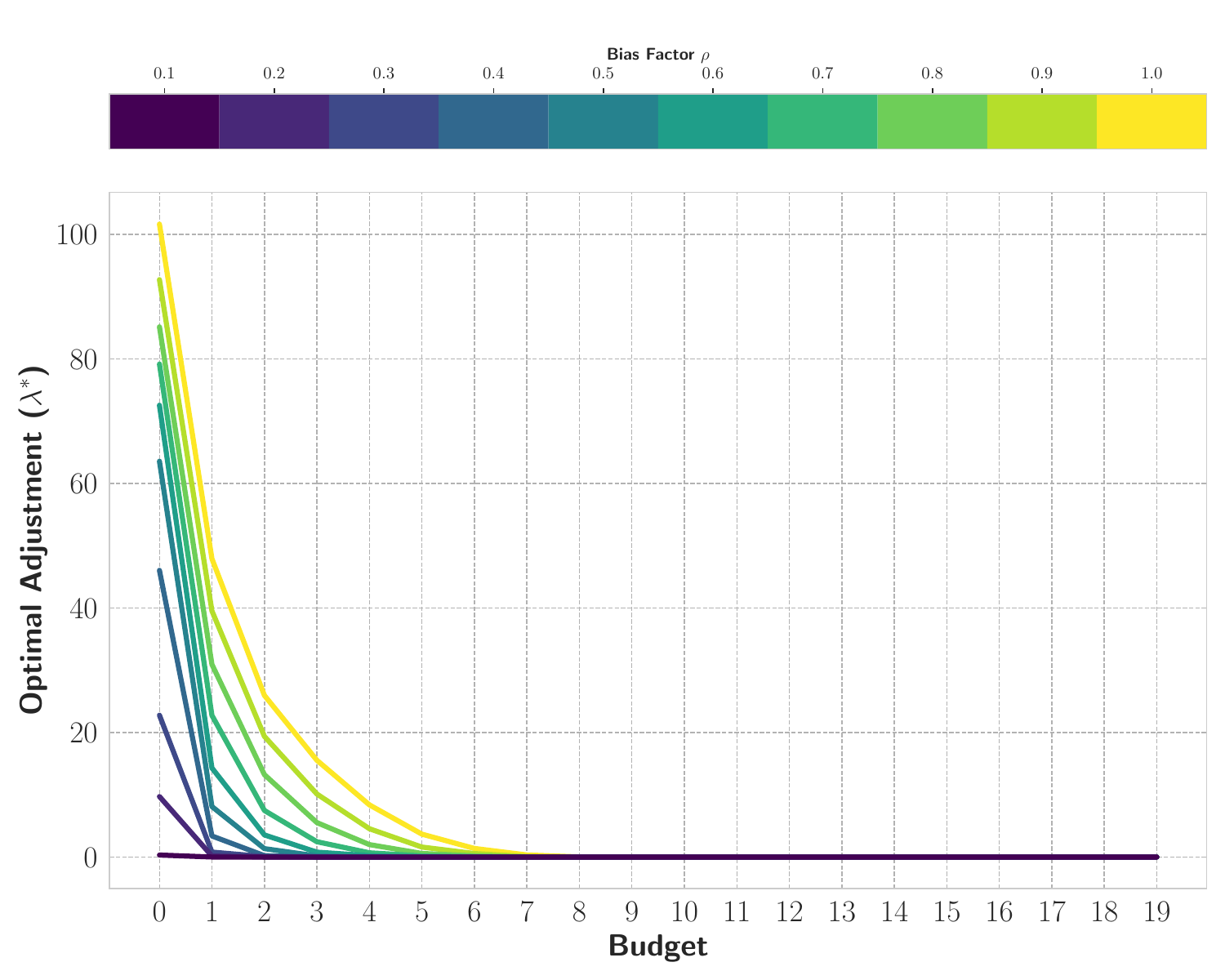}
        \caption{   Capacity = 10}
    \end{subfigure}
    \hfill
        \begin{subfigure}[b]{0.32\textwidth}
        \centering
        \includegraphics[width=\textwidth]{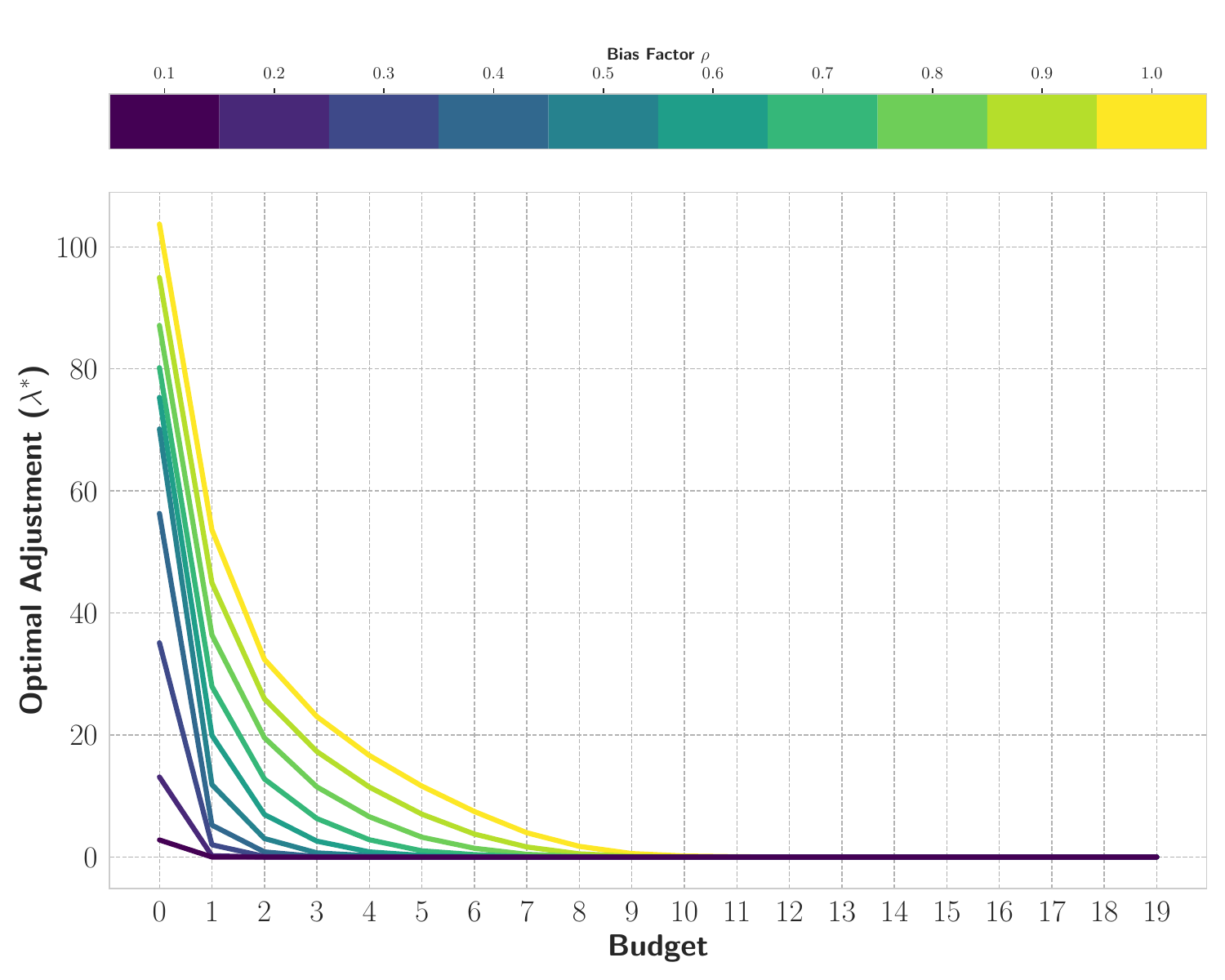}
        \caption{   Capacity = 15}
    \end{subfigure}
    \hfill
        \begin{subfigure}[b]{0.32\textwidth}
        \centering
        \includegraphics[width=\textwidth]{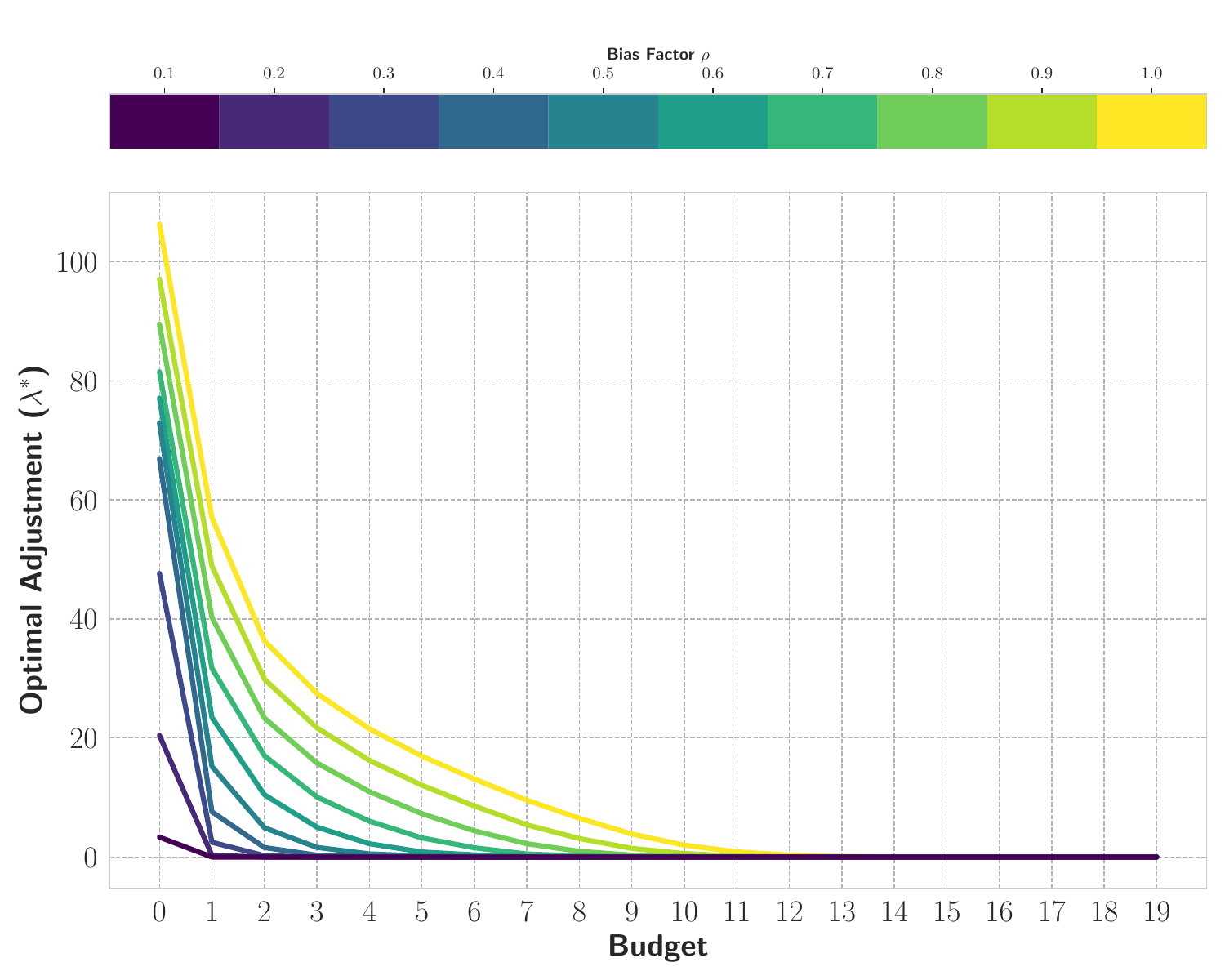}
        \caption{   Capacity = 20}
    \end{subfigure}
    \caption{The optimal dual adjustment $\boldsymbol{\lambda^*}$ for the average budget in selection subsidization, for different capacities $\boldsymbol{k}$ , as a function of average budget $\boldsymbol{b}$ on excepted number of selections from group $\boldsymbol{\WomanSet}$.}
    \label{fig-apx:v4_lambda_fair_unfair}
\end{figure}





~
\newpage
~
\newpage

\subsection{Numerical Simulations - Uniform Values}
\label{sec:numerical_uniform}

In this section we will demonstrate the robustness of the insights derived from our numerical simulations before, by conducting the same set of experiments but now under a different set of instances. As can be seen from the following figures, all of our previous managerial insights continue to hold, with some small changes to the exact numbers.

\smallskip
\noindent\textbf{Basic simulation setup:}  
The setup and structure of the problem is mainly as same as that of \Cref{sec:numerical}, except for the value distributions for the alternatives. In particular, this time we generate the mean values of the alternatives independently from a $\mathrm{Uniform}[20,60]$ distribution (as opposed to the $\mathrm{LogNormal}$ distribution before), and then add an independent $\mathrm{Uniform}[-20,20]$ noise on top of its mean (as opposed to the Gaussian noise before) to construct the value distribution for the corresponding alternatives. However, the rest of the setup, including the cost parameters, remain the same as before.

In \Cref{fig-apx:v1_Histogram_uniform}, you can observe the histogram of the generated values across all individuals, under three different bias levels. 
\vspace{-2mm}
\begin{figure}[htb]
    \centering
    \begin{subfigure}[b]{0.32\textwidth}
        \centering
        \includegraphics[width=\textwidth]{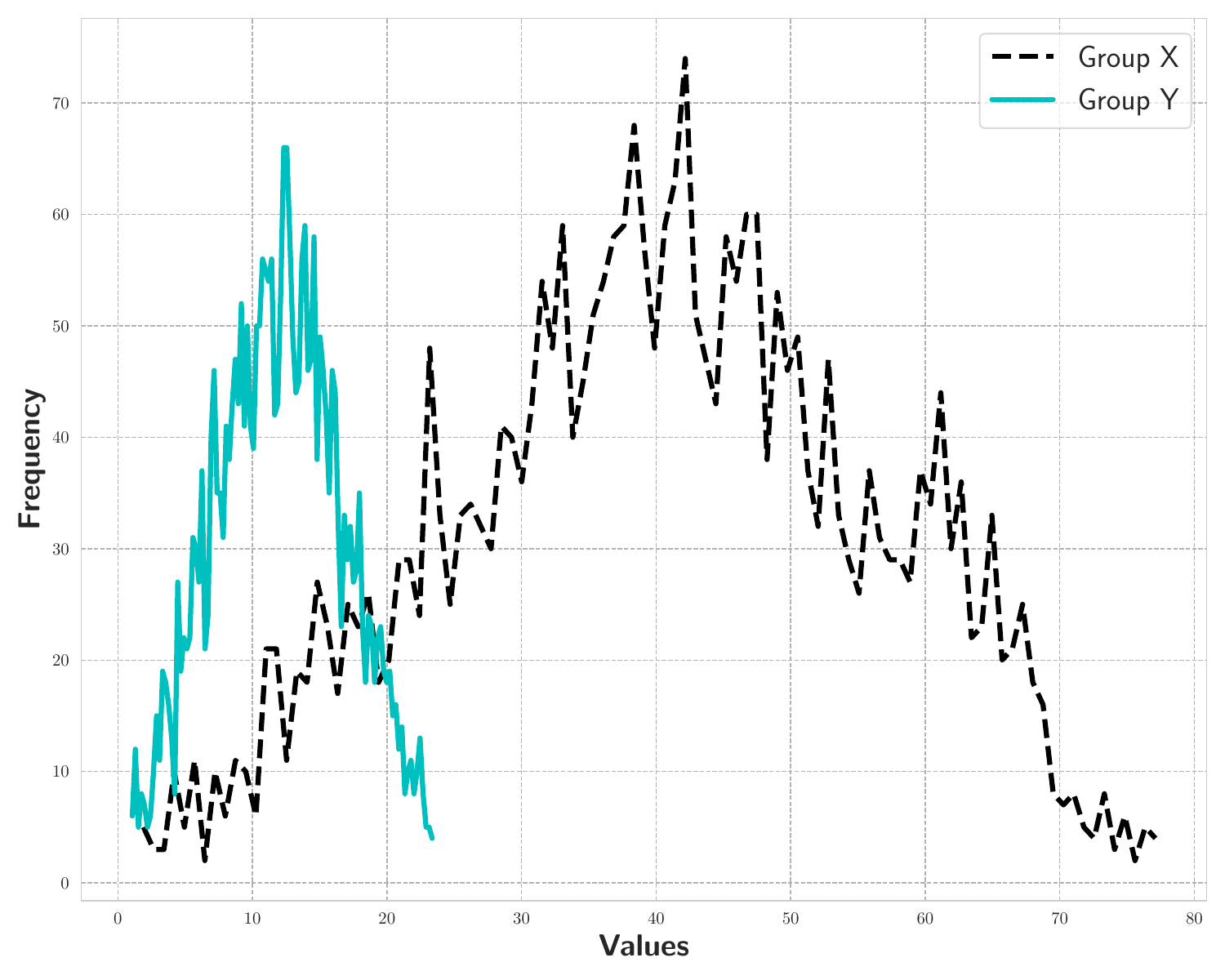}
        \caption{Bias factor = 0.3}
    \end{subfigure}
    \hfill
    \begin{subfigure}[b]{0.32\textwidth}
        \centering
        \includegraphics[width=\textwidth]{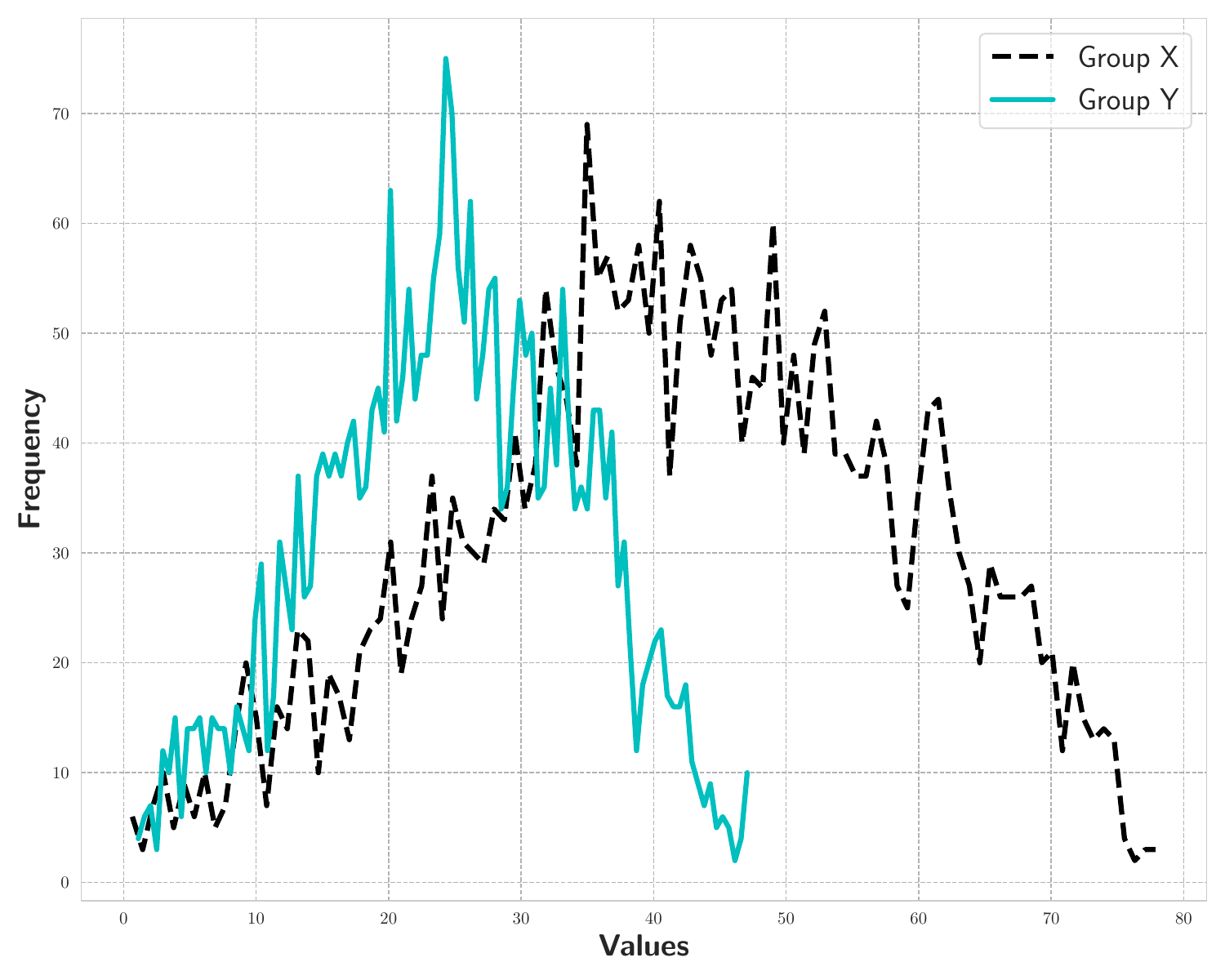}
        \caption{Bias factor = 0.6}
    \end{subfigure}
    \hfill
    \begin{subfigure}[b]{0.32\textwidth}
        \centering
        \includegraphics[width=\textwidth]{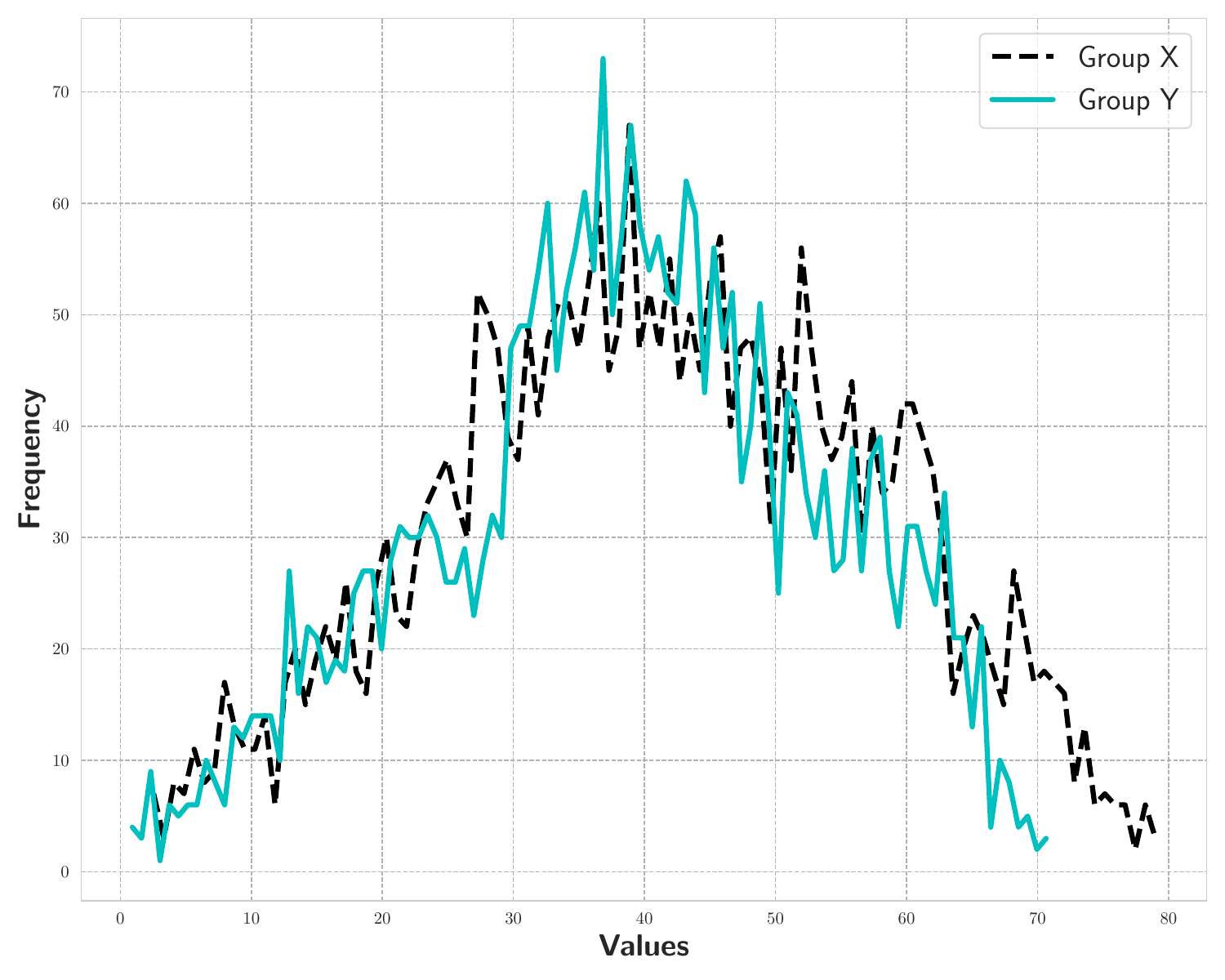}
        \caption{Bias factor = 0.9}
    \end{subfigure}
    \caption{Sample histograms of the generated values $\boldsymbol{\{{v}_i\}_{i\in[1:60]}}$ for the groups $\boldsymbol{\WomanSet}$ (cyan) and $\boldsymbol{\ManSet}$ (black).
    }
    \label{fig-apx:v1_Histogram_uniform}
\end{figure}

In the remainder of this part, we list the results of all the new simulations for this new instance. We encourage the reader to compare these results with \Cref{sec:numerical} and \Cref{apx:numerical-main}. As can be seen, while the resulting curves are (obviously) slightly different, there is no qualitative difference between these simulations and our previous set of simulations, indicating the robustness of our numerical results.

\subsubsection{Short-term {outcomes: (Surprisingly)} small utilitarian loss}
\label{sec:num-short-term_uniform}
\Cref{fig-rad:short-term_uniform_both_utilities} illustrates the short-term utilities of both optimal unconstrained policy and our proposed optimal constrained policy. Moreover, \Cref{fig-rad:short-term_uniform_price_of_fairness} shows the price of fairness together with the normalized slack of the optimal unconstrained policy.

\begin{figure}[htb]
    \centering
    \begin{subfigure}[b]{0.44\textwidth}
        \centering
        \includegraphics[width=\textwidth]{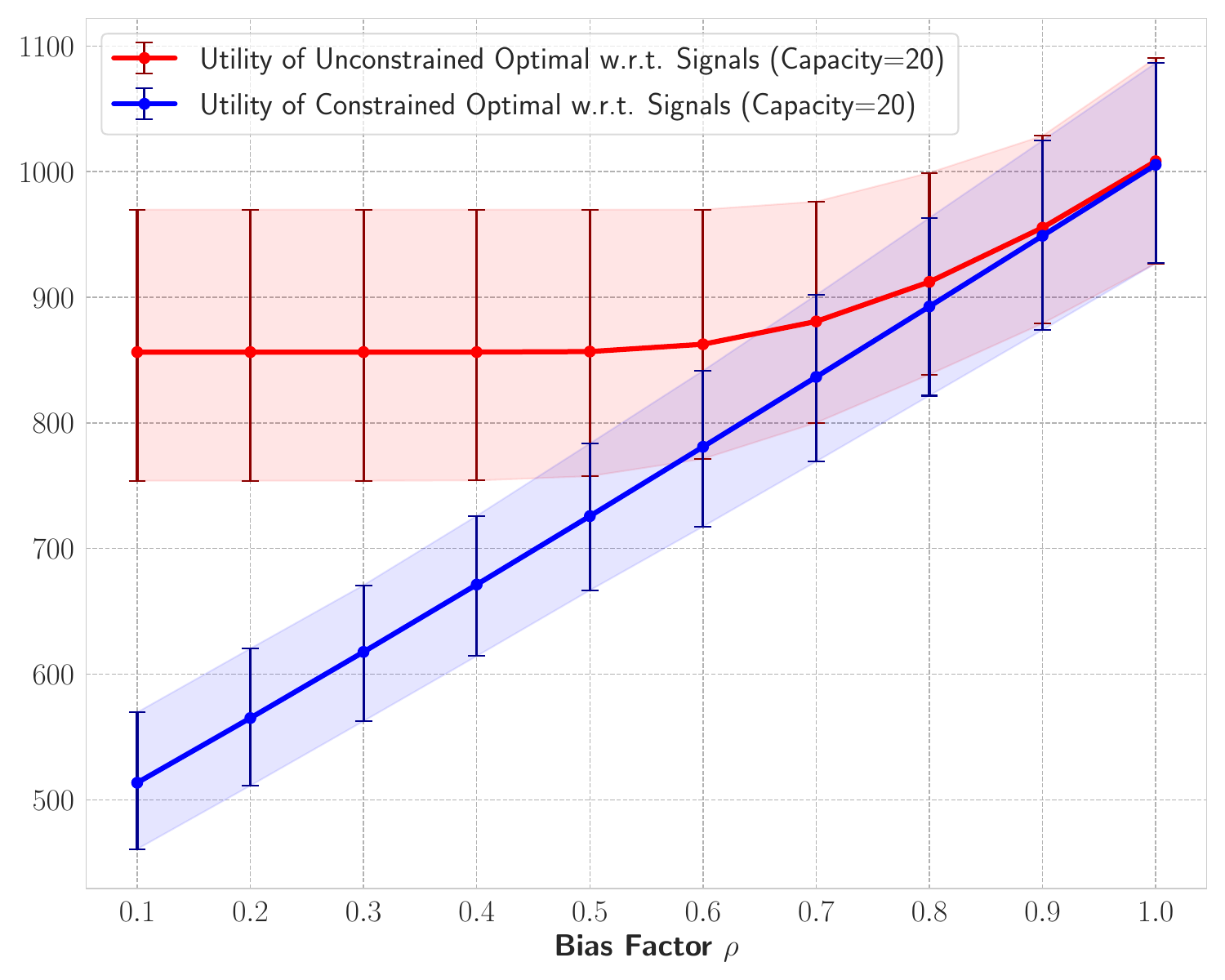}
        \caption{Expected utilities calculated based on signals $\{v_i\}_{i\in[n]}$ for the unconstrained optimal policy (red) and the constrained optimal policy (blue).}
        \label{fig-rad:short-term_uniform_both_utilities}
    \end{subfigure}
    \hfill
    \begin{subfigure}[b]{0.44\textwidth}
        \centering
        \includegraphics[width=\textwidth]{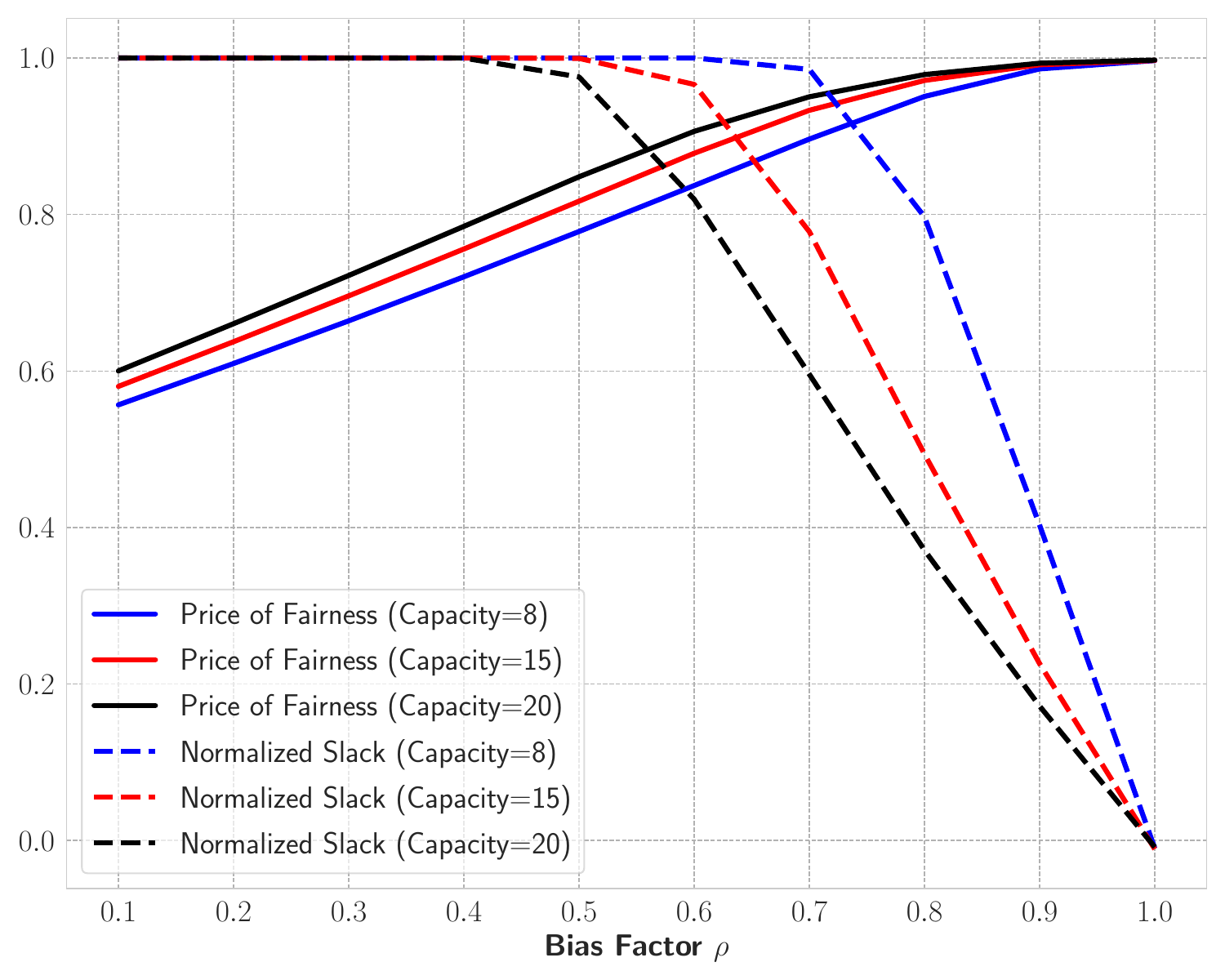}
        \caption{Price of fairness ratio calculated based on signals $\{v_i\}_{i\in[n]}$ (solid lines) and the normalized constraint slack of unconstrained optimal policy (dashed lines).}
        \label{fig-rad:short-term_uniform_price_of_fairness}
    \end{subfigure}
    \caption{Comparing the short-term outcomes of unconstrained and constrained optimal policies.}
    \label{fig-rad:short-term_uniform}
\end{figure}

\subsubsection{Long-term outcome: potential utilitarian gain}
\label{sec:num-long-term_uniform}
\Cref{fig-rad:long-term_uniform_both_utilities} illustrates the long-term utilities of both optimal unconstrained policy and our proposed optimal constrained policy. Moreover, \Cref{fig-rad:long-term_uniform_price_of_fairness} shows the price of fairness together with the normalized slack of the optimal unconstrained policy.

\begin{figure}[htb]
    \centering
    \begin{subfigure}[b]{0.44\textwidth}
        \centering
        \includegraphics[width=\textwidth]{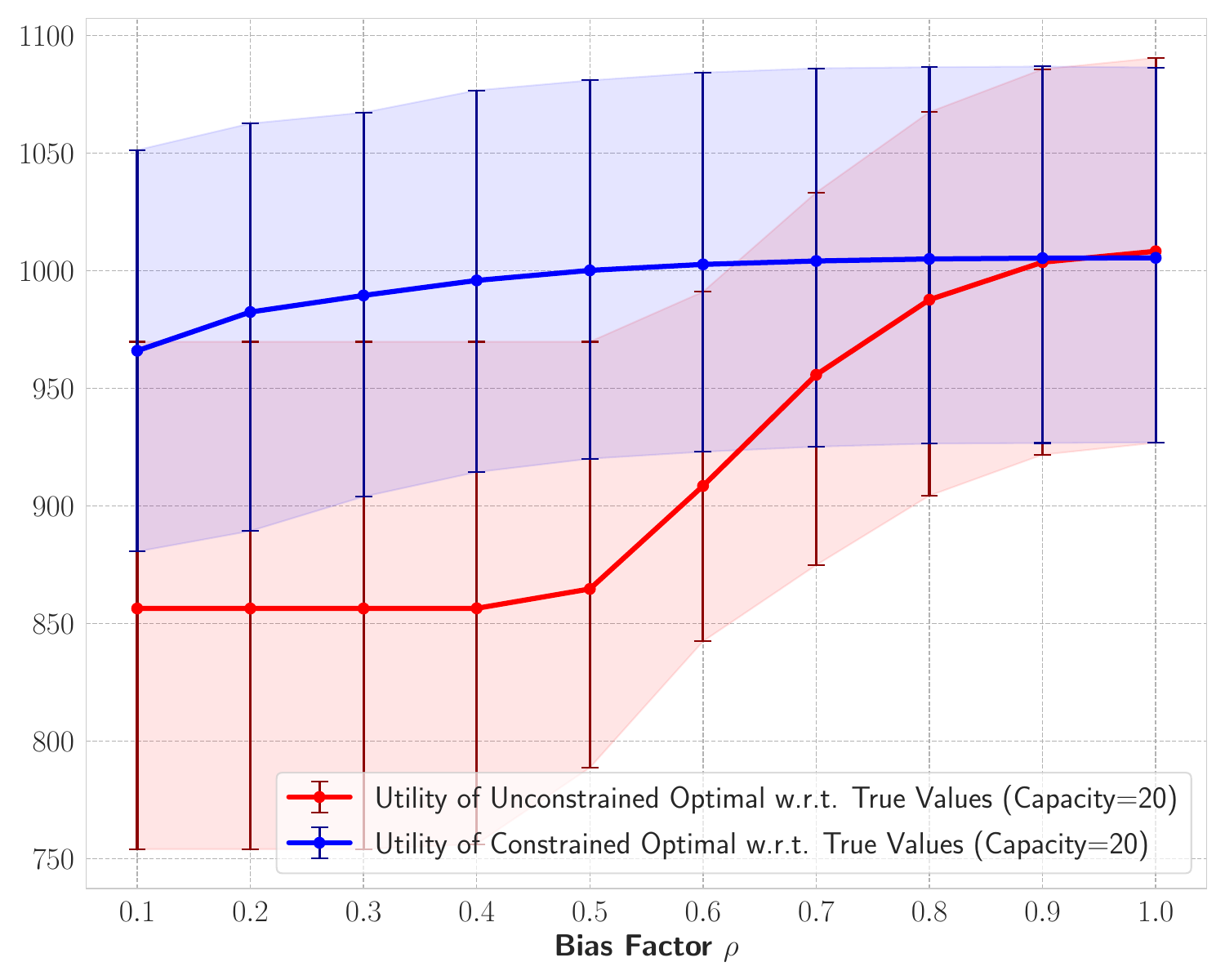}
        \caption{Expected utilities calculated based on true values $\{{v}^\dagger_i\}_{i\in[n]}$ for the unconstrained optimal policy (red) and the constrained optimal policy (blue).}
        \label{fig-rad:long-term_uniform_both_utilities}
    \end{subfigure}
    \hfill
        \begin{subfigure}[b]{0.44\textwidth}
        \centering
        \includegraphics[width=\textwidth]{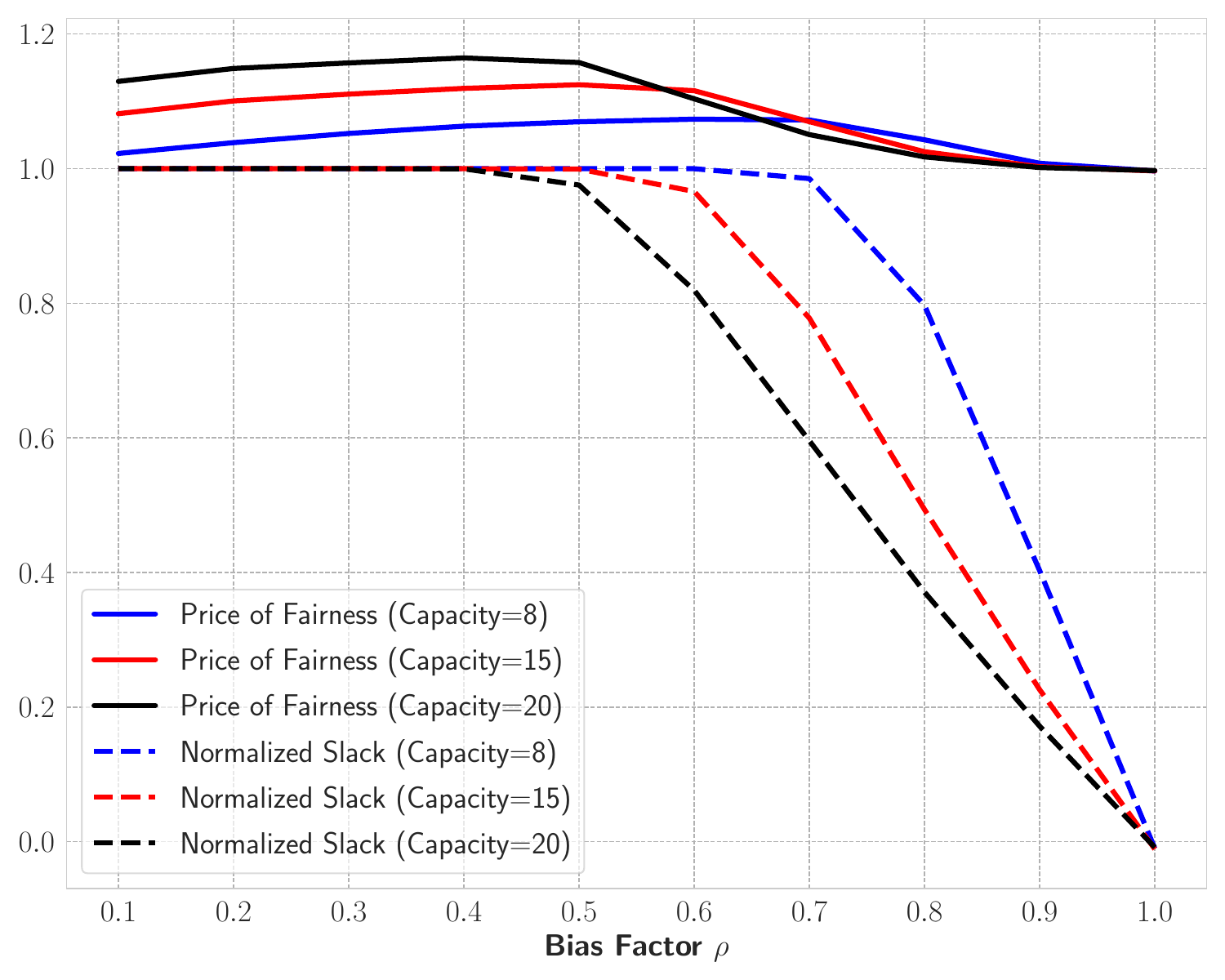}
        \caption{Price of fairness ratio calculated based on true values $\{v^\dagger_i\}_{i\in[n]}$ (solid lines) and the normalized constraint slack of unconstrained optimal policy (dashed lines).}
        \label{fig-rad:long-term_uniform_price_of_fairness}
    \end{subfigure}
    \caption{Comparing the long-term outcomes of unconstrained and constrained optimal policies}
    \label{fig-rad:long-term_uniform}
\end{figure}

\subsubsection{Minimum Quota Constraint}
\label{sec:num-long-term_uniform_quota}

\Cref{fig-rad:quota_uniform} illustrates short-term and long-term price of fairness across various quota parameters.

\begin{figure}[htb]
    \centering
    \begin{subfigure}[b]{0.44\textwidth}
        \centering
        \includegraphics[width=\textwidth]{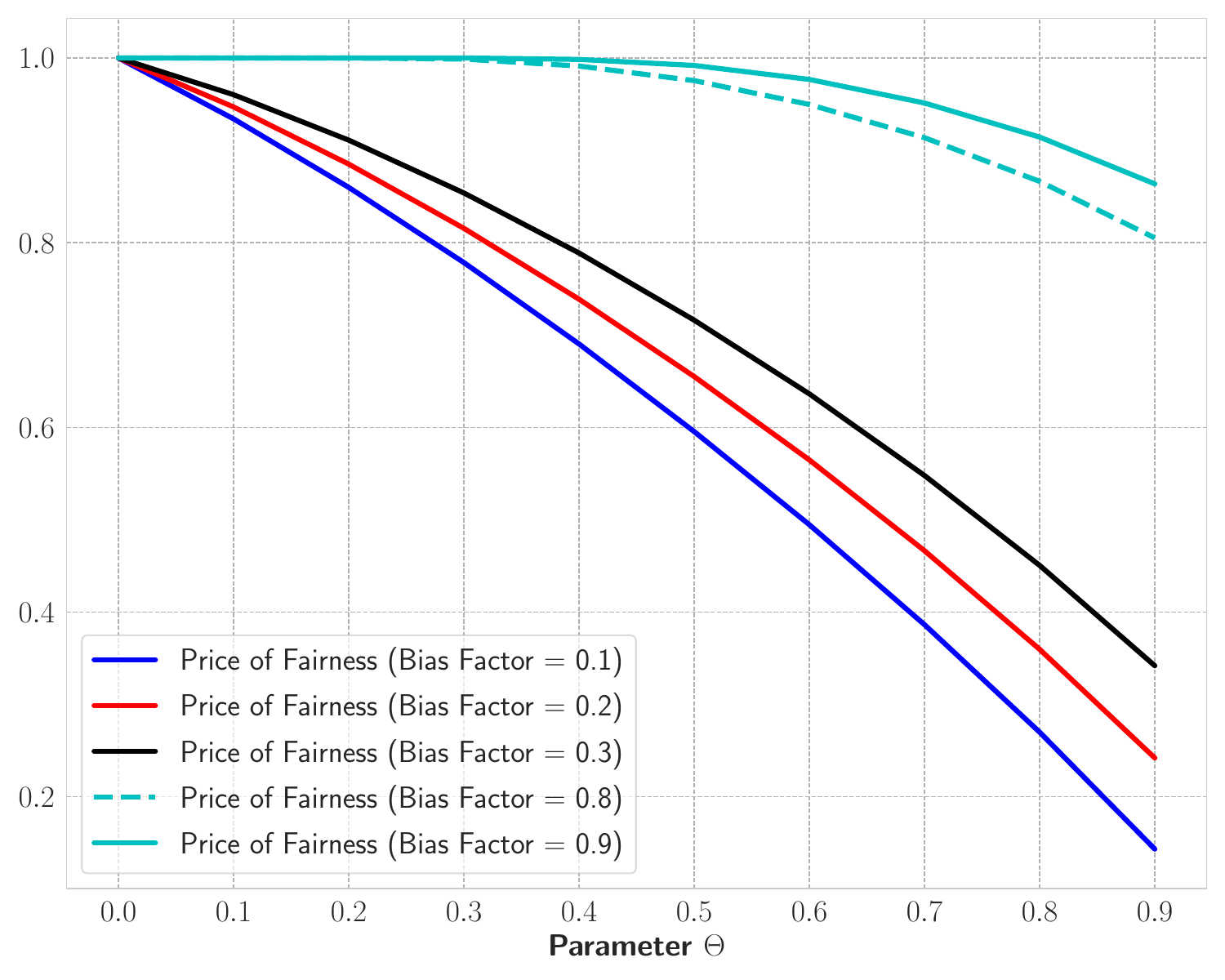}
        \caption{Short-term price of fairness}
    \end{subfigure}
    \hfill
        \begin{subfigure}[b]{0.44\textwidth}
        \centering
        \includegraphics[width=\textwidth]{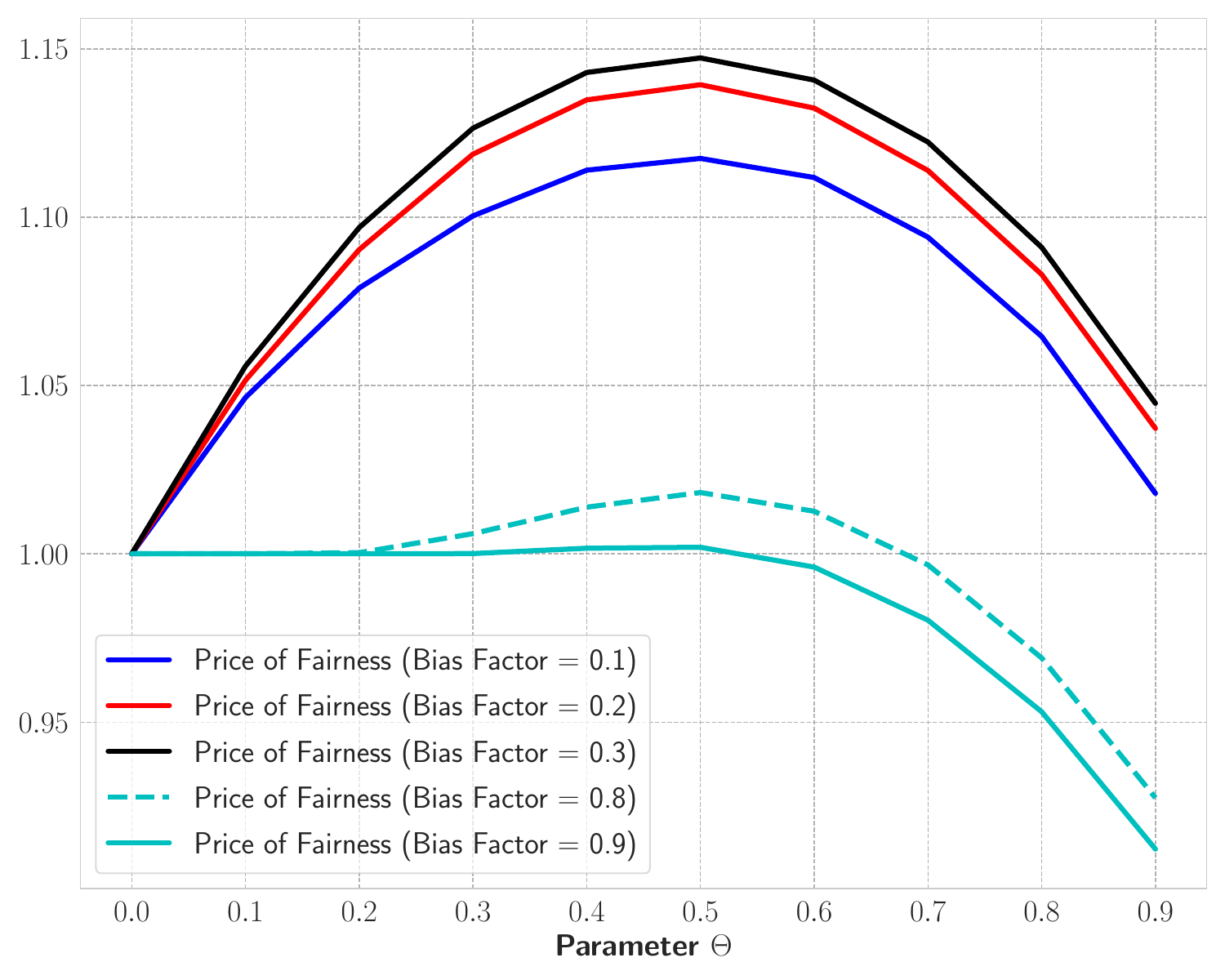}
        \caption{Long-term price of fairness}
    \end{subfigure}
    \caption{Performance of optimal constrained policy for \ref{eq:quota} in selection with parameter $\boldsymbol{\theta}$ ($\textbf{k=20}$).}
    \label{fig-rad:quota_uniform}
\end{figure}

\newpage

\subsection{Numerical Simulations - JMS}
\label{sec:numerical_JMS}
In this section, we study and analyze the performance of \Cref{alg:RAI} on a set of instances for the JMS problem. The main goal of our simulations in this section is to study the running time and convergence of \Cref{alg:RAI} as an iterative algorithm and  a FPTAS to the optimal policy, but we also study the utility of the search obtained by this policy (with respect to the observable signals).

More specifically, we consider the JMS instance provided in \Cref{ex:reject} (illustrated in \Cref{fig:MC-hiring}), which was a two-stage search with the possibility of rejection. We then consider three constraints that we would like to satisfy all at the same time. More specifically, we want to satisfy the \ref{eq:parity} in selection in all three stages of the search process, namely ``phone interviews", ``on-site interviews" and ``offers".

\smallskip
\noindent\textbf{Basic simulation setup:}  
The value distribution for each of the alternatives is, in fact, generated the same way as in \Cref{sec:numerical}. As for the costs and transition probabilities for the extra stages that are apparent in this problem, we use the following setup:
\begin{itemize}
    \item Cost of phone interview stage: $\mathrm{Uniform}[1,2]$
    \item Cost of onsite interview stage: $\mathrm{Uniform}[2,4]$
    \item Cost of offer to each individual = 3
    \item Probability of passing the phone interview = 80\%
    \item Probability offer getting accepted = 90\%
\end{itemize}
In order to evaluate the performance of the algorithm in expectation, we run a Monete-Carlo simulation with $20$ instances. 

\medskip

\noindent\textbf{Short-term price of fairness:} \Cref{fig:JMS_short-term_price_of_fairness} shows the short-term utilities of both our near-optimal constrained policy and unconstrained optimal policy, as well as their ratio (price of fairness), under different capacities.
As can be seen from the plots, the price of fairness is quite small, especially for $\rho \in [0.6, 1]$. This shows that the negative externalities due to fairness considerations are small and negligible in practical instances similar to those we consider here.

\begin{figure}[htb]
    \centering
    \begin{subfigure}[b]{0.48\textwidth}
        \centering
        \includegraphics[width=\textwidth]{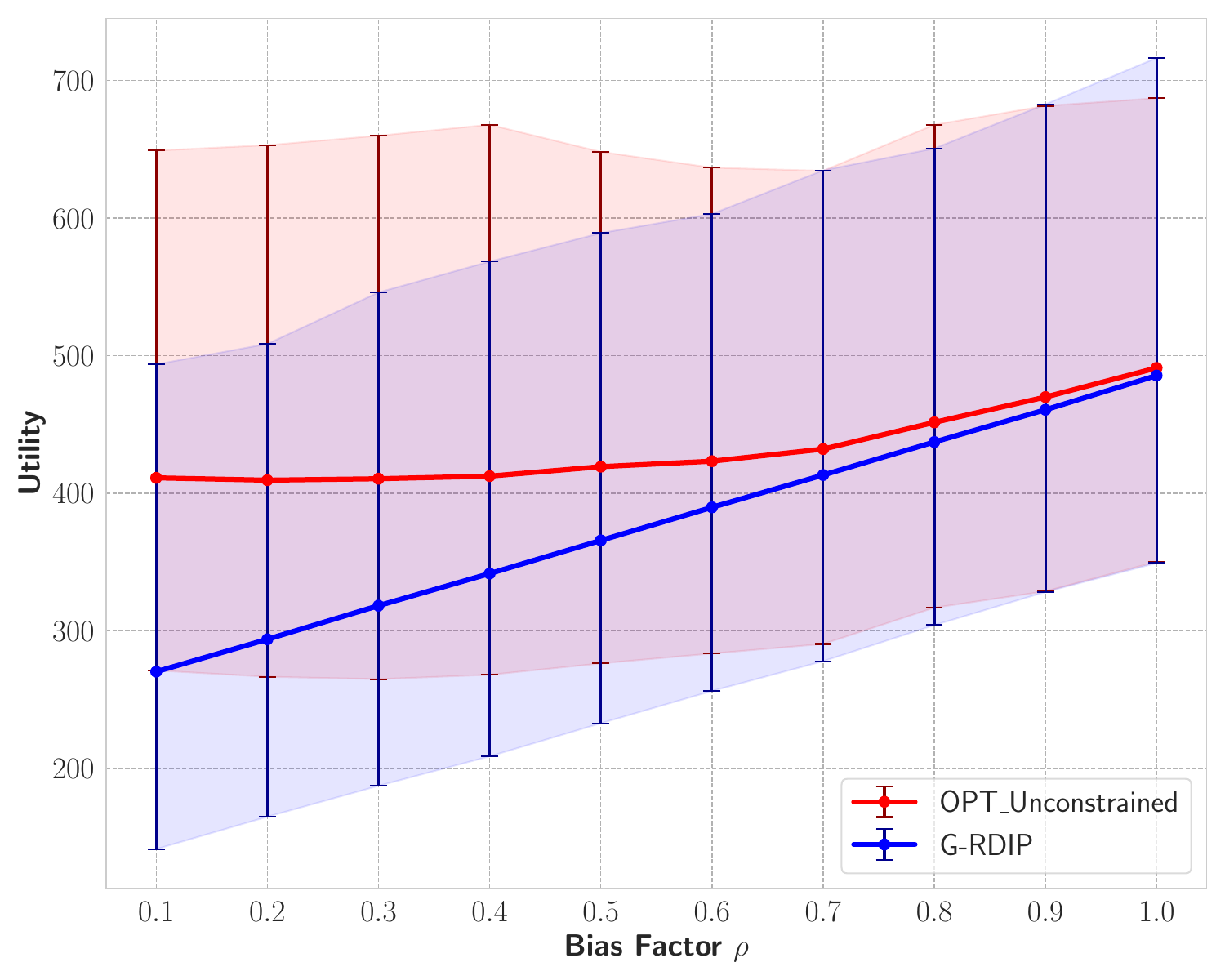}
        \caption{Capacity = 8}
    \end{subfigure}
    \hfill
    \begin{subfigure}[b]{0.48\textwidth}
        \centering
        \includegraphics[width=\textwidth]{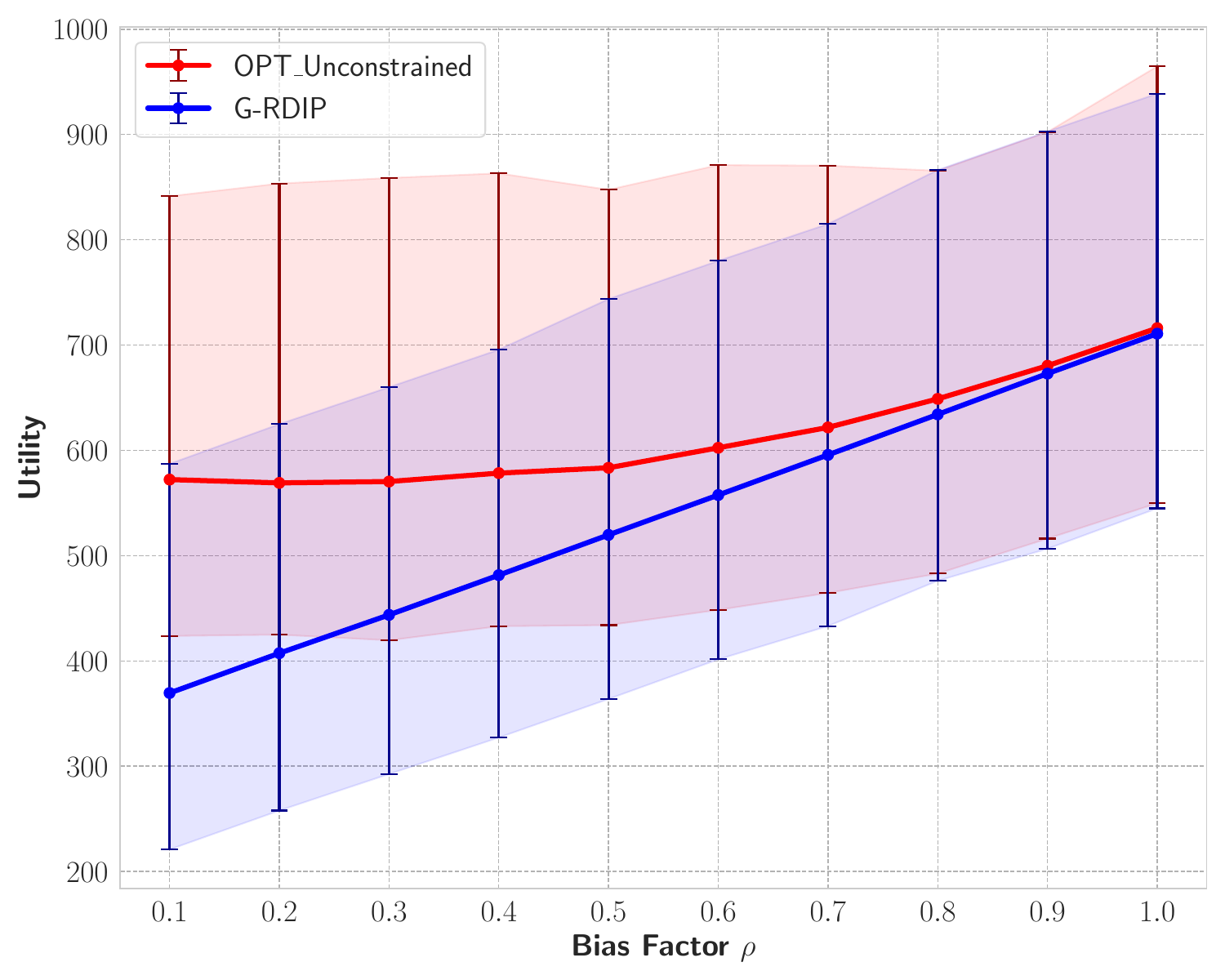}
        \caption{Capacity = 15}
    \end{subfigure}
    \hfill
    \begin{subfigure}[b]{0.48\textwidth}
        \centering
        \includegraphics[width=\textwidth]{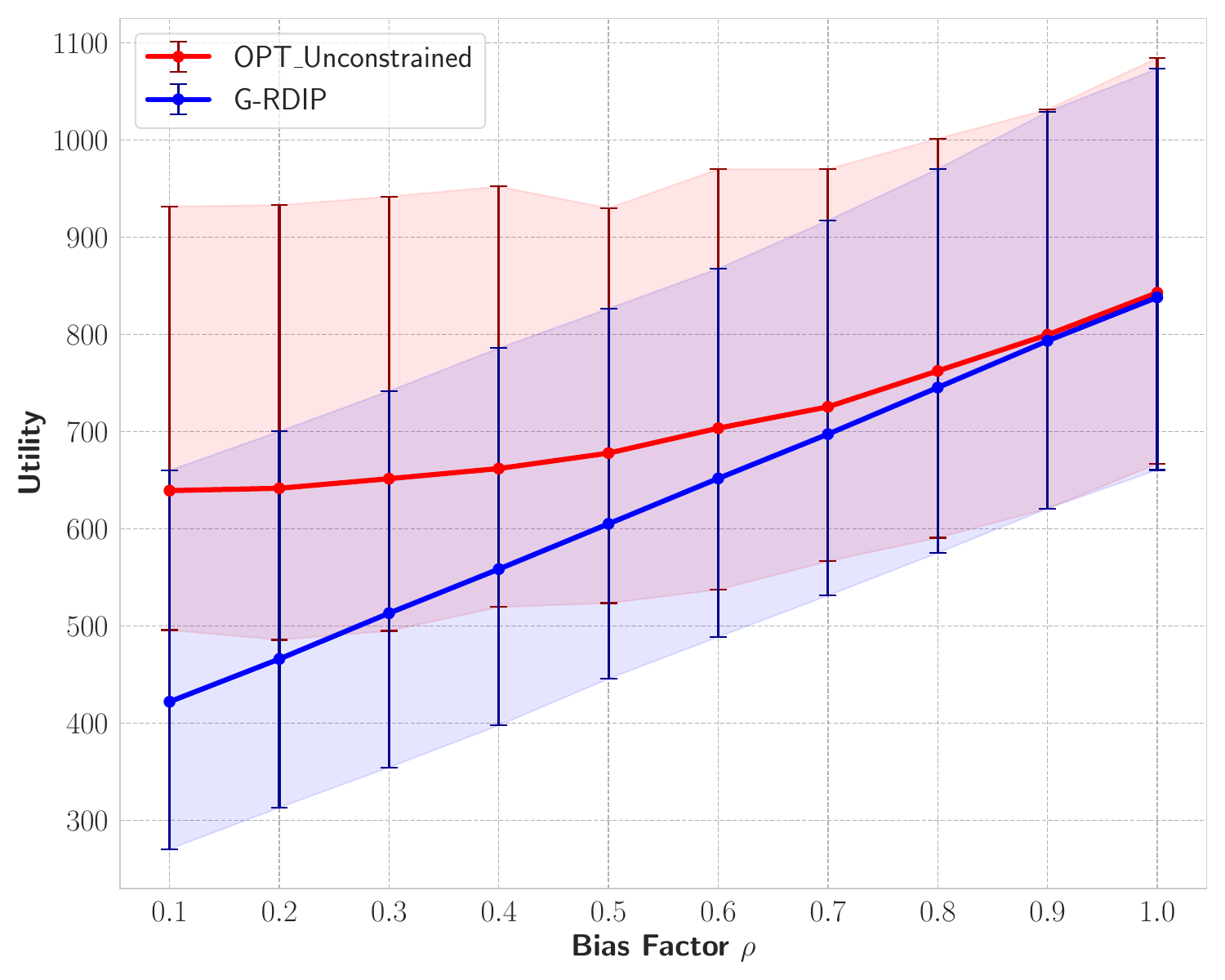}
        \caption{Capacity = 20}
    \end{subfigure}
        \hfill
    \begin{subfigure}[b]{0.48\textwidth}
        \centering
        \includegraphics[width=\textwidth]{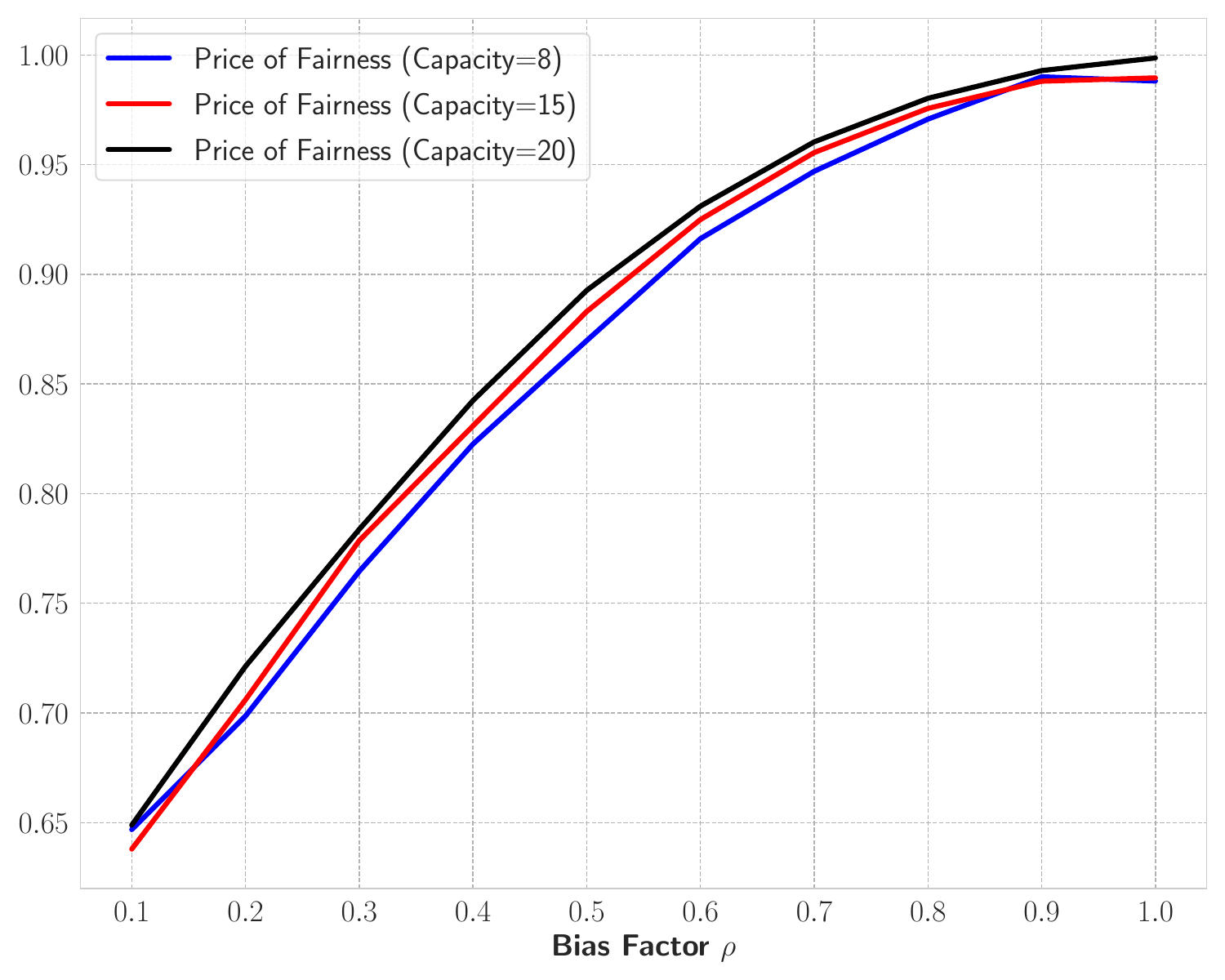}
        \caption{Price of fairness}
    \end{subfigure}
    
    \caption{(JMS simulation) (a,b,c) show the expected utilities calculated based on signals $\boldsymbol{\{v_i\}_{i\in[n]}}$ for the unconstrained optimal policy (red) and the constrained optimal policy (blue); and (d): Price of fairness ratio calculated based on signals $\boldsymbol{\{v_i\}_{i\in[n]}}$ (solid lines) and the normalized constraint slack of unconstrained optimal policy (dashed lines).}
    \label{fig:JMS_short-term_price_of_fairness}
\end{figure}

\medskip
\noindent\textbf{Convergence trajectory:} The following \Cref{fig:JMS_Trajectory_Lagrangian,fig:JMS_Trajectory_Mean_Lagrangian,fig:JMS_Trajectory_Mean_Slack_Offer,fig:JMS_Trajectory_Mean_Slack_Onsite,fig:JMS_Trajectory_Mean_Slack_Phone,fig:JMS_Trajectory_lambdas} illustrates the trajectory of Lagrangian, mean (over the past iterations) of Lagrangian, mean (over the past iterations) of slacks for each of the three constraints (each corresponding to the parity at one of the stages), as well as the dual adjustments $\boldsymbol{\lambda}$.
As shown by all the figures, we can see that all these metrics will converge to their goal in around $\OuterNum=20$ to $\OuterNum=40$ number of iterations. Note that these are only outer iterations of \Cref{alg:RAI}, as we do not have any convex constraints in this set of simulations and there is no need for the inner-loop. This demonstrates that, even though the theoretical number of iterations derived in \Cref{thm:RAI} can be quite large, the actual number of iterations need for convergence is quite small under practical instances.

\begin{figure}[htb]
    \centering
    \begin{subfigure}[b]{0.32\textwidth}
        \centering
        \includegraphics[width=\textwidth]{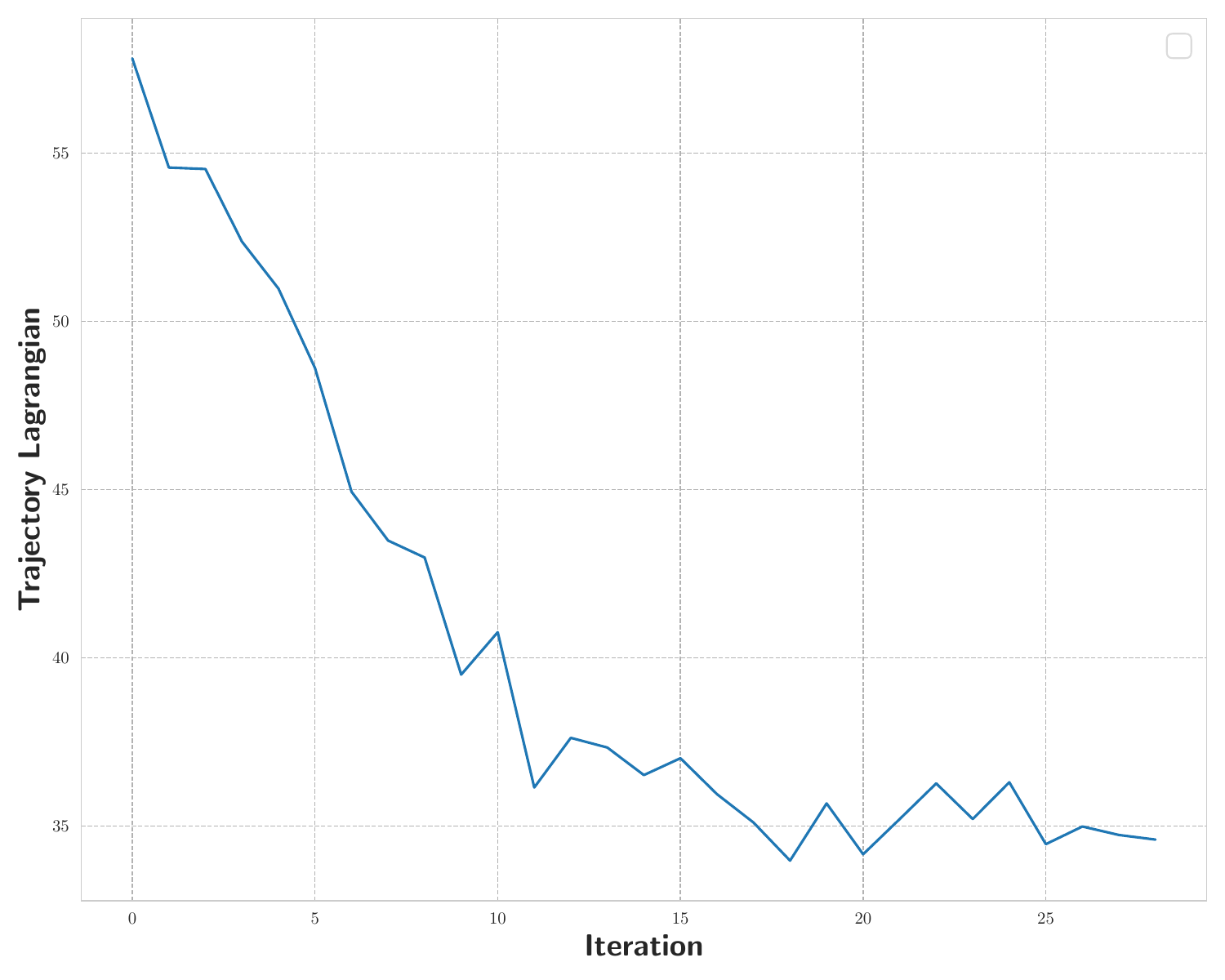}
        \caption{Bias factor = 0.3}
    \end{subfigure}
    \hfill
    \begin{subfigure}[b]{0.32\textwidth}
        \centering
        \includegraphics[width=\textwidth]{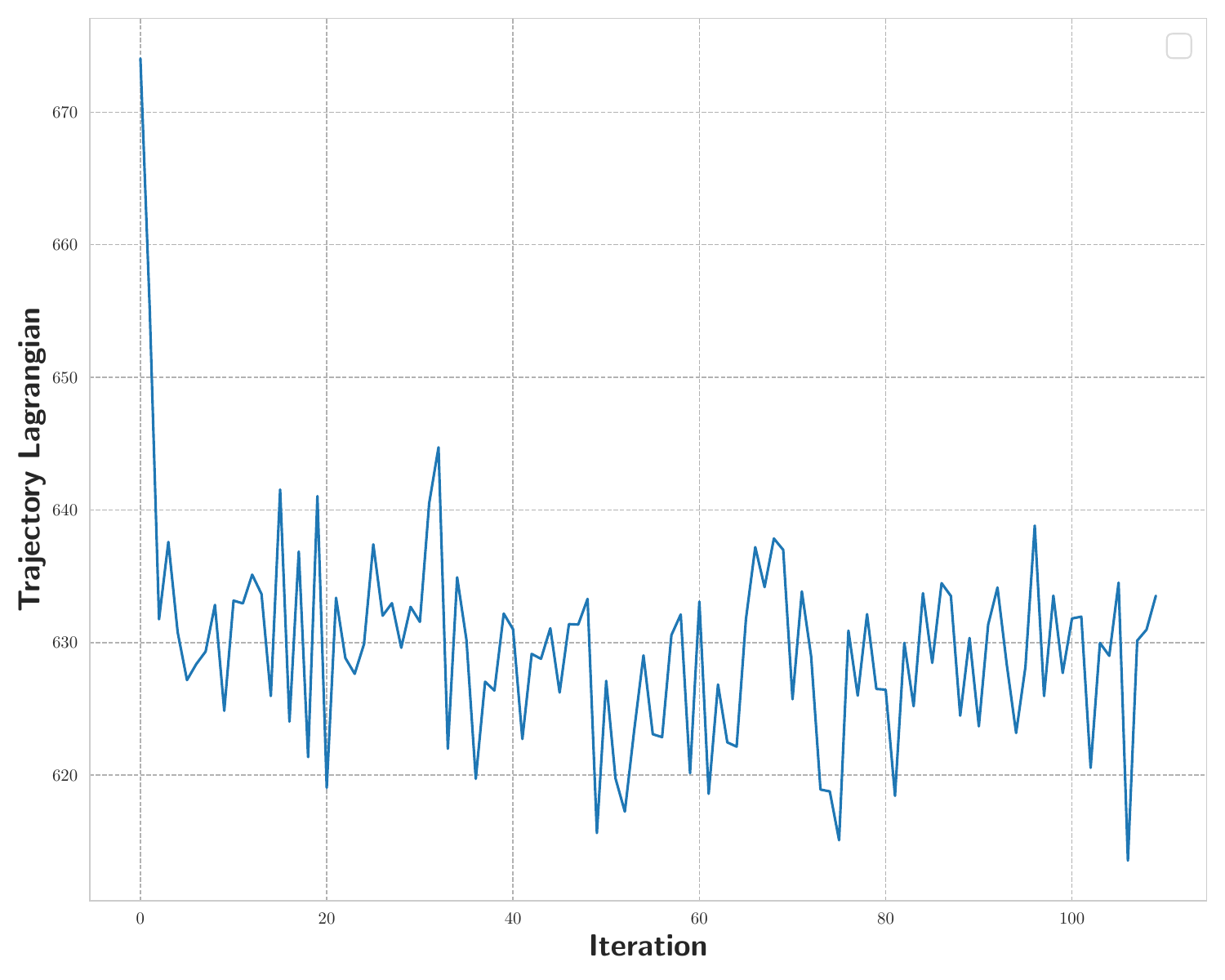}
        \caption{Bias factor = 0.7}
    \end{subfigure}
    \hfill
    \begin{subfigure}[b]{0.32\textwidth}
        \centering
        \includegraphics[width=\textwidth]{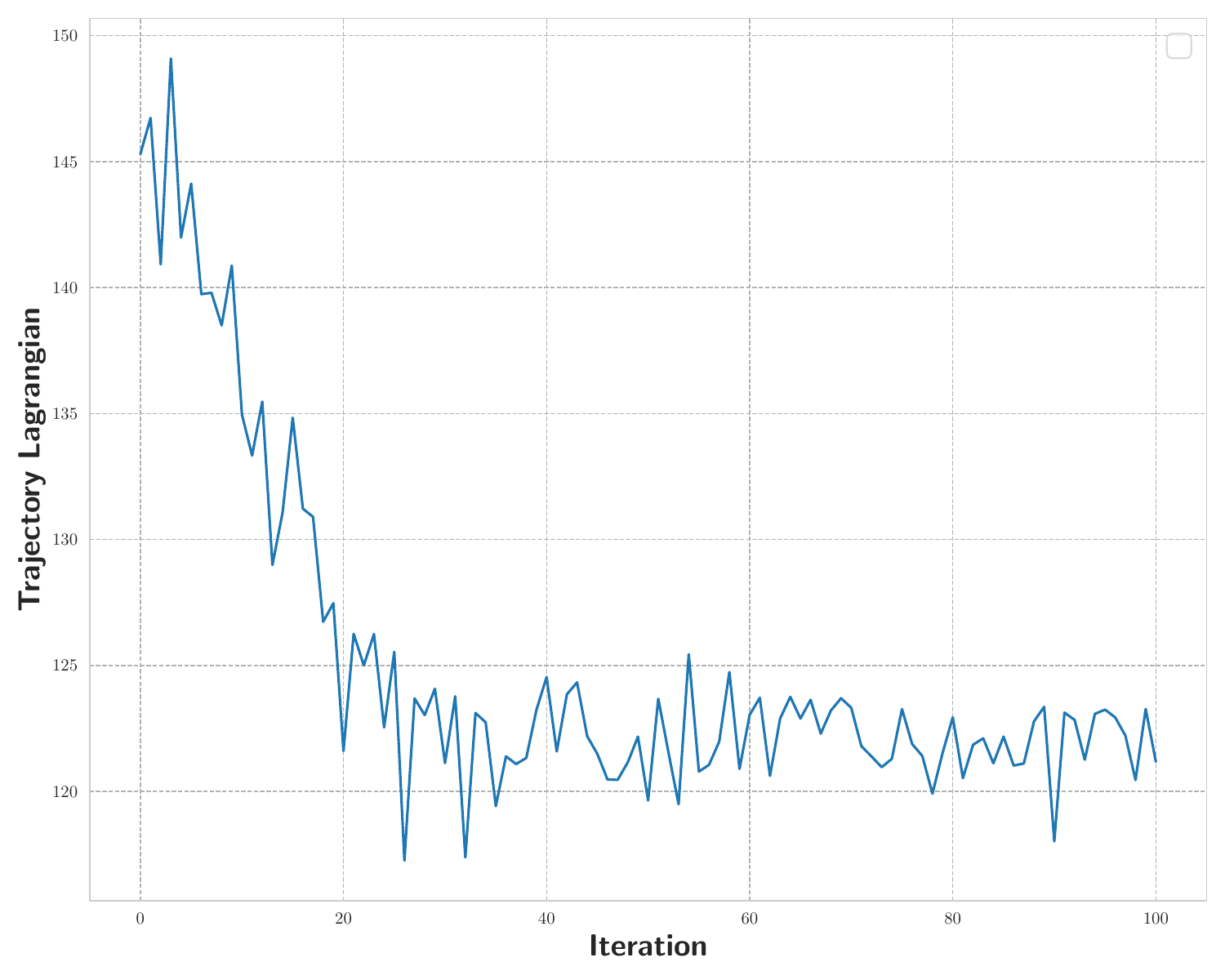}
        \caption{Bias factor = 0.9}
    \end{subfigure}
    
    \caption{(JMS simulation) Trajectory of Lagrangian under capacity $\boldsymbol{k= 20}$}
    \label{fig:JMS_Trajectory_Lagrangian}
\end{figure}

\begin{figure}[htb]
    \centering
    \begin{subfigure}[b]{0.32\textwidth}
        \centering
        \includegraphics[width=\textwidth]{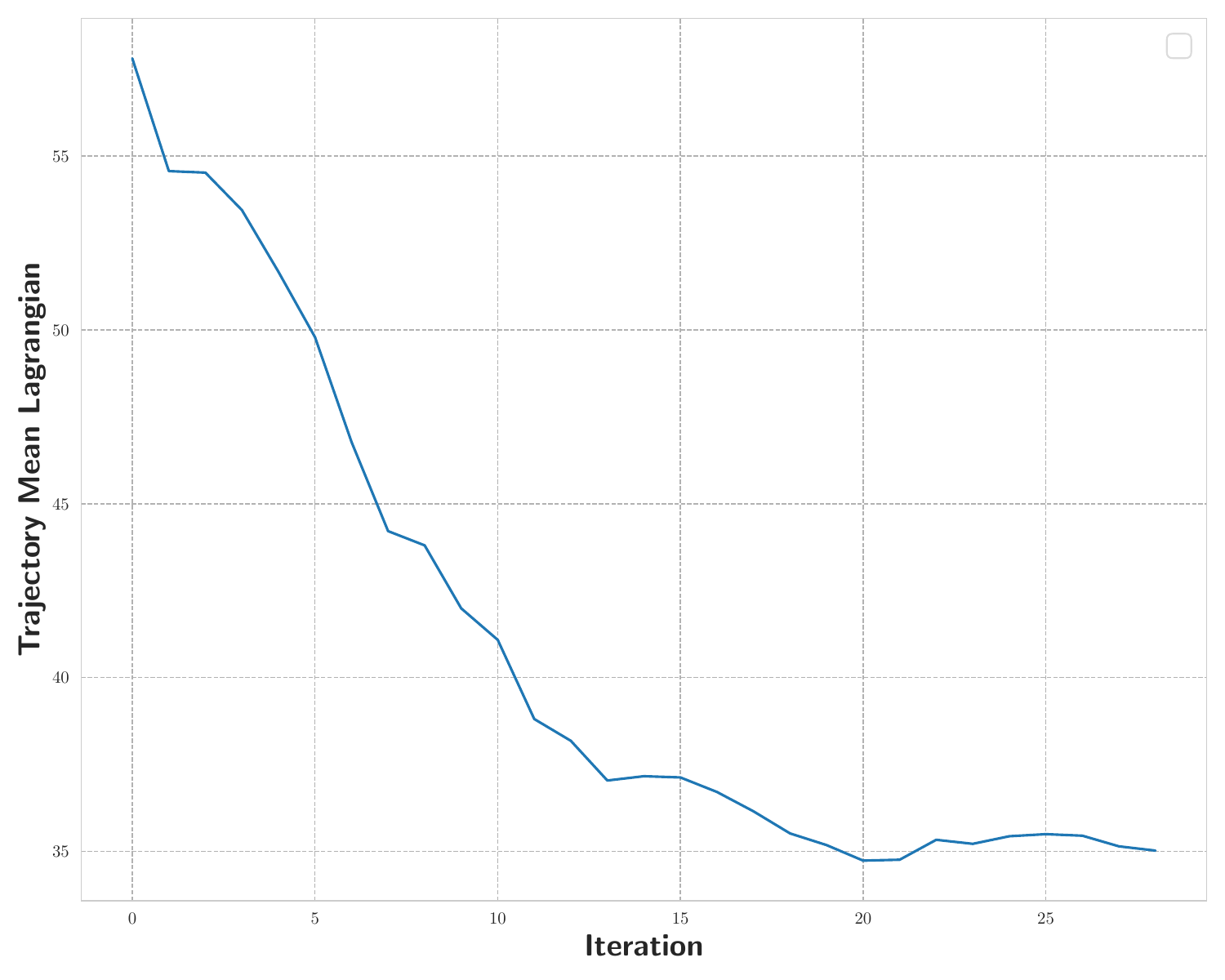}
        \caption{Bias factor = 0.3}
    \end{subfigure}
    \hfill
    \begin{subfigure}[b]{0.32\textwidth}
        \centering
        \includegraphics[width=\textwidth]{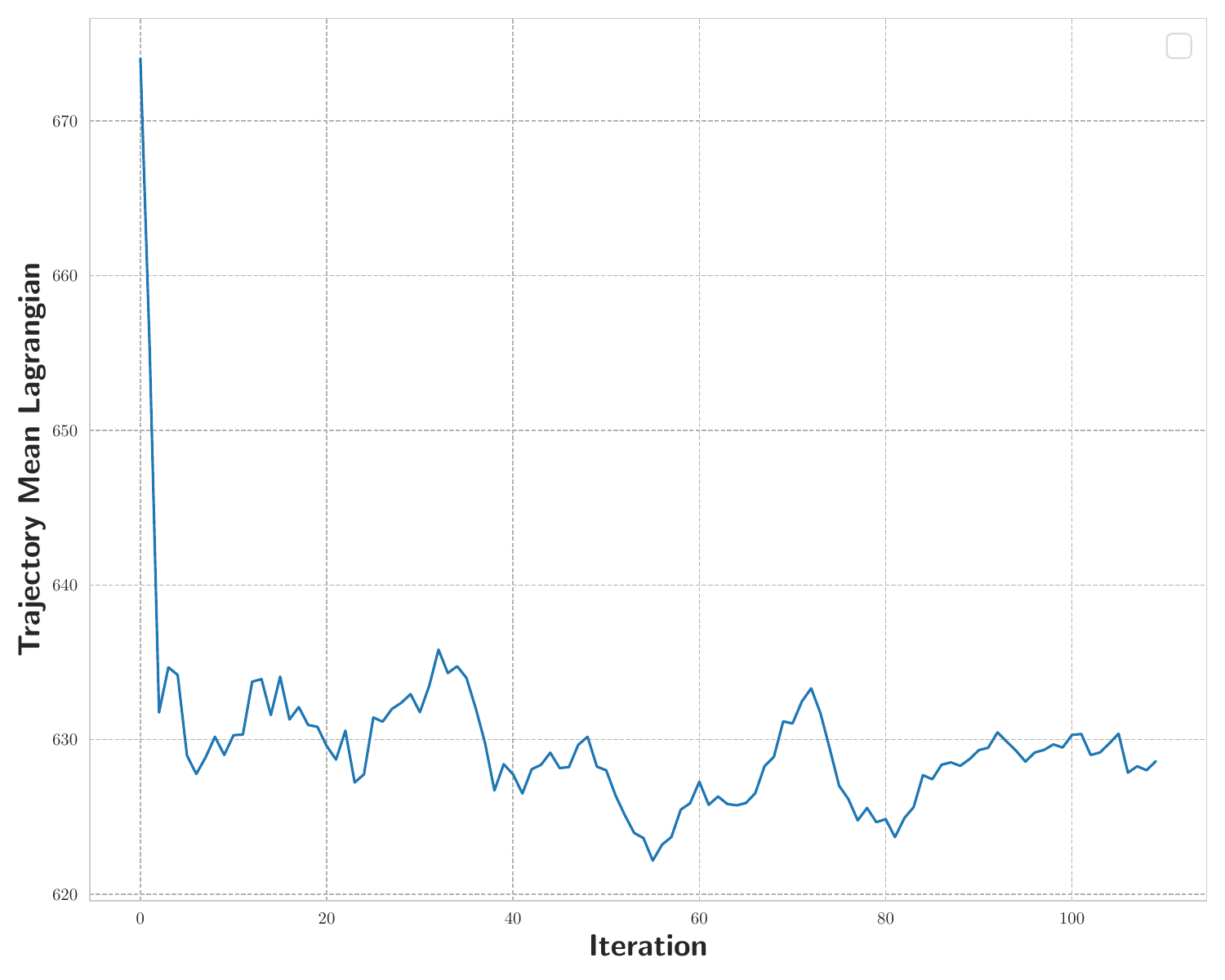}
        \caption{Bias factor = 0.7}
    \end{subfigure}
    \hfill
    \begin{subfigure}[b]{0.32\textwidth}
        \centering
        \includegraphics[width=\textwidth]{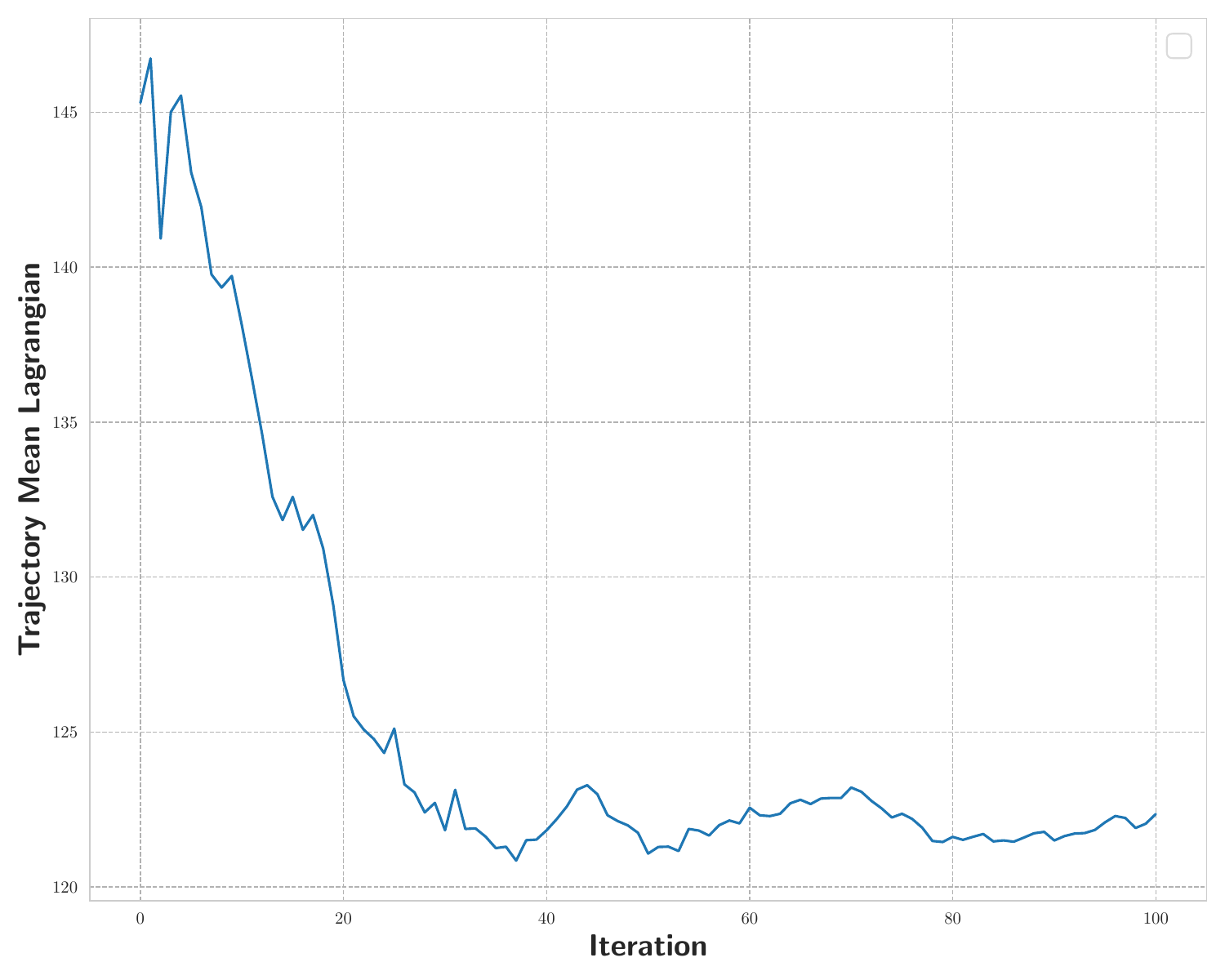}
        \caption{Bias factor = 0.9}
    \end{subfigure}
    
    \caption{(JMS simulation) Trajectory of mean Lagrangian under capacity $\boldsymbol{k= 20}$}
    \label{fig:JMS_Trajectory_Mean_Lagrangian}
\end{figure}

\begin{figure}[htb]
    \centering
    \begin{subfigure}[b]{0.32\textwidth}
        \centering
        \includegraphics[width=\textwidth]{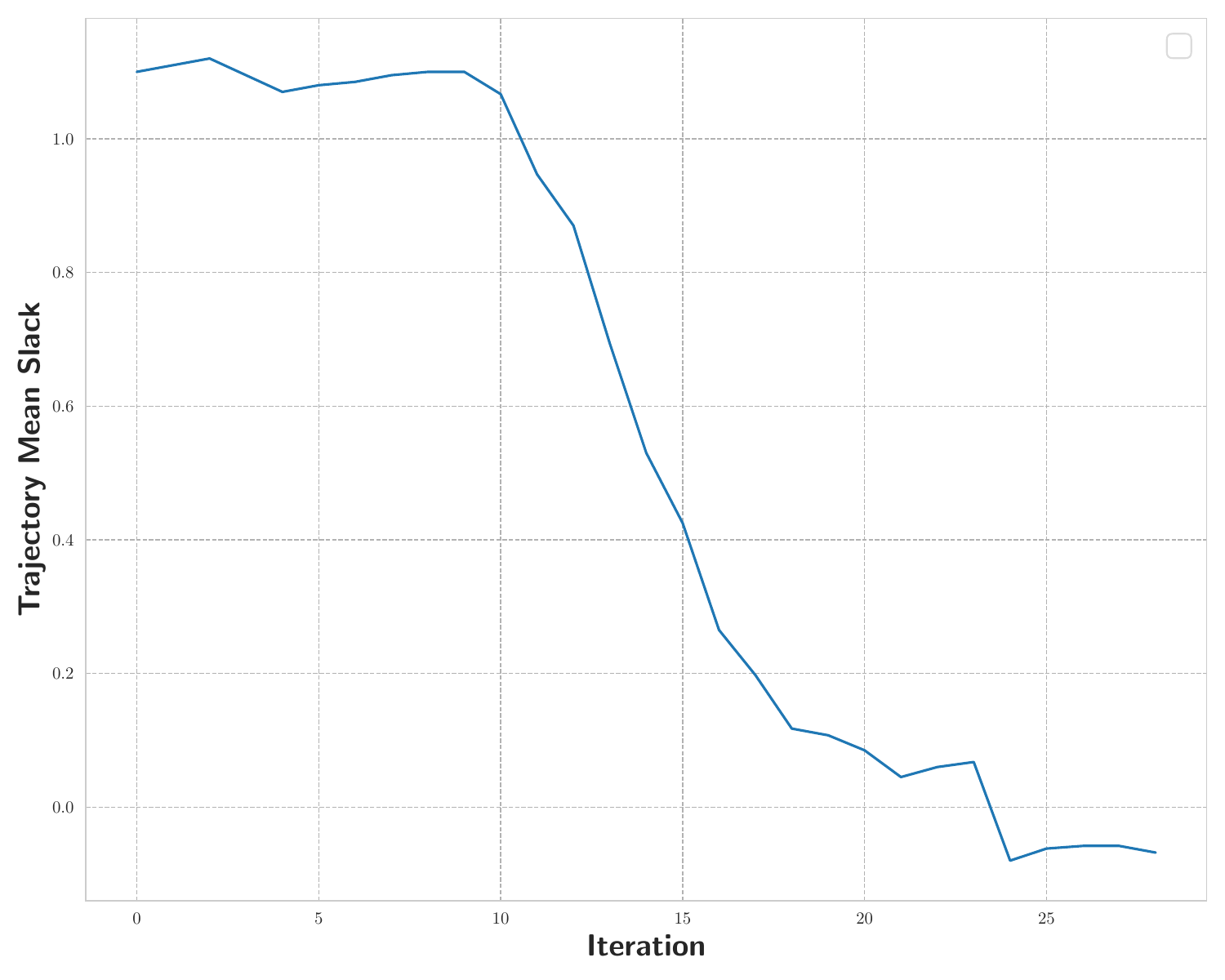}
        \caption{Bias factor = 0.3}
    \end{subfigure}
    \hfill
    \begin{subfigure}[b]{0.32\textwidth}
        \centering
        \includegraphics[width=\textwidth]{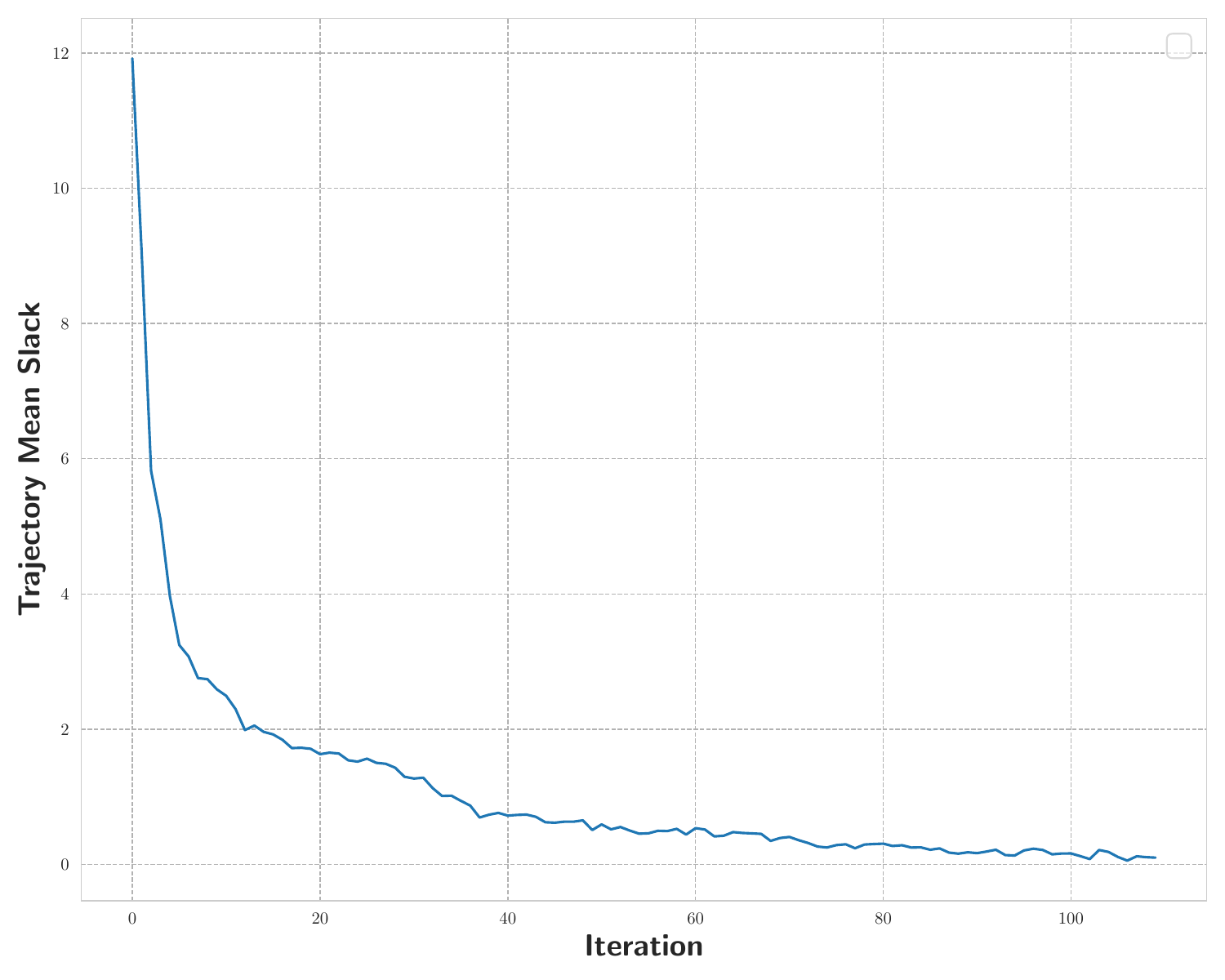}
        \caption{Bias factor = 0.7}
    \end{subfigure}
    \hfill
    \begin{subfigure}[b]{0.32\textwidth}
        \centering
        \includegraphics[width=\textwidth]{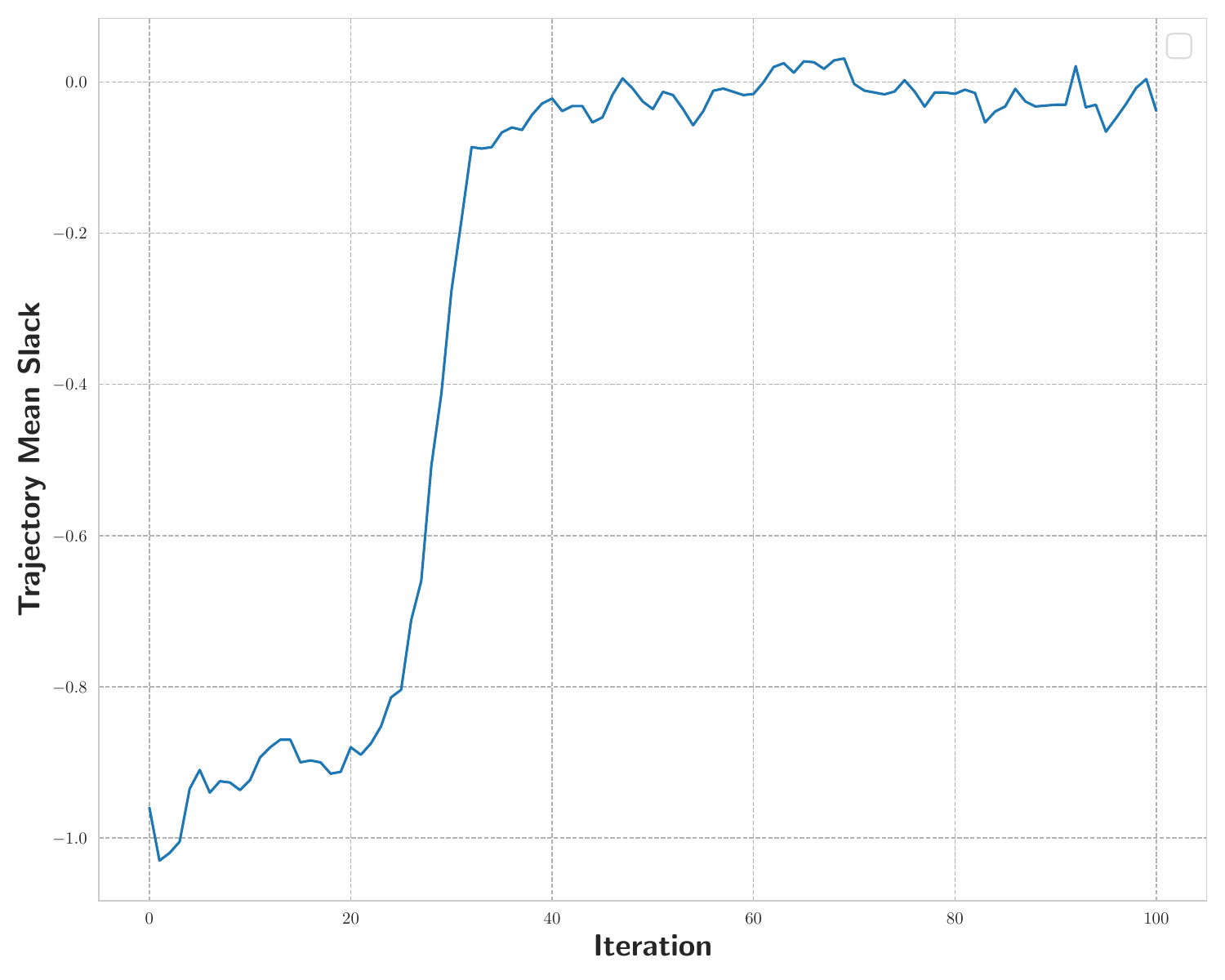}
        \caption{Bias factor = 0.9}
    \end{subfigure}
    
    \caption{(JMS simulation) Trajectory of mean slack for the parity at the offer stage under capacity $\boldsymbol{k= 20}$}
    \label{fig:JMS_Trajectory_Mean_Slack_Offer}
\end{figure}

\begin{figure}[htb]
    \centering
    \begin{subfigure}[b]{0.32\textwidth}
        \centering
        \includegraphics[width=\textwidth]{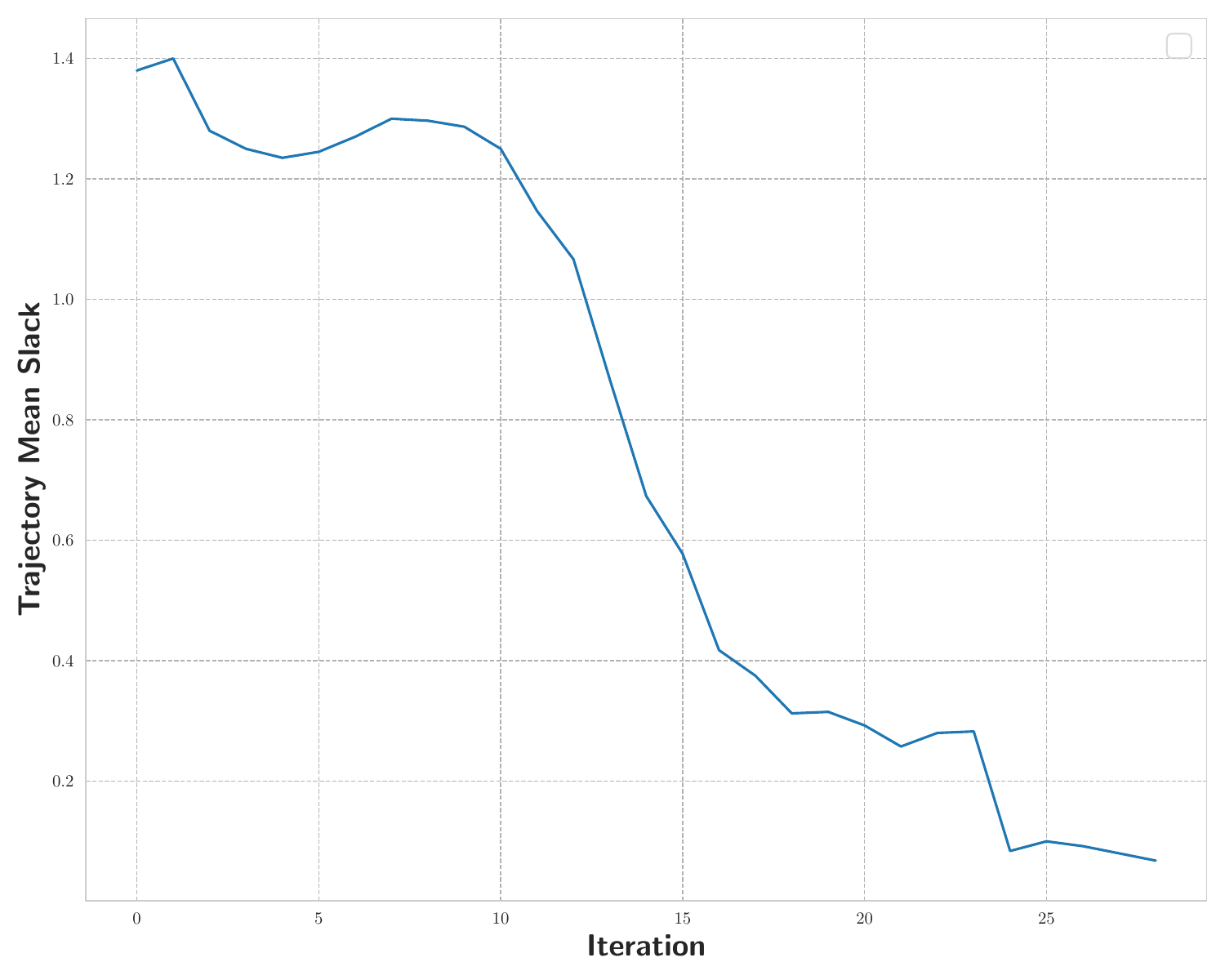}
        \caption{Bias factor = 0.3}
    \end{subfigure}
    \hfill
    \begin{subfigure}[b]{0.32\textwidth}
        \centering
        \includegraphics[width=\textwidth]{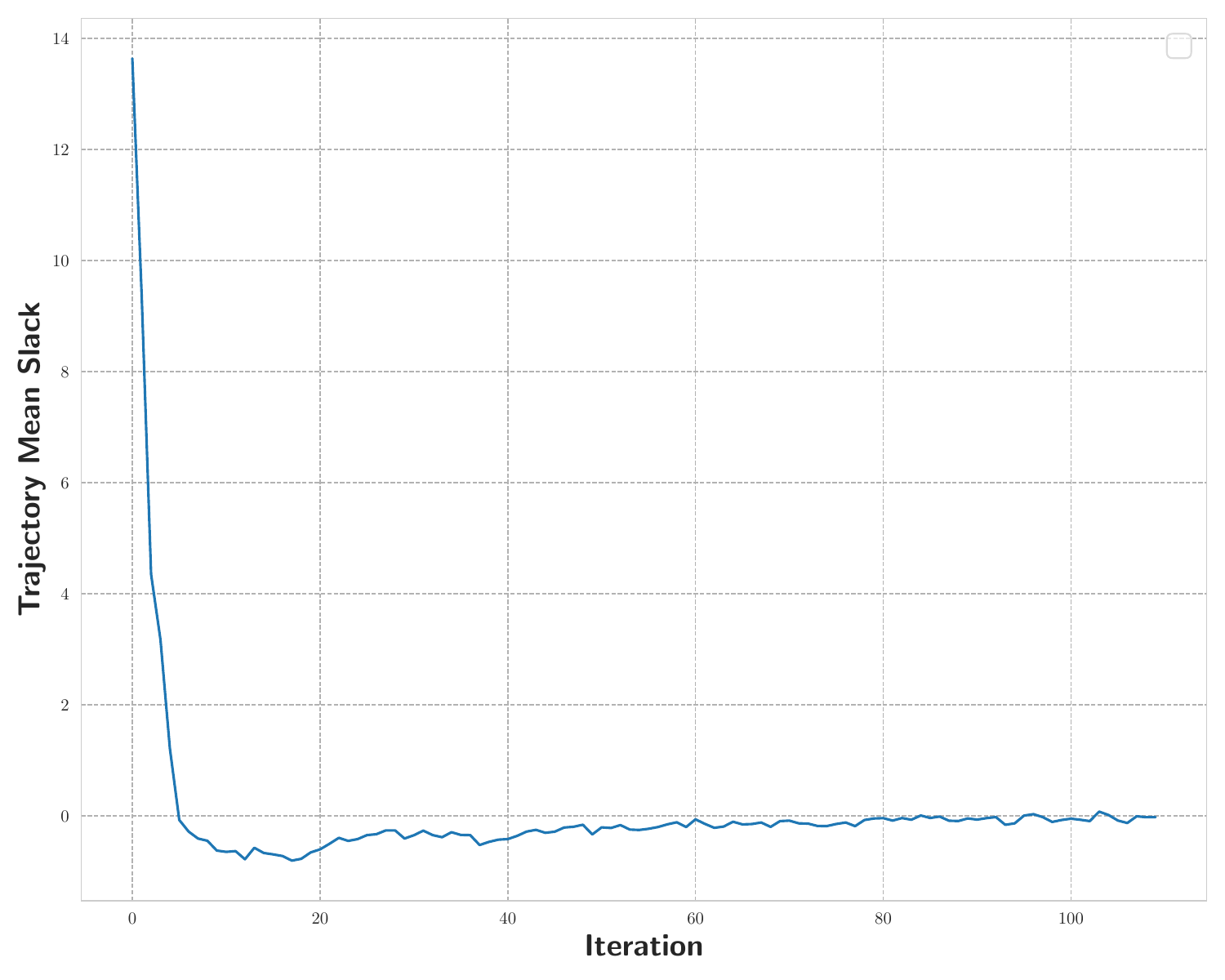}
        \caption{Bias factor = 0.7}
    \end{subfigure}
    \hfill
    \begin{subfigure}[b]{0.32\textwidth}
        \centering
        \includegraphics[width=\textwidth]{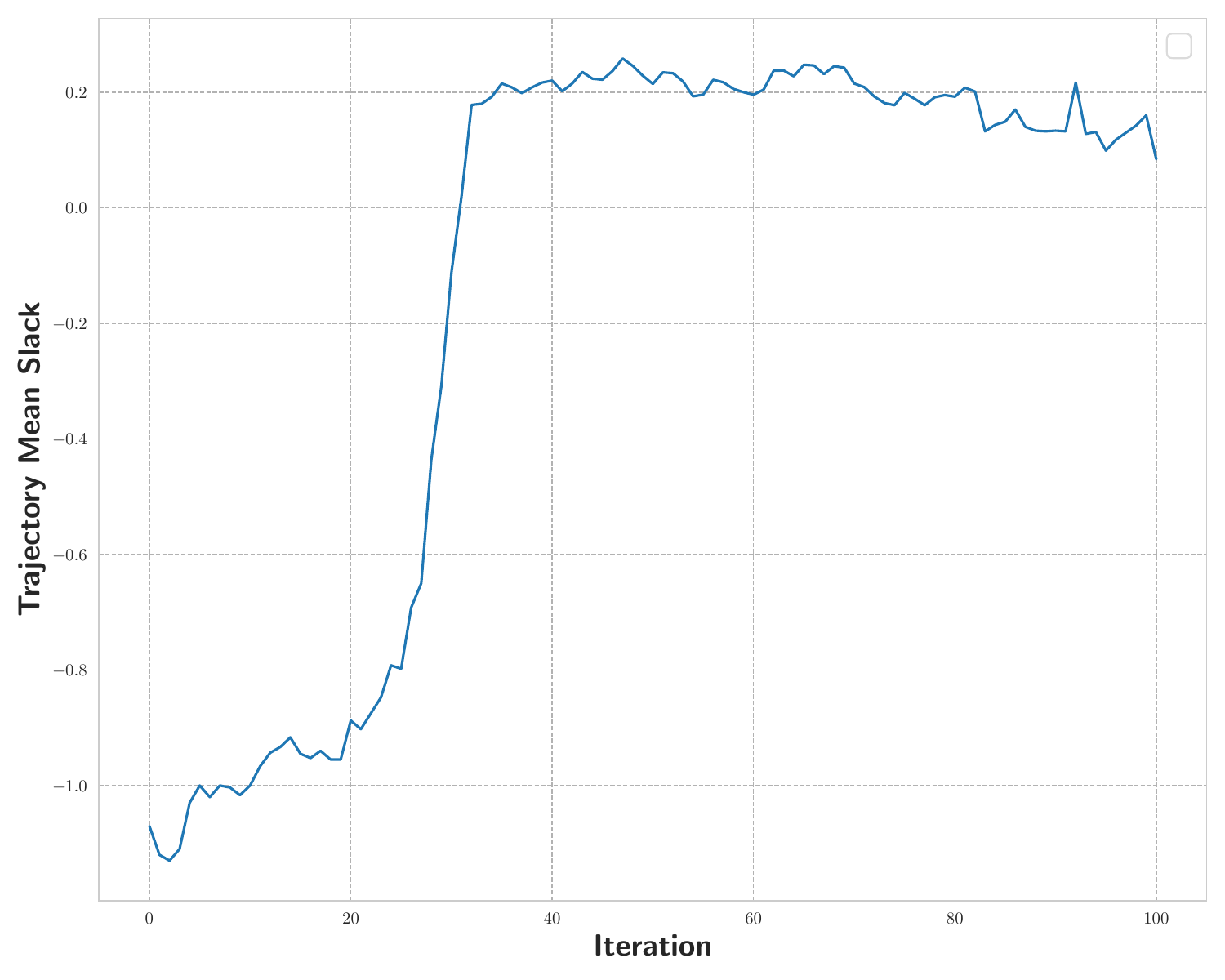}
        \caption{Bias factor = 0.9}
    \end{subfigure}
    
    \caption{(JMS simulation) Trajectory of mean slack for the parity at the onsite stage under capacity $\boldsymbol{k= 20}$}
    \label{fig:JMS_Trajectory_Mean_Slack_Onsite}
\end{figure}

\begin{figure}[htb]
    \centering
    \begin{subfigure}[b]{0.32\textwidth}
        \centering
        \includegraphics[width=\textwidth]{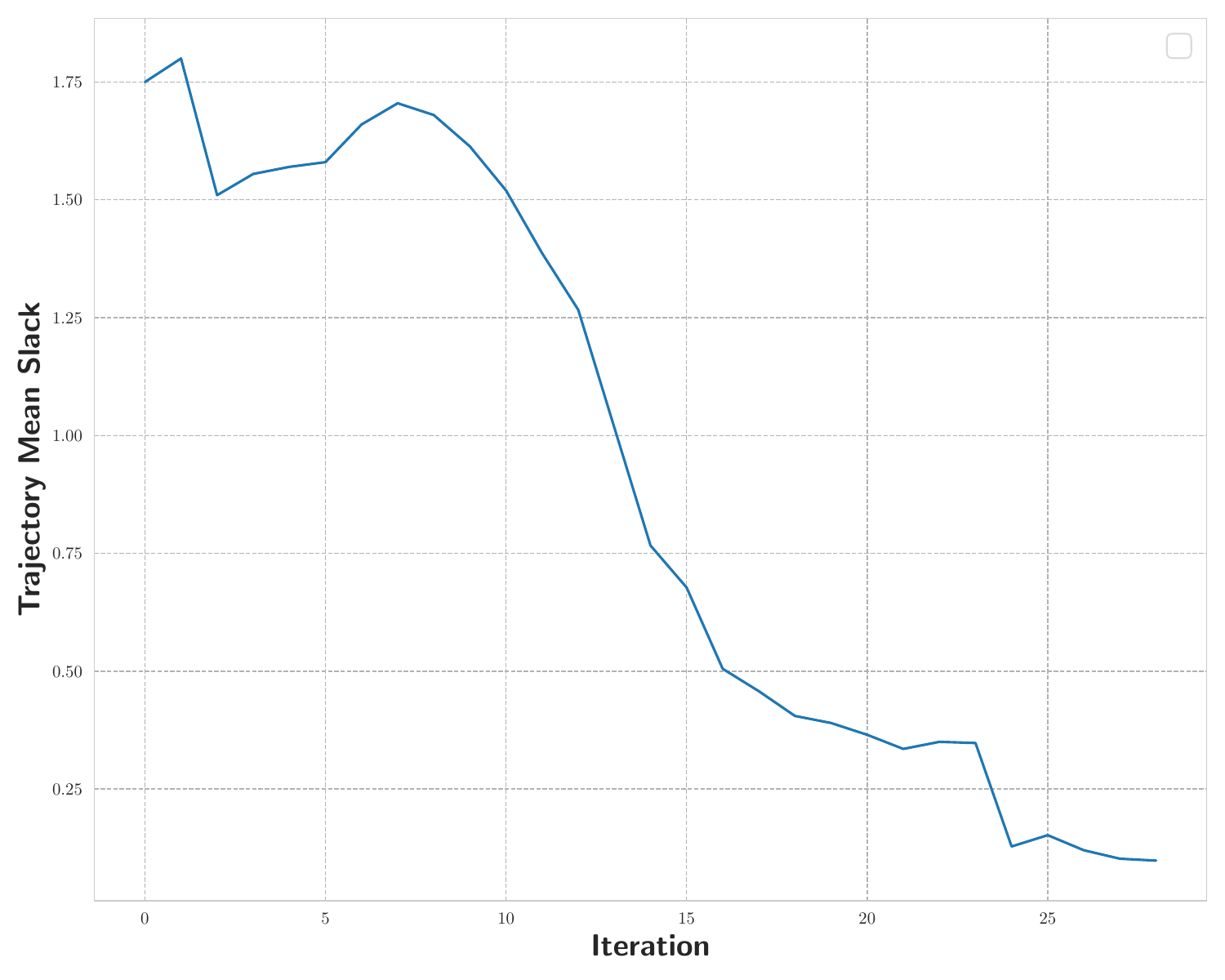}
        \caption{Bias factor = 0.3}
    \end{subfigure}
    \hfill
    \begin{subfigure}[b]{0.32\textwidth}
        \centering
        \includegraphics[width=\textwidth]{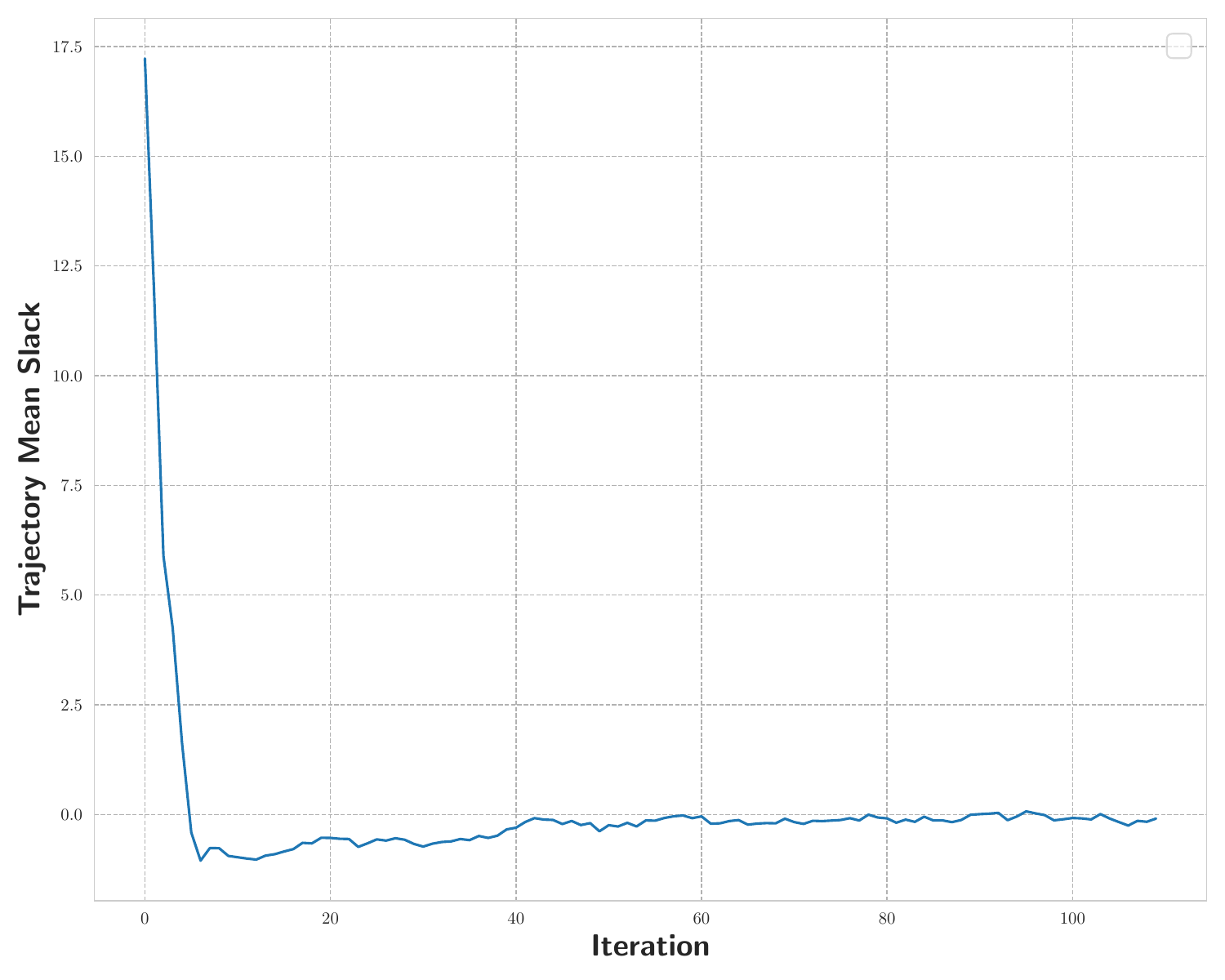}
        \caption{Bias factor = 0.7}
    \end{subfigure}
    \hfill
    \begin{subfigure}[b]{0.32\textwidth}
        \centering
        \includegraphics[width=\textwidth]{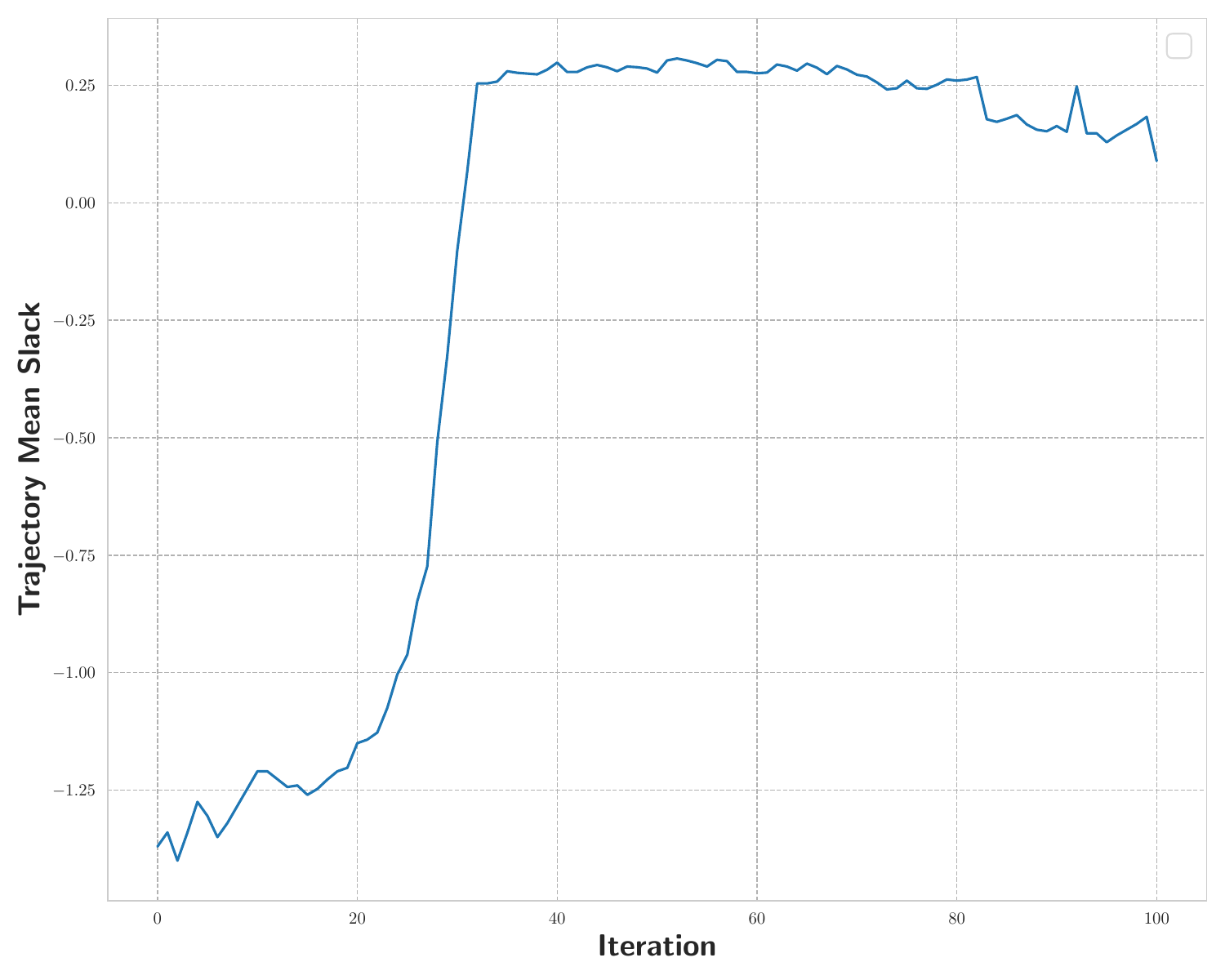}
        \caption{Bias factor = 0.9}
    \end{subfigure}
    
    \caption{(JMS simulation) Trajectory of mean slack for parity at phone interview stage, capacity $\boldsymbol{k= 20}$}
    \label{fig:JMS_Trajectory_Mean_Slack_Phone}
\end{figure}

\begin{figure}[htb]
    \centering
    \begin{subfigure}[b]{0.32\textwidth}
        \centering
        \includegraphics[width=\textwidth]{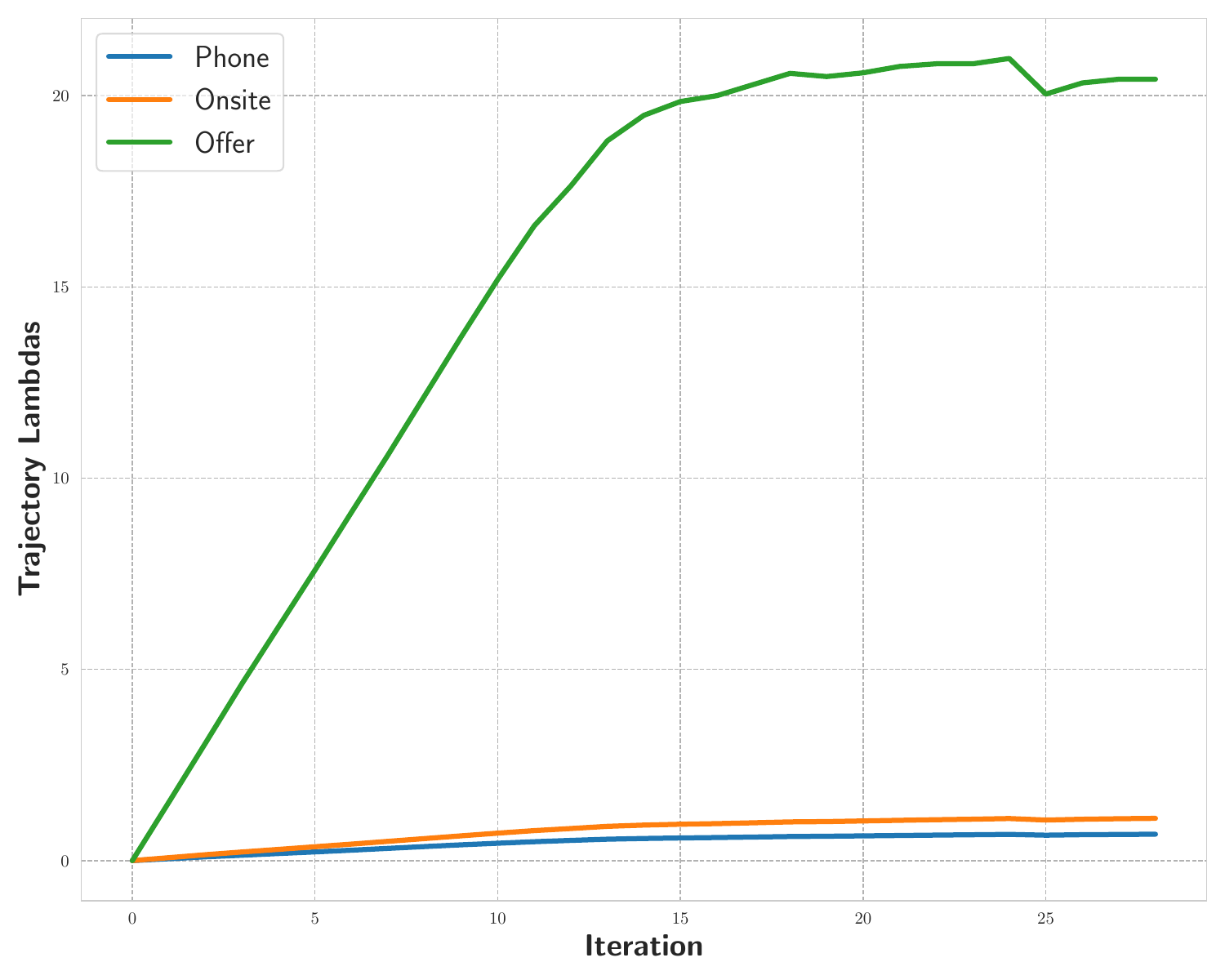}
        \caption{Bias factor = 0.3}
    \end{subfigure}
    \hfill
    \begin{subfigure}[b]{0.32\textwidth}
        \centering
        \includegraphics[width=\textwidth]{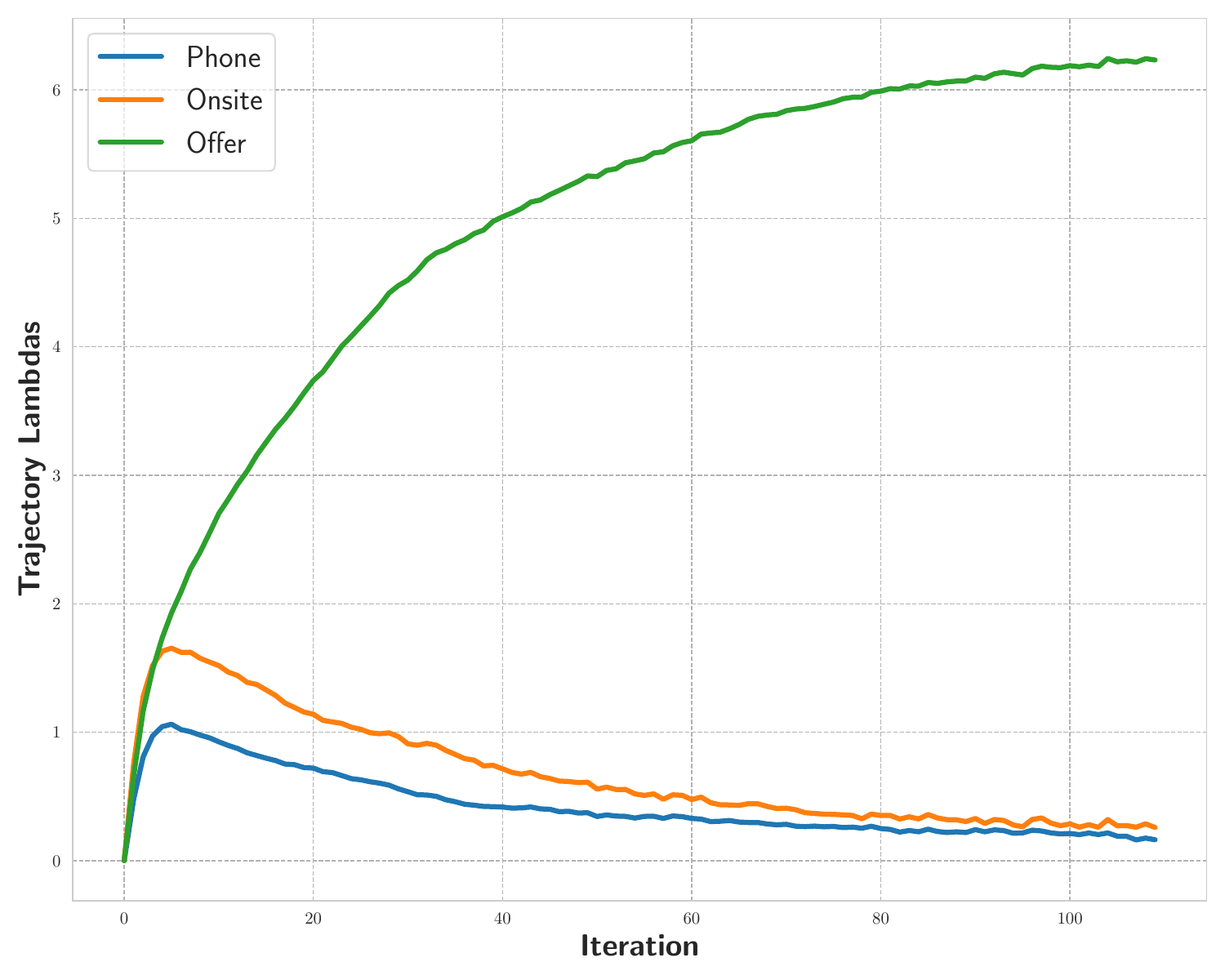}
        \caption{Bias factor = 0.7}
    \end{subfigure}
    \hfill
    \begin{subfigure}[b]{0.32\textwidth}
        \centering
        \includegraphics[width=\textwidth]{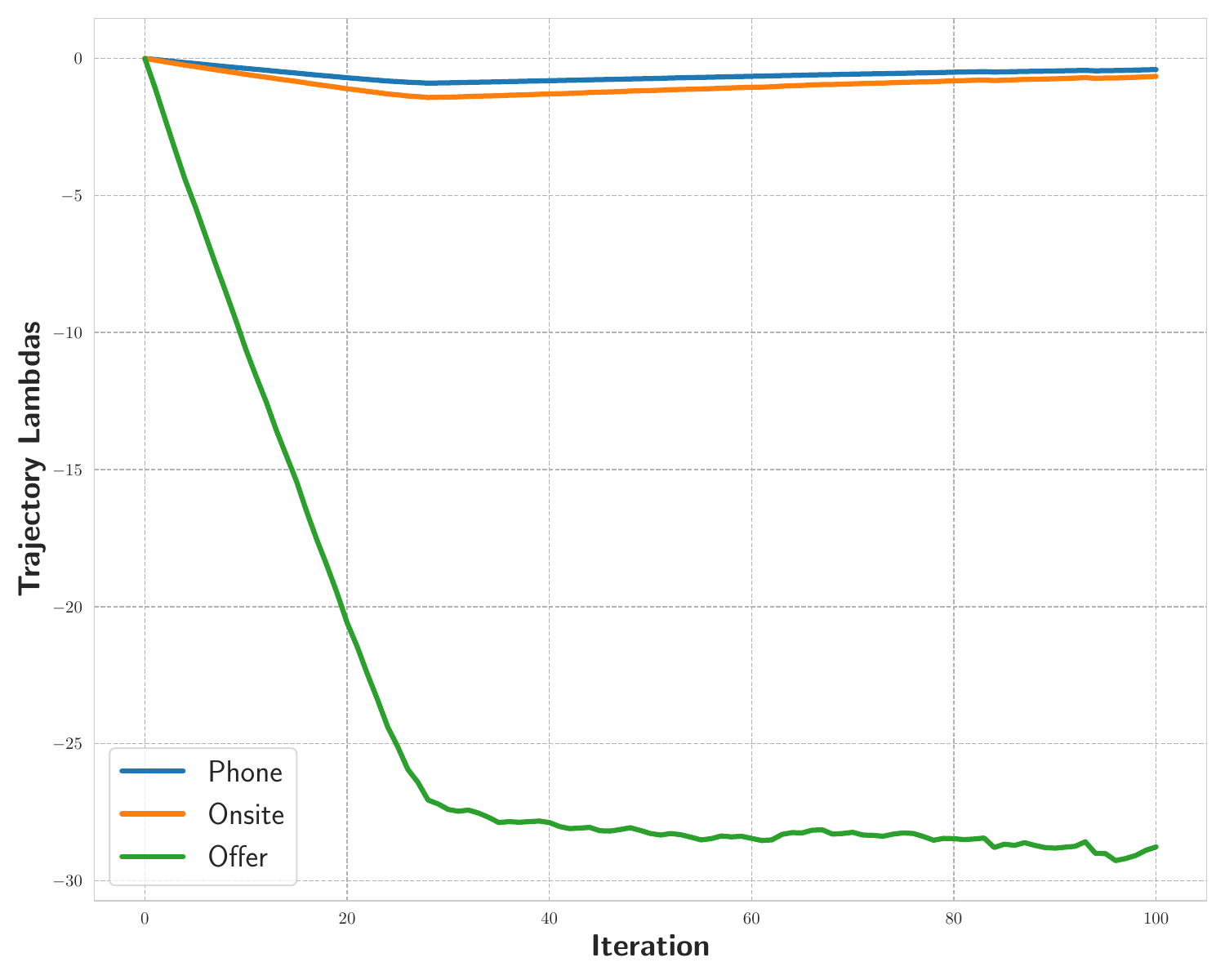}
        \caption{Bias factor = 0.9}
    \end{subfigure}
    
    \caption{Trajectory of dual adjustment $\boldsymbol{\lambda}$ under capacity $\boldsymbol{k= 20}$}
    \label{fig:JMS_Trajectory_lambdas}
\end{figure}

\noindent\textbf{Running times:}
\Cref{fig:JMS_Running Times} shows the running times of \Cref{alg:RAI}, at different bias levels.
Although the running times are considerably longer---compared to the results for \Cref{alg:const} when we had a single affine constraint and a simple single-stage search problem, as demonstrated in \Cref{fig-apx:v1_Running_Time}---they all take less than a minute for any given instance of the problem on the computer we used for our simulations (the same as the one we used for our earlier simulations).\footnote{\revcolor{We used a MacbookPro with 2.3 GHz Quad-core Intel Core i7 CPU, with 16GB of 3733 MHZ LDDR4X Memory for all of the simulations throughout the paper.}} This is especially important because we only need to run our algorithms once to find the (near-optimal) policy in any application, and after that the policy can be executed on each instantiation of the problem instance.
\begin{figure}[htb]
    \centering
    \begin{subfigure}[b]{0.32\textwidth}
        \centering
        \includegraphics[width=\textwidth]{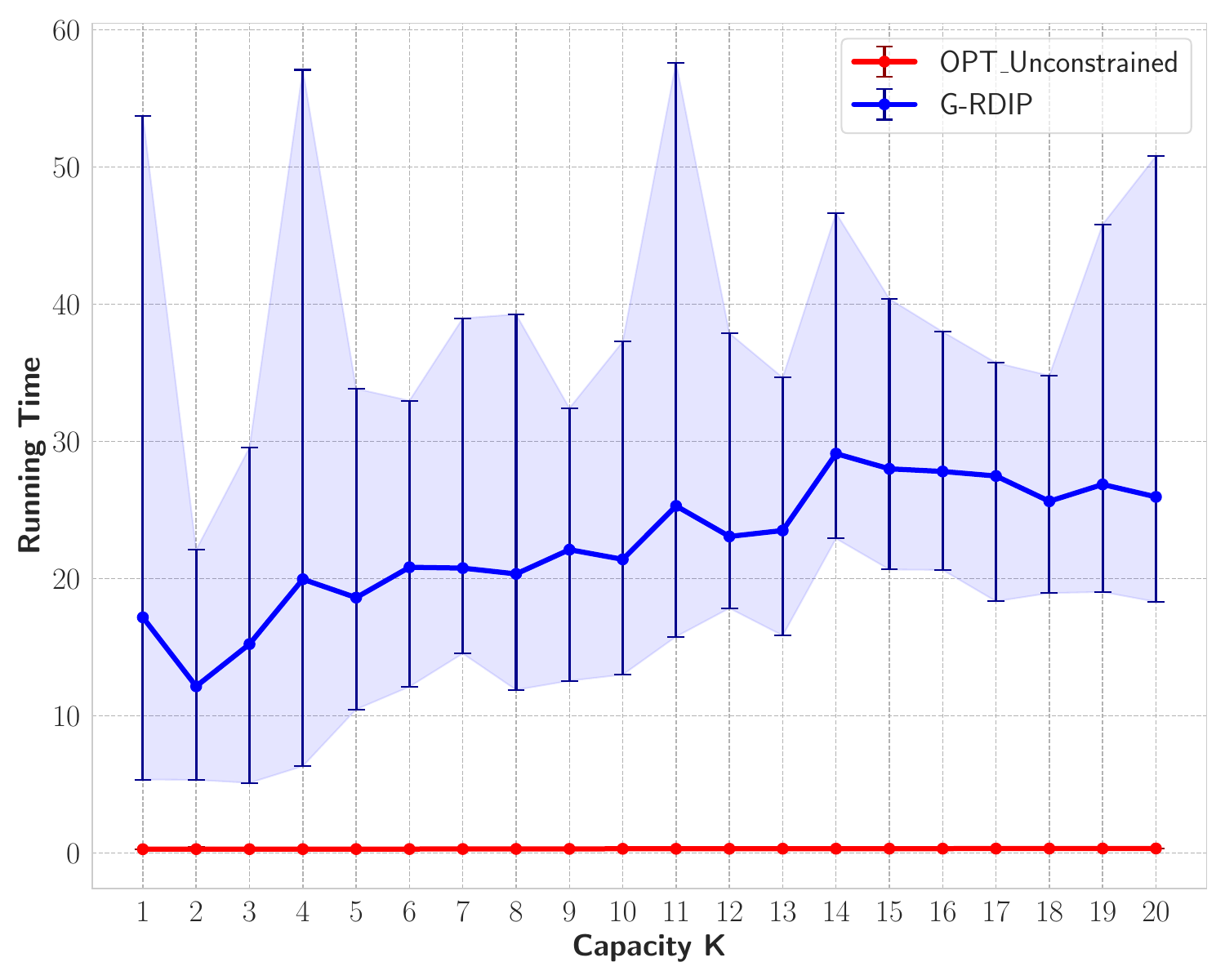}
        \caption{Bias factor = 0.3}
    \end{subfigure}
    \hfill
    \begin{subfigure}[b]{0.32\textwidth}
        \centering
        \includegraphics[width=\textwidth]{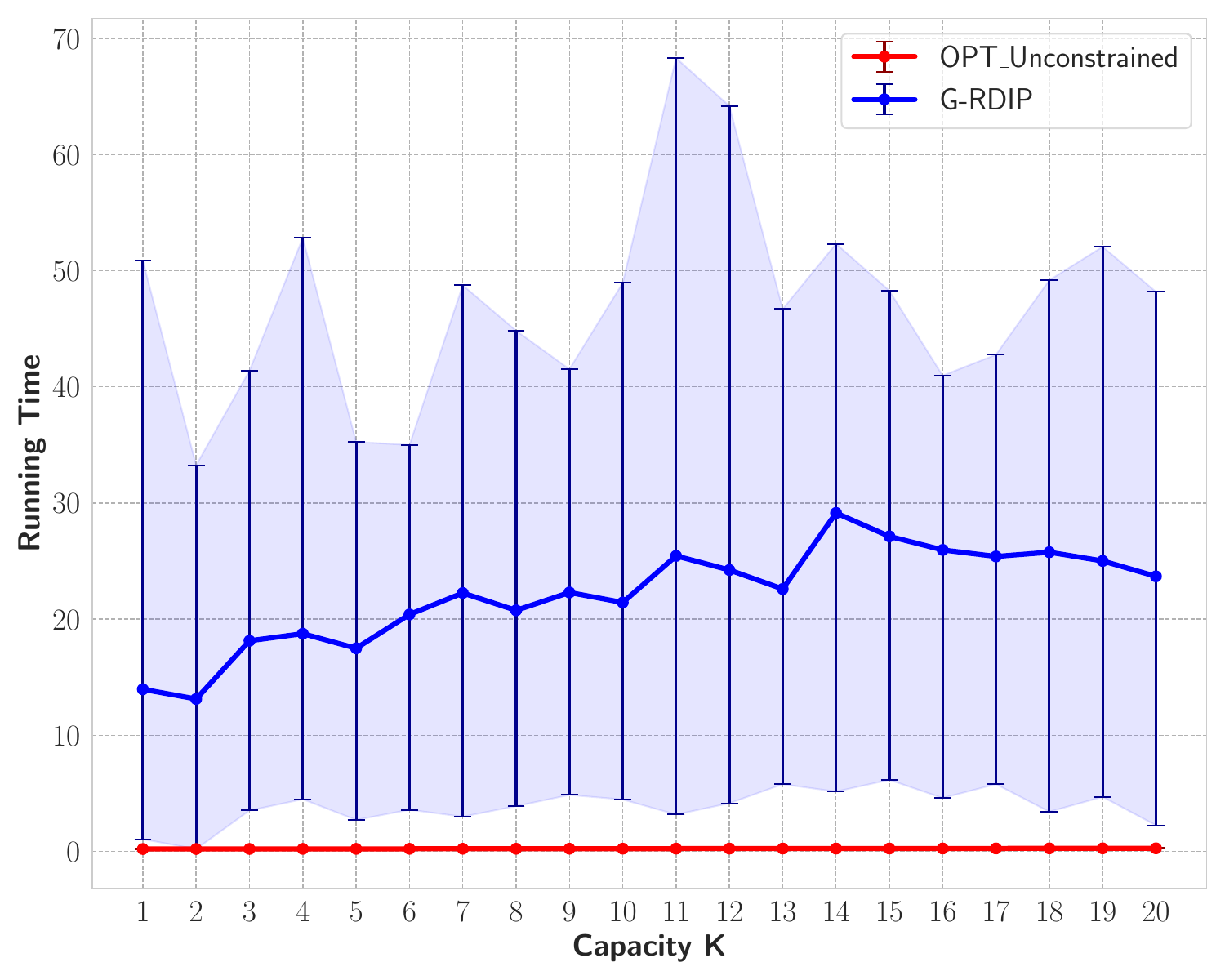}
        \caption{Bias factor = 0.7}
    \end{subfigure}
    \hfill
    \begin{subfigure}[b]{0.32\textwidth}
        \centering
        \includegraphics[width=\textwidth]{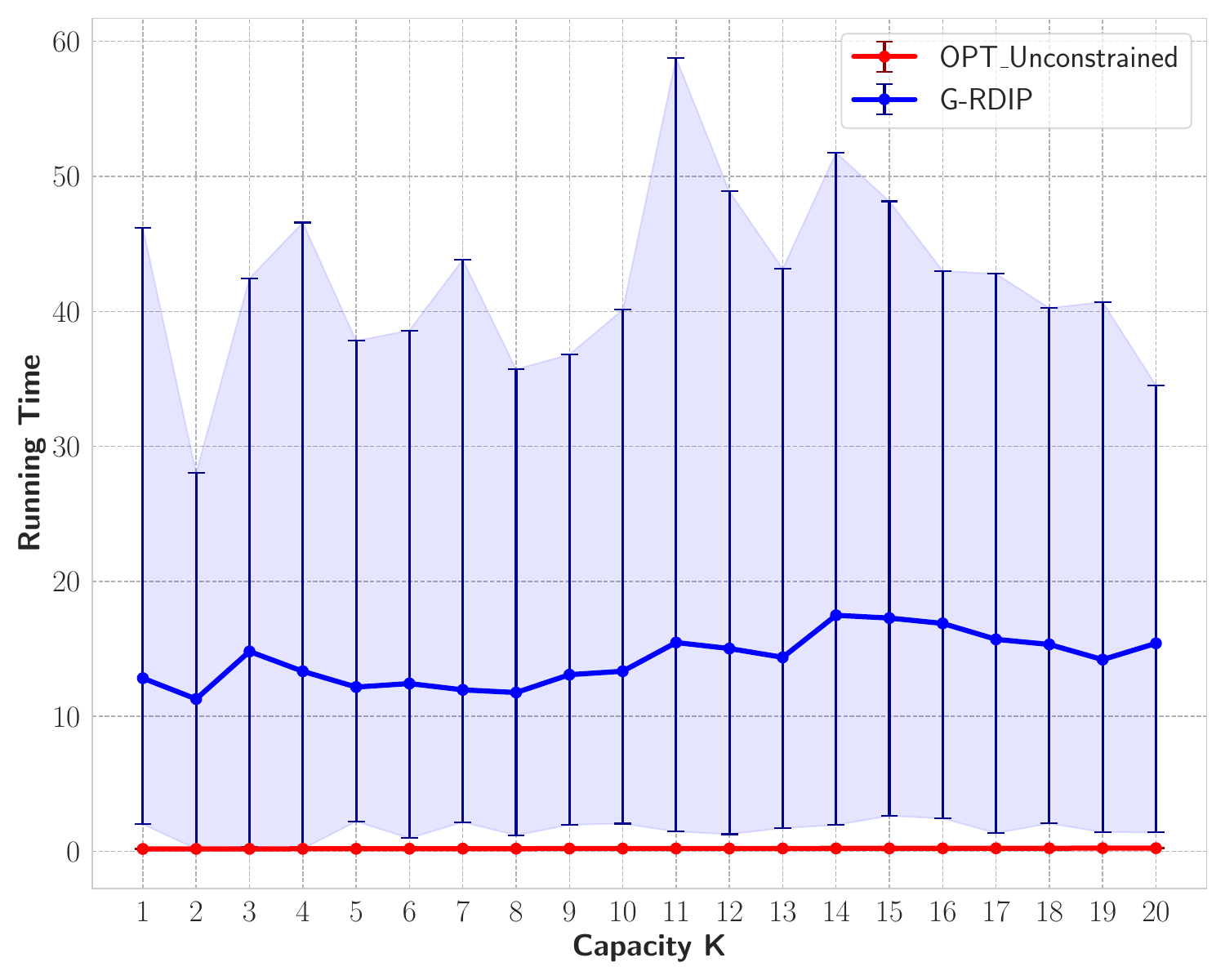}
        \caption{Bias factor = 1.0}
    \end{subfigure}
    
    \caption{The comparison of running times of near-optimal constrained (G-RDIP) and optimal unconstrained policies (in seconds).}
    \label{fig:JMS_Running Times}
\end{figure}

\end{document}